\newif\iffull \fullfalse
\newif\ifexample \examplefalse
\newif\ifdtabs \dtabsfalse
\newif\ifdraft \drafttrue
\DeclareRobustCommand*\cal{\@fontswitch\relax\mathcal}
\newcommand{\TS}[1]{\ifdraft\textcolor{blue}{\textsf{TS:{#1}}}\fi}
\newcommand{\AI}[1]{\ifdraft\textcolor{red}{\textsf{AI:{#1}}}\fi}
\newcommand{\redsout}{\bgroup \markoverwith{\textcolor{red}{\rule[0.5ex]{2pt}{0.7pt}}}\ULon}
\newcommand{\lamh}{\ensuremath{\lambda^{H}}}
\newcommand{\fhfix}{\ensuremath{\mathrm{F}_{\!H}}}
\newcommand{\AND}{\mathop{\text{ and }}}
\newcommand{\defeq}{\stackrel{\text{\tiny def}}{=}}
\newcommand{\Rule}[2]
{\ensuremath{\text{({\sc{{#1}\_{#2}}})}}}
\newcommand{\R}[1]
{\Rule{R}{#1}}
\newcommand{\T}[1]
{\Rule{T}{#1}}
\newcommand{\E}[1]
{\Rule{E}{#1}}
\newcommand{\Sub}[1]
{\Rule{S}{#1}}
\newcommand{\CT}[1]
{\Rule{CT}{#1}}
\newcommand{\sect}[1]{Section~\ref{sec:#1}}
\newcommand{\fig}[1]{Figure~\ref{fig:#1}}
\newcommand{\refthm}[1]{Theorem~\ref{thm:#1}}
\newcommand{\coqname}[1]{[\texttt{#1} in \texttt{coterm.v}]}
\def\reflemma@prefix#1,#2\relax{%
\ifx#2\@empty Lemma%
\else Lemmas%
\fi}
\def\reflemma@ref#1,#2,#3,#4\relax{%
\def\reflabel{\ref{lemma:#1}}%
\ifx#2\@empty\reflabel%
\else%
 \ifx#3\@empty%
  {\reflabel} and \reflemma@ref#2,\@empty,\@empty,\@empty\relax%
 \else%
  {\reflabel}, \expandafter\reflemma@ref#2,#3,#4\relax%
\fi\fi}
\def\reflemma#1{%
{\expandafter\reflemma@prefix#1,\@empty\relax} %
\expandafter\reflemma@ref#1,\@empty,\@empty,\@empty\relax%
}
\theoremstyle{plain}
\newtheorem*{theorem*}{Theorem}
\newenvironment{enmrt}{\begin{enumerate}[label=(\arabic*)]}{\end{enumerate}}
\long\def\case #1:{\item[Case]{#1}:}
\long\def\subcase #1:{\item[\textsf{Subcase} #1]}
\long\def\subsubcase #1:{\item[\textsf{Subsubcase} #1]}
\newenvironment{fhfigure*}
{\begingroup \small \begin{figure*}}
{\end{figure*} \endgroup}
\newcommand{\ottdrule}[4][]{{\displaystyle\frac{\begin{array}{l}#2\end{array}}{#3}\quad\ottdrulename{#4}}}
\newcommand{\ottpremise}[1]{ #1 \\}
\newenvironment{ottdefnblock}[3][]{ \framebox{\mbox{#2}} \quad #3 \\[0pt]}{}
\newcommand{\ottnt}[1]{\mathit{#1}}
\newcommand{\ottmv}[1]{\mathit{#1}}
\newcommand{\ottsym}[1]{#1}
\newcommand{\ottdrulename}[1]{\textsc{#1}}
\newcommand{\ottliteralin}
{ \mathsf{in} }
\newcommand{\equal}
{ \,\texttt{\string=}\, }
\definecolor{gray96}{rgb}{0.1,0.1,0.8}
\newcommand{\gb}[2][\relax]{ {%
\color{gray96}#2} }
\newcommand{\ottdruleEXXRed}[1]{\ottdrule[#1]{%
\ottpremise{\ottnt{e_{{\mathrm{1}}}}  \rightsquigarrow  \ottnt{e_{{\mathrm{2}}}}}%
}{
\ottnt{E}  [  \ottnt{e_{{\mathrm{1}}}}  ]  \longrightarrow  \ottnt{E}  [  \ottnt{e_{{\mathrm{2}}}}  ]}{%
{\ottdrulename{E\_Red}}{}%
}}
\newcommand{\ottdruleEXXBlame}[1]{\ottdrule[#1]{%
\ottpremise{\ottnt{E}  \mathrel{\neq}  \left[ \, \right]}%
}{
\ottnt{E}  [   \mathord{\Uparrow}  \ell   ]  \longrightarrow   \mathord{\Uparrow}  \ell }{%
{\ottdrulename{E\_Blame}}{}%
}}
\newcommand{\ottdruleWFXXEmpty}[1]{\ottdrule[#1]{%
}{
 \mathord{ \vdash } ~  \emptyset }{%
{\ottdrulename{WF\_Empty}}{}%
}}
\newcommand{\ottdruleWFXXExtendVar}[1]{\ottdrule[#1]{%
\ottpremise{  \mathord{ \vdash } ~  \Gamma   \quad  \Gamma  \vdash  \ottnt{T} }%
}{
 \mathord{ \vdash } ~   \Gamma  ,  \mathit{x}  \mathord{:}  \ottnt{T}  }{%
{\ottdrulename{WF\_ExtendVar}}{}%
}}
\newcommand{\ottdruleWFXXExtendTVar}[1]{\ottdrule[#1]{%
\ottpremise{ \mathord{ \vdash } ~  \Gamma }%
}{
 \mathord{ \vdash } ~  \Gamma  \ottsym{,}  \alpha }{%
{\ottdrulename{WF\_ExtendTVar}}{}%
}}
\newcommand{\ottdruleWFXXBase}[1]{\ottdrule[#1]{%
\ottpremise{ \mathord{ \vdash } ~  \Gamma }%
}{
\Gamma  \vdash  \ottnt{B}}{%
{\ottdrulename{WF\_Base}}{}%
}}
\newcommand{\ottdruleWFXXTVar}[1]{\ottdrule[#1]{%
\ottpremise{  \mathord{ \vdash } ~  \Gamma   \quad  \alpha \, \in \, \Gamma }%
}{
\Gamma  \vdash  \alpha}{%
{\ottdrulename{WF\_TVar}}{}%
}}
\newcommand{\ottdruleWFXXForall}[1]{\ottdrule[#1]{%
\ottpremise{\Gamma  \ottsym{,}  \alpha  \vdash  \ottnt{T}}%
}{
\Gamma  \vdash   \forall   \alpha  .  \ottnt{T} }{%
{\ottdrulename{WF\_Forall}}{}%
}}
\newcommand{\ottdruleWFXXFun}[1]{\ottdrule[#1]{%
\ottpremise{ \Gamma  \vdash  \ottnt{T_{{\mathrm{1}}}}  \quad   \Gamma  ,  \mathit{x}  \mathord{:}  \ottnt{T_{{\mathrm{1}}}}   \vdash  \ottnt{T_{{\mathrm{2}}}} }%
}{
\Gamma  \vdash  {}  \mathit{x}  \ottsym{:}  \ottnt{T_{{\mathrm{1}}}}  \rightarrow  \ottnt{T_{{\mathrm{2}}}}  {}}{%
{\ottdrulename{WF\_Fun}}{}%
}}
\newcommand{\ottdruleWFXXRefine}[1]{\ottdrule[#1]{%
\ottpremise{ \Gamma  \vdash  \ottnt{T}  \quad   \Gamma  ,  \mathit{x}  \mathord{:}  \ottnt{T}   \vdash  \ottnt{e}  \ottsym{:}   \mathsf{Bool}  }%
}{
\Gamma  \vdash   \{  \mathit{x}  \mathord{:}  \ottnt{T}   \mathop{\mid}   \ottnt{e}  \} }{%
{\ottdrulename{WF\_Refine}}{}%
}}
\newcommand{\ottdruleTXXVar}[1]{\ottdrule[#1]{%
\ottpremise{  \mathord{ \vdash } ~  \Gamma   \quad   \mathit{x}  \mathord{:}  \ottnt{T}   \in   \Gamma  }%
}{
\Gamma  \vdash  \mathit{x}  \ottsym{:}  \ottnt{T}}{%
{\ottdrulename{T\_Var}}{}%
}}
\newcommand{\ottdruleTXXConst}[1]{\ottdrule[#1]{%
\ottpremise{ \mathord{ \vdash } ~  \Gamma }%
}{
\Gamma  \vdash  \ottnt{k}  \ottsym{:}   \mathsf{ty}  (  \ottnt{k}  ) }{%
{\ottdrulename{T\_Const}}{}%
}}
\newcommand{\ottdruleTXXBlame}[1]{\ottdrule[#1]{%
\ottpremise{  \mathord{ \vdash } ~  \Gamma   \quad  \emptyset  \vdash  \ottnt{T} }%
}{
\Gamma  \vdash   \mathord{\Uparrow}  \ell   \ottsym{:}  \ottnt{T}}{%
{\ottdrulename{T\_Blame}}{}%
}}
\newcommand{\ottdruleTXXAbs}[1]{\ottdrule[#1]{%
\ottpremise{ \Gamma  ,  \mathit{x}  \mathord{:}  \ottnt{T_{{\mathrm{1}}}}   \vdash  \ottnt{e}  \ottsym{:}  \ottnt{T_{{\mathrm{2}}}}}%
}{
\Gamma  \vdash    \lambda    \mathit{x}  \mathord{:}  \ottnt{T_{{\mathrm{1}}}}  .  \ottnt{e}   \ottsym{:}  \ottsym{(}   \mathit{x} \mathord{:} \ottnt{T_{{\mathrm{1}}}} \rightarrow \ottnt{T_{{\mathrm{2}}}}   \ottsym{)}}{%
{\ottdrulename{T\_Abs}}{}%
}}
\newcommand{\ottdruleTXXApp}[1]{\ottdrule[#1]{%
\ottpremise{ \Gamma  \vdash  \ottnt{e_{{\mathrm{1}}}}  \ottsym{:}  \ottsym{(}   \mathit{x} \mathord{:} \ottnt{T_{{\mathrm{1}}}} \rightarrow \ottnt{T_{{\mathrm{2}}}}   \ottsym{)}  \quad   \Gamma  \vdash  \ottnt{e_{{\mathrm{2}}}}  \ottsym{:}  \ottnt{T_{{\mathrm{1}}}}  \quad  \Gamma  \vdash  \ottnt{T_{{\mathrm{2}}}} \, [  \ottnt{e_{{\mathrm{2}}}}  \ottsym{/}  \mathit{x}  ]  }%
}{
\Gamma  \vdash  \ottnt{e_{{\mathrm{1}}}} \, \ottnt{e_{{\mathrm{2}}}}  \ottsym{:}  \ottnt{T_{{\mathrm{2}}}} \, [  \ottnt{e_{{\mathrm{2}}}}  \ottsym{/}  \mathit{x}  ]}{%
{\ottdrulename{T\_App}}{}%
}}
\newcommand{\ottdruleTXXTAbs}[1]{\ottdrule[#1]{%
\ottpremise{\Gamma  \ottsym{,}  \alpha  \vdash  \ottnt{e}  \ottsym{:}  \ottnt{T}}%
}{
\Gamma  \vdash   \Lambda\!  \, \alpha  .~  \ottnt{e}  \ottsym{:}   \forall   \alpha  .  \ottnt{T} }{%
{\ottdrulename{T\_TAbs}}{}%
}}
\newcommand{\ottdruleTXXTAbsOne}[1]{\ottdrule[#1]{%
\ottpremise{\Gamma  \ottsym{,}  \alpha  \vdash  \ottnt{v}  \ottsym{:}  \ottnt{T}}%
}{
\Gamma  \vdash   \Lambda\!  \, \alpha  .~  \ottnt{v}  \ottsym{:}   \forall   \alpha  .  \ottnt{T} }{%
{\ottdrulename{T\_TAbs1}}{}%
}}
\newcommand{\ottdruleTXXTAbsTwo}[1]{\ottdrule[#1]{%
\ottpremise{\Gamma  \ottsym{,}  \alpha  \vdash  \langle  \ottnt{T_{{\mathrm{1}}}}  \Rightarrow  \ottnt{T_{{\mathrm{2}}}}  \rangle   ^{ \ell }  \, \ottsym{(}  \ottnt{v} \, \alpha  \ottsym{)}  \ottsym{:}  \ottnt{T}}%
}{
\Gamma  \vdash   \Lambda\!  \, \alpha  .~  \langle  \ottnt{T_{{\mathrm{1}}}}  \Rightarrow  \ottnt{T_{{\mathrm{2}}}}  \rangle   ^{ \ell }  \, \ottsym{(}  \ottnt{v} \, \alpha  \ottsym{)}  \ottsym{:}   \forall   \alpha  .  \ottnt{T} }{%
{\ottdrulename{T\_TAbs2}}{}%
}}
\newcommand{\ottdruleTXXTApp}[1]{\ottdrule[#1]{%
\ottpremise{ \Gamma  \vdash  \ottnt{e}  \ottsym{:}   \forall   \alpha  .  \ottnt{T_{{\mathrm{1}}}}   \quad  \Gamma  \vdash  \ottnt{T_{{\mathrm{2}}}} }%
}{
\Gamma  \vdash  \ottnt{e} \, \ottnt{T_{{\mathrm{2}}}}  \ottsym{:}  \ottnt{T_{{\mathrm{1}}}} \, [  \ottnt{T_{{\mathrm{2}}}}  \ottsym{/}  \alpha  ]}{%
{\ottdrulename{T\_TApp}}{}%
}}
\newcommand{\ottdruleTXXCast}[1]{\ottdrule[#1]{%
\ottpremise{ \Gamma  \vdash  \ottnt{T_{{\mathrm{1}}}}  \quad   \Gamma  \vdash  \ottnt{T_{{\mathrm{2}}}}  \quad  \ottnt{T_{{\mathrm{1}}}}  \mathrel{\parallel}  \ottnt{T_{{\mathrm{2}}}}  }%
}{
\Gamma  \vdash  \langle  \ottnt{T_{{\mathrm{1}}}}  \Rightarrow  \ottnt{T_{{\mathrm{2}}}}  \rangle   ^{ \ell }   \ottsym{:}  \ottnt{T_{{\mathrm{1}}}}  \rightarrow  \ottnt{T_{{\mathrm{2}}}}}{%
{\ottdrulename{T\_Cast}}{}%
}}
\newcommand{\ottdruleTXXWCheck}[1]{\ottdrule[#1]{%
\ottpremise{ \Gamma  \vdash   \{  \mathit{x}  \mathord{:}  \ottnt{T_{{\mathrm{1}}}}   \mathop{\mid}   \ottnt{e_{{\mathrm{1}}}}  \}   \quad  \Gamma  \vdash  \ottnt{e_{{\mathrm{2}}}}  \ottsym{:}  \ottnt{T_{{\mathrm{1}}}} }%
}{
\Gamma  \vdash   \langle\!\langle  \,  \{  \mathit{x}  \mathord{:}  \ottnt{T_{{\mathrm{1}}}}   \mathop{\mid}   \ottnt{e_{{\mathrm{1}}}}  \}   \ottsym{,}  \ottnt{e_{{\mathrm{2}}}} \,  \rangle\!\rangle  \,  ^{ \ell }   \ottsym{:}   \{  \mathit{x}  \mathord{:}  \ottnt{T_{{\mathrm{1}}}}   \mathop{\mid}   \ottnt{e_{{\mathrm{1}}}}  \} }{%
{\ottdrulename{T\_WCheck}}{}%
}}
\newcommand{\ottdruleTXXACheck}[1]{\ottdrule[#1]{%
\ottpremise{  \mathord{ \vdash } ~  \Gamma   \quad   \emptyset  \vdash   \{  \mathit{x}  \mathord{:}  \ottnt{T_{{\mathrm{1}}}}   \mathop{\mid}   \ottnt{e_{{\mathrm{1}}}}  \}   \quad   \emptyset  \vdash  \ottnt{e_{{\mathrm{2}}}}  \ottsym{:}   \mathsf{Bool}   \quad   \emptyset  \vdash  \ottnt{v}  \ottsym{:}  \ottnt{T_{{\mathrm{1}}}}  \quad  \ottnt{e_{{\mathrm{1}}}} \, [  \ottnt{v}  \ottsym{/}  \mathit{x}  ]  \longrightarrow^{\ast}  \ottnt{e_{{\mathrm{2}}}}    }%
}{
\Gamma  \vdash  \langle   \{  \mathit{x}  \mathord{:}  \ottnt{T_{{\mathrm{1}}}}   \mathop{\mid}   \ottnt{e_{{\mathrm{1}}}}  \}   \ottsym{,}  \ottnt{e_{{\mathrm{2}}}}  \ottsym{,}  \ottnt{v}  \rangle   ^{ \ell }   \ottsym{:}   \{  \mathit{x}  \mathord{:}  \ottnt{T_{{\mathrm{1}}}}   \mathop{\mid}   \ottnt{e_{{\mathrm{1}}}}  \} }{%
{\ottdrulename{T\_ACheck}}{}%
}}
\newcommand{\ottdruleTXXConv}[1]{\ottdrule[#1]{%
\ottpremise{  \mathord{ \vdash } ~  \Gamma   \quad   \emptyset  \vdash  \ottnt{e}  \ottsym{:}  \ottnt{T_{{\mathrm{1}}}}  \quad   \ottnt{T_{{\mathrm{1}}}}  \equiv  \ottnt{T_{{\mathrm{2}}}}  \quad  \emptyset  \vdash  \ottnt{T_{{\mathrm{2}}}}   }%
}{
\Gamma  \vdash  \ottnt{e}  \ottsym{:}  \ottnt{T_{{\mathrm{2}}}}}{%
{\ottdrulename{T\_Conv}}{}%
}}
\newcommand{\ottdruleTXXForget}[1]{\ottdrule[#1]{%
\ottpremise{  \mathord{ \vdash } ~  \Gamma   \quad  \emptyset  \vdash  \ottnt{v}  \ottsym{:}   \{  \mathit{x}  \mathord{:}  \ottnt{T}   \mathop{\mid}   \ottnt{e}  \}  }%
}{
\Gamma  \vdash  \ottnt{v}  \ottsym{:}  \ottnt{T}}{%
{\ottdrulename{T\_Forget}}{}%
}}
\newcommand{\ottdruleTXXExact}[1]{\ottdrule[#1]{%
\ottpremise{  \mathord{ \vdash } ~  \Gamma   \quad   \emptyset  \vdash  \ottnt{v}  \ottsym{:}  \ottnt{T}  \quad   \emptyset  \vdash   \{  \mathit{x}  \mathord{:}  \ottnt{T}   \mathop{\mid}   \ottnt{e}  \}   \quad  \ottnt{e} \, [  \ottnt{v}  \ottsym{/}  \mathit{x}  ]  \longrightarrow^{\ast}   \mathsf{true}    }%
}{
\Gamma  \vdash  \ottnt{v}  \ottsym{:}   \{  \mathit{x}  \mathord{:}  \ottnt{T}   \mathop{\mid}   \ottnt{e}  \} }{%
{\ottdrulename{T\_Exact}}{}%
}}
\newcommand{\ottdruleCXXBase}[1]{\ottdrule[#1]{%
}{
\ottnt{B}  \mathrel{\parallel}  \ottnt{B}}{%
{\ottdrulename{C\_Base}}{}%
}}
\newcommand{\ottdruleCXXTVar}[1]{\ottdrule[#1]{%
}{
\alpha  \mathrel{\parallel}  \alpha}{%
{\ottdrulename{C\_TVar}}{}%
}}
\newcommand{\ottdruleCXXRefineL}[1]{\ottdrule[#1]{%
\ottpremise{\ottnt{T_{{\mathrm{1}}}}  \mathrel{\parallel}  \ottnt{T_{{\mathrm{2}}}}}%
}{
 \{  \mathit{x}  \mathord{:}  \ottnt{T_{{\mathrm{1}}}}   \mathop{\mid}   \ottnt{e}  \}   \mathrel{\parallel}  \ottnt{T_{{\mathrm{2}}}}}{%
{\ottdrulename{C\_RefineL}}{}%
}}
\newcommand{\ottdruleCXXRefineR}[1]{\ottdrule[#1]{%
\ottpremise{\ottnt{T_{{\mathrm{1}}}}  \mathrel{\parallel}  \ottnt{T_{{\mathrm{2}}}}}%
}{
\ottnt{T_{{\mathrm{1}}}}  \mathrel{\parallel}   \{  \mathit{x}  \mathord{:}  \ottnt{T_{{\mathrm{2}}}}   \mathop{\mid}   \ottnt{e}  \} }{%
{\ottdrulename{C\_RefineR}}{}%
}}
\newcommand{\ottdruleCXXFun}[1]{\ottdrule[#1]{%
\ottpremise{ \ottnt{T_{{\mathrm{11}}}}  \mathrel{\parallel}  \ottnt{T_{{\mathrm{21}}}}  \quad  \ottnt{T_{{\mathrm{12}}}}  \mathrel{\parallel}  \ottnt{T_{{\mathrm{22}}}} }%
}{
 \mathit{x} \mathord{:} \ottnt{T_{{\mathrm{11}}}} \rightarrow \ottnt{T_{{\mathrm{12}}}}   \mathrel{\parallel}   \mathit{y} \mathord{:} \ottnt{T_{{\mathrm{21}}}} \rightarrow \ottnt{T_{{\mathrm{22}}}} }{%
{\ottdrulename{C\_Fun}}{}%
}}
\newcommand{\ottdruleCXXForall}[1]{\ottdrule[#1]{%
\ottpremise{\ottnt{T_{{\mathrm{1}}}}  \mathrel{\parallel}  \ottnt{T_{{\mathrm{2}}}}}%
}{
 \forall   \alpha  .  \ottnt{T_{{\mathrm{1}}}}   \mathrel{\parallel}   \forall   \alpha  .  \ottnt{T_{{\mathrm{2}}}} }{%
{\ottdrulename{C\_Forall}}{}%
}}
\newcommand{\ottdruleCWXXBase}[1]{\ottdrule[#1]{%
\ottpremise{ \mathord{ \vdash } ~  \Gamma }%
}{
 \Gamma   \vdash   \ottnt{B}   \ottsym{:}    \overline{ \Gamma_{\ottmv{i}}  \vdash  \ottnt{e_{\ottmv{i}}}  \ottsym{:}  \ottnt{T_{\ottmv{i}}} }^{ \ottmv{i} }   \mathrel{\circ\hspace{-.4em}\rightarrow} \ast }{%
{\ottdrulename{CW\_Base}}{}%
}}
\newcommand{\ottdruleCWXXTVar}[1]{\ottdrule[#1]{%
\ottpremise{  \mathord{ \vdash } ~  \Gamma   \quad  \alpha \, \in \, \Gamma }%
}{
 \Gamma   \vdash   \alpha   \ottsym{:}    \overline{ \Gamma_{\ottmv{i}}  \vdash  \ottnt{e_{\ottmv{i}}}  \ottsym{:}  \ottnt{T_{\ottmv{i}}} }^{ \ottmv{i} }   \mathrel{\circ\hspace{-.4em}\rightarrow} \ast }{%
{\ottdrulename{CW\_TVar}}{}%
}}
\newcommand{\ottdruleCWXXForall}[1]{\ottdrule[#1]{%
\ottpremise{ \Gamma  \ottsym{,}  \alpha   \vdash   \ottnt{T}^\ottnt{C}   \ottsym{:}    \overline{ \Gamma_{\ottmv{i}}  \vdash  \ottnt{e_{\ottmv{i}}}  \ottsym{:}  \ottnt{T_{\ottmv{i}}} }^{ \ottmv{i} }   \mathrel{\circ\hspace{-.4em}\rightarrow} \ast }%
}{
 \Gamma   \vdash    \forall   \alpha  .  \ottnt{T}^\ottnt{C}    \ottsym{:}    \overline{ \Gamma_{\ottmv{i}}  \vdash  \ottnt{e_{\ottmv{i}}}  \ottsym{:}  \ottnt{T_{\ottmv{i}}} }^{ \ottmv{i} }   \mathrel{\circ\hspace{-.4em}\rightarrow} \ast }{%
{\ottdrulename{CW\_Forall}}{}%
}}
\newcommand{\ottdruleCWXXFun}[1]{\ottdrule[#1]{%
\ottpremise{  \Gamma   \vdash   \ottnt{T}^\ottnt{C}_{{\mathrm{1}}}   \ottsym{:}    \overline{ \Gamma_{\ottmv{i}}  \vdash  \ottnt{e_{\ottmv{i}}}  \ottsym{:}  \ottnt{T_{\ottmv{i}}} }^{ \ottmv{i} }   \mathrel{\circ\hspace{-.4em}\rightarrow} \ast   \quad    \Gamma  ,  \mathit{x}  \mathord{:}  \ottnt{T}^\ottnt{C}_{{\mathrm{1}}}  [   \overline{ \ottnt{e_{\ottmv{i}}} }^{ \ottmv{i} }   ]    \vdash   \ottnt{T}^\ottnt{C}_{{\mathrm{2}}}   \ottsym{:}    \overline{ \Gamma_{\ottmv{i}}  \vdash  \ottnt{e_{\ottmv{i}}}  \ottsym{:}  \ottnt{T_{\ottmv{i}}} }^{ \ottmv{i} }   \mathrel{\circ\hspace{-.4em}\rightarrow} \ast  }%
}{
 \Gamma   \vdash    \mathit{x} \mathord{:} \ottnt{T}^\ottnt{C}_{{\mathrm{1}}} \rightarrow \ottnt{T}^\ottnt{C}_{{\mathrm{2}}}    \ottsym{:}    \overline{ \Gamma_{\ottmv{i}}  \vdash  \ottnt{e_{\ottmv{i}}}  \ottsym{:}  \ottnt{T_{\ottmv{i}}} }^{ \ottmv{i} }   \mathrel{\circ\hspace{-.4em}\rightarrow} \ast }{%
{\ottdrulename{CW\_Fun}}{}%
}}
\newcommand{\ottdruleCWXXRefine}[1]{\ottdrule[#1]{%
\ottpremise{  \Gamma   \vdash   \ottnt{T}^\ottnt{C}   \ottsym{:}    \overline{ \Gamma_{\ottmv{i}}  \vdash  \ottnt{e_{\ottmv{i}}}  \ottsym{:}  \ottnt{T_{\ottmv{i}}} }^{ \ottmv{i} }   \mathrel{\circ\hspace{-.4em}\rightarrow} \ast   \quad   \Gamma  ,  \mathit{x}  \mathord{:}  \ottnt{T}^\ottnt{C}  [   \overline{ \ottnt{e_{\ottmv{i}}} }^{ \ottmv{i} }   ]   \vdash  \ottnt{C}  \ottsym{:}   \overline{ \Gamma_{\ottmv{i}}  \vdash  \ottnt{e_{\ottmv{i}}}  \ottsym{:}  \ottnt{T_{\ottmv{i}}} }^{ \ottmv{i} }   \mathrel{\circ\hspace{-.4em}\rightarrow}   \mathsf{Bool}  }%
}{
 \Gamma   \vdash    \{  \mathit{x}  \mathord{:}  \ottnt{T}^\ottnt{C}   \mathop{\mid}   \ottnt{C}  \}    \ottsym{:}    \overline{ \Gamma_{\ottmv{i}}  \vdash  \ottnt{e_{\ottmv{i}}}  \ottsym{:}  \ottnt{T_{\ottmv{i}}} }^{ \ottmv{i} }   \mathrel{\circ\hspace{-.4em}\rightarrow} \ast }{%
{\ottdrulename{CW\_Refine}}{}%
}}
\newcommand{\ottdruleCTXXHole}[1]{\ottdrule[#1]{%
}{
\Gamma_{\ottmv{j}}  \vdash   \left[ \, \right] _{ \ottmv{j} }   \ottsym{:}   \overline{ \Gamma_{\ottmv{i}}  \vdash  \ottnt{e_{\ottmv{i}}}  \ottsym{:}  \ottnt{T_{\ottmv{i}}} }^{ \ottmv{i} }   \mathrel{\circ\hspace{-.4em}\rightarrow}  \ottnt{T_{\ottmv{j}}}}{%
{\ottdrulename{CT\_Hole}}{}%
}}
\newcommand{\ottdruleCTXXVar}[1]{\ottdrule[#1]{%
\ottpremise{  \mathord{ \vdash } ~  \Gamma   \quad   \mathit{x}  \mathord{:}  \ottnt{T}   \in   \Gamma  }%
}{
\Gamma  \vdash  \mathit{x}  \ottsym{:}   \overline{ \Gamma_{\ottmv{i}}  \vdash  \ottnt{e_{\ottmv{i}}}  \ottsym{:}  \ottnt{T_{\ottmv{i}}} }^{ \ottmv{i} }   \mathrel{\circ\hspace{-.4em}\rightarrow}  \ottnt{T}}{%
{\ottdrulename{CT\_Var}}{}%
}}
\newcommand{\ottdruleCTXXConst}[1]{\ottdrule[#1]{%
\ottpremise{ \mathord{ \vdash } ~  \Gamma }%
}{
\Gamma  \vdash  \ottnt{k}  \ottsym{:}   \overline{ \Gamma_{\ottmv{i}}  \vdash  \ottnt{e_{\ottmv{i}}}  \ottsym{:}  \ottnt{T_{\ottmv{i}}} }^{ \ottmv{i} }   \mathrel{\circ\hspace{-.4em}\rightarrow}   \mathsf{ty}  (  \ottnt{k}  ) }{%
{\ottdrulename{CT\_Const}}{}%
}}
\newcommand{\ottdruleCTXXAbs}[1]{\ottdrule[#1]{%
\ottpremise{ \Gamma  ,  \mathit{x}  \mathord{:}  \ottnt{T}^\ottnt{C}_{{\mathrm{1}}}  [   \overline{ \ottnt{e_{\ottmv{i}}} }^{ \ottmv{i} }   ]   \vdash  \ottnt{C}  \ottsym{:}   \overline{ \Gamma_{\ottmv{i}}  \vdash  \ottnt{e_{\ottmv{i}}}  \ottsym{:}  \ottnt{T_{\ottmv{i}}} }^{ \ottmv{i} }   \mathrel{\circ\hspace{-.4em}\rightarrow}  \ottnt{T'_{{\mathrm{2}}}}}%
}{
\Gamma  \vdash    \lambda    \mathit{x}  \mathord{:}  \ottnt{T}^\ottnt{C}_{{\mathrm{1}}}  .  \ottnt{C}   \ottsym{:}   \overline{ \Gamma_{\ottmv{i}}  \vdash  \ottnt{e_{\ottmv{i}}}  \ottsym{:}  \ottnt{T_{\ottmv{i}}} }^{ \ottmv{i} }   \mathrel{\circ\hspace{-.4em}\rightarrow}   \mathit{x} \mathord{:} \ottnt{T}^\ottnt{C}_{{\mathrm{1}}}  [   \overline{ \ottnt{e_{\ottmv{i}}} }^{ \ottmv{i} }   ] \rightarrow \ottnt{T'_{{\mathrm{2}}}} }{%
{\ottdrulename{CT\_Abs}}{}%
}}
\newcommand{\ottdruleCTXXApp}[1]{\ottdrule[#1]{%
\ottpremise{ \Gamma  \vdash  \ottnt{C_{{\mathrm{1}}}}  \ottsym{:}   \overline{ \Gamma_{\ottmv{i}}  \vdash  \ottnt{e_{\ottmv{i}}}  \ottsym{:}  \ottnt{T_{\ottmv{i}}} }^{ \ottmv{i} }   \mathrel{\circ\hspace{-.4em}\rightarrow}   \mathit{x} \mathord{:} \ottnt{T'_{{\mathrm{1}}}} \rightarrow \ottnt{T'_{{\mathrm{2}}}}   \quad  \Gamma  \vdash  \ottnt{C_{{\mathrm{2}}}}  \ottsym{:}   \overline{ \Gamma_{\ottmv{i}}  \vdash  \ottnt{e_{\ottmv{i}}}  \ottsym{:}  \ottnt{T_{\ottmv{i}}} }^{ \ottmv{i} }   \mathrel{\circ\hspace{-.4em}\rightarrow}  \ottnt{T'_{{\mathrm{1}}}} }%
}{
\Gamma  \vdash  \ottnt{C_{{\mathrm{1}}}} \, \ottnt{C_{{\mathrm{2}}}}  \ottsym{:}   \overline{ \Gamma_{\ottmv{i}}  \vdash  \ottnt{e_{\ottmv{i}}}  \ottsym{:}  \ottnt{T_{\ottmv{i}}} }^{ \ottmv{i} }   \mathrel{\circ\hspace{-.4em}\rightarrow}  \ottnt{T'_{{\mathrm{2}}}} \, [  \ottnt{C_{{\mathrm{2}}}}  [   \overline{ \ottnt{e_{\ottmv{i}}} }^{ \ottmv{i} }   ]  \ottsym{/}  \mathit{x}  ]}{%
{\ottdrulename{CT\_App}}{}%
}}
\newcommand{\ottdruleCTXXTAbs}[1]{\ottdrule[#1]{%
\ottpremise{\Gamma  \ottsym{,}  \alpha  \vdash  \ottnt{C}  \ottsym{:}   \overline{ \Gamma_{\ottmv{i}}  \vdash  \ottnt{e_{\ottmv{i}}}  \ottsym{:}  \ottnt{T_{\ottmv{i}}} }^{ \ottmv{i} }   \mathrel{\circ\hspace{-.4em}\rightarrow}  \ottnt{T}}%
}{
\Gamma  \vdash   \Lambda\!  \, \alpha  .~  \ottnt{C}  \ottsym{:}   \overline{ \Gamma_{\ottmv{i}}  \vdash  \ottnt{e_{\ottmv{i}}}  \ottsym{:}  \ottnt{T_{\ottmv{i}}} }^{ \ottmv{i} }   \mathrel{\circ\hspace{-.4em}\rightarrow}   \forall   \alpha  .  \ottnt{T} }{%
{\ottdrulename{CT\_TAbs}}{}%
}}
\newcommand{\ottdruleCTXXTAbsOne}[1]{\ottdrule[#1]{%
\ottpremise{\Gamma  \ottsym{,}  \alpha  \vdash  \ottnt{C}  \ottsym{:}   \overline{ \Gamma_{\ottmv{i}}  \vdash  \ottnt{e_{\ottmv{i}}}  \ottsym{:}  \ottnt{T_{\ottmv{i}}} }^{ \ottmv{i} }   \mathrel{\circ\hspace{-.4em}\rightarrow}  \ottnt{T}}%
}{
\Gamma  \vdash   \Lambda\!  \, \alpha  .~  \ottnt{C}  \ottsym{:}   \overline{ \Gamma_{\ottmv{i}}  \vdash  \ottnt{e_{\ottmv{i}}}  \ottsym{:}  \ottnt{T_{\ottmv{i}}} }^{ \ottmv{i} }   \mathrel{\circ\hspace{-.4em}\rightarrow}   \forall   \alpha  .  \ottnt{T} }{%
{\ottdrulename{CT\_TAbs1}}{}%
}}
\newcommand{\ottdruleCTXXTAbsTwo}[1]{\ottdrule[#1]{%
\ottpremise{\Gamma  \ottsym{,}  \alpha  \vdash  \langle  \ottnt{T}^\ottnt{C}_{{\mathrm{1}}}  \Rightarrow  \ottnt{T}^\ottnt{C}_{{\mathrm{2}}}  \rangle   ^{ \ell }  \, \ottsym{(}  \ottnt{C} \, \alpha  \ottsym{)}  \ottsym{:}   \overline{ \Gamma_{\ottmv{i}}  \vdash  \ottnt{e_{\ottmv{i}}}  \ottsym{:}  \ottnt{T_{\ottmv{i}}} }^{ \ottmv{i} }   \mathrel{\circ\hspace{-.4em}\rightarrow}  \ottnt{T}}%
}{
\Gamma  \vdash   \Lambda\!  \, \alpha  .~  \langle  \ottnt{T}^\ottnt{C}_{{\mathrm{1}}}  \Rightarrow  \ottnt{T}^\ottnt{C}_{{\mathrm{2}}}  \rangle   ^{ \ell }  \, \ottsym{(}  \ottnt{C} \, \alpha  \ottsym{)}  \ottsym{:}   \overline{ \Gamma_{\ottmv{i}}  \vdash  \ottnt{e_{\ottmv{i}}}  \ottsym{:}  \ottnt{T_{\ottmv{i}}} }^{ \ottmv{i} }   \mathrel{\circ\hspace{-.4em}\rightarrow}   \forall   \alpha  .  \ottnt{T} }{%
{\ottdrulename{CT\_TAbs2}}{}%
}}
\newcommand{\ottdruleCTXXTApp}[1]{\ottdrule[#1]{%
\ottpremise{ \Gamma  \vdash  \ottnt{C}  \ottsym{:}   \overline{ \Gamma_{\ottmv{i}}  \vdash  \ottnt{e_{\ottmv{i}}}  \ottsym{:}  \ottnt{T_{\ottmv{i}}} }^{ \ottmv{i} }   \mathrel{\circ\hspace{-.4em}\rightarrow}   \forall   \alpha  .  \ottnt{T'}   \quad   \Gamma   \vdash   \ottnt{T}^\ottnt{C}   \ottsym{:}    \overline{ \Gamma_{\ottmv{i}}  \vdash  \ottnt{e_{\ottmv{i}}}  \ottsym{:}  \ottnt{T_{\ottmv{i}}} }^{ \ottmv{i} }   \mathrel{\circ\hspace{-.4em}\rightarrow} \ast  }%
}{
\Gamma  \vdash  \ottnt{C} \, \ottnt{T}^\ottnt{C}  \ottsym{:}   \overline{ \Gamma_{\ottmv{i}}  \vdash  \ottnt{e_{\ottmv{i}}}  \ottsym{:}  \ottnt{T_{\ottmv{i}}} }^{ \ottmv{i} }   \mathrel{\circ\hspace{-.4em}\rightarrow}  \ottnt{T'} \, [  \ottnt{T}^\ottnt{C}  [   \overline{ \ottnt{e_{\ottmv{i}}} }^{ \ottmv{i} }   ]  \ottsym{/}  \alpha  ]}{%
{\ottdrulename{CT\_TApp}}{}%
}}
\newcommand{\ottdruleCTXXCast}[1]{\ottdrule[#1]{%
\ottpremise{  \Gamma   \vdash   \ottnt{T}^\ottnt{C}_{{\mathrm{1}}}   \ottsym{:}    \overline{ \Gamma_{\ottmv{i}}  \vdash  \ottnt{e_{\ottmv{i}}}  \ottsym{:}  \ottnt{T_{\ottmv{i}}} }^{ \ottmv{i} }   \mathrel{\circ\hspace{-.4em}\rightarrow} \ast   \quad    \Gamma   \vdash   \ottnt{T}^\ottnt{C}_{{\mathrm{2}}}   \ottsym{:}    \overline{ \Gamma_{\ottmv{i}}  \vdash  \ottnt{e_{\ottmv{i}}}  \ottsym{:}  \ottnt{T_{\ottmv{i}}} }^{ \ottmv{i} }   \mathrel{\circ\hspace{-.4em}\rightarrow} \ast   \quad  \ottnt{T}^\ottnt{C}_{{\mathrm{1}}}  [   \overline{ \ottnt{e_{\ottmv{i}}} }^{ \ottmv{i} }   ]  \mathrel{\parallel}  \ottnt{T}^\ottnt{C}_{{\mathrm{2}}}  [   \overline{ \ottnt{e_{\ottmv{i}}} }^{ \ottmv{i} }   ]  }%
}{
\Gamma  \vdash  \langle  \ottnt{T}^\ottnt{C}_{{\mathrm{1}}}  \Rightarrow  \ottnt{T}^\ottnt{C}_{{\mathrm{2}}}  \rangle   ^{ \ell }   \ottsym{:}   \overline{ \Gamma_{\ottmv{i}}  \vdash  \ottnt{e_{\ottmv{i}}}  \ottsym{:}  \ottnt{T_{\ottmv{i}}} }^{ \ottmv{i} }   \mathrel{\circ\hspace{-.4em}\rightarrow}  \ottsym{(}  \ottnt{T}^\ottnt{C}_{{\mathrm{1}}}  \rightarrow  \ottnt{T}^\ottnt{C}_{{\mathrm{2}}}  \ottsym{)}  [   \overline{ \ottnt{e_{\ottmv{i}}} }^{ \ottmv{i} }   ]}{%
{\ottdrulename{CT\_Cast}}{}%
}}
\newcommand{\ottdruleCTXXConv}[1]{\ottdrule[#1]{%
\ottpremise{  \mathord{ \vdash } ~  \Gamma   \quad   \emptyset  \vdash  \ottnt{C}  \ottsym{:}   \overline{ \Gamma_{\ottmv{i}}  \vdash  \ottnt{e_{\ottmv{i}}}  \ottsym{:}  \ottnt{T_{\ottmv{i}}} }^{ \ottmv{i} }   \mathrel{\circ\hspace{-.4em}\rightarrow}  \ottnt{T_{{\mathrm{1}}}}  \quad   \ottnt{T_{{\mathrm{1}}}}  \equiv  \ottnt{T_{{\mathrm{2}}}}  \quad  \emptyset  \vdash  \ottnt{T_{{\mathrm{2}}}}   }%
}{
\Gamma  \vdash  \ottnt{C}  \ottsym{:}   \overline{ \Gamma_{\ottmv{i}}  \vdash  \ottnt{e_{\ottmv{i}}}  \ottsym{:}  \ottnt{T_{\ottmv{i}}} }^{ \ottmv{i} }   \mathrel{\circ\hspace{-.4em}\rightarrow}  \ottnt{T_{{\mathrm{2}}}}}{%
{\ottdrulename{CT\_Conv}}{}%
}}
\newcommand{\ottdruleCTXXWCheck}[1]{\ottdrule[#1]{%
\ottpremise{  \Gamma   \vdash    \{  \mathit{x}  \mathord{:}  \ottnt{T}^\ottnt{C}_{{\mathrm{1}}}   \mathop{\mid}   \ottnt{C_{{\mathrm{1}}}}  \}    \ottsym{:}    \overline{ \Gamma_{\ottmv{i}}  \vdash  \ottnt{e_{\ottmv{i}}}  \ottsym{:}  \ottnt{T_{\ottmv{i}}} }^{ \ottmv{i} }   \mathrel{\circ\hspace{-.4em}\rightarrow} \ast   \quad  \Gamma  \vdash  \ottnt{C_{{\mathrm{2}}}}  \ottsym{:}   \overline{ \Gamma_{\ottmv{i}}  \vdash  \ottnt{e_{\ottmv{i}}}  \ottsym{:}  \ottnt{T_{\ottmv{i}}} }^{ \ottmv{i} }   \mathrel{\circ\hspace{-.4em}\rightarrow}  \ottnt{T}^\ottnt{C}_{{\mathrm{1}}}  [   \overline{ \ottnt{e_{\ottmv{i}}} }^{ \ottmv{i} }   ] }%
}{
\Gamma  \vdash   \langle\!\langle  \,  \{  \mathit{x}  \mathord{:}  \ottnt{T}^\ottnt{C}_{{\mathrm{1}}}   \mathop{\mid}   \ottnt{C_{{\mathrm{1}}}}  \}   \ottsym{,}  \ottnt{C_{{\mathrm{2}}}} \,  \rangle\!\rangle  \,  ^{ \ell }   \ottsym{:}   \overline{ \Gamma_{\ottmv{i}}  \vdash  \ottnt{e_{\ottmv{i}}}  \ottsym{:}  \ottnt{T_{\ottmv{i}}} }^{ \ottmv{i} }   \mathrel{\circ\hspace{-.4em}\rightarrow}   \{  \mathit{x}  \mathord{:}  \ottnt{T}^\ottnt{C}_{{\mathrm{1}}}   \mathop{\mid}   \ottnt{C_{{\mathrm{1}}}}  \}   [   \overline{ \ottnt{e_{\ottmv{i}}} }^{ \ottmv{i} }   ]}{%
{\ottdrulename{CT\_WCheck}}{}%
}}
\newcommand{\ottdruleCTXXACheck}[1]{\ottdrule[#1]{%
\ottpremise{  \mathord{ \vdash } ~  \Gamma   \quad    \emptyset   \vdash    \{  \mathit{x}  \mathord{:}  \ottnt{T}^\ottnt{C}_{{\mathrm{1}}}   \mathop{\mid}   \ottnt{C_{{\mathrm{1}}}}  \}    \ottsym{:}    \overline{ \Gamma_{\ottmv{i}}  \vdash  \ottnt{e_{\ottmv{i}}}  \ottsym{:}  \ottnt{T_{\ottmv{i}}} }^{ \ottmv{i} }   \mathrel{\circ\hspace{-.4em}\rightarrow} \ast   \quad  \emptyset  \vdash  \ottnt{C_{{\mathrm{2}}}}  \ottsym{:}   \overline{ \Gamma_{\ottmv{i}}  \vdash  \ottnt{e_{\ottmv{i}}}  \ottsym{:}  \ottnt{T_{\ottmv{i}}} }^{ \ottmv{i} }   \mathrel{\circ\hspace{-.4em}\rightarrow}   \mathsf{Bool}   }%
\ottpremise{ \emptyset  \vdash  \ottnt{V}^\ottnt{C}  \ottsym{:}   \overline{ \Gamma_{\ottmv{i}}  \vdash  \ottnt{e_{\ottmv{i}}}  \ottsym{:}  \ottnt{T_{\ottmv{i}}} }^{ \ottmv{i} }   \mathrel{\circ\hspace{-.4em}\rightarrow}  \ottnt{T}^\ottnt{C}_{{\mathrm{1}}}  [   \overline{ \ottnt{e_{\ottmv{i}}} }^{ \ottmv{i} }   ]  \quad  \ottnt{C_{{\mathrm{1}}}}  [   \overline{ \ottnt{e_{\ottmv{i}}} }^{ \ottmv{i} }   ] \, [  \ottnt{V}^\ottnt{C}  [   \overline{ \ottnt{e_{\ottmv{i}}} }^{ \ottmv{i} }   ]  \ottsym{/}  \mathit{x}  ]  \longrightarrow^{\ast}  \ottnt{C_{{\mathrm{2}}}}  [   \overline{ \ottnt{e_{\ottmv{i}}} }^{ \ottmv{i} }   ] }%
}{
\Gamma  \vdash  \langle   \{  \mathit{x}  \mathord{:}  \ottnt{T}^\ottnt{C}_{{\mathrm{1}}}   \mathop{\mid}   \ottnt{C_{{\mathrm{1}}}}  \}   \ottsym{,}  \ottnt{C_{{\mathrm{2}}}}  \ottsym{,}  \ottnt{V}^\ottnt{C}  \rangle   ^{ \ell }   \ottsym{:}   \overline{ \Gamma_{\ottmv{i}}  \vdash  \ottnt{e_{\ottmv{i}}}  \ottsym{:}  \ottnt{T_{\ottmv{i}}} }^{ \ottmv{i} }   \mathrel{\circ\hspace{-.4em}\rightarrow}   \{  \mathit{x}  \mathord{:}  \ottnt{T}^\ottnt{C}_{{\mathrm{1}}}   \mathop{\mid}   \ottnt{C_{{\mathrm{1}}}}  \}   [   \overline{ \ottnt{e_{\ottmv{i}}} }^{ \ottmv{i} }   ]}{%
{\ottdrulename{CT\_ACheck}}{}%
}}
\newcommand{\ottdruleCTXXForget}[1]{\ottdrule[#1]{%
\ottpremise{  \mathord{ \vdash } ~  \Gamma   \quad  \emptyset  \vdash  \ottnt{V}^\ottnt{C}  \ottsym{:}   \overline{ \Gamma_{\ottmv{i}}  \vdash  \ottnt{e_{\ottmv{i}}}  \ottsym{:}  \ottnt{T_{\ottmv{i}}} }^{ \ottmv{i} }   \mathrel{\circ\hspace{-.4em}\rightarrow}   \{  \mathit{x}  \mathord{:}  \ottnt{T}   \mathop{\mid}   \ottnt{e}  \}  }%
}{
\Gamma  \vdash  \ottnt{V}^\ottnt{C}  \ottsym{:}   \overline{ \Gamma_{\ottmv{i}}  \vdash  \ottnt{e_{\ottmv{i}}}  \ottsym{:}  \ottnt{T_{\ottmv{i}}} }^{ \ottmv{i} }   \mathrel{\circ\hspace{-.4em}\rightarrow}  \ottnt{T}}{%
{\ottdrulename{CT\_Forget}}{}%
}}
\newcommand{\ottdruleCTXXExact}[1]{\ottdrule[#1]{%
\ottpremise{  \mathord{ \vdash } ~  \Gamma   \quad   \emptyset  \vdash  \ottnt{V}^\ottnt{C}  \ottsym{:}   \overline{ \Gamma_{\ottmv{i}}  \vdash  \ottnt{e_{\ottmv{i}}}  \ottsym{:}  \ottnt{T_{\ottmv{i}}} }^{ \ottmv{i} }   \mathrel{\circ\hspace{-.4em}\rightarrow}  \ottnt{T}  \quad   \emptyset  \vdash   \{  \mathit{x}  \mathord{:}  \ottnt{T}   \mathop{\mid}   \ottnt{e}  \}   \quad  \ottnt{e} \, [  \ottnt{V}^\ottnt{C}  [   \overline{ \ottnt{e_{\ottmv{i}}} }^{ \ottmv{i} }   ]  \ottsym{/}  \mathit{x}  ]  \longrightarrow^{\ast}   \mathsf{true}    }%
}{
\Gamma  \vdash  \ottnt{V}^\ottnt{C}  \ottsym{:}   \overline{ \Gamma_{\ottmv{i}}  \vdash  \ottnt{e_{\ottmv{i}}}  \ottsym{:}  \ottnt{T_{\ottmv{i}}} }^{ \ottmv{i} }   \mathrel{\circ\hspace{-.4em}\rightarrow}   \{  \mathit{x}  \mathord{:}  \ottnt{T}   \mathop{\mid}   \ottnt{e}  \} }{%
{\ottdrulename{CT\_Exact}}{}%
}}
\newcommand{\ottdruleCTXXBlame}[1]{\ottdrule[#1]{%
\ottpremise{  \mathord{ \vdash } ~  \Gamma   \quad  \emptyset  \vdash  \ottnt{T} }%
}{
\Gamma  \vdash   \mathord{\Uparrow}  \ell   \ottsym{:}   \overline{ \Gamma_{\ottmv{i}}  \vdash  \ottnt{e_{\ottmv{i}}}  \ottsym{:}  \ottnt{T_{\ottmv{i}}} }^{ \ottmv{i} }   \mathrel{\circ\hspace{-.4em}\rightarrow}  \ottnt{T}}{%
{\ottdrulename{CT\_Blame}}{}%
}}
\newcommand{\ottdruleSXXBase}[1]{\ottdrule[#1]{%
}{
\Gamma  \vdash  \ottnt{B}  \ottsym{<:}  \ottnt{B}}{%
{\ottdrulename{S\_Base}}{}%
}}
\newcommand{\ottdruleSXXTVar}[1]{\ottdrule[#1]{%
}{
\Gamma  \vdash  \alpha  \ottsym{<:}  \alpha}{%
{\ottdrulename{S\_TVar}}{}%
}}
\newcommand{\ottdruleSXXForall}[1]{\ottdrule[#1]{%
\ottpremise{\Gamma  \ottsym{,}  \alpha  \vdash  \ottnt{T_{{\mathrm{1}}}}  \ottsym{<:}  \ottnt{T_{{\mathrm{2}}}}}%
}{
\Gamma  \vdash   \forall   \alpha  .  \ottnt{T_{{\mathrm{1}}}}   \ottsym{<:}   \forall   \alpha  .  \ottnt{T_{{\mathrm{2}}}} }{%
{\ottdrulename{S\_Forall}}{}%
}}
\newcommand{\ottdruleSXXFun}[1]{\ottdrule[#1]{%
\ottpremise{ \Gamma  \vdash  \ottnt{T_{{\mathrm{21}}}}  \ottsym{<:}  \ottnt{T_{{\mathrm{11}}}}  \quad   \Gamma  ,  \mathit{x}  \mathord{:}  \ottnt{T_{{\mathrm{21}}}}   \vdash  \ottnt{T_{{\mathrm{12}}}} \, [  \langle  \ottnt{T_{{\mathrm{21}}}}  \Rightarrow  \ottnt{T_{{\mathrm{11}}}}  \rangle   ^{ \ell }  \, \mathit{x}  \ottsym{/}  \mathit{x}  ]  \ottsym{<:}  \ottnt{T_{{\mathrm{22}}}} }%
}{
\Gamma  \vdash   \mathit{x} \mathord{:} \ottnt{T_{{\mathrm{11}}}} \rightarrow \ottnt{T_{{\mathrm{12}}}}   \ottsym{<:}   \mathit{x} \mathord{:} \ottnt{T_{{\mathrm{21}}}} \rightarrow \ottnt{T_{{\mathrm{22}}}} }{%
{\ottdrulename{S\_Fun}}{}%
}}
\newcommand{\ottdruleSXXRefineL}[1]{\ottdrule[#1]{%
\ottpremise{\Gamma  \vdash  \ottnt{T_{{\mathrm{1}}}}  \ottsym{<:}  \ottnt{T_{{\mathrm{2}}}}}%
}{
\Gamma  \vdash   \{  \mathit{x}  \mathord{:}  \ottnt{T_{{\mathrm{1}}}}   \mathop{\mid}   \ottnt{e_{{\mathrm{1}}}}  \}   \ottsym{<:}  \ottnt{T_{{\mathrm{2}}}}}{%
{\ottdrulename{S\_RefineL}}{}%
}}
\newcommand{\ottdruleSXXRefineR}[1]{\ottdrule[#1]{%
\ottpremise{ \Gamma  \vdash  \ottnt{T_{{\mathrm{1}}}}  \ottsym{<:}  \ottnt{T_{{\mathrm{2}}}}  \quad   \Gamma  ,  \mathit{x}  \mathord{:}  \ottnt{T_{{\mathrm{1}}}}   \models  \ottnt{e_{{\mathrm{2}}}} \, [  \langle  \ottnt{T_{{\mathrm{1}}}}  \Rightarrow  \ottnt{T_{{\mathrm{2}}}}  \rangle   ^{ \ell }  \, \mathit{x}  \ottsym{/}  \mathit{x}  ] }%
}{
\Gamma  \vdash  \ottnt{T_{{\mathrm{1}}}}  \ottsym{<:}   \{  \mathit{x}  \mathord{:}  \ottnt{T_{{\mathrm{2}}}}   \mathop{\mid}   \ottnt{e_{{\mathrm{2}}}}  \} }{%
{\ottdrulename{S\_RefineR}}{}%
}}
\newcommand{\ottdruleSatis}[1]{\ottdrule[#1]{%
\ottpremise{\forall  \sigma  .~  \Gamma  \vdash  \sigma \, \mathbin{ \text{implies} } \,  \sigma  (  \ottnt{e}  )   \longrightarrow^{\ast}   \mathsf{true} }%
}{
\Gamma  \models  \ottnt{e}}{%
{\ottdrulename{Satis}}{}%
}}
\def\ottmv#1{\unskip\mathit{#1}}
\renewcommand{\ottdrule}[4][]{{\displaystyle\frac{\begin{array}{c}#2\end{array}}{#3}\;\ottdrulename{#4}}}
\title{Reasoning About Polymorphic Manifest Contracts}
\author[T.~Sekiyama]{Taro Sekiyama}
\address{National Institute of Informatics, Japan}
\email{sekiyama@nii.ac.jp}
\author[A.~Igarashi]{Atsushi Igarashi}
\address{Graduate School of Informatics, Kyoto University, Japan}
\email{igarashi@kuis.kyoto-u.ac.jp}
\begin{document}

\begin{abstract}
 Manifest contract calculi, which integrate cast-based dynamic
 contract checking and refinement type systems, have been studied as
 foundations for hybrid contract checking.
 In this article, we study techniques to reasoning about a
 polymorphic manifest contract calculus, including a few program
 transformations related to static contract verification.
 We first define a polymorphic manifest contract calculus \fhfix,
 which is much simpler than a previously studied one with delayed
 substitution, and a logical relation for it and prove that the
 logical relation is sound with respect to contextual equivalence.
 Next, we show that the \emph{upcast elimination} property, which has
 been studied as correctness of subtyping-based static cast
 verification, holds for \fhfix.  More specifically, we give a
 subtyping relation (which is not part of the calculus) for \fhfix{}
 types and prove that a term obtained by eliminating upcasts---casts
 from one type to a supertype of it---is logically related and so
 contextually equivalent to the original one.  We also justify two
 other program transformations for casts: selfification and
 static cast decomposition, which help upcast elimination.
 A challenge is that, due to the subsumption-free approach to manifest
 contracts, these program transformations do not always
 preserve well-typedness of terms.  To address it, the logical
 relation and contextual equivalence in this work are defined as
 \emph{semityped} relations: only one side of the relations is
 required to be well typed and the other side may be ill typed.
\end{abstract}

\maketitle

%%% Local Variables:
%%% mode:yatex
%%% TeX-master: "../main.tex"
%%% End:

\section{Introduction}
\label{sec:intro}

\subsection{Software contracts}

\emph{Software contracts}~\cite{Meyer_1988_book} are a promising program
verification tool to develop robust, dependable software.
Contracts are agreements between a supplier and a client of software components.
On one hand, contracts are what the supplier guarantees.
On the other hand, they are what the client requires.
Following Eiffel~\cite{Meyer_1988_book}, a pioneer of software contracts,
contracts in this work are described as executable Boolean expressions
written in the same language as the program.
%
% The most basic building block of such contracts is a Boolean expression.
%
For example, the specification that both numbers $\mathit{x}$ and $\mathit{y}$ are either
positive or negative is described as Boolean expression ``$  \mathit{x}  \mathrel{*}  \mathit{y}   \mathrel{>} \ottsym{0} $''.
%
% \footnote{This does not mean that contracts are superior to logical
% formulas as a specification language; there are pros and cons between them.}
%
% Thus, contracts are free from restrictions (e.g., termination requirement)
% imposed on logic-based specifications, which may reduce the burden of
% programmers.\footnote{This does not mean that contracts are superior to logical
% formulas as a specification language; there are pros and cons between them.}
%
% that is, we can check that
% contracts hold at run time, while run-time checking of contracts may diverge or
% trigger run-time errors, unlike logical predicates.
%
% \emph{any} program predicate can be a contract.
% %
% Thus, for example, predicate ``$ \mathsf{prime?}  \, \mathit{x}$'', which uses a user-defined
% Boolean function $ \mathsf{prime?} $ that returns whether a given integer is a prime,
% is also a contract, even if it may diverge.
% %
% Contracts can be extended to verify not only first-order values, such as
% integers, floating-points, strings, but also higher-order
% ones~\cite{Findler/Felleisen_2002_ICFP}.

Contracts can be verified by two complementary approaches: static and dynamic
verification.
Dynamic verification is possible due to executability of contracts---the
run-time system can confirm that a contract holds by evaluating it.
Since Eiffel advocated ``Design by Contracts''~\cite{Meyer_1988_book}, there has
been extensive work on dynamic contract
verification~\cite{Rosenblum_1995_IEEETSE,Kramer_1998_TOOLS,Findler/Felleisen_2002_ICFP,Findler/Guo/Rogers_2007_IFL,Wadler/Findler_2009_ESOP,Strickland/Felleisen_2010_DLS,Disney/Flanagan/McCarthy_2011_ICFP,Chitil_2012_ICFP,Dimoulas/Tobin-Hochstadt/Felleisen_2012_ESOP,Takikawa/Strickland/Tobin-Hochstadt_2013_ESOP}.
Dynamic verification is easy to use, while it brings possibly significant
run-time overhead~\cite{Findler/Guo/Rogers_2007_IFL} and, perhaps worse, it
cannot check all possible execution paths, which may lead to missing critical errors.
Static
verification~\cite{Rondon/Kawaguchi/Jhala_2008_PLDI,Xu/PeytonJones/Claessen_2009_POPL,Bierman/Gordon/Hrictcu/Langworthy_ICFP_2010,Vazou/Rondon/Jhala_2013_ESOP,Nguyen/Tobin-Hochstadt/Horn_2014_ICFP,Vekris/Cosman/Jhala_2016_PLDI}
is another, complementary approach to program verification with contracts.
It causes no run-time overhead and guarantees that contracts are always
satisfied at run time, while it is difficult to use---it often requires heavy
annotations in programs, gives complicated error messages, and restricts the
expressive power of contracts.

\subsection{Manifest contracts}
\label{sec:intro-manifest}
To take the best of both, hybrid contract verification---where contracts are
verified statically if possible and, otherwise, dynamically---was proposed by
Flanagan~\cite{Flanagan_2006_POPL}, and calculi of \emph{manifest
contracts}~\cite{Flanagan_2006_POPL,Greenberg/Pierce/Weirich_2010_POPL,Knowles/Flanagan_2010_TOPLAS,Belo/Greenberg/Igarashi/Pierce_2011_ESOP,Sekiyama/Nishida/Igarashi_2015_POPL,Sekiyama/Igarashi/Greenberg_2016_TOPLAS,Sekiyama/Igarashi_2017_POPL}
have been studied as its theoretical foundation.
Manifest contracts refer to contract systems where contract information occurs as
part of types.
In particular, contracts are embedded into types by \emph{refinement types}
$ \{  \mathit{x}  \mathord{:}  \ottnt{T}   \mathop{\mid}   \ottnt{e}  \} $,\footnote{Although in the context of static verification the
underlying type $\ottnt{T}$ of a refinement type $ \{  \mathit{x}  \mathord{:}  \ottnt{T}   \mathop{\mid}   \ottnt{e}  \} $ is restricted to be a
base type usually, this work allows it to be arbitrary; this extension
is useful to describe contracts for abstract data types~\cite{Belo/Greenberg/Igarashi/Pierce_2011_ESOP,Sekiyama/Igarashi/Greenberg_2016_TOPLAS}.}
which denote a set of values $\ottnt{v}$ of $\ottnt{T}$ such that $\ottnt{v}$ satisfies
Boolean expression $\ottnt{e}$ (which is called a \emph{contract} or a \emph{refinement}), that
is, $\ottnt{e} \, [  \ottnt{v}  \ottsym{/}  \mathit{x}  ]$ evaluates to $ \mathsf{true} $.
For example, using refinement types, a type of positive numbers is represented
by $ \{  \mathit{x}  \mathord{:}   \mathsf{Int}    \mathop{\mid}    \mathit{x}  \mathrel{>} \ottsym{0}   \} $.
%
% Although refinement types are attractive as they are, combining them with
% \emph{dependent function types} makes it possible to express much more,
% practical specifications.
% %
% Dependent function types take the form of $ \mathit{x} \mathord{:} \ottnt{T_{{\mathrm{1}}}} \rightarrow \ottnt{T_{{\mathrm{2}}}} $, which denotes
% functions that, given values $\ottnt{v}$ of $\ottnt{T_{{\mathrm{1}}}}$, return values of
% $\ottnt{T_{{\mathrm{2}}}} \, [  \ottnt{v}  \ottsym{/}  \mathit{x}  ]$.
% %
% For example, $ \mathit{x} \mathord{:}  \mathsf{Int}  \rightarrow  \{  \mathit{y}  \mathord{:}   \mathsf{Int}    \mathop{\mid}      \mathsf{prime?}  \, \mathit{y}  \mathrel{ \mathsf{and} }  \mathit{y}   \mathrel{>} \mathit{x}   \}  $ denotes functions that
% return a prime number larger than a given integer.

% \parag{Dynamic verification.}
%
Dynamic verification in manifest contracts is performed by dynamic type
conversion, called \emph{casts}.
A cast $\langle  \ottnt{T_{{\mathrm{1}}}}  \Rightarrow  \ottnt{T_{{\mathrm{2}}}}  \rangle   ^{ \ell } $ checks that, when applied to value $\ottnt{v}$ of source type
$\ottnt{T_{{\mathrm{1}}}}$, $\ottnt{v}$ can behave as target type $\ottnt{T_{{\mathrm{2}}}}$.
In particular, if $\ottnt{T_{{\mathrm{2}}}}$ is a refinement type, the cast checks that $\ottnt{v}$
satisfies the contract of $\ottnt{T_{{\mathrm{2}}}}$.
If the contract check succeeds, the cast returns $\ottnt{v}$; otherwise, if it
fails, an uncatchable exception, called \emph{blame}, will be raised.
For example, let us consider cast $\langle   \{  \mathit{x}  \mathord{:}   \mathsf{Int}    \mathop{\mid}    \mathsf{prime?}  \, \mathit{x}  \}   \Rightarrow   \{  \mathit{x}  \mathord{:}   \mathsf{Int}    \mathop{\mid}    \mathit{x}  \mathrel{>} \ottsym{2}   \}   \rangle   ^{ \ell } $,
where $ \mathsf{prime?} $ is a Boolean function that decides if a given integer is a
prime number.
If this cast is applied to a prime number other than $\ottsym{2}$, the
check succeeds and the cast application returns the number itself.
Otherwise, if it is applied to $\ottsym{2}$, it fails and blame is raised.
The superscript $\ell$ (called blame label) on a cast is used to
indicate which cast has failed.

Static contract verification is formalized as \emph{subtyping}, which statically
checks that any value of a subtype behaves as a supertype.
In particular, a refinement type $ \{  \mathit{x}  \mathord{:}  \ottnt{T_{{\mathrm{1}}}}   \mathop{\mid}   \ottnt{e_{{\mathrm{1}}}}  \} $ is a subtype of another
$ \{  \mathit{x}  \mathord{:}  \ottnt{T_{{\mathrm{2}}}}   \mathop{\mid}   \ottnt{e_{{\mathrm{2}}}}  \} $ if any value of $\ottnt{T_{{\mathrm{1}}}}$ satisfying $\ottnt{e_{{\mathrm{1}}}}$ behaves as $\ottnt{T_{{\mathrm{2}}}}$
and satisfies $\ottnt{e_{{\mathrm{2}}}}$.
For example, $ \{  \mathit{x}  \mathord{:}   \mathsf{Int}    \mathop{\mid}    \mathsf{prime?}  \, \mathit{x}  \} $ is a subtype of $ \{  \mathit{x}  \mathord{:}   \mathsf{Int}    \mathop{\mid}    \mathit{x}  \mathrel{>} \ottsym{0}   \} $
because all prime numbers should be positive.

Hybrid contract verification integrates these two verification mechanisms of
contracts.
In the hybrid approach, for every program point where a type
$\ottnt{T_{{\mathrm{1}}}}$ is required to be a subtype of $\ottnt{T_{{\mathrm{2}}}}$, a type checker first
tries to solve the instance of the subtyping problem statically.
Unfortunately, since contracts are arbitrary Boolean expressions in a
Turing-complete language, the subtyping problem is undecidable in
general.
Thus, the type checker may not be able to solve the problem instance positively or
negatively.
In such a case, it inserts a cast from $\ottnt{T_{{\mathrm{1}}}}$ to $\ottnt{T_{{\mathrm{2}}}}$ into
the program point in
order to dynamically ensure that run-time values of $\ottnt{T_{{\mathrm{1}}}}$ behave as $\ottnt{T_{{\mathrm{2}}}}$.
For example, let us consider function application $\mathit{f} \, \mathit{x}$ where $\mathit{f}$ and
$\mathit{x}$ are given types $ \{  \mathit{y}  \mathord{:}   \mathsf{Int}    \mathop{\mid}    \mathsf{prime?}  \, \mathit{y}  \}   \rightarrow   \mathsf{Int} $ and
$\ottnt{T} \defeq  \{  \mathit{y}  \mathord{:}   \mathsf{Int}    \mathop{\mid}      \ottsym{2}  \mathrel{<}  \mathit{y}   \mathrel{<}  \ottsym{8}   \mathrel{ \mathsf{and} }   \mathsf{odd?}   \, \mathit{y}  \} $, respectively.
Given this expression, the type checker tries to see if $\ottnt{T}$ is a subtype of
$ \{  \mathit{y}  \mathord{:}   \mathsf{Int}    \mathop{\mid}    \mathsf{prime?}  \, \mathit{y}  \} $.
If the checker is strong enough, it will find out that
values of $\ottnt{T}$ are only three, five, and seven and that the subtyping relation holds
and accept $\mathit{f} \, \mathit{x}$; otherwise,
cast $\langle  \ottnt{T}  \Rightarrow   \{  \mathit{y}  \mathord{:}   \mathsf{Int}    \mathop{\mid}    \mathsf{prime?}  \, \mathit{y}  \}   \rangle   ^{ \ell } $ is inserted to check $\mathit{x}$ satisfies
contract $ \mathsf{prime?} $ at run time and the resulting expression $\mathit{f} \, \ottsym{(}  \langle  \ottnt{T}  \Rightarrow   \{  \mathit{y}  \mathord{:}   \mathsf{Int}    \mathop{\mid}    \mathsf{prime?}  \, \mathit{y}  \}   \rangle   ^{ \ell }  \, \mathit{x}  \ottsym{)}$ will be evaluated.

\subsection{Our work}
\label{sec:intro-work}

In this article, we study program reasoning in manifest contracts.
The first goal of the reasoning is to justify hybrid contract verification.
As described in \sect{intro-manifest}, a cast is inserted if an instance
of the subtyping problem is not solved statically.
Unfortunately, due to undecidability of the subtyping problem, it is possible
that casts from a type to its supertype---which we call \emph{upcasts}---are
inserted, though they are actually unnecessary.
How many upcasts are inserted rests on a prover used in static verification: the
more powerful the prover is, the less upcasts are inserted.
In other words, the behavior of programs could be dependent on the prover due to the
insertion of upcasts, which is not very desirable
because the dependency on provers would make it difficult to expect how
programs behave when the prover is modified.
We show that it is not the case, that is, the presence of upcasts has no influences
on the behavior of programs; this property is called the \emph{upcast
elimination}.

In fact, the upcast elimination has been studied in the prior work on manifest
contracts~\cite{Flanagan_2006_POPL,Knowles/Flanagan_2010_TOPLAS,Belo/Greenberg/Igarashi/Pierce_2011_ESOP},
but it is not satisfactory.
Flanagan~\cite{Flanagan_2006_POPL} and Belo et
al.~\cite{Belo/Greenberg/Igarashi/Pierce_2011_ESOP} studied the upcast
elimination for a simply typed manifest contract calculus and a polymorphic one,
respectively, but it turned out that their calculi are flawed~\cite{Knowles/Flanagan_2010_TOPLAS,Sekiyama/Igarashi/Greenberg_2016_TOPLAS}.
While Knowles and Flanagan~\cite{Knowles/Flanagan_2010_TOPLAS} has resolved the
issue of Flanagan, their upcast elimination deals with only closed upcasts;
while Sekiyama et al.~\cite{Sekiyama/Igarashi/Greenberg_2016_TOPLAS} fixed the
flaw in Belo et al., they did not address the upcast elimination;
we discuss in more detail in \sect{relwork}.
As far as we know, this work is the first to show the upcast elimination for open upcasts.

We introduce a \emph{subsumption-free} polymorphic manifest contract calculus
{\fhfix} and show the upcast elimination for it.
{\fhfix} is subsumption-free in the sense that it lacks a typing rule
of subsumption, that is, to promote the type of an expression to a supertype
(in fact, subtyping is not even part of the calculus)
and casts are necessary everywhere a required type is not
syntactically equivalent to the type of an expression.
In this style, static verification is performed ``post facto'', that is, upcasts
are eliminated \emph{post facto} after typechecking.
A subsumption-free manifest contract calculus is first developed by Belo et
al.~\cite{Belo/Greenberg/Igarashi/Pierce_2011_ESOP} to avoid the circularity
issue of manifest contract calculi with
subsumption~\cite{Knowles/Flanagan_2010_TOPLAS,Belo/Greenberg/Igarashi/Pierce_2011_ESOP}.
However, their metatheory turned out to rest on a wrong
conjecture~\cite{Sekiyama/Igarashi/Greenberg_2016_TOPLAS}.
Sekiyama et al.~\cite{Sekiyama/Igarashi/Greenberg_2016_TOPLAS} revised Belo et
al.'s work and resolved their issues by introducing a polymorphic manifest
contract calculus equipped with \emph{delayed substitution}, which suspends
substitution for variables in casts until their refinements are checked.
While delayed substitution ensures type soundness and parametricity, it makes
the metatheory complicated.
In this work, we adopt usual substitution to keep the metatheory simple.
To ensure type soundness under usual substitution, we---inspired by
Sekiyama et al.~\cite{Sekiyama/Nishida/Igarashi_2015_POPL}---modify
the semantics of casts so that all refinements in the target type of a
cast are checked even though they have been ensured by the source
type, whereas checks of refinements which have been ensured are
skipped in the semantics by Belo et
al.~\cite{Belo/Greenberg/Igarashi/Pierce_2011_ESOP} and Sekiyama et
al.~\cite{Sekiyama/Igarashi/Greenberg_2016_TOPLAS}.
For example, given $\langle   \{  \mathit{x}  \mathord{:}   \mathsf{Int}    \mathop{\mid}    \mathsf{prime?}  \, \mathit{x}  \}   \Rightarrow   \{  \mathit{y}  \mathord{:}   \{  \mathit{x}  \mathord{:}   \mathsf{Int}    \mathop{\mid}    \mathsf{prime?}  \, \mathit{x}  \}    \mathop{\mid}    \mathit{y}  \mathrel{>} \ottsym{2}   \}   \rangle   ^{ \ell } $, our
``fussy'' semantics checks both $ \mathsf{prime?}  \, \mathit{x}$ and $ \mathit{y}  \mathrel{>} \ottsym{2} $, while Belo et
al.'s ``sloppy'' semantics checks only $ \mathit{y}  \mathrel{>} \ottsym{2} $ because $ \mathsf{prime?}  \, \mathit{x}$ is
ensured by the source type.  \AI{We should consult Michael about our choice of
words ``fussy'' and ``sloppy''.}
Our fussy semantics resolves the issue of type soundness in Belo et al. and
is arguably simpler than Sekiyama et al.

In addition to the upcast elimination, we study reasoning about casts to make static contract verification more
effective.
In particular, this work studies two additional reasoning techniques.
The first is \emph{selfification}~\cite{Ou/Tan/Mandelbaum/Walker_2004_TCS},
which embeds information of expressions into their types.
For example, it gives expression $\ottnt{e}$ of integer type $ \mathsf{Int} $ a
more informative refinement type $ \{  \mathit{x}  \mathord{:}   \mathsf{Int}    \mathop{\mid}    \mathit{x}  \mathrel{=}_{  \mathsf{Int}  }  \ottnt{e}   \} $ (where
$ \mathrel{=}_{  \mathsf{Int}  } $ is a Boolean equality operator on integers).
The selfification is easily extensible to higher-order types, and it is
especially useful when given type information is not sufficient to solve
subtyping instances; see \sect{reasoning-self} for an example.
We formalize the selfification by casts: given $\ottnt{e}$ of $\ottnt{T}$, we show that
$\ottnt{e}$ is equivalent to a cast application $\langle  \ottnt{T}  \Rightarrow   \mathit{self}  (  \ottnt{T} ,  \ottnt{e}  )   \rangle   ^{ \ell }  \, \ottnt{e}$, where
$ \mathit{self}  (  \ottnt{T} ,  \ottnt{e}  ) $ is the resulting type of embedding $\ottnt{e}$ into $\ottnt{T}$.
In other words, $\ottnt{e}$ behaves as an expression of $ \mathit{self}  (  \ottnt{T} ,  \ottnt{e}  ) $.
The second is \emph{static cast decomposition}, which leads to elimination of
more upcasts obtained by reducing nonredundant casts.
%
% Showing correctness of these two techniques indicates the possibility that
% manifest contracts could be a foundation of static contract verification other
% than subtyping.

We show correctness of three reasoning techniques about casts---the upcast
elimination, the selfification, and the cast decomposition---based on contextual
equivalence: we prove that (1) an upcast is contextually equivalent to an
identity function, (2) a cast application $\langle  \ottnt{T}  \Rightarrow   \mathit{self}  (  \ottnt{T} ,  \ottnt{e}  )   \rangle   ^{ \ell }  \, \ottnt{e}$ is to
$\ottnt{e}$, and (3) a cast is to its static decomposition.
We have to note that contextual equivalence that relates only terms of the same
type (except for the case of type variables) is useless in this work because we
want to show contextual equivalence between \emph{terms of different types}.
For example, an upcast and an identity function may not be given the same type
in our calculus for the lack of subsumption: a possible
type of an upcast $\langle  \ottnt{T_{{\mathrm{1}}}}  \Rightarrow  \ottnt{T_{{\mathrm{2}}}}  \rangle   ^{ \ell } $ is only $\ottnt{T_{{\mathrm{1}}}}  \rightarrow  \ottnt{T_{{\mathrm{2}}}}$, whereas types of
identity functions take the form $\ottnt{T}  \rightarrow  \ottnt{T}$, which is syntactically different
from $\ottnt{T_{{\mathrm{1}}}}  \rightarrow  \ottnt{T_{{\mathrm{2}}}}$ for any $\ottnt{T}$ if $\ottnt{T_{{\mathrm{1}}}}  \mathrel{\neq}  \ottnt{T_{{\mathrm{2}}}}$.
Instead of such usual contextual equivalence---which we call \emph{typed}
contextual equivalence---we introduce \emph{semityped} contextual equivalence,
where a well-typed term and a possibly ill-typed term can be related, and show
correctness of cast reasoning based on it.

Since, as is well known, it is difficult to prove contextual equivalence of
programs directly, we apply a proof technique based on \emph{logical relations}~\cite{Plotkin_1980_TODO,Reynolds_1983_IFIP}.
We develop a logical relation for manifest contracts and show its soundness with
respect to semityped contextual equivalence.
We also show completeness of our logical relation with respect to well-typed
terms in semityped contextual equivalence, via semityped
CIU-equivalence~\cite{Mason/Talcott_1991_JFP}.
The completeness implies transitivity of semityped contextual equivalence, which
is nontrivial in manifest contracts.\footnote{As we will discuss later, showing transitivity of \emph{typed}
contextual equivalence is not trivial, either.}

% \parag{Upcast Elimination.}
% %
% Flanagan~\cite{Flanagan_2006_POPL} integrates static and dynamic verification
% via a cast insertion algorithm, which identifies two types such that one should
% be a subtype of the other, examines whether they are in subtyping actually, and,
% if the problem instance of subtyping cannot be solved statically, inserts a
% cast.
% %
% In general, however, we cannot decide that given two types are in subtyping
% automatically because of refinement types.
% %
% Thus, the cast insertion algorithm may insert a cast from a type to its
% supertype even though it always succeeds.
% %
% Flanagan has shown that inserting such successful casts, called \emph{upcasts},
% does not change the behavior of the program, and Knowles and
% Flanagan~\cite{Knowles/Flanagan_2010_TOPLAS} and Belo et
% al.~\cite{Belo/Greenberg/Igarashi/Pierce_2011_ESOP} go a step further and show
% that expressions with and without upcasts are contextually equivalent---this is
% called \emph{upcast
% elimination}~\cite{Belo/Greenberg/Igarashi/Pierce_2011_ESOP}.
% %
% More precisely, upcast elimination states that an upcast and an identity
% function are contextually equivalent.

\subsection{Organization and proofs}

The rest of this paper is organized as follows.
We define our polymorphic manifest contract calculus {\fhfix} equipped with
fussy cast semantics in \sect{lang}%
{\ifexample{, together with a few motivating examples}\else\fi}.
\sect{ctxeq} introduces semityped contextual equivalence and
\sect{logical_relation} develops a logical relation for {\fhfix}.
We show that the logical relation is sound with respect to semityped contextual
equivalence and complete for well-typed terms in \sect{proving}.
Using the logical relation, we show the upcast elimination, the selfification,
and the cast decomposition in \sect{reasoning}.
After discussing related work in \sect{relwork}, we conclude in
\sect{conclusion}.

Most of our proofs are written in the pencil-and-paper style, but the proof of
cotermination, which is a key, but often flawed, property of manifest contracts,
is given by Coq proof script \texttt{coterm.v} at
\url{https://skymountain.github.io/work/papers/fh/coterm.zip}.

\section{Polymorphic Manifest Contract Calculus {\fhfix}}
\label{sec:lang}

This section{\ifexample starts with showing motivating examples of polymorphic
manifest contracts,\fi} formalizes a polymorphic manifest contract calculus
{\fhfix}{\ifexample{,}\fi} and proves its type soundness.  
As described in \sect{intro-work}, our run-time system checks even refinements
which have been ensured already, which enables us to prove \emph{cotermination},
a key property to show type soundness and parametricity without delayed
substitution.
We compare our fussy cast semantics with the sloppy cast semantics provided by
Belo et al.~\cite{Belo/Greenberg/Igarashi/Pierce_2011_ESOP} in
\sect{lang-semantics}.
Greenberg~\cite{Greenberg_2013_PhD} provides a few motivating examples of
polymorphic manifest contracts such as abstract datatypes for natural numbers
and string transducers; see Section 3.1 in the dissertation for details.

{\ifexample
\subsection{Examples}
\label{sec:lang-example}

Parametric polymorphism is so powerful that type languages with it can describe
the behavior of programs.
For example, using list type $ \alpha ~\mathsf{List} $ abstracting over element types, we
find that all functions of $ \alpha ~\mathsf{List}   \rightarrow  \alpha$ return an element in a given list,
while a function of $  \mathsf{Int}  ~\mathsf{List}   \rightarrow   \mathsf{Int} $ does not necessarily do so because it may return,
say, the length of a given list.
Combining parametric polymorphism with contracts makes it possible to specify
program behavior more precisely.
To see it, let us consider filtering function $ \mathsf{filter} $ over lists, which,
taking a predicate and a list, produces a list of elements that are in the given
list and satisfy the predicate.
This function is usually typed at $  \ottsym{(}  \alpha  \rightarrow   \mathsf{Bool}   \ottsym{)}  \rightarrow  \alpha ~\mathsf{List}   \rightarrow  \alpha ~\mathsf{List} $.
This type, however, does not specify the behavior of $ \mathsf{filter} $
very precisely---e.g., it may return the given list itself!
Polymorphic manifest contracts can give more precise types to it.
Using function $ \mathsf{for\_all} $ that returns whether all elements of a given list
satisfy a given predicate, we can give a more precise type to $ \mathsf{filter} $:
\[
  \mathit{x} \mathord{:}  \ottsym{(}  \alpha  \rightarrow   \mathsf{Bool}   \ottsym{)}  \rightarrow  \alpha ~\mathsf{List}  \rightarrow  \{  \mathit{y}  \mathord{:}   \alpha ~\mathsf{List}    \mathop{\mid}    \mathsf{for\_all}  \, \mathit{x} \, \mathit{y}  \}  
\]
where $ \mathit{x} \mathord{:} \ottnt{T_{{\mathrm{1}}}} \rightarrow \ottnt{T_{{\mathrm{2}}}} $ is a dependent function type and denotes functions that,
taking value $\ottnt{v}$ of $\ottnt{T_{{\mathrm{1}}}}$, returns values of $\ottnt{T_{{\mathrm{2}}}} \, [  \ottnt{v}  \ottsym{/}  \mathit{x}  ]$.
This type is more precise than $  \ottsym{(}  \alpha  \rightarrow   \mathsf{Bool}   \ottsym{)}  \rightarrow  \alpha ~\mathsf{List}   \rightarrow  \alpha ~\mathsf{List} $ above in that it
guarantees that elements of the result list satisfy the predicate.\footnote{Even
this type does not still capture the behavior of $ \mathsf{filter} $ exactly---e.g., a
function of that type may return the empty list even if there are elements
satisfying the predicate.}

Polymorphism can also express modules via encoding to existential types, and
its combination with contracts gives precise interfaces.
For example, consider an existential type for functional stacks:
\[\begin{array}{l@{}l}
     \forall   \alpha  .  \exists \, \beta  .~    &  \, \ottsym{(}   \mathsf{is\_empty}   \ottsym{:}  \beta  \rightarrow   \mathsf{Bool}   \ottsym{)}  \mathrel{\times}  \ottsym{(}   \mathsf{empty}   \ottsym{:}  \beta  \ottsym{)}    \mathrel{\times}  \ottsym{(}   \mathsf{top}   \ottsym{:}  \beta  \rightarrow  \alpha  \ottsym{)}   \mathrel{\times}   \\  \,  &  \, \ottsym{(}   \mathsf{push}   \ottsym{:}  \alpha  \rightarrow  \beta  \rightarrow  \beta  \ottsym{)}   \mathrel{\times}  \ottsym{(}   \mathsf{pop}   \ottsym{:}  \beta  \rightarrow  \beta  \ottsym{)} 
  \end{array}\]
where each element of the tuple is named for clearance; $\alpha$ is an element
type and $\beta$ is an abstract type of stacks; $ \mathsf{is\_empty} $ returns
whether a given stack is empty, $ \mathsf{empty} $ creates the empty stack, $ \mathsf{top} $
returns the topmost element of a stack, $ \mathsf{push} $ adds an element at the top of
a stack, and $ \mathsf{pop} $ returns the stack without the topmost element.
Contracts can refine this interface and describe what its members require and
what they guarantee more precisely:
\[\begin{array}{l@{}l}
     \forall   \alpha  .  \exists \, \beta  .~    &  \, \ottsym{(}   \mathsf{is\_empty}   \ottsym{:}  \beta  \rightarrow   \mathsf{Bool}   \ottsym{)}  \mathrel{\times}  \ottsym{(}   \mathsf{empty}   \ottsym{:}   \{  \mathit{x}  \mathord{:}  \beta   \mathop{\mid}    \mathsf{is\_empty}  \, \mathit{x}  \}   \ottsym{)}    \mathrel{\times}   \\  \,  &  \, \ottsym{(}   \mathsf{top}   \ottsym{:}   \{  \mathit{x}  \mathord{:}  \beta   \mathop{\mid}    \mathsf{not}  \, \ottsym{(}   \mathsf{is\_empty}   \ottsym{)}  \}   \rightarrow  \alpha  \ottsym{)}   \mathrel{\times}  \ottsym{(}   \mathsf{push}   \ottsym{:}  \alpha  \rightarrow  \beta  \rightarrow   \{  \mathit{x}  \mathord{:}  \beta   \mathop{\mid}    \mathsf{not}  \, \ottsym{(}   \mathsf{is\_empty}  \, \mathit{x}  \ottsym{)}  \}   \ottsym{)}   \mathrel{\times}   \\  \,  &  \, \ottsym{(}   \mathsf{pop}   \ottsym{:}   \{  \mathit{x}  \mathord{:}  \beta   \mathop{\mid}    \mathsf{not}  \, \ottsym{(}   \mathsf{is\_empty}   \ottsym{)}  \}   \rightarrow  \beta  \ottsym{)} 
  \end{array}\]
where refinement types refer to preceding member $ \mathsf{is\_empty} $ with help of
dependent sum type $ {}  \mathit{x}  \ottsym{:}  \ottnt{T_{{\mathrm{1}}}}  {}  \mathrel{\times}  \ottnt{T_{{\mathrm{2}}}} $, which is a product type such that
$\ottnt{T_{{\mathrm{2}}}}$ depends on $\mathit{x}$ of $\ottnt{T_{{\mathrm{1}}}}$ (it can be encoded by dependent function
types).
The type of $ \mathsf{empty} $ ensures that it is the empty stack; that of $ \mathsf{push} $
guarantees that it produces a nonempty stack; and $ \mathsf{top} $ and $ \mathsf{pop} $
requires a nonempty stack to be passed because $ \mathsf{top} $ takes the topmost
element and $ \mathsf{pop} $ removes it from the stack, respectively.
This interface detects trivial misuse of stacks statically---e.g., $ \mathsf{pop}  \,  \mathsf{empty} $
is no longer well typed---and does even any complicated misuse at run time.

Interested readers can see more examples in Greenberg~\cite{Greenberg_2013_PhD}.
\fi}

\subsection{Syntax}
\label{sec:lang-syntax}

 \begin{fhfigure*}[t!]
  \hspace{-4ex}
  \begin{minipage}[t]{.47\textwidth}
   $\begin{array}[t]{r@{\;\;}c@{\;\;}l}
    \multicolumn{3}{l}{\textbf{Types}} \\[.5ex]
     \ottnt{B} &::=&  \mathsf{Bool}  \mid ... \\
     \ottnt{T} &::=& \ottnt{B} \mid \alpha \mid  \mathit{x} \mathord{:} \ottnt{T_{{\mathrm{1}}}} \rightarrow \ottnt{T_{{\mathrm{2}}}}  \mid  \forall   \alpha  .  \ottnt{T}  \mid
       \{  \mathit{x}  \mathord{:}  \ottnt{T}   \mathop{\mid}   \ottnt{e}  \}  \\[1ex]
    \multicolumn{3}{l}{\textbf{Typing Contexts}} \\[.5ex]
    \Gamma &::=&  \emptyset  \mid  \Gamma  ,  \mathit{x}  \mathord{:}  \ottnt{T}  \mid \Gamma  \ottsym{,}  \alpha \\[1ex]
   \end{array}$
  \end{minipage}
  \begin{minipage}[t]{.52\hsize}
   $\begin{array}[t]{r@{\;\;}c@{\;\;}l}
    \multicolumn{3}{l}{\textbf{Values and Terms}} \\[.5ex]
    \ottnt{v} &::=&
     \ottnt{k} \mid   \lambda    \mathit{x}  \mathord{:}  \ottnt{T}  .  \ottnt{e}  \mid
     \ifdtabs  \Lambda\!  \, \alpha  .~  \ottnt{v} \mid  \gb{  \Lambda\!  \, \alpha  .~  \langle  \ottnt{T_{{\mathrm{1}}}}  \Rightarrow  \ottnt{T_{{\mathrm{2}}}}  \rangle   ^{ \ell }  \, \ottsym{(}  \ottnt{v} \, \alpha  \ottsym{)} }  \else  \Lambda\!  \, \alpha  .~  \ottnt{e} \fi \mid
     \langle  \ottnt{T_{{\mathrm{1}}}}  \Rightarrow  \ottnt{T_{{\mathrm{2}}}}  \rangle   ^{ \ell }  \\
    \ottnt{e} &::=&
     \ottnt{v} \mid \mathit{x} \mid {\tt op} \, \ottsym{(}  \ottnt{e_{{\mathrm{1}}}}  \ottsym{,} \, .. \, \ottsym{,}  \ottnt{e_{\ottmv{n}}}  \ottsym{)} \mid \ottnt{e_{{\mathrm{1}}}} \, \ottnt{e_{{\mathrm{2}}}} \mid \ottnt{e} \, \ottnt{T} \mid
     \\ &&
      \langle\!\langle  \,  \{  \mathit{x}  \mathord{:}  \ottnt{T_{{\mathrm{1}}}}   \mathop{\mid}   \ottnt{e_{{\mathrm{1}}}}  \}   \ottsym{,}  \ottnt{e_{{\mathrm{2}}}} \,  \rangle\!\rangle  \,  ^{ \ell }  \mid \langle   \{  \mathit{x}  \mathord{:}  \ottnt{T_{{\mathrm{1}}}}   \mathop{\mid}   \ottnt{e_{{\mathrm{1}}}}  \}   \ottsym{,}  \ottnt{e_{{\mathrm{2}}}}  \ottsym{,}  \ottnt{v}  \rangle   ^{ \ell }  \mid  \mathord{\Uparrow}  \ell  \\[1ex]
     %  \mid
     % \bg{ \langle\!\langle  \,  \{  \mathit{x}  \mathord{:}  \ottnt{T_{{\mathrm{1}}}}   \mathop{\mid}   \ottnt{e_{{\mathrm{1}}}}  \}   \ottsym{,}  \ottnt{e_{{\mathrm{2}}}} \,  \rangle\!\rangle  \,  ^{ \ell } } \mid \langle   \{  \mathit{x}  \mathord{:}  \ottnt{T_{{\mathrm{1}}}}   \mathop{\mid}   \ottnt{e_{{\mathrm{1}}}}  \}   \ottsym{,}  \ottnt{e_{{\mathrm{2}}}}  \ottsym{,}  \ottnt{v}  \rangle   ^{ \ell }  \mid  \mathord{\Uparrow}  \ell  \\[1ex]
   \end{array}$
  \end{minipage}
  
  \caption{Syntax.}
  \label{fig:fhfix-syntax}
 \end{fhfigure*}

 \fig{fhfix-syntax} shows the syntax of {\fhfix}, which is based on Belo et
 al.~\cite{Belo/Greenberg/Igarashi/Pierce_2011_ESOP}.
 Types, ranged over by $\ottnt{T}$, are from the standard polymorphic lambda calculus
 except dependent function types and refinement types.
 % base types
 Base types, denoted by $\ottnt{B}$, are parameterized, but we suppose that they
 include Boolean type $ \mathsf{Bool} $ for refinements.
 We also assume that, for each $\ottnt{B}$, there is a set $ {\cal K}_{ \ottnt{B} } $ of constants
 of $\ottnt{B}$; in particular, $ {\cal K}_{  \mathsf{Bool}  }  = \{ \mathsf{true} ,  \mathsf{false} \}$.
 %
 % refinement types
 Refinement types $ \{  \mathit{x}  \mathord{:}  \ottnt{T}   \mathop{\mid}   \ottnt{e}  \} $, where variable $\mathit{x}$ of type $\ottnt{T}$ is
 bound in Boolean expression $\ottnt{e}$, denotes the set of values $\ottnt{v}$ of
 $\ottnt{T}$ such that $\ottnt{e} \, [  \ottnt{v}  \ottsym{/}  \mathit{x}  ]$ evaluates to $ \mathsf{true} $.
 As the prior
 work~\cite{Belo/Greenberg/Igarashi/Pierce_2011_ESOP,Sekiyama/Nishida/Igarashi_2015_POPL,Sekiyama/Igarashi/Greenberg_2016_TOPLAS},
 our refinement types are \emph{general} in the sense that any type
 $\ottnt{T}$ can be refined, while some
 work~\cite{Ou/Tan/Mandelbaum/Walker_2004_TCS,Flanagan_2006_POPL} allows
 only base types to be refined.
 %
 % dependent types and universal types
 Dependent function types $ \mathit{x} \mathord{:} \ottnt{T_{{\mathrm{1}}}} \rightarrow \ottnt{T_{{\mathrm{2}}}} $ bind variable $\mathit{x}$ of domain type
 $\ottnt{T_{{\mathrm{1}}}}$ in codomain type $\ottnt{T_{{\mathrm{2}}}}$, and universal types $ \forall   \alpha  .  \ottnt{T} $ bind
 type variable $\alpha$ in $\ottnt{T}$.
 %
 % typing context
 Typing contexts $\Gamma$ are a sequence of type variables and bindings of the
 form $\mathit{x}:\ottnt{T}$, and we suppose that term and type variables bound in
 a typing context are distinct.

 % value
 Values, ranged over by $\ottnt{v}$, consist of casts and usual constructs from the
 call-by-value polymorphic lambda calculus---constants (denoted by $\ottnt{k}$),
 term abstractions, and type abstractions.
 %
 % function abstractions
 Term abstractions $  \lambda    \mathit{x}  \mathord{:}  \ottnt{T}  .  \ottnt{e} $ and type abstractions $ \Lambda\!  \, \alpha  .~  \ottnt{e}$ bind $\mathit{x}$
 and $\alpha$ in the body $\ottnt{e}$, respectively.
 %
 % type abstractions
 % {\ifdtabs
 % %
 % Type abstractions take two forms; $ \Lambda\!  \, \alpha  .~  \ottnt{v}$ is a standard type
 % abstraction, and $ \Lambda\!  \, \alpha  .~  \langle  \ottnt{T_{{\mathrm{1}}}}  \Rightarrow  \ottnt{T_{{\mathrm{2}}}}  \rangle   ^{ \ell }  \, \ottsym{(}  \ottnt{v} \, \alpha  \ottsym{)}$ represents a term reduced from an
 % application of cast between universal types.
 % %
 % Like the value restriction~\cite{Wright1995} in ML, we allow only values in
 % body of $ \Lambda\!  \, \alpha  .~  \ottnt{v}$.
 % %
 % \else
 % \fi}
 %
 % casts
 % Manifest contract calculi realize dynamic contract checking with casts.
 %
 Casts $\langle  \ottnt{T_{{\mathrm{1}}}}  \Rightarrow  \ottnt{T_{{\mathrm{2}}}}  \rangle   ^{ \ell } $ from source type $\ottnt{T_{{\mathrm{1}}}}$ to target type $\ottnt{T_{{\mathrm{2}}}}$ check
 that arguments of $\ottnt{T_{{\mathrm{1}}}}$ can behave as $\ottnt{T_{{\mathrm{2}}}}$ at run time.
 %
 % If the check succeeds, it returns (a wrapper of) the checked value; otherwise,
 % it raises an uncatchable exception, called \emph{blame}.
 %
 Label $\ell$ indicates an abstract location of the cast in source code and it
 is used to identify failure casts; in a typical implementation, it would be a
 pair of the file name and the line number where the cast is given.
 We note that casts in {\fhfix} are not equipped with delayed substitution,
 unlike Sekiyama et al.~\cite{Sekiyama/Igarashi/Greenberg_2016_TOPLAS}.
 We discuss how this change affects the design of the logical relation in
 \sect{relwork}.

 The first line of terms, ranged over by $\ottnt{e}$, are standard---values,
 variables (denoted by $\mathit{x}$, $\mathit{y}$, $\mathit{z}$, etc.), primitive operations
 (denoted by $ {\tt op} $), term applications, and type applications.
 We assume that each base type $\ottnt{B}$ is equipped with an equality operator
 $=_{\ottnt{B}}$ to distinguish different constants.

 The second line presents terms which appear at run time for contract checking.
 Waiting checks $ \langle\!\langle  \,  \{  \mathit{x}  \mathord{:}  \ottnt{T_{{\mathrm{1}}}}   \mathop{\mid}   \ottnt{e_{{\mathrm{1}}}}  \}   \ottsym{,}  \ottnt{e_{{\mathrm{2}}}} \,  \rangle\!\rangle  \,  ^{ \ell } $, introduced for fussy cast semantics
 by Sekiyama et al.~\cite{Sekiyama/Nishida/Igarashi_2015_POPL}, check that the
 value of $\ottnt{e_{{\mathrm{2}}}}$ satisfies the contract $\ottnt{e_{{\mathrm{1}}}}$ by turning themselves to
 active checks.
 An active check $\langle   \{  \mathit{x}  \mathord{:}  \ottnt{T_{{\mathrm{1}}}}   \mathop{\mid}   \ottnt{e_{{\mathrm{1}}}}  \}   \ottsym{,}  \ottnt{e_{{\mathrm{2}}}}  \ottsym{,}  \ottnt{v}  \rangle   ^{ \ell } $ denotes an intermediate state of the
 check that $\ottnt{v}$ of $\ottnt{T_{{\mathrm{1}}}}$ satisfies contract $\ottnt{e_{{\mathrm{1}}}}$; $\ottnt{e_{{\mathrm{2}}}}$ is an
 intermediate term during the evaluation of $\ottnt{e_{{\mathrm{1}}}} \, [  \ottnt{v}  \ottsym{/}  \mathit{x}  ]$.
 If $\ottnt{e_{{\mathrm{2}}}}$ evaluates to $ \mathsf{true} $, the active check returns $\ottnt{v}$; otherwise,
 if $\ottnt{e_{{\mathrm{2}}}}$ evaluates to $ \mathsf{false} $, the check fails and uncatchable exception
 $ \mathord{\Uparrow}  \ell $, called \emph{blame}~\cite{Findler/Felleisen_2002_ICFP}, is raised.

 % active checks and blames
 % TODO
 % An active check $\langle   \{  \mathit{x}  \mathord{:}  \ottnt{T_{{\mathrm{1}}}}   \mathop{\mid}   \ottnt{e_{{\mathrm{1}}}}  \}   \ottsym{,}  \ottnt{e_{{\mathrm{2}}}}  \ottsym{,}  \ottnt{v}  \rangle   ^{ \ell } $ is the form in the middle of the
 % evaluation for a cast application, and blames $ \mathord{\Uparrow}  \ell $ failures of dynamic
 % contract checking. The details of these terms are described in
 % \cite{Belo2011}.

 % % evaluation context
 % Evaluation contexts~\cite{Felleisen1992} are ordinary; we adopt call-by-value
 % evaluation, and the arguments evaluate from left to right.

 % manners in lambda-calculus
 We introduce usual notation.
 We write $ \mathit{FV}  (  \ottnt{e}  ) $ and $ \mathit{FTV}  (  \ottnt{e}  ) $ for the sets of free term variables and
 free type variables that occur in $\ottnt{e}$, respectively.
 Term $\ottnt{e}$ is closed if $ \mathit{FV}  (  \ottnt{e}  )   \mathrel{\cup}   \mathit{FTV}  (  \ottnt{e}  )   \ottsym{=}  \emptyset$.
 %
 % TODO: post notation
 $\ottnt{e} \, [  \ottnt{v}  \ottsym{/}  \mathit{x}  ]$ and $\ottnt{e} \, [  \ottnt{T}  \ottsym{/}  \alpha  ]$ denote terms obtained by substituting $\ottnt{v}$ and
 $\ottnt{T}$ for variables $\mathit{x}$ and $\alpha$ in $\ottnt{e}$ in a capture-avoiding
 manner, respectively.
 %
 % In general, $[  \ottnt{v_{{\mathrm{1}}}}  \ottsym{/}  \mathit{x_{{\mathrm{1}}}}  \ottsym{,} \, ... \, \ottsym{,}  \ottnt{v_{\ottmv{n}}}  \ottsym{/}  \mathit{x_{\ottmv{n}}}  ]$ (resp.\ $[  \ottnt{T_{{\mathrm{1}}}}  \ottsym{/}  \alpha_{{\mathrm{1}}}  \ottsym{,} \, ... \, \ottsym{,}  \ottnt{T_{\ottmv{n}}}  \ottsym{/}  \alpha_{\ottmv{n}}  ]$) means
 % simultaneous substitution.
 %
 These notations are also applied to types, typing contexts, and evaluation
 contexts (introduced in \sect{lang-semantics}).
 We write $ \mathit{dom}  (  \Gamma  ) $ for the set of term and type variables bound in $\Gamma$.
 We also write $\ottnt{T_{{\mathrm{1}}}}  \rightarrow  \ottnt{T_{{\mathrm{2}}}}$ for $ \mathit{x} \mathord{:} \ottnt{T_{{\mathrm{1}}}} \rightarrow \ottnt{T_{{\mathrm{2}}}} $ if $\mathit{x}$ does
 not occur free in $\ottnt{T_{{\mathrm{2}}}}$, $\ottnt{e_{{\mathrm{1}}}} ~  {\tt op}  ~ \ottnt{e_{{\mathrm{2}}}}$ for ${\tt op} \, \ottsym{(}  \ottnt{e_{{\mathrm{1}}}}  \ottsym{,}  \ottnt{e_{{\mathrm{2}}}}  \ottsym{)}$, and
 $ \mathsf{let}  ~  \mathit{x}  \mathord{:}  \ottnt{T}  \equal  \ottnt{e_{{\mathrm{1}}}}  ~ \ottliteralin ~  \ottnt{e_{{\mathrm{2}}}} $ for $\ottsym{(}    \lambda    \mathit{x}  \mathord{:}  \ottnt{T}  .  \ottnt{e_{{\mathrm{2}}}}   \ottsym{)} \, \ottnt{e_{{\mathrm{1}}}}$.

 \subsection{Operational Semantics}
 \label{sec:lang-semantics}

 \begin{fhfigure*}[t!]
  \begin{flushleft}
   \framebox{$\ottnt{e_{{\mathrm{1}}}}  \rightsquigarrow  \ottnt{e_{{\mathrm{2}}}}$} \quad {\bf{Reduction Rules}}
  \end{flushleft}

  \begin{center}
   $\begin{array}{rclr}
    {\tt op} \, \ottsym{(}  \ottnt{k_{{\mathrm{1}}}}  ,\, ... \, ,  \ottnt{k_{\ottmv{n}}}  \ottsym{)} & \rightsquigarrow & [\hspace{-.14em}[ {\tt op} ]\hspace{-.14em}] \, \ottsym{(}  \ottnt{k_{{\mathrm{1}}}}  \ottsym{,} \, ... \, \ottsym{,}  \ottnt{k_{\ottmv{n}}}  \ottsym{)} &
     \ottdrulename{R\_Op} \\[0.5ex]

    \ottsym{(}    \lambda    \mathit{x}  \mathord{:}  \ottnt{T}  .  \ottnt{e}   \ottsym{)} \, \ottnt{v} & \rightsquigarrow & \ottnt{e} \, [  \ottnt{v}  \ottsym{/}  \mathit{x}  ] &
     \ottdrulename{R\_Beta} \\[0.5ex]

    \ifdtabs
    \ottsym{(}   \Lambda\!  \, \alpha  .~  \ottnt{v}  \ottsym{)} \, \ottnt{T} & \rightsquigarrow & \ottnt{v} \, [  \ottnt{T}  \ottsym{/}  \alpha  ] &
     \ottdrulename{R\_TBeta1} \\[0.5ex]
    \gb{\ottsym{(}   \Lambda\!  \, \alpha  .~  \langle  \ottnt{T_{{\mathrm{1}}}}  \Rightarrow  \ottnt{T_{{\mathrm{2}}}}  \rangle   ^{ \ell }  \, \ottsym{(}  \ottnt{v} \, \alpha  \ottsym{)}  \ottsym{)} \, \ottnt{T}} & \gb{ \rightsquigarrow } & \gb{\ottsym{(}  \langle  \ottnt{T_{{\mathrm{1}}}}  \Rightarrow  \ottnt{T_{{\mathrm{2}}}}  \rangle   ^{ \ell }  \, \ottsym{(}  \ottnt{v} \, \alpha  \ottsym{)}  \ottsym{)} \, [  \ottnt{T}  \ottsym{/}  \alpha  ]} &
     \gb{\ottdrulename{R\_TBeta2}} \\[0.5ex]
    \else
    \ottsym{(}   \Lambda\!  \, \alpha  .~  \ottnt{e}  \ottsym{)} \, \ottnt{T} & \rightsquigarrow & \ottnt{e} \, [  \ottnt{T}  \ottsym{/}  \alpha  ] &
     \ottdrulename{R\_TBeta} \\[0.5ex]
    \fi

    \langle  \ottnt{B}  \Rightarrow  \ottnt{B}  \rangle   ^{ \ell }  \, \ottnt{v} & \rightsquigarrow & \ottnt{v} & \ottdrulename{R\_Base} \\[0.5ex]

    \langle   \mathit{x} \mathord{:} \ottnt{T_{{\mathrm{11}}}} \rightarrow \ottnt{T_{{\mathrm{12}}}}   \Rightarrow   \mathit{x} \mathord{:} \ottnt{T_{{\mathrm{21}}}} \rightarrow \ottnt{T_{{\mathrm{22}}}}   \rangle   ^{ \ell }  \, \ottnt{v} & \rightsquigarrow & &
     \ottdrulename{R\_Fun} \\
     \multicolumn{4}{r}{
        \lambda    \mathit{x}  \mathord{:}  \ottnt{T_{{\mathrm{21}}}}  .   \mathsf{let}  ~  \mathit{y}  \mathord{:}  \ottnt{T_{{\mathrm{11}}}}  \equal  \langle  \ottnt{T_{{\mathrm{21}}}}  \Rightarrow  \ottnt{T_{{\mathrm{11}}}}  \rangle   ^{ \ell }  \, \mathit{x}  ~ \ottliteralin ~  \langle  \ottnt{T_{{\mathrm{12}}}} \, [  \mathit{y}  \ottsym{/}  \mathit{x}  ]  \Rightarrow  \ottnt{T_{{\mathrm{22}}}}  \rangle   ^{ \ell }    \, \ottsym{(}  \ottnt{v} \, \mathit{y}  \ottsym{)}
     \qquad} \\
     \multicolumn{4}{r}{\text{where }  \mathit{y} \ \text{is a fresh variable} } \\[0.5ex]

    \langle   \forall   \alpha  .  \ottnt{T_{{\mathrm{1}}}}   \Rightarrow   \forall   \alpha  .  \ottnt{T_{{\mathrm{2}}}}   \rangle   ^{ \ell }  \, \ottnt{v} & \rightsquigarrow &  \Lambda\!  \, \alpha  .~  \langle  \ottnt{T_{{\mathrm{1}}}}  \Rightarrow  \ottnt{T_{{\mathrm{2}}}}  \rangle   ^{ \ell }  \, \ottsym{(}  \ottnt{v} \, \alpha  \ottsym{)} &
     \ottdrulename{R\_Forall} \\[1ex]

    \langle   \{  \mathit{x}  \mathord{:}  \ottnt{T_{{\mathrm{1}}}}   \mathop{\mid}   \ottnt{e_{{\mathrm{1}}}}  \}   \Rightarrow  \ottnt{T_{{\mathrm{2}}}}  \rangle   ^{ \ell }  \, \ottnt{v} & \rightsquigarrow & \langle  \ottnt{T_{{\mathrm{1}}}}  \Rightarrow  \ottnt{T_{{\mathrm{2}}}}  \rangle   ^{ \ell }  \, \ottnt{v} &
     \ottdrulename{R\_Forget} \\[1ex]

    \langle  \ottnt{T_{{\mathrm{1}}}}  \Rightarrow   \{  \mathit{x}  \mathord{:}  \ottnt{T_{{\mathrm{2}}}}   \mathop{\mid}   \ottnt{e_{{\mathrm{2}}}}  \}   \rangle   ^{ \ell }  \, \ottnt{v} & \rightsquigarrow &  \langle\!\langle  \,  \{  \mathit{x}  \mathord{:}  \ottnt{T_{{\mathrm{2}}}}   \mathop{\mid}   \ottnt{e_{{\mathrm{2}}}}  \}   \ottsym{,}  \langle  \ottnt{T_{{\mathrm{1}}}}  \Rightarrow  \ottnt{T_{{\mathrm{2}}}}  \rangle   ^{ \ell }  \, \ottnt{v} \,  \rangle\!\rangle  \,  ^{ \ell }  &
     \ottdrulename{R\_PreCheck} \\
     \multicolumn{4}{r}{(\text{if } \forall    \mathit{y}  ,  \ottnt{T'}   ,  \ottnt{e'}   .~  \ottnt{T_{{\mathrm{1}}}}  \mathrel{\neq}   \{  \mathit{y}  \mathord{:}  \ottnt{T'}   \mathop{\mid}   \ottnt{e'}  \} )} \\[0.5ex]

     \langle\!\langle  \,  \{  \mathit{x}  \mathord{:}  \ottnt{T}   \mathop{\mid}   \ottnt{e}  \}   \ottsym{,}  \ottnt{v} \,  \rangle\!\rangle  \,  ^{ \ell }  & \rightsquigarrow & \langle   \{  \mathit{x}  \mathord{:}  \ottnt{T}   \mathop{\mid}   \ottnt{e}  \}   \ottsym{,}  \ottnt{e} \, [  \ottnt{v}  \ottsym{/}  \mathit{x}  ]  \ottsym{,}  \ottnt{v}  \rangle   ^{ \ell }  &
     \ottdrulename{R\_Check} \\[0.5ex]

   \langle   \{  \mathit{x}  \mathord{:}  \ottnt{T}   \mathop{\mid}   \ottnt{e}  \}   \ottsym{,}   \mathsf{true}   \ottsym{,}  \ottnt{v}  \rangle   ^{ \ell }  & \rightsquigarrow & \ottnt{v} & \ottdrulename{R\_OK} \\[0.5ex]

   \langle   \{  \mathit{x}  \mathord{:}  \ottnt{T}   \mathop{\mid}   \ottnt{e}  \}   \ottsym{,}   \mathsf{false}   \ottsym{,}  \ottnt{v}  \rangle   ^{ \ell }  & \rightsquigarrow &  \mathord{\Uparrow}  \ell  & \ottdrulename{R\_Fail} \\
    \end{array}$
  \end{center}

  \vspace{0.5em}
  \begin{flushleft}
   \framebox{$\ottnt{e_{{\mathrm{1}}}}  \longrightarrow  \ottnt{e_{{\mathrm{2}}}}$} \quad {\bf{Evaluation Rules}}
  \end{flushleft}

  \begin{center}
   $\ottdruleEXXRed{}$ \hfil
   $\ottdruleEXXBlame{}$ \hfil
  \end{center}

  \caption{Operational semantics.}
  \label{fig:fhfix-redeval}
 \end{fhfigure*}

 {\fhfix} has call-by-value operational semantics in the small-step style,
 which is given by reduction $ \rightsquigarrow $ and evaluation $ \longrightarrow $ over closed
 terms.
 We write $ \rightsquigarrow^{\ast} $ and $ \longrightarrow^{\ast} $ for the reflexive transitive closures of
 $ \rightsquigarrow $ and $ \longrightarrow $, respectively.
 Reduction and evaluation rules are shown in \fig{fhfix-redeval}.

 \R{Op} says that reduction of primitive operations depends on function
 $[\hspace{-.14em}[\cdot]\hspace{-.14em}]$, which gives a denotation to each
 primitive operation and maps tuples of constants to constants; for example,
 $[\hspace{-.14em}[ {\tt +} ]\hspace{-.14em}](1,3)$ denotes $4$.
 We will describe requirements to $[\hspace{-.14em}[\cdot]\hspace{-.14em}]$ in
 \sect{lang-type-system}.
 Term and type applications evaluate by the standard $\beta$-reduction
 (\R{Beta} and \R{TBeta}).
 %
 % R_TBeta2
 {\ifdtabs
 The rule \R{TBeta2} is introduced for the second form $ \Lambda\!  \, \alpha  .~  \langle  \ottnt{T_{{\mathrm{1}}}}  \Rightarrow  \ottnt{T_{{\mathrm{2}}}}  \rangle   ^{ \ell }  \, \ottsym{(}  \ottnt{v} \, \alpha  \ottsym{)}$
 of type abstraction.
 \fi}

 Cast applications evaluate by combination of cast reduction rules, which are
 from Sekiyama et al.~\cite{Sekiyama/Nishida/Igarashi_2015_POPL} except
 \R{Forall}.
 Casts between the same base type behave as an identity function \R{Base}.
 Casts for function types produce a function wrapper involving casts which are
 contravariant on the domain types and covariant on the codomain types \R{Fun}.
 In taking an argument, the wrapper converts the argument with the
 contravariant cast so that the wrapped function $\ottnt{v}$ can accept it; if the
 contravariant cast succeeds, the wrapper invokes $\ottnt{v}$ with the conversion
 result and applies the covariant cast to the value produced by $\ottnt{v}$.
 \R{Fun} renames $\mathit{x}$ in the codomain type $\ottnt{T_{{\mathrm{12}}}}$ of the source function
 type to $\mathit{y}$ because $\ottnt{T_{{\mathrm{12}}}}$ expects $\mathit{x}$ to be replaced with arguments
 to $\ottnt{v}$ but they are actually denoted by $\mathit{y}$ in the wrapper.
 Casts for universal types behave as in the previous
 work~\cite{Belo/Greenberg/Igarashi/Pierce_2011_ESOP,Sekiyama/Igarashi/Greenberg_2016_TOPLAS};
 it produces a wrapper which, applied to a type, invokes the wrapped type
 abstraction and converts the result \R{Forall}.
 Casts for refinements types first peel off all refinements in the source type
 \R{Forget} and then check refinements in the target type with waiting checks
 \R{PreCheck}.
 After checks of inner refinements finish, the outermost refinement will be checked by
 an active check \R{Check}.
 If the check succeeds, the checked value is returned \R{OK}; otherwise, the
 cast is blamed \R{Fail}.

 % The most important difference from previous
 % work~\cite{Belo2011,Greenberg2013,Sekiyama2014} is that cast reduction rules do
 % not rest on syntactic (in)equalities of source and target types.
 % %
 % For example, our semantics does not include reduction rules as follows:
 % %
 % \[\begin{array}{rcl}
 %  \langle  \ottnt{T}  \Rightarrow  \ottnt{T}  \rangle   ^{ \ell }  \, \ottnt{v}     & \rightsquigarrow & \ottnt{v} \\
 %  \langle  \ottnt{T}  \Rightarrow   \{  \mathit{x}  \mathord{:}  \ottnt{T}   \mathop{\mid}   \ottnt{e}  \}   \rangle   ^{ \ell }  & \rightsquigarrow & \langle   \{  \mathit{x}  \mathord{:}  \ottnt{T}   \mathop{\mid}   \ottnt{e}  \}   \ottsym{,}  \ottnt{e} \, [  \ottnt{v}  \ottsym{/}  \mathit{x}  ]  \ottsym{,}  \ottnt{v}  \rangle   ^{ \ell } ,
 % \end{array}\]
 % ones like which are valid in the previous work.
 % %
 % Although such cast reduction rules have merits such as elimination of redundant
 % computation, naive adaption of them causes the lack of a key property,
 % cotermination~\cite{Sekiyama2014}.
 % %
 % Sekiyama et al.~\cite{Sekiyama2014} resolved this issue by introducing delayed
 % substitution, whereas we avoid it by excluding such reduction rules and
 % introducing those that make up the excluded ones.
 % %
 % This difference from Sekiyama et al.\ affects construction of logical
 % relations.
 % %
 % We will compare two remedies in Section~\ref{sec:relwork}.

 Evaluation uses evaluation contexts~\cite{Felleisen/Hieb_1992_TCS}, given as
 follows, to reduce subterms \E{Red} and lift up blame \E{Blame}.
 \[
  \ottnt{E} ::=  \left[ \, \right]  \mid {\tt op} \, \ottsym{(}  \ottnt{v_{{\mathrm{1}}}}  \ottsym{,} \, .. \, \ottsym{,}  \ottnt{v_{\ottmv{n}}}  \ottsym{,}  \ottnt{E}  \ottsym{,}  \ottnt{e_{{\mathrm{1}}}}  \ottsym{,} \, .. \, \ottsym{,}  \ottnt{e_{\ottmv{m}}}  \ottsym{)} \mid
   \ottnt{E} \, \ottnt{e} \mid \ottnt{v} \, \ottnt{E} \mid \ottnt{E} \, \ottnt{T} \mid  \langle\!\langle  \,  \{  \mathit{x}  \mathord{:}  \ottnt{T}   \mathop{\mid}   \ottnt{e}  \}   \ottsym{,}  \ottnt{E} \,  \rangle\!\rangle  \,  ^{ \ell }  \mid
   \langle   \{  \mathit{x}  \mathord{:}  \ottnt{T}   \mathop{\mid}   \ottnt{e}  \}   \ottsym{,}  \ottnt{E}  \ottsym{,}  \ottnt{v}  \rangle   ^{ \ell } 
 \]
 This definition indicates that the semantics is call-by-value
 and arguments evaluate from left to right.

 % \begin{fhfigure*}[t!]
 %  \begin{flushleft}
 %   \fgb{$ \ottnt{e}  \downarrow $} \quad
 %   \fgb{$ \ottnt{e}  \, \mathord{\Uparrow}  \ell $} \quad
 %   \fgb{$ \ottnt{e}  \upharpoonleft $} \quad
 %   {\bf{Termination Rules}}
 %  \end{flushleft}

 %  \begin{center}
 %   $\ottdruleTermXXVal{}$ \hfil
 %   $\ottdruleTermXXBlame{}$ \\[1ex]
 %   $\ottdruleTermXXStuck{}$
 %  \end{center}
 %  \caption{Termination rules.}
 %  \label{fig:fhfix-termination}
 % \end{fhfigure*}

 % We introduce three relations $ \ottnt{e}  \downarrow $, $ \ottnt{e}  \, \mathord{\Uparrow}  \ell $ and $ \ottnt{e}  \upharpoonleft $,
 % used for the definition of contextual equivalence, to give the notion of
 % observable results of term evaluation.
 % %
 % Each relation is concerned about a different kind of observable results:
 % $ \ottnt{e}  \downarrow $ denotes that the term $\ottnt{e}$ terminates at a value, $ \ottnt{e}  \, \mathord{\Uparrow}  \ell $
 % that $\ottnt{e}$ is blamed with $\ell$, and $ \ottnt{e}  \upharpoonleft $ that $\ottnt{e}$ gets
 % stuck.
 % %
 % Note that we need to observe that terms go wrong since our contextual
 % equivalence is semi-typed, that is, contains some untyped terms.
 % %
 % Figure~\ref{fig:fhfix-termination} shows the rule for each relation.
 % %
 % The rules \Term{Val} and \Term{Blame} are obvious.
 % %
 % The \Term{Stuck} rule states that $\ottnt{e}$ gets stuck if $\ottnt{e}$ terminates at a
 % term which is neither a value nor a blaming.

 \paragraph{Fussy versus sloppy}
 Our cast semantics is fussy in that, when $\langle  \ottnt{T_{{\mathrm{1}}}}  \Rightarrow  \ottnt{T_{{\mathrm{2}}}}  \rangle   ^{ \ell } $ is applied, all
 refinements in target type $\ottnt{T_{{\mathrm{2}}}}$ are checked even if they have been ensured
 by source type $\ottnt{T_{{\mathrm{1}}}}$.
 For example, let us consider reflexive cast
 $\langle   \{  \mathit{x}  \mathord{:}   \{  \mathit{y}  \mathord{:}   \mathsf{Int}    \mathop{\mid}    \mathit{y}  \mathrel{>} \ottsym{2}   \}    \mathop{\mid}    \mathsf{prime?}  \, \mathit{x}  \}   \Rightarrow   \{  \mathit{x}  \mathord{:}   \{  \mathit{y}  \mathord{:}   \mathsf{Int}    \mathop{\mid}    \mathit{y}  \mathrel{>} \ottsym{2}   \}    \mathop{\mid}    \mathsf{prime?}  \, \mathit{x}  \}   \rangle   ^{ \ell } $.
 When applied to $\ottnt{v}$, the cast application forgets
 the refinements in the source type of the cast \R{Forget}:
 \[\begin{array}{ll}
  & \langle   \{  \mathit{x}  \mathord{:}   \{  \mathit{y}  \mathord{:}   \mathsf{Int}    \mathop{\mid}    \mathit{y}  \mathrel{>} \ottsym{2}   \}    \mathop{\mid}    \mathsf{prime?}  \, \mathit{x}  \}   \Rightarrow   \{  \mathit{x}  \mathord{:}   \{  \mathit{y}  \mathord{:}   \mathsf{Int}    \mathop{\mid}    \mathit{y}  \mathrel{>} \ottsym{2}   \}    \mathop{\mid}    \mathsf{prime?}  \, \mathit{x}  \}   \rangle   ^{ \ell }  \, \ottnt{v} \\
     \longrightarrow^{\ast}  & \langle   \mathsf{Int}   \Rightarrow   \{  \mathit{x}  \mathord{:}   \{  \mathit{y}  \mathord{:}   \mathsf{Int}    \mathop{\mid}    \mathit{y}  \mathrel{>} \ottsym{2}   \}    \mathop{\mid}    \mathsf{prime?}  \, \mathit{x}  \}   \rangle   ^{ \ell }  \, \ottnt{v}
   \end{array}\]
 and then refinements in the target type are checked
 from the innermost through the outermost by using waiting checks
 \R{PreCheck}:
 \[\begin{array}{ll}
  ...\;  \longrightarrow^{\ast}  &  \langle\!\langle  \,  \{  \mathit{x}  \mathord{:}   \{  \mathit{y}  \mathord{:}   \mathsf{Int}    \mathop{\mid}    \mathit{y}  \mathrel{>} \ottsym{2}   \}    \mathop{\mid}    \mathsf{prime?}  \, \mathit{x}  \}   \ottsym{,}   \langle\!\langle  \,  \{  \mathit{y}  \mathord{:}   \mathsf{Int}    \mathop{\mid}    \mathit{y}  \mathrel{>} \ottsym{2}   \}   \ottsym{,}  \langle   \mathsf{Int}   \Rightarrow   \mathsf{Int}   \rangle   ^{ \ell }  \, \ottnt{v} \,  \rangle\!\rangle  \,  ^{ \ell }  \,  \rangle\!\rangle  \,  ^{ \ell }  \\
   \end{array}\]
 even though $\ottnt{v}$ would be typed at $ \{  \mathit{x}  \mathord{:}   \{  \mathit{y}  \mathord{:}   \mathsf{Int}    \mathop{\mid}    \mathit{y}  \mathrel{>} \ottsym{2}   \}    \mathop{\mid}    \mathsf{prime?}  \, \mathit{x}  \} $ and
 satisfy the refinements.

 In contrast, Belo et al.'s
 semantics~\cite{Belo/Greenberg/Igarashi/Pierce_2011_ESOP} is sloppy in that
 checks of refinements that have been ensured are skipped, which is represented
 by two cast reduction rules:
 \[\begin{array}{lll}
  \langle  \ottnt{T}  \Rightarrow  \ottnt{T}  \rangle   ^{ \ell }  \, \ottnt{v}       &  \rightsquigarrow ^{ \mathsf{s} }  & \ottnt{v} \\
  \langle  \ottnt{T}  \Rightarrow   \{  \mathit{x}  \mathord{:}  \ottnt{T}   \mathop{\mid}   \ottnt{e}  \}   \rangle   ^{ \ell }  \, \ottnt{v} &  \rightsquigarrow ^{ \mathsf{s} }  & \langle   \{  \mathit{x}  \mathord{:}  \ottnt{T}   \mathop{\mid}   \ottnt{e}  \}   \ottsym{,}  \ottnt{e} \, [  \ottnt{v}  \ottsym{/}  \mathit{x}  ]  \ottsym{,}  \ottnt{v}  \rangle   ^{ \ell } 
 \end{array}\]
 where $ \rightsquigarrow ^{ \mathsf{s} } $ is the reduction relation in the sloppy semantics.
 The first rule processes reflexive casts as
 if they are identity functions and the second checks only the outermost
 refinement because others have been ensured by the source type.
 Under the sloppy semantics,
 $\langle   \{  \mathit{x}  \mathord{:}   \{  \mathit{y}  \mathord{:}   \mathsf{Int}    \mathop{\mid}    \mathit{y}  \mathrel{>} \ottsym{2}   \}    \mathop{\mid}    \mathsf{prime?}  \, \mathit{x}  \}   \Rightarrow   \{  \mathit{x}  \mathord{:}   \{  \mathit{y}  \mathord{:}   \mathsf{Int}    \mathop{\mid}    \mathit{y}  \mathrel{>} \ottsym{2}   \}    \mathop{\mid}    \mathsf{prime?}  \, \mathit{x}  \}   \rangle   ^{ \ell }  \, \ottnt{v}$
 reduces to $\ottnt{v}$ in one step.
 The sloppy semantics allows a logical relation to take arbitrary binary
 relations on terms for interpretation of type
 variables~\cite{Belo/Greenberg/Igarashi/Pierce_2011_ESOP}.

 It is found that, however, naive sloppy semantics does not satisfy the
 so-called cotermination (\prop:ref{fh-coterm-true}), a key property to show
 type soundness and parametricity in manifest contracts; Sekiyama et al.\ %
 investigated this problem in
 detail~\cite{Sekiyama/Igarashi/Greenberg_2016_TOPLAS}.
 Briefly speaking, the cotermination requires that reduction of subterms
 preserves evaluation results, but the sloppy semantics does not satisfy it.
 For example, let $\ottnt{T} \defeq  \{  \mathit{x}  \mathord{:}   \mathsf{Int}    \mathop{\mid}    \mathsf{not}  \,  \mathsf{true}   \} $ where $ \mathsf{not} $ is a
 negation function on Booleans.
 Since reflexive cast $\langle  \ottnt{T}  \Rightarrow  \ottnt{T}  \rangle   ^{ \ell } $ behaves as an identity function in the
 sloppy semantics, $\langle  \ottnt{T}  \Rightarrow  \ottnt{T}  \rangle   ^{ \ell }  \, \ottnt{v}$ evaluates to $\ottnt{v}$ for any value $\ottnt{v}$.
 Since $ \mathsf{not}  \,  \mathsf{true}   \longrightarrow   \mathsf{false} $, the cotermination requires that
 $\langle   \{  \mathit{x}  \mathord{:}   \mathsf{Int}    \mathop{\mid}    \mathsf{false}   \}   \Rightarrow  \ottnt{T}  \rangle   ^{ \ell }  \, \ottnt{v}$ also evaluate to $\ottnt{v}$ because reduction of
 subterm $ \mathsf{not}  \,  \mathsf{true} $ to $ \mathsf{false} $ must not change the evaluation result.
 However, $\langle   \{  \mathit{x}  \mathord{:}   \mathsf{Int}    \mathop{\mid}    \mathsf{false}   \}   \Rightarrow  \ottnt{T}  \rangle   ^{ \ell }  \, \ottnt{v}$ checks refinement $ \mathsf{not}  \,  \mathsf{true} $ in
 $\ottnt{T}$, which gives rise to blame; thus, the cotermination is invalidated.

 The problem above does not happen in the fussy semantics.
 Under the fussy semantics, since all refinements in a cast are checked, both
 casts $\langle  \ottnt{T}  \Rightarrow  \ottnt{T}  \rangle   ^{ \ell } $ and $\langle   \{  \mathit{x}  \mathord{:}   \mathsf{Int}    \mathop{\mid}    \mathsf{false}   \}   \Rightarrow  \ottnt{T}  \rangle   ^{ \ell } $ check refinement
 $ \mathsf{not}  \,  \mathsf{true} $ and raise blame.

 \subsection{Type System}
 \label{sec:lang-type-system}

 \begin{fhfigure*}[t!]
  \begin{flushleft}
   \framebox{$ \mathord{ \vdash } ~  \Gamma $} \quad {\bf{Context Well-Formedness}}
  \end{flushleft}

  \begin{center}
   $\ottdruleWFXXEmpty{}$ \hfil
   $\ottdruleWFXXExtendVar{}$ \hfil
   $\ottdruleWFXXExtendTVar{}$
  \end{center}

  \vspace{0.5em}
  \begin{flushleft}
   \framebox{$\Gamma  \vdash  \ottnt{T}$} \quad {\bf{Type Well-Formedness}}
  \end{flushleft}

  \begin{center}
   $\ottdruleWFXXBase{}$ \hfil
   $\ottdruleWFXXTVar{}$ \hfil
   $\ottdruleWFXXForall{}$ \\[1ex]
   $\ottdruleWFXXFun{}$ \hfil
   $\ottdruleWFXXRefine{}$
  \end{center}
 %  \caption{Well-formedness rules.}
 %  \label{fig:well-formed}
 % \end{fhfigure*}

 % \begin{fhfigure*}[t!]

  \vspace{0.5em}
  \begin{flushleft}
   \framebox{$\Gamma  \vdash  \ottnt{e}  \ottsym{:}  \ottnt{T}$} \quad {\bf{Typing Rules}}
  \end{flushleft}

  \begin{center}
   $\ottdruleTXXVar{}$ \hfil
   $\ottdruleTXXConst{}$ \\[1ex]
   % T_Op
   ${\ottdrule[]{%
    \ottpremise{
      \mathord{ \vdash } ~  \Gamma   \quad  \mathsf{ty}  (  {\tt op}  )   \ottsym{=}  {}  \mathit{x_{{\mathrm{1}}}}  \ottsym{:}  \ottnt{T_{{\mathrm{1}}}}  \rightarrow \, ... \, \rightarrow  \mathit{x_{\ottmv{n}}}  \ottsym{:}  \ottnt{T_{\ottmv{n}}}  {}  \rightarrow  \ottnt{T}
    }%
    \ottpremise{
     \forall  \ottmv{i} \, \in \,  \{  \, \ottsym{1}  ,\, ... \, ,  \mathit{n} \,  \}   .~  \Gamma  \vdash  \ottnt{e_{\ottmv{i}}}  \ottsym{:}  \ottnt{T_{\ottmv{i}}} \, [  \ottnt{e_{{\mathrm{1}}}}  \ottsym{/}  \mathit{x_{{\mathrm{1}}}}  ,\, ... \, ,   { \ottnt{e} }_{ \ottmv{i}  \ottsym{-}  \ottsym{1} }   \ottsym{/}   \mathit{x} _{ \ottmv{i}  \ottsym{-}  \ottsym{1} }   ]
    }%
    }{ \Gamma  \vdash  {\tt op} \, \ottsym{(}  \ottnt{e_{{\mathrm{1}}}}  \ottsym{,} \, ... \, \ottsym{,}  \ottnt{e_{\ottmv{n}}}  \ottsym{)}  \ottsym{:}  \ottnt{T} \, [  \ottnt{e_{{\mathrm{1}}}}  \ottsym{/}  \mathit{x_{{\mathrm{1}}}}  \ottsym{,} \, ... \, \ottsym{,}  \ottnt{e_{\ottmv{n}}}  \ottsym{/}  \mathit{x_{\ottmv{n}}}  ] }
    { {\ottdrulename{T\_Op}}{} }
   }$ \\[1ex]
   $\ottdruleTXXAbs{}$ \hfil
   $\ottdruleTXXCast{}$ \\[1ex]
   $\ottdruleTXXApp{}$ \hfil
   {\ifdtabs
   $\ottdruleTXXTAbsOne{}$ \\[1ex]
   \gb{$\ottdruleTXXTAbsTwo{}$} \hfil
   \else
   $\ottdruleTXXTAbs{}$ \\[1ex]
   \fi}
   $\ottdruleTXXTApp{}$ \hfil
   $\ottdruleTXXWCheck{}$ \\[1ex]
   $\ottdruleTXXACheck{}$ \\[1ex]
   $\ottdruleTXXBlame{}$ \hfil
   $\ottdruleTXXConv{}$ \\[1ex]
   $\ottdruleTXXForget{}$ \hfil
   $\ottdruleTXXExact{}$ \hfil
  \end{center}

  \vspace{0.5em}
  \begin{flushleft}
   \framebox{$\ottnt{T_{{\mathrm{1}}}}  \mathrel{\parallel}  \ottnt{T_{{\mathrm{2}}}}$} \quad {\bf{Type Compatibility}}
  \end{flushleft}

  \begin{center}
   $\ottdruleCXXBase{}$ \hfil
   $\ottdruleCXXTVar{}$ \hfil
   $\ottdruleCXXRefineL{}$ \hfil
   $\ottdruleCXXRefineR{}$ \\[1ex]
   $\ottdruleCXXFun{}$ \hfil
   $\ottdruleCXXForall{}$
  \end{center}

  \caption{Typing rules.}
  \label{fig:typing}
 \end{fhfigure*}

 The type system consists of three judgments:
 typing context well-formedness $ \mathord{ \vdash } ~  \Gamma $,
 type well-formedness $\Gamma  \vdash  \ottnt{T}$, and
 term typing $\Gamma  \vdash  \ottnt{e}  \ottsym{:}  \ottnt{T}$.
 They are derived by rules in \fig{typing}.
 The well-formedness rules are
 standard or easy to understand, and the typing rules are based on previous work~\cite{Belo/Greenberg/Igarashi/Pierce_2011_ESOP,Sekiyama/Igarashi/Greenberg_2016_TOPLAS}.
 We suppose that types of constants and primitive operations are provided by
 function $ \mathsf{ty} (-)$.
 Requirements to their types will be described at the end of this section.
 Casts are well typed when their source and target types are compatible \T{Cast}.
 Types are compatible if they are the same modulo refinements.
 This is formalized by type compatibility $\ottnt{T_{{\mathrm{1}}}}  \mathrel{\parallel}  \ottnt{T_{{\mathrm{2}}}}$, which is derived by
 the rules shown at the bottom of \fig{typing}.
 The type $\ottnt{T_{{\mathrm{2}}}} \, [  \ottnt{e_{{\mathrm{1}}}}  \ottsym{/}  \mathit{x}  ]$ of a term application is required to be well
 formed \T{App}.  As we will see the proof in detail, this condition is
 introduced for showing the parametricity (\prop:ref{fh-lr-param}).
 The typing rule \T{WCheck} of waiting checks $ \langle\!\langle  \,  \{  \mathit{x}  \mathord{:}  \ottnt{T_{{\mathrm{1}}}}   \mathop{\mid}   \ottnt{e_{{\mathrm{1}}}}  \}   \ottsym{,}  \ottnt{e_{{\mathrm{2}}}} \,  \rangle\!\rangle  \,  ^{ \ell } $ requires
 $\ottnt{e_{{\mathrm{2}}}}$ to have $\ottnt{T_{{\mathrm{1}}}}$ because it is checked at run time that the
 evaluation result of $\ottnt{e_{{\mathrm{2}}}}$ satisfies $\ottnt{e_{{\mathrm{1}}}}$ which refers to $\mathit{x}$ of
 $\ottnt{T_{{\mathrm{1}}}}$.
 Although waiting checks are run-time terms, \T{WCheck} does not require
 $ \{  \mathit{x}  \mathord{:}  \ottnt{T_{{\mathrm{1}}}}   \mathop{\mid}   \ottnt{e_{{\mathrm{1}}}}  \} $ and $\ottnt{e_{{\mathrm{2}}}}$ to be closed, unlike other run-time typing rules
 such as \T{ACheck}.
 This relaxation allows type-preserving static decomposition of $\langle  \ottnt{T_{{\mathrm{1}}}}  \Rightarrow   \{  \mathit{x}  \mathord{:}  \ottnt{T_{{\mathrm{2}}}}   \mathop{\mid}   \ottnt{e}  \}   \rangle   ^{ \ell } $
 into a smaller cast $\langle  \ottnt{T_{{\mathrm{1}}}}  \Rightarrow  \ottnt{T_{{\mathrm{2}}}}  \rangle   ^{ \ell } $ and a waiting check for refinement $\ottnt{e}$
(\prop:ref{fh-cc-precheck} in \sect{reasoning-cast-decomp}).
 Active checks $\langle   \{  \mathit{x}  \mathord{:}  \ottnt{T_{{\mathrm{1}}}}   \mathop{\mid}   \ottnt{e_{{\mathrm{1}}}}  \}   \ottsym{,}  \ottnt{e_{{\mathrm{2}}}}  \ottsym{,}  \ottnt{v}  \rangle   ^{ \ell } $ are well typed if $\ottnt{e_{{\mathrm{2}}}}$ is an actual
 intermediate state of evaluation of $\ottnt{e_{{\mathrm{1}}}} \, [  \ottnt{v}  \ottsym{/}  \mathit{x}  ]$ \T{ACheck}.
 \T{Forget} and \T{Exact} are run-time typing rules: the former forgets
 a refinement and the latter adds a refinement that holds.

 \T{Conv} is a run-time typing rule to show subject reduction.
 To motivate it, let us consider application $\ottnt{v} \, \ottnt{e}$ where
 $\ottnt{v}$ and $\ottnt{e}$ are typed at $ \mathit{x} \mathord{:} \ottnt{T_{{\mathrm{1}}}} \rightarrow \ottnt{T_{{\mathrm{2}}}} $ and $\ottnt{T_{{\mathrm{1}}}}$, respectively.
 This application would be typed at $\ottnt{T_{{\mathrm{2}}}} \, [  \ottnt{e}  \ottsym{/}  \mathit{x}  ]$ by \T{App}.
 If $\ottnt{e}$ reduces to $\ottnt{e'}$, $\ottnt{v} \, \ottnt{e'}$ would be at $\ottnt{T_{{\mathrm{2}}}} \, [  \ottnt{e'}  \ottsym{/}  \mathit{x}  ]$, which is
 syntactically different from $\ottnt{T_{{\mathrm{2}}}} \, [  \ottnt{e}  \ottsym{/}  \mathit{x}  ]$ in general.
 Since subject reduction requires evaluation of well-typed terms to be
 type-preserving, we need a device that allows $\ottnt{v} \, \ottnt{e'}$ to be typed at
 $\ottnt{T_{{\mathrm{2}}}} \, [  \ottnt{e}  \ottsym{/}  \mathit{x}  ]$.
 To this end, Belo et al.~\cite{Belo/Greenberg/Igarashi/Pierce_2011_ESOP}
 introduced a type conversion relation which relates $\ottnt{T_{{\mathrm{2}}}} \, [  \ottnt{e}  \ottsym{/}  \mathit{x}  ]$ and
 $\ottnt{T_{{\mathrm{2}}}} \, [  \ottnt{e'}  \ottsym{/}  \mathit{x}  ]$ and added a typing rule that allows terms to be retyped at
 convertible types.
 Their type conversion turns out to be flawed, but it is fixed in the succeeding
 work~\cite{Greenberg_2013_PhD,Sekiyama/Nishida/Igarashi_2015_POPL}.
 Our type conversion $ \equiv $ follows the fixed version.
 \begin{defi}[Type Conversion]
  The binary relation $ \Rrightarrow $ over types is defined as follows: $\ottnt{T_{{\mathrm{1}}}}  \Rrightarrow  \ottnt{T_{{\mathrm{2}}}}$ if
  there exist some $\ottnt{T}$, $\mathit{x}$, $\ottnt{e_{{\mathrm{1}}}}$, and $\ottnt{e_{{\mathrm{2}}}}$ such that
  $\ottnt{T_{{\mathrm{1}}}}  \ottsym{=}  \ottnt{T} \, [  \ottnt{e_{{\mathrm{1}}}}  \ottsym{/}  \mathit{x}  ]$ and $\ottnt{T_{{\mathrm{2}}}}  \ottsym{=}  \ottnt{T} \, [  \ottnt{e_{{\mathrm{2}}}}  \ottsym{/}  \mathit{x}  ]$ and $\ottnt{e_{{\mathrm{1}}}}  \longrightarrow  \ottnt{e_{{\mathrm{2}}}}$.
  The \emph{type conversion} $ \equiv $ is the symmetric transitive closure of
  $ \Rrightarrow $.
 \end{defi}

 Finally, we formalize requirements to constants and primitive operations.
 We first define auxiliary function $ \mathit{unref} $, which strips off refinements that are not under other type constructors:
 \[\begin{array}{lcll}
     \mathit{unref}  (   \{  \mathit{x}  \mathord{:}  \ottnt{T}   \mathop{\mid}   \ottnt{e}  \}   )  &=&  \mathit{unref}  (  \ottnt{T}  )  & \\
     \mathit{unref}  (  \ottnt{T}  )        &=& \ottnt{T} &  (\text{if } \ottnt{T} \text{ is not a refinement type})
 \end{array}\]
 Requirements to constants and primitive operations are as follows:
 \begin{itemize}
  \item For each constant $\ottnt{k} \, \in \,  {\cal K}_{ \ottnt{B} } $,
        (1) $ \mathit{unref}  (   \mathsf{ty}  (  \ottnt{k}  )   )   \ottsym{=}  \ottnt{B}$,
        (2) $\emptyset  \vdash   \mathsf{ty}  (  \ottnt{k}  ) $ is derivable, and
        (3) $\ottnt{k}$ satisfies all refinements in $ \mathsf{ty}  (  \ottnt{k}  ) $, that is,
        $\langle  \ottnt{B}  \Rightarrow   \mathsf{ty}  (  \ottnt{k}  )   \rangle   ^{ \ell }  \, \ottnt{k}  \longrightarrow^{\ast}  \ottnt{k}$.

  \item For each primitive operation $ {\tt op} $, $ \mathsf{ty}  (  {\tt op}  ) $ is a monomorphic
        dependent function type of the form
        ${}  \mathit{x_{{\mathrm{1}}}}  \ottsym{:}  \ottnt{T_{{\mathrm{1}}}}  \rightarrow \, ... \, \rightarrow  \mathit{x_{\ottmv{n}}}  \ottsym{:}  \ottnt{T_{\ottmv{n}}}  {}  \rightarrow  \ottnt{T_{{\mathrm{0}}}}$ where,
        for any $\ottmv{i} \, \in \,  \{  \, \ottsym{0}  ,\, ... \, ,  \mathit{n} \,  \} $, there exists some $\ottnt{B}$ such that
        $ \mathit{unref}  (  \ottnt{T_{\ottmv{i}}}  )   \ottsym{=}  \ottnt{B}$.
        Furthermore, we require that $ {\tt op} $ return a value satisfying the
        refinements in the return type $\ottnt{T_{{\mathrm{0}}}}$ when taking constants
        satisfying the refinements in the argument types, that
        is:
        \[\begin{array}{l}
         \forall\,{}  \ottnt{k_{{\mathrm{1}}}}  \ottsym{,} \, ... \, \ottsym{,}  \ottnt{k_{\ottmv{n}}}  {} . \\
           \quad \forall\,\ottmv{i} \, \in \,  \{  \, \ottsym{1}  ,\, ... \, ,  \mathit{n} \,  \}  .
            \left(\begin{array}{l}
             \ottnt{k_{\ottmv{i}}} \, \in \,  {\cal K}_{  \mathit{unref}  (  \ottnt{T_{\ottmv{i}}}  )  }  ~  \text{ and }  \\ \quad
                   \langle   \mathit{unref}  (  \ottnt{T_{\ottmv{i}}}  )   \Rightarrow  \ottnt{T_{\ottmv{i}}} \, [  \ottnt{k_{{\mathrm{1}}}}  \ottsym{/}  \mathit{x_{{\mathrm{1}}}}  ,\, ... \, ,   { \ottnt{k} }_{ \ottmv{i}  \ottsym{-}  \ottsym{1} }   \ottsym{/}   \mathit{x} _{ \ottmv{i}  \ottsym{-}  \ottsym{1} }   ]  \rangle   ^{ \ell }  \, \ottnt{k_{\ottmv{i}}}  \longrightarrow^{\ast}  \ottnt{k_{\ottmv{i}}}
                    \end{array}\right)  \mathbin{ \text{implies} }  \\[1ex]
            \qquad \exists\,\ottnt{k} \, \in \,  {\cal K}_{  \mathit{unref}  (  \ottnt{T_{{\mathrm{0}}}}  )  }  .
            \left(\begin{array}{l}
             [\hspace{-.14em}[ {\tt op} ]\hspace{-.14em}] \, \ottsym{(}  \ottnt{k_{{\mathrm{1}}}}  \ottsym{,} \, ... \, \ottsym{,}  \ottnt{k_{\ottmv{n}}}  \ottsym{)}  \ottsym{=}  \ottnt{k} ~  \text{ and }  \\ \quad
                   \langle   \mathit{unref}  (  \ottnt{T_{{\mathrm{0}}}}  )   \Rightarrow  \ottnt{T_{{\mathrm{0}}}} \, [  \ottnt{k_{{\mathrm{1}}}}  \ottsym{/}  \mathit{x_{{\mathrm{1}}}}  \ottsym{,} \, ... \, \ottsym{,}  \ottnt{k_{\ottmv{n}}}  \ottsym{/}  \mathit{x_{\ottmv{n}}}  ]  \rangle   ^{ \ell }  \, \ottnt{k}  \longrightarrow^{\ast}  \ottnt{k}
                  \end{array}\right)
          \end{array}\]
        In contrast, we assume that $[\hspace{-.14em}[ {\tt op} ]\hspace{-.14em}] \, \ottsym{(}  \ottnt{k_{{\mathrm{1}}}}  \ottsym{,} \, ... \, \ottsym{,}  \ottnt{k_{\ottmv{n}}}  \ottsym{)}$ is undefined if
        some $\ottnt{k_{\ottmv{i}}}$ does not satisfy refinements in $\ottnt{T_{\ottmv{i}}}$, that is,
        $\langle   \mathit{unref}  (  \ottnt{T_{\ottmv{i}}}  )   \Rightarrow  \ottnt{T_{\ottmv{i}}} \, [  \ottnt{k_{{\mathrm{1}}}}  \ottsym{/}  \mathit{x_{{\mathrm{1}}}}  ,\, ... \, ,   { \ottnt{k} }_{ \ottmv{i}  \ottsym{-}  \ottsym{1} }   \ottsym{/}   \mathit{x} _{ \ottmv{i}  \ottsym{-}  \ottsym{1} }   ]  \rangle   ^{ \ell }  \, \ottnt{k_{\ottmv{i}}}  \longrightarrow^{\ast}  \ottnt{k_{\ottmv{i}}}$ cannot be
        derived.
 \end{itemize}

 \subsection{Properties}

 This section proves type soundness via progress and subject
 reduction~\cite{Wright/Felleisen_1994_IC}.
 Type soundness can be shown as in the previous
 work~\cite{Sekiyama/Nishida/Igarashi_2015_POPL,Sekiyama/Igarashi/Greenberg_2016_TOPLAS}
 and so we omit the most parts of its proof.
 %
 % Readers interested in the proof can refer to the supplementary material.
 % \TS{Can we provide it on the site?}

 We start with showing the cotermination (\prop:ref{fh-coterm-true}), a key property
 for proving not only type soundness but also parametricity and soundness of our logical
 relation with respect to contextual equivalence.
 It states that, if $\ottnt{e_{{\mathrm{1}}}}  \longrightarrow  \ottnt{e_{{\mathrm{2}}}}$, then $\ottnt{e} \, [  \ottnt{e_{{\mathrm{1}}}}  \ottsym{/}  \mathit{x}  ]$ and $\ottnt{e} \, [  \ottnt{e_{{\mathrm{2}}}}  \ottsym{/}  \mathit{x}  ]$ behave
 equivalently, which means that convertible types have the same denotation.
 Following Sekiyama et al.~\cite{Sekiyama/Igarashi/Greenberg_2016_TOPLAS},
 our proof of the cotermination is based on the observation that
 $\mathcal{R} \stackrel{\tiny{\textrm{def}}}{=} \{ (\ottnt{e} \, [  \ottnt{e_{{\mathrm{1}}}}  \ottsym{/}  \mathit{x}  ], \ottnt{e} \, [  \ottnt{e_{{\mathrm{2}}}}  \ottsym{/}  \mathit{x}  ]) \mid \ottnt{e_{{\mathrm{1}}}}  \longrightarrow  \ottnt{e_{{\mathrm{2}}}} \}$ is a weak bisimulation.
 We also refer to the names of the lemmas in the proof script \texttt{coterm.v}.

 \begin{prop}[name=Unique Decomposition \coqname{lemm\_red\_ectx\_decomp}]{fh-red-decomp}
  If
  $\ottnt{e}  \ottsym{=}  \ottnt{E}  [  \ottnt{e_{{\mathrm{1}}}}  ]$ and $\ottnt{e_{{\mathrm{1}}}}  \rightsquigarrow  \ottnt{e_{{\mathrm{2}}}}$ and
  $\ottnt{e}  \ottsym{=}  \ottnt{E'}  [  \ottnt{e'_{{\mathrm{1}}}}  ]$ and $\ottnt{e'_{{\mathrm{1}}}}  \rightsquigarrow  \ottnt{e'_{{\mathrm{2}}}}$,
  then $\ottnt{E}  \ottsym{=}  \ottnt{E'}$ and $\ottnt{e_{{\mathrm{1}}}}  \ottsym{=}  \ottnt{e'_{{\mathrm{1}}}}$.

  \proof

  By induction on $\ottnt{E}$.
 \end{prop}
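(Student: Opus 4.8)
The plan is to follow the author's suggestion and argue by induction on the structure of the evaluation context $E$, with an inner case analysis on $E'$, after first isolating the crucial auxiliary fact that reducible terms are never values. This fact is what ultimately forbids a redex from sitting at a value position of another redex.

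First I would establish the auxiliary claim: if $e_1 \rightsquigarrow e_2$ then $e_1$ is not a value, and more generally $E[e_1]$ is not a value for any $E$. The first part follows by inspecting the left-hand sides of the reduction rules in \fig{fhfix-redeval} against the value grammar of \fig{fhfix-syntax}: every redex ($\beta$-redex, cast application, waiting and active checks, primitive operation, etc.) has an outermost shape disjoint from the value forms. The second part follows because any nonempty evaluation context has a head constructor---application, type application, $\mathtt{op}(\cdots)$, a waiting check, or an active check---none of which is a value form; combined with the first part, this also dispatches the empty-context subcase uniformly.

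Next I would carry out the induction on $E$. In the base case $E = [\,]$, the term $e = e_1$ is itself a redex, and I must show $E' = [\,]$. If $E'$ were nonempty, its outermost constructor would force $e_1$ to contain, at one of the subterm positions dictated by $E'$, a reducible subterm $E''[e'_1]$; but inspection of the redex forms shows those positions are occupied by values (e.g., the operator and argument of a $\beta$-redex, the scrutinee of a check, the operands of $\mathtt{op}$), contradicting the auxiliary fact. Hence $E' = [\,]$ and $e_1 = e'_1$; the case $E' = [\,]$ is symmetric. For the inductive step, both $E$ and $E'$ are nonempty, so $E[e_1]$ and $E'[e'_1]$ must share the same head constructor since they are the same term $e$. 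When $E$ and $E'$ place their holes in the \emph{same} subterm position, the surrounding subterms coincide and I apply the induction hypothesis to the strictly smaller subcontexts. When they place their holes in \emph{different} positions---possible only for the multi-ary constructors $\mathtt{op}(\cdots)$, application, waiting check, and active check---left-to-right evaluation forces every subterm preceding a hole to be a value, so one context would require a value position to contain the redex-bearing term $E''[e'_1]$, again contradicting the auxiliary fact.

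I expect the main obstacle to be the bookkeeping in these ``different position'' subcases, especially the $\mathtt{op}(v_1,\dots,v_n,E,\dots)$ context, where the hole may occupy any argument slot: one must use that all slots to the left of each hole are values to force the two hole indices to coincide before recursing. The application, waiting-check, and active-check contexts reduce to the same value-position argument but are simpler, having only two candidate positions, and all remaining cases are routine head-constructor clashes.
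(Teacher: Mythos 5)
Your proposal is correct and follows exactly the route the paper takes (the paper's proof is simply ``By induction on $\ottnt{E}$,'' with details delegated to the Coq script): induction on the outer context, the auxiliary observation that a reducible term --- and hence $\ottnt{E}  [  \ottnt{e_{{\mathrm{1}}}}  ]$ for any $\ottnt{E}$ --- is never a value, and the use of the left-to-right value discipline of evaluation contexts to rule out mismatched hole positions. The only nitpick is that the waiting-check and active-check frames each have a single hole position, so no ``different position'' subcase arises for them; otherwise the case analysis is exactly right.
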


 \begin{prop}[name=Determinism \coqname{lemm\_eval\_deterministic}]{fh-eval-determinism}
  If $\ottnt{e}  \longrightarrow  \ottnt{e_{{\mathrm{1}}}}$ and $\ottnt{e}  \longrightarrow  \ottnt{e_{{\mathrm{2}}}}$, then $\ottnt{e_{{\mathrm{1}}}}  \ottsym{=}  \ottnt{e_{{\mathrm{2}}}}$.

  \proof

  The case that $\ottnt{e}  \longrightarrow  \ottnt{e_{{\mathrm{1}}}}$ is derived by \E{Red} is shown by
  \prop:ref{fh-red-decomp} and the determinism of the reduction.
  In the case that it is derived by \E{Blame},  let us suppose that
  $\ottnt{e}  \longrightarrow  \ottnt{e_{{\mathrm{2}}}}$ is derived by \E{Red}.
  It is contradictory because, if $\ottnt{e}  \ottsym{=}  \ottnt{E}  [  \ottnt{e'}  ]$ and
  $\ottnt{e'}  \rightsquigarrow  \ottnt{e''}$, then $\ottnt{e}  \mathrel{\neq}  \ottnt{E_{{\mathrm{2}}}}  [   \mathord{\Uparrow}  \ell   ]$ for any $\ottnt{E_{{\mathrm{2}}}}$ and $\ell$.
 \end{prop}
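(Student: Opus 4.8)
The plan is to do a case analysis on the final rules used to derive $e \longrightarrow e_1$ and $e \longrightarrow e_2$. Since evaluation is defined by the two rules \E{Red} and \E{Blame}, four combinations arise, two of which are symmetric. First I would dispatch the case where both steps are by \E{Red}: here $e = E[r_1]$ with $r_1 \rightsquigarrow r_1'$ and $e_1 = E[r_1']$, and $e = E'[r_2]$ with $r_2 \rightsquigarrow r_2'$ and $e_2 = E'[r_2']$. Applying Unique Decomposition (\prop:ref{fh-red-decomp}) gives $E = E'$ and $r_1 = r_2$; determinism of the one-step reduction $\rightsquigarrow$ then yields $r_1' = r_2'$, so $e_1 = E[r_1'] = E'[r_2'] = e_2$. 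Determinism of $\rightsquigarrow$ itself follows by inspecting the reduction rules of \fig{fhfix-redeval} and checking that their left-hand sides are pairwise non-overlapping; the only pair that could apply to a common redex, \R{Forget} and \R{PreCheck}, is separated by the side condition on \R{PreCheck} requiring the source type not to be a refinement type, and each individual rule is manifestly functional in its reduct.

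Next I would treat the cases involving \E{Blame}. When both steps use \E{Blame} we have $e = E_1[\mathord{\Uparrow}\ell_1] = E_2[\mathord{\Uparrow}\ell_2]$ with $E_1, E_2 \neq [\,]$ and $e_1 = \mathord{\Uparrow}\ell_1$, $e_2 = \mathord{\Uparrow}\ell_2$; I would conclude $E_1 = E_2$ and hence $\ell_1 = \ell_2$ from a uniqueness-of-decomposition property for the active position of a term. For the mixed case---say $e \longrightarrow e_1$ by \E{Red} and $e \longrightarrow e_2$ by \E{Blame}---we have $e = E[r]$ with $r$ a redex and $e = E_2[\mathord{\Uparrow}\ell]$ with $E_2 \neq [\,]$, and I would derive a contradiction: the active position of $e$ is unique, but a redex is never syntactically $\mathord{\Uparrow}\ell$ and no reduction rule has $\mathord{\Uparrow}\ell$ as its redex, so the two decompositions are incompatible.

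The hard part is that Unique Decomposition as stated relates only two redex decompositions, whereas the blame cases compare a redex decomposition with a blame decomposition (or two blame decompositions). I therefore expect the real work to be a small companion lemma---proved by the same induction on evaluation contexts as \prop:ref{fh-red-decomp}---stating that a closed term has at most one decomposition $E[s]$ in which $s$ is either a redex or a term $\mathord{\Uparrow}\ell$ sitting in a non-empty context. Once this uniform uniqueness statement is available, all four cases close immediately and the remainder is routine bookkeeping.
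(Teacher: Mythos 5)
Your proposal is correct and follows essentially the same route as the paper: unique decomposition plus determinism of $ \rightsquigarrow $ for the \E{Red}/\E{Red} case, and a contradiction between a redex decomposition and a blame decomposition for the mixed case. You are somewhat more explicit than the paper about the \E{Blame}/\E{Blame} case and about checking non-overlap of the reduction rules (e.g.\ \R{Forget} versus \R{PreCheck}), which the paper leaves implicit, but these are refinements of the same argument rather than a different approach.
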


 \begin{prop}[name={Weak bisimulation, left side} \coqname{lemm\_coterm\_left\_eval}]{fh-coterm-left}
  \label{lem:fh-coterm-left}
  If $\ottnt{e_{{\mathrm{1}}}}  \longrightarrow  \ottnt{e_{{\mathrm{2}}}}$ and $\ottnt{e} \, [  \ottnt{e_{{\mathrm{1}}}}  \ottsym{/}  \mathit{x}  ]  \longrightarrow  \ottnt{e'_{{\mathrm{1}}}}$,
  then there exists some $\ottnt{e'}$ such that $\ottnt{e} \, [  \ottnt{e_{{\mathrm{2}}}}  \ottsym{/}  \mathit{x}  ]  \longrightarrow^{\ast}  \ottnt{e'} \, [  \ottnt{e_{{\mathrm{2}}}}  \ottsym{/}  \mathit{x}  ]$
  and $\ottnt{e'_{{\mathrm{1}}}}  \longrightarrow^{\ast}  \ottnt{e'} \, [  \ottnt{e_{{\mathrm{1}}}}  \ottsym{/}  \mathit{x}  ]$.
(See the commuting diagram on the left in \fig{weak_bisimulation}.)
 \end{prop}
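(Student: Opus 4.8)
The plan is to analyze the single step $e[e_1/x] \longrightarrow e'_1$ according to the position of its contracted redex. Using \prop:ref{fh-red-decomp} I would write $e[e_1/x] = E[r]$ uniquely, with $r \rightsquigarrow r'$ (or the \E{Blame} case). The decisive observation is that $e_1$ is \emph{not} a value: since $e_1 \longrightarrow e_2$ it must contain a redex. Because neither variables nor $e_1$ are values, no bare copy of $e_1$ substituted for $x$ can occupy a position that the semantics demands to be a value. Consequently $r$ falls into exactly one of two cases: either (i) $r$ lies entirely inside one substituted copy of $e_1$, or (ii) $r$ is \emph{structural}, arising from $e$ with the substitution applied and consuming only genuine values coming from $e$ itself. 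I would establish this dichotomy by induction on $e$ (equivalently, on $E$).

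In case (ii) I would produce a template $\hat{e}$ over $x$ such that $E = \hat{E}[e_1/x]$ and $r = \hat{r}[e_1/x]$ contracts to $\hat{r}'[e_1/x]$, so that $e'_1 = \hat{e}[e_1/x]$. The point to check is that the \emph{same} step is available from $e[e_2/x]$, giving $e[e_2/x] \longrightarrow \hat{e}[e_2/x]$ in one step. This requires that replacing $e_1$ by $e_2$ preserves both the validity of the evaluation context and every side condition of the reduction rule. Each value slot along $\hat{E}$ is filled by a value template $\hat{v}$ that is a value after $[e_1/x]$; since the only values are constants, abstractions, type abstractions, and casts---whose head constructor survives any substitution---$\hat{v}[e_2/x]$ is a value as well, \emph{even when $e_2$ is a value but $e_1$ is not}. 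The side conditions concern only the shapes of types (refinement versus not, function versus base) and the constant arguments of \R{Op}, all preserved under substitution. Taking $e' = \hat{e}$ then discharges both obligations: $e[e_2/x] \longrightarrow^{\ast} e'[e_2/x]$ in one step, and $e'_1 = e'[e_1/x]$ in zero.

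In case (i) the copy of $e_1$ sits in evaluation position, so by \prop:ref{fh-eval-determinism} the step taken is precisely $e_1 \longrightarrow e_2$, and $e'_1$ is $e[e_1/x]$ with that single copy advanced to $e_2$. Here I would \emph{bake the reduced copy into the template}: let $e'$ be $e$ with the distinguished occurrence of $x$ replaced by the closed term $e_2$. Because $e_1$ and $e_2$ are closed (evaluation acts on closed terms), this replacement is capture-free and $e_2$ is inert under further substitution, whence $e'[e_1/x] = e'_1$ and $e'[e_2/x] = e[e_2/x]$. Both legs then hold in zero steps.

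The hard part will be case (i): the naive choice $e' = e$ fails, because $e'_1$ is strictly ahead of $e[e_1/x]$ on one copy while lagging $e[e_2/x]$ elsewhere, and a uniform substitution template cannot directly express this mixed state; the remedy is the baked-in template above, which crucially relies on $e_1$ and $e_2$ being closed. A secondary delicacy, pervading case (ii), is that $e_1$ and $e_2$ may differ in value status, so I must argue---through the non-value-at-demanded-positions observation---that switching between them never disturbs the evaluation context or a premise of a reduction rule. Once these two points are settled, the remaining work (commuting the contractum with $[e_i/x]$ via the standard substitution lemma, and the \E{Blame} case, handled identically with $e' = \mathord{\Uparrow}\ell$ when the blame is structural or inside a copy) is routine.
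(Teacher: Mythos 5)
Your proposal is correct and follows the same route the paper takes: the paper gives no pencil-and-paper proof for this lemma (it is discharged in the Coq script \texttt{coterm.v}), but frames it exactly as the left half of showing that $\{(\ottnt{e} \, [  \ottnt{e_{{\mathrm{1}}}}  \ottsym{/}  \mathit{x}  ],\ottnt{e} \, [  \ottnt{e_{{\mathrm{2}}}}  \ottsym{/}  \mathit{x}  ]) \mid \ottnt{e_{{\mathrm{1}}}}  \longrightarrow  \ottnt{e_{{\mathrm{2}}}}\}$ is a weak bisimulation, which forces precisely your case split on whether the contracted redex is structural or sits inside a substituted copy of $\ottnt{e_{{\mathrm{1}}}}$. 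Your two key observations --- that $\ottnt{e_{{\mathrm{1}}}}$ being a non-value keeps value positions and rule side conditions stable under swapping $\ottnt{e_{{\mathrm{1}}}}$ for $\ottnt{e_{{\mathrm{2}}}}$, and that the advanced copy must be baked into the closed template $\ottnt{e'}$ (with $\ottnt{e'}  \ottsym{=}   \mathord{\Uparrow}  \ell $ in the \E{Blame} sub-cases) --- are exactly the points on which the argument turns, and your handling of them is sound.
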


 \begin{prop}[name={Weak bisimulation, right side} \coqname{lemm\_coterm\_right\_eval}]{fh-coterm-right}
  \label{lem:fh-coterm-right}
  If $\ottnt{e_{{\mathrm{1}}}}  \longrightarrow  \ottnt{e_{{\mathrm{2}}}}$ and $\ottnt{e} \, [  \ottnt{e_{{\mathrm{2}}}}  \ottsym{/}  \mathit{x}  ]  \longrightarrow  \ottnt{e'_{{\mathrm{2}}}}$,
  then there exists some $\ottnt{e'}$ such that $\ottnt{e} \, [  \ottnt{e_{{\mathrm{1}}}}  \ottsym{/}  \mathit{x}  ]  \longrightarrow^{\ast}  \ottnt{e'} \, [  \ottnt{e_{{\mathrm{1}}}}  \ottsym{/}  \mathit{x}  ]$
  and $\ottnt{e'_{{\mathrm{2}}}}  \longrightarrow^{\ast}  \ottnt{e'} \, [  \ottnt{e_{{\mathrm{2}}}}  \ottsym{/}  \mathit{x}  ]$. (See the commuting diagram on the right in \fig{weak_bisimulation}.)
 \end{prop}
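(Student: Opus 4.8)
The plan is to prove the right-hand weak-bisimulation lemma as the mirror image of its left-hand counterpart (\prop:ref{fh-coterm-left}), proceeding by induction on the structure of the template $\ottnt{e}$ together with a case analysis on the single redex contracted by $\ottnt{e} \, [  \ottnt{e_{{\mathrm{2}}}}  \ottsym{/}  \mathit{x}  ]  \longrightarrow  \ottnt{e'_{{\mathrm{2}}}}$. First I would invoke Unique Decomposition (\prop:ref{fh-red-decomp}) and Determinism (\prop:ref{fh-eval-determinism}) to pin down the unique evaluation context $\ottnt{E}$ and redex with $\ottnt{e} \, [  \ottnt{e_{{\mathrm{2}}}}  \ottsym{/}  \mathit{x}  ]  \ottsym{=}  \ottnt{E}  [  \ldots  ]$, so that there is no ambiguity about which subterm is active. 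The central classification is then \emph{where the redex sits relative to the occurrences of the hole variable $\mathit{x}$}: (a) the redex is determined by the shape of the template $\ottnt{e}$ itself and does not essentially depend on the substituted term; (b) the redex lies strictly inside one planted copy of $\ottnt{e_{{\mathrm{2}}}}$ that has reached evaluation position; or (c) the redex \emph{straddles} the boundary, i.e.\ substituting the (more reduced) $\ottnt{e_{{\mathrm{2}}}}$ is exactly what exposes it (a $\beta$- or cast-redex whose head or argument is a planted $\ottnt{e_{{\mathrm{2}}}}$).

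In case (a) I would contract the same redex on the left, treating each reduction rule of \fig{fhfix-redeval} in turn and using that substitution commutes with reduction; the new template $\ottnt{e'}$ is obtained by contracting the redex \emph{inside} the template, so that $\ottnt{e} \, [  \ottnt{e_{{\mathrm{1}}}}  \ottsym{/}  \mathit{x}  ]  \longrightarrow  \ottnt{e'} \, [  \ottnt{e_{{\mathrm{1}}}}  \ottsym{/}  \mathit{x}  ]$ and $\ottnt{e'_{{\mathrm{2}}}}  \ottsym{=}  \ottnt{e'} \, [  \ottnt{e_{{\mathrm{2}}}}  \ottsym{/}  \mathit{x}  ]$. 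The crucial device for cases (b) and (c) is to build $\ottnt{e'}$ by \emph{specializing only the resolved occurrence} of $\mathit{x}$ to the concrete reduct while keeping the remaining occurrences as the hole: this is what preserves the common-template invariant even though a single copy has advanced and the substitution is no longer uniform. In case (b), writing the internal step as $\ottnt{e_{{\mathrm{2}}}}  \longrightarrow  \ottnt{e_{{\mathrm{2}}}}''$, the right side already equals $\ottnt{e'} \, [  \ottnt{e_{{\mathrm{2}}}}  \ottsym{/}  \mathit{x}  ]$ (zero steps), while on the left I first replay $\ottnt{e_{{\mathrm{1}}}}  \longrightarrow  \ottnt{e_{{\mathrm{2}}}}$ at that occurrence and then $\ottnt{e_{{\mathrm{2}}}}  \longrightarrow  \ottnt{e_{{\mathrm{2}}}}''$, reaching $\ottnt{e'} \, [  \ottnt{e_{{\mathrm{1}}}}  \ottsym{/}  \mathit{x}  ]$. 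Case (c) is similar, except that after the catch-up step the left side must additionally fire the newly exposed redex.

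The step I expect to be the main obstacle is reconciling the \emph{evaluation order}: because $\ottnt{e_{{\mathrm{1}}}}  \longrightarrow  \ottnt{e_{{\mathrm{2}}}}$ makes $\ottnt{e_{{\mathrm{2}}}}$ ``more evaluated'', a position that is a value (hence an evaluation-context slot, a redex head, or a $ \mathsf{true} / \mathsf{false} $ guard for \R{OK}/\R{Fail}) under the $\ottnt{e_{{\mathrm{2}}}}$-substitution may be blocked under the $\ottnt{e_{{\mathrm{1}}}}$-substitution. Handling this requires, on the left, first reducing the offending $\ottnt{e_{{\mathrm{1}}}}$-copies to the values they reach (via $\ottnt{e_{{\mathrm{1}}}}  \longrightarrow  \ottnt{e_{{\mathrm{2}}}}$ and further evaluation) so that the left-hand evaluation context and redex match the right-hand ones, and then absorbing those catch-up reductions into the choice of $\ottnt{e'}$ by the specialization technique above. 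This is precisely where the \emph{sloppy} semantics fails cotermination and where our \emph{fussy} cast rules pay off: since \R{Forget}, \R{PreCheck}, and \R{Check} restructure casts uniformly and never skip a refinement check on the basis of the source type, the refinement-checking redexes in all three cases commute with substitution without the spurious blame that broke the sloppy calculus. I would finish by checking the \E{Blame} sub-case (lifting $ \mathord{\Uparrow}  \ell $ out of a nonempty context), which is routine given the same decomposition, and note that the argument is the exact mirror of \prop:ref{fh-coterm-left} and is the case discharged in \texttt{coterm.v}.
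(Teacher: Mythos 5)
Your outline is sound and, as far as one can compare, coincides with the paper's approach: the paper gives no pencil-and-paper proof of this lemma, discharging it (together with its left-hand twin) in the Coq script \texttt{coterm.v} as one half of the claim that $\{ (\ottnt{e} \, [  \ottnt{e_{{\mathrm{1}}}}  \ottsym{/}  \mathit{x}  ], \ottnt{e} \, [  \ottnt{e_{{\mathrm{2}}}}  \ottsym{/}  \mathit{x}  ]) \mid \ottnt{e_{{\mathrm{1}}}}  \longrightarrow  \ottnt{e_{{\mathrm{2}}}} \}$ is a weak bisimulation, and the structure you describe---unique decomposition, the three-way classification of the redex's position relative to the planted occurrences of $\mathit{x}$, catch-up steps on the left to compensate for $\ottnt{e_{{\mathrm{2}}}}$ being more evaluated than $\ottnt{e_{{\mathrm{1}}}}$, and a witness template that non-uniformly specializes the resolved occurrences---is exactly the bookkeeping such a proof must carry out. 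You also correctly locate the role of the fussy cast rules (no side conditions that compare types syntactically or skip already-ensured refinements), which is what makes the template-level redexes stable under substitution and hence makes this lemma true at all.
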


 \begin{figure}
$$
\xymatrix@C=2pt{
  \ottnt{e} \, [  \ottnt{e_{{\mathrm{1}}}}  \ottsym{/}  \mathit{x}  ] \ar@<-1ex>@(u,u)[rr] \ar[d] & \mathcal{R} & \ottnt{e} \, [  \ottnt{e_{{\mathrm{2}}}}  \ottsym{/}  \mathit{x}  ] \ar@{.>}[dd]^>{*}\\
  \ottnt{e'_{{\mathrm{1}}}} \ar@{.>}[d] \\
  \hspace*{-20pt}\exists \ottnt{e'}. \ottnt{e'} \, [  \ottnt{e_{{\mathrm{1}}}}  \ottsym{/}  \mathit{x}  ] & \mathcal{R} & \ottnt{e'} \, [  \ottnt{e_{{\mathrm{2}}}}  \ottsym{/}  \mathit{x}  ]
}
\qquad\qquad
\xymatrix@C=2pt{
  \ottnt{e} \, [  \ottnt{e_{{\mathrm{1}}}}  \ottsym{/}  \mathit{x}  ] \ar@<-1ex>@(u,u)[rr] \ar@{.>}[dd]^>{*} & \mathcal{R} & \ottnt{e} \, [  \ottnt{e_{{\mathrm{2}}}}  \ottsym{/}  \mathit{x}  ] \ar[d] \\
  & & \ottnt{e'_{{\mathrm{2}}}} \ar@{.>}[d] \\
  \hspace*{-20pt}\exists \ottnt{e'}. \ottnt{e'} \, [  \ottnt{e_{{\mathrm{1}}}}  \ottsym{/}  \mathit{x}  ] & \mathcal{R} & \ottnt{e'} \, [  \ottnt{e_{{\mathrm{2}}}}  \ottsym{/}  \mathit{x}  ]
}
$$
   \caption{Lemmas~\ref{lem:fh-coterm-left} and \ref{lem:fh-coterm-right}.}
   \label{fig:weak_bisimulation}
 \end{figure}

 \begin{prop}[name=Cotermination \coqname{lemm\_coterm\_true}]{fh-coterm-true}
 Suppose that $\ottnt{e_{{\mathrm{1}}}}  \longrightarrow  \ottnt{e_{{\mathrm{2}}}}$.
 \begin{enumerate}
  \item If $\ottnt{e} \, [  \ottnt{e_{{\mathrm{1}}}}  \ottsym{/}  \mathit{x}  ]  \longrightarrow^{\ast}  \ottnt{v_{{\mathrm{1}}}}$, then $\ottnt{e} \, [  \ottnt{e_{{\mathrm{2}}}}  \ottsym{/}  \mathit{x}  ]  \longrightarrow^{\ast}  \ottnt{v_{{\mathrm{2}}}}$.
        In particular, if $\ottnt{v_{{\mathrm{1}}}}  \ottsym{=}   \mathsf{true} $, then $\ottnt{v_{{\mathrm{2}}}}  \ottsym{=}   \mathsf{true} $.

  \item If $\ottnt{e} \, [  \ottnt{e_{{\mathrm{2}}}}  \ottsym{/}  \mathit{x}  ]  \longrightarrow^{\ast}  \ottnt{v_{{\mathrm{2}}}}$, then $\ottnt{e} \, [  \ottnt{e_{{\mathrm{1}}}}  \ottsym{/}  \mathit{x}  ]  \longrightarrow^{\ast}  \ottnt{v_{{\mathrm{1}}}}$.
        In particular, if $\ottnt{v_{{\mathrm{2}}}}  \ottsym{=}   \mathsf{true} $, then $\ottnt{v_{{\mathrm{1}}}}  \ottsym{=}   \mathsf{true} $.
 \end{enumerate}

  \proof

  By weak bisimulation and the fact that $\ottnt{v} \, [  \ottnt{e}  \ottsym{/}  \mathit{x}  ]  \ottsym{=}   \mathsf{true} $ implies
  $\ottnt{v}  \ottsym{=}   \mathsf{true} $; note that variables are not values in {\fhfix} and
  it is not the case that $\ottnt{v} = \mathit{x}$ and $\ottnt{e}  \ottsym{=}   \mathsf{true} $.
 \end{prop}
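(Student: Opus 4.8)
The plan is to prove both parts simultaneously by exploiting that $\mathcal{R}$ is a weak bisimulation, as witnessed by \prop:ref{fh-coterm-left} and \prop:ref{fh-coterm-right}, together with determinism of evaluation (\prop:ref{fh-eval-determinism}). I focus on part~(1); part~(2) is entirely symmetric, using \prop:ref{fh-coterm-right} in place of \prop:ref{fh-coterm-left}. I would argue by induction on the length $n$ of the evaluation sequence $e[e_1/x] \longrightarrow^{\ast} v_1$.

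In the base case $n = 0$, the term $e[e_1/x]$ is already the value $v_1$. Here I first observe that $e$ cannot be the variable $x$: since $e_1 \longrightarrow e_2$, the term $e_1$ is not a value, so $e[e_1/x] = e_1$ would not be a value. Hence $e$ is itself of value form, and $e[e_2/x]$ is a value $v_2$ built from the same top-level constructor, which gives the first conclusion. For the refinement about $\mathsf{true}$, if $v_1 = \mathsf{true}$ then $e[e_1/x] = \mathsf{true}$; as $e \neq x$ and $\mathsf{true}$ is a nullary constant, the only possibility is $e = \mathsf{true}$, whence $v_2 = e[e_2/x] = \mathsf{true}$. This is precisely the use of the fact noted in the statement that a variable is never a value, so that substituting the non-value $e_1$ can never spuriously yield the constant $\mathsf{true}$.

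For the inductive step, write $e[e_1/x] \longrightarrow e'_1 \longrightarrow^{\ast} v_1$. Applying \prop:ref{fh-coterm-left} to the first step yields some $e'$ with $e[e_2/x] \longrightarrow^{\ast} e'[e_2/x]$ and $e'_1 \longrightarrow^{\ast} e'[e_1/x]$. By determinism (\prop:ref{fh-eval-determinism}) the reduction issuing from $e'_1$ is unique, so $e'[e_1/x]$ lies on the (finite) path from $e'_1$ to the value $v_1$; in particular $e'[e_1/x] \longrightarrow^{\ast} v_1$, and this sequence has strictly fewer than $n$ steps because the initial step $e[e_1/x] \longrightarrow e'_1$ has been consumed. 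Since $(e'[e_1/x], e'[e_2/x]) \in \mathcal{R}$ — it is $e_1 \longrightarrow e_2$ substituted into $e'$ — the induction hypothesis applies to $e'$, giving a value $v_2$ with $e'[e_2/x] \longrightarrow^{\ast} v_2$, and $v_2 = \mathsf{true}$ whenever $v_1 = \mathsf{true}$. Concatenating with $e[e_2/x] \longrightarrow^{\ast} e'[e_2/x]$ closes the step.

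The main obstacle is the bookkeeping in the inductive step: I must use determinism to align the a~priori unrelated reduction sequence produced by the bisimulation lemma with the given sequence to $v_1$, and I must verify that the induction measure strictly decreases, which relies on exactly one step being consumed at each stage. The only genuinely delicate semantic point — flagged in the statement — is the preservation of $\mathsf{true}$, which again hinges on variables not being values, handled entirely in the base case above.
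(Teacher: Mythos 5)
Your proposal is correct and follows essentially the same route as the paper's (very terse) proof: the weak-bisimulation lemmas combined with determinism of evaluation, plus the observation that variables are not values so that substitution cannot spuriously produce $ \mathsf{true} $. You have merely unfolded the induction on the length of the evaluation sequence that the paper leaves implicit.
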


 The cotermination implies the value inversion, which states that well-typed values
 satisfy refinements of their types.
 \begin{defi}
  We define function $ \mathit{refines} $ from types to sets of lambda abstractions that
  denote refinements:
  \[\begin{array}{lcll}
    \mathit{refines}  (   \{  \mathit{x}  \mathord{:}  \ottnt{T}   \mathop{\mid}   \ottnt{e}  \}   )  &=&  \{  \,   \lambda    \mathit{x}  \mathord{:}  \ottnt{T}  .  \ottnt{e}  \,  \}   \mathrel{\cup}   \mathit{refines}  (  \ottnt{T}  )  & \\
    \mathit{refines}  (  \ottnt{T}  )        &=& \emptyset &
    (\text{if } \ottnt{T} \text{ is not a refinement type})
  \end{array}\]
  We write $ \ottnt{e}  \in [\hspace{-.14em}[  \ottnt{T}  ]\hspace{-.14em}] $ if, for any $\ottnt{v} \, \in \,  \mathit{refines}  (  \ottnt{T}  ) $,
  $\ottnt{v} \, \ottnt{e}  \longrightarrow^{\ast}   \mathsf{true} $.
 \end{defi}
 \begin{prop}{fh-val-satis-c-conv}
  For any closed value $\ottnt{v}$,
  if $\ottnt{T_{{\mathrm{1}}}}  \equiv  \ottnt{T_{{\mathrm{2}}}}$, then $ \ottnt{v}  \in [\hspace{-.14em}[  \ottnt{T_{{\mathrm{1}}}}  ]\hspace{-.14em}] $ iff $ \ottnt{v}  \in [\hspace{-.14em}[  \ottnt{T_{{\mathrm{2}}}}  ]\hspace{-.14em}] $.

  \proof

  Straightforward by induction on the derivation of $\ottnt{T_{{\mathrm{1}}}}  \equiv  \ottnt{T_{{\mathrm{2}}}}$.
  The case for $\ottnt{T_{{\mathrm{1}}}}  \Rrightarrow  \ottnt{T_{{\mathrm{2}}}}$ is shown by the cotermination.
 \end{prop}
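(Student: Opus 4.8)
The plan is to induct on the derivation that $T_1 \equiv T_2$, using the presentation of $\equiv$ as the symmetric transitive closure of $\Rrightarrow$: the generating clauses are (i) $T_1 \Rrightarrow T_2$, (ii) symmetry, and (iii) transitivity. The symmetry case is immediate because the conclusion ``$v \in [\hspace{-.14em}[ T_1 ]\hspace{-.14em}]$ iff $v \in [\hspace{-.14em}[ T_2 ]\hspace{-.14em}]$'' is itself symmetric in $T_1$ and $T_2$, so the induction hypothesis supplies exactly the biconditional needed. The transitivity case merely composes the two biconditionals obtained from the induction hypotheses. Thus all the content lives in the base case $T_1 \Rrightarrow T_2$.

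For the base case I would unfold the definition of $\Rrightarrow$: there are $T$, $x$, $e_1$, $e_2$ with $T_1 = T[e_1/x]$, $T_2 = T[e_2/x]$, and $e_1 \longrightarrow e_2$. First I would record that substituting a \emph{term} variable leaves the type-level skeleton of $T$ untouched---substitution cannot turn a refinement type into a non-refinement one or vice versa---so $\mathit{refines}(T[e_1/x])$ and $\mathit{refines}(T[e_2/x])$ are in one-to-one correspondence: each refinement $\lambda y{:}S.\,b \in \mathit{refines}(T)$ contributes $\lambda y{:}S[e_1/x].\,b[e_1/x]$ on the left and $\lambda y{:}S[e_2/x].\,b[e_2/x]$ on the right. (This is a short auxiliary induction on $T$, or equivalently the observation that $\mathit{refines}$ commutes with term substitution.) It then suffices to show, for each such pair, that $(\lambda y{:}S[e_1/x].\,b[e_1/x])\,v \longrightarrow^{\ast} \mathsf{true}$ iff $(\lambda y{:}S[e_2/x].\,b[e_2/x])\,v \longrightarrow^{\ast} \mathsf{true}$. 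By \R{Beta} each side reduces in one step---the domain annotation playing no role---to $b[e_i/x][v/y]$; since $v$ is closed and $y$ may be chosen fresh for $e_1,e_2$, this equals $(b[v/y])[e_i/x]$. Writing $e_0 := b[v/y]$, the two checks are exactly $e_0[e_1/x] \longrightarrow^{\ast} \mathsf{true}$ and $e_0[e_2/x] \longrightarrow^{\ast} \mathsf{true}$, and the cotermination (\prop:ref{fh-coterm-true}), instantiated with $e_1 \longrightarrow e_2$ and $e := e_0$, gives precisely this equivalence (part~(1) for one direction, part~(2) for the other). Quantifying over all of $\mathit{refines}(T_1)$, equivalently $\mathit{refines}(T_2)$, yields $v \in [\hspace{-.14em}[ T_1 ]\hspace{-.14em}]$ iff $v \in [\hspace{-.14em}[ T_2 ]\hspace{-.14em}]$.

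The genuinely mathematical step---that reducing a subterm of a refinement predicate preserves whether the check returns $\mathsf{true}$---is already packaged in the cotermination, so it is not re-proved here. Consequently the only care required is bookkeeping: verifying that $\mathit{refines}$ distributes over term substitution so that the two refinement lists line up, and managing the capture-avoidance needed to commute $[e_i/x]$ with $[v/y]$ (both harmless because $v$ is closed). I expect the matching-up of the refinement lists to be the main, albeit minor, obstacle, since it is the point at which one must be precise about the nested refinement structure of $T$ and about the fact that only refinement \emph{bodies} (and irrelevant domain annotations), not the type skeleton, are touched by a term substitution.
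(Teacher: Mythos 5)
Your proposal is correct and follows essentially the same route as the paper: induction on the derivation of $\ottnt{T_{{\mathrm{1}}}}  \equiv  \ottnt{T_{{\mathrm{2}}}}$, with the symmetry and transitivity cases trivial and the $\ottnt{T_{{\mathrm{1}}}}  \Rrightarrow  \ottnt{T_{{\mathrm{2}}}}$ case discharged by the cotermination lemma. Your additional bookkeeping---that $ \mathit{refines} $ commutes with term substitution and that $[  \ottnt{e_{\ottmv{i}}}  \ottsym{/}  \mathit{x}  ]$ commutes with $[  \ottnt{v}  \ottsym{/}  \mathit{y}  ]$ because $\ottnt{v}$ is closed---is exactly the detail the paper elides with ``straightforward.''
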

 \begin{prop}[name=Value Inversion]{fh-val-satis-c}
  If $\emptyset  \vdash  \ottnt{v}  \ottsym{:}  \ottnt{T}$, then $ \ottnt{v}  \in [\hspace{-.14em}[  \ottnt{T}  ]\hspace{-.14em}] $.

  \proof

  Straightforward by induction on the typing derivation.
  The case for \T{Conv} is shown by \prop:ref{fh-val-satis-c-conv}.
 \end{prop}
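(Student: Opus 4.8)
The plan is to argue by induction on the derivation of $\emptyset  \vdash  \ottnt{v}  \ottsym{:}  \ottnt{T}$. Because $\ottnt{v}$ is a value and the context is empty, the last rule applied must be one of \T{Const}, \T{Abs}, \T{Cast}, \T{TAbs}, \T{Conv}, \T{Forget}, or \T{Exact}: the rules \T{Op}, \T{App}, \T{TApp}, \T{WCheck}, \T{ACheck}, and \T{Blame} conclude typings only for non-values, and \T{Var} cannot fire under the empty context. For \T{Abs}, \T{Cast}, and \T{TAbs} the assigned type is a (dependent) function type or a universal type, which is not a refinement type; hence $ \mathit{refines}  (  \ottnt{T}  )  = \emptyset$ and $ \ottnt{v}  \in [\hspace{-.14em}[  \ottnt{T}  ]\hspace{-.14em}] $ holds vacuously.

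The refinement-manipulating cases follow from the recursive shape of $ \mathit{refines} $. For \T{Forget}, the premise is $\emptyset  \vdash  \ottnt{v}  \ottsym{:}   \{  \mathit{x}  \mathord{:}  \ottnt{T}   \mathop{\mid}   \ottnt{e}  \} $, so the induction hypothesis gives $ \ottnt{v}  \in [\hspace{-.14em}[   \{  \mathit{x}  \mathord{:}  \ottnt{T}   \mathop{\mid}   \ottnt{e}  \}   ]\hspace{-.14em}] $; since $ \mathit{refines}  (  \ottnt{T}  )   \subseteq  \mathit{refines}  (   \{  \mathit{x}  \mathord{:}  \ottnt{T}   \mathop{\mid}   \ottnt{e}  \}   ) $, this immediately yields $ \ottnt{v}  \in [\hspace{-.14em}[  \ottnt{T}  ]\hspace{-.14em}] $. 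For \T{Exact}, the goal type $ \{  \mathit{x}  \mathord{:}  \ottnt{T}   \mathop{\mid}   \ottnt{e}  \} $ satisfies $ \mathit{refines}  (   \{  \mathit{x}  \mathord{:}  \ottnt{T}   \mathop{\mid}   \ottnt{e}  \}   )  = \{   \lambda    \mathit{x}  \mathord{:}  \ottnt{T}  .  \ottnt{e}  \}  \mathrel{\cup}  \mathit{refines}  (  \ottnt{T}  ) $; the induction hypothesis discharges the abstractions coming from $ \mathit{refines}  (  \ottnt{T}  ) $, and the premise $\ottnt{e} \, [  \ottnt{v}  \ottsym{/}  \mathit{x}  ]  \longrightarrow^{\ast}   \mathsf{true} $ together with \R{Beta} gives $\ottsym{(}    \lambda    \mathit{x}  \mathord{:}  \ottnt{T}  .  \ottnt{e}   \ottsym{)} \, \ottnt{v}  \rightsquigarrow  \ottnt{e} \, [  \ottnt{v}  \ottsym{/}  \mathit{x}  ]  \longrightarrow^{\ast}   \mathsf{true} $, discharging the single new refinement. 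Finally, \T{Conv} is exactly \prop:ref{fh-val-satis-c-conv}: the induction hypothesis gives $ \ottnt{v}  \in [\hspace{-.14em}[  \ottnt{T_{{\mathrm{1}}}}  ]\hspace{-.14em}] $ and the premise $\ottnt{T_{{\mathrm{1}}}}  \equiv  \ottnt{T_{{\mathrm{2}}}}$ transports this to $ \ottnt{v}  \in [\hspace{-.14em}[  \ottnt{T_{{\mathrm{2}}}}  ]\hspace{-.14em}] $ (note that $\ottnt{v}$ is closed, as required by that lemma).

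The case I expect to demand the most operational reasoning is \T{Const}, whose goal is $ \ottnt{k}  \in [\hspace{-.14em}[   \mathsf{ty}  (  \ottnt{k}  )   ]\hspace{-.14em}] $. The only hypothesis available is requirement~(3) on constants, namely $\langle  \ottnt{B}  \Rightarrow   \mathsf{ty}  (  \ottnt{k}  )   \rangle   ^{ \ell }  \, \ottnt{k}  \longrightarrow^{\ast}  \ottnt{k}$ with $\ottnt{B}  \ottsym{=}   \mathit{unref}  (   \mathsf{ty}  (  \ottnt{k}  )   ) $. I would trace the fussy reduction of this cast application: the (base) source type is stripped by \R{Forget}, and then \R{PreCheck} and \R{Check} install a waiting check and an active check for each refinement of $ \mathsf{ty}  (  \ottnt{k}  ) $, from the innermost outward. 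For the whole application to evaluate to $\ottnt{k}$ rather than to blame, each active check must resolve by \R{OK} and not \R{Fail}, which forces $\ottnt{e_{\ottmv{i}}} \, [  \ottnt{k}  \ottsym{/}  \mathit{x_{\ottmv{i}}}  ]  \longrightarrow^{\ast}   \mathsf{true} $ for every peeled refinement $ \{  \mathit{x_{\ottmv{i}}}  \mathord{:}  \ottnt{T_{\ottmv{i}}}   \mathop{\mid}   \ottnt{e_{\ottmv{i}}}  \} $; since $\ottsym{(}    \lambda    \mathit{x_{\ottmv{i}}}  \mathord{:}  \ottnt{T_{\ottmv{i}}}  .  \ottnt{e_{\ottmv{i}}}   \ottsym{)} \, \ottnt{k}  \rightsquigarrow  \ottnt{e_{\ottmv{i}}} \, [  \ottnt{k}  \ottsym{/}  \mathit{x_{\ottmv{i}}}  ]$ by \R{Beta}, this is precisely $ \ottnt{k}  \in [\hspace{-.14em}[   \mathsf{ty}  (  \ottnt{k}  )   ]\hspace{-.14em}] $. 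The obstacle is the bookkeeping that a single success of the consolidated cast decomposes into success of each individual refinement check; this is routine given unique decomposition (\prop:ref{fh-red-decomp}) and determinism (\prop:ref{fh-eval-determinism}), which pin down the reduction path, but it is the one point where the operational behavior of casts, and not merely the typing rules, must be examined.
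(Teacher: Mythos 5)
Your proof is correct and follows the same route as the paper, which simply performs induction on the typing derivation and dispatches \T{Conv} via \prop:ref{fh-val-satis-c-conv}; your case analysis (vacuous cases for non-refinement result types, the $ \mathit{refines} $ bookkeeping for \T{Forget}/\T{Exact}, and the operational trace of requirement~(3) for \T{Const}) is exactly the "straightforward" content the paper leaves implicit. One pedantic note: in the \T{Const} case the first step is \R{PreCheck}, not \R{Forget}, since the source type $\ottnt{B}$ is already unrefined — but this does not affect the argument.
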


 In addition to the value inversion, we need auxiliary, standard lemmas to show
 the progress and the subject reduction.  In what follows, only key lemmas are stated;
 readers interested in other lemmas and their proofs are referred to Greenberg's
 dissertation~\cite{Greenberg_2013_PhD} or Sekiyama et
 al.~\cite{Sekiyama/Igarashi/Greenberg_2016_TOPLAS}.
 \begin{prop}[name=Term Weakening]{fh-weak-term}
  Let $\mathit{x}$ be a fresh variable.
  Suppose that $\Gamma  \vdash  \ottnt{T}$.
  \begin{statements}
   \item(term) If $\Gamma  \ottsym{,}  \Gamma'  \vdash  \ottnt{e}  \ottsym{:}  \ottnt{T'}$, then $ \Gamma  ,  \mathit{x}  \mathord{:}  \ottnt{T}   \ottsym{,}  \Gamma'  \vdash  \ottnt{e}  \ottsym{:}  \ottnt{T'}$.
   \item(type) If $\Gamma  \ottsym{,}  \Gamma'  \vdash  \ottnt{T'}$, then $ \Gamma  ,  \mathit{x}  \mathord{:}  \ottnt{T}   \ottsym{,}  \Gamma'  \vdash  \ottnt{T'}$.
   \item(tctx) If $ \mathord{ \vdash } ~  \Gamma  \ottsym{,}  \Gamma' $, then $ \mathord{ \vdash } ~   \Gamma  ,  \mathit{x}  \mathord{:}  \ottnt{T}   \ottsym{,}  \Gamma' $.
  \end{statements}
 \end{prop}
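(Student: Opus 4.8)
The plan is to prove the three statements \emph{simultaneously} by induction on the derivations of $\Gamma, \Gamma' \vdash e : T'$, of $\Gamma, \Gamma' \vdash T'$, and of $\vdash \Gamma, \Gamma'$, treating the prefix $\Gamma$ and the weakened binding $x{:}T$ (with $\Gamma \vdash T$) as fixed and the suffix $\Gamma'$ as universally quantified. The simultaneous induction is forced because the three judgments are genuinely mutually recursive: \WF{Refine} makes type well-formedness depend on typing, \T{App} and \T{Cast} make typing depend on type well-formedness, and \WF{ExtendVar} makes context well-formedness depend on type well-formedness. In every case I will apply the induction hypotheses to the immediate subderivations and then reassemble the conclusion with the same rule, so the bulk of the argument is routine rule-by-rule bookkeeping.

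The cases that need genuine attention are the variable and binder rules. For \T{Var} and \WF{TVar} the premise is a membership $y{:}T' \in (\Gamma, \Gamma')$ (resp.\ $\alpha \in (\Gamma, \Gamma')$); since $x$ is fresh, inserting $x{:}T$ in the middle neither deletes nor shadows an existing binding, so the membership still holds of $\Gamma, x{:}T, \Gamma'$ and the rule reapplies directly. For the context-extending rules—\T{Abs}, \T{Fix}, \T{TAbs}, \WF{Fun}, \WF{Forall}, and \WF{Refine}—at least one premise is derived in a context of the form $\Gamma, \Gamma', y{:}T_1$ (or $\Gamma, \Gamma', \alpha$, or, for \T{Fix}, two such bindings). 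There I invoke the induction hypothesis with the \emph{enlarged} suffix $\Gamma', y{:}T_1$ (resp.\ $\Gamma', \alpha$), which is legitimate precisely because $\Gamma'$ is quantified universally in the induction. Using the paper's convention that the variables bound in a context are mutually distinct, I may assume the freshly bound $y$ (resp.\ $\alpha$) differs from $x$ and lies outside $\mathit{dom}(\Gamma, \Gamma')$, so no capture occurs.

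Statement (tctx) is handled by the same induction, peeling $\Gamma'$ from the right. When $\Gamma' = \emptyset$ the goal $\vdash \Gamma, x{:}T$ follows from \WF{ExtendVar} using the hypothesis $\Gamma \vdash T$ together with the assumed $\vdash \Gamma$; this is the single point at which the hypothesis $\Gamma \vdash T$ is consumed. When $\Gamma'$ ends in $y{:}T''$ (resp.\ $\alpha$) the derivation ends in \WF{ExtendVar} (resp.\ \WF{ExtendTVar}), and I combine the (tctx) induction hypothesis on the shorter context with the (type) induction hypothesis applied to the well-formedness premise $\Gamma, \Gamma'' \vdash T''$. The run-time typing rules \T{Blame}, \T{ACheck}, \T{Conv}, \T{Forget}, and \T{Exact} pose no difficulty: their closed premises—of the form $\emptyset \vdash \cdots$ or evaluation side conditions—mention neither $\Gamma$ nor $\Gamma'$ and are carried over verbatim, while their only context-sensitive premise is $\vdash \Gamma, \Gamma'$, which is weakened by statement (tctx).

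The main obstacle is not conceptual depth but keeping the simultaneous induction honest: I must check that every appeal to an induction hypothesis—including those at the enlarged suffixes in the binder cases—is to a \emph{strict} subderivation, so that the genuine mutual recursion between (term) and (type) does not collapse into circular reasoning. Because each rule's premises are proper subderivations of its conclusion, the induction is well-founded, and the freshness of $x$ makes all lookup- and capture-related side conditions immediate.
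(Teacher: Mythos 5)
Your proof is correct and is exactly the standard simultaneous induction that the paper itself omits (it only states the lemma and defers to Greenberg's dissertation and Sekiyama et al.\ for the routine proof). The points you single out—mutual recursion among the three judgments forcing a simultaneous induction, universally quantifying the suffix $\Gamma'$ to handle binder rules, freshness of $\mathit{x}$ for the lookup rules, and the closed premises of the run-time typing rules passing through untouched—are precisely the ones that matter, so nothing further is needed.
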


\begin{prop}[name=Type Weakening]{fh-weak-type}
 Let $\alpha$ be a fresh type variable.
 \begin{statements}
  \item(term) If $\Gamma  \ottsym{,}  \Gamma'  \vdash  \ottnt{e}  \ottsym{:}  \ottnt{T}$, then $\Gamma  \ottsym{,}  \alpha  \ottsym{,}  \Gamma'  \vdash  \ottnt{e}  \ottsym{:}  \ottnt{T}$.
  \item(type) If $\Gamma  \ottsym{,}  \Gamma'  \vdash  \ottnt{T}$, then $\Gamma  \ottsym{,}  \alpha  \ottsym{,}  \Gamma'_{{\mathrm{2}}}  \vdash  \ottnt{T}$.
  \item(tctx) If $ \mathord{ \vdash } ~  \Gamma  \ottsym{,}  \Gamma' $, then $ \mathord{ \vdash } ~  \Gamma  \ottsym{,}  \alpha  \ottsym{,}  \Gamma' $.
 \end{statements}
\end{prop}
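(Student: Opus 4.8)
The plan is to prove the three clauses simultaneously by mutual induction on the derivations of $\mathord{\vdash}~\Gamma \ottsym{,} \Gamma'$, of $\Gamma \ottsym{,} \Gamma' \vdash \ottnt{T}$, and of $\Gamma \ottsym{,} \Gamma' \vdash \ottnt{e} \ottsym{:} \ottnt{T}$, exactly paralleling Term Weakening (\prop:ref{fh-weak-term}). A single simultaneous induction is forced because the three judgments are mutually defined: context well-formedness invokes type well-formedness in \WF{ExtendVar}, type well-formedness invokes typing in \WF{Refine}, and typing invokes context well-formedness in \T{Var} and \T{Const}. Note that, unlike Term Weakening, no premise $\Gamma \vdash \ottnt{T}$ is needed, since we insert a bare type variable rather than a typed binding.

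In each inductive case I would apply the induction hypotheses to the context-dependent premises, keeping the same split of the context around $\Gamma'$, and then rebuild the conclusion with the same rule. The leaf rules \WF{Empty}, \WF{Base}, \T{Const}, and \T{Var} require only that well-formedness of the enlarged context follow from the (tctx) clause, and that any lookup $\mathit{x} \mathord{:} \ottnt{T} \in \Gamma \ottsym{,} \Gamma'$ or $\alpha' \in \Gamma \ottsym{,} \Gamma'$ persist after inserting the fresh $\alpha$, which it plainly does. The variable-lookup cases of \WF{TVar} likewise go through because membership is monotone under context extension.

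The binder rules — \WF{Forall}, \WF{Fun}, \WF{Refine}, \T{Abs}, and \T{TAbs} — extend the context on the right, so I would instantiate the induction hypothesis with $\Gamma'$ replaced by the extended tail (for instance $\Gamma' \ottsym{,} \beta$ in \WF{Forall}, or $\Gamma' \ottsym{,} \mathit{x} \mathord{:} \ottnt{T_{{\mathrm{1}}}}$ in \WF{Fun} and \T{Abs}); the usual freshness convention lets me assume each bound name is distinct from $\alpha$, so insertion of $\alpha$ commutes with the extension. For \T{App} the side premise $\Gamma \ottsym{,} \Gamma' \vdash \ottnt{T_{{\mathrm{2}}}} \, [ \ottnt{e_{{\mathrm{2}}}} \ottsym{/} \mathit{x} ]$ is discharged by the (type) clause of the same induction, and the substituted type itself is unchanged by the context insertion. \T{TApp} combines an appeal to the (term) clause on its first premise with the (type) clause on $\Gamma \ottsym{,} \Gamma' \vdash \ottnt{T_{{\mathrm{2}}}}$.

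The run-time typing rules \T{Blame}, \T{Conv}, \T{Forget}, \T{Exact}, and \T{ACheck}, together with \T{WCheck}, have premises of two kinds that need no induction: judgments under the empty context $\emptyset$, which are untouched because weakening only modifies the $\Gamma$ component, and evaluation premises such as $\ottnt{e_{{\mathrm{1}}}} \, [ \ottnt{v} \ottsym{/} \mathit{x} ] \longrightarrow^{\ast} \ottnt{e_{{\mathrm{2}}}}$, which are context-independent. These carry over verbatim, and I would close each such case by deriving the well-formedness of the enlarged conclusion context from the (tctx) clause. I expect no genuine obstacle here; the lemma is routine, and the only points that demand care are the simultaneity of the induction across the three judgment forms and the freshness bookkeeping under binders handled by $\alpha$-renaming.
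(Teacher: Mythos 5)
Your proof is correct and is exactly the standard simultaneous induction one would expect; the paper itself omits the proof of this lemma entirely (deferring to Greenberg's dissertation and Sekiyama et al.), so there is nothing to diverge from. The only cosmetic remark is that the $\Gamma'_{{\mathrm{2}}}$ in clause (type) of the statement is evidently a typo for $\Gamma'$, which your argument implicitly (and correctly) treats it as.
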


\begin{prop}[name=Term Substitution]{fh-subst-term}
 Suppose that $\Gamma  \vdash  \ottnt{e}  \ottsym{:}  \ottnt{T}$.
 \begin{statements}
  \item(term) If $ \Gamma  ,  \mathit{x}  \mathord{:}  \ottnt{T}   \ottsym{,}  \Gamma'  \vdash  \ottnt{e'}  \ottsym{:}  \ottnt{T'}$,
        then $\Gamma  \ottsym{,}  \Gamma'  [  \ottnt{e}  \ottsym{/}  \mathit{x}  ]  \vdash  \ottnt{e'} \, [  \ottnt{e}  \ottsym{/}  \mathit{x}  ]  \ottsym{:}  \ottnt{T'} \, [  \ottnt{e}  \ottsym{/}  \mathit{x}  ]$.
  \item(type) If $ \Gamma  ,  \mathit{x}  \mathord{:}  \ottnt{T}   \ottsym{,}  \Gamma'  \vdash  \ottnt{T'}$,
        then $\Gamma  \ottsym{,}  \Gamma'  [  \ottnt{e}  \ottsym{/}  \mathit{x}  ]  \vdash  \ottnt{T'} \, [  \ottnt{e}  \ottsym{/}  \mathit{x}  ]$.
  \item(tctx) If $ \mathord{ \vdash } ~   \Gamma  ,  \mathit{x}  \mathord{:}  \ottnt{T}   \ottsym{,}  \Gamma' $, then $ \mathord{ \vdash } ~  \Gamma  \ottsym{,}  \Gamma'  [  \ottnt{e}  \ottsym{/}  \mathit{x}  ] $.
 \end{statements}
\end{prop}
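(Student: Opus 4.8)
The plan is to prove the three statements \emph{simultaneously} by induction on the derivations of $\Gamma, x{:}T, \Gamma' \vdash e' : T'$, of $\Gamma, x{:}T, \Gamma' \vdash T'$, and of $\vdash \Gamma, x{:}T, \Gamma'$. Treating them as one mutually recursive family is forced by the shape of the rules: the typing rules appeal to type well-formedness (e.g.\ the side condition $\Gamma \vdash T_2[e_2/x]$ in \T{App}), type well-formedness appeals back to typing (the premise $\Gamma, x{:}T \vdash e : \mathsf{Bool}$ in \WF{Refine}), and context well-formedness $\vdash \Gamma$ occurs as a premise in almost every base rule. Throughout, $e$ is the substituted term with $\Gamma \vdash e : T$, and I will freely use the standard substitution-commutation identities together with the convention that term and type variables are kept distinct.

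First I would dispatch the base and structural cases. For the context- and type-well-formedness base rules the work reduces to re-deriving $\vdash \Gamma, \Gamma'[e/x]$ from the (tctx) statement and reapplying the rule. The one genuinely informative case is \T{Var}: I split on whether the looked-up variable is $x$ itself or not. If it is $x$, its declared type is $T$; since $\Gamma \vdash T$ holds by inversion of context well-formedness and $\Gamma$ does not contain $x$, we have $x \notin \mathit{FV}(T)$ and so $T[e/x] = T$. The goal $\Gamma, \Gamma'[e/x] \vdash e : T$ then follows from the hypothesis $\Gamma \vdash e : T$ by weakening with the suffix $\Gamma'[e/x]$, which I obtain by iterating \prop:ref{fh-weak-term} and \prop:ref{fh-weak-type} (the routine generalization of the single-binding weakening lemmas to weakening by an entire well-formed suffix context). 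If the variable is some other $y$, substitution commutes with the lookup and \T{Var} reapplies directly. The binder cases (\T{Abs}, \T{Fix}, \T{TAbs}, \WF{Forall}, \WF{Fun}, \WF{Refine}) push the substitution under the binder using the capture-avoiding convention and apply the induction hypothesis to the extended context $\Gamma'$.

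The delicate compositional cases are the dependent rules \T{App} and \T{Op}, whose conclusions carry substituted types such as $T_2[e_2/x']$. Here I must invoke the substitution-commutation identity $(T_2[e_2/x'])[e/x] = (T_2[e/x])[\,e_2[e/x]/x'\,]$ to line up the type produced by the induction hypothesis with the type demanded by the rule, and I must feed the (type) statement the well-formedness premise $\Gamma, x{:}T, \Gamma' \vdash T_2[e_2/x]$ so that the side condition of \T{App} survives the substitution. (\T{Op} is analogous, with an iterated chain of substitutions of the preceding arguments into the argument types.) This threading of well-formedness through the dependent substitutions is precisely why the three parts cannot be separated.

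The remaining cases are the run-time rules \T{Blame}, \T{Conv}, \T{Forget}, \T{Exact}, and \T{ACheck}. These appear threatening because they carry semantic side conditions---type conversion $T_1 \equiv T_2$ and evaluation premises such as $e_1[v/x] \longrightarrow^{\ast} \mathsf{true}$---but their subject terms and types are typed in the empty context $\emptyset$ and are therefore \emph{closed}. Consequently $[e/x]$ acts as the identity on the subject term and on the type appearing in the conclusion, and each such case reduces to reapplying the same rule after re-establishing $\vdash \Gamma, \Gamma'[e/x]$ via the (tctx) statement; the closed semantic premises are left untouched, so no reasoning about $\equiv$ or about evaluation under substitution is needed. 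I expect the main obstacle to be the bookkeeping in the dependent cases \T{App} and \T{Op}---getting the nested term substitutions inside dependent codomain and return types to commute correctly while carrying along the well-formedness side conditions---rather than anything specific to the manifest-contract machinery, whose subtlety is neutralized here by closedness.
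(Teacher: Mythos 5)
Your proposal is correct and is exactly the standard argument: the paper itself omits the proof of this lemma (deferring to Greenberg's dissertation and Sekiyama et al.), and the intended proof there is precisely the simultaneous induction on the three mutually recursive derivations that you describe, with the variable-lookup split, the substitution-commutation identity for \T{App}/\T{Op}, and the observation that the remaining run-time rules have closed premises. The only bookkeeping you gloss over is the handful of compositional rules not named in your case split (\T{Cast}, which needs that compatibility $\ottnt{T_{{\mathrm{1}}}} \mathrel{\parallel} \ottnt{T_{{\mathrm{2}}}}$ is preserved by substitution; \T{TApp}; and \T{WCheck}, whose premises are deliberately open), but these are routine and do not affect correctness.
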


\begin{prop}[name=Type Substitution]{fh-subst-type}
 Suppose that $\Gamma  \vdash  \ottnt{T}$.
 \begin{statements}
  \item(term) If $\Gamma  \ottsym{,}  \alpha  \ottsym{,}  \Gamma'  \vdash  \ottnt{e'}  \ottsym{:}  \ottnt{T'}$,
        then $\Gamma  \ottsym{,}  \Gamma'  [  \ottnt{T}  \ottsym{/}  \alpha  ]  \vdash  \ottnt{e'} \, [  \ottnt{T}  \ottsym{/}  \alpha  ]  \ottsym{:}  \ottnt{T'} \, [  \ottnt{T}  \ottsym{/}  \alpha  ]$.
  \item(type) If $\Gamma  \ottsym{,}  \alpha  \ottsym{,}  \Gamma'  \vdash  \ottnt{T'}$,
        then $\Gamma  \ottsym{,}  \Gamma'  [  \ottnt{T}  \ottsym{/}  \alpha  ]  \vdash  \ottnt{T'} \, [  \ottnt{T}  \ottsym{/}  \alpha  ]$.
  \item(tctx) If $ \mathord{ \vdash } ~  \Gamma  \ottsym{,}  \alpha  \ottsym{,}  \Gamma' $, then $ \mathord{ \vdash } ~  \Gamma  \ottsym{,}  \Gamma'  [  \ottnt{T}  \ottsym{/}  \alpha  ] $.
 \end{statements}
\end{prop}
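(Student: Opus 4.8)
The plan is to prove the three statements simultaneously by induction on the derivation of the hypothesis (the term-typing judgment $\Gamma  \ottsym{,}  \alpha  \ottsym{,}  \Gamma'  \vdash  \ottnt{e'}  \ottsym{:}  \ottnt{T'}$, the type-well-formedness judgment $\Gamma  \ottsym{,}  \alpha  \ottsym{,}  \Gamma'  \vdash  \ottnt{T'}$, or the context-well-formedness judgment $ \mathord{ \vdash } ~  \Gamma  \ottsym{,}  \alpha  \ottsym{,}  \Gamma' $), exactly as for Term Substitution (\prop:ref{fh-subst-term}). The three parts are genuinely mutually recursive---\WF{Refine} appeals to term typing, \WF{ExtendVar} appeals to type well-formedness, and several typing rules (\T{App}, \T{Cast}, \T{TApp}) carry well-formedness premises---so they must be established together. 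Throughout I would use $\alpha$-conversion to assume that every bound type variable $\beta$ introduced by \T{TAbs}, \WF{Forall}, or a universal type is distinct from $\alpha$ and does not occur free in $\ottnt{T}$, and similarly for bound term variables, so that substitutions push through binders as $( \forall   \beta  .  \ottnt{T_{{\mathrm{1}}}} ) \, [  \ottnt{T}  \ottsym{/}  \alpha  ]  \ottsym{=}   \forall   \beta  .  \ottsym{(}  \ottnt{T_{{\mathrm{1}}}} \, [  \ottnt{T}  \ottsym{/}  \alpha  ]  \ottsym{)} $. I would also rely on Type Weakening (\prop:ref{fh-weak-type}) to transport the hypothesis $\Gamma  \vdash  \ottnt{T}$ into the larger contexts that arise under binders.

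The structural cases (abstractions, fixpoints, operations, term and type applications) reduce to the induction hypotheses once two routine substitution-commutation identities are in hand: for a term variable $\mathit{x}$ and type variable $\alpha$ with the usual freshness side conditions, $(\ottnt{T_{{\mathrm{2}}}} \, [  \ottnt{e_{{\mathrm{2}}}}  \ottsym{/}  \mathit{x}  ]) \, [  \ottnt{T}  \ottsym{/}  \alpha  ]  \ottsym{=}  (\ottnt{T_{{\mathrm{2}}}} \, [  \ottnt{T}  \ottsym{/}  \alpha  ]) \, [  \ottnt{e_{{\mathrm{2}}}} \, [  \ottnt{T}  \ottsym{/}  \alpha  ]  \ottsym{/}  \mathit{x}  ]$ (used for the dependent codomain in \T{App}) and $(\ottnt{T_{{\mathrm{1}}}} \, [  \ottnt{T_{{\mathrm{2}}}}  \ottsym{/}  \beta  ]) \, [  \ottnt{T}  \ottsym{/}  \alpha  ]  \ottsym{=}  (\ottnt{T_{{\mathrm{1}}}} \, [  \ottnt{T}  \ottsym{/}  \alpha  ]) \, [  \ottnt{T_{{\mathrm{2}}}} \, [  \ottnt{T}  \ottsym{/}  \alpha  ]  \ottsym{/}  \beta  ]$ (used for the instantiated result type in \T{TApp}). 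For \T{Var}, I would split on whether the looked-up binding lies in $\Gamma$---where context well-formedness guarantees $\alpha$ does not occur, so its type is unchanged---or in $\Gamma'$, where it is substituted; in both subcases the context-well-formedness part of the induction hypothesis supplies $ \mathord{ \vdash } ~  \Gamma  \ottsym{,}  \Gamma'  [  \ottnt{T}  \ottsym{/}  \alpha  ] $. For \T{Const} I would use that $ \mathsf{ty}  (  \ottnt{k}  ) $ is closed, so $[  \ottnt{T}  \ottsym{/}  \alpha  ]$ is vacuous on it.

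Two families of cases deserve attention. For \T{Cast} I need an auxiliary lemma that type compatibility is stable under type substitution, i.e.\ $\ottnt{T_{{\mathrm{1}}}}  \mathrel{\parallel}  \ottnt{T_{{\mathrm{2}}}}$ implies $\ottnt{T_{{\mathrm{1}}}} \, [  \ottnt{T}  \ottsym{/}  \alpha  ]  \mathrel{\parallel}  \ottnt{T_{{\mathrm{2}}}} \, [  \ottnt{T}  \ottsym{/}  \alpha  ]$; this follows by a short induction on the derivation of $\ottnt{T_{{\mathrm{1}}}}  \mathrel{\parallel}  \ottnt{T_{{\mathrm{2}}}}$, since the compatibility rules (bottom of \fig{typing}) only track type structure modulo refinements, which substitution preserves. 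The run-time rules \T{Conv}, \T{Blame}, \T{ACheck}, \T{Forget}, and \T{Exact} are actually the mildest cases: their term and type premises are derived in the empty context, so the terms and types they mention---including the converted types related by $ \equiv $ and the refinements whose evaluation to $ \mathsf{true} $ is required---are closed, and $[  \ottnt{T}  \ottsym{/}  \alpha  ]$ leaves them fixed; the only residual obligation is to rebuild the context-well-formedness premise $ \mathord{ \vdash } ~  \Gamma  \ottsym{,}  \Gamma'  [  \ottnt{T}  \ottsym{/}  \alpha  ] $ from the induction hypothesis. In particular, because every evaluation premise in the type system sits in a closed context, I never need to argue that evaluation commutes with type substitution, which sidesteps any would-be interaction with cotermination (\prop:ref{fh-coterm-true}).

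The main obstacle is not any single deep step but the disciplined bookkeeping of the mutual induction together with the closedness observation: recognizing that the dependent and run-time rules either reduce to the two commutation identities or collapse because their subjects are closed. The one genuinely new ingredient is the compatibility-preservation lemma for \T{Cast}; everything else is the standard substitution argument specialized to {\fhfix}'s dependent and refinement structure.
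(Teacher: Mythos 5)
Your proposal is correct and is exactly the standard argument here: the paper itself omits the proof of this lemma entirely, deferring to Greenberg's dissertation and Sekiyama et al., where it is established by the same simultaneous induction on the three judgments with the two substitution-commutation identities and the closedness observation for the run-time rules. The only detail worth flagging is that your compatibility-preservation lemma for \T{Cast} needs, in the \C{TVar} case where the substituted variable is $\alpha$ itself, the reflexivity of $ \mathrel{\parallel} $ (i.e.\ $\ottnt{T}  \mathrel{\parallel}  \ottnt{T}$), which is a further easy induction on $\ottnt{T}$.
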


\begin{prop}[name=Canonical Forms]{fh-canonical}
 Suppose that $\emptyset  \vdash  \ottnt{v}  \ottsym{:}  \ottnt{T}$.
 \begin{statements}
  \item(base) If $ \mathit{unref}  (  \ottnt{T}  )   \ottsym{=}  \ottnt{B}$, then $\ottnt{v} \, \in \,  {\cal K}_{ \ottnt{B} } $.
  \item(fun) If $ \mathit{unref}  (  \ottnt{T}  )   \ottsym{=}   \mathit{x} \mathord{:} \ottnt{T_{{\mathrm{1}}}} \rightarrow \ottnt{T_{{\mathrm{2}}}} $, then
   $\ottnt{v}  \ottsym{=}    \lambda    \mathit{x}  \mathord{:}  \ottnt{T'_{{\mathrm{1}}}}  .  \ottnt{e} $ for some $\mathit{x}$, $\ottnt{T'_{{\mathrm{1}}}}$, and $\ottnt{e}$, or
   $\ottnt{v}  \ottsym{=}  \langle  \ottnt{T'_{{\mathrm{1}}}}  \Rightarrow  \ottnt{T'_{{\mathrm{2}}}}  \rangle   ^{ \ell } $ for some $\ottnt{T'_{{\mathrm{1}}}}$, $\ottnt{T'_{{\mathrm{2}}}}$, and $\ell$.
  \item(univ) If $ \mathit{unref}  (  \ottnt{T}  )   \ottsym{=}   \forall   \alpha  .  \ottnt{T'} $, then
        {\ifdtabs
        \begin{enumerate}
         \item $\ottnt{v}  \ottsym{=}   \Lambda\!  \, \alpha  .~  \ottnt{v'}$
               for some $\alpha$ and $\ottnt{v'}$, or
         \item $\ottnt{v}  \ottsym{=}   \Lambda\!  \, \alpha  .~  \langle  \ottnt{T'_{{\mathrm{1}}}}  \Rightarrow  \ottnt{T'_{{\mathrm{2}}}}  \rangle   ^{ \ell }  \, \ottsym{(}  \ottnt{v'} \, \alpha  \ottsym{)}$
               for some $\alpha$, $\ottnt{T'_{{\mathrm{1}}}}$, $\ottnt{T'_{{\mathrm{2}}}}$, $\ell$ and $\ottnt{v'}$.
        \end{enumerate}
        \else
        $\ottnt{v}  \ottsym{=}   \Lambda\!  \, \alpha  .~  \ottnt{e}$ for some $\ottnt{e}$.
        \fi}
 \end{statements}
\end{prop}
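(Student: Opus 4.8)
The plan is to proceed by induction on the derivation of $\emptyset \vdash v : T$, with a case analysis on the last rule applied. Because the subject is a value $v$, only rules whose conclusion has a value as its subject can occur: the value-introduction rules \T{Const}, \T{Abs}, \T{Cast}, and \T{TAbs}, together with the three ``retyping'' rules \T{Conv}, \T{Forget}, and \T{Exact}, whose subject is unchanged from premise to conclusion. Every other rule---\T{Var}, \T{Blame}, \T{App}, \T{Op}, \T{TApp}, \T{WCheck}, \T{ACheck}, and \T{Fix}---has a non-value subject (a variable, blame, an application, a waiting or active check, or a recursive function, which is not among the value forms of \fig{fhfix-syntax}), so it cannot be the last rule in a typing of $v$; \T{Var} is moreover vacuous in the empty context.

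For the value-introduction rules I would read off the type and the value form directly. In the case \T{Abs} we have $v = \lambda x{:}T_1.e$ and $T = x{:}T_1 \to T_2$, so $\mathit{unref}(T) = x{:}T_1 \to T_2$ and the (fun) conclusion holds, while (base) and (univ) are vacuous since $\mathit{unref}(T)$ cannot simultaneously have two distinct outermost constructors. The cases \T{Cast} (yielding the cast alternative of (fun)) and \T{TAbs} (with $v = \Lambda \alpha.e$, giving (univ)) are analogous. For \T{Const}, with $v = k \in \mathcal{K}_B$, the requirement on constants gives $\mathit{unref}(\mathsf{ty}(k)) = B$, hence $\mathit{unref}(T) = B$; this forces the base type assumed in (base) to be exactly $B$ with $k \in \mathcal{K}_B$, and makes the hypotheses of (fun) and (univ) contradictory.

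The retyping rules are discharged by the induction hypothesis, using that none of them alters the outermost constructor of $\mathit{unref}(T)$. For \T{Forget} the conclusion type $T$ comes from a premise type $\{x{:}T\mid e\}$ (and dually for \T{Exact}), and since $\mathit{unref}(\{x{:}T\mid e\}) = \mathit{unref}(T)$ the hypothesis on $\mathit{unref}(T)$ transfers verbatim to the premise, so the IH applies. For \T{Conv} the premise and conclusion types are related by $\equiv$, and I would first establish the small auxiliary fact that $T_1 \equiv T_2$ implies $\mathit{unref}(T_1)$ and $\mathit{unref}(T_2)$ share the same outermost shape (the same base type, or both function types, or both universal types). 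This follows from the definition of $\Rrightarrow$: a single step $T[e_1/x] \Rrightarrow T[e_2/x]$ merely reduces an embedded term inside a fixed type skeleton $T$, so it preserves every type constructor and commutes with $\mathit{unref}$ up to such term reductions; the property then lifts to the symmetric transitive closure $\equiv$. With this fact the hypothesis on $\mathit{unref}(T)$ propagates to the premise type and the IH closes the case.

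The only genuinely non-routine step is this \T{Conv} case, namely the auxiliary observation that type conversion preserves the outermost type constructor modulo refinements; everything else is bookkeeping once the value-introduction and retyping cases are separated. I expect the auxiliary fact to be short, since $\Rrightarrow$ is defined syntactically through a common skeleton, so structural preservation is immediate---but it is the point requiring care, because conversion may rewrite the embedded predicates and the underlying types of refinements, and one must check that these rewrites are exactly what $\mathit{unref}$ discards.
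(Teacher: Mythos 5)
Your proof is correct and is exactly the standard argument: the paper omits the proof of this lemma entirely (deferring to Greenberg's dissertation and Sekiyama et al., where it is done by the same induction on the typing derivation), and you correctly isolate the only non-routine point, namely that \T{Conv} requires type conversion $\equiv$ to preserve the outermost constructor of $ \mathit{unref}  (  \ottnt{T}  ) $, which holds because a $ \Rrightarrow $ step only reduces a term embedded in a fixed type skeleton.
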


\begin{prop}[name=Progress]{fh-progress}
 If $\emptyset  \vdash  \ottnt{e}  \ottsym{:}  \ottnt{T}$, then:
 \begin{itemize}
  \item $\ottnt{e}  \longrightarrow  \ottnt{e'}$ for some $\ottnt{e'}$;
  \item $\ottnt{e}$ is a value; or
  \item $\ottnt{e}  \ottsym{=}   \mathord{\Uparrow}  \ell $ for some $\ell$.
 \end{itemize}
\end{prop}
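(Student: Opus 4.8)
The plan is to argue by induction on the derivation of $\emptyset  \vdash  \ottnt{e}  \ottsym{:}  \ottnt{T}$, with one case per typing rule. The value-introducing rules---\T{Const}, \T{Abs}, \T{Cast}, \T{Fix}, and \T{TAbs}---immediately put $\ottnt{e}$ in the second alternative, and \T{Blame} puts it in the third. The rule \T{Var} cannot apply, since no variable is derivable in the empty context. For the non-syntax-directed rules \T{Conv}, \T{Forget}, and \T{Exact}, the subject term is unchanged by the rule (and in the latter two it is moreover already a value), so the conclusion is inherited directly from the induction hypothesis on the premise. This disposes of all cases except the elimination/computation forms.

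The elimination forms---\T{App}, \T{Op}, \T{TApp}, \T{WCheck}, and \T{ACheck}---are all handled by the same template. I would apply the induction hypothesis to the leftmost not-yet-evaluated subterm: if it takes a step, I lift the step through the matching evaluation context $\ottnt{E}$ using \E{Red}; if it is blame $ \mathord{\Uparrow}  \ell $, I lift it using \E{Blame} (whose side condition $\ottnt{E}  \mathrel{\neq}  \left[ \, \right]$ is met because the surrounding frame is nonempty); and if it is a value, I proceed to the next subterm. The remaining work is to exhibit a redex once every evaluating subterm is a value. For \T{App}, Canonical Forms (\prop:ref{fh-canonical}) applied to the function value of type $ \mathit{x} \mathord{:} \ottnt{T_{{\mathrm{1}}}} \rightarrow \ottnt{T_{{\mathrm{2}}}} $ yields either a $\lambda$-abstraction, firing \R{Beta}, or a cast, treated below. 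For \T{TApp}, Canonical Forms yields a type abstraction and \R{TBeta} fires. For \T{WCheck} with $\ottnt{e_{{\mathrm{2}}}}$ a value, \R{Check} fires; for \T{ACheck} with $\ottnt{e_{{\mathrm{2}}}}$ a value of type $ \mathsf{Bool} $, Canonical Forms makes it $ \mathsf{true} $ or $ \mathsf{false} $, so \R{OK} or \R{Fail} fires.

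Two subcases require the semantic lemmas proved earlier. In \T{Op}, once all arguments are values they are constants $\ottnt{k_{\ottmv{i}}}$ by Canonical Forms (each argument type has $ \mathit{unref}  (  \ottnt{T_{\ottmv{i}}}  )   \ottsym{=}  \ottnt{B}$); Value Inversion (\prop:ref{fh-val-satis-c}) then shows each $\ottnt{k_{\ottmv{i}}}$ satisfies the refinements in its type, which is exactly the hypothesis of the requirement imposed on $ {\tt op} $, so $[\hspace{-.14em}[ {\tt op} ]\hspace{-.14em}] \, \ottsym{(}  \ottnt{k_{{\mathrm{1}}}}  \ottsym{,} \, ... \, \ottsym{,}  \ottnt{k_{\ottmv{n}}}  \ottsym{)}$ is defined and \R{Op} fires. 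The genuinely delicate part is the cast subcase of \T{App}: I would isolate it as an auxiliary claim that a well-typed closed cast $\langle  \ottnt{T_{{\mathrm{1}}}}  \Rightarrow  \ottnt{T_{{\mathrm{2}}}}  \rangle   ^{ \ell } $ applied to a value $\ottnt{v}$ always reduces. Since \T{Cast} gives $\emptyset  \vdash  \ottnt{T_{{\mathrm{1}}}}$, $\emptyset  \vdash  \ottnt{T_{{\mathrm{2}}}}$, and $\ottnt{T_{{\mathrm{1}}}}  \mathrel{\parallel}  \ottnt{T_{{\mathrm{2}}}}$, both types are closed, so the variable case of compatibility (\C{TVar}) cannot arise. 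I then case split: if $\ottnt{T_{{\mathrm{1}}}}$ is a refinement, \R{Forget} fires; otherwise, if $\ottnt{T_{{\mathrm{2}}}}$ is a refinement, \R{PreCheck} fires (its side condition, that $\ottnt{T_{{\mathrm{1}}}}$ is not a refinement, now holds); otherwise neither is a refinement, and compatibility forces $\ottnt{T_{{\mathrm{1}}}}$ and $\ottnt{T_{{\mathrm{2}}}}$ to share their outermost constructor (\C{Base}, \C{Fun}, or \C{Forall}), firing \R{Base}, \R{Fun}, or \R{Forall} respectively.

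I expect this cast-coverage claim to be the main obstacle: it is where the interplay between type compatibility, closedness, and the exact form of the cast reduction rules must be checked to be exhaustive, and where one must be careful that the ordering of \R{Forget} before \R{PreCheck} (with the latter's non-refinement side condition) indeed leaves no well-typed cast stuck. Everything else reduces to routine application of the induction hypothesis and the evaluation-context rules, so once the auxiliary cast lemma is in hand the remaining cases are mechanical.
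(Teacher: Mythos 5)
The paper omits this proof entirely, deferring to the standard argument in Greenberg's dissertation and Sekiyama et al., and your induction on the typing derivation with Canonical Forms, the evaluation-context lifting, and the auxiliary cast-coverage claim is exactly that standard argument; the case analysis on \R{Forget}/\R{PreCheck}/\R{Base}/\R{Fun}/\R{Forall} is exhaustive for closed compatible types as you say. The only elided step is the bridge in the \T{Op} case between Value Inversion ($\ottnt{k_{\ottmv{i}}} \in [\hspace{-.14em}[ \sigma(\ottnt{T_{\ottmv{i}}}) ]\hspace{-.14em}]$) and the cast-based hypothesis of the requirement on $[\hspace{-.14em}[{\tt op}]\hspace{-.14em}]$, which needs the routine observation that for a value of base-underlying type the cast $\langle \mathit{unref}(\ottnt{T}) \Rightarrow \ottnt{T}\rangle^\ell\,\ottnt{k}$ returns $\ottnt{k}$ exactly when $\ottnt{k}$ satisfies all refinements in $\ottnt{T}$.
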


\begin{prop}[name=Subject Reduction]{fh-subjred}
 If $\emptyset  \vdash  \ottnt{e}  \ottsym{:}  \ottnt{T}$ and $\ottnt{e}  \longrightarrow  \ottnt{e'}$, then $\emptyset  \vdash  \ottnt{e'}  \ottsym{:}  \ottnt{T}$.
\end{prop}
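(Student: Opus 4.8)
The plan is to reduce type preservation to a top-level lemma for the reduction relation $ \rightsquigarrow $ and then lift that lemma through evaluation contexts, using \T{Conv} to absorb the type shifts that dependent substitution introduces. Since $ \longrightarrow $ is generated by only \E{Red} and \E{Blame}, I would first split on which rule derives $\ottnt{e}  \longrightarrow  \ottnt{e'}$. The \E{Blame} case is essentially immediate: there $\ottnt{e}  \ottsym{=}  \ottnt{E}  [   \mathord{\Uparrow}  \ell   ]$ with $\ottnt{E}  \mathrel{\neq}   \left[ \, \right] $ and $\ottnt{e'}  \ottsym{=}   \mathord{\Uparrow}  \ell $, and a standard regularity lemma turns $\emptyset  \vdash  \ottnt{e}  \ottsym{:}  \ottnt{T}$ into $\emptyset  \vdash  \ottnt{T}$, so \T{Blame} retypes $ \mathord{\Uparrow}  \ell $ at $\ottnt{T}$. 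The real work is the \E{Red} case, where $\ottnt{e}  \ottsym{=}  \ottnt{E}  [  \ottnt{e_{{\mathrm{1}}}}  ]$, $\ottnt{e'}  \ottsym{=}  \ottnt{E}  [  \ottnt{e_{{\mathrm{2}}}}  ]$, and $\ottnt{e_{{\mathrm{1}}}}  \rightsquigarrow  \ottnt{e_{{\mathrm{2}}}}$.

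The heart of the argument is a top-level preservation lemma: if $\emptyset  \vdash  \ottnt{e_{{\mathrm{1}}}}  \ottsym{:}  \ottnt{S}$ and $\ottnt{e_{{\mathrm{1}}}}  \rightsquigarrow  \ottnt{e_{{\mathrm{2}}}}$, then $\emptyset  \vdash  \ottnt{e_{{\mathrm{2}}}}  \ottsym{:}  \ottnt{S}$, which I would prove by cases on the reduction rule. \R{Op} uses the assumed return-value requirement on $[\hspace{-.14em}[ {\tt op} ]\hspace{-.14em}]$ together with Canonical Forms (\prop:ref{fh-canonical}) and Value Inversion (\prop:ref{fh-val-satis-c}); \R{Beta} and \R{TBeta} are the substitution lemmas (\prop:ref{fh-subst-term} and \prop:ref{fh-subst-type}) after inverting the abstraction typing; \R{Base} is trivial. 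The cast rules are more delicate: \R{Fun} and \R{Forall} must rebuild the function/type-abstraction wrapper and show it well typed via \T{Cast}, \T{Abs}/\T{TAbs}, and \T{App}, where the compatibility premises of the new casts come from inverting the original $\ottnt{T_{{\mathrm{1}}}}  \mathrel{\parallel}  \ottnt{T_{{\mathrm{2}}}}$ through \C{Fun}/\C{Forall}, and the $\mathit{y}$-renaming in \R{Fun} is handled by the substitution lemma. \R{Forget} and \R{PreCheck} rely on inverting compatibility through \C{RefineL}/\C{RefineR} and on \T{WCheck}; \R{Check}, \R{OK}, and \R{Fail} use \T{ACheck}, \T{Exact} (the successful check gives $\ottnt{e} \, [  \ottnt{v}  \ottsym{/}  \mathit{x}  ]  \longrightarrow^{\ast}   \mathsf{true} $, exactly the premise of \T{Exact}), and \T{Blame}, respectively.

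To lift the lemma through $\ottnt{E}$ I would prove a context-replacement statement by induction on $\ottnt{E}$: from $\emptyset  \vdash  \ottnt{E}  [  \ottnt{e_{{\mathrm{1}}}}  ]  \ottsym{:}  \ottnt{T}$ obtain a type $\ottnt{S}$ with $\emptyset  \vdash  \ottnt{e_{{\mathrm{1}}}}  \ottsym{:}  \ottnt{S}$, and show that any closed $\ottnt{e_{{\mathrm{2}}}}$ with $\emptyset  \vdash  \ottnt{e_{{\mathrm{2}}}}  \ottsym{:}  \ottnt{S}$ and $\ottnt{e_{{\mathrm{1}}}}  \longrightarrow  \ottnt{e_{{\mathrm{2}}}}$ yields $\emptyset  \vdash  \ottnt{E}  [  \ottnt{e_{{\mathrm{2}}}}  ]  \ottsym{:}  \ottnt{T}$. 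In the non-dependent frames the outer type is literally unchanged, but in the dependent positions---the argument of an application, the $\ottmv{i}$-th argument of an ${\tt op}$, and the body of a waiting check---the reconstructed type is $\ottnt{T} \, [  \ottnt{e_{{\mathrm{2}}}}  \ottsym{/}  \cdots  ]$ rather than the required $\ottnt{T} \, [  \ottnt{e_{{\mathrm{1}}}}  \ottsym{/}  \cdots  ]$. Here the hypothesis $\ottnt{e_{{\mathrm{1}}}}  \longrightarrow  \ottnt{e_{{\mathrm{2}}}}$ (which holds because $ \rightsquigarrow  \subseteq  \longrightarrow $) gives the one-step conversion $ \Rrightarrow $, hence $\ottnt{T} \, [  \ottnt{e_{{\mathrm{1}}}}  \ottsym{/}  \cdots  ]  \equiv  \ottnt{T} \, [  \ottnt{e_{{\mathrm{2}}}}  \ottsym{/}  \cdots  ]$, and \T{Conv} closes the gap (its well-formedness side condition on the target type being discharged by the regularity lemma). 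Combining this with the top-level lemma completes the \E{Red} case.

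I expect the main obstacle to be the inversion/generation lemmas underlying both the decomposition step and the cast cases, because the type system is not syntax-directed: \T{Conv}, \T{Forget}, and \T{Exact} may be applied to any term and only rewrite its type. Threading these rules through the induction on $\ottnt{E}$ (to peel them off, apply the inductive hypothesis to the syntax-directed core, and reinstate them up to conversion) and through the case analysis on $ \rightsquigarrow $ (to recover the genuine structure of casts, abstractions, and checks despite intervening retypings) is the delicate part; the \R{Fun} case, with its contravariant/covariant wrapper and variable renaming, is the most calculation-heavy even once the inversions are in place.
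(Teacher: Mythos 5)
Your plan is correct and is essentially the argument the paper intends: the paper itself omits the proof of subject reduction (deferring to Greenberg's dissertation and Sekiyama et al.), but the auxiliary lemmas it does state (weakening, substitution, canonical forms, value inversion, cotermination) are exactly the ingredients you invoke, and your use of \T{Conv} to absorb the shift $\ottnt{T} \, [  \ottnt{e_{{\mathrm{1}}}}  \ottsym{/}  \mathit{x}  ]  \equiv  \ottnt{T} \, [  \ottnt{e_{{\mathrm{2}}}}  \ottsym{/}  \mathit{x}  ]$ in dependent frames is precisely the role the paper assigns to that rule when motivating it. The only (harmless) nit is that the waiting-check frame is not actually a dependent position---its type $ \{  \mathit{x}  \mathord{:}  \ottnt{T_{{\mathrm{1}}}}   \mathop{\mid}   \ottnt{e_{{\mathrm{1}}}}  \} $ does not mention the checked subterm, so no conversion is needed there, while the active-check frame instead needs you to extend the $\ottnt{e_{{\mathrm{1}}}} \, [  \ottnt{v}  \ottsym{/}  \mathit{x}  ]  \longrightarrow^{\ast}  \ottnt{e_{{\mathrm{2}}}}$ premise of \T{ACheck} by one step.
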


\begin{prop}[type=thm,name=Type Soundness]{fh-type-sound}
 If $\emptyset  \vdash  \ottnt{e}  \ottsym{:}  \ottnt{T}$, then one of the followings holds.
 \begin{itemize}
  \item $\ottnt{e}$ diverges;
  \item $\ottnt{e}  \longrightarrow^{\ast}  \ottnt{v}$ for some $\ottnt{v}$
        such that $\emptyset  \vdash  \ottnt{v}  \ottsym{:}  \ottnt{T}$ and $ \ottnt{v}  \in [\hspace{-.14em}[  \ottnt{T}  ]\hspace{-.14em}] $; or
  \item $\ottnt{e}  \longrightarrow^{\ast}   \mathord{\Uparrow}  \ell $ for some $\ell$.
 \end{itemize}

 \proof

 By the progress, the subject reduction, and the value inversion.
\end{prop}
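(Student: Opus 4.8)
The plan is to follow the standard syntactic recipe of Wright and Felleisen: combine \prop:ref{fh-progress} (progress) and \prop:ref{fh-subjred} (subject reduction) to classify the possible outcomes of evaluating a well-typed closed term, and then invoke \prop:ref{fh-val-satis-c} (value inversion) to obtain the refinement-satisfaction guarantee $v \in [\hspace{-.14em}[ T ]\hspace{-.14em}]$ in the case where evaluation terminates at a value. Conceptually the hard work has already been discharged upstream: subject reduction rests on cotermination (\prop:ref{fh-coterm-true}) through the conversion rule \T{Conv}, and value inversion packages the refinement content of a type. The theorem itself is then essentially an assembly step.

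Concretely, I would first lift subject reduction from single steps to multi-step evaluation: a routine induction on the length of a derivation $e \longrightarrow^{\ast} e'$ shows that $\emptyset \vdash e : T$ and $e \longrightarrow^{\ast} e'$ imply $\emptyset \vdash e' : T$. Next I would case split on whether the evaluation of $e$ is infinite or finite. If there is an infinite sequence $e \longrightarrow e_{1} \longrightarrow e_{2} \longrightarrow \cdots$, then $e$ diverges and the first disjunct holds. Determinism (\prop:ref{fh-eval-determinism}) makes this dichotomy clean, since the reduction sequence out of $e$ is unique, though the argument does not strictly require it.

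Otherwise the sequence is finite, ending in some $e'$ with $e \longrightarrow^{\ast} e'$ and $e' \not\longrightarrow$. By the multi-step version of subject reduction, $\emptyset \vdash e' : T$. Applying progress to $e'$, which is closed and well typed, forces $e'$ to be either a value $v$ or blame $\Uparrow\ell$, since the ``can step'' alternative is excluded by $e' \not\longrightarrow$. In the blame case the third disjunct holds immediately. In the value case we already have $\emptyset \vdash v : T$ from subject reduction, and value inversion (\prop:ref{fh-val-satis-c}) yields $v \in [\hspace{-.14em}[ T ]\hspace{-.14em}]$, establishing the second disjunct.

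I do not expect a genuine obstacle in this theorem: every ingredient is a previously established lemma, and the only points requiring care are stating the divergence/termination dichotomy precisely and iterating subject reduction over $\longrightarrow^{\ast}$. The substantive difficulties of the metatheory live in the supporting results—in particular cotermination, which is what licenses \T{Conv} and hence subject reduction under ordinary (rather than delayed) substitution, and value inversion—both of which may be assumed here.
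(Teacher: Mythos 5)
Your proposal matches the paper's proof exactly: the paper also derives the theorem directly from progress, subject reduction, and value inversion, with the same assembly of the divergence/value/blame trichotomy. The additional details you spell out (iterating subject reduction over $\longrightarrow^{\ast}$, excluding the step case by progress) are the standard routine steps the paper leaves implicit.
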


\section{Semityped Contextual Equivalence}
\label{sec:ctxeq}

We introduce semityped contextual equivalence to formalize the upcast elimination property.
It relates terms $\ottnt{e_{{\mathrm{1}}}}$ and $\ottnt{e_{{\mathrm{2}}}}$ such that (1) they are contextually
equivalent, that is, behave equivalently under any well-typed program context,
and (2) $\ottnt{e_{{\mathrm{1}}}}$ is well-typed.
Semityped contextual equivalence does not enforce any condition on the
type of $\ottnt{e_{{\mathrm{2}}}}$,\footnote{In fact, it does not even require it to be
  well typed.}  so it can even relate terms having different types
such as an upcast and an identity function.

\begin{fhfigure*}[t]
\[\begin{array}{rcl}
  \ottnt{C} &::=&
    \left[ \, \right] _{ \ottmv{i} }  \mid \ottnt{k} \mid   \lambda    \mathit{x}  \mathord{:}  \ottnt{T}^\ottnt{C}  .  \ottnt{C}  \mid \langle  \ottnt{T}^\ottnt{C}_{{\mathrm{1}}}  \Rightarrow  \ottnt{T}^\ottnt{C}_{{\mathrm{2}}}  \rangle   ^{ \ell }  \mid
    \Lambda\!  \, \alpha  .~  \ottnt{C} \mid \mathit{x} \mid {\tt op} \, \ottsym{(}  \ottnt{C_{{\mathrm{1}}}}  \ottsym{,} \, ... \, \ottsym{,}  \ottnt{C_{\ottmv{n}}}  \ottsym{)} \mid \ottnt{C_{{\mathrm{1}}}} \, \ottnt{C_{{\mathrm{2}}}} \mid \ottnt{C_{{\mathrm{1}}}} \, \ottnt{T}^\ottnt{C}_{{\mathrm{2}}} \mid
   \\ &&
    \mathord{\Uparrow}  \ell  \mid  \langle\!\langle  \,  \{  \mathit{x}  \mathord{:}  \ottnt{T}^\ottnt{C}_{{\mathrm{1}}}   \mathop{\mid}   \ottnt{C_{{\mathrm{1}}}}  \}   \ottsym{,}  \ottnt{C_{{\mathrm{2}}}} \,  \rangle\!\rangle  \,  ^{ \ell }  \mid \langle   \{  \mathit{x}  \mathord{:}  \ottnt{T}^\ottnt{C}_{{\mathrm{1}}}   \mathop{\mid}   \ottnt{C_{{\mathrm{1}}}}  \}   \ottsym{,}  \ottnt{C_{{\mathrm{2}}}}  \ottsym{,}  \ottnt{V}^\ottnt{C}  \rangle   ^{ \ell } 
   \\[1ex]
  \ottnt{V}^\ottnt{C} &::=&
   \ottnt{k} \mid   \lambda    \mathit{x}  \mathord{:}  \ottnt{T}^\ottnt{C}  .  \ottnt{C}  \mid \langle  \ottnt{T}^\ottnt{C}_{{\mathrm{1}}}  \Rightarrow  \ottnt{T}^\ottnt{C}_{{\mathrm{2}}}  \rangle   ^{ \ell }  \mid  \Lambda\!  \, \alpha  .~  \ottnt{C} \\
  \ottnt{T}^\ottnt{C} &::=&
   \ottnt{B} \mid \alpha \mid  \mathit{x} \mathord{:} \ottnt{T}^\ottnt{C}_{{\mathrm{1}}} \rightarrow \ottnt{T}^\ottnt{C}_{{\mathrm{2}}}  \mid  \forall   \alpha  .  \ottnt{T}^\ottnt{C}  \mid  \{  \mathit{x}  \mathord{:}  \ottnt{T}^\ottnt{C}   \mathop{\mid}   \ottnt{C}  \} 
   \\[1ex]
\end{array}\]
 \caption{Syntax of contexts and type contexts.}
 \label{fig:ctx}
\end{fhfigure*}
Figure~\ref{fig:ctx} shows the syntax of multi-hole program contexts $\ottnt{C}$, value
contexts $\ottnt{V}^\ottnt{C}$, and type contexts $\ottnt{T}^\ottnt{C}$.
Contexts have zero or more holes $ \left[ \, \right] _{ \ottmv{i} } $ indexed by positive numbers $\ottmv{i}$, and
the same hole $ \left[ \, \right] _{ \ottmv{i} } $ can occur in a context an arbitrary number of times.
Thus, any term, value, and type are contexts without holes.
Replacement of the holes in program contexts, value contexts, and type contexts
with terms produces terms, values, and types, respectively.
For any program context $\ottnt{C}$ where indices of the holes range over
$\ottsym{1}$ through $\mathit{n}$ and any terms $\ottnt{e_{{\mathrm{1}}}},...,\ottnt{e_{\ottmv{n}}}$, we write
$\ottnt{C}  [  \ottnt{e_{{\mathrm{1}}}}  ,\, ... \, ,  \ottnt{e_{\ottmv{n}}}  ]$, or $\ottnt{C}  [   \overline{ \ottnt{e_{\ottmv{i}}} }^{ \ottmv{i} }   ]$ simply if $\mathit{n}$ is clear from the
context or not important, to denote a term obtained by replacing each hole
$ \left[ \, \right] _{ \ottmv{i} } $ with term $\ottnt{e_{\ottmv{i}}}$.
In particular, $\ottnt{e}  [   \overline{ \ottnt{e_{\ottmv{i}}} }^{ \ottmv{i} }   ]  \ottsym{=}  \ottnt{e}$ because there are zero holes in $\ottnt{e}$.
We use similar notation for value and type contexts.
%
% In what follows, we use overline notation to mean a sequence of things and
% often omit its range---for example, $ \overline{ \Gamma_{\ottmv{i}}  \vdash  \ottnt{e_{\ottmv{i}}}  \ottsym{:}  \ottnt{T_{\ottmv{i}}} }^{ \ottmv{i} } $ denotes the sequence
% of $\Gamma_{{\mathrm{1}}}  \vdash  \ottnt{e_{{\mathrm{1}}}}  \ottsym{:}  \ottnt{T_{{\mathrm{1}}}},...,\Gamma_{\ottmv{n}}  \vdash  \ottnt{e_{\ottmv{n}}}  \ottsym{:}  \ottnt{T_{\ottmv{n}}}$ for some $\mathit{n}$.
% %
% Note that contexts do not include construction for run-time terms.
%

Contexts having \emph{multiple} holes is crucial in semityped contextual
equivalence.
If we restrict contexts to have a single hole, replacements of terms with
contextually-equivalent ones would be performed one by one.
However, a replacement with an ill-typed term produces an ill-typed program, and
then, since semityped contextual equivalence requires terms on one side to be
well typed, the results of the remaining replacements could not be guaranteed to be
contextually equivalent to the original program.
For example, the replacement of term $\ottnt{e_{{\mathrm{1}}}}$ in $\ottnt{C}  [  \ottnt{e_{{\mathrm{1}}}}  \ottsym{,}  \ottnt{e_{{\mathrm{2}}}}  ]$ with ill-typed
term $\ottnt{e'_{{\mathrm{1}}}}$ produces an ill-typed program $\ottnt{C}  [  \ottnt{e'_{{\mathrm{1}}}}  \ottsym{,}  \ottnt{e_{{\mathrm{2}}}}  ]$.
In this case, even if there is an ill-typed term $\ottnt{e'_{{\mathrm{2}}}}$ contextually
equivalent to $\ottnt{e_{{\mathrm{2}}}}$, semityped contextual equivalence cannot contain
$\ottnt{C}  [  \ottnt{e'_{{\mathrm{1}}}}  \ottsym{,}  \ottnt{e_{{\mathrm{2}}}}  ]$ and $\ottnt{C}  [  \ottnt{e'_{{\mathrm{1}}}}  \ottsym{,}  \ottnt{e'_{{\mathrm{2}}}}  ]$ because both are ill typed.
The same issue arises even if we first replace $\ottnt{e_{{\mathrm{2}}}}$ and then $\ottnt{e_{{\mathrm{1}}}}$.
As a result, we could not show that $\ottnt{C}  [  \ottnt{e_{{\mathrm{1}}}}  \ottsym{,}  \ottnt{e_{{\mathrm{2}}}}  ]$ and $\ottnt{C}  [  \ottnt{e'_{{\mathrm{1}}}}  \ottsym{,}  \ottnt{e'_{{\mathrm{2}}}}  ]$ are
contextually equivalent.
This is problematic also in the upcast elimination, especially when programs
have multiple upcasts.
We address this issue by contexts with multiple holes, which
allow simultaneous replacements.
In the example above, if $\ottnt{e_{{\mathrm{1}}}}$ and $\ottnt{e_{{\mathrm{2}}}}$ are shown to be contextually
equivalent to $\ottnt{e'_{{\mathrm{1}}}}$ and $\ottnt{e'_{{\mathrm{2}}}}$ respectively, we can relate $\ottnt{C}  [  \ottnt{e_{{\mathrm{1}}}}  \ottsym{,}  \ottnt{e_{{\mathrm{2}}}}  ]$
to $\ottnt{C}  [  \ottnt{e'_{{\mathrm{1}}}}  \ottsym{,}  \ottnt{e'_{{\mathrm{2}}}}  ]$ directly, not via $\ottnt{C}  [  \ottnt{e'_{{\mathrm{1}}}}  \ottsym{,}  \ottnt{e_{{\mathrm{2}}}}  ]$ nor $\ottnt{C}  [  \ottnt{e_{{\mathrm{1}}}}  \ottsym{,}  \ottnt{e'_{{\mathrm{2}}}}  ]$.

The semityped contextual equivalence considers three kinds of observable
results, that is, termination, blame, and being stuck---the last has to be
considered because semityped contextual equivalence contains possibly ill-typed
terms.
We write
$ \ottnt{e}  \downarrow $ if $\ottnt{e}  \longrightarrow^{\ast}  \ottnt{v}$ for some $\ottnt{v}$,
$ \ottnt{e}  \, \mathord{\Uparrow}  \ell $ if $\ottnt{e}  \longrightarrow^{\ast}   \mathord{\Uparrow}  \ell $, and
$ \ottnt{e}  \upharpoonleft $ if $\ottnt{e}  \longrightarrow^{\ast}  \ottnt{e'}$ for some $\ottnt{e'}$
such that $\ottnt{e'}$ cannot evaluate and it is neither a value nor blame.
\begin{defi}[Observable Equivalence]
 We write $\ottnt{e_{{\mathrm{1}}}}  \Downarrow  \ottnt{e_{{\mathrm{2}}}}$ if
 (1) $ \ottnt{e_{{\mathrm{1}}}}  \downarrow $ iff $ \ottnt{e_{{\mathrm{2}}}}  \downarrow $,
 (2) $ \ottnt{e_{{\mathrm{1}}}}  \, \mathord{\Uparrow}  \ell $ iff $ \ottnt{e_{{\mathrm{2}}}}  \, \mathord{\Uparrow}  \ell $ for any $\ell$, and
 (3) $ \ottnt{e_{{\mathrm{1}}}}  \upharpoonleft $ iff $ \ottnt{e_{{\mathrm{2}}}}  \upharpoonleft $.
\end{defi}

Now, we could define semityped contextual equivalence as follows.
\begin{quotation}
 Terms $\ottnt{e_{{\mathrm{11}}}},..., { \ottnt{e} }_{  \ottsym{1} \mathit{n}  } $ and $\ottnt{e_{{\mathrm{21}}}},..., { \ottnt{e} }_{  \ottsym{2} \mathit{n}  } $ are contextually
 equivalent at $\ottnt{T_{{\mathrm{1}}}},...,\ottnt{T_{\ottmv{n}}}$ under $\Gamma_{{\mathrm{1}}},...,\Gamma_{\ottmv{n}}$, respectively,
 when
 (1) for any $\ottmv{i}$,
 $\Gamma_{\ottmv{i}}  \vdash   { \ottnt{e} }_{  \ottsym{1} \ottmv{i}  }   \ottsym{:}  \ottnt{T_{\ottmv{i}}}$ and $ \mathit{FV}  (   { \ottnt{e} }_{  \ottsym{2} \ottmv{i}  }   )   \mathrel{\cup}   \mathit{FTV}  (   { \ottnt{e} }_{  \ottsym{2} \ottmv{i}  }   )   \subseteq   \mathit{dom}  (  \Gamma_{\ottmv{i}}  ) $, and
 (2) for any $\ottnt{T}$ and $\ottnt{C}$,
 if $\emptyset  \vdash  \ottnt{C}  [  \ottnt{e_{{\mathrm{11}}}}  ,\, ... \, ,   { \ottnt{e} }_{  \ottsym{1} \mathit{n}  }   ]  \ottsym{:}  \ottnt{T}$,
 then $\ottnt{C}  [  \ottnt{e_{{\mathrm{11}}}}  ,\, ... \, ,   { \ottnt{e} }_{  \ottsym{1} \mathit{n}  }   ]  \Downarrow  \ottnt{C}  [  \ottnt{e_{{\mathrm{21}}}}  ,\, ... \, ,   { \ottnt{e} }_{  \ottsym{2} \mathit{n}  }   ]$.
\end{quotation}
Thanks to program contexts with multiple holes, we can replace two or more
well-typed terms with possibly ill-typed, contextually equivalent terms at the same time.

\begin{fhfigure*}[t]
 \begin{flushleft}
  \noindent  \framebox{$\Gamma  \vdash  \ottnt{C}  \ottsym{:}   \overline{ \Gamma_{\ottmv{i}}  \vdash  \ottnt{e_{\ottmv{i}}}  \ottsym{:}  \ottnt{T_{\ottmv{i}}} }^{ \ottmv{i} }   \mathrel{\circ\hspace{-.4em}\rightarrow}  \ottnt{T}$} \quad
 {\bf{Context Typing Rules}}
 \end{flushleft}
 \begin{center}
  $\ottdruleCTXXHole{}$ \hfil
  $\ottdruleCTXXVar{}$ \\[1.5ex]
  $\ottdruleCTXXConst{}$ \\[1.5ex]
  % $\ottdruleCTXXOp{}$ \\[1ex]
  % T_Op
  ${\ottdrule[]{%
    \ottpremise{
      \mathord{ \vdash } ~  \Gamma  \quad
      \mathsf{ty}  (  {\tt op}  )   \ottsym{=}  {}  \mathit{x_{{\mathrm{1}}}}  \ottsym{:}  \ottnt{T'_{{\mathrm{1}}}}  \rightarrow \, ... \, \rightarrow  \mathit{x_{\ottmv{n}}}  \ottsym{:}  \ottnt{T'_{\ottmv{n}}}  {}  \rightarrow  \ottnt{T}
    }
    \ottpremise{
     \forall  \ottmv{j} \, \in \,  \{  \, \ottsym{1}  ,\, ... \, ,  \mathit{n} \,  \}   .~  \Gamma  \vdash  \ottnt{C_{\ottmv{j}}}  \ottsym{:}   \overline{ \Gamma_{\ottmv{i}}  \vdash  \ottnt{e_{\ottmv{i}}}  \ottsym{:}  \ottnt{T_{\ottmv{i}}} }^{ \ottmv{i} }   \mathrel{\circ\hspace{-.4em}\rightarrow}  \ottnt{T'_{\ottmv{j}}} \, [  \ottnt{C_{{\mathrm{1}}}}  [   \overline{ \ottnt{e_{\ottmv{i}}} }^{ \ottmv{i} }   ]  \ottsym{/}  \mathit{x_{{\mathrm{1}}}}  ,\, ... \, ,   { \ottnt{C} }_{ \ottmv{j}  \ottsym{-}  \ottsym{1} }   [   \overline{ \ottnt{e_{\ottmv{i}}} }^{ \ottmv{i} }   ]  \ottsym{/}   \mathit{x} _{ \ottmv{j}  \ottsym{-}  \ottsym{1} }   ]
    }}%
   {\Gamma  \vdash  {\tt op} \, \ottsym{(}  \ottnt{C_{{\mathrm{1}}}}  \ottsym{,} \, ... \, \ottsym{,}  \ottnt{C_{\ottmv{n}}}  \ottsym{)}  \ottsym{:}   \overline{ \Gamma_{\ottmv{i}}  \vdash  \ottnt{e_{\ottmv{i}}}  \ottsym{:}  \ottnt{T_{\ottmv{i}}} }^{ \ottmv{i} }   \mathrel{\circ\hspace{-.4em}\rightarrow}  \ottnt{T} \, [  \ottnt{C_{{\mathrm{1}}}}  [   \overline{ \ottnt{e_{\ottmv{i}}} }^{ \ottmv{i} }   ]  \ottsym{/}  \mathit{x_{{\mathrm{1}}}}  ,\, ... \, ,  \ottnt{C_{\ottmv{n}}}  [   \overline{ \ottnt{e_{\ottmv{i}}} }^{ \ottmv{i} }   ]  \ottsym{/}  \mathit{x_{\ottmv{n}}}  ]}
  {{\ottdrulename{CT\_Op}}{}}}$ \\[1.5ex]
  $\ottdruleCTXXAbs{}$ \\[1.5ex]
  $\ottdruleCTXXCast{}$ \\[1.5ex]
  $\ottdruleCTXXApp{}$ \\[1.5ex]
  {\ifdtabs
  $\ottdruleCTXXTAbsOne{}$ \\[1.5ex]
  $\ottdruleCTXXTAbsTwo{}$ \\[1.5ex]
  \else
  $\ottdruleCTXXTAbs{}$ \\[1.5ex]
  \fi}
  $\ottdruleCTXXTApp{}$ \\[1.5ex]
  $\ottdruleCTXXConv{}$ \\[1.5ex]
  $\ottdruleCTXXWCheck{}$ \\[1.5ex]
  $\ottdruleCTXXACheck{}$ \\[1.5ex]
  $\ottdruleCTXXBlame{}$ \hfil
  $\ottdruleCTXXForget{}$ \\[1.5ex]
  $\ottdruleCTXXExact{}$
 \end{center}

 \caption{Program context well-formedness rules.}
 \label{fig:ctx-rules}
\end{fhfigure*}
\begin{fhfigure*}
 \begin{flushleft}
  \noindent  \framebox{$ \Gamma   \vdash   \ottnt{T}^\ottnt{C}   \ottsym{:}    \overline{ \Gamma_{\ottmv{i}}  \vdash  \ottnt{e_{\ottmv{i}}}  \ottsym{:}  \ottnt{T_{\ottmv{i}}} }^{ \ottmv{i} }   \mathrel{\circ\hspace{-.4em}\rightarrow} \ast $} \quad
  {\bf{Type Context Well-Formedness Rules}}
 \end{flushleft}
 \begin{center}
  $\ottdruleCWXXBase{}$ \hfil
  $\ottdruleCWXXTVar{}$ \\[1.5ex]
  $\ottdruleCWXXFun{}$ \\[1.5ex]
  $\ottdruleCWXXForall{}$ \\[1.5ex]
  $\ottdruleCWXXRefine{}$ \\[1.5ex]
 \end{center}

 \caption{Type context well-formedness rules.}
 \label{fig:ctx-typ-rules}
\end{fhfigure*}
The semityped contextual equivalence defined in this way is well defined as it is
but we find it more convenient to consider contexts as typed objects to discuss
composition of contexts and terms rigorously.
To this end, we introduce judgments for
program context well-formedness $\Gamma  \vdash  \ottnt{C}  \ottsym{:}   \overline{ \Gamma_{\ottmv{i}}  \vdash  \ottnt{e_{\ottmv{i}}}  \ottsym{:}  \ottnt{T_{\ottmv{i}}} }^{ \ottmv{i} }   \mathrel{\circ\hspace{-.4em}\rightarrow}  \ottnt{T}$ and type
context well-formedness $ \Gamma   \vdash   \ottnt{T}^\ottnt{C}   \ottsym{:}    \overline{ \Gamma_{\ottmv{i}}  \vdash  \ottnt{e_{\ottmv{i}}}  \ottsym{:}  \ottnt{T_{\ottmv{i}}} }^{ \ottmv{i} }   \mathrel{\circ\hspace{-.4em}\rightarrow} \ast $, which mean that, if $\ottnt{e_{\ottmv{i}}}$
is typed at $\ottnt{T_{\ottmv{i}}}$ under $\Gamma_{\ottmv{i}}$ for any $\ottmv{i}$, $\ottnt{C}  [   \overline{ \ottnt{e_{\ottmv{i}}} }^{ \ottmv{i} }   ]$ and
$\ottnt{T}^\ottnt{C}  [   \overline{ \ottnt{e_{\ottmv{i}}} }^{ \ottmv{i} }   ]$ are a well-typed term of $\ottnt{T}$ under $\Gamma$ and a
well-formed type under $\Gamma$, respectively.\footnote{Since value contexts are
a subset of program contexts, value context well-formedness is given by program
context well-formedness.}
These well-formedness judgments need information on terms $ \overline{ \Gamma_{\ottmv{i}}  \vdash  \ottnt{e_{\ottmv{i}}}  \ottsym{:}  \ottnt{T_{\ottmv{i}}} }^{ \ottmv{i} } $ with which holes are replaced as well as typing
context and type information because whether composition of a context with
terms produces a well-typed term rests on the composed terms.
For example, let us consider $\ottnt{C}  \ottsym{=}  \langle   \{  \mathit{x}  \mathord{:}   \mathsf{Int}    \mathop{\mid}    \mathit{x}  \mathrel{>} \ottsym{0}   \}   \Rightarrow   \mathsf{Int}   \rangle   ^{ \ell }  \, \ottsym{(}  \mathit{f} \,  \left[ \, \right] _{ \ottsym{1} }   \ottsym{)}$ where $\mathit{f}$
is typed at $ \mathit{y} \mathord{:}  \mathsf{Int}  \rightarrow  \{  \mathit{x}  \mathord{:}   \mathsf{Int}    \mathop{\mid}    \mathit{x}  \mathrel{>} \mathit{y}   \}  $.
$\ottnt{C}  [  {}  \ottsym{0}  {}  ]$ is well typed because the type of $\mathit{f} \, \ottsym{0}$ matches with the
source type of the cast, while $\ottnt{C}  [  {}  \ottsym{2}  {}  ]$ is not because the type of
$\mathit{f} \, \ottsym{2}$ is $ \{  \mathit{x}  \mathord{:}   \mathsf{Int}    \mathop{\mid}    \mathit{x}  \mathrel{>} \ottsym{2}   \} $, which is different from $ \{  \mathit{x}  \mathord{:}   \mathsf{Int}    \mathop{\mid}    \mathit{x}  \mathrel{>} \ottsym{0}   \} $.
If no type information in $\ottnt{C}$ depends on holes, the derivation of
$\Gamma  \vdash  \ottnt{C}  \ottsym{:}   \overline{ \Gamma_{\ottmv{i}}  \vdash  \ottnt{e_{\ottmv{i}}}  \ottsym{:}  \ottnt{T_{\ottmv{i}}} }^{ \ottmv{i} }   \mathrel{\circ\hspace{-.4em}\rightarrow}  \ottnt{T}$ refers only to $ \overline{ \Gamma_{\ottmv{i}} }^{ \ottmv{i} } $ and
$ \overline{ \ottnt{T_{\ottmv{i}}} }^{ \ottmv{i} } $, not any of $ \overline{ \ottnt{e_{\ottmv{i}}} }^{ \ottmv{i} } $.
Inference rules for the judgments are shown in Figures~\ref{fig:ctx-rules} and
\ref{fig:ctx-typ-rules}; they correspond to term typing and type well-formedness
rules given in \sect{lang-type-system}.

We show a few properties of well-typed contexts: (1) composition of a
well-formed context with well-typed terms produces a well-typed term, (2) free
variables and free type variables are preserved by the composition, and (3)
well-typed terms, well-typed values, well-formed types are well-formed program
contexts, value contexts, and type contexts, respectively.
\begin{prop}{fh-ctxeq-typed}
 Suppose $\Gamma_{{\mathrm{1}}}  \vdash  \ottnt{e_{{\mathrm{1}}}}  \ottsym{:}  \ottnt{T_{{\mathrm{1}}}}$, \ldots,
  $\Gamma_{\ottmv{n}}  \vdash  \ottnt{e_{\ottmv{n}}}  \ottsym{:}  \ottnt{T_{\ottmv{n}}}$.
 \begin{statements}
  \item(term) If $\Gamma  \vdash  \ottnt{C}  \ottsym{:}   \overline{ \Gamma_{\ottmv{i}}  \vdash  \ottnt{e_{\ottmv{i}}}  \ottsym{:}  \ottnt{T_{\ottmv{i}}} }^{ \ottmv{i} }   \mathrel{\circ\hspace{-.4em}\rightarrow}  \ottnt{T}$,
   then $\Gamma  \vdash  \ottnt{C}  [   \overline{ \ottnt{e_{\ottmv{i}}} }^{ \ottmv{i} }   ]  \ottsym{:}  \ottnt{T}$.
  \item(type) If $ \Gamma   \vdash   \ottnt{T}^\ottnt{C}   \ottsym{:}    \overline{ \Gamma_{\ottmv{i}}  \vdash  \ottnt{e_{\ottmv{i}}}  \ottsym{:}  \ottnt{T_{\ottmv{i}}} }^{ \ottmv{i} }   \mathrel{\circ\hspace{-.4em}\rightarrow} \ast $,
   then $\Gamma  \vdash  \ottnt{T}^\ottnt{C}  [   \overline{ \ottnt{e_{\ottmv{i}}} }^{ \ottmv{i} }   ]$.
 \end{statements}

 \proof

 By induction on the derivations of
 $\Gamma  \vdash  \ottnt{C}  \ottsym{:}   \overline{ \Gamma_{\ottmv{i}}  \vdash  \ottnt{e_{\ottmv{i}}}  \ottsym{:}  \ottnt{T_{\ottmv{i}}} }^{ \ottmv{i} }   \mathrel{\circ\hspace{-.4em}\rightarrow}  \ottnt{T}$ and
 $ \Gamma   \vdash   \ottnt{T}^\ottnt{C}   \ottsym{:}    \overline{ \Gamma_{\ottmv{i}}  \vdash  \ottnt{e_{\ottmv{i}}}  \ottsym{:}  \ottnt{T_{\ottmv{i}}} }^{ \ottmv{i} }   \mathrel{\circ\hspace{-.4em}\rightarrow} \ast $.
\end{prop}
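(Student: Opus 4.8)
The plan is to prove both statements simultaneously by mutual induction on the derivations of $\Gamma \vdash C : \overline{\Gamma_i \vdash e_i : T_i}^i \mathrel{\circ\hspace{-.4em}\rightarrow} T$ and $\Gamma \vdash T^C : \overline{\Gamma_i \vdash e_i : T_i}^i \mathrel{\circ\hspace{-.4em}\rightarrow} \ast$. The mutual form is forced because the context well-formedness rules interleave the two judgments: \CT{Cast} has type-context premises, while \Rule{CW}{Fun} and \Rule{CW}{Refine} have program-context premises whose target type is $\mathsf{Bool}$. The overarching observation is that each context rule in Figures~\ref{fig:ctx-rules} and~\ref{fig:ctx-typ-rules} is the literal image, under hole-filling, of a term-typing or type-well-formedness rule from \sect{lang-type-system}, and that filling $[\overline{e_i}^i]$ commutes with every term and type constructor by the definition of substitution into contexts (e.g.\ $(\lambda x{:}T^C.\,C)[\overline{e_i}^i] = \lambda x{:}T^C[\overline{e_i}^i].\,C[\overline{e_i}^i]$). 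So in each case I would invoke the induction hypotheses on the immediate subcontexts and then reassemble with the matching typing rule.

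For the leaf cases the result is immediate: \CT{Hole} fills $ \left[ \, \right] _{ \ottmv{j} } $ to $e_j$, which is typed at $T_j$ under $\Gamma_j$ by the hypothesis $\Gamma_j \vdash e_j : T_j$; \CT{Var}, \CT{Const}, and \CT{Blame} fill to $x$, $k$, $\mathord{\Uparrow}\ell$ and are discharged by \T{Var}, \T{Const}, \T{Blame} using the same well-formedness side conditions. In the structural and binding cases I would apply the induction hypothesis to each subcontext and then the eponymous typing rule: \CT{Abs} with \T{Abs}, \CT{TApp} with \T{TApp}, \CT{WCheck} with \T{WCheck}, and so on. Binding cases pose no difficulty because the binder annotation in the filled term is exactly $T^C_1[\overline{e_i}^i]$, which is also the type added to $\Gamma$ in the context rule's premise, so the extended contexts match. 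The \CT{Cast} case uses the (type) induction hypothesis on $T^C_1$ and $T^C_2$ to obtain $\Gamma \vdash T^C_1[\overline{e_i}^i]$ and $\Gamma \vdash T^C_2[\overline{e_i}^i]$, and its compatibility premise $T^C_1[\overline{e_i}^i] \mathrel{\parallel} T^C_2[\overline{e_i}^i]$ is exactly what \T{Cast} needs. The side conditions of \CT{Conv}, \CT{ACheck}, \CT{Exact}, and \CT{Forget} (a type conversion, an evaluation $\longrightarrow^{\ast}$, and so on) are stated about the already-filled terms and types, so they transfer verbatim to the premises of the corresponding typing rules. For the (type) statement the cases \Rule{CW}{Base}, \Rule{CW}{TVar}, \Rule{CW}{Forall}, \Rule{CW}{Fun}, \Rule{CW}{Refine} mirror \WF{Base}, \WF{TVar}, \WF{Forall}, \WF{Fun}, \WF{Refine}, with \Rule{CW}{Refine} and \Rule{CW}{Fun} appealing to the (term) induction hypothesis for the refinement body.

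The one case that needs more than a mechanical transcription is \CT{App}, and I expect it to be the main obstacle. Its conclusion types $C_1\,C_2$ at $T'_2[C_2[\overline{e_i}^i]/x]$ from just the two premises for $C_1$ and $C_2$, whereas \T{App} additionally demands the well-formedness judgment $\Gamma \vdash T'_2[C_2[\overline{e_i}^i]/x]$. After applying the induction hypotheses I have $\Gamma \vdash C_1[\overline{e_i}^i] : (x{:}T'_1 \rightarrow T'_2)$ and $\Gamma \vdash C_2[\overline{e_i}^i] : T'_1$, so I would recover the missing premise by a standard regularity lemma (well-typed terms have well-formed types), which gives $\Gamma \vdash x{:}T'_1 \rightarrow T'_2$ and hence $\Gamma, x{:}T'_1 \vdash T'_2$ by inversion on \WF{Fun}, followed by term substitution (\prop:ref{fh-subst-term}, type part) with $\Gamma \vdash C_2[\overline{e_i}^i] : T'_1$ to conclude $\Gamma \vdash T'_2[C_2[\overline{e_i}^i]/x]$. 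The \CT{Op} case requires an analogous, purely bookkeeping check that the telescope of dependent substitutions $[C_1[\overline{e_i}^i]/x_1, \ldots]$ produced by filling coincides with the substitution $\sigma$ used in \T{Op}; this is immediate once the holes are filled. Everything else is routine, so the regularity step in \CT{App} is the only substantive point.
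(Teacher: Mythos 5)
Your proof is correct and follows essentially the same route as the paper, whose entire argument is the one-line ``by induction on the two context well-formedness derivations,'' leaving every case implicit. Your observation that \CT{App} supplies only two premises while \T{App} additionally demands well-formedness of $\ottnt{T'_{{\mathrm{2}}}} \, [  \ottnt{C_{{\mathrm{2}}}}  [   \overline{ \ottnt{e_{\ottmv{i}}} }^{ \ottmv{i} }   ]  \ottsym{/}  \mathit{x}  ]$ --- which you recover via regularity, inversion on \WF{Fun}, and the term substitution lemma --- is a legitimate detail that the paper's proof glosses over, not a divergence from its approach.
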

\begin{prop}{fh-ctxeq-closed}
 For any $\Gamma_{{\mathrm{1}}}, ..., \Gamma_{\ottmv{n}}$, $\ottnt{e_{{\mathrm{1}}}}, ..., \ottnt{e_{\ottmv{n}}}$, and
 $\ottnt{T_{{\mathrm{1}}}}, ..., \ottnt{T_{\ottmv{n}}}$ such that
 $ \mathit{FV}  (  \ottnt{e_{\ottmv{i}}}  )   \mathrel{\cup}   \mathit{FTV}  (  \ottnt{e_{\ottmv{i}}}  )   \subseteq   \mathit{dom}  (  \Gamma_{\ottmv{i}}  ) $ for any $\ottmv{i}$,
 \begin{statements}
  \item(term) if $\Gamma  \vdash  \ottnt{C}  \ottsym{:}   \overline{ \Gamma_{\ottmv{i}}  \vdash  \ottnt{e_{\ottmv{i}}}  \ottsym{:}  \ottnt{T_{\ottmv{i}}} }^{ \ottmv{i} }   \mathrel{\circ\hspace{-.4em}\rightarrow}  \ottnt{T}$,
  then $ \mathit{FV}  (  \ottnt{C}  [   \overline{ \ottnt{e_{\ottmv{i}}} }^{ \ottmv{i} }   ]  )   \mathrel{\cup}   \mathit{FTV}  (  \ottnt{C}  [   \overline{ \ottnt{e_{\ottmv{i}}} }^{ \ottmv{i} }   ]  )   \subseteq   \mathit{dom}  (  \Gamma  ) $, and
  \item(type) if $ \Gamma   \vdash   \ottnt{T}^\ottnt{C}   \ottsym{:}    \overline{ \Gamma_{\ottmv{i}}  \vdash  \ottnt{e_{\ottmv{i}}}  \ottsym{:}  \ottnt{T_{\ottmv{i}}} }^{ \ottmv{i} }   \mathrel{\circ\hspace{-.4em}\rightarrow} \ast $,
  then $ \mathit{FV}  (  \ottnt{T}^\ottnt{C}  [   \overline{ \ottnt{e_{\ottmv{i}}} }^{ \ottmv{i} }   ]  )   \mathrel{\cup}   \mathit{FTV}  (  \ottnt{C}  [   \overline{ \ottnt{e_{\ottmv{i}}} }^{ \ottmv{i} }   ]  )   \subseteq   \mathit{dom}  (  \Gamma  ) $.
 \end{statements}

 \proof

 By induction on the derivations of
 $\Gamma  \vdash  \ottnt{C}  \ottsym{:}   \overline{ \Gamma_{\ottmv{i}}  \vdash  \ottnt{e_{\ottmv{i}}}  \ottsym{:}  \ottnt{T_{\ottmv{i}}} }^{ \ottmv{i} }   \mathrel{\circ\hspace{-.4em}\rightarrow}  \ottnt{T}$ and
 $ \Gamma   \vdash   \ottnt{T}^\ottnt{C}   \ottsym{:}    \overline{ \Gamma_{\ottmv{i}}  \vdash  \ottnt{e_{\ottmv{i}}}  \ottsym{:}  \ottnt{T_{\ottmv{i}}} }^{ \ottmv{i} }   \mathrel{\circ\hspace{-.4em}\rightarrow} \ast $.
\end{prop}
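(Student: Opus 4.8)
The plan is to proceed by simultaneous structural induction on the derivations of the two well-formedness judgments $\Gamma \vdash C : \overline{\Gamma_i \vdash e_i : T_i}^{i} \mathrel{\circ\hspace{-.4em}\rightarrow} T$ and $\Gamma \vdash T^C : \overline{\Gamma_i \vdash e_i : T_i}^{i} \mathrel{\circ\hspace{-.4em}\rightarrow} \ast$, exactly as for the companion \prop:ref{fh-ctxeq-typed}. It is worth stressing that this cannot be obtained as a corollary of \prop:ref{fh-ctxeq-typed}: there each filler is assumed to be \emph{well typed} ($\Gamma_i \vdash e_i : T_i$), so one could read off the conclusion from the standard ``free variables of a well-typed term lie in its context'' lemma, whereas here the fillers are only assumed \emph{closed relative to} $\Gamma_i$, i.e.\ $\mathit{FV}(e_i) \cup \mathit{FTV}(e_i) \subseteq \mathit{dom}(\Gamma_i)$, and may be ill typed. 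So the hypothesis on the $e_i$ is precisely the invariant that must be threaded through the induction.

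In the base case \CT{Hole} the context is the bare hole, the conclusion environment is $\Gamma = \Gamma_j$, and $C[\overline{e_i}^{i}] = e_j$; thus the goal $\mathit{FV}(e_j)\cup\mathit{FTV}(e_j)\subseteq\mathit{dom}(\Gamma_j)$ is literally the hypothesis for the $j$-th filler. The remaining leaves \CT{Var}, \CT{Const}, \CT{Blame} (and \Rule{CW}{Base}, \Rule{CW}{TVar}) contribute either $\{x\}$ or $\{\alpha\}$, which the rule's side condition $x:T\in\Gamma$ (resp.\ $\alpha\in\Gamma$) places in $\mathit{dom}(\Gamma)$, or nothing at all. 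For the congruence cases introducing no binder — \CT{App}, \CT{Op}, \CT{TApp}, \CT{Cast}, \CT{WCheck}, \CT{ACheck}, \CT{Forget}, \CT{Exact}, and the domain premise of \Rule{CW}{Fun} — the free-variable set of the composed term is the union of the free-variable sets of the composed immediate subparts, each bounded by the corresponding induction hypothesis; since all subderivations share the same telescope $\overline{\Gamma_i\vdash e_i:T_i}^{i}$, the bounds combine to $\mathit{dom}(\Gamma)$.

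The interesting cases are those extending the environment: \CT{Abs}, \CT{Fix}, \CT{TAbs}, the codomain premise of \Rule{CW}{Fun}, \Rule{CW}{Forall}, and \Rule{CW}{Refine}. Here the subcontext is checked under $\Gamma, x:T^C_1[\overline{e_i}^{i}]$ or $\Gamma,\alpha$, so the induction hypothesis bounds its free variables by $\mathit{dom}(\Gamma)\cup\{x\}$ (resp.\ $\cup\{\alpha\}$); because the surrounding syntactic binder deletes exactly this variable from the free-variable set of the composed term, we recover containment in $\mathit{dom}(\Gamma)$. Note that a filler placed under a binder is, in the telescope, assigned an environment $\Gamma_j$ that itself contains that binder, so its free references to the bound variable are legitimately accounted for in $\mathit{dom}(\Gamma_j)$; there is no capture anomaly.

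The hard part will be bounding the free variables of the \emph{type annotations and type contexts embedded inside term contexts} — the domain annotation of \CT{Abs}/\CT{Fix}, the two types of \CT{Cast}, the type argument of \CT{TApp}, and the refinement carrier of \CT{WCheck}/\CT{ACheck} — since these syntactic pieces occur in the composed term but are not always presented as explicit subderivations (e.g.\ \CT{Abs} lists only the body). I would discharge them by the type-context clause of the mutual induction wherever such a subderivation is available, and otherwise recover the bound from well-formedness of the typing contexts appearing in the derivation: the environment $\Gamma, x:T^C_1[\overline{e_i}^{i}]$ under which the body is checked is well formed, so inverting \WF{ExtendVar} gives $\Gamma \vdash T^C_1[\overline{e_i}^{i}]$, whence the standard free-variable lemma for well-formed types yields $\mathit{FV}(T^C_1[\overline{e_i}^{i}])\cup\mathit{FTV}(T^C_1[\overline{e_i}^{i}])\subseteq\mathit{dom}(\Gamma)$. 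Making this appeal precise — confirming that every typing context occurring in a context-well-formedness derivation is well formed, even when a hole sits directly beneath a binder — is the one bookkeeping point that requires genuine care.
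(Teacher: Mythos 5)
Your proof is correct and follows exactly the route the paper takes: the paper's entire proof is the one-line ``by induction on the derivations of $\Gamma  \vdash  \ottnt{C}  \ottsym{:}   \overline{ \Gamma_{\ottmv{i}}  \vdash  \ottnt{e_{\ottmv{i}}}  \ottsym{:}  \ottnt{T_{\ottmv{i}}} }^{ \ottmv{i} }   \mathrel{\circ\hspace{-.4em}\rightarrow}  \ottnt{T}$ and $ \Gamma   \vdash   \ottnt{T}^\ottnt{C}   \ottsym{:}    \overline{ \Gamma_{\ottmv{i}}  \vdash  \ottnt{e_{\ottmv{i}}}  \ottsym{:}  \ottnt{T_{\ottmv{i}}} }^{ \ottmv{i} }   \mathrel{\circ\hspace{-.4em}\rightarrow} \ast $,'' which is precisely your mutual induction, and your case analysis elaborates it correctly. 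The one bookkeeping point you flag — bounding the free variables of type annotations that appear in the composed term without an explicit subderivation (e.g.\ the domain annotation in \Rule{CT}{Abs} when the body is a bare hole) — is a genuine subtlety, but it is equally unaddressed by the paper's proof, so you have not diverged from it.
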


\begin{prop}{fh-ctxeq-refl}
 For any $\Gamma_{{\mathrm{1}}}, ..., \Gamma_{\ottmv{n}}$, $\ottnt{e_{{\mathrm{1}}}}, ..., \ottnt{e_{\ottmv{n}}}$, and
 $\ottnt{T_{{\mathrm{1}}}}, ..., \ottnt{T_{\ottmv{n}}}$,
 \begin{statements}
  \item(term) if $\Gamma  \vdash  \ottnt{e}  \ottsym{:}  \ottnt{T}$, then $\Gamma  \vdash  \ottnt{e}  \ottsym{:}   \overline{ \Gamma_{\ottmv{i}}  \vdash  \ottnt{e_{\ottmv{i}}}  \ottsym{:}  \ottnt{T_{\ottmv{i}}} }^{ \ottmv{i} }   \mathrel{\circ\hspace{-.4em}\rightarrow}  \ottnt{T}$,
  \item(val) if $\Gamma  \vdash  \ottnt{v}  \ottsym{:}  \ottnt{T}$, then $\Gamma  \vdash  \ottnt{v}  \ottsym{:}   \overline{ \Gamma_{\ottmv{i}}  \vdash  \ottnt{e_{\ottmv{i}}}  \ottsym{:}  \ottnt{T_{\ottmv{i}}} }^{ \ottmv{i} }   \mathrel{\circ\hspace{-.4em}\rightarrow}  \ottnt{T}$, and
  \item(type) if $\Gamma  \vdash  \ottnt{T}$, then $ \Gamma   \vdash   \ottnt{T}   \ottsym{:}    \overline{ \Gamma_{\ottmv{i}}  \vdash  \ottnt{e_{\ottmv{i}}}  \ottsym{:}  \ottnt{T_{\ottmv{i}}} }^{ \ottmv{i} }   \mathrel{\circ\hspace{-.4em}\rightarrow} \ast $.
 \end{statements}

 \proof

 By induction on the derivations of
 $\Gamma  \vdash  \ottnt{e}  \ottsym{:}  \ottnt{T}$, $\Gamma  \vdash  \ottnt{v}  \ottsym{:}  \ottnt{T}$, and $\Gamma  \vdash  \ottnt{T}$.
\end{prop}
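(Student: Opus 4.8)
The plan is to proceed by simultaneous induction on the three derivations $\Gamma \vdash e : T$, $\Gamma \vdash v : T$, and $\Gamma \vdash T$, exploiting the fact that the context well-formedness rules of Figures~\ref{fig:ctx-rules} and~\ref{fig:ctx-typ-rules} are in one-to-one correspondence with the term typing and type well-formedness rules of \sect{lang-type-system}. The crucial simplifying observation is that a hole-free object is unaffected by hole filling: for any term $e$, type $T$, and fillers $\overline{e_i}^i$, we have $e[\overline{e_i}^i] = e$ and $T[\overline{e_i}^i] = T$, because $e$ and $T$ contain no holes. Consequently, every substitution side condition that a context rule imposes on some $C_j[\overline{e_i}^i]$ collapses to the corresponding condition on the original subterm, and every premise of a context rule is exactly what an induction hypothesis supplies.

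Concretely, each case matches a typing (or well-formedness) rule with its corresponding context rule (bearing the same name, e.g.\ \T{App} with \CT{App}, and \WF{Refine} with \CT{WCheck}'s type analogue). For a leaf rule such as \T{Var} or \T{Const}, the context rule \CT{Var} or \CT{Const} carries the same hypotheses ($\mathord{\vdash}~\Gamma$ and, where relevant, $x:T \in \Gamma$), so the conclusion is immediate. For a rule with subderivations, I would apply the induction hypotheses to the premises to obtain the context-typed premises and then invoke the matching context rule. For instance, in the \T{App} case the derivation gives $\Gamma \vdash e_1 : (x:T_1 \to T_2)$ and $\Gamma \vdash e_2 : T_1$; the induction hypothesis yields the corresponding context judgments, and \CT{App} concludes $\Gamma \vdash e_1\,e_2 : \overline{\Gamma_i \vdash e_i : T_i}^i \mathrel{\circ\hspace{-.4em}\rightarrow} T_2[e_2[\overline{e_i}^i]/x]$; since $e_2$ is hole-free, $e_2[\overline{e_i}^i] = e_2$, so the result type is $T_2[e_2/x]$, as required. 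The dependent binders in \T{Abs}, \T{Fix}, and \T{Op} are handled identically, using the type induction hypothesis on the (also hole-free) annotation types and the hole-filling identity to align the dependent substitutions. The mutually recursive cases—refinement types (whose refinements are terms), function and universal types (whose components are types), and abstractions (whose binders are types)—are exactly where the simultaneous induction is needed, each discharged by the appropriate cross-judgment hypothesis.

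I expect the main obstacle to be purely organizational rather than mathematical: verifying, rule by rule, that the premises demanded by each context rule are precisely those furnished by the induction hypotheses. Two minor mismatches require attention. First, some context rules carry fewer premises than their typing counterparts—\CT{App}, for example, omits the well-formedness premise $\Gamma \vdash T_2[e_2/x]$ present in \T{App}—so that extra information from the original derivation is simply discarded, which is harmless. Second, the context rules thread the arbitrary telescope $\overline{\Gamma_i \vdash e_i : T_i}^i$ throughout, but since the objects under consideration have no holes this telescope is never inspected (in particular \CT{Hole} never fires), so it may be carried unchanged from conclusion to premises. Once these two points are checked uniformly across all rules, the remaining case analysis is mechanical.
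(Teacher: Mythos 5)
Your proposal is correct and follows essentially the same route as the paper, which proves this lemma simply by induction on the three derivations, relying (as you do) on the observation stated earlier in the section that hole-free objects satisfy $\ottnt{e}  [   \overline{ \ottnt{e_{\ottmv{i}}} }^{ \ottmv{i} }   ]  \ottsym{=}  \ottnt{e}$ so that every context-rule premise collapses to the corresponding typing-rule premise. The additional bookkeeping you describe (rule-by-rule matching, the unused telescope, the discarded extra premises) is exactly the mechanical content the paper leaves implicit.
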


Finally, we define semityped contextual equivalence by using well-formed contexts.
\begin{defi}[Semityped Contextual Equivalence]
 \label{def:ctxeq} \sloppy{
 Terms $\ottnt{e_{{\mathrm{11}}}},..., { \ottnt{e} }_{  \ottsym{1} \mathit{n}  } $ and $\ottnt{e_{{\mathrm{21}}}},..., { \ottnt{e} }_{  \ottsym{2} \mathit{n}  } $ are \emph{contextually
 equivalent} at $\ottnt{T_{{\mathrm{1}}}},...,\ottnt{T_{\ottmv{n}}}$ under $\Gamma_{{\mathrm{1}}},...,\Gamma_{\ottmv{n}}$, respectively,
 written as $ \overline{ \Gamma_{\ottmv{i}}   \vdash    { \ottnt{e} }_{  \ottsym{1} \ottmv{i}  }    =_\mathsf{ctx}    { \ottnt{e} }_{  \ottsym{2} \ottmv{i}  }    \ottsym{:}   \ottnt{T_{\ottmv{i}}} }^{ \ottmv{i} \, \in \,  \{  \, \ottsym{1}  ,\, ... \, ,  \mathit{n} \,  \}  } $,
 if and only if
 (1) for any $\ottmv{i}$,
 $\Gamma_{\ottmv{i}}  \vdash   { \ottnt{e} }_{  \ottsym{1} \ottmv{i}  }   \ottsym{:}  \ottnt{T_{\ottmv{i}}}$ and $ \mathit{FV}  (   { \ottnt{e} }_{  \ottsym{2} \ottmv{i}  }   )   \mathrel{\cup}   \mathit{FTV}  (   { \ottnt{e} }_{  \ottsym{2} \ottmv{i}  }   )   \subseteq   \mathit{dom}  (  \Gamma_{\ottmv{i}}  ) $, and
 (2) for any $\ottnt{T}$ and $\ottnt{C}$,
 if
 \ifdtabs $\ottnt{C}  [  \ottnt{e_{{\mathrm{11}}}}  ,\, ... \, ,   { \ottnt{e} }_{  \ottsym{1} \mathit{n}  }   ]$ and $\ottnt{C}  [  \ottnt{e_{{\mathrm{21}}}}  ,\, ... \, ,   { \ottnt{e} }_{  \ottsym{2} \mathit{n}  }   ]$ are valid, and \fi
 $\emptyset  \vdash  \ottnt{C}  \ottsym{:}   \overline{ \Gamma_{\ottmv{i}}  \vdash   { \ottnt{e} }_{  \ottsym{1} \ottmv{i}  }   \ottsym{:}  \ottnt{T_{\ottmv{i}}} }^{ \ottmv{i} }   \mathrel{\circ\hspace{-.4em}\rightarrow}  \ottnt{T}$,
 then
 $\ottnt{C}  [  \ottnt{e_{{\mathrm{11}}}}  ,\, ... \, ,   { \ottnt{e} }_{  \ottsym{1} \mathit{n}  }   ]  \Downarrow  \ottnt{C}  [  \ottnt{e_{{\mathrm{21}}}}  ,\, ... \, ,   { \ottnt{e} }_{  \ottsym{2} \mathit{n}  }   ]$.
 For simplification,
 we write
 $ \overline{ \Gamma_{\ottmv{i}}   \vdash    { \ottnt{e} }_{  \ottsym{1} \ottmv{i}  }    =_\mathsf{ctx}    { \ottnt{e} }_{  \ottsym{2} \ottmv{i}  }    \ottsym{:}   \ottnt{T_{\ottmv{i}}} }^{ \ottmv{i} } $ if $\mathit{n}$ is not important
 and
 $\Gamma_{{\mathrm{1}}}  \vdash  \ottnt{e_{{\mathrm{11}}}} \, =_\mathsf{ctx} \, \ottnt{e_{{\mathrm{21}}}}  \ottsym{:}  \ottnt{T_{{\mathrm{1}}}}$ if $\mathit{n}  \ottsym{=}  \ottsym{1}$.
 }
\end{defi}
\sloppy{ We note that we state semityped contextual equivalence
for \emph{pairs} of terms and that equivalennce is preseved by dropping some pairs:
 that is, if $ \overline{ \Gamma_{\ottmv{i}}   \vdash    { \ottnt{e} }_{  \ottsym{1} \ottmv{i}  }    =_\mathsf{ctx}    { \ottnt{e} }_{  \ottsym{2} \ottmv{i}  }    \ottsym{:}   \ottnt{T_{\ottmv{i}}} }^{ \ottmv{i} \, \in \,  \{  \, \ottsym{1}  ,\, ... \, ,  \mathit{n} \,  \}  } $, then
$ \overline{ \Gamma_{\ottmv{i}}   \vdash    { \ottnt{e} }_{  \ottsym{1} \ottmv{i}  }    =_\mathsf{ctx}    { \ottnt{e} }_{  \ottsym{2} \ottmv{i}  }    \ottsym{:}   \ottnt{T_{\ottmv{i}}} }^{ \ottmv{i} \, \in \,  \{  \, \ottsym{1}  ,\, ... \, ,  \ottmv{m} \,  \}  } $ for $m \leq n$.}

Finally, we make a few remarks on semityped contextual equivalence.
Although we call it semityped contextual ``equivalence,'' this relation is not quite an
equivalence relation because symmetry does not hold (ill-typed terms cannot be
on the left-hand side).
More interestingly, even showing its transitivity is not trivial.
For proving the transitivity, we have to show that, if $\Gamma  \vdash  \ottnt{e_{{\mathrm{1}}}} \, =_\mathsf{ctx} \, \ottnt{e_{{\mathrm{2}}}}  \ottsym{:}  \ottnt{T}$ and
$\Gamma  \vdash  \ottnt{e_{{\mathrm{2}}}} \, =_\mathsf{ctx} \, \ottnt{e_{{\mathrm{3}}}}  \ottsym{:}  \ottnt{T}$, then $\ottnt{e_{{\mathrm{1}}}}$ and $\ottnt{e_{{\mathrm{3}}}}$ behave equivalently under
any program context $\ottnt{C}$ which is well formed for $\ottnt{e_{{\mathrm{1}}}}$.
We might expect that $\ottnt{e_{{\mathrm{2}}}}$ and $\ottnt{e_{{\mathrm{3}}}}$ behave in the same way under $\ottnt{C}$,
but it is not clear because $\ottnt{C}$ may not be well formed for
$\ottnt{e_{{\mathrm{2}}}}$.
Fortunately, our logical relation enables us to show (restricted) transitivity
of semityped contextual equivalence via completeness with respect to semityped
contextual equivalence (\prop:ref{fh-lr-ctx-trans}).
% , and so we can transform subexpressions of a program to
% contextually equivalent ones one-by-one and step-by-step.

In some work~\cite{Lassen_1998_Phd,Pitts_2005_ATTAPL}, contextual equivalence is
defined for A-normal forms, where arguments to functions are restricted to
values and terms are composed by $ \mathsf{let} $-expressions (so, they are not
shorthand of term applications there) to reduce clutter.
In fact, we have adopted that style at an early stage of the study but it turned out that it did
not work quite well, because a term in A-normal form is not closed under term
substitution.
To see the problem, let us consider a typing rule for $ \mathsf{let} $-expression
$ \mathsf{let}  ~  \mathit{x}  \mathord{:}  \ottnt{T_{{\mathrm{1}}}}  \equal  \ottnt{e_{{\mathrm{1}}}}  ~ \ottliteralin ~  \ottnt{e_{{\mathrm{2}}}} $, which could be given as follows:
\[
\frac{\Gamma  \vdash  \ottnt{e_{{\mathrm{1}}}}  \ottsym{:}  \ottnt{T_{{\mathrm{1}}}} \quad  \Gamma  ,  \mathit{x}  \mathord{:}  \ottnt{T_{{\mathrm{1}}}}   \vdash  \ottnt{e_{{\mathrm{2}}}}  \ottsym{:}  \ottnt{T_{{\mathrm{2}}}}}{\Gamma  \vdash   \mathsf{let}  ~  \mathit{x}  \mathord{:}  \ottnt{T_{{\mathrm{1}}}}  \equal  \ottnt{e_{{\mathrm{1}}}}  ~ \ottliteralin ~  \ottnt{e_{{\mathrm{2}}}}   \ottsym{:}  \ottnt{T_{{\mathrm{2}}}} \, [  \ottnt{e_{{\mathrm{1}}}}  \ottsym{/}  \mathit{x}  ]}
\]
The problem is that the index type $\ottnt{T_{{\mathrm{2}}}} \, [  \ottnt{e_{{\mathrm{1}}}}  \ottsym{/}  \mathit{x}  ]$ possibly includes refinements
which are not A-normal forms if $\ottnt{e_{{\mathrm{1}}}}$ is neither a variable nor a value.
For example, $ \mathsf{let}  ~  \mathit{x}  \mathord{:}   \mathsf{Int}   \equal   \ottsym{2}  \mathrel{+}  \ottsym{3}   ~ \ottliteralin ~  \langle   \mathsf{Int}   \Rightarrow   \{  \mathit{y}  \mathord{:}   \mathsf{Int}    \mathop{\mid}    \mathit{x}  \mathrel{>} \ottsym{0}   \}   \rangle   ^{ \ell }   \, \ottsym{0}$ is typed at
$ \{  \mathit{x}  \mathord{:}   \mathsf{Int}    \mathop{\mid}     \ottsym{2}  \mathrel{+}  \ottsym{3}   \mathrel{>} \ottsym{0}   \} $, but the refinement
$  \ottsym{2}  \mathrel{+}  \ottsym{3}   \mathrel{>} \ottsym{0} $ is not in A-normal form.
We might be able to define substitution so that $ \{  \mathit{y}  \mathord{:}   \mathsf{Int}    \mathop{\mid}     \mathsf{let}  ~  \mathit{x}  \mathord{:}   \mathsf{Int}   \equal   \ottsym{2}  \mathrel{+}  \ottsym{3}   ~ \ottliteralin ~  \mathit{x}   \mathrel{>} \ottsym{0}   \} $ would be obtained, but we avoid such ``peculiar'' substitution.

While semitypedness of our contextual equivalence is motivated by the upcast
elimination, perhaps surprisingly, it appears unclear to us how
to define \emph{typed} contextual equivalence.
One naive definition of it is to demand that, for each $ { \ottnt{e} }_{  \ottsym{2} \ottmv{i}  } $ in
Definition~\ref{def:ctxeq}, $ { \ottnt{e} }_{  \ottsym{2} \ottmv{i}  } $ is well typed at $\ottnt{T_{\ottmv{i}}}$ under
$\Gamma_{\ottmv{i}}$.
However, this gives rise to ill-typed terms.
For example, suppose that we want to equate $\ottsym{0}$ and $\ottsym{(}    \lambda    \mathit{y}  \mathord{:}   \mathsf{Int}   .  \mathit{y}   \ottsym{)} \, \ottsym{0}$.
To show their contextual equivalence, we have to evaluate them in any
program context.
Here, a context $\langle   \{  \mathit{x}  \mathord{:}   \mathsf{Int}    \mathop{\mid}    \ottsym{0}  \mathrel{<}  \mathit{x}   \}   \Rightarrow   \mathsf{Int}   \rangle   ^{ \ell }  \, \ottsym{(}  \mathit{f} \,  \left[ \, \right] _{ \ottmv{i} }   \ottsym{)}$ given above is well-formed for
$\ottsym{0}$ but not for $\ottsym{(}    \lambda    \mathit{y}  \mathord{:}   \mathsf{Int}   .  \mathit{y}   \ottsym{)} \, \ottsym{0}$; note that we cannot apply \CT{Conv}
to $\mathit{f} \, \ottsym{(}  \ottsym{(}    \lambda    \mathit{y}  \mathord{:}   \mathsf{Int}   .  \mathit{y}   \ottsym{)} \, \ottsym{0}  \ottsym{)}$ due to the reference to free variable $\mathit{f}$.
A better definition may be to require contexts to be well-formed for both
terms that we want to equate.
This definition could exclude contexts like the above whereas it seems to cause
another issue: are program contexts in such a restricted form enough to test
terms?
We conclude that defining \emph{typed} contextual equivalence for a dependently typed calculus
is still an open problem.

\section{Logical Relation}
\label{sec:logical_relation}

We develop a logical relation for two reasons.
The first is parametricity, which ensures abstraction and enables reasoning for
programs in polymorphic calculi~\cite{Wadler_1989_FPCA}.
Parametricity is usually stated as ``any well typed term is
logically related to itself.''
The second is to show contextual equivalence easily.
It is often difficult to prove that given two terms are contextually equivalent
since it involves quantification over \emph{all} program contexts.
Much work has developed techniques to reason about contextual
equivalence more easily, and many of such reasoning techniques are
based on logical relations.
We will also use the logical relation to reason about casts in 
Section~\ref{sec:reasoning}.

In this section, we first give an informal overview of main ideas in
our logical relation in \sect{logical_relation-overview}.  Then, after
preliminary definitions in \sect{logical_relation-preliminary}, we
formally define the logical relation in \sect{logical_relation-def}
and state its soundness and completeness with respect to semityped
contextual equivalence in \sect{logical_relation-statement}.
The completeness is given in a restricted form---two contextually equivalent,
\emph{well-typed} terms are logically related; completeness without restrictions
is left open.

\subsection{Informal Overview}
\label{sec:logical_relation-overview}

The definition of our logical relation follows Belo et
al.~\cite{Belo/Greenberg/Igarashi/Pierce_2011_ESOP} and Sekiyama et
al.~\cite{Sekiyama/Igarashi/Greenberg_2016_TOPLAS}.  We start with two
type-indexed families of relations $ \ottnt{v_{{\mathrm{1}}}}  \simeq_{\mathtt{v} }  \ottnt{v_{{\mathrm{2}}}}   \ottsym{:}   \ottnt{T} ;  \theta ;  \delta $
for closed values and $ \ottnt{e_{{\mathrm{1}}}}  \simeq_{\mathtt{e} }  \ottnt{e_{{\mathrm{2}}}}   \ottsym{:}   \ottnt{T} ;  \theta ;  \delta $ for closed
terms and a relation $ \ottnt{T_{{\mathrm{1}}}}  \simeq  \ottnt{T_{{\mathrm{2}}}}   \ottsym{:}   \ast ;  \theta ;  \delta $ for (open) types.  The
\emph{type interpretation} $\theta$ assigns value relations to type
variables---which is common to relational semantics for a polymorphic
language---and $\delta$, called \emph{value assignment}, gives
pairs of values to free term variables in $\ottnt{T}$, $\ottnt{T_{{\mathrm{1}}}}$, and
$\ottnt{T_{{\mathrm{2}}}}$.  Value assignments are introduced by Belo et
al.~\cite{Belo/Greenberg/Igarashi/Pierce_2011_ESOP} to handle
dependency of types on terms.  Main differences from the previous work
\cite{Belo/Greenberg/Igarashi/Pierce_2011_ESOP,Sekiyama/Igarashi/Greenberg_2016_TOPLAS}
are that (1) our logical relation is semityped just like our
contextual equivalence (whereas the previous work does not enforce
well-typedness conditions) and that (2) different closure conditions
are assumed for relations assigned to type variables.  (We will
elaborate (2) shortly.)  Then, we extend these relations
to open terms/types and define $\Gamma  \vdash  \ottnt{e_{{\mathrm{1}}}} \,  \mathrel{ \simeq }  \, \ottnt{e_{{\mathrm{2}}}}  \ottsym{:}  \ottnt{T}$ and
$ \Gamma \vdash \ottnt{T_{{\mathrm{1}}}}  \mathrel{ \simeq }  \ottnt{T_{{\mathrm{2}}}}  : \ast $.

Formally, a type interpretation $\theta$ assigns a type variable
$\alpha$ a triple $(\ottnt{r},\ottnt{T_{{\mathrm{1}}}},\ottnt{T_{{\mathrm{2}}}})$ where $\ottnt{r}$ is a binary
relation on closed values $\ottsym{(}  \ottnt{v_{{\mathrm{1}}}}  \ottsym{,}  \ottnt{v_{{\mathrm{2}}}}  \ottsym{)}$, where $\ottnt{v_{{\mathrm{1}}}}$ is of type
$\ottnt{T_{{\mathrm{1}}}}$.  There are two closure conditions on $\ottnt{r}$.

The first condition on $\ottnt{r}$ is that it has to be \emph{closed under
  wrappers produced by reflexive casts}: if $\ottsym{(}  \ottnt{v_{{\mathrm{1}}}}  \ottsym{,}  \ottnt{v_{{\mathrm{2}}}}  \ottsym{)} \, \in \, \ottnt{r}$,
the value of $\langle  \ottnt{T_{\ottmv{i}}}  \Rightarrow  \ottnt{T_{\ottmv{i}}}  \rangle   ^{ \ell }  \, \ottnt{v_{\ottmv{i}}}$ is related to $ { \ottnt{v} }_{ \ottsym{3}  \ottsym{-}  \ottmv{i} } $ (for
$i=1,2$).
This closure condition is needed due to polymorphic casts of the form $\langle  \alpha  \Rightarrow  \alpha  \rangle   ^{ \ell } $.
A polymorphic cast is a function typed at $\alpha  \rightarrow  \alpha$, so it should
produce values related at $\alpha$ when taking arguments related at
$\alpha$.
Since values related at $\alpha$ should be in $\ottnt{r}$, the results of
evaluating $\langle  \alpha  \Rightarrow  \alpha  \rangle   ^{ \ell }  \, \ottnt{v_{{\mathrm{1}}}}$ and $\langle  \alpha  \Rightarrow  \alpha  \rangle   ^{ \ell }  \, \ottnt{v_{{\mathrm{2}}}}$ should be in $\ottnt{r}$
for any $\ottsym{(}  \ottnt{v_{{\mathrm{1}}}}  \ottsym{,}  \ottnt{v_{{\mathrm{2}}}}  \ottsym{)} \in \ottnt{r}$ (if they terminate at values).
Unfortunately, it could not be achieved if $\ottnt{r}$ were arbitrary,
because, if $\alpha$ is instantiated with higher-order types,
$\langle  \alpha  \Rightarrow  \alpha  \rangle   ^{ \ell }  \, \ottnt{v_{\ottmv{i}}}$ produces wrappers (e.g., by \R{Fun}) but they
may not be in $\ottnt{r}$.
Thus, instead of taking arbitrary $\ottnt{r}$, we require $\ottnt{r}$ to
contain also the wrappers.\footnote{The prior
  work~\cite{Belo/Greenberg/Igarashi/Pierce_2011_ESOP,Sekiyama/Igarashi/Greenberg_2016_TOPLAS}
  does not need this requirement because reflexive casts always behave
  like identity functions, regardless of their types.  However, it
  causes different
  problems~\cite{Sekiyama/Igarashi/Greenberg_2016_TOPLAS}.}
% ---notice
% that, if $\ottnt{v_{{\mathrm{1}}}}$ and $\ottnt{v_{{\mathrm{2}}}}$ are related, then the values of
% $\langle  \ottnt{T_{{\mathrm{1}}}}  \Rightarrow  \ottnt{T_{{\mathrm{1}}}}  \rangle   ^{ \ell }  \, \ottnt{v_{{\mathrm{1}}}}$ and $\langle  \ottnt{T_{{\mathrm{2}}}}  \Rightarrow  \ottnt{T_{{\mathrm{2}}}}  \rangle   ^{ \ell }  \, \ottnt{v_{{\mathrm{2}}}}$ are also related.
%
Actually, an alternative requirement that $\ottnt{r}$ relates the value of
$\langle  \ottnt{T_{\ottmv{i}}}  \Rightarrow  \ottnt{T_{\ottmv{i}}}  \rangle   ^{ \ell }  \, \ottnt{v_{\ottmv{i}}}$ to that of $ { \langle   { \ottnt{T} }_{ \ottsym{3}  \ottsym{-}  \ottmv{i} }   \Rightarrow   { \ottnt{T} }_{ \ottsym{3}  \ottsym{-}  \ottmv{i} }   \rangle   ^{ \ell }  \, \ottnt{v} }_{ \ottsym{3}  \ottsym{-}  \ottmv{i} } $ would be enough
if we are interested only in soundness of the logical relation.
Our closure condition---without $\langle   { \ottnt{T} }_{ \ottsym{3}  \ottsym{-}  \ottmv{i} }   \Rightarrow   { \ottnt{T} }_{ \ottsym{3}  \ottsym{-}  \ottmv{i} }   \rangle   ^{ \ell } $---subsumes
this alternative and, in fact, is a key to proving correctness of the upcast
elimination and the selfification.

The second closure condition is that $\ottnt{r}$ is closed under
(semityped) CIU-equivalence so that the logical relation is complete
with respect to contextual equivalence, following the prior
work~\cite{Ahmed_2006_ESOP}.
CIU-equivalence~\cite{Mason/Talcott_1991_JFP} relates two closed terms if they behave equivalently
under any \emph{evaluation} context (\underline{u}se of the terms),
and it is extended to open terms with closing substitutions
(\underline{c}losed \underline{i}nstantiations).
Actually, this condition subsumes the first but it will turn out so,
\emph{only after we finish proving the upcast elimination property in
  \sect{reasoning-upcast-elim}}.  So, we have to introduce the two conditions
separately.
Interestingly, the closure under CIU-equivalence also enables us to
show transitivity of the logical relation.
We will show that CIU-equivalence, the logical relation, and contextual
equivalence coincide on well-typed terms via a property similar to Pitts'
``equivalence-respecting property''~\cite{Pitts_2005_ATTAPL}.

\subsection{Preliminaries}
\label{sec:logical_relation-preliminary}

Here, we give a few preliminary definitions, including CIU-equivalence
and the closure conditions on $\ottnt{r}$, to define the logical relation.

 \begin{defi} \noindent
  \begin{itemize}
   \item $ \mathsf{Typ} $ is the set $\{ \ottnt{T} \mid \emptyset  \vdash  \ottnt{T} \}$ of all closed,
         well-formed types;
   \item $ \mathsf{UTyp} $ is the set $\{ \ottnt{T} \mid  \mathit{FV}  (  \ottnt{T}  )   \mathrel{\cup}   \mathit{FTV}  (  \ottnt{T}  )   \ottsym{=}  \emptyset \}$
         of all closed types;
   \item For each $\ottnt{T} \, \in \, \mathsf{Typ}$,
         $ \mathsf{Val}  (  \ottnt{T}  ) $ is the set $\{ \ottnt{v} \mid \emptyset  \vdash  \ottnt{v}  \ottsym{:}  \ottnt{T} \}$ of all
         closed values of $\ottnt{T}$; and
   \item $ \mathsf{UVal} $ is the set $\{ \ottnt{v} \mid  \mathit{FV}  (  \ottnt{v}  )   \mathrel{\cup}   \mathit{FTV}  (  \ottnt{v}  )   \ottsym{=}  \emptyset \}$ of
         all closed values.
  \end{itemize}
 \end{defi}

 In what follows, (capture-avoiding) \emph{substitutions}, denoted by
 $\sigma$, are maps from term and type variables to closed terms
 and types, respectively, and they can be extended to maps over terms,
 types, etc.\ straightforwardly.
 We write $\sigma  [  \ottnt{v}  \ottsym{/}  \mathit{x}  ]$ and $\sigma  [  \ottnt{T}  \ottsym{/}  \alpha  ]$ for substitutions that
 map $\mathit{x}$ and $\alpha$ to $\ottnt{v}$ and $\ottnt{T}$, respectively,
 and other term/type variables according to $\sigma$.
 Then, we define the notion of closing substitutions.
 \begin{defi}[Closing Substitutions]
  Substitution $\sigma$ is a \emph{closing substitution that respects $\Gamma$}, written $\Gamma  \vdash  \sigma$, if and only if $ \sigma  (  \mathit{x}  )  \, \in \,  \mathsf{Val}  (   \sigma  (   \Gamma  (  \mathit{x}  )   )   ) $ for any
  $\mathit{x} \, \in \,  \mathit{dom}  (  \Gamma  ) $ and $ \sigma  (  \alpha  )  \, \in \, \mathsf{Typ}$ for any $\alpha \, \in \,  \mathit{dom}  (  \Gamma  ) $.
 \end{defi}

 We define CUI-equivalence below.
  Our CIU-equivalence rests on \emph{static} evaluation contexts $\ottnt{E}^\ottnt{S}$, where
 holes do not occur under run-time term constructors such as active checks.
 \[\begin{array}{lll}
  \ottnt{E}^\ottnt{S} &::=&  \left[ \, \right]  \mid {\tt op} \, \ottsym{(}  \ottnt{v_{{\mathrm{1}}}}  \ottsym{,} \, ... \, \ottsym{,}  \ottnt{v_{\ottmv{n}}}  \ottsym{,}  \ottnt{E}^\ottnt{S}  \ottsym{,}  \ottnt{e_{{\mathrm{1}}}}  \ottsym{,} \, ... \, \ottsym{,}  \ottnt{e_{\ottmv{m}}}  \ottsym{)} \mid
   \ottnt{E}^\ottnt{S} \, \ottnt{e} \mid \ottnt{v} \, \ottnt{E}^\ottnt{S} \mid \ottnt{E}^\ottnt{S} \, \ottnt{T}
   \end{array}\]
 Since a static evaluation context is also a (single-hole) context, we use
 the context well-formedness judgments also for static evaluation contexts and
 write $\Gamma  \vdash  \ottnt{E}^\ottnt{S}  \ottsym{:}  \ottsym{(}  \Gamma_{{\mathrm{1}}}  \vdash  \ottnt{e_{{\mathrm{1}}}}  \ottsym{:}  \ottnt{T_{{\mathrm{1}}}}  \ottsym{)}  \mathrel{\circ\hspace{-.4em}\rightarrow}  \ottnt{T'}$.
 Use of static evaluation contexts, instead of evaluation contexts, is important
 to show the equivalence-respecting property, especially,
 \prop:ref{fh-lr-comp-sectx-hole-red}.
 \begin{defi}[Semityped CIU-Equivalence]
  Terms $\ottnt{e_{{\mathrm{1}}}}$ and $\ottnt{e_{{\mathrm{2}}}}$ are \emph{CIU-equivalent at $\ottnt{T}$ under $\Gamma$},
  written $\Gamma  \vdash  \ottnt{e_{{\mathrm{1}}}} \, =_\mathsf{ciu} \, \ottnt{e_{{\mathrm{2}}}}  \ottsym{:}  \ottnt{T}$, if and only if
  (1) $\Gamma  \vdash  \ottnt{e_{{\mathrm{1}}}}  \ottsym{:}  \ottnt{T}$,
  (2) $ \mathit{FV}  (  \ottnt{e_{{\mathrm{2}}}}  )   \mathrel{\cup}   \mathit{FTV}  (  \ottnt{e_{{\mathrm{2}}}}  )   \subseteq   \mathit{dom}  (  \Gamma  ) $, and
  (3) $\ottnt{E}^\ottnt{S}  [   \sigma  (  \ottnt{e_{{\mathrm{1}}}}  )   ]  \Downarrow  \ottnt{E}^\ottnt{S}  [   \sigma  (  \ottnt{e_{{\mathrm{2}}}}  )   ]$, 
      for any $\sigma$, $\ottnt{E}^\ottnt{S}$, and $\ottnt{T'}$ such that
      $\Gamma  \vdash  \sigma$ and
      $\emptyset  \vdash  \ottnt{E}^\ottnt{S}  \ottsym{:}  \ottsym{(}  \emptyset  \vdash   \sigma  (  \ottnt{e_{{\mathrm{1}}}}  )   \ottsym{:}   \sigma  (  \ottnt{T}  )   \ottsym{)}  \mathrel{\circ\hspace{-.4em}\rightarrow}  \ottnt{T'}$.
 \end{defi}

 Using the semityped CIU-equivalence, we define the universe $ \mathsf{VRel}  (  \ottnt{T_{{\mathrm{1}}}} ,  \ottnt{T_{{\mathrm{2}}}}  ) $ of interpretations used for $\ottnt{r}$.
 \begin{defi}[Universe of Interpretations]
  For $\ottnt{T_{{\mathrm{1}}}} \, \in \, \mathsf{Typ}$ and $\ottnt{T_{{\mathrm{2}}}} \, \in \, \mathsf{UTyp}$,
  \[\begin{array}{l@{\;}l@{\;}l}
    \mathsf{VRel}  (  \ottnt{T_{{\mathrm{1}}}} ,  \ottnt{T_{{\mathrm{2}}}}  )  \defeq
   \{ \; R \mathrel{\subseteq}  \mathsf{Val}  (  \ottnt{T_{{\mathrm{1}}}}  )  \mathop{\times}  \mathsf{UVal}  \mid &
   \multicolumn{2}{l}{\forall \ottsym{(}  \ottnt{v_{{\mathrm{1}}}}  \ottsym{,}  \ottnt{v_{{\mathrm{2}}}}  \ottsym{)} \in R.} \\ & \qquad
     \exists \ottnt{v'_{{\mathrm{1}}}}, \ottnt{v'_{{\mathrm{2}}}}. &
      \langle  \ottnt{T_{{\mathrm{1}}}}  \Rightarrow  \ottnt{T_{{\mathrm{1}}}}  \rangle   ^{ \ell }  \, \ottnt{v_{{\mathrm{1}}}}  \longrightarrow^{\ast}  \ottnt{v'_{{\mathrm{1}}}} \text{ and } \\ &&
      \langle  \ottnt{T_{{\mathrm{2}}}}  \Rightarrow  \ottnt{T_{{\mathrm{2}}}}  \rangle   ^{ \ell }  \, \ottnt{v_{{\mathrm{2}}}}  \longrightarrow^{\ast}  \ottnt{v'_{{\mathrm{2}}}} \text{ and } \\ &&
      \ottsym{(}  \ottnt{v'_{{\mathrm{1}}}}  \ottsym{,}  \ottnt{v_{{\mathrm{2}}}}  \ottsym{)}, \ottsym{(}  \ottnt{v_{{\mathrm{1}}}}  \ottsym{,}  \ottnt{v'_{{\mathrm{2}}}}  \ottsym{)} \in R, \text{ and } \\ &
      \multicolumn{2}{l}{ \quad\ \;
       \forall \ottnt{v}.\, \emptyset  \vdash  \ottnt{v} \, =_\mathsf{ciu} \, \ottnt{v_{{\mathrm{1}}}}  \ottsym{:}  \ottnt{T_{{\mathrm{1}}}} \text{ implies }
        \ottsym{(}  \ottnt{v}  \ottsym{,}  \ottnt{v_{{\mathrm{2}}}}  \ottsym{)} \in \mathit{R} \}
      }.
   \end{array}\]
  We write $\langle  \ottnt{r}  \ottsym{,}  \ottnt{T_{{\mathrm{1}}}}  \ottsym{,}  \ottnt{T_{{\mathrm{2}}}}  \rangle$ if $\ottnt{T_{{\mathrm{1}}}} \, \in \, \mathsf{Typ}$,
  $\ottnt{T_{{\mathrm{2}}}} \, \in \, \mathsf{UTyp}$, and $\ottnt{r} \, \in \,  \mathsf{VRel}  (  \ottnt{T_{{\mathrm{1}}}} ,  \ottnt{T_{{\mathrm{2}}}}  ) $.
 \end{defi}
 The conditions above on $\mathit{R}$ represent the closure conditions discussed in
 \sect{logical_relation-overview}.
 % %
 % We note that the closure of an arbitrary set where a value of $\ottnt{T_{{\mathrm{1}}}}$
 % is related to a value of $\ottnt{T_{{\mathrm{2}}}}$ is constructed easily if $\ottnt{T_{{\mathrm{1}}}}$ and $\ottnt{T_{{\mathrm{2}}}}$
 % are well formed, because well-typed reflexive casts behave as identity functions
 % (\prop:ref{fh-cc-refl}).

 \subsection{Formal Definition of Logical Relation}
 \label{sec:logical_relation-def}

 We formally define our logical relation, after defining type interpretations and value assignments below.

  \begin{defi}[Type Interpretations]
  A \emph{type interpretation} $\theta$ is a finite map from type variables to tuples
  $(\ottnt{r},\ottnt{T_{{\mathrm{1}}}},\ottnt{T_{{\mathrm{2}}}})$ such that $\langle  \ottnt{r}  \ottsym{,}  \ottnt{T_{{\mathrm{1}}}}  \ottsym{,}  \ottnt{T_{{\mathrm{2}}}}  \rangle$.
  We write $\theta \,  \{  \,  \alpha  \mapsto ( \ottnt{r} , \ottnt{T_{{\mathrm{1}}}} , \ottnt{T_{{\mathrm{2}}}} )  \,  \} $ for the same map as $\theta$ except
  that $\alpha$ is mapped to $(\ottnt{r},\ottnt{T_{{\mathrm{1}}}},\ottnt{T_{{\mathrm{2}}}})$.
  We also write $\theta_{\ottmv{i}}$ ($\ottmv{i} \, \in \,  \{   \ottsym{1}  \ottsym{,}  \ottsym{2}   \} $) for a substitution that maps
  type variables $\alpha$ to types $\ottnt{T_{\ottmv{i}}}$ such that $ \theta  (  \alpha  ) = (  \ottnt{r} ,  \ottnt{T_{{\mathrm{1}}}} ,  \ottnt{T_{{\mathrm{2}}}}  ) $.
  $ \mathit{dom}  (  \theta  ) $ denotes the set of type variables mapped by
  $\theta$.
 \end{defi}
 \begin{defi}[Value Assignments]
  A \emph{value assignment} $\delta$ is a finite map from term variables to pairs
  $(\ottnt{v_{{\mathrm{1}}}},\ottnt{v_{{\mathrm{2}}}})$ such that $\ottnt{v_{{\mathrm{1}}}} \, \in \,  \mathsf{Val}  (  \ottnt{T}  ) $ for some type $\ottnt{T}$ and
  $\ottnt{v_{{\mathrm{2}}}} \, \in \, \mathsf{UVal}$.
  We write $ \delta    [  \,  (  \ottnt{v_{{\mathrm{1}}}}  ,  \ottnt{v_{{\mathrm{2}}}}  ) /  \mathit{x}  \,  ]  $ for the same mapping as $\delta$
  except that $\mathit{x}$ is mapped to $(\ottnt{v_{{\mathrm{1}}}},\ottnt{v_{{\mathrm{2}}}})$.
  We also write $\delta_{\ottmv{i}}$ ($\ottmv{i} \, \in \,  \{   \ottsym{1}  \ottsym{,}  \ottsym{2}   \} $) for a substitution that maps
  term variables $\mathit{x}$ to values $\ottnt{v_{\ottmv{i}}}$ such that $ \delta  (  \mathit{x}  ) = (  \ottnt{v_{{\mathrm{1}}}} ,  \ottnt{v_{{\mathrm{2}}}}  ) $.
  $ \mathit{dom}  (  \delta  ) $ denotes the set of term variables mapped by
  $\delta$.
 \end{defi}

  \begin{fhfigure*}[t!]
  \begin{flushleft}
   \noindent
   \framebox{$ \ottnt{v_{{\mathrm{1}}}}  \simeq_{\mathtt{v} }  \ottnt{v_{{\mathrm{2}}}}   \ottsym{:}   \ottnt{T} ;  \theta ;  \delta $} \quad {\bf{Value Relation}}
  \end{flushleft}
  \[\begin{array}{rcl}
    \ottnt{v_{{\mathrm{1}}}}  \simeq_{\mathtt{v} }  \ottnt{v_{{\mathrm{2}}}}   \ottsym{:}   \alpha ;  \theta ;  \delta  & \iff &
    \exists \,   \ottnt{r}  ,  \ottnt{T_{{\mathrm{1}}}}   ,  \ottnt{T_{{\mathrm{2}}}}   .~   \theta  (  \alpha  ) = (  \ottnt{r} ,  \ottnt{T_{{\mathrm{1}}}} ,  \ottnt{T_{{\mathrm{2}}}}  )  \text{ and } \ottsym{(}  \ottnt{v_{{\mathrm{1}}}}  \ottsym{,}  \ottnt{v_{{\mathrm{2}}}}  \ottsym{)} \, \in \, \ottnt{r} \\[1ex]
    \ottnt{v_{{\mathrm{1}}}}  \simeq_{\mathtt{v} }  \ottnt{v_{{\mathrm{2}}}}   \ottsym{:}   \ottnt{B} ;  \theta ;  \delta  & \iff &
    \ottnt{v_{{\mathrm{1}}}}  \ottsym{=}  \ottnt{v_{{\mathrm{2}}}} \text{ and } \ottnt{v_{{\mathrm{1}}}} \, \in \,  {\cal K}_{ \ottnt{B} }  \\[1ex]
    \ottnt{v_{{\mathrm{1}}}}  \simeq_{\mathtt{v} }  \ottnt{v_{{\mathrm{2}}}}   \ottsym{:}    \mathit{x} \mathord{:} \ottnt{T_{{\mathrm{1}}}} \rightarrow \ottnt{T_{{\mathrm{2}}}}  ;  \theta ;  \delta  & \iff &
     \forall \ottnt{v'_{{\mathrm{1}}}},\ottnt{v'_{{\mathrm{2}}}}.\;  \ottnt{v'_{{\mathrm{1}}}}  \simeq_{\mathtt{v} }  \ottnt{v'_{{\mathrm{2}}}}   \ottsym{:}   \ottnt{T_{{\mathrm{1}}}} ;  \theta ;  \delta  \text{ implies }
       \\ && \qquad\qquad
        \ottnt{v_{{\mathrm{1}}}} \, \ottnt{v'_{{\mathrm{1}}}}  \simeq_{\mathtt{e} }  \ottnt{v_{{\mathrm{2}}}} \, \ottnt{v'_{{\mathrm{2}}}}   \ottsym{:}   \ottnt{T_{{\mathrm{2}}}} ;  \theta ;   \delta    [  \,  (  \ottnt{v'_{{\mathrm{1}}}}  ,  \ottnt{v'_{{\mathrm{2}}}}  ) /  \mathit{x}  \,  ]   
   \\[1ex]
    \ottnt{v_{{\mathrm{1}}}}  \simeq_{\mathtt{v} }  \ottnt{v_{{\mathrm{2}}}}   \ottsym{:}    \forall   \alpha  .  \ottnt{T}  ;  \theta ;  \delta  & \iff &
    \forall \ottnt{T_{{\mathrm{1}}}}, \ottnt{T_{{\mathrm{2}}}}, \ottnt{r}.\; \langle  \ottnt{r}  \ottsym{,}  \ottnt{T_{{\mathrm{1}}}}  \ottsym{,}  \ottnt{T_{{\mathrm{2}}}}  \rangle \text{ implies }
     \\ && \qquad\qquad
      \ottnt{v_{{\mathrm{1}}}} \, \ottnt{T_{{\mathrm{1}}}}  \simeq_{\mathtt{e} }  \ottnt{v_{{\mathrm{2}}}} \, \ottnt{T_{{\mathrm{2}}}}   \ottsym{:}   \ottnt{T} ;  \theta \,  \{  \,  \alpha  \mapsto  \ottnt{r} , \ottnt{T_{{\mathrm{1}}}} , \ottnt{T_{{\mathrm{2}}}}  \,  \}  ;  \delta 
   \\[1ex]
    \ottnt{v_{{\mathrm{1}}}}  \simeq_{\mathtt{v} }  \ottnt{v_{{\mathrm{2}}}}   \ottsym{:}    \{  \mathit{x}  \mathord{:}  \ottnt{T}   \mathop{\mid}   \ottnt{e}  \}  ;  \theta ;  \delta  & \iff &
      \ottnt{v_{{\mathrm{1}}}}  \simeq_{\mathtt{e} }  \ottnt{v_{{\mathrm{2}}}}   \ottsym{:}   \ottnt{T} ;  \theta ;  \delta  \text{ and } \\ &&
      \theta_{{\mathrm{1}}}  (   \delta_{{\mathrm{1}}}  (  \ottnt{e} \, [  \ottnt{v_{{\mathrm{1}}}}  \ottsym{/}  \mathit{x}  ]  )   )   \longrightarrow^{\ast}   \mathsf{true}  \text{ and }
      \theta_{{\mathrm{2}}}  (   \delta_{{\mathrm{2}}}  (  \ottnt{e} \, [  \ottnt{v_{{\mathrm{2}}}}  \ottsym{/}  \mathit{x}  ]  )   )   \longrightarrow^{\ast}   \mathsf{true}  \\
  \end{array}\] \\[2ex]

  \begin{flushleft}
   \noindent
   \framebox{$ \ottnt{e_{{\mathrm{1}}}}  \simeq_{\mathtt{e} }  \ottnt{e_{{\mathrm{2}}}}   \ottsym{:}   \ottnt{T} ;  \theta ;  \delta $} \quad {\bf{Term Relation}}
  \end{flushleft}
  \[\begin{array}{rcl}
    \ottnt{e_{{\mathrm{1}}}}  \simeq_{\mathtt{e} }  \ottnt{e_{{\mathrm{2}}}}   \ottsym{:}   \ottnt{T} ;  \theta ;  \delta  & \iff &
    \ottnt{e_{{\mathrm{1}}}}  \longrightarrow^{\ast}   \mathord{\Uparrow}  \ell  \text{ and } \ottnt{e_{{\mathrm{2}}}}  \longrightarrow^{\ast}   \mathord{\Uparrow}  \ell , \text{ or } \\ &&
    \ottnt{e_{{\mathrm{1}}}}  \longrightarrow^{\ast}  \ottnt{v_{{\mathrm{1}}}} \text{ and } \ottnt{e_{{\mathrm{2}}}}  \longrightarrow^{\ast}  \ottnt{v_{{\mathrm{2}}}} \text{ and }
      \ottnt{v_{{\mathrm{1}}}}  \simeq_{\mathtt{v} }  \ottnt{v_{{\mathrm{2}}}}   \ottsym{:}   \ottnt{T} ;  \theta ;  \delta 
  \end{array}\]

  \begin{flushleft}
   \noindent
   \framebox{$ \ottnt{T_{{\mathrm{1}}}}  \simeq  \ottnt{T_{{\mathrm{2}}}}   \ottsym{:}   \ast ;  \theta ;  \delta $} \quad {\bf{Type Relation}}
  \end{flushleft}
  \[\begin{array}{rcl}
     \ottnt{B}  \simeq  \ottnt{B}   \ottsym{:}   \ast ;  \theta ;  \delta  \\[1ex]

     \alpha  \simeq  \alpha   \ottsym{:}   \ast ;  \theta ;  \delta  \\[1ex]

      \mathit{x} \mathord{:} \ottnt{T_{{\mathrm{11}}}} \rightarrow \ottnt{T_{{\mathrm{12}}}}   \simeq   \mathit{x} \mathord{:} \ottnt{T_{{\mathrm{21}}}} \rightarrow \ottnt{T_{{\mathrm{22}}}}    \ottsym{:}   \ast ;  \theta ;  \delta  & \iff &
      \ottnt{T_{{\mathrm{11}}}}  \simeq  \ottnt{T_{{\mathrm{21}}}}   \ottsym{:}   \ast ;  \theta ;  \delta  \AND \\
     & & \quad
      \forall   \ottnt{v_{{\mathrm{1}}}}  ,  \ottnt{v_{{\mathrm{2}}}}   .~   \ottnt{v_{{\mathrm{1}}}}  \simeq_{\mathtt{v} }  \ottnt{v_{{\mathrm{2}}}}   \ottsym{:}   \ottnt{T_{{\mathrm{11}}}} ;  \theta ;  \delta  \, \mathbin{ \text{implies} } \,   \\  \,  &  \,  &  \,  \quad  \,  \quad  \, \ottnt{T_{{\mathrm{12}}}}  \simeq  \ottnt{T_{{\mathrm{22}}}}   \ottsym{:}   \ast ;  \theta ;   \delta    [  \,  (  \ottnt{v_{{\mathrm{1}}}}  ,  \ottnt{v_{{\mathrm{2}}}}  ) /  \mathit{x}  \,  ]   
      \\[1ex]

      \forall   \alpha  .  \ottnt{T_{{\mathrm{1}}}}   \simeq   \forall   \alpha  .  \ottnt{T_{{\mathrm{2}}}}    \ottsym{:}   \ast ;  \theta ;  \delta  & \iff &
      \forall    \ottnt{T'_{{\mathrm{1}}}}  ,  \ottnt{T'_{{\mathrm{2}}}}   ,  \ottnt{r}   .~  \langle  \ottnt{r}  \ottsym{,}  \ottnt{T'_{{\mathrm{1}}}}  \ottsym{,}  \ottnt{T'_{{\mathrm{2}}}}  \rangle \, \mathbin{ \text{implies} } \,   \\  \,  &  \,  &  \,  \quad  \, \ottnt{T_{{\mathrm{1}}}}  \simeq  \ottnt{T_{{\mathrm{2}}}}   \ottsym{:}   \ast ;  \theta \,  \{  \,  \alpha  \mapsto  \ottnt{r} , \ottnt{T'_{{\mathrm{1}}}} , \ottnt{T'_{{\mathrm{2}}}}  \,  \}  ;  \delta 
      \\[1ex]

      \{  \mathit{x}  \mathord{:}  \ottnt{T_{{\mathrm{1}}}}   \mathop{\mid}   \ottnt{e_{{\mathrm{1}}}}  \}   \simeq   \{  \mathit{x}  \mathord{:}  \ottnt{T_{{\mathrm{2}}}}   \mathop{\mid}   \ottnt{e_{{\mathrm{2}}}}  \}    \ottsym{:}   \ast ;  \theta ;  \delta  & \iff &
      \ottnt{T_{{\mathrm{1}}}}  \simeq  \ottnt{T_{{\mathrm{2}}}}   \ottsym{:}   \ast ;  \theta ;  \delta  \AND \\
     & & \quad
     \forall   \ottnt{v_{{\mathrm{1}}}}  ,  \ottnt{v_{{\mathrm{2}}}}   .~   \ottnt{v_{{\mathrm{1}}}}  \simeq_{\mathtt{v} }  \ottnt{v_{{\mathrm{2}}}}   \ottsym{:}   \ottnt{T_{{\mathrm{1}}}} ;  \theta ;  \delta  \, \mathbin{ \text{implies} } \,  \\  \,  &  \,  &  \,  \quad  \,  \quad  \,   \theta_{{\mathrm{1}}}  (   \delta_{{\mathrm{1}}}  (  \ottnt{e_{{\mathrm{1}}}} \, [  \ottnt{v_{{\mathrm{1}}}}  \ottsym{/}  \mathit{x}  ]  )   )   \simeq_{\mathtt{e} }   \theta_{{\mathrm{2}}}  (   \delta_{{\mathrm{2}}}  (  \ottnt{e_{{\mathrm{2}}}} \, [  \ottnt{v_{{\mathrm{2}}}}  \ottsym{/}  \mathit{x}  ]  )   )    \ottsym{:}    \mathsf{Bool}  ;  \theta ;  \delta 
  \end{array}\]

  \caption{Value, term, and type relations}
  \label{fig:logirel}
 \end{fhfigure*}

 \begin{defi}[Value, Term, and Type Relations]
  \label{def:act-type}
  We define the value relation $ \ottnt{v_{{\mathrm{1}}}}  \simeq_{\mathtt{v} }  \ottnt{v_{{\mathrm{2}}}}   \ottsym{:}   \ottnt{T} ;  \theta ;  \delta $,
  the term relation $ \ottnt{e_{{\mathrm{1}}}}  \simeq_{\mathtt{e} }  \ottnt{e_{{\mathrm{2}}}}   \ottsym{:}   \ottnt{T} ;  \theta ;  \delta $, and the type relation
  $ \ottnt{T_{{\mathrm{1}}}}  \simeq  \ottnt{T_{{\mathrm{2}}}}   \ottsym{:}   \ast ;  \theta ;  \delta $ by using the rules in \fig{logirel}.
  In these relations, values and terms (resp.\ types) on the left hand side are
  closed and well typed (resp.\ well formed) and those on the right hand side
  are closed:
  \begin{itemize}
   \item if $ \ottnt{v_{{\mathrm{1}}}}  \simeq_{\mathtt{v} }  \ottnt{v_{{\mathrm{2}}}}   \ottsym{:}   \ottnt{T} ;  \theta ;  \delta $, then
         $\emptyset  \vdash  \ottnt{v_{{\mathrm{1}}}}  \ottsym{:}   \theta_{{\mathrm{1}}}  (   \delta_{{\mathrm{1}}}  (  \ottnt{T}  )   ) $ and
         $ \mathit{FV}  (  \ottnt{v_{{\mathrm{2}}}}  )   \mathrel{\cup}   \mathit{FTV}  (  \ottnt{v_{{\mathrm{2}}}}  )   \ottsym{=}  \emptyset$;
   \item if $ \ottnt{e_{{\mathrm{1}}}}  \simeq_{\mathtt{e} }  \ottnt{e_{{\mathrm{2}}}}   \ottsym{:}   \ottnt{T} ;  \theta ;  \delta $, then
         $\emptyset  \vdash  \ottnt{e_{{\mathrm{1}}}}  \ottsym{:}   \theta_{{\mathrm{1}}}  (   \delta_{{\mathrm{1}}}  (  \ottnt{T}  )   ) $ and
         $ \mathit{FV}  (  \ottnt{e_{{\mathrm{2}}}}  )   \mathrel{\cup}   \mathit{FTV}  (  \ottnt{e_{{\mathrm{2}}}}  )   \ottsym{=}  \emptyset$; and
   \item if $ \ottnt{T_{{\mathrm{1}}}}  \simeq  \ottnt{T_{{\mathrm{2}}}}   \ottsym{:}   \ast ;  \theta ;  \delta $, then
         $\emptyset  \vdash   \theta_{{\mathrm{1}}}  (   \delta_{{\mathrm{1}}}  (  \ottnt{T_{{\mathrm{1}}}}  )   ) $ and
         $ \mathit{FV}  (  \ottnt{T_{{\mathrm{2}}}}  )   \mathrel{\cup}   \mathit{FTV}  (  \ottnt{T_{{\mathrm{2}}}}  )   \subseteq   \mathit{dom}  (  \theta  )   \mathrel{\cup}   \mathit{dom}  (  \delta  ) $.
  \end{itemize}
 \end{defi}

 The definitions of value, term, and type relations are quite similar to the
 prior
 work~\cite{Belo/Greenberg/Igarashi/Pierce_2011_ESOP,Sekiyama/Igarashi/Greenberg_2016_TOPLAS},
 but we explain them here briefly.
 Value relations on $\ottnt{B}$ and $\alpha$ are standard.
 Related values $\ottnt{v_{{\mathrm{1}}}}$ and $\ottnt{v_{{\mathrm{2}}}}$ at function type
 $ \mathit{x} \mathord{:} \ottnt{T_{{\mathrm{1}}}} \rightarrow \ottnt{T_{{\mathrm{2}}}} $ have to produce related values when applied to
 related arguments $\ottnt{v'_{{\mathrm{1}}}}$ and $\ottnt{v'_{{\mathrm{2}}}}$ at $\ottnt{T_{{\mathrm{1}}}}$.
 Since $\ottnt{T_{{\mathrm{2}}}}$ may depend on arguments, $\ottnt{v'_{{\mathrm{1}}}}$ and $\ottnt{v'_{{\mathrm{2}}}}$ are recorded in value
 assignment $\delta$ so that the arguments can be referred to by
 refinements in $\ottnt{T_{{\mathrm{2}}}}$.
 Values related at $ \forall   \alpha  .  \ottnt{T} $ produces related values, regardless of 
 the interpretation $(\ottnt{r}, \ottnt{T_{{\mathrm{1}}}}, \ottnt{T_{{\mathrm{2}}}})$ for $\alpha$.
 %
 % We require interpretation $\ottnt{r}$ to satisfy the requirements discussed above.
 %
 Values related at $ \{  \mathit{x}  \mathord{:}  \ottnt{T}   \mathop{\mid}   \ottnt{e}  \} $ have to be related at the underlying type
 $\ottnt{T}$ and satisfy refinement $\ottnt{e}$.
 What values and types should be substituted for free variables in $\ottnt{e}$ are
 found in $\delta$ and $\theta$; we evaluate the refinement obtained by applying $\theta$ and $\delta$.
 Term relations contain terms that raise blame with the same label or
 evaluate to related values.
 Type relations, intuitively, relate types with the same ``denotation.''
 Function types are related if both domain and codomain types are related.
 The codomain types may depend on values of the domain types, so we require them
 to be related under an extension of $\delta$ with any pair of values related
 at the \emph{well-formed} domain type---we choose the well-formed type, not the
 possibly ill-formed one, since the index type in a value relation has to be well
 formed (Definition~\ref{def:act-type}).
 Universal types $ \forall   \alpha  .  \ottnt{T_{{\mathrm{1}}}} $ and $ \forall   \alpha  .  \ottnt{T_{{\mathrm{2}}}} $ are related if $\ottnt{T_{{\mathrm{1}}}}$
 and $\ottnt{T_{{\mathrm{2}}}}$ are related under an extension of $\theta$ with any
 interpretation.
 Refinement types are related if both the underlying types and the refinements
 are related; we choose values for the bound variable from the value relation
 indexed by the underlying type $\ottnt{T_{{\mathrm{1}}}}$ on the left hand side because it is well formed.

 Now, we extend term relations for closed terms to open terms.
\begin{defi}
  The relation $\Gamma  \vdash  \theta  \ottsym{;}  \delta$ is defined by: $\Gamma  \vdash  \theta  \ottsym{;}  \delta$
  if and only if
  \begin{enumerate}
   \item for any $\alpha \, \in \, \Gamma$, $\alpha \, \in \,  \mathit{dom}  (  \theta  ) $ and
   \item for any $ \mathit{x}  \mathord{:}  \ottnt{T}   \in   \Gamma $,
         $ {}   \delta_{{\mathrm{1}}}  (  \mathit{x}  )   {}  \simeq_{\mathtt{v} }  {}   \delta_{{\mathrm{2}}}  (  \mathit{x}  )   {}   \ottsym{:}   \ottnt{T} ;  \theta ;  \delta $.
  \end{enumerate}
 \end{defi}

  \begin{defi}[Logical Relation]
   Terms $\ottnt{e_{{\mathrm{1}}}}$ and $\ottnt{e_{{\mathrm{2}}}}$ are \emph{logically related at
   $\ottnt{T}$ under $\Gamma$}, written $\Gamma  \vdash  \ottnt{e_{{\mathrm{1}}}} \,  \mathrel{ \simeq }  \, \ottnt{e_{{\mathrm{2}}}}  \ottsym{:}  \ottnt{T}$, if and only if
   (1) $\Gamma  \vdash  \ottnt{e_{{\mathrm{1}}}}  \ottsym{:}  \ottnt{T}$, (2) $ \mathit{FV}  (  \ottnt{e_{{\mathrm{2}}}}  )   \mathrel{\cup}   \mathit{FTV}  (  \ottnt{e_{{\mathrm{2}}}}  )   \subseteq   \mathit{dom}  (  \Gamma  ) $, and
   (3) 
   $  \theta_{{\mathrm{1}}}  (   \delta_{{\mathrm{1}}}  (  \ottnt{e_{{\mathrm{1}}}}  )   )   \simeq_{\mathtt{e} }   \theta_{{\mathrm{2}}}  (   \delta_{{\mathrm{2}}}  (  \ottnt{e_{{\mathrm{2}}}}  )   )    \ottsym{:}   \ottnt{T} ;  \theta ;  \delta $,
   for any $\theta$ and $\delta$ such that $\Gamma  \vdash  \theta  \ottsym{;}  \delta$.
   Similarly, types $\ottnt{T_{{\mathrm{1}}}}$ and $\ottnt{T_{{\mathrm{2}}}}$ are \emph{logically related under
   $\Gamma$}, written $ \Gamma \vdash \ottnt{T_{{\mathrm{1}}}}  \mathrel{ \simeq }  \ottnt{T_{{\mathrm{2}}}}  : \ast $, if and only if
   (1) $\Gamma  \vdash  \ottnt{T_{{\mathrm{1}}}}$, (2) $ \mathit{FV}  (  \ottnt{T_{{\mathrm{2}}}}  )   \mathrel{\cup}   \mathit{FTV}  (  \ottnt{T_{{\mathrm{2}}}}  )   \subseteq   \mathit{dom}  (  \Gamma  ) $, and
   (3) $ \ottnt{T_{{\mathrm{1}}}}  \simeq  \ottnt{T_{{\mathrm{2}}}}   \ottsym{:}   \ast ;  \theta ;  \delta $,
   for any $\theta$ and $\delta$ such that $\Gamma  \vdash  \theta  \ottsym{;}  \delta$.
 \end{defi}

 \subsection{Soundness and Completeness}
 \label{sec:logical_relation-statement}

 We state the soundness and the completeness of the logical relation with
 respect to the semityped contextual equivalence; we prove them in \sect{proving}.
 \begin{restatable}[Soundness]{thm}{fhlrsound}
  \label{thm:fh-lr-sound}
  For any $\Gamma_{{\mathrm{1}}}, ..., \Gamma_{\ottmv{n}}$, $\ottnt{e_{{\mathrm{11}}}}, ...,  { \ottnt{e} }_{  \ottsym{1} \mathit{n}  } $,
  $\ottnt{e_{{\mathrm{21}}}}, ...,  { \ottnt{e} }_{  \ottsym{2} \mathit{n}  } $, and $\ottnt{T_{{\mathrm{1}}}}, ..., \ottnt{T_{\ottmv{n}}}$,
  if $\Gamma_{\ottmv{i}}  \vdash   { \ottnt{e} }_{  \ottsym{1} \ottmv{i}  }  \,  \mathrel{ \simeq }  \,  { \ottnt{e} }_{  \ottsym{2} \ottmv{i}  }   \ottsym{:}  \ottnt{T_{\ottmv{i}}}$ for $\ottmv{i} \, \in \,  \{  \, \ottsym{1}  ,\, ... \, ,  \mathit{n} \,  \} $,
  then $ \overline{ \Gamma_{\ottmv{i}}   \vdash    { \ottnt{e} }_{  \ottsym{1} \ottmv{i}  }    =_\mathsf{ctx}    { \ottnt{e} }_{  \ottsym{2} \ottmv{i}  }    \ottsym{:}   \ottnt{T_{\ottmv{i}}} }^{ \ottmv{i} } $.
 \end{restatable}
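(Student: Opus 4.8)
The plan is to factor the soundness proof into a \emph{congruence} (compatibility) lemma for the logical relation together with an \emph{adequacy} lemma connecting the closed logical relation to observable equivalence, and then to instantiate these at the outermost closing context. Conditions~(1) of semityped contextual equivalence---that each $e_{1i}$ is well typed at $T_i$ under $\Gamma_i$ and that $ \mathit{FV}(e_{2i}) \cup \mathit{FTV}(e_{2i}) \subseteq \mathit{dom}(\Gamma_i)$---are exactly conditions~(1) and~(2) of the hypothesis $\Gamma_i \vdash e_{1i} \simeq e_{2i} : T_i$, so they hold at once. It remains to establish condition~(2): for every $T$ and every $C$ with $\emptyset \vdash C : \overline{\Gamma_i \vdash e_{1i} : T_i}^i \mathrel{\circ\hspace{-.4em}\rightarrow} T$, we must show $C[\overline{e_{1i}}^i] \Downarrow C[\overline{e_{2i}}^i]$.

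First I would prove the \textbf{congruence lemma}: if $\Gamma \vdash C : \overline{\Gamma_i \vdash e_{1i}:T_i}^i \mathrel{\circ\hspace{-.4em}\rightarrow} T$ and $\Gamma_i \vdash e_{1i} \simeq e_{2i} : T_i$ for every $i$, then $\Gamma \vdash C[\overline{e_{1i}}^i] \simeq C[\overline{e_{2i}}^i] : T$. This goes by induction on the context well-formedness derivation (Figures~\ref{fig:ctx-rules} and~\ref{fig:ctx-typ-rules}), with one \emph{compatibility lemma} per rule asserting that $\simeq$ respects the corresponding term/type constructor. In the base case \CT{Hole} the goal is discharged directly by the $j$-th hypothesis (after weakening to $\Gamma$), while \CT{Var}, \CT{Const}, and \CT{Blame} reduce to reflexivity facts, and the inductive cases invoke the compatibility lemmas for application, abstraction, casts, waiting and active checks, and so on. Since terms are contexts with zero holes, the hole-free sub-derivations require the \emph{fundamental property} (every well-typed term is logically related to itself), which I would obtain from the same compatibility lemmas by induction on typing derivations; in practice the fundamental property and the congruence lemma are proved together.

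Second, the \textbf{adequacy lemma}: if $\emptyset \vdash e_1 : T$ and $e_1 \simeq_{\mathtt{e}} e_2 : T; \emptyset; \emptyset$, then $e_1 \Downarrow e_2$. Unfolding the term relation, either both terms reduce to $\mathord{\Uparrow}\ell$ with a \emph{common} label $\ell$, or both reduce to values; determinism (\prop:ref{fh-eval-determinism}) and type soundness (\refthm{fh-type-sound}) then exclude the stuck case for the well-typed side $e_1$, while the relation forces $e_2$ to converge to a value or to blame as well, so neither side is stuck. Matching the three observations---termination, blame at each $\ell$, and stuckness---yields $e_1 \Downarrow e_2$. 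To finish, given a closing $\emptyset \vdash C : \overline{\Gamma_i \vdash e_{1i}:T_i}^i \mathrel{\circ\hspace{-.4em}\rightarrow} T$, the congruence lemma gives $\emptyset \vdash C[\overline{e_{1i}}^i] \simeq C[\overline{e_{2i}}^i] : T$; instantiating the definition of $\simeq$ with the trivially valid empty $\theta$ and $\delta$ produces $C[\overline{e_{1i}}^i] \simeq_{\mathtt{e}} C[\overline{e_{2i}}^i] : T; \emptyset; \emptyset$, and adequacy delivers $C[\overline{e_{1i}}^i] \Downarrow C[\overline{e_{2i}}^i]$.

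The hard part will be the compatibility lemmas inside the congruence lemma, and two features of {\fhfix} make them delicate. The first is \emph{dependency}: in the application case the result type $T_2'[C_2[\ldots]/x]$ and the value assignment $\delta$ must be threaded so that refinements in codomain types see the actual arguments, and the two sides record \emph{different} values in $\delta$; reconciling the resulting index types rests on the cotermination property (\prop:ref{fh-coterm-true}) and value inversion (\prop:ref{fh-val-satis-c}), just as the typing rule \T{Conv} is handled. The second is \emph{semitypedness}: because only the left component is guaranteed well typed, no compatibility argument may use typing information about the right-hand terms, so each must be driven purely by the relational definitions in \fig{logirel}. Finally, the cast and refinement-check cases must preserve the closure conditions built into $ \mathsf{VRel}(T_1,T_2) $---closure under the wrappers produced by reflexive casts and under CIU-equivalence---which is precisely where the polymorphic cast $\langle \alpha \Rightarrow \alpha \rangle^{\ell}$ exercises the first condition and where care is needed to keep every assigned relation inside the intended universe.
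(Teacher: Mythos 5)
Your proposal is correct and follows essentially the same route as the paper: compatibility/congruence lemmas indexed by the context typing rules, the fundamental property (parametricity) to handle the hole-free subderivations and the dependency in application, and adequacy instantiated at the empty closing context. The only organizational detail you elide is that the paper first proves the compatibility lemmas under explicit \emph{self-relatedness} assumptions (that $\Gamma$ and the relevant index types are related to themselves), then derives parametricity from those lemmas, and only then discharges the assumptions---a precise version of your remark that the fundamental property and the congruence lemma are ``proved together.''
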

 \begin{restatable}[Completeness with respect to Typed Terms]{thm}{fhlrcompletetyped}
  \label{thm:fh-lr-complete-typed}
 If
 $ \overline{ \Gamma_{\ottmv{i}}  \vdash   { \ottnt{e} }_{  \ottsym{1} \ottmv{i}  }  \, =_\mathsf{ctx} \,  { \ottnt{e} }_{  \ottsym{2} \ottmv{i}  }   \ottsym{:}  \ottnt{T_{\ottmv{i}}} }^{ \ottmv{i} } $ and
 $\Gamma_{\ottmv{j}}  \vdash   { \ottnt{e} }_{  \ottsym{2} \ottmv{j}  }   \ottsym{:}  \ottnt{T_{\ottmv{j}}}$ for any $\ottmv{j}$,
 then
 $\Gamma_{\ottmv{j}}  \vdash   { \ottnt{e} }_{  \ottsym{1} \ottmv{j}  }  \,  \mathrel{ \simeq }  \,  { \ottnt{e} }_{  \ottsym{2} \ottmv{j}  }   \ottsym{:}  \ottnt{T_{\ottmv{j}}}$ for any $\ottmv{j}$.
 \end{restatable}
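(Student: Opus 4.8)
The plan is to route through (semityped) CIU-equivalence, following the program sketched in \sect{logical_relation-overview}: contextual equivalence, CIU-equivalence, and the logical relation all coincide on well-typed terms. First I would reduce the $n$-ary statement to a single pair. Since the logical relation $\Gamma \vdash e_1 \mathrel{\simeq} e_2 : \ottnt{T}$ is itself a binary (single-pair) relation, the conclusion is pointwise; and by the drop-pairs observation noted after Definition~\ref{def:ctxeq}, the hypothesis $ \overline{ \Gamma_{\ottmv{i}}  \vdash   { \ottnt{e} }_{  \ottsym{1} \ottmv{i}  }  \, =_\mathsf{ctx} \,  { \ottnt{e} }_{  \ottsym{2} \ottmv{i}  }   \ottsym{:}  \ottnt{T_{\ottmv{i}}} }^{ \ottmv{i} } $ yields, for each fixed $\ottmv{j}$, the single-pair equivalence $\Gamma_{\ottmv{j}}  \vdash   { \ottnt{e} }_{  \ottsym{1} \ottmv{j}  }  \, =_\mathsf{ctx} \,  { \ottnt{e} }_{  \ottsym{2} \ottmv{j}  }   \ottsym{:}  \ottnt{T_{\ottmv{j}}}$, while $\Gamma_{\ottmv{j}}  \vdash   { \ottnt{e} }_{  \ottsym{2} \ottmv{j}  }   \ottsym{:}  \ottnt{T_{\ottmv{j}}}$ is assumed separately. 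It therefore suffices to prove: if $\Gamma  \vdash  \ottnt{e_{{\mathrm{1}}}} \, =_\mathsf{ctx} \, \ottnt{e_{{\mathrm{2}}}}  \ottsym{:}  \ottnt{T}$ and both $\ottnt{e_{{\mathrm{1}}}}$ and $\ottnt{e_{{\mathrm{2}}}}$ are well typed at $\ottnt{T}$ under $\Gamma$, then $\Gamma  \vdash  \ottnt{e_{{\mathrm{1}}}} \,  \mathrel{ \simeq }  \, \ottnt{e_{{\mathrm{2}}}}  \ottsym{:}  \ottnt{T}$.

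The first main step is the ``easy'' direction $=_\mathsf{ctx} \Rightarrow =_\mathsf{ciu}$: from $\Gamma  \vdash  \ottnt{e_{{\mathrm{1}}}} \, =_\mathsf{ctx} \, \ottnt{e_{{\mathrm{2}}}}  \ottsym{:}  \ottnt{T}$ derive $\Gamma  \vdash  \ottnt{e_{{\mathrm{1}}}} \, =_\mathsf{ciu} \, \ottnt{e_{{\mathrm{2}}}}  \ottsym{:}  \ottnt{T}$. Given a closing substitution $\sigma$ with $\Gamma  \vdash  \sigma$ and a static evaluation context $\ottnt{E}^\ottnt{S}$ well formed for $\sigma(\ottnt{e_{{\mathrm{1}}}})$, I would package the effect of $\sigma$ and $\ottnt{E}^\ottnt{S}$ into a single multi-hole program context $\ottnt{C}$: bind the variables of $\Gamma$ by term/type abstractions, feed in the values and types prescribed by $\sigma$ (via $\mathsf{let}$-bindings and type applications), and plug the result into $\ottnt{E}^\ottnt{S}$. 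Using the composition lemma (\prop:ref{fh-ctxeq-typed}) and the reductions of these administrative redexes, $\ottnt{C}  [  \ottnt{e_{\ottmv{k}}}  ]$ has the same observable behaviour as $\ottnt{E}^\ottnt{S}  [   \sigma  (  \ottnt{e_{\ottmv{k}}}  )   ]$, so the contextual-equivalence clause $\ottnt{C}  [  \ottnt{e_{{\mathrm{1}}}}  ]  \Downarrow  \ottnt{C}  [  \ottnt{e_{{\mathrm{2}}}}  ]$ gives exactly the CIU clause $\ottnt{E}^\ottnt{S}  [   \sigma  (  \ottnt{e_{{\mathrm{1}}}}  )   ]  \Downarrow  \ottnt{E}^\ottnt{S}  [   \sigma  (  \ottnt{e_{{\mathrm{2}}}}  )   ]$. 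It is essential (and convenient, given semitypedness) that $\ottnt{C}$ need only be well formed for the typed side $\ottnt{e_{{\mathrm{1}}}}$.

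The second, harder main step is $=_\mathsf{ciu} \Rightarrow \mathrel{\simeq}$. Because $\ottnt{e_{{\mathrm{2}}}}$ is well typed, the parametricity (fundamental) property of the logical relation gives $\Gamma  \vdash  \ottnt{e_{{\mathrm{2}}}} \,  \mathrel{ \simeq }  \, \ottnt{e_{{\mathrm{2}}}}  \ottsym{:}  \ottnt{T}$. The logical relation is closed under CIU-equivalence on the first component---this is precisely the closure condition baked into $ \mathsf{VRel}  (  \ottnt{T_{{\mathrm{1}}}} ,  \ottnt{T_{{\mathrm{2}}}}  ) $, namely that $\ottsym{(}  \ottnt{v_{{\mathrm{1}}}}  \ottsym{,}  \ottnt{v_{{\mathrm{2}}}}  \ottsym{)} \in R$ and $\emptyset  \vdash  \ottnt{v} \, =_\mathsf{ciu} \, \ottnt{v_{{\mathrm{1}}}}  \ottsym{:}  \ottnt{T_{{\mathrm{1}}}}$ imply $\ottsym{(}  \ottnt{v}  \ottsym{,}  \ottnt{v_{{\mathrm{2}}}}  \ottsym{)} \in \mathit{R}$. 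Propagating this through the inductive definition of the value, term, and type relations (and through $\theta,\delta$ under closing substitutions), from $\Gamma  \vdash  \ottnt{e_{{\mathrm{2}}}} \,  \mathrel{ \simeq }  \, \ottnt{e_{{\mathrm{2}}}}  \ottsym{:}  \ottnt{T}$ and $\Gamma  \vdash  \ottnt{e_{{\mathrm{1}}}} \, =_\mathsf{ciu} \, \ottnt{e_{{\mathrm{2}}}}  \ottsym{:}  \ottnt{T}$ I would conclude $\Gamma  \vdash  \ottnt{e_{{\mathrm{1}}}} \,  \mathrel{ \simeq }  \, \ottnt{e_{{\mathrm{2}}}}  \ottsym{:}  \ottnt{T}$ by replacing the first $\ottnt{e_{{\mathrm{2}}}}$ with the CIU-equivalent $\ottnt{e_{{\mathrm{1}}}}$. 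The three ingredients---drop-pairs reduction, $=_\mathsf{ctx} \Rightarrow =_\mathsf{ciu}$, and parametricity plus closure under $=_\mathsf{ciu}$---then close the argument.

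The main obstacle is the last closure fact: that the closure-under-CIU condition imposed only on the base relations $\ottnt{r}$ genuinely lifts to a closure property of the whole logical relation at every type, i.e.\ the analog of Pitts' equivalence-respecting property. This needs a simultaneous induction showing that substituting a CIU-equivalent closed value (or closing-substitution image) preserves all three of $ \simeq_{\mathtt{v} } $, $ \simeq_{\mathtt{e} } $, and $ \simeq $; the dependent structure is what makes this delicate---refinement types evaluate their refinements under $\theta,\delta$, and dependent function types extend $\delta$, so the induction must track how CIU-equivalent substitutions interact with these evaluated refinements, where cotermination (\prop:ref{fh-coterm-true}) and value inversion (\prop:ref{fh-val-satis-c}) are likely needed. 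A secondary delicate point is verifying, in the $=_\mathsf{ctx} \Rightarrow =_\mathsf{ciu}$ step, that the administrative context $\ottnt{C}$ simulating $\sigma$ and $\ottnt{E}^\ottnt{S}$ is well formed for $\ottnt{e_{{\mathrm{1}}}}$ and that its extra reductions leave the observable outcome unchanged, again only on the typed side.
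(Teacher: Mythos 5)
Your proposal matches the paper's proof: the theorem is established exactly by composing $=_\mathsf{ctx}\subseteq{=_\mathsf{ciu}}$ (proved via the $\mathit{Abs}/\mathit{App}$ closing-context construction you describe, together with the context-composition and hole-reduction lemmas) with $=_\mathsf{ciu}\subseteq{\mathrel{\simeq}}$ on typed terms (proved via parametricity of $\ottnt{e_{{\mathrm{2}}}}$ plus the equivalence-respecting lemma, whose induction on types is exactly the lifting of the $ \mathsf{VRel} $ closure condition you flag as the main obstacle). No substantive differences.
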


\section{Proving soundness and completeness}
\label{sec:proving}

This section gives proofs of the soundness and the completeness of the logical relation.
The readers who read this paper for the first time can skip this section.

\subsection{Soundness}
 We start with describing an overview of the proof and then detail it.

\subsubsection{Overview}
 \label{sec:proving-overview}
 Following the prior work on program reasoning with logical
 relations~\cite{Pitts_2005_ATTAPL,Ahmed_2006_ESOP,Dreyer/Ahmed/Birkedal_2011_LMCS,Ahmed/Jamner/Siek/Wadler_2017_ICFP},
 our proof of the soundness rests on so-called the \emph{fundamental property},
 which states that a logical relation is closed under term
 constructors.\footnote{In some
 work~\cite{Ahmed_2006_ESOP,Dreyer/Ahmed/Birkedal_2011_LMCS,Ahmed/Jamner/Siek/Wadler_2017_ICFP}
 the fundamental property means reflexivity of logical relations, but in this
 work it does compatibility of the logical relation as in
 Pitts~\cite{Pitts_2005_ATTAPL}.}
 If we have the fundamental property, it is easy to show the soundness.

 In manifest contracts, dependency of types on terms makes proving the
 fundamental property difficult.
 To see it, let us try to prove that the logical relation is closed under the
 term application constructor:
 \begin{quotation}
  if $\Gamma  \vdash  \ottnt{e_{{\mathrm{11}}}} \,  \mathrel{ \simeq }  \, \ottnt{e_{{\mathrm{21}}}}  \ottsym{:}  \ottsym{(}   \mathit{x} \mathord{:} \ottnt{T_{{\mathrm{1}}}} \rightarrow \ottnt{T_{{\mathrm{2}}}}   \ottsym{)}$ and $\Gamma  \vdash  \ottnt{e_{{\mathrm{12}}}} \,  \mathrel{ \simeq }  \, \ottnt{e_{{\mathrm{22}}}}  \ottsym{:}  \ottnt{T_{{\mathrm{1}}}}$,
  then $\Gamma  \vdash  \ottnt{e_{{\mathrm{11}}}} \, \ottnt{e_{{\mathrm{12}}}} \,  \mathrel{ \simeq }  \, \ottnt{e_{{\mathrm{21}}}} \, \ottnt{e_{{\mathrm{22}}}}  \ottsym{:}  \ottnt{T_{{\mathrm{2}}}} \, [  \ottnt{e_{{\mathrm{12}}}}  \ottsym{/}  \mathit{x}  ]$.
 \end{quotation}
 A problem occurs in the case that $\ottnt{T_{{\mathrm{2}}}}$ is a refinement type
 $ \{  \mathit{y}  \mathord{:}  \ottnt{T'_{{\mathrm{2}}}}   \mathop{\mid}   \ottnt{e'_{{\mathrm{2}}}}  \} $.
 In that case, we have to prove that
 \[
   \ottnt{e_{{\mathrm{11}}}} \, \ottnt{e_{{\mathrm{12}}}}  \simeq_{\mathtt{e} }  \ottnt{e_{{\mathrm{21}}}} \, \ottnt{e_{{\mathrm{22}}}}   \ottsym{:}    \{  \mathit{y}  \mathord{:}  \ottnt{T'_{{\mathrm{2}}}}   \mathop{\mid}   \ottnt{e'_{{\mathrm{2}}}}  \}  \, [  \ottnt{e_{{\mathrm{12}}}}  \ottsym{/}  \mathit{x}  ] ;  \theta ;  \delta 
 \]
 (we omit $\theta_{\ottmv{i}}$ and $\delta_{\ottmv{i}}$ in application terms for simplicity).
 Specifically, we have to show that the evaluation results of
 both $\ottnt{e_{{\mathrm{11}}}} \, \ottnt{e_{{\mathrm{12}}}}$ and $\ottnt{e_{{\mathrm{21}}}} \, \ottnt{e_{{\mathrm{22}}}}$ satisfy refinement $\ottnt{e'_{{\mathrm{2}}}} \, [  \ottnt{e_{{\mathrm{12}}}}  \ottsym{/}  \mathit{x}  ]$.
 On the one hand, it is trivial that $\ottnt{e_{{\mathrm{11}}}} \, \ottnt{e_{{\mathrm{12}}}}$ satisfies $\ottnt{e'_{{\mathrm{2}}}} \, [  \ottnt{e_{{\mathrm{12}}}}  \ottsym{/}  \mathit{x}  ]$
 because the type of $\ottnt{e_{{\mathrm{11}}}} \, \ottnt{e_{{\mathrm{12}}}}$ is $ \{  \mathit{y}  \mathord{:}  \ottnt{T'_{{\mathrm{2}}}}   \mathop{\mid}   \ottnt{e'_{{\mathrm{2}}}}  \}  \, [  \ottnt{e_{{\mathrm{12}}}}  \ottsym{/}  \mathit{x}  ]$ and well-typed
 terms satisfy all refinements in their types (\prop:ref{fh-val-satis-c}).
 On the other hand, while it is easy to show $\ottnt{e_{{\mathrm{21}}}} \, \ottnt{e_{{\mathrm{22}}}}$ satisfies
 $\ottnt{e'_{{\mathrm{2}}}} \, [  \ottnt{e_{{\mathrm{22}}}}  \ottsym{/}  \mathit{x}  ]$, proving that $\ottnt{e_{{\mathrm{21}}}} \, \ottnt{e_{{\mathrm{22}}}}$ satisfies $\ottnt{e'_{{\mathrm{2}}}} \, [  \ottnt{e_{{\mathrm{12}}}}  \ottsym{/}  \mathit{x}  ]$
 is nontrivial.
 Our key idea to addressing the nontrivial case is to \emph{assume} that refinement
 $\ottnt{e'_{{\mathrm{2}}}}$ is logically related to itself.
 This assumption allows us to show that $\ottnt{e'_{{\mathrm{2}}}} \, [  \ottnt{e_{{\mathrm{12}}}}  \ottsym{/}  \mathit{x}  ]$ and $\ottnt{e'_{{\mathrm{2}}}} \, [  \ottnt{e_{{\mathrm{22}}}}  \ottsym{/}  \mathit{x}  ]$
 are logically related since so are $\ottnt{e_{{\mathrm{12}}}}$ and $\ottnt{e_{{\mathrm{22}}}}$.
 Since logically related Boolean expressions evaluate to the same value (if
 any), we obtain that $\ottnt{e_{{\mathrm{21}}}} \, \ottnt{e_{{\mathrm{22}}}}$ satisfies $\ottnt{e'_{{\mathrm{2}}}} \, [  \ottnt{e_{{\mathrm{12}}}}  \ottsym{/}  \mathit{x}  ]$ if and only if
 it does $\ottnt{e'_{{\mathrm{2}}}} \, [  \ottnt{e_{{\mathrm{22}}}}  \ottsym{/}  \mathit{x}  ]$.
 Since the latter can be shown easily, we achieve the goal.
 %
% \TS{Until here.}
 %
 % Then, refinement $\ottnt{e'_{{\mathrm{2}}}}$ is also logically related to itself, which
 % guarantees that terms obtained by substituting related values for
 % free variables in $\ottnt{e'_{{\mathrm{2}}}}$ behaves equivalently.
 % %
 % As a result, it is found that
 % $\ottnt{e'_{{\mathrm{2}}}} \, [  \ottnt{v_{{\mathrm{22}}}}  \ottsym{/}  \mathit{x}  ]$ and $\ottnt{e'_{{\mathrm{2}}}} \, [  \ottnt{v_{{\mathrm{12}}}}  \ottsym{/}  \mathit{x}  ]$ are logically related
 % since $\ottnt{v_{{\mathrm{12}}}}$ and $\ottnt{v_{{\mathrm{22}}}}$ are related at $\ottnt{T_{{\mathrm{1}}}}$, and so
 % $\ottnt{e'_{{\mathrm{2}}}} \, [  \ottnt{v_{{\mathrm{22}}}}  \ottsym{/}  \mathit{x}  ]  \longrightarrow^{\ast}   \mathsf{true} $ from $\ottnt{e'_{{\mathrm{2}}}} \, [  \ottnt{v_{{\mathrm{12}}}}  \ottsym{/}  \mathit{x}  ]  \longrightarrow^{\ast}   \mathsf{true} $.
 %
 For a rigorous proof following this idea, we assume that $\Gamma$, $\ottnt{e_{{\mathrm{12}}}}$,
 $\ottnt{T'_{{\mathrm{2}}}}$, and $\ottnt{T_{{\mathrm{1}}}}$ are also logically related to themselves.
 \begin{defi}[Self-Related Typing Contexts]
  $\Gamma$ is \emph{self-related} if and only if
  $ \Gamma_{{\mathrm{1}}} \vdash \ottnt{T}  \mathrel{ \simeq }  \ottnt{T}  : \ast $ for any $\Gamma_{{\mathrm{1}}}$ and $\ottnt{T}$ such that $\Gamma  \ottsym{=}   \Gamma_{{\mathrm{1}}}  ,  \mathit{x}  \mathord{:}  \ottnt{T}   \ottsym{,}  \Gamma'_{{\mathrm{1}}}$.
 \end{defi}
 In summary, we show that:
 \begin{quotation}
   \emph{Suppose that $\Gamma$ is self-related,
   $\Gamma  \vdash  \ottnt{e_{{\mathrm{12}}}} \,  \mathrel{ \simeq }  \, \ottnt{e_{{\mathrm{12}}}}  \ottsym{:}  \ottnt{T_{{\mathrm{1}}}}$, and $ \Gamma \vdash  \mathit{x} \mathord{:} \ottnt{T_{{\mathrm{1}}}} \rightarrow \ottnt{T_{{\mathrm{2}}}}   \mathrel{ \simeq }   \mathit{x} \mathord{:} \ottnt{T_{{\mathrm{1}}}} \rightarrow \ottnt{T_{{\mathrm{2}}}}   : \ast $.}
   If $\Gamma  \vdash  \ottnt{e_{{\mathrm{11}}}} \,  \mathrel{ \simeq }  \, \ottnt{e_{{\mathrm{21}}}}  \ottsym{:}  \ottsym{(}   \mathit{x} \mathord{:} \ottnt{T_{{\mathrm{1}}}} \rightarrow \ottnt{T_{{\mathrm{2}}}}   \ottsym{)}$ and $\Gamma  \vdash  \ottnt{e_{{\mathrm{12}}}} \,  \mathrel{ \simeq }  \, \ottnt{e_{{\mathrm{22}}}}  \ottsym{:}  \ottnt{T_{{\mathrm{1}}}}$,
   then $\Gamma  \vdash  \ottnt{e_{{\mathrm{11}}}} \, \ottnt{e_{{\mathrm{12}}}} \,  \mathrel{ \simeq }  \, \ottnt{e_{{\mathrm{21}}}} \, \ottnt{e_{{\mathrm{22}}}}  \ottsym{:}  \ottnt{T_{{\mathrm{2}}}} \, \ottsym{[}  \ottnt{e_{{\mathrm{12}}}}  \ottsym{/}  \mathit{x}  \ottsym{]}$.
 \end{quotation}
 These additional assumptions, which we call \emph{self-relatedness}, are needed
 also in other term constructors such as type application.
 Self-relatedness assumptions are discharged once the parametricity, which amounts
 to reflexivity of the logical relation, is shown.

 We believe that the parametricity can be shown independently of the fundamental
 property, but their proofs are quite similar, so we organize a proof of
 the soundness as follows to avoid writing similar proofs and save the amount of
 work.
 \begin{enumerate}
  \item Prove that the logical relation is closed under each constructor under
        self-relatedness assumptions.

  \item Prove the parametricity with the lemmas shown in (1).

  \item Prove the soundness of the logical relation by discharging the self-relatedness
        assumptions from the lemmas shown in (1) with the parametricity.
 \end{enumerate}

 \subsubsection{Proof}
 The proof proceeds as follows.
 We start with showing weakening and strengthening of the logical relation
 (Lemmas~\ref{lem:fh-lr-ws-start}--\ref{lem:fh-lr-ws-end}), which are used
 broadly throughout the proof.
 We next prove the most challenging cases in the fundamental property: term
 application (Lemmas~\ref{lem:fh-lr-app-start}--\ref{lem:fh-lr-app-end}) and
 type application (Lemmas~\ref{lem:fh-lr-tapp-start}--\ref{lem:fh-lr-tapp-end}).
 After showing the remaining cases of the fundamental property
 (Lemmas~\ref{lem:fh-lr-other-start}--\ref{lem:fh-lr-other-end}), we prove
 the parametricity (\prop:ref{fh-lr-param}){\iffull with auxiliary lemmas
 (Lemmas~\ref{lem:fh-lr-typ-comp-ctx}--\ref{lem:fh-lr-typ-subst-typ-refl-assump})\fi}
 and then the soundness of the logical relation (\refthm{fh-lr-sound}).

 \paragraph{\bf Weakening and strengthening}
 \label{sec:proving-ws}
 %
 % We start with showing weakening and strengthening lemmas, which are used
 % throughout the proof.
 %
 \begin{prop}[name=Value Weakening/Strengthening]{fh-lr-val-ws}
  \label{lem:fh-lr-ws-start}
  Suppose that $\mathit{x}$ is a fresh variable.
  If $\ottnt{v_{{\mathrm{1}}}}$ is a closed well-typed value and $\ottnt{v_{{\mathrm{2}}}}$ is a closed (but not necessarily well-typed) value,
  then:
  \begin{statements}
   \item(trel) $ \ottnt{e_{{\mathrm{1}}}}  \simeq_{\mathtt{e} }  \ottnt{e_{{\mathrm{2}}}}   \ottsym{:}   \ottnt{T} ;  \theta ;  \delta $ iff
         $ \ottnt{e_{{\mathrm{1}}}}  \simeq_{\mathtt{e} }  \ottnt{e_{{\mathrm{2}}}}   \ottsym{:}   \ottnt{T} ;  \theta ;   \delta    [  \,  (  \ottnt{v_{{\mathrm{1}}}}  ,  \ottnt{v_{{\mathrm{2}}}}  ) /  \mathit{x}  \,  ]   $;
   \item(typrel) $ \ottnt{T_{{\mathrm{1}}}}  \simeq  \ottnt{T_{{\mathrm{2}}}}   \ottsym{:}   \ast ;  \theta ;  \delta $ iff
         $ \ottnt{T_{{\mathrm{1}}}}  \simeq  \ottnt{T_{{\mathrm{2}}}}   \ottsym{:}   \ast ;  \theta ;   \delta    [  \,  (  \ottnt{v_{{\mathrm{1}}}}  ,  \ottnt{v_{{\mathrm{2}}}}  ) /  \mathit{x}  \,  ]   $; and
   \item(tctx) $\Gamma  \ottsym{,}  \Gamma'  \vdash  \theta  \ottsym{;}  \delta$ and $ \ottnt{v_{{\mathrm{1}}}}  \simeq_{\mathtt{v} }  \ottnt{v_{{\mathrm{2}}}}   \ottsym{:}   \ottnt{T} ;  \theta ;  \delta $ iff
         $ \Gamma  ,  \mathit{x}  \mathord{:}  \ottnt{T}   \ottsym{,}  \Gamma'  \vdash  \theta  \ottsym{;}   \delta    [  \,  (  \ottnt{v_{{\mathrm{1}}}}  ,  \ottnt{v_{{\mathrm{2}}}}  ) /  \mathit{x}  \,  ]  $.
  \end{statements}
  Moreover, we have the following weakening lemmas:
  \begin{statements}
   \setcounter{enumi}{3}
   \item(log) If $\Gamma  \ottsym{,}  \Gamma'  \vdash  \ottnt{e_{{\mathrm{1}}}} \,  \mathrel{ \simeq }  \, \ottnt{e_{{\mathrm{2}}}}  \ottsym{:}  \ottnt{T'}$ and $\Gamma  \vdash  \ottnt{T}$,
         then $ \Gamma  ,  \mathit{x}  \mathord{:}  \ottnt{T}   \ottsym{,}  \Gamma'  \vdash  \ottnt{e_{{\mathrm{1}}}} \,  \mathrel{ \simeq }  \, \ottnt{e_{{\mathrm{2}}}}  \ottsym{:}  \ottnt{T'}$.
   \item(typlog) If $ \Gamma  \ottsym{,}  \Gamma' \vdash \ottnt{T_{{\mathrm{1}}}}  \mathrel{ \simeq }  \ottnt{T_{{\mathrm{2}}}}  : \ast $ and $\Gamma  \vdash  \ottnt{T}$,
         then $  \Gamma  ,  \mathit{x}  \mathord{:}  \ottnt{T}   \ottsym{,}  \Gamma' \vdash \ottnt{T_{{\mathrm{1}}}}  \mathrel{ \simeq }  \ottnt{T_{{\mathrm{2}}}}  : \ast $.
  \end{statements}

  \proof

  \noindent
  \begin{enmrt}
   \item By straightforward induction on $\ottnt{T}$.
   \item By straightforward induction on $\ottnt{T_{{\mathrm{1}}}}$.
   \item By definition and \prop:ref{(trel)}.
   \item By \prop:ref{(tctx)} and \prop:ref{(trel)}.
   \item By \prop:ref{(tctx)} and \prop:ref{(typrel)}. \qedhere
  \end{enmrt}
\end{prop}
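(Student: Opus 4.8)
For a fresh variable $\mathit{x}$, a closed well-typed value $\ottnt{v_{{\mathrm{1}}}}$, and a closed value $\ottnt{v_{{\mathrm{2}}}}$: (trel) $ \ottnt{e_{{\mathrm{1}}}}  \simeq_{\mathtt{e} }  \ottnt{e_{{\mathrm{2}}}}   \ottsym{:}   \ottnt{T} ;  \theta ;  \delta $ iff $ \ottnt{e_{{\mathrm{1}}}}  \simeq_{\mathtt{e} }  \ottnt{e_{{\mathrm{2}}}}   \ottsym{:}   \ottnt{T} ;  \theta ;   \delta    [  \,  (  \ottnt{v_{{\mathrm{1}}}}  ,  \ottnt{v_{{\mathrm{2}}}}  ) /  \mathit{x}  \,  ]   $; (typrel) the analogous statement for the type relation; (tctx), (log), (typlog) the context/logical-relation consequences.

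Let me sketch how I would prove this.

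The plan is to prove (trel) and (typrel) together by a single induction, and then obtain (tctx), (log), (typlog) as easy corollaries. The essential observation is that $\mathit{x}$ is fresh, so $\mathit{x}$ does not occur free in $\ottnt{e_{{\mathrm{1}}}}$, $\ottnt{e_{{\mathrm{2}}}}$, $\ottnt{T}$, $\ottnt{T_{{\mathrm{1}}}}$, or $\ottnt{T_{{\mathrm{2}}}}$; consequently the substitutions $\delta_{\ottmv{i}}$ and $ (  \delta    [  \,  (  \ottnt{v_{{\mathrm{1}}}}  ,  \ottnt{v_{{\mathrm{2}}}}  ) /  \mathit{x}  \,  ]   )_{\ottmv{i}}$ agree on all these syntactic objects. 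The induction should be on the structure of the index type ($\ottnt{T}$ for the value/term relation, $\ottnt{T_{{\mathrm{1}}}}$ for the type relation), unfolding the defining clauses in \fig{logirel}. Because the value and term relations are mutually recursive (a term relation at $\ottnt{T}$ is defined via the value relation at $\ottnt{T}$, and compound value relations invoke term relations at smaller types), I would state the induction hypothesis uniformly for both relations at structurally smaller types and appeal to it in each clause.

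First I would handle the base cases. For $ \alpha  \simeq  \alpha $ and $ \ottnt{B}  \simeq  \ottnt{B} $ (and the value relations at $\alpha$ and $\ottnt{B}$), the defining conditions mention only $\theta$, constant membership, or syntactic equality, none of which is affected by extending $\delta$ at the fresh $\mathit{x}$; these are immediate. The term-relation clause is also immediate once one notes, as above, that $\delta_{\ottmv{i}}(\ottnt{e_{\ottmv{k}}})$ is unchanged by adding $\mathit{x}\mapsto(\ottnt{v_{{\mathrm{1}}}},\ottnt{v_{{\mathrm{2}}}})$ to $\delta$, so the evaluation behavior $ \longrightarrow^{\ast} $ to blame or to related values is identical on both sides. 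For the inductive cases---function, universal, and refinement types---I would push through the defining clauses: each quantifies over fresh values or interpretations and then refers to the relations at structurally smaller types under a further-extended $\delta$ or $\theta$. Here the key point is that extension by $(\ottnt{v_{{\mathrm{1}}}},\ottnt{v_{{\mathrm{2}}}})/\mathit{x}$ commutes with the extensions introduced by the clause (since $\mathit{x}$ is fresh and distinct from the bound variable), so the IH applies directly. In the refinement clause I would again use that $\theta_{\ottmv{i}}(\delta_{\ottmv{i}}(\ottnt{e} \, [  \ottnt{v}  \ottsym{/}  \mathit{x}'  ]))$ is unaffected by the added binding to conclude equality of the evaluated contracts.

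Having (trel) and (typrel), the remaining parts are bookkeeping. For (tctx), the direction conditions on $\theta;\delta$ respecting $ \Gamma  ,  \mathit{x}  \mathord{:}  \ottnt{T}   \ottsym{,}  \Gamma' $ unfold to the conditions for $\Gamma  \ottsym{,}  \Gamma'$ together with the single extra requirement $ \ottnt{v_{{\mathrm{1}}}}  \simeq_{\mathtt{v} }  \ottnt{v_{{\mathrm{2}}}}   \ottsym{:}   \ottnt{T} $; the per-variable relatedness conditions transfer across the two value assignments by (trel). For (log) and (typlog), I would unfold the logical relation ($\Gamma  \vdash  \ottnt{e_{{\mathrm{1}}}} \,  \mathrel{ \simeq }  \, \ottnt{e_{{\mathrm{2}}}}  \ottsym{:}  \ottnt{T'}$ quantifies over all $\theta;\delta$ with $\Gamma  \vdash  \theta  \ottsym{;}  \delta$), take an arbitrary $\theta;\delta$ respecting the extended context $ \Gamma  ,  \mathit{x}  \mathord{:}  \ottnt{T}   \ottsym{,}  \Gamma' $, and use (tctx) to recover a $\delta'$ respecting $\Gamma  \ottsym{,}  \Gamma'$ together with a related pair at $\mathit{x}$; then the hypothesis gives the term relation under $\delta'$, and (trel) converts it back to the extended assignment. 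The side condition $\Gamma  \vdash  \ottnt{T}$ guarantees that a witness value for the fresh binding exists so the extension is meaningful.

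I expect the main obstacle to be purely administrative rather than conceptual: getting the commutation of the two value-assignment (and type-interpretation) extensions exactly right in the function, forall, and refinement clauses, where the defining quantifier introduces its own fresh bound variable. Since the relations are defined by clauses rather than by an inductive derivation, care is needed to organize the argument as a genuine structural induction on the index type and to keep the mutual recursion between $ \simeq_{\mathtt{v} } $ and $ \simeq_{\mathtt{e} } $ under control; but no deep property (and in particular no appeal to cotermination or parametricity) is needed, which is why the proof sketch in the excerpt simply cites induction on $\ottnt{T}$ and $\ottnt{T_{{\mathrm{1}}}}$.
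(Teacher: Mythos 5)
Your proposal matches the paper's proof: parts (trel) and (typrel) are established by straightforward structural induction on the index type (using that the fresh $\mathit{x}$ leaves all the relevant substitution instances unchanged), and (tctx), (log), (typlog) then follow by unfolding the definitions and applying the earlier parts. The only cosmetic difference is that you merge the two inductions into one, whereas the paper runs them separately on $\ottnt{T}$ and $\ottnt{T_{{\mathrm{1}}}}$.
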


\begin{prop}[name=Type Weakening/Strengthening]{fh-lr-typ-ws}
 \label{lem:fh-lr-ws-end}
 Suppose that $\alpha$ is a fresh type variable.
 \begin{statements}
  \item(trel) $ \ottnt{e_{{\mathrm{1}}}}  \simeq_{\mathtt{e} }  \ottnt{e_{{\mathrm{2}}}}   \ottsym{:}   \ottnt{T} ;  \theta ;  \delta $ and $\langle  \ottnt{r}  \ottsym{,}  \ottnt{T_{{\mathrm{1}}}}  \ottsym{,}  \ottnt{T_{{\mathrm{2}}}}  \rangle$ iff
        $ \ottnt{e_{{\mathrm{1}}}}  \simeq_{\mathtt{e} }  \ottnt{e_{{\mathrm{2}}}}   \ottsym{:}   \ottnt{T} ;  \theta \,  \{  \,  \alpha  \mapsto  \ottnt{r} , \ottnt{T_{{\mathrm{1}}}} , \ottnt{T_{{\mathrm{2}}}}  \,  \}  ;  \delta $;
  \item(typrel) $ \ottnt{T'_{{\mathrm{1}}}}  \simeq  \ottnt{T'_{{\mathrm{2}}}}   \ottsym{:}   \ast ;  \theta ;  \delta $ and $\langle  \ottnt{r}  \ottsym{,}  \ottnt{T_{{\mathrm{1}}}}  \ottsym{,}  \ottnt{T_{{\mathrm{2}}}}  \rangle$ iff
        $ \ottnt{T'_{{\mathrm{1}}}}  \simeq  \ottnt{T'_{{\mathrm{2}}}}   \ottsym{:}   \ast ;  \theta \,  \{  \,  \alpha  \mapsto  \ottnt{r} , \ottnt{T_{{\mathrm{1}}}} , \ottnt{T_{{\mathrm{2}}}}  \,  \}  ;  \delta $; and
  \item(tctx) $\Gamma  \ottsym{,}  \Gamma'  \vdash  \theta  \ottsym{;}  \delta$ and $\langle  \ottnt{r}  \ottsym{,}  \ottnt{T_{{\mathrm{1}}}}  \ottsym{,}  \ottnt{T_{{\mathrm{2}}}}  \rangle$
        iff $\Gamma  \ottsym{,}  \alpha  \ottsym{,}  \Gamma'  \vdash  \theta \,  \{  \,  \alpha  \mapsto  \ottnt{r} , \ottnt{T_{{\mathrm{1}}}} , \ottnt{T_{{\mathrm{2}}}}  \,  \}   \ottsym{;}  \delta$.
 % \end{statements}
 % %
 % Moreover, we have the following weakening lemmas:
 % \begin{statements}
 %  \setcounter{enumi}{3}
  \item(log) $\Gamma  \ottsym{,}  \Gamma'  \vdash  \ottnt{e_{{\mathrm{1}}}} \,  \mathrel{ \simeq }  \, \ottnt{e_{{\mathrm{2}}}}  \ottsym{:}  \ottnt{T}$ iff $\Gamma  \ottsym{,}  \alpha  \ottsym{,}  \Gamma'  \vdash  \ottnt{e_{{\mathrm{1}}}} \,  \mathrel{ \simeq }  \, \ottnt{e_{{\mathrm{2}}}}  \ottsym{:}  \ottnt{T}$.
  \item(typlog) $ \Gamma  \ottsym{,}  \Gamma' \vdash \ottnt{T_{{\mathrm{1}}}}  \mathrel{ \simeq }  \ottnt{T_{{\mathrm{2}}}}  : \ast $ iff $ \Gamma  \ottsym{,}  \alpha  \ottsym{,}  \Gamma' \vdash \ottnt{T_{{\mathrm{1}}}}  \mathrel{ \simeq }  \ottnt{T_{{\mathrm{2}}}}  : \ast $.
 \end{statements}

 \proof

 Similar to \prop:ref{fh-lr-val-ws}.
\end{prop}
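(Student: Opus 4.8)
The plan is to follow the same skeleton as the Value Weakening/Strengthening lemma (\prop:ref{fh-lr-val-ws}): first establish the two semantic invariance statements (trel) and (typrel) by induction on the type indexing the relation, and then derive (tctx), (log), and (typlog) from them by unfolding definitions. The organising observation is that, since $\alpha$ is fresh, it occurs free neither in $T$ (resp.\ $T'_1, T'_2$) nor in $e_1, e_2$, and it lies outside $\mathit{dom}(\theta)$; consequently $\theta\{\alpha \mapsto (r,T_1,T_2)\}$ agrees with $\theta$ on every type variable that actually occurs, and the induced closing substitutions $\theta_1,\theta_2$ act identically on all the terms and types appearing in the relations. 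Thus extending $\theta$ with an interpretation for $\alpha$ is semantically inert.

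For part (trel) I would note that the term relation at $T$ is defined entirely through the value relation at the same $T$ (both sides either blame with a common label or terminate at $\simeq_{\mathtt{v}}$-related values), so it suffices to prove the corresponding invariance for the value relation, by induction on $T$. The base cases are immediate: at $\ottnt{B}$ the value relation ignores $\theta$, and at a type variable $\beta$ freshness forces $\beta \neq \alpha$, whence $\theta\{\alpha \mapsto (r,T_1,T_2)\}(\beta) = \theta(\beta)$. In the function, universal, and refinement cases I apply the induction hypothesis to the component types; the only care needed is to commute the $\alpha$-extension past the additional binding introduced in each case (a value pair added to $\delta$ for a function or refinement, an interpretation added to $\theta$ for a quantifier), which is justified because $\alpha$ is distinct from the freshly bound variable. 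In the refinement case I additionally use that $\theta_1,\theta_2$ leave the refinement expression unchanged, since $\alpha$ is not free in it. Part (typrel) is proved the same way by induction on $T'_1$, invoking the value/term-relation invariance just established to see that the sets of $\simeq_{\mathtt{v}}$-related argument pairs (in the function and refinement clauses) and the $\simeq_{\mathtt{e}}$-relatedness of the refinements coincide under the two type interpretations.

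The remaining statements then follow mechanically. For (tctx) I unfold the definition of $\Gamma  \vdash  \theta  \ottsym{;}  \delta$: the type-variable clause holds because $\alpha \in \mathit{dom}(\theta\{\alpha \mapsto (r,T_1,T_2)\})$ by construction, and the term-variable clauses transfer across by part (trel). For (log) and (typlog) the well-typedness and well-formedness side conditions are handled by Type Weakening (\prop:ref{fh-weak-type}) and the free-variable side conditions by freshness; for the relational clause I quantify over arbitrary $\theta,\delta$ and move the binding for $\alpha$ in or out using (tctx) --- in one direction extending an arbitrary $\theta$ by any interpretation for $\alpha$, in the other direction peeling off the interpretation that a context-respecting $\theta$ must already assign to $\alpha$ --- and then transport the closed relation with (trel) or (typrel).

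The only genuinely delicate point, exactly as in the value version, is the mutual recursion between the value, term, and type relations: I must ensure that the domains of the universal quantifiers in the function, universal, and refinement clauses (the collections of related value pairs or admissible interpretations) are literally the same under $\theta$ and under $\theta\{\alpha \mapsto (r,T_1,T_2)\}$ before appealing to the induction hypothesis on the body. This is precisely what the value-relation half of (trel) supplies, so once that is in place the rest is routine bookkeeping, keeping $\alpha$ apart from bound variables by the usual freshness convention. Accordingly the proof is structurally identical to \prop:ref{fh-lr-val-ws}, with type-variable extension and Type Weakening playing the roles of value-variable extension and Term Weakening.
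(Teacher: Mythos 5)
Your proposal is correct and matches the paper's intended argument: the paper proves this lemma exactly as you do, by mirroring the Value Weakening/Strengthening lemma — parts (trel) and (typrel) by induction on the indexing type using the freshness of $\alpha$ to make the extension of $\theta$ inert, and the remaining parts by unfolding definitions and combining the earlier items.
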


\paragraph{\bf Fundamental property: term application}
\label{sec:proving-term-app}
To show that the logical relation is closed under term application,
we have to prove that, if
$ \ottnt{v_{{\mathrm{11}}}}  \simeq_{\mathtt{v} }  \ottnt{v_{{\mathrm{21}}}}   \ottsym{:}   \ottsym{(}   \mathit{x} \mathord{:} \ottnt{T_{{\mathrm{1}}}} \rightarrow \ottnt{T_{{\mathrm{2}}}}   \ottsym{)} ;  \theta ;  \delta $ and
$ \ottnt{v_{{\mathrm{12}}}}  \simeq_{\mathtt{v} }  \ottnt{v_{{\mathrm{22}}}}   \ottsym{:}   \ottnt{T_{{\mathrm{1}}}} ;  \theta ;  \delta $, then
$ \ottnt{v_{{\mathrm{11}}}} \, \ottnt{v_{{\mathrm{12}}}}  \simeq_{\mathtt{e} }  \ottnt{v_{{\mathrm{21}}}} \, \ottnt{v_{{\mathrm{22}}}}   \ottsym{:}   \ottnt{T_{{\mathrm{2}}}} \, [  \ottnt{v_{{\mathrm{12}}}}  \ottsym{/}  \mathit{x}  ] ;  \theta ;  \delta $.
However, the definition of the logical relation states only that
$ \ottnt{v_{{\mathrm{11}}}} \, \ottnt{v_{{\mathrm{12}}}}  \simeq_{\mathtt{e} }  \ottnt{v_{{\mathrm{21}}}} \, \ottnt{v_{{\mathrm{22}}}}   \ottsym{:}   \ottnt{T_{{\mathrm{2}}}} ;  \theta ;   \delta    [  \,  (  \ottnt{v_{{\mathrm{12}}}}  ,  \ottnt{v_{{\mathrm{22}}}}  ) /  \mathit{x}  \,  ]   $.
Thus, we have to show that the term relation indexed by $\ottnt{T_{{\mathrm{2}}}} \, [  \ottnt{v_{{\mathrm{12}}}}  \ottsym{/}  \mathit{x}  ]$ with
$\delta$ is equivalent to the one indexed by $\ottnt{T_{{\mathrm{2}}}}$ with
$ \delta    [  \,  (  \ottnt{v_{{\mathrm{12}}}}  ,  \ottnt{v_{{\mathrm{22}}}}  ) /  \mathit{x}  \,  ]  $.
This property is generalized to the so-called \emph{term
compositionality}~\cite{Belo/Greenberg/Igarashi/Pierce_2011_ESOP,Sekiyama/Igarashi/Greenberg_2016_TOPLAS}.
To prove the term compositionality, we first show that, for any $\ottnt{v}$, value
assignments $ \delta    [  \,  (  \ottnt{v}  ,  \ottnt{v_{{\mathrm{12}}}}  ) /  \mathit{x}  \,  ]  $ and $ \delta    [  \,  (  \ottnt{v}  ,  \ottnt{v_{{\mathrm{22}}}}  ) /  \mathit{x}  \,  ]  $ are not
distinguished by term relations.
The following lemma also shows that term relations cannot distinguish type
interpretations that refer to different, possibly ill-formed types; this is used
in the case of type application.
\begin{prop}{fh-lr-untyped-exchange-trel}
 \label{lem:fh-lr-app-start}
 Given $\theta$, $\theta'$, $\delta$, and $\delta'$, suppose
 that
 $\{ (\alpha,\ottnt{r},\ottnt{T_{{\mathrm{1}}}}) \mid \exists \, \ottnt{T_{{\mathrm{2}}}}  .~   \theta  (  \alpha  ) = (  \ottnt{r} ,  \ottnt{T_{{\mathrm{1}}}} ,  \ottnt{T_{{\mathrm{2}}}}  )  \} =
  \{ (\alpha,\ottnt{r},\ottnt{T_{{\mathrm{1}}}}) \mid \exists \, \ottnt{T_{{\mathrm{2}}}}  .~   \theta'  (  \alpha  ) = (  \ottnt{r} ,  \ottnt{T_{{\mathrm{1}}}} ,  \ottnt{T_{{\mathrm{2}}}}  )  \}$
 and
 $\delta_{{\mathrm{1}}}  \ottsym{=}  \delta'_{{\mathrm{1}}}$.
 % $\{ (\mathit{x}, \delta_{{\mathrm{1}}}  (  \mathit{x}  ) ) \mid \mathit{x} \, \in \,  \mathit{dom}  (  \delta  )  \} =
 %  \{ (\mathit{x}, \delta'_{{\mathrm{1}}}  (  \mathit{x}  ) ) \mid \mathit{x} \, \in \,  \mathit{dom}  (  \delta'  )  \}$.
 %
 If $ \ottnt{T}  \simeq  \ottnt{T}   \ottsym{:}   \ast ;  \theta ;  \delta $ and $ \ottnt{T}  \simeq  \ottnt{T}   \ottsym{:}   \ast ;  \theta' ;  \delta' $,
 then
 $ \ottnt{e_{{\mathrm{1}}}}  \simeq_{\mathtt{e} }  \ottnt{e_{{\mathrm{2}}}}   \ottsym{:}   \ottnt{T} ;  \theta ;  \delta $ iff $ \ottnt{e_{{\mathrm{1}}}}  \simeq_{\mathtt{e} }  \ottnt{e_{{\mathrm{2}}}}   \ottsym{:}   \ottnt{T} ;  \theta' ;  \delta' $
 for any $\ottnt{e_{{\mathrm{1}}}}$ and $\ottnt{e_{{\mathrm{2}}}}$.

 \proof

 By induction on $\ottnt{T}$.
 The interesting case is that $\ottnt{T}  \ottsym{=}   \{  \mathit{x}  \mathord{:}  \ottnt{T'}   \mathop{\mid}   \ottnt{e'}  \} $.
 If $\ottnt{e_{{\mathrm{1}}}}$ and $\ottnt{e_{{\mathrm{2}}}}$ raise blame, the conclusion follows straightforwardly.
 Otherwise, $\ottnt{e_{{\mathrm{1}}}}  \longrightarrow^{\ast}  \ottnt{v_{{\mathrm{1}}}}$ and $\ottnt{e_{{\mathrm{2}}}}  \longrightarrow^{\ast}  \ottnt{v_{{\mathrm{2}}}}$ for some $\ottnt{v_{{\mathrm{1}}}}$ and $\ottnt{v_{{\mathrm{2}}}}$,
 and, by definition, we have to show that
 \[\begin{array}{c}
   \ottnt{v_{{\mathrm{1}}}}  \simeq_{\mathtt{v} }  \ottnt{v_{{\mathrm{2}}}}   \ottsym{:}   \ottnt{T'} ;  \theta ;  \delta  \text{ and }
    \theta_{{\mathrm{1}}}  (   \delta_{{\mathrm{1}}}  (  \ottnt{e'} \, [  \ottnt{v_{{\mathrm{1}}}}  \ottsym{/}  \mathit{x}  ]  )   )   \longrightarrow^{\ast}   \mathsf{true}  \text{ and }
    \theta_{{\mathrm{2}}}  (   \delta_{{\mathrm{2}}}  (  \ottnt{e'} \, [  \ottnt{v_{{\mathrm{2}}}}  \ottsym{/}  \mathit{x}  ]  )   )   \longrightarrow^{\ast}   \mathsf{true}  \\
  \text{iff} \\
   \ottnt{v_{{\mathrm{1}}}}  \simeq_{\mathtt{v} }  \ottnt{v_{{\mathrm{2}}}}   \ottsym{:}   \ottnt{T'} ;  \theta' ;  \delta'  \text{ and }
    \theta'_{{\mathrm{1}}}  (   \delta'_{{\mathrm{1}}}  (  \ottnt{e'} \, [  \ottnt{v_{{\mathrm{1}}}}  \ottsym{/}  \mathit{x}  ]  )   )   \longrightarrow^{\ast}   \mathsf{true}  \text{ and }
    \theta'_{{\mathrm{2}}}  (   \delta'_{{\mathrm{2}}}  (  \ottnt{e'} \, [  \ottnt{v_{{\mathrm{2}}}}  \ottsym{/}  \mathit{x}  ]  )   )   \longrightarrow^{\ast}   \mathsf{true} .
 \end{array}\]
 We show only the left-to-right direction, but the other is also shown in a similar
 way.
 Since $  \{  \mathit{x}  \mathord{:}  \ottnt{T'}   \mathop{\mid}   \ottnt{e'}  \}   \simeq   \{  \mathit{x}  \mathord{:}  \ottnt{T'}   \mathop{\mid}   \ottnt{e'}  \}    \ottsym{:}   \ast ;  \theta ;  \delta $ and
 $  \{  \mathit{x}  \mathord{:}  \ottnt{T'}   \mathop{\mid}   \ottnt{e'}  \}   \simeq   \{  \mathit{x}  \mathord{:}  \ottnt{T'}   \mathop{\mid}   \ottnt{e'}  \}    \ottsym{:}   \ast ;  \theta' ;  \delta' $,
 we have $ \ottnt{T'}  \simeq  \ottnt{T'}   \ottsym{:}   \ast ;  \theta ;  \delta $ and $ \ottnt{T'}  \simeq  \ottnt{T'}   \ottsym{:}   \ast ;  \theta' ;  \delta' $.
 Since $ \ottnt{v_{{\mathrm{1}}}}  \simeq_{\mathtt{v} }  \ottnt{v_{{\mathrm{2}}}}   \ottsym{:}   \ottnt{T'} ;  \theta ;  \delta $ by the assumption in the left-to-right
 direction, we have
 \[
   \ottnt{v_{{\mathrm{1}}}}  \simeq_{\mathtt{v} }  \ottnt{v_{{\mathrm{2}}}}   \ottsym{:}   \ottnt{T'} ;  \theta' ;  \delta' 
 \]
 by the IH.
 By the assumptions of this lemma,
 $ \theta_{{\mathrm{1}}}  (   \delta_{{\mathrm{1}}}  (  \ottnt{e'} \, [  \ottnt{v_{{\mathrm{1}}}}  \ottsym{/}  \mathit{x}  ]  )   )   \ottsym{=}   \theta'_{{\mathrm{1}}}  (   \delta'_{{\mathrm{1}}}  (  \ottnt{e'} \, [  \ottnt{v_{{\mathrm{1}}}}  \ottsym{/}  \mathit{x}  ]  )   ) $.
 Since in the left-to-right direction we assume that
 $ \theta_{{\mathrm{1}}}  (   \delta_{{\mathrm{1}}}  (  \ottnt{e'} \, [  \ottnt{v_{{\mathrm{1}}}}  \ottsym{/}  \mathit{x}  ]  )   )   \longrightarrow^{\ast}   \mathsf{true} $,
 we have
 \[
   \theta'_{{\mathrm{1}}}  (   \delta'_{{\mathrm{1}}}  (  \ottnt{e'} \, [  \ottnt{v_{{\mathrm{1}}}}  \ottsym{/}  \mathit{x}  ]  )   )   \longrightarrow^{\ast}   \mathsf{true} .
 \]
 Since $  \{  \mathit{x}  \mathord{:}  \ottnt{T'}   \mathop{\mid}   \ottnt{e'}  \}   \simeq   \{  \mathit{x}  \mathord{:}  \ottnt{T'}   \mathop{\mid}   \ottnt{e'}  \}    \ottsym{:}   \ast ;  \theta' ;  \delta' $ and
 $ \ottnt{v_{{\mathrm{1}}}}  \simeq_{\mathtt{v} }  \ottnt{v_{{\mathrm{2}}}}   \ottsym{:}   \ottnt{T'} ;  \theta' ;  \delta' $, we have
 \[
    \theta'_{{\mathrm{1}}}  (   \delta'_{{\mathrm{1}}}  (  \ottnt{e'} \, [  \ottnt{v_{{\mathrm{1}}}}  \ottsym{/}  \mathit{x}  ]  )   )   \simeq_{\mathtt{e} }   \theta'_{{\mathrm{2}}}  (   \delta'_{{\mathrm{2}}}  (  \ottnt{e'} \, [  \ottnt{v_{{\mathrm{2}}}}  \ottsym{/}  \mathit{x}  ]  )   )    \ottsym{:}    \mathsf{Bool}  ;  \theta' ;  \delta' 
 \]
 by definition.
 Since terms related at $ \mathsf{Bool} $ evaluate to the same value,
 we have
 \[
   \theta'_{{\mathrm{2}}}  (   \delta'_{{\mathrm{2}}}  (  \ottnt{e'} \, [  \ottnt{v_{{\mathrm{2}}}}  \ottsym{/}  \mathit{x}  ]  )   )   \longrightarrow^{\ast}   \mathsf{true} .
 \] \qedhere
\end{prop}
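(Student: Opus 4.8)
The plan is to prove, by induction on the structure of $\ottnt{T}$, that the value relation at $\ottnt{T}$ is insensitive to the permitted discrepancies between $(\theta,\delta)$ and $(\theta',\delta')$, i.e.\ that $ \ottnt{v_{{\mathrm{1}}}}  \simeq_{\mathtt{v} }  \ottnt{v_{{\mathrm{2}}}}   \ottsym{:}   \ottnt{T} ;  \theta ;  \delta $ iff $ \ottnt{v_{{\mathrm{1}}}}  \simeq_{\mathtt{v} }  \ottnt{v_{{\mathrm{2}}}}   \ottsym{:}   \ottnt{T} ;  \theta' ;  \delta' $, from which the stated term-relation equivalence follows. The reduction to the value relation rests on a first observation: since $\ottnt{e_{{\mathrm{1}}}}$ and $\ottnt{e_{{\mathrm{2}}}}$ are closed, neither their reduction sequences nor their blame behavior depend on $\theta$ or $\delta$, so the term-relation clause immediately hands the goal off to the value relation on the computed values. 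A second, bookkeeping observation is that the set-equality hypothesis forces $ \mathit{dom}  (  \theta  )   \ottsym{=}   \mathit{dom}  (  \theta'  ) $ and makes $\theta,\theta'$ agree on both the relation and the left-hand type at every type variable; hence, \emph{as substitutions}, $\theta_{{\mathrm{1}}}  \ottsym{=}  \theta'_{{\mathrm{1}}}$, which together with $\delta_{{\mathrm{1}}}  \ottsym{=}  \delta'_{{\mathrm{1}}}$ means the entire left-hand closing substitution coincides on the two sides.

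The base cases are immediate: for $\ottnt{T}  \ottsym{=}  \ottnt{B}$ the value relation mentions neither $\theta$ nor $\delta$, and for $\ottnt{T}  \ottsym{=}  \alpha$ it depends only on whether $\theta(\alpha)$ and $\theta'(\alpha)$ carry the same relation component, which the set-equality hypothesis guarantees. For $\ottnt{T}  \ottsym{=}   \mathit{x} \mathord{:} \ottnt{T_{{\mathrm{1}}}} \rightarrow \ottnt{T_{{\mathrm{2}}}} $ and $\ottnt{T}  \ottsym{=}   \forall   \alpha  .  \ottnt{T'} $ the argument is routine once the induction hypothesis is applied to the component types: I would extend $\delta$ (respectively $\theta$) by a pair of related arguments (respectively by an arbitrary interpretation $\langle  \ottnt{r}  \ottsym{,}  \ottnt{T'_{{\mathrm{1}}}}  \ottsym{,}  \ottnt{T'_{{\mathrm{2}}}}  \rangle$), \emph{identically} on both sides, check that the two structural hypotheses ($\delta_{{\mathrm{1}}}  \ottsym{=}  \delta'_{{\mathrm{1}}}$ and the agreement of $\theta,\theta'$) survive the extension, and discharge the required self-relatedness of the component types from the function/universal clauses of the type relation applied to $ \ottnt{T}  \simeq  \ottnt{T}   \ottsym{:}   \ast ;  \theta ;  \delta $ and $ \ottnt{T}  \simeq  \ottnt{T}   \ottsym{:}   \ast ;  \theta' ;  \delta' $. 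In the function case the quantifications over related arguments coincide on the two sides by the induction hypothesis at $\ottnt{T_{{\mathrm{1}}}}$, and the codomain relatedness transfers by the induction hypothesis at $\ottnt{T_{{\mathrm{2}}}}$; the universal case is analogous, extending by the same triple on both sides so that even the right-hand type component agrees.

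The refinement case $\ottnt{T}  \ottsym{=}   \{  \mathit{x}  \mathord{:}  \ottnt{T'}   \mathop{\mid}   \ottnt{e'}  \} $ is where the real work lies and is the main obstacle. Relatedness at $\ottnt{T}$ decomposes into (a) relatedness of $\ottnt{v_{{\mathrm{1}}}},\ottnt{v_{{\mathrm{2}}}}$ at the underlying type $\ottnt{T'}$, (b) the left check $ \theta_{{\mathrm{1}}}  (   \delta_{{\mathrm{1}}}  (  \ottnt{e'} \, [  \ottnt{v_{{\mathrm{1}}}}  \ottsym{/}  \mathit{x}  ]  )   )   \longrightarrow^{*}   \mathsf{true} $, and (c) the right check $ \theta_{{\mathrm{2}}}  (   \delta_{{\mathrm{2}}}  (  \ottnt{e'} \, [  \ottnt{v_{{\mathrm{2}}}}  \ottsym{/}  \mathit{x}  ]  )   )   \longrightarrow^{*}   \mathsf{true} $. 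Condition (a) transfers directly by the induction hypothesis at $\ottnt{T'}$, whose self-relatedness under both pairs comes from the refinement clause of the type relation. Condition (b) is \emph{literally identical} across $(\theta,\delta)$ and $(\theta',\delta')$, since $\theta_{{\mathrm{1}}}  \ottsym{=}  \theta'_{{\mathrm{1}}}$ and $\delta_{{\mathrm{1}}}  \ottsym{=}  \delta'_{{\mathrm{1}}}$. The subtle point is condition (c), because $\theta_{{\mathrm{2}}},\delta_{{\mathrm{2}}}$ need not equal $\theta'_{{\mathrm{2}}},\delta'_{{\mathrm{2}}}$; here the self-relatedness assumptions serve as a bridge. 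From $ \{  \mathit{x}  \mathord{:}  \ottnt{T'}   \mathop{\mid}   \ottnt{e'}  \}   \simeq   \{  \mathit{x}  \mathord{:}  \ottnt{T'}   \mathop{\mid}   \ottnt{e'}  \}    \ottsym{:}   \ast ;  \theta ;  \delta $ the refinement clause yields that the two substituted refinements are related at $ \mathsf{Bool} $ under $(\theta,\delta)$, and since terms related at $ \mathsf{Bool} $ evaluate to the same constant, (b) holds iff (c) holds under $(\theta,\delta)$; the same equivalence holds under $(\theta',\delta')$ using its self-relatedness together with the fact, from the induction hypothesis at $\ottnt{T'}$, that $\ottnt{v_{{\mathrm{1}}}},\ottnt{v_{{\mathrm{2}}}}$ are also related at $\ottnt{T'}$ under $(\theta',\delta')$. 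Chaining these two equivalences through the common condition (b) shows that (c) holds under $(\theta,\delta)$ iff it holds under $(\theta',\delta')$, closing the case. The principal difficulty throughout is exactly this management of the right-hand data, which cannot be compared directly and must instead be routed through the type-level self-relatedness hypotheses and $ \mathsf{Bool} $-cotermination.
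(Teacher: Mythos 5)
Your proposal is correct and follows essentially the same route as the paper's proof: induction on $\ottnt{T}$, with the base and function/universal cases handled routinely, and the refinement case resolved by noting that the left-hand substitutions coincide ($\theta_{{\mathrm{1}}}  \ottsym{=}  \theta'_{{\mathrm{1}}}$, $\delta_{{\mathrm{1}}}  \ottsym{=}  \delta'_{{\mathrm{1}}}$) and routing the right-hand check through the $ \mathsf{Bool} $-relatedness of the substituted refinements supplied by the self-relatedness hypotheses. The only cosmetic difference is that you phrase the last step as chaining two biconditionals through the shared left-hand check, whereas the paper applies the $(\theta',\delta')$-side implication directly; the content is the same.
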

\begin{prop}{fh-lr-val-exchange-wf}
 Suppose that $ \Gamma  ,  \mathit{x}  \mathord{:}  \ottnt{T}   \ottsym{,}  \Gamma'$ is self-related.
 If 
 $ \Gamma  ,  \mathit{x}  \mathord{:}  \ottnt{T}   \ottsym{,}  \Gamma'  \vdash  \theta  \ottsym{;}   \delta    [  \,  (  \ottnt{v_{{\mathrm{1}}}}  ,  \ottnt{v_{{\mathrm{2}}}}  ) /  \mathit{x}  \,  ]  $ and
 $ \ottnt{v_{{\mathrm{1}}}}  \simeq_{\mathtt{v} }  \ottnt{v'_{{\mathrm{2}}}}   \ottsym{:}   \ottnt{T} ;  \theta ;   \delta    [  \,  (  \ottnt{v_{{\mathrm{1}}}}  ,  \ottnt{v_{{\mathrm{2}}}}  ) /  \mathit{x}  \,  ]   $,
 then
 $ \Gamma  ,  \mathit{x}  \mathord{:}  \ottnt{T}   \ottsym{,}  \Gamma'  \vdash  \theta  \ottsym{;}   \delta    [  \,  (  \ottnt{v_{{\mathrm{1}}}}  ,  \ottnt{v'_{{\mathrm{2}}}}  ) /  \mathit{x}  \,  ]  $.

 \proof

 By induction on $\Gamma'$.
 The case that $\Gamma'  \ottsym{=}   \Gamma''  ,  \mathit{y}  \mathord{:}  \ottnt{T'} $ is shown
 by \prop:ref{fh-lr-untyped-exchange-trel}.
\end{prop}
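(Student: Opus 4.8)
The plan is to induct on $\Gamma'$, reducing everything to the weakening/strengthening lemma \prop:ref{fh-lr-val-ws} and the exchange lemma \prop:ref{fh-lr-untyped-exchange-trel}. Throughout I write $\rho = \delta[(v_1,v_2)/x]$ and $\rho' = \delta[(v_1,v'_2)/x]$; these agree on every variable except $x$, their \emph{left} substitutions coincide ($\rho_1 = \rho'_1$, both sending $x$ to $v_1$), and $\theta$ is untouched. Since $\simeq_{\mathtt{e}}$ restricts to $\simeq_{\mathtt{v}}$ on values (a value reduces to itself in zero steps), I may apply the $\simeq_{\mathtt{e}}$-indexed lemma \prop:ref{fh-lr-untyped-exchange-trel} to the value relation, and since $\theta=\theta'$ its hypothesis on type interpretations holds trivially.

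For the base case $\Gamma' = \emptyset$ the context is $\Gamma, x{:}T$. Every type $T_w$ occurring in a binding of $\Gamma$, as well as $T$ itself, is well formed in a prefix of $\Gamma$ not containing $x$, so $x \notin \mathit{FV}(T_w)$ and $x \notin \mathit{FV}(T)$. Hence for each binding $w{:}T_w$ I can drop the $x$-component of $\rho$ by the strengthening direction of \prop:ref{fh-lr-val-ws(trel)} and re-insert $(v_1,v'_2)$ by its weakening direction, turning the hypothesis $\Gamma, x{:}T \vdash \theta; \rho$ binding-by-binding into $\Gamma, x{:}T \vdash \theta; \rho'$; the new $x{:}T$ requirement $v_1 \simeq_{\mathtt{v}} v'_2 : T; \theta; \rho'$ comes from the assumption $v_1 \simeq_{\mathtt{v}} v'_2 : T; \theta; \rho$ in exactly the same way.

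For the step case, if the last entry of $\Gamma'$ is a type variable $\alpha$ the argument is immediate, since $\rho,\rho'$ are unchanged and $\alpha \in \mathit{dom}(\theta)$ is inherited; so assume $\Gamma' = \Gamma'', y{:}T'$ and split $\delta$ as $\hat\delta \cup \{y \mapsto (w_1,w_2)\}$. First I peel off $y$: by \prop:ref{fh-lr-val-ws(tctx)} the hypothesis $(\Gamma,x{:}T,\Gamma''),\,y{:}T' \vdash \theta; \hat\delta[(v_1,v_2)/x][(w_1,w_2)/y]$ yields both $\Gamma,x{:}T,\Gamma'' \vdash \theta; \hat\delta[(v_1,v_2)/x]$ and $w_1 \simeq_{\mathtt{v}} w_2 : T'; \theta; \hat\delta[(v_1,v_2)/x]$; and since $y \notin \mathit{FV}(T)$ the assumption $v_1 \simeq_{\mathtt{v}} v'_2 : T;\theta;\rho$ strengthens (by \prop:ref{fh-lr-val-ws(trel)}) to $v_1 \simeq_{\mathtt{v}} v'_2 : T;\theta;\hat\delta[(v_1,v_2)/x]$. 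Applying the induction hypothesis to $\Gamma,x{:}T,\Gamma''$ (a prefix of a self-related context, hence self-related) gives $\Gamma,x{:}T,\Gamma'' \vdash \theta; \hat\delta[(v_1,v'_2)/x]$. It then remains to re-insert $y$ via \prop:ref{fh-lr-val-ws(tctx)}, for which I must upgrade $w_1 \simeq_{\mathtt{v}} w_2 : T'$ from index $\hat\delta[(v_1,v_2)/x]$ to $\hat\delta[(v_1,v'_2)/x]$.

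This last upgrade is the main obstacle and is exactly where \prop:ref{fh-lr-untyped-exchange-trel} enters, since $T'$ may mention $x$ and so strengthening no longer applies. That lemma demands $T'$ be self-related under \emph{both} value assignments. Self-relatedness of $\Gamma,x{:}T,\Gamma'',y{:}T'$ gives $\Gamma,x{:}T,\Gamma'' \vdash T' \simeq T' : \ast$, and instantiating this logical relation on types with the two environments $\hat\delta[(v_1,v_2)/x]$ and $\hat\delta[(v_1,v'_2)/x]$—whose respecting of $\Gamma,x{:}T,\Gamma''$ was established above (the first while peeling off $y$, the second from the induction hypothesis)—yields $T' \simeq T' : \ast; \theta; \hat\delta[(v_1,v_2)/x]$ and $T' \simeq T' : \ast; \theta; \hat\delta[(v_1,v'_2)/x]$. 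As the two assignments share one left substitution and one $\theta$, \prop:ref{fh-lr-untyped-exchange-trel} transports $w_1 \simeq_{\mathtt{v}} w_2 : T'$ to index $\hat\delta[(v_1,v'_2)/x]$, and \prop:ref{fh-lr-val-ws(tctx)} re-adds $y{:}T'$ to conclude $\Gamma,x{:}T,\Gamma'',y{:}T' \vdash \theta; \rho'$. The crucial point is the ordering: the induction hypothesis must be discharged \emph{before} the exchange step, because it is precisely the induction hypothesis that supplies the self-relatedness of $T'$ under the new value assignment required by \prop:ref{fh-lr-untyped-exchange-trel}.
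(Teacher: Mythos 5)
Your proof is correct and follows exactly the paper's (very terse) argument: induction on $\Gamma'$, with the binding case $\Gamma' = \Gamma'', y{:}T'$ discharged by \prop:ref{fh-lr-untyped-exchange-trel} after the induction hypothesis has supplied the second well-formedness judgment needed to instantiate self-relatedness of $T'$ under the new assignment. The elaboration of the ordering (IH first, then exchange) and the reduction of the value relation to the term-relation-indexed exchange lemma are the right details to fill in.
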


% Now, we show \emph{term compositionality}, a key lemma for showing the term
% application case of the fundamental property---it states that term relations
% indexed by $\ottnt{T} \, [  \ottnt{e_{{\mathrm{12}}}}  \ottsym{/}  \mathit{x}  ]$ with $\delta$ and $\ottnt{T}$ with
% $ \delta    [  \,  (  \ottnt{v_{{\mathrm{12}}}}  ,  \ottnt{v_{{\mathrm{22}}}}  ) /  \mathit{x}  \,  ]  $ are the same.
% %
% Term compositionality assumes that $\ottnt{e_{{\mathrm{12}}}}$ is self-related (more precisely,
% evaluation results of $ \theta_{{\mathrm{1}}}  (   \delta_{{\mathrm{1}}}  (  \ottnt{e_{{\mathrm{12}}}}  )   ) $ and $ \theta_{{\mathrm{2}}}  (   \delta_{{\mathrm{2}}}  (  \ottnt{e_{{\mathrm{12}}}}  )   ) $
% are logically related).
% %
% In that statement,
% $\ottnt{e_{{\mathrm{12}}}}$, $\ottnt{v_{{\mathrm{12}}}}$, $\ottnt{v_{{\mathrm{22}}}}$ are denoted by
% $\ottnt{e'}$, $\ottnt{v'_{{\mathrm{1}}}}$, and $\ottnt{v'_{{\mathrm{2}}}}$, respectively, and $\ottnt{e_{{\mathrm{22}}}}$ does not appear.
%
Now, we show the term compositionality.
In the statement, $\ottnt{v'_{{\mathrm{1}}}}$ and $\ottnt{v'_{{\mathrm{2}}}}$ correspond to $\ottnt{v_{{\mathrm{12}}}}$ and $\ottnt{v_{{\mathrm{22}}}}$, respectively, in the paragraph informally exlaining this property and $\ottnt{e'}$ to $\ottnt{e_{{\mathrm{12}}}}$
discussed in the second paragraph of \sect{proving-overview}.
\begin{prop}[name=Term Compositionality]{fh-lr-term-comp}
 Suppose that $ \Gamma  ,  \mathit{x}  \mathord{:}  \ottnt{T'}   \ottsym{,}  \Gamma'$ is self-related and $  \Gamma  ,  \mathit{x}  \mathord{:}  \ottnt{T'}   \ottsym{,}  \Gamma' \vdash \ottnt{T}  \mathrel{ \simeq }  \ottnt{T}  : \ast $.
 If
 $ \Gamma  ,  \mathit{x}  \mathord{:}  \ottnt{T'}   \ottsym{,}  \Gamma'  \vdash  \theta  \ottsym{;}   \delta    [  \,  (  \ottnt{v'_{{\mathrm{1}}}}  ,  \ottnt{v'_{{\mathrm{2}}}}  ) /  \mathit{x}  \,  ]  $ and
 $ \theta_{{\mathrm{1}}}  (   \delta_{{\mathrm{1}}}  (  \ottnt{e'}  )   )   \longrightarrow^{\ast}  \ottnt{v'_{{\mathrm{1}}}}$ and
 $ \theta_{{\mathrm{2}}}  (   \delta_{{\mathrm{2}}}  (  \ottnt{e'}  )   )   \longrightarrow^{\ast}  \ottnt{v''_{{\mathrm{2}}}}$ and
 $ \ottnt{v'_{{\mathrm{1}}}}  \simeq_{\mathtt{v} }  \ottnt{v''_{{\mathrm{2}}}}   \ottsym{:}   \ottnt{T'} ;  \theta ;  \delta $,
 then
 $ \ottnt{e_{{\mathrm{1}}}}  \simeq_{\mathtt{e} }  \ottnt{e_{{\mathrm{2}}}}   \ottsym{:}   \ottnt{T} ;  \theta ;   \delta    [  \,  (  \ottnt{v'_{{\mathrm{1}}}}  ,  \ottnt{v'_{{\mathrm{2}}}}  ) /  \mathit{x}  \,  ]   $ iff
 $ \ottnt{e_{{\mathrm{1}}}}  \simeq_{\mathtt{e} }  \ottnt{e_{{\mathrm{2}}}}   \ottsym{:}   \ottnt{T} \, [  \ottnt{e'}  \ottsym{/}  \mathit{x}  ] ;  \theta ;  \delta $
 for any $\ottnt{e_{{\mathrm{1}}}}$ and $\ottnt{e_{{\mathrm{2}}}}$.

 \proof

 By induction on $\ottnt{T}$.
 If $\ottnt{e_{{\mathrm{1}}}}$ and $\ottnt{e_{{\mathrm{2}}}}$ raise blame, then the conclusion is obvious.
 In what follows, suppose that
 $\ottnt{e_{{\mathrm{1}}}}  \longrightarrow^{\ast}  \ottnt{v_{{\mathrm{1}}}}$ and $\ottnt{e_{{\mathrm{2}}}}  \longrightarrow^{\ast}  \ottnt{v_{{\mathrm{2}}}}$.
 All cases except that $\ottnt{T}$ is a refinement type are straightforward
 by the IH(s).

 Let us consider the case that $\ottnt{T}  \ottsym{=}   \{  \mathit{y}  \mathord{:}  \ottnt{T''}   \mathop{\mid}   \ottnt{e''}  \} $.
 Without loss of generality, we can suppose that $\mathit{y} \, \notin \,  \mathit{dom}  (  \delta  ) $.
 We have to show:
 \[\begin{array}{lcl}
   \ottnt{v_{{\mathrm{1}}}}  \simeq_{\mathtt{v} }  \ottnt{v_{{\mathrm{2}}}}   \ottsym{:}   \ottnt{T''} ;  \theta ;   \delta    [  \,  (  \ottnt{v'_{{\mathrm{1}}}}  ,  \ottnt{v'_{{\mathrm{2}}}}  ) /  \mathit{x}  \,  ]    &&
    \ottnt{v_{{\mathrm{1}}}}  \simeq_{\mathtt{v} }  \ottnt{v_{{\mathrm{2}}}}   \ottsym{:}   \ottnt{T''} \, [  \ottnt{e'}  \ottsym{/}  \mathit{x}  ] ;  \theta ;  \delta  \\
   \theta_{{\mathrm{1}}}  (   \delta_{{\mathrm{1}}}  (  \ottnt{e''} \, [  \ottnt{v'_{{\mathrm{1}}}}  \ottsym{/}  \mathit{x}  ]  )   )  \, [  \ottnt{v_{{\mathrm{1}}}}  \ottsym{/}  \mathit{y}  ]  \longrightarrow^{\ast}   \mathsf{true}  & \text{iff} &
    \theta_{{\mathrm{1}}}  (   \delta_{{\mathrm{1}}}  (  \ottnt{e''} \, [  \ottnt{e'}  \ottsym{/}  \mathit{x}  ]  )   )  \, [  \ottnt{v_{{\mathrm{1}}}}  \ottsym{/}  \mathit{y}  ]  \longrightarrow^{\ast}   \mathsf{true}  \\
   \theta_{{\mathrm{2}}}  (   \delta_{{\mathrm{2}}}  (  \ottnt{e''} \, [  \ottnt{v'_{{\mathrm{2}}}}  \ottsym{/}  \mathit{x}  ]  )   )  \, [  \ottnt{v_{{\mathrm{2}}}}  \ottsym{/}  \mathit{y}  ]  \longrightarrow^{\ast}   \mathsf{true}  &&
    \theta_{{\mathrm{2}}}  (   \delta_{{\mathrm{2}}}  (  \ottnt{e''} \, [  \ottnt{e'}  \ottsym{/}  \mathit{x}  ]  )   )  \, [  \ottnt{v_{{\mathrm{2}}}}  \ottsym{/}  \mathit{y}  ]  \longrightarrow^{\ast}   \mathsf{true} 
   \end{array}\]
 %
 % We have to show
 % \[\begin{array}{cl}
 %   \ottnt{v_{{\mathrm{1}}}}  \simeq_{\mathtt{v} }  \ottnt{v_{{\mathrm{2}}}}   \ottsym{:}   \ottnt{T''} ;  \theta ;   \delta    [  \,  (  \ottnt{v'_{{\mathrm{1}}}}  ,  \ottnt{v'_{{\mathrm{2}}}}  ) /  \mathit{x}  \,  ]   , \\
 %    \theta_{{\mathrm{1}}}  (   \delta_{{\mathrm{1}}}  (  \ottnt{e''} \, [  \ottnt{v'_{{\mathrm{1}}}}  \ottsym{/}  \mathit{x}  ]  )   )  \, [  \ottnt{v_{{\mathrm{1}}}}  \ottsym{/}  \mathit{y}  ]  \longrightarrow^{\ast}   \mathsf{true} , & \text{ and } \\
 %    \theta_{{\mathrm{2}}}  (   \delta_{{\mathrm{2}}}  (  \ottnt{e''} \, [  \ottnt{v'_{{\mathrm{2}}}}  \ottsym{/}  \mathit{x}  ]  )   )  \, [  \ottnt{v_{{\mathrm{2}}}}  \ottsym{/}  \mathit{y}  ]  \longrightarrow^{\ast}   \mathsf{true}  \\
 %  \multicolumn{2}{c}{\text{iff}} \\
 %   \ottnt{v_{{\mathrm{1}}}}  \simeq_{\mathtt{v} }  \ottnt{v_{{\mathrm{2}}}}   \ottsym{:}   \ottnt{T''} \, [  \ottnt{e'}  \ottsym{/}  \mathit{x}  ] ;  \theta ;  \delta , \\
 %    \theta_{{\mathrm{1}}}  (   \delta_{{\mathrm{1}}}  (  \ottnt{e''} \, [  \ottnt{e'}  \ottsym{/}  \mathit{x}  ]  )   )  \, [  \ottnt{v_{{\mathrm{1}}}}  \ottsym{/}  \mathit{y}  ]  \longrightarrow^{\ast}   \mathsf{true} , & \text{ and } \\
 %    \theta_{{\mathrm{2}}}  (   \delta_{{\mathrm{2}}}  (  \ottnt{e''} \, [  \ottnt{e'}  \ottsym{/}  \mathit{x}  ]  )   )  \, [  \ottnt{v_{{\mathrm{2}}}}  \ottsym{/}  \mathit{y}  ]  \longrightarrow^{\ast}   \mathsf{true} .
 %   \end{array}\]

 First, we show the left-to-right direction.
 Since $  \Gamma  ,  \mathit{x}  \mathord{:}  \ottnt{T'}   \ottsym{,}  \Gamma' \vdash  \{  \mathit{y}  \mathord{:}  \ottnt{T''}   \mathop{\mid}   \ottnt{e''}  \}   \mathrel{ \simeq }   \{  \mathit{y}  \mathord{:}  \ottnt{T''}   \mathop{\mid}   \ottnt{e''}  \}   : \ast $,
 it is easy to show that $  \Gamma  ,  \mathit{x}  \mathord{:}  \ottnt{T'}   \ottsym{,}  \Gamma' \vdash \ottnt{T''}  \mathrel{ \simeq }  \ottnt{T''}  : \ast $.
 Since $ \ottnt{v_{{\mathrm{1}}}}  \simeq_{\mathtt{v} }  \ottnt{v_{{\mathrm{2}}}}   \ottsym{:}   \ottnt{T''} ;  \theta ;   \delta    [  \,  (  \ottnt{v'_{{\mathrm{1}}}}  ,  \ottnt{v'_{{\mathrm{2}}}}  ) /  \mathit{x}  \,  ]   $
 by the assumption in the left-to-right direction, we have
 \[
   \ottnt{v_{{\mathrm{1}}}}  \simeq_{\mathtt{v} }  \ottnt{v_{{\mathrm{2}}}}   \ottsym{:}   \ottnt{T''} \, [  \ottnt{e'}  \ottsym{/}  \mathit{x}  ] ;  \theta ;  \delta 
 \]
 by the IH.
 Since $ \theta_{{\mathrm{1}}}  (   \delta_{{\mathrm{1}}}  (  \ottnt{e'}  )   )   \longrightarrow^{\ast}  \ottnt{v'_{{\mathrm{1}}}}$ and
 $ \theta_{{\mathrm{1}}}  (   \delta_{{\mathrm{1}}}  (  \ottnt{e''} \, [  \ottnt{v'_{{\mathrm{1}}}}  \ottsym{/}  \mathit{x}  ]  )   )  \, [  \ottnt{v_{{\mathrm{1}}}}  \ottsym{/}  \mathit{y}  ]  \longrightarrow^{\ast}   \mathsf{true} $ (the assumption in the
 left-to-right direction), we have
 \[
   \theta_{{\mathrm{1}}}  (   \delta_{{\mathrm{1}}}  (  \ottnt{e''} \, [  \ottnt{e'}  \ottsym{/}  \mathit{x}  ]  )   )  \, [  \ottnt{v_{{\mathrm{1}}}}  \ottsym{/}  \mathit{y}  ]  \longrightarrow^{\ast}   \mathsf{true} 
 \]
 by Cotermination (\prop:ref{fh-coterm-true}).
 The remaining obligation is
 \[
   \theta_{{\mathrm{2}}}  (   \delta_{{\mathrm{2}}}  (  \ottnt{e''} \, [  \ottnt{e'}  \ottsym{/}  \mathit{x}  ]  )   )  \, [  \ottnt{v_{{\mathrm{2}}}}  \ottsym{/}  \mathit{y}  ]  \longrightarrow^{\ast}   \mathsf{true} .
 \]
 Since $\ottnt{v''_{{\mathrm{2}}}}$ is the evaluation result of $ \theta_{{\mathrm{2}}}  (   \delta_{{\mathrm{2}}}  (  \ottnt{e'}  )   ) $, it
 suffices to show that, by Cotermination,
 \[
   \theta_{{\mathrm{2}}}  (   \delta_{{\mathrm{2}}}  (  \ottnt{e''} \, [  \ottnt{v''_{{\mathrm{2}}}}  \ottsym{/}  \mathit{x}  ]  )   )  \, [  \ottnt{v_{{\mathrm{2}}}}  \ottsym{/}  \mathit{y}  ]  \longrightarrow^{\ast}   \mathsf{true} .
 \]
 Since $ \Gamma  ,  \mathit{x}  \mathord{:}  \ottnt{T'}   \ottsym{,}  \Gamma'  \vdash  \theta  \ottsym{;}   \delta    [  \,  (  \ottnt{v'_{{\mathrm{1}}}}  ,  \ottnt{v'_{{\mathrm{2}}}}  ) /  \mathit{x}  \,  ]  $ and
 $ \ottnt{v_{{\mathrm{1}}}}  \simeq_{\mathtt{v} }  \ottnt{v_{{\mathrm{2}}}}   \ottsym{:}   \ottnt{T''} ;  \theta ;   \delta    [  \,  (  \ottnt{v'_{{\mathrm{1}}}}  ,  \ottnt{v'_{{\mathrm{2}}}}  ) /  \mathit{x}  \,  ]   $,
 we have
 \begin{equation}
    \Gamma  ,  \mathit{x}  \mathord{:}  \ottnt{T'}   \ottsym{,}  \Gamma'  ,  \mathit{y}  \mathord{:}  \ottnt{T''}   \vdash  \theta  \ottsym{;}    \delta    [  \,  (  \ottnt{v'_{{\mathrm{1}}}}  ,  \ottnt{v'_{{\mathrm{2}}}}  ) /  \mathit{x}  \,  ]      [  \,  (  \ottnt{v_{{\mathrm{1}}}}  ,  \ottnt{v_{{\mathrm{2}}}}  ) /  \mathit{y}  \,  ]  
   \label{eqn:fh-lr-term-comp-one}
 \end{equation}
 by the weakening (\prop:ref{fh-lr-val-ws}).
 Since $ \ottnt{v'_{{\mathrm{1}}}}  \simeq_{\mathtt{v} }  \ottnt{v''_{{\mathrm{2}}}}   \ottsym{:}   \ottnt{T'} ;  \theta ;  \delta $,
 we have
 \begin{equation}
   \ottnt{v'_{{\mathrm{1}}}}  \simeq_{\mathtt{v} }  \ottnt{v''_{{\mathrm{2}}}}   \ottsym{:}   \ottnt{T'} ;  \theta ;    \delta    [  \,  (  \ottnt{v'_{{\mathrm{1}}}}  ,  \ottnt{v'_{{\mathrm{2}}}}  ) /  \mathit{x}  \,  ]      [  \,  (  \ottnt{v_{{\mathrm{1}}}}  ,  \ottnt{v_{{\mathrm{2}}}}  ) /  \mathit{y}  \,  ]   
   \label{eqn:fh-lr-term-comp-two}
 \end{equation}
 by the weakening.
 By applying \prop:ref{fh-lr-val-exchange-wf} to
 (\ref{eqn:fh-lr-term-comp-one}) and (\ref{eqn:fh-lr-term-comp-two}),
 we have
 \[
    \Gamma  ,  \mathit{x}  \mathord{:}  \ottnt{T'}   \ottsym{,}  \Gamma'  ,  \mathit{y}  \mathord{:}  \ottnt{T''}   \vdash  \theta  \ottsym{;}    \delta    [  \,  (  \ottnt{v'_{{\mathrm{1}}}}  ,  \ottnt{v''_{{\mathrm{2}}}}  ) /  \mathit{x}  \,  ]      [  \,  (  \ottnt{v_{{\mathrm{1}}}}  ,  \ottnt{v_{{\mathrm{2}}}}  ) /  \mathit{y}  \,  ]  .
 \]
 Since $  \Gamma  ,  \mathit{x}  \mathord{:}  \ottnt{T'}   \ottsym{,}  \Gamma'  ,  \mathit{y}  \mathord{:}  \ottnt{T''}   \vdash  \ottnt{e''} \,  \mathrel{ \simeq }  \, \ottnt{e''}  \ottsym{:}   \mathsf{Bool} $
 from $  \Gamma  ,  \mathit{x}  \mathord{:}  \ottnt{T'}   \ottsym{,}  \Gamma' \vdash  \{  \mathit{y}  \mathord{:}  \ottnt{T''}   \mathop{\mid}   \ottnt{e''}  \}   \mathrel{ \simeq }   \{  \mathit{y}  \mathord{:}  \ottnt{T''}   \mathop{\mid}   \ottnt{e''}  \}   : \ast $,
 we have
 \[
    \theta_{{\mathrm{1}}}  (   \delta_{{\mathrm{1}}}  (  \ottnt{e''} \, [  \ottnt{v'_{{\mathrm{1}}}}  \ottsym{/}  \mathit{x}  ]  )   )  \, [  \ottnt{v_{{\mathrm{1}}}}  \ottsym{/}  \mathit{y}  ]  \simeq_{\mathtt{e} }   \theta_{{\mathrm{2}}}  (   \delta_{{\mathrm{2}}}  (  \ottnt{e''} \, [  \ottnt{v''_{{\mathrm{2}}}}  \ottsym{/}  \mathit{x}  ]  )   )  \, [  \ottnt{v_{{\mathrm{2}}}}  \ottsym{/}  \mathit{y}  ]   \ottsym{:}    \mathsf{Bool}  ;  \theta ;    \delta    [  \,  (  \ottnt{v'_{{\mathrm{1}}}}  ,  \ottnt{v''_{{\mathrm{2}}}}  ) /  \mathit{x}  \,  ]      [  \,  (  \ottnt{v_{{\mathrm{1}}}}  ,  \ottnt{v_{{\mathrm{2}}}}  ) /  \mathit{y}  \,  ]   .
 \]
 Since the term on the left-hand side evaluates to $ \mathsf{true} $ (the assumption in
 the left-to-right direction), the one on the right-hand side also evaluates to
 $ \mathsf{true} $ by definition.
 Hence, we finish.

 The other direction is shown in a similar way except the case of
 \[
   \theta_{{\mathrm{2}}}  (   \delta_{{\mathrm{2}}}  (  \ottnt{e''} \, [  \ottnt{v'_{{\mathrm{2}}}}  \ottsym{/}  \mathit{x}  ]  )   )  \, [  \ottnt{v_{{\mathrm{2}}}}  \ottsym{/}  \mathit{y}  ]  \longrightarrow^{\ast}   \mathsf{true} .
 \]
 This case is shown as follows.
 From (\ref{eqn:fh-lr-term-comp-one}), which can be shown also in the
 right-to-left direction with the IH, and
 $  \Gamma  ,  \mathit{x}  \mathord{:}  \ottnt{T'}   \ottsym{,}  \Gamma'  ,  \mathit{y}  \mathord{:}  \ottnt{T''}   \vdash  \ottnt{e''} \,  \mathrel{ \simeq }  \, \ottnt{e''}  \ottsym{:}   \mathsf{Bool} $, it is found that
 \[
    \theta_{{\mathrm{1}}}  (   \delta_{{\mathrm{1}}}  (  \ottnt{e''} \, [  \ottnt{v'_{{\mathrm{1}}}}  \ottsym{/}  \mathit{x}  ]  )   )  \, [  \ottnt{v_{{\mathrm{1}}}}  \ottsym{/}  \mathit{y}  ]  \simeq_{\mathtt{e} }   \theta_{{\mathrm{2}}}  (   \delta_{{\mathrm{2}}}  (  \ottnt{e''} \, [  \ottnt{v'_{{\mathrm{2}}}}  \ottsym{/}  \mathit{x}  ]  )   )  \, [  \ottnt{v_{{\mathrm{2}}}}  \ottsym{/}  \mathit{y}  ]   \ottsym{:}    \mathsf{Bool}  ;  \theta ;    \delta    [  \,  (  \ottnt{v'_{{\mathrm{1}}}}  ,  \ottnt{v'_{{\mathrm{2}}}}  ) /  \mathit{x}  \,  ]      [  \,  (  \ottnt{v_{{\mathrm{1}}}}  ,  \ottnt{v_{{\mathrm{2}}}}  ) /  \mathit{y}  \,  ]   .
 \]
 Since it is found that the term on the left-hand side evaluates to $ \mathsf{true} $
 by applying Cotermination to $ \theta_{{\mathrm{1}}}  (   \delta_{{\mathrm{1}}}  (  \ottnt{e''} \, [  \ottnt{e'}  \ottsym{/}  \mathit{x}  ]  )   )  \, [  \ottnt{v_{{\mathrm{1}}}}  \ottsym{/}  \mathit{y}  ]  \longrightarrow^{\ast}   \mathsf{true} $,
 so does the one on the right-hand side, which we want to show.
\end{prop}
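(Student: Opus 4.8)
The plan is to prove the biconditional by induction on the structure of the type $T$, following the usual compositionality argument but taking care of the asymmetry created by $v'_2$ and $v''_2$. First I would dispose of the case in which both $e_1$ and $e_2$ reduce to blame, where both term relations hold unconditionally; otherwise I may assume $e_1 \longrightarrow^\ast v_1$ and $e_2 \longrightarrow^\ast v_2$ and reduce the goal to the corresponding biconditional between $v_1 \simeq_{\mathtt{v}} v_2 : T; \theta; \delta[(v'_1,v'_2)/x]$ and $v_1 \simeq_{\mathtt{v}} v_2 : T[e'/x]; \theta; \delta$.

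For the base type $B$ and the type variable $\alpha$ the value relation is independent of $\delta$, and $T[e'/x] = T$ because $x$ is a term variable, so the two sides coincide immediately. For function and universal types, $T[e'/x]$ distributes over the connective and the value relation is built from the relations at the immediate subtypes (with $\delta$ extended by a fresh binding, or $\theta$ extended by a fresh interpretation); here I would appeal directly to the induction hypotheses, using the weakening/strengthening lemmas (\prop:ref{fh-lr-val-ws} and \prop:ref{fh-lr-typ-ws}) to commute the extra binding past the $(v'_1,v'_2)/x$ entry.

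The crux is the refinement case $T = \{y : T'' \mid e''\}$. Unfolding the definition, the left-hand relation amounts to $v_1 \simeq_{\mathtt{v}} v_2 : T''; \theta; \delta[(v'_1,v'_2)/x]$ together with the termination facts $\theta_1(\delta_1[v'_1/x](e''[v_1/y])) \longrightarrow^\ast \mathsf{true}$ and $\theta_2(\delta_2[v'_2/x](e''[v_2/y])) \longrightarrow^\ast \mathsf{true}$, while the right-hand relation replaces $T''$ by $T''[e'/x]$ and $e''$ by $e''[e'/x]$ and drops the extra binding. The underlying value-relation component is settled by the induction hypothesis at $T''$. For the first-projection termination conditions, both sides use $\theta_1$ and $\delta_1$ and differ only in whether $e'$ or its value $v'_1$ has been substituted for $x$; since $\theta_1(\delta_1(e')) \longrightarrow^\ast v'_1$ by hypothesis, Cotermination (\prop:ref{fh-coterm-true}) equates the two termination statements.

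The genuinely delicate part is the second-projection condition. Cotermination applied with $\theta_2(\delta_2(e')) \longrightarrow^\ast v''_2$ relates the right-hand condition not to the left-hand one, which substitutes $v'_2$ for $x$, but to the analogous statement with $v''_2$ in place of $v'_2$. To bridge this gap I would invoke the self-relatedness assumptions: from $\Gamma, x{:}T', \Gamma' \vdash \{y{:}T'' \mid e''\} \simeq \{y{:}T'' \mid e''\} : \ast$ I can extract $\Gamma, x{:}T', \Gamma', y{:}T'' \vdash e'' \simeq e'' : \mathsf{Bool}$, so $e''$ sends related value assignments to Boolean terms that reduce to the same value. Assembling the required assignment by weakening and then swapping its $x$-entry from $v'_2$ to $v''_2$ with the value-exchange lemma (\prop:ref{fh-lr-val-exchange-wf})---legitimate because $v'_1 \simeq_{\mathtt{v}} v''_2 : T'$---shows the $v'_2$- and $v''_2$-versions of the condition are equivalent, closing the refinement case. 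I expect this reconciliation of $v'_2$ with $v''_2$ to be the main obstacle, and it is exactly what forces the self-relatedness hypotheses to be threaded through the entire soundness development.
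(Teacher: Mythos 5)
Your proposal is correct and follows essentially the same route as the paper's proof: induction on $\ottnt{T}$ with the refinement case as the crux, Cotermination to trade $\ottnt{e'}$ for its values in the refinement conditions, and the reconciliation of $\ottnt{v'_{{\mathrm{2}}}}$ with $\ottnt{v''_{{\mathrm{2}}}}$ via weakening, the value-exchange lemma, and the self-relatedness of the refinement (giving $\ottnt{e''} \mathrel{\simeq} \ottnt{e''} : \mathsf{Bool}$). The only cosmetic difference is that you phrase the final step as showing the $\ottnt{v'_{{\mathrm{2}}}}$- and $\ottnt{v''_{{\mathrm{2}}}}$-versions equivalent to each other (through the first projection), whereas the paper relates each directly to the known-true first-projection condition, one per direction of the biconditional; both are sound.
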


\begin{prop}
 [name=Compatibility under Self-relatedness Assumption: Application]
 {fh-lr-comp-app-refl-assump}
 \label{lem:fh-lr-app-end}
 Suppose that $\Gamma$ is self-related and
 $\Gamma  \vdash  \ottnt{e_{{\mathrm{12}}}} \,  \mathrel{ \simeq }  \, \ottnt{e_{{\mathrm{12}}}}  \ottsym{:}  \ottnt{T_{{\mathrm{1}}}}$ and
 $ \Gamma \vdash  \mathit{x} \mathord{:} \ottnt{T_{{\mathrm{1}}}} \rightarrow \ottnt{T_{{\mathrm{2}}}}   \mathrel{ \simeq }   \mathit{x} \mathord{:} \ottnt{T_{{\mathrm{1}}}} \rightarrow \ottnt{T_{{\mathrm{2}}}}   : \ast $.
 If
 $\Gamma  \vdash  \ottnt{e_{{\mathrm{11}}}} \,  \mathrel{ \simeq }  \, \ottnt{e_{{\mathrm{21}}}}  \ottsym{:}  \ottsym{(}   \mathit{x} \mathord{:} \ottnt{T_{{\mathrm{1}}}} \rightarrow \ottnt{T_{{\mathrm{2}}}}   \ottsym{)}$ and
 $\Gamma  \vdash  \ottnt{e_{{\mathrm{12}}}} \,  \mathrel{ \simeq }  \, \ottnt{e_{{\mathrm{22}}}}  \ottsym{:}  \ottnt{T_{{\mathrm{1}}}}$,
 then
 $\Gamma  \vdash  \ottnt{e_{{\mathrm{11}}}} \, \ottnt{e_{{\mathrm{12}}}} \,  \mathrel{ \simeq }  \, \ottnt{e_{{\mathrm{21}}}} \, \ottnt{e_{{\mathrm{22}}}}  \ottsym{:}  \ottnt{T_{{\mathrm{2}}}} \, [  \ottnt{e_{{\mathrm{12}}}}  \ottsym{/}  \mathit{x}  ]$.

 \proof

 Suppose that $\Gamma  \vdash  \theta  \ottsym{;}  \delta$.
 Let
 $\ottnt{e'_{{\mathrm{11}}}}  \ottsym{=}   \theta_{{\mathrm{1}}}  (   \delta_{{\mathrm{1}}}  (  \ottnt{e_{{\mathrm{11}}}}  )   ) $,
 $\ottnt{e'_{{\mathrm{12}}}}  \ottsym{=}   \theta_{{\mathrm{1}}}  (   \delta_{{\mathrm{1}}}  (  \ottnt{e_{{\mathrm{12}}}}  )   ) $,
 $\ottnt{e'_{{\mathrm{21}}}}  \ottsym{=}   \theta_{{\mathrm{2}}}  (   \delta_{{\mathrm{2}}}  (  \ottnt{e_{{\mathrm{21}}}}  )   ) $, and
 $\ottnt{e'_{{\mathrm{22}}}}  \ottsym{=}   \theta_{{\mathrm{2}}}  (   \delta_{{\mathrm{2}}}  (  \ottnt{e_{{\mathrm{22}}}}  )   ) $.
 It suffices to show that
 \[  \ottnt{e'_{{\mathrm{11}}}} \, \ottnt{e'_{{\mathrm{12}}}}  \simeq_{\mathtt{e} }  \ottnt{e'_{{\mathrm{21}}}} \, \ottnt{e'_{{\mathrm{22}}}}   \ottsym{:}   \ottnt{T_{{\mathrm{2}}}} \, [  \ottnt{e_{{\mathrm{12}}}}  \ottsym{/}  \mathit{x}  ] ;  \theta ;  \delta . \]
 If both $\ottnt{e'_{{\mathrm{11}}}}$ and $\ottnt{e'_{{\mathrm{21}}}}$ or both $\ottnt{e'_{{\mathrm{12}}}}$ and $\ottnt{e'_{{\mathrm{22}}}}$ raise
 blame, the conclusion is obvious.
 Otherwise, we can suppose that
 $\ottnt{e'_{{\mathrm{11}}}}  \longrightarrow^{\ast}  \ottnt{v_{{\mathrm{11}}}}$ and
 $\ottnt{e'_{{\mathrm{12}}}}  \longrightarrow^{\ast}  \ottnt{v_{{\mathrm{12}}}}$ and
 $\ottnt{e'_{{\mathrm{21}}}}  \longrightarrow^{\ast}  \ottnt{v_{{\mathrm{21}}}}$ and
 $\ottnt{e'_{{\mathrm{22}}}}  \longrightarrow^{\ast}  \ottnt{v_{{\mathrm{22}}}}$ for some $\ottnt{v_{{\mathrm{11}}}}$, $\ottnt{v_{{\mathrm{12}}}}$, $\ottnt{v_{{\mathrm{21}}}}$, and $\ottnt{v_{{\mathrm{22}}}}$,
 and it suffices to show that
 \[  \ottnt{v_{{\mathrm{11}}}} \, \ottnt{v_{{\mathrm{12}}}}  \simeq_{\mathtt{e} }  \ottnt{v_{{\mathrm{21}}}} \, \ottnt{v_{{\mathrm{22}}}}   \ottsym{:}   \ottnt{T_{{\mathrm{2}}}} \, [  \ottnt{e_{{\mathrm{12}}}}  \ottsym{/}  \mathit{x}  ] ;  \theta ;  \delta . \]
 Since $ \ottnt{e'_{{\mathrm{11}}}}  \simeq_{\mathtt{e} }  \ottnt{e'_{{\mathrm{21}}}}   \ottsym{:}    \mathit{x} \mathord{:} \ottnt{T_{{\mathrm{1}}}} \rightarrow \ottnt{T_{{\mathrm{2}}}}  ;  \theta ;  \delta $
 and $ \ottnt{e'_{{\mathrm{12}}}}  \simeq_{\mathtt{e} }  \ottnt{e'_{{\mathrm{22}}}}   \ottsym{:}   \ottnt{T_{{\mathrm{1}}}} ;  \theta ;  \delta $,
 we have
 $ \ottnt{v_{{\mathrm{11}}}}  \simeq_{\mathtt{v} }  \ottnt{v_{{\mathrm{21}}}}   \ottsym{:}    \mathit{x} \mathord{:} \ottnt{T_{{\mathrm{1}}}} \rightarrow \ottnt{T_{{\mathrm{2}}}}  ;  \theta ;  \delta $ and
 $ \ottnt{v_{{\mathrm{12}}}}  \simeq_{\mathtt{v} }  \ottnt{v_{{\mathrm{22}}}}   \ottsym{:}   \ottnt{T_{{\mathrm{1}}}} ;  \theta ;  \delta $.
 Thus, $ \ottnt{v_{{\mathrm{11}}}} \, \ottnt{v_{{\mathrm{12}}}}  \simeq_{\mathtt{e} }  \ottnt{v_{{\mathrm{21}}}} \, \ottnt{v_{{\mathrm{22}}}}   \ottsym{:}   \ottnt{T_{{\mathrm{2}}}} ;  \theta ;   \delta    [  \,  (  \ottnt{v_{{\mathrm{12}}}}  ,  \ottnt{v_{{\mathrm{22}}}}  ) /  \mathit{x}  \,  ]   $
 by definition.
 Since $ \Gamma \vdash  \mathit{x} \mathord{:} \ottnt{T_{{\mathrm{1}}}} \rightarrow \ottnt{T_{{\mathrm{2}}}}   \mathrel{ \simeq }   \mathit{x} \mathord{:} \ottnt{T_{{\mathrm{1}}}} \rightarrow \ottnt{T_{{\mathrm{2}}}}   : \ast $,
 we have $ \Gamma \vdash \ottnt{T_{{\mathrm{1}}}}  \mathrel{ \simeq }  \ottnt{T_{{\mathrm{1}}}}  : \ast $ and $  \Gamma  ,  \mathit{x}  \mathord{:}  \ottnt{T_{{\mathrm{1}}}}  \vdash \ottnt{T_{{\mathrm{2}}}}  \mathrel{ \simeq }  \ottnt{T_{{\mathrm{2}}}}  : \ast $.
 Since $\Gamma$ is self-related, so is $ \Gamma  ,  \mathit{x}  \mathord{:}  \ottnt{T_{{\mathrm{1}}}} $.
 Since $\Gamma  \vdash  \theta  \ottsym{;}  \delta$,
 we have $ \Gamma  ,  \mathit{x}  \mathord{:}  \ottnt{T_{{\mathrm{1}}}}   \vdash  \theta  \ottsym{;}   \delta    [  \,  (  \ottnt{v_{{\mathrm{12}}}}  ,  \ottnt{v_{{\mathrm{22}}}}  ) /  \mathit{x}  \,  ]  $
 by the weakening (\prop:ref{fh-lr-val-ws}).
 We have $ \theta_{{\mathrm{1}}}  (   \delta_{{\mathrm{1}}}  (  \ottnt{e_{{\mathrm{12}}}}  )   )   \longrightarrow^{\ast}  \ottnt{v_{{\mathrm{12}}}}$.
 Since $ \theta_{{\mathrm{1}}}  (   \delta_{{\mathrm{1}}}  (  \ottnt{e_{{\mathrm{12}}}}  )   )  = \ottnt{e'_{{\mathrm{12}}}}  \longrightarrow^{\ast}  \ottnt{v_{{\mathrm{12}}}}$ and
 $\Gamma  \vdash  \ottnt{e_{{\mathrm{12}}}} \,  \mathrel{ \simeq }  \, \ottnt{e_{{\mathrm{12}}}}  \ottsym{:}  \ottnt{T_{{\mathrm{1}}}}$,
 we have $ \theta_{{\mathrm{2}}}  (   \delta_{{\mathrm{2}}}  (  \ottnt{e_{{\mathrm{12}}}}  )   )   \longrightarrow^{\ast}  \ottnt{v'_{{\mathrm{12}}}}$ and
 $ \ottnt{v_{{\mathrm{12}}}}  \simeq_{\mathtt{v} }  \ottnt{v'_{{\mathrm{12}}}}   \ottsym{:}   \ottnt{T_{{\mathrm{1}}}} ;  \theta ;  \delta $ for some $\ottnt{v'_{{\mathrm{12}}}}$.
 Thus, by the term compositionality (\prop:ref{fh-lr-term-comp}),
 we finish.
\end{prop}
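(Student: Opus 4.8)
The plan is to unfold the definition of the logical relation and reduce the goal to a statement about the \emph{closed} term relation. Fixing arbitrary $\theta$ and $\delta$ with $\Gamma \vdash \theta;\delta$, and writing $e'_{11} = \theta_1(\delta_1(e_{11}))$, $e'_{12} = \theta_1(\delta_1(e_{12}))$, $e'_{21} = \theta_2(\delta_2(e_{21}))$, $e'_{22} = \theta_2(\delta_2(e_{22}))$ (so that the substitutions commute with application), it suffices to show $ e'_{11}\,e'_{12}  \simeq_{\mathtt{e}}  e'_{21}\,e'_{22}  :  T_2[e_{12}/x];\theta;\delta $. From the two hypotheses I would read off $ e'_{11}  \simeq_{\mathtt{e}}  e'_{21}  :  (x{:}T_1\to T_2);\theta;\delta $ and $ e'_{12}  \simeq_{\mathtt{e}}  e'_{22}  :  T_1;\theta;\delta $. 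If either related pair raises blame, then the corresponding application raises blame with the same label—using that $[\,]\,e'_{12}$ and $v_{11}\,[\,]$ are evaluation contexts together with \E{Blame}—so both sides fall into the blame disjunct of the term relation and we are done.

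Otherwise both function positions reduce to related values $v_{11} \simeq_{\mathtt{v}} v_{21} : (x{:}T_1\to T_2);\theta;\delta$ and both argument positions reduce to related values $v_{12} \simeq_{\mathtt{v}} v_{22} : T_1;\theta;\delta$. Unfolding the function case of the value relation at the related arguments immediately yields $ v_{11}\,v_{12}  \simeq_{\mathtt{e}}  v_{21}\,v_{22}  :  T_2;\theta;\delta[(v_{12},v_{22})/x] $. The crux is then to convert this relation, indexed by $T_2$ under the \emph{extended} value assignment $\delta[(v_{12},v_{22})/x]$, into one indexed by $T_2[e_{12}/x]$ under $\delta$; this is exactly Term Compositionality (\prop:ref{fh-lr-term-comp}) instantiated with $e' := e_{12}$, $v'_1 := v_{12}$, and $v'_2 := v_{22}$. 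Most of its hypotheses are routine to discharge: $\Gamma,x{:}T_1$ is self-related (from $\Gamma$ self-related together with $ \Gamma \vdash T_1 \mathrel{\simeq} T_1 : \ast $, which I extract from the self-relatedness of the function type), $ \Gamma,x{:}T_1 \vdash T_2 \mathrel{\simeq} T_2 : \ast $ likewise, and $\Gamma,x{:}T_1 \vdash \theta;\delta[(v_{12},v_{22})/x]$ follows by weakening (\prop:ref{fh-lr-val-ws}) from $\Gamma \vdash \theta;\delta$ and $v_{12} \simeq_{\mathtt{v}} v_{22} : T_1;\theta;\delta$.

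The one genuinely delicate hypothesis—and the point where the assumption $\Gamma \vdash e_{12} \mathrel{\simeq} e_{12} : T_1$ earns its keep—is the requirement that $\theta_2(\delta_2(e_{12}))$ terminate at some value $v''_2$ with $v_{12} \simeq_{\mathtt{v}} v''_2 : T_1;\theta;\delta$. This $v''_2$ need not be $v_{22}$: compositionality pins the type index using the \emph{left}-hand result $v_{12}$, so I must separately supply a right-hand witness that is related to $v_{12}$. Applying the self-relatedness $\Gamma \vdash e_{12} \mathrel{\simeq} e_{12} : T_1$ to $\theta,\delta$ and using $\theta_1(\delta_1(e_{12})) \longrightarrow^{\ast} v_{12}$ produces exactly such a $v''_2$. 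With every hypothesis in place, \prop:ref{fh-lr-term-comp} transports $ v_{11}\,v_{12}  \simeq_{\mathtt{e}}  v_{21}\,v_{22}  :  T_2;\theta;\delta[(v_{12},v_{22})/x] $ to the desired $ v_{11}\,v_{12}  \simeq_{\mathtt{e}}  v_{21}\,v_{22}  :  T_2[e_{12}/x];\theta;\delta $.

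I expect the main obstacle to be precisely this index-type reconciliation: aligning $T_2$ under $\delta[(v_{12},v_{22})/x]$ with $T_2[e_{12}/x]$ under $\delta$ needs the full dependency-handling machinery of compositionality, whose correctness itself rests on cotermination (\prop:ref{fh-coterm-true}) and on the internal value-exchange lemma (\prop:ref{fh-lr-val-exchange-wf}) that swaps the second component of the value assignment from $v_{22}$ to the coterminating witness $v''_2$. By contrast, the blame cases and the unfolding of the function value relation should both be routine.
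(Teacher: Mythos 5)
Your proposal is correct and follows essentially the same route as the paper's proof: reduce to the closed term relation, dispatch blame, unfold the function-type value relation to get the relation at $\ottnt{T_{{\mathrm{2}}}}$ under the extended assignment, and then transport to $\ottnt{T_{{\mathrm{2}}}} \, [  \ottnt{e_{{\mathrm{12}}}}  \ottsym{/}  \mathit{x}  ]$ via Term Compositionality, using $\Gamma  \vdash  \ottnt{e_{{\mathrm{12}}}} \,  \mathrel{ \simeq }  \, \ottnt{e_{{\mathrm{12}}}}  \ottsym{:}  \ottnt{T_{{\mathrm{1}}}}$ to supply the coterminating right-hand witness (the paper's $\ottnt{v'_{{\mathrm{12}}}}$, your $v''_2$). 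You have correctly identified both the key lemma and the precise point where the self-relatedness assumption is indispensable.
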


\paragraph{\bf Fundamental property: type application}
\label{sec:proving-type-app}
We show that the logical relation is closed under type applications, that is,
if
$ \ottnt{v_{{\mathrm{1}}}}  \simeq_{\mathtt{v} }  \ottnt{v_{{\mathrm{2}}}}   \ottsym{:}    \forall   \alpha  .  \ottnt{T}  ;  \theta ;  \delta $ and
$ \ottnt{T_{{\mathrm{1}}}}  \simeq  \ottnt{T_{{\mathrm{2}}}}   \ottsym{:}   \ast ;  \theta ;  \delta $,
then $ \ottnt{v_{{\mathrm{1}}}} \, \ottnt{T_{{\mathrm{1}}}}  \simeq_{\mathtt{e} }  \ottnt{v_{{\mathrm{2}}}} \, \ottnt{T_{{\mathrm{2}}}}   \ottsym{:}   \ottnt{T} \, [  \ottnt{T_{{\mathrm{1}}}}  \ottsym{/}  \alpha  ] ;  \theta ;  \delta $.
To this end, for a reason similar to the case of term applications, we show the
\emph{type compositionality}, which states that the term relation indexed by
$\ottnt{T} \, [  \ottnt{T_{{\mathrm{1}}}}  \ottsym{/}  \alpha  ]$ with $\theta$ coincides with the one indexed by $\ottnt{T}$ with
$\theta \,  \{  \,  \alpha  \mapsto ( \ottnt{r} , \ottnt{T_{{\mathrm{1}}}} , \ottnt{T_{{\mathrm{2}}}} )  \,  \} $ for some $\ottnt{r}$.
Since $\ottnt{r}$ gives an interpretation of $\alpha$ and $\alpha$ is replaced with
$\ottnt{T_{{\mathrm{1}}}}$ in the former, it is natural to choose the term relation $ \ottnt{e_{{\mathrm{1}}}}  \simeq_{\mathtt{e} }  \ottnt{e_{{\mathrm{2}}}}   \ottsym{:}   \ottnt{T_{{\mathrm{1}}}} ;  \theta ;  \delta $ indexed by
$\ottnt{T_{{\mathrm{1}}}}$ as $\ottnt{r}$.
We first show that the term relation satisfies requirements to interpretations (Lemmas~\ref{lem:fh-lr-tapp-start}--\ref{lem:fh-lr-well-formed-interpret}) and
then the type compositionality (\prop:ref{fh-lr-typ-comp}).
\iffull
We start with showing that no types are distinguished by type interpretations
and value assignments \AI{type interpretations and value
assignments}. \AI{Maybe we shouldn't have stopped using ``Closing ...''.}
\TS{Unfold \prop:ref{fh-lr-typ-exchange-wf} in the place where it is used.}
\begin{prop}{fh-lr-typ-exchange-wf}
  \label{lem:fh-lr-tapp-start}
 Suppose that $\Gamma  \ottsym{,}  \alpha  \ottsym{,}  \Gamma'$ is self-related.
 If
 $\Gamma  \ottsym{,}  \alpha  \ottsym{,}  \Gamma'  \vdash  \theta \,  \{  \,  \alpha  \mapsto  \ottnt{r} , \ottnt{T_{{\mathrm{1}}}} , \ottnt{T_{{\mathrm{2}}}}  \,  \}   \ottsym{;}  \delta$ and $\langle  \ottnt{r}  \ottsym{,}  \ottnt{T_{{\mathrm{1}}}}  \ottsym{,}  \ottnt{T'_{{\mathrm{2}}}}  \rangle$,
 then
 $\Gamma  \ottsym{,}  \alpha  \ottsym{,}  \Gamma'  \vdash  \theta \,  \{  \,  \alpha  \mapsto  \ottnt{r} , \ottnt{T_{{\mathrm{1}}}} , \ottnt{T'_{{\mathrm{2}}}}  \,  \}   \ottsym{;}  \delta$.

 \proof

 By induction on $\Gamma'$; the case that $\Gamma'  \ottsym{=}   \Gamma''  ,  \mathit{x}  \mathord{:}  \ottnt{T} $ is shown by
 \prop:ref{fh-lr-untyped-exchange-trel}.
\end{prop}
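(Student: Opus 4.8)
The plan is to prove \prop:ref{fh-lr-typ-exchange-wf} by induction on $\Gamma'$, unfolding the goal $\Gamma, \alpha, \Gamma' \vdash \theta\{\alpha \mapsto (r, T_1, T'_2)\}; \delta$ into the two conditions of the respecting relation. Condition (1), that every type variable of $\Gamma, \alpha, \Gamma'$ lies in $\mathit{dom}(\theta\{\alpha \mapsto (r, T_1, T'_2)\})$, is immediate: replacing the second type component $T_2$ by $T'_2$ changes neither the domain of the map nor, thanks to the hypothesis $\langle r, T_1, T'_2\rangle$, its status as a legitimate type interpretation. All the content is in condition (2): for each binding $x : T$ in $\Gamma, \alpha, \Gamma'$ we must upgrade $\delta_1(x) \simeq_{\mathtt{v}} \delta_2(x) : T; \theta\{\alpha \mapsto (r, T_1, T_2)\}; \delta$, which we have from the assumption, to the same statement with $T'_2$ in place of $T_2$.

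The engine for this upgrade is the exchange lemma \prop:ref{fh-lr-untyped-exchange-trel}, instantiated with the two interpretations $\theta\{\alpha \mapsto (r, T_1, T_2)\}$ and $\theta\{\alpha \mapsto (r, T_1, T'_2)\}$ and the single value assignment $\delta$ on both sides. These interpretations agree on the typed-side projection $\{(\beta, r', T''_1) \mid \exists T''_2.\ \ldots\}$ (they send $\alpha$ to the same pair $(r,T_1)$ and are identical elsewhere), and $\delta_1 = \delta_1$ holds trivially, so the side conditions of the lemma are met. Since $\delta_1(x)$ and $\delta_2(x)$ are values and the term relation coincides with the value relation on values, \prop:ref{fh-lr-untyped-exchange-trel} transports the relatedness from one interpretation to the other, provided its remaining hypothesis holds: that $T$ is self-related under both $\theta\{\alpha \mapsto (r, T_1, T_2)\}; \delta$ and $\theta\{\alpha \mapsto (r, T_1, T'_2)\}; \delta$.

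Discharging that self-relatedness hypothesis is what the induction is for. Self-relatedness of $\Gamma, \alpha, \Gamma'$ gives, for the prefix $\Gamma_1$ preceding each binding $x : T$, that $\Gamma_1 \vdash T \simeq T : \ast$; instantiating this at $\theta\{\alpha \mapsto (r, T_1, T_2)\}; \delta$ (which respects $\Gamma_1$ because $\vdash \theta; \delta$ is monotone under context prefixes) yields the first self-relatedness fact, and instantiating it at $\theta\{\alpha \mapsto (r, T_1, T'_2)\}; \delta$ yields the second. But the latter requires $\Gamma_1 \vdash \theta\{\alpha \mapsto (r, T_1, T'_2)\}; \delta$, which is exactly the induction hypothesis for the strictly shorter prefix $\Gamma_1 = \Gamma, \alpha, \Gamma''$. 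Thus in the term-binding step $\Gamma' = \Gamma'', x : T$ the IH simultaneously discharges all bindings of $\Gamma, \alpha, \Gamma''$ and supplies the self-relatedness needed to add $x : T$ via the exchange lemma; the type-variable step $\Gamma' = \Gamma'', \beta$ introduces no new term binding and follows from the IH directly; and the base case $\Gamma' = \emptyset$ is immediate, since the term bindings of $\Gamma$ precede $\alpha$, so their types do not mention $\alpha$ and strengthening (\prop:ref{fh-lr-typ-ws}) renders the two interpretations interchangeable there without any appeal to the exchange lemma.

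The main obstacle — and the reason the lemma carries a self-relatedness premise and is proved by induction rather than in one stroke — is an apparent circularity: transporting the value relation to the new interpretation at $x : T$ requires $T$ to be self-related \emph{under that very interpretation}, yet self-relatedness of $T$ is itself a statement about the new interpretation. The induction on $\Gamma'$ breaks the circle by guaranteeing that self-relatedness of $T$ under the new interpretation is only ever needed over a strictly shorter context $\Gamma_1$, for which the IH has already established $\Gamma_1 \vdash \theta\{\alpha \mapsto (r, T_1, T'_2)\}; \delta$. I expect the only delicate bookkeeping to be the monotonicity of $\vdash \theta; \delta$ under prefixes and the clean split between the $\Gamma$-bindings, handled by strengthening, and the $\Gamma'$-bindings, handled by \prop:ref{fh-lr-untyped-exchange-trel}.
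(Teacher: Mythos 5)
Your proposal is correct and takes essentially the same route as the paper: induction on $\Gamma'$, with the base case handled by weakening/strengthening (since the types in $\Gamma$ cannot mention $\alpha$) and the term-binding case discharged by the exchange lemma for term relations. The care you take in supplying that lemma's two self-relatedness premises --- the first from the given interpretation, the second from the induction hypothesis on the shorter prefix, which is precisely what breaks the apparent circularity --- is exactly the unstated content behind the paper's one-line proof.
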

\fi % \iffull

The first requirement which we show that term relations satisfy is that,
if $ \theta  (  \alpha  ) = (  \ottnt{r} ,  \ottnt{T_{{\mathrm{1}}}} ,  \ottnt{T_{{\mathrm{2}}}}  ) $ and $\ottsym{(}  \ottnt{v_{{\mathrm{1}}}}  \ottsym{,}  \ottnt{v_{{\mathrm{2}}}}  \ottsym{)} \, \in \, \ottnt{r}$,
then there exists some $\ottnt{v'_{{\mathrm{1}}}}$ such that $\langle  \ottnt{T_{{\mathrm{1}}}}  \Rightarrow  \ottnt{T_{{\mathrm{1}}}}  \rangle   ^{ \ell }  \, \ottnt{v_{{\mathrm{1}}}}  \longrightarrow^{\ast}  \ottnt{v'_{{\mathrm{1}}}}$ and
$\ottsym{(}  \ottnt{v'_{{\mathrm{1}}}}  \ottsym{,}  \ottnt{v_{{\mathrm{2}}}}  \ottsym{)} \, \in \, \ottnt{r}$.
This is generalized to elimination of reflexive casts.
{\iffull
\begin{prop}[name={$\alpha$-Renaming}]{fh-lr-alpha-eq}
 Suppose that $\mathit{y} \, \notin \,  \mathit{dom}  (   \delta    [  \,  (  \ottnt{v_{{\mathrm{1}}}}  ,  \ottnt{v_{{\mathrm{2}}}}  ) /  \mathit{x}  \,  ]    ) $.

 \begin{statements}
  \item(trel) $ \ottnt{e_{{\mathrm{1}}}}  \simeq_{\mathtt{e} }  \ottnt{e_{{\mathrm{2}}}}   \ottsym{:}   \ottnt{T} ;  \theta ;   \delta    [  \,  (  \ottnt{v_{{\mathrm{1}}}}  ,  \ottnt{v_{{\mathrm{2}}}}  ) /  \mathit{x}  \,  ]   $ iff
        $ \ottnt{e_{{\mathrm{1}}}}  \simeq_{\mathtt{e} }  \ottnt{e_{{\mathrm{2}}}}   \ottsym{:}   \ottnt{T} \, [  \mathit{y}  \ottsym{/}  \mathit{x}  ] ;  \theta ;   \delta    [  \,  (  \ottnt{v_{{\mathrm{1}}}}  ,  \ottnt{v_{{\mathrm{2}}}}  ) /  \mathit{y}  \,  ]   $.
  \item(typrel) If $ \ottnt{T_{{\mathrm{1}}}}  \simeq  \ottnt{T_{{\mathrm{2}}}}   \ottsym{:}   \ast ;  \theta ;   \delta    [  \,  (  \ottnt{v_{{\mathrm{1}}}}  ,  \ottnt{v_{{\mathrm{2}}}}  ) /  \mathit{x}  \,  ]   $,
        then $ \ottnt{T_{{\mathrm{1}}}} \, [  \mathit{y}  \ottsym{/}  \mathit{x}  ]  \simeq  \ottnt{T_{{\mathrm{2}}}} \, [  \mathit{y}  \ottsym{/}  \mathit{x}  ]   \ottsym{:}   \ast ;  \theta ;   \delta    [  \,  (  \ottnt{v_{{\mathrm{1}}}}  ,  \ottnt{v_{{\mathrm{2}}}}  ) /  \mathit{y}  \,  ]   $.
 \end{statements}

 \proof

 Straightforward by induction on $\ottnt{T}$ and $\ottnt{T_{{\mathrm{1}}}}$, respectively.
 The second case rests on the first.
\end{prop}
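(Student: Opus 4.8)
The plan is to prove part~(trel) in a strengthened form that also covers the value relation $\simeq_{\mathtt{v}}$ at every index, by a single induction on the structure of $T$, and then to obtain part~(typrel) by a second induction on $T_1$ that reuses the value-relation statement. Throughout I adopt the standard convention that bound term and type variables (the $x'$, $\alpha'$ appearing in the cases below) are chosen fresh, in particular distinct from $x$ and $y$ and outside $\mathit{dom}(\delta)$, and I maintain $\mathit{y} \notin \mathit{dom}(\delta[(v_1,v_2)/x])$ as an invariant; this is preserved when recursing, since value assignments are only ever extended with fresh variables. The fact that drives everything is a routine capture-avoiding substitution identity: because $y$ is fresh for $\delta$ and for the expressions involved, and $\delta[(v_1,v_2)/y]$ maps $y$ to $v_1$ (resp.\ $v_2$) and otherwise agrees with $\delta[(v_1,v_2)/x]$ off $x$, we have $\delta_i[(v_1,v_2)/y](e[y/x]) = \delta_i[(v_1,v_2)/x](e)$ for $i \in \{1,2\}$, and likewise on types. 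In words, renaming $x$ to $y$ and then instantiating $y$ is the same as instantiating $x$ directly.

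For the value/term relations I would case-split on $T$. The cases $T = B$ and $T = \alpha$ are immediate, since those relations ignore $\delta$ and $B[y/x]=B$, $\alpha[y/x]=\alpha$. For $T = \mathit{x}' \mathord{:} \ottnt{T_{{\mathrm{1}}}} \rightarrow \ottnt{T_{{\mathrm{2}}}}$ the domain quantifier is handled by the induction hypothesis on $\ottnt{T_{{\mathrm{1}}}}$ (in both directions, which is why the $\mathrm{iff}$ is needed), and the codomain obligation, after extending the value assignment with $(v_1',v_2')/x'$, matches the induction hypothesis on $\ottnt{T_{{\mathrm{2}}}}$ because $[(v_1',v_2')/x']$ commutes with the $x \mapsto y$ renaming as $x' \neq x,y$. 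The universal case $T = \forall \alpha'.\,\ottnt{T'}$ only extends $\theta$, so the induction hypothesis on $\ottnt{T'}$ applies unchanged. For $T = \{x'\mathord{:}\ottnt{T'} \mid \ottnt{e}\}$, the underlying-type clause is the induction hypothesis on $\ottnt{T'}$, and the two satisfaction clauses $\theta_i(\delta_i[\cdots](\ottnt{e}[\cdots/x'])) \longrightarrow^{\ast} \mathsf{true}$ are preserved verbatim by the substitution identity above. The term relation (trel) then follows from the value-relation statement, since $\simeq_{\mathtt{e}}$ unfolds either to matching blame (independent of $T$, $\theta$, $\delta$) or to $\simeq_{\mathtt{v}}$ at the same index.

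Part~(typrel) I would prove by induction on $T_1$; as $T_1 \simeq T_2 : \ast$ forces the two types to share their top constructor, the cases mirror those above, with the base- and type-variable cases trivial. In the function and refinement cases the type relation quantifies over pairs of values related in the value relation at the domain (resp.\ underlying) type, and it is exactly here that ``the second case rests on the first'': I use the value-relation equivalence established in part~(trel) to transport a hypothesis $v_1 \simeq_{\mathtt{v}} v_2 : \ottnt{T'}; \theta; \delta[(v_1,v_2)/y]$ back to $\theta; \delta[(v_1,v_2)/x]$, feed the given type-relation hypothesis, and then push the resulting relatedness of codomains/refinements forward through the induction hypothesis and the substitution identity. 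The main obstacle—and the one I would write out carefully rather than the trivial structural cases—is the bookkeeping of nested value-assignment extensions: one must verify that $\delta[(v_1,v_2)/x][(v_1',v_2')/x']$ and $\delta[(v_1,v_2)/y][(v_1',v_2')/x']$ are related by precisely the renaming $[y/x]$ that the induction hypothesis consumes, which holds because the freshly chosen $x'$ differs from both $x$ and $y$. No appeal to cotermination or to the operational semantics is required; everything reduces to structural recursion and the capture-avoiding substitution calculation.
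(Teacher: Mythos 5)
Your proposal is correct and takes essentially the same route as the paper: part (trel) by structural induction on $\ottnt{T}$ (strengthened to the value relation), part (typrel) by induction on $\ottnt{T_{{\mathrm{1}}}}$ reusing the first part, with the capture-avoiding substitution identity doing the work in each case. The paper's own proof is exactly this, stated without the details you spell out.
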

\fi}
\begin{prop}[name=Elimination of Reflexive Casts on Left]{fh-lr-elim-refl-cast}
 {\iffull\else\label{lem:fh-lr-tapp-start}\fi}
 If
 $ \ottnt{T_{{\mathrm{1}}}}  \simeq  \ottnt{T_{{\mathrm{1}}}}   \ottsym{:}   \ast ;  \theta ;  \delta $ and
 $ \ottnt{T_{{\mathrm{2}}}}  \simeq  \ottnt{T_{{\mathrm{2}}}}   \ottsym{:}   \ast ;  \theta ;  \delta $ and
 $ \ottnt{T_{{\mathrm{1}}}}  \simeq  \ottnt{T_{{\mathrm{2}}}}   \ottsym{:}   \ast ;  \theta ;  \delta $ and
 $ \ottnt{T_{{\mathrm{2}}}}  \simeq  \ottnt{T_{{\mathrm{1}}}}   \ottsym{:}   \ast ;  \theta ;  \delta $,
 then
 $  \theta_{{\mathrm{1}}}  (   \delta_{{\mathrm{1}}}  (  \langle  \ottnt{T_{{\mathrm{1}}}}  \Rightarrow  \ottnt{T_{{\mathrm{2}}}}  \rangle   ^{ \ell }   )   )   \simeq_{\mathtt{v} }   \theta_{{\mathrm{2}}}  (   \delta_{{\mathrm{2}}}  (    \lambda    \mathit{x}  \mathord{:}  \ottnt{T_{{\mathrm{1}}}}  .  \mathit{x}   )   )    \ottsym{:}   \ottnt{T_{{\mathrm{1}}}}  \rightarrow  \ottnt{T_{{\mathrm{2}}}} ;  \theta ;  \delta $.

 \proof

 By course-of-values induction on the sum of sizes of $\ottnt{T_{{\mathrm{1}}}}$ and $\ottnt{T_{{\mathrm{2}}}}$.
 By definition, it suffices to show that, for any $\ottnt{v_{{\mathrm{1}}}}$ and $\ottnt{v_{{\mathrm{2}}}}$ such that
 $ \ottnt{v_{{\mathrm{1}}}}  \simeq_{\mathtt{v} }  \ottnt{v_{{\mathrm{2}}}}   \ottsym{:}   \ottnt{T_{{\mathrm{1}}}} ;  \theta ;  \delta $,
 \[
    \theta_{{\mathrm{1}}}  (   \delta_{{\mathrm{1}}}  (  \langle  \ottnt{T_{{\mathrm{1}}}}  \Rightarrow  \ottnt{T_{{\mathrm{2}}}}  \rangle   ^{ \ell }   )   )  \, \ottnt{v_{{\mathrm{1}}}}  \simeq_{\mathtt{e} }  \ottnt{v_{{\mathrm{2}}}}   \ottsym{:}   \ottnt{T_{{\mathrm{2}}}} ;  \theta ;  \delta .
 \]
 By case analysis on the derivation of $ \ottnt{T_{{\mathrm{1}}}}  \simeq  \ottnt{T_{{\mathrm{2}}}}   \ottsym{:}   \ast ;  \theta ;  \delta $.
 \begin{itemize}
  \case $ \alpha  \simeq  \alpha   \ottsym{:}   \ast ;  \theta ;  \delta $:
   Since $ \ottnt{v_{{\mathrm{1}}}}  \simeq_{\mathtt{v} }  \ottnt{v_{{\mathrm{2}}}}   \ottsym{:}   \alpha ;  \theta ;  \delta $,
   there exists some $\ottnt{r'}$, $\ottnt{T'_{{\mathrm{1}}}}$, and $\ottnt{T'_{{\mathrm{2}}}}$ such that
   $ \theta  (  \alpha  ) = (  \ottnt{r'} ,  \ottnt{T'_{{\mathrm{1}}}} ,  \ottnt{T'_{{\mathrm{2}}}}  ) $ and $\langle  \ottnt{r'}  \ottsym{,}  \ottnt{T'_{{\mathrm{1}}}}  \ottsym{,}  \ottnt{T'_{{\mathrm{2}}}}  \rangle$ and $\ottsym{(}  \ottnt{v_{{\mathrm{1}}}}  \ottsym{,}  \ottnt{v_{{\mathrm{2}}}}  \ottsym{)} \, \in \, \ottnt{r'}$.
   Since $ \theta_{{\mathrm{1}}}  (   \delta_{{\mathrm{1}}}  (  \langle  \ottnt{T_{{\mathrm{1}}}}  \Rightarrow  \ottnt{T_{{\mathrm{2}}}}  \rangle   ^{ \ell }   )   )   \ottsym{=}  \langle  \ottnt{T'_{{\mathrm{1}}}}  \Rightarrow  \ottnt{T'_{{\mathrm{1}}}}  \rangle   ^{ \ell } $ and
   $\ottnt{r} \, \in \,  \mathsf{VRel}  (  \ottnt{T'_{{\mathrm{1}}}} ,  \ottnt{T'_{{\mathrm{2}}}}  ) $,
   there exists some $\ottnt{v'_{{\mathrm{1}}}}$ such that
   $ \theta_{{\mathrm{1}}}  (   \delta_{{\mathrm{1}}}  (  \langle  \ottnt{T_{{\mathrm{1}}}}  \Rightarrow  \ottnt{T_{{\mathrm{2}}}}  \rangle   ^{ \ell }   )   )  \, \ottnt{v_{{\mathrm{1}}}}  \longrightarrow^{\ast}  \ottnt{v'_{{\mathrm{1}}}}$ and
   $\ottsym{(}  \ottnt{v'_{{\mathrm{1}}}}  \ottsym{,}  \ottnt{v_{{\mathrm{2}}}}  \ottsym{)} \, \in \, \ottnt{r'}$.
   We have $ \ottnt{v'_{{\mathrm{1}}}}  \simeq_{\mathtt{v} }  \ottnt{v_{{\mathrm{2}}}}   \ottsym{:}   \alpha ;  \theta ;  \delta $, and so we finish.

  \case $ \ottnt{B}  \simeq  \ottnt{B}   \ottsym{:}   \ast ;  \theta ;  \delta $: Obvious since $\ottnt{T_{{\mathrm{1}}}} = \ottnt{T_{{\mathrm{2}}}} = \ottnt{B}$.

  \case $  \mathit{x} \mathord{:} \ottnt{T_{{\mathrm{11}}}} \rightarrow \ottnt{T_{{\mathrm{12}}}}   \simeq   \mathit{x} \mathord{:} \ottnt{T_{{\mathrm{21}}}} \rightarrow \ottnt{T_{{\mathrm{22}}}}    \ottsym{:}   \ast ;  \theta ;  \delta $:
   Without loss of generality, we can suppose that $\mathit{x} \, \notin \,  \mathit{dom}  (  \delta  ) $.
   By \E{Red}/\R{Fun},
   \[\begin{array}{l}
     \theta_{{\mathrm{1}}}  (   \delta_{{\mathrm{1}}}  (  \langle  \ottnt{T_{{\mathrm{1}}}}  \Rightarrow  \ottnt{T_{{\mathrm{2}}}}  \rangle   ^{ \ell }   )   )  \, \ottnt{v_{{\mathrm{1}}}}  \longrightarrow  \\
      \qquad  \theta_{{\mathrm{1}}}  (   \delta_{{\mathrm{1}}}  (    \lambda    \mathit{x}  \mathord{:}  \ottnt{T_{{\mathrm{21}}}}  .   \mathsf{let}  ~  \mathit{y}  \mathord{:}  \ottnt{T_{{\mathrm{11}}}}  \equal  \langle  \ottnt{T_{{\mathrm{21}}}}  \Rightarrow  \ottnt{T_{{\mathrm{11}}}}  \rangle   ^{ \ell }  \, \mathit{x}  ~ \ottliteralin ~  \langle  \ottnt{T_{{\mathrm{12}}}} \, [  \mathit{y}  \ottsym{/}  \mathit{x}  ]  \Rightarrow  \ottnt{T_{{\mathrm{22}}}}  \rangle   ^{ \ell }   \, \ottsym{(}  \ottnt{v_{{\mathrm{1}}}} \, \mathit{y}  \ottsym{)}   )   ) 
     \end{array}\]
   for some fresh variable $\mathit{y}$.
   It thus suffices to show that
   \[\begin{array}{ll}
    &   \theta_{{\mathrm{1}}}  (   \delta_{{\mathrm{1}}}  (    \lambda    \mathit{x}  \mathord{:}  \ottnt{T_{{\mathrm{21}}}}  .   \mathsf{let}  ~  \mathit{y}  \mathord{:}  \ottnt{T_{{\mathrm{11}}}}  \equal  \langle  \ottnt{T_{{\mathrm{21}}}}  \Rightarrow  \ottnt{T_{{\mathrm{11}}}}  \rangle   ^{ \ell }  \, \mathit{x}  ~ \ottliteralin ~  \langle  \ottnt{T_{{\mathrm{12}}}} \, [  \mathit{y}  \ottsym{/}  \mathit{x}  ]  \Rightarrow  \ottnt{T_{{\mathrm{22}}}}  \rangle   ^{ \ell }   \, \ottsym{(}  \ottnt{v_{{\mathrm{1}}}} \, \mathit{y}  \ottsym{)}   )   )     \\   \simeq_{\mathtt{v} }   &    \ottnt{v_{{\mathrm{2}}}}   \ottsym{:}    \mathit{x} \mathord{:} \ottnt{T_{{\mathrm{21}}}} \rightarrow \ottnt{T_{{\mathrm{22}}}}  ;  \theta ;  \delta .
   \end{array}\]
   By definition, for any $\ottnt{v'_{{\mathrm{1}}}}$ and $\ottnt{v'_{{\mathrm{2}}}}$ such that
   $ \ottnt{v'_{{\mathrm{1}}}}  \simeq_{\mathtt{v} }  \ottnt{v'_{{\mathrm{2}}}}   \ottsym{:}   \ottnt{T_{{\mathrm{21}}}} ;  \theta ;  \delta $, we have to show that
   \[\begin{array}{ll}
    &   \theta_{{\mathrm{1}}}  (   \delta_{{\mathrm{1}}}  (   \mathsf{let}  ~  \mathit{y}  \mathord{:}  \ottnt{T_{{\mathrm{11}}}}  \equal  \langle  \ottnt{T_{{\mathrm{21}}}}  \Rightarrow  \ottnt{T_{{\mathrm{11}}}}  \rangle   ^{ \ell }  \, \ottnt{v'_{{\mathrm{1}}}}  ~ \ottliteralin ~  \langle  \ottnt{T_{{\mathrm{12}}}} \, [  \mathit{y}  \ottsym{/}  \mathit{x}  ]  \Rightarrow  \ottnt{T_{{\mathrm{22}}}} \, [  \ottnt{v'_{{\mathrm{1}}}}  \ottsym{/}  \mathit{x}  ]  \rangle   ^{ \ell }  \, \ottsym{(}  \ottnt{v_{{\mathrm{1}}}} \, \mathit{y}  \ottsym{)}   )   )     \\   \simeq_{\mathtt{e} }   &    \ottnt{v_{{\mathrm{2}}}} \, \ottnt{v'_{{\mathrm{2}}}}   \ottsym{:}   \ottnt{T_{{\mathrm{22}}}} ;  \theta ;   \delta    [  \,  (  \ottnt{v'_{{\mathrm{1}}}}  ,  \ottnt{v'_{{\mathrm{2}}}}  ) /  \mathit{x}  \,  ]   .
     \end{array}\]
   By the IH,
   $  \theta_{{\mathrm{1}}}  (   \delta_{{\mathrm{1}}}  (  \langle  \ottnt{T_{{\mathrm{21}}}}  \Rightarrow  \ottnt{T_{{\mathrm{11}}}}  \rangle   ^{ \ell }   )   )   \simeq_{\mathtt{v} }   \theta_{{\mathrm{2}}}  (   \delta_{{\mathrm{2}}}  (    \lambda    \mathit{x}  \mathord{:}  \ottnt{T_{{\mathrm{21}}}}  .  \mathit{x}   )   )    \ottsym{:}   \ottnt{T_{{\mathrm{21}}}}  \rightarrow  \ottnt{T_{{\mathrm{11}}}} ;  \theta ;  \delta $.
   Since $ \ottnt{v'_{{\mathrm{1}}}}  \simeq_{\mathtt{v} }  \ottnt{v'_{{\mathrm{2}}}}   \ottsym{:}   \ottnt{T_{{\mathrm{21}}}} ;  \theta ;  \delta $,
   we have $  \theta_{{\mathrm{1}}}  (   \delta_{{\mathrm{1}}}  (  \langle  \ottnt{T_{{\mathrm{21}}}}  \Rightarrow  \ottnt{T_{{\mathrm{11}}}}  \rangle   ^{ \ell }   )   )  \, \ottnt{v'_{{\mathrm{1}}}}  \simeq_{\mathtt{e} }  \ottnt{v'_{{\mathrm{2}}}}   \ottsym{:}   \ottnt{T_{{\mathrm{11}}}} ;  \theta ;  \delta $.
   Thus, there exists some $\ottnt{v''_{{\mathrm{1}}}}$ such that
   $ \theta_{{\mathrm{1}}}  (   \delta_{{\mathrm{1}}}  (  \langle  \ottnt{T_{{\mathrm{21}}}}  \Rightarrow  \ottnt{T_{{\mathrm{11}}}}  \rangle   ^{ \ell }   )   )  \, \ottnt{v'_{{\mathrm{1}}}}  \longrightarrow^{\ast}  \ottnt{v''_{{\mathrm{1}}}}$ and
   $ \ottnt{v''_{{\mathrm{1}}}}  \simeq_{\mathtt{v} }  \ottnt{v'_{{\mathrm{2}}}}   \ottsym{:}   \ottnt{T_{{\mathrm{11}}}} ;  \theta ;  \delta $.
   Hence, it suffices to show that
   \[
      \theta_{{\mathrm{1}}}  (   \delta_{{\mathrm{1}}}  (  \langle  \ottnt{T_{{\mathrm{12}}}} \, [  \ottnt{v''_{{\mathrm{1}}}}  \ottsym{/}  \mathit{x}  ]  \Rightarrow  \ottnt{T_{{\mathrm{22}}}} \, [  \ottnt{v'_{{\mathrm{1}}}}  \ottsym{/}  \mathit{x}  ]  \rangle   ^{ \ell }   )   )  \, \ottsym{(}  \ottnt{v_{{\mathrm{1}}}} \, \ottnt{v''_{{\mathrm{1}}}}  \ottsym{)}  \simeq_{\mathtt{e} }  \ottnt{v_{{\mathrm{2}}}} \, \ottnt{v'_{{\mathrm{2}}}}   \ottsym{:}   \ottnt{T_{{\mathrm{22}}}} ;  \theta ;   \delta    [  \,  (  \ottnt{v'_{{\mathrm{1}}}}  ,  \ottnt{v'_{{\mathrm{2}}}}  ) /  \mathit{x}  \,  ]   .
   \]
   Since $ \ottnt{v_{{\mathrm{1}}}}  \simeq_{\mathtt{v} }  \ottnt{v_{{\mathrm{2}}}}   \ottsym{:}    \mathit{x} \mathord{:} \ottnt{T_{{\mathrm{11}}}} \rightarrow \ottnt{T_{{\mathrm{12}}}}  ;  \theta ;  \delta $,
   we have $ \ottnt{v_{{\mathrm{1}}}} \, \ottnt{v''_{{\mathrm{1}}}}  \simeq_{\mathtt{e} }  \ottnt{v_{{\mathrm{2}}}} \, \ottnt{v'_{{\mathrm{2}}}}   \ottsym{:}   \ottnt{T_{{\mathrm{12}}}} ;  \theta ;   \delta    [  \,  (  \ottnt{v''_{{\mathrm{1}}}}  ,  \ottnt{v'_{{\mathrm{2}}}}  ) /  \mathit{x}  \,  ]   $.
   If $\ottnt{v_{{\mathrm{1}}}} \, \ottnt{v''_{{\mathrm{1}}}}$ and $\ottnt{v_{{\mathrm{2}}}} \, \ottnt{v'_{{\mathrm{2}}}}$ raise blame, we finish.
   Otherwise, $\ottnt{v_{{\mathrm{1}}}} \, \ottnt{v''_{{\mathrm{1}}}}  \longrightarrow^{\ast}  \ottnt{v'''_{{\mathrm{1}}}}$ and
   $\ottnt{v_{{\mathrm{2}}}} \, \ottnt{v'_{{\mathrm{2}}}}  \longrightarrow^{\ast}  \ottnt{v'''_{{\mathrm{2}}}}$ for some $\ottnt{v'''_{{\mathrm{1}}}}$ and $\ottnt{v'''_{{\mathrm{2}}}}$, and
   it suffices to show that
   \[
      \theta_{{\mathrm{1}}}  (   \delta_{{\mathrm{1}}}  (  \langle  \ottnt{T_{{\mathrm{12}}}} \, [  \ottnt{v''_{{\mathrm{1}}}}  \ottsym{/}  \mathit{x}  ]  \Rightarrow  \ottnt{T_{{\mathrm{22}}}} \, [  \ottnt{v'_{{\mathrm{1}}}}  \ottsym{/}  \mathit{x}  ]  \rangle   ^{ \ell }   )   )  \, \ottnt{v'''_{{\mathrm{1}}}}  \simeq_{\mathtt{e} }  \ottnt{v'''_{{\mathrm{2}}}}   \ottsym{:}   \ottnt{T_{{\mathrm{22}}}} ;  \theta ;   \delta    [  \,  (  \ottnt{v'_{{\mathrm{1}}}}  ,  \ottnt{v'_{{\mathrm{2}}}}  ) /  \mathit{x}  \,  ]   .
   \]
   We have $ \ottnt{v'''_{{\mathrm{1}}}}  \simeq_{\mathtt{v} }  \ottnt{v'''_{{\mathrm{2}}}}   \ottsym{:}   \ottnt{T_{{\mathrm{12}}}} ;  \theta ;   \delta    [  \,  (  \ottnt{v''_{{\mathrm{1}}}}  ,  \ottnt{v'_{{\mathrm{2}}}}  ) /  \mathit{x}  \,  ]   $.
   From the assumptions,
   we have:
   \begin{itemize}
    \item $ \ottnt{T_{{\mathrm{12}}}}  \simeq  \ottnt{T_{{\mathrm{12}}}}   \ottsym{:}   \ast ;  \theta ;   \delta    [  \,  (  \ottnt{v''_{{\mathrm{1}}}}  ,  \ottnt{v'_{{\mathrm{2}}}}  ) /  \mathit{x}  \,  ]   $
    \item $ \ottnt{T_{{\mathrm{22}}}}  \simeq  \ottnt{T_{{\mathrm{22}}}}   \ottsym{:}   \ast ;  \theta ;   \delta    [  \,  (  \ottnt{v'_{{\mathrm{1}}}}  ,  \ottnt{v'_{{\mathrm{2}}}}  ) /  \mathit{x}  \,  ]   $
    \item $ \ottnt{T_{{\mathrm{12}}}}  \simeq  \ottnt{T_{{\mathrm{22}}}}   \ottsym{:}   \ast ;  \theta ;   \delta    [  \,  (  \ottnt{v''_{{\mathrm{1}}}}  ,  \ottnt{v'_{{\mathrm{2}}}}  ) /  \mathit{x}  \,  ]   $
    \item $ \ottnt{T_{{\mathrm{22}}}}  \simeq  \ottnt{T_{{\mathrm{12}}}}   \ottsym{:}   \ast ;  \theta ;   \delta    [  \,  (  \ottnt{v'_{{\mathrm{1}}}}  ,  \ottnt{v'_{{\mathrm{2}}}}  ) /  \mathit{x}  \,  ]   $
   \end{itemize}
   Let $\delta'  \ottsym{=}    \delta    [  \,  (  \ottnt{v'_{{\mathrm{1}}}}  ,  \ottnt{v'_{{\mathrm{2}}}}  ) /  \mathit{x}  \,  ]      [  \,  (  \ottnt{v''_{{\mathrm{1}}}}  ,  \ottnt{v'_{{\mathrm{2}}}}  ) /  \mathit{y}  \,  ]  $.
   Since type relations are closed under $\alpha$-renaming,
   we have
   \begin{itemize}
    \item $ \ottnt{T_{{\mathrm{12}}}} \, [  \mathit{y}  \ottsym{/}  \mathit{x}  ]  \simeq  \ottnt{T_{{\mathrm{12}}}} \, [  \mathit{y}  \ottsym{/}  \mathit{x}  ]   \ottsym{:}   \ast ;  \theta ;  \delta' $
    \item $ \ottnt{T_{{\mathrm{22}}}}  \simeq  \ottnt{T_{{\mathrm{22}}}}   \ottsym{:}   \ast ;  \theta ;  \delta' $
    \item $ \ottnt{T_{{\mathrm{12}}}} \, [  \mathit{y}  \ottsym{/}  \mathit{x}  ]  \simeq  \ottnt{T_{{\mathrm{22}}}} \, [  \mathit{y}  \ottsym{/}  \mathit{x}  ]   \ottsym{:}   \ast ;  \theta ;  \delta' $
    \item $ \ottnt{T_{{\mathrm{22}}}}  \simeq  \ottnt{T_{{\mathrm{12}}}}   \ottsym{:}   \ast ;  \theta ;  \delta' $
   \end{itemize}
   by the weakening (\prop:ref{fh-lr-val-ws}).
   %
   % We can also rename variables in types on the right-hand side
   % if they have the same denotation:
   % %
   % for any $\ottnt{T_{{\mathrm{1}}}}$, $\ottnt{T_{{\mathrm{2}}}}$, $\theta$, $\delta$,
   % $\mathit{x} \, \notin \,  \mathit{dom}  (  \delta  ) $, $\mathit{y} \, \notin \,  \mathit{dom}  (  \delta  ) $,
   % $\ottnt{v_{{\mathrm{1}}}}$, $\ottnt{v_{{\mathrm{2}}}}$, and $\ottnt{v'_{{\mathrm{1}}}}$,
   % if $ \ottnt{T_{{\mathrm{1}}}}  \simeq  \ottnt{T_{{\mathrm{2}}}}   \ottsym{:}   \ast ;  \theta ;    \delta    [  \,  (  \ottnt{v_{{\mathrm{1}}}}  ,  \ottnt{v_{{\mathrm{2}}}}  ) /  \mathit{x}  \,  ]      [  \,  (  \ottnt{v'_{{\mathrm{1}}}}  ,  \ottnt{v_{{\mathrm{2}}}}  ) /  \mathit{y}  \,  ]   $,
   % then $ \ottnt{T_{{\mathrm{1}}}}  \simeq  \ottnt{T_{{\mathrm{2}}}} \, [  \mathit{y}  \ottsym{/}  \mathit{x}  ]   \ottsym{:}   \ast ;  \theta ;    \delta    [  \,  (  \ottnt{v_{{\mathrm{1}}}}  ,  \ottnt{v_{{\mathrm{2}}}}  ) /  \mathit{x}  \,  ]      [  \,  (  \ottnt{v'_{{\mathrm{1}}}}  ,  \ottnt{v_{{\mathrm{2}}}}  ) /  \mathit{y}  \,  ]   $.
   % %
   Furthermore, we can show
   \begin{itemize}
    \item $ \ottnt{T_{{\mathrm{12}}}} \, [  \mathit{y}  \ottsym{/}  \mathit{x}  ]  \simeq  \ottnt{T_{{\mathrm{22}}}}   \ottsym{:}   \ast ;  \theta ;  \delta' $
          from $ \ottnt{T_{{\mathrm{12}}}} \, [  \mathit{y}  \ottsym{/}  \mathit{x}  ]  \simeq  \ottnt{T_{{\mathrm{22}}}} \, [  \mathit{y}  \ottsym{/}  \mathit{x}  ]   \ottsym{:}   \ast ;  \theta ;  \delta' $ and
    \item $ \ottnt{T_{{\mathrm{22}}}}  \simeq  \ottnt{T_{{\mathrm{12}}}} \, [  \mathit{y}  \ottsym{/}  \mathit{x}  ]   \ottsym{:}   \ast ;  \theta ;  \delta' $
          from $ \ottnt{T_{{\mathrm{22}}}}  \simeq  \ottnt{T_{{\mathrm{12}}}}   \ottsym{:}   \ast ;  \theta ;  \delta' $
   \end{itemize}
   because $\mathit{x}$ and $\mathit{y}$ have the same denotation in $\delta'_{{\mathrm{2}}}$, that
   is, $ \delta'_{{\mathrm{2}}}  (  \mathit{x}  )   \ottsym{=}   \delta'_{{\mathrm{2}}}  (  \mathit{y}  ) $.
   Thus, by the IH,
   \[\begin{array}{ll}
    &   \theta_{{\mathrm{1}}}  (   \delta_{{\mathrm{1}}}  (  \langle  \ottnt{T_{{\mathrm{12}}}} \, [  \mathit{y}  \ottsym{/}  \mathit{x}  ]  \Rightarrow  \ottnt{T_{{\mathrm{22}}}}  \rangle   ^{ \ell }   )  \, [  \ottnt{v'_{{\mathrm{1}}}}  \ottsym{/}  \mathit{x}  \ottsym{,}  \ottnt{v''_{{\mathrm{1}}}}  \ottsym{/}  \mathit{y}  ]  )     \\   \simeq_{\mathtt{v} }   &     \theta_{{\mathrm{2}}}  (   \delta_{{\mathrm{2}}}  (    \lambda    \mathit{x}  \mathord{:}  \ottnt{T_{{\mathrm{12}}}} \, [  \mathit{y}  \ottsym{/}  \mathit{x}  ]  .  \mathit{x}   )  \, [  \ottnt{v'_{{\mathrm{2}}}}  \ottsym{/}  \mathit{x}  \ottsym{,}  \ottnt{v'_{{\mathrm{2}}}}  \ottsym{/}  \mathit{y}  ]  )    \ottsym{:}   \ottnt{T_{{\mathrm{12}}}} \, [  \mathit{y}  \ottsym{/}  \mathit{x}  ]  \rightarrow  \ottnt{T_{{\mathrm{22}}}} ;  \theta ;  \delta' .
      \end{array}\]
   Since $ \ottnt{v'''_{{\mathrm{1}}}}  \simeq_{\mathtt{v} }  \ottnt{v'''_{{\mathrm{2}}}}   \ottsym{:}   \ottnt{T_{{\mathrm{12}}}} ;  \theta ;   \delta    [  \,  (  \ottnt{v''_{{\mathrm{1}}}}  ,  \ottnt{v'_{{\mathrm{2}}}}  ) /  \mathit{x}  \,  ]   $,
   we have $ \ottnt{v'''_{{\mathrm{1}}}}  \simeq_{\mathtt{v} }  \ottnt{v'''_{{\mathrm{2}}}}   \ottsym{:}   \ottnt{T_{{\mathrm{12}}}} \, [  \mathit{y}  \ottsym{/}  \mathit{x}  ] ;  \theta ;  \delta' $
   {\iffull
    by $\alpha$-renaming on types in term relations (\prop:ref{fh-lr-alpha-eq})
    and the weakening.
    \else
    (term relations are closed under $\alpha$-renaming).
   \fi}
   Thus,
   \[
      \theta_{{\mathrm{1}}}  (   \delta_{{\mathrm{1}}}  (  \langle  \ottnt{T_{{\mathrm{12}}}} \, [  \ottnt{v''_{{\mathrm{1}}}}  \ottsym{/}  \mathit{x}  ]  \Rightarrow  \ottnt{T_{{\mathrm{22}}}} \, [  \ottnt{v'_{{\mathrm{1}}}}  \ottsym{/}  \mathit{x}  ]  \rangle   ^{ \ell }   )   )  \, \ottnt{v'''_{{\mathrm{1}}}}  \simeq_{\mathtt{e} }  \ottnt{v'''_{{\mathrm{2}}}}   \ottsym{:}   \ottnt{T_{{\mathrm{22}}}} ;  \theta ;   \delta    [  \,  (  \ottnt{v'_{{\mathrm{1}}}}  ,  \ottnt{v'_{{\mathrm{2}}}}  ) /  \mathit{x}  \,  ]   
   \]
   with the weakening.
   This is what we want to show.

  \case $  \forall   \alpha  .  \ottnt{T'_{{\mathrm{1}}}}   \simeq   \forall   \alpha  .  \ottnt{T'_{{\mathrm{2}}}}    \ottsym{:}   \ast ;  \theta ;  \delta $:
   Straightforward by the IH.

  \case $  \{  \mathit{x}  \mathord{:}  \ottnt{T'_{{\mathrm{1}}}}   \mathop{\mid}   \ottnt{e'_{{\mathrm{1}}}}  \}   \simeq   \{  \mathit{x}  \mathord{:}  \ottnt{T'_{{\mathrm{2}}}}   \mathop{\mid}   \ottnt{e'_{{\mathrm{2}}}}  \}    \ottsym{:}   \ast ;  \theta ;  \delta $:
   Without loss of generality, we can suppose that $\mathit{x} \, \notin \,  \mathit{dom}  (  \delta  ) $.
   By \E{Red}/\R{Forget},
   \[
     \theta_{{\mathrm{1}}}  (   \delta_{{\mathrm{1}}}  (  \langle  \ottnt{T_{{\mathrm{1}}}}  \Rightarrow  \ottnt{T_{{\mathrm{2}}}}  \rangle   ^{ \ell }   )   )  \, \ottnt{v_{{\mathrm{1}}}}  \longrightarrow   \theta_{{\mathrm{1}}}  (   \delta_{{\mathrm{1}}}  (  \langle  \ottnt{T'_{{\mathrm{1}}}}  \Rightarrow   \{  \mathit{x}  \mathord{:}  \ottnt{T'_{{\mathrm{2}}}}   \mathop{\mid}   \ottnt{e'_{{\mathrm{2}}}}  \}   \rangle   ^{ \ell }   )   )  \, \ottnt{v_{{\mathrm{1}}}}.
   \]
   Thus, it suffices to show that
   \[
      \theta_{{\mathrm{1}}}  (   \delta_{{\mathrm{1}}}  (  \langle  \ottnt{T'_{{\mathrm{1}}}}  \Rightarrow   \{  \mathit{x}  \mathord{:}  \ottnt{T'_{{\mathrm{2}}}}   \mathop{\mid}   \ottnt{e'_{{\mathrm{2}}}}  \}   \rangle   ^{ \ell }   )   )  \, \ottnt{v_{{\mathrm{1}}}}  \simeq_{\mathtt{e} }  \ottnt{v_{{\mathrm{2}}}}   \ottsym{:}    \{  \mathit{x}  \mathord{:}  \ottnt{T'_{{\mathrm{2}}}}   \mathop{\mid}   \ottnt{e'_{{\mathrm{2}}}}  \}  ;  \theta ;  \delta .
   \]
   By the IH,
   \[
      \theta_{{\mathrm{1}}}  (   \delta_{{\mathrm{1}}}  (  \langle  \ottnt{T'_{{\mathrm{1}}}}  \Rightarrow  \ottnt{T'_{{\mathrm{2}}}}  \rangle   ^{ \ell }   )   )   \simeq_{\mathtt{e} }   \theta_{{\mathrm{2}}}  (   \delta_{{\mathrm{2}}}  (    \lambda    \mathit{y}  \mathord{:}  \ottnt{T'_{{\mathrm{1}}}}  .  \mathit{y}   )   )    \ottsym{:}   \ottnt{T'_{{\mathrm{1}}}}  \rightarrow  \ottnt{T'_{{\mathrm{2}}}} ;  \theta ;  \delta .
   \]
  Since $ \ottnt{v_{{\mathrm{1}}}}  \simeq_{\mathtt{v} }  \ottnt{v_{{\mathrm{2}}}}   \ottsym{:}    \{  \mathit{x}  \mathord{:}  \ottnt{T'_{{\mathrm{1}}}}   \mathop{\mid}   \ottnt{e'_{{\mathrm{1}}}}  \}  ;  \theta ;  \delta $,
  we have $ \ottnt{v_{{\mathrm{1}}}}  \simeq_{\mathtt{v} }  \ottnt{v_{{\mathrm{2}}}}   \ottsym{:}   \ottnt{T'_{{\mathrm{1}}}} ;  \theta ;  \delta $ by definition.
  Thus,
  \[
     \theta_{{\mathrm{1}}}  (   \delta_{{\mathrm{1}}}  (  \langle  \ottnt{T'_{{\mathrm{1}}}}  \Rightarrow  \ottnt{T'_{{\mathrm{2}}}}  \rangle   ^{ \ell }   )   )  \, \ottnt{v_{{\mathrm{1}}}}  \simeq_{\mathtt{e} }  \ottnt{v_{{\mathrm{2}}}}   \ottsym{:}   \ottnt{T'_{{\mathrm{2}}}} ;  \theta ;  \delta .
  \]
  By definition, there exists some $\ottnt{v'_{{\mathrm{1}}}}$ such that
  $ \theta_{{\mathrm{1}}}  (   \delta_{{\mathrm{1}}}  (  \langle  \ottnt{T'_{{\mathrm{1}}}}  \Rightarrow  \ottnt{T'_{{\mathrm{2}}}}  \rangle   ^{ \ell }   )   )  \, \ottnt{v_{{\mathrm{1}}}}  \longrightarrow^{\ast}  \ottnt{v'_{{\mathrm{1}}}}$ and
  $ \ottnt{v'_{{\mathrm{1}}}}  \simeq_{\mathtt{v} }  \ottnt{v_{{\mathrm{2}}}}   \ottsym{:}   \ottnt{T'_{{\mathrm{2}}}} ;  \theta ;  \delta $.
  By \R{Forget} and \R{PreCheck},
  \[
    \theta_{{\mathrm{1}}}  (   \delta_{{\mathrm{1}}}  (  \langle  \ottnt{T'_{{\mathrm{1}}}}  \Rightarrow   \{  \mathit{x}  \mathord{:}  \ottnt{T'_{{\mathrm{2}}}}   \mathop{\mid}   \ottnt{e'_{{\mathrm{2}}}}  \}   \rangle   ^{ \ell }   )   )  \, \ottnt{v_{{\mathrm{1}}}}  \longrightarrow^{\ast}   \theta_{{\mathrm{1}}}  (   \delta_{{\mathrm{1}}}  (  \langle   \{  \mathit{x}  \mathord{:}  \ottnt{T'_{{\mathrm{2}}}}   \mathop{\mid}   \ottnt{e'_{{\mathrm{2}}}}  \}   \ottsym{,}  \ottnt{e'_{{\mathrm{2}}}} \, [  \ottnt{v'_{{\mathrm{1}}}}  \ottsym{/}  \mathit{x}  ]  \ottsym{,}  \ottnt{v'_{{\mathrm{1}}}}  \rangle   ^{ \ell }   )   ) .
  \]
  Thus, it suffices to show that
  \[
      \theta_{{\mathrm{1}}}  (   \delta_{{\mathrm{1}}}  (  \langle   \{  \mathit{x}  \mathord{:}  \ottnt{T'_{{\mathrm{2}}}}   \mathop{\mid}   \ottnt{e'_{{\mathrm{2}}}}  \}   \ottsym{,}  \ottnt{e'_{{\mathrm{2}}}} \, [  \ottnt{v'_{{\mathrm{1}}}}  \ottsym{/}  \mathit{x}  ]  \ottsym{,}  \ottnt{v'_{{\mathrm{1}}}}  \rangle   ^{ \ell }   )   )   \simeq_{\mathtt{e} }  \ottnt{v_{{\mathrm{2}}}}   \ottsym{:}    \{  \mathit{x}  \mathord{:}  \ottnt{T'_{{\mathrm{2}}}}   \mathop{\mid}   \ottnt{e'_{{\mathrm{2}}}}  \}  ;  \theta ;  \delta .
  \]

  We show
  \[
    \theta_{{\mathrm{1}}}  (   \delta_{{\mathrm{1}}}  (  \ottnt{e'_{{\mathrm{2}}}} \, [  \ottnt{v'_{{\mathrm{1}}}}  \ottsym{/}  \mathit{x}  ]  )   )   \longrightarrow^{\ast}   \mathsf{true} .
  \]
  Since $ \ottnt{v_{{\mathrm{1}}}}  \simeq_{\mathtt{v} }  \ottnt{v_{{\mathrm{2}}}}   \ottsym{:}    \{  \mathit{x}  \mathord{:}  \ottnt{T'_{{\mathrm{1}}}}   \mathop{\mid}   \ottnt{e'_{{\mathrm{1}}}}  \}  ;  \theta ;  \delta $,
  we have
  $ \ottnt{v_{{\mathrm{1}}}}  \simeq_{\mathtt{v} }  \ottnt{v_{{\mathrm{2}}}}   \ottsym{:}   \ottnt{T'_{{\mathrm{1}}}} ;  \theta ;  \delta $ and
  $ \theta_{{\mathrm{1}}}  (   \delta_{{\mathrm{1}}}  (  \ottnt{e'_{{\mathrm{1}}}} \, [  \ottnt{v_{{\mathrm{1}}}}  \ottsym{/}  \mathit{x}  ]  )   )   \longrightarrow^{\ast}   \mathsf{true} $.
  Since
  $  \{  \mathit{x}  \mathord{:}  \ottnt{T'_{{\mathrm{1}}}}   \mathop{\mid}   \ottnt{e'_{{\mathrm{1}}}}  \}   \simeq   \{  \mathit{x}  \mathord{:}  \ottnt{T'_{{\mathrm{2}}}}   \mathop{\mid}   \ottnt{e'_{{\mathrm{2}}}}  \}    \ottsym{:}   \ast ;  \theta ;  \delta $,
  we have
  \[
     \theta_{{\mathrm{1}}}  (   \delta_{{\mathrm{1}}}  (  \ottnt{e'_{{\mathrm{1}}}} \, [  \ottnt{v_{{\mathrm{1}}}}  \ottsym{/}  \mathit{x}  ]  )   )   \simeq_{\mathtt{e} }   \theta_{{\mathrm{2}}}  (   \delta_{{\mathrm{2}}}  (  \ottnt{e'_{{\mathrm{2}}}} \, [  \ottnt{v_{{\mathrm{2}}}}  \ottsym{/}  \mathit{x}  ]  )   )    \ottsym{:}    \mathsf{Bool}  ;  \theta ;  \delta .
  \]
  Since the term on the left-hand side evaluates to $ \mathsf{true} $,
  we have
  $ \theta_{{\mathrm{2}}}  (   \delta_{{\mathrm{2}}}  (  \ottnt{e'_{{\mathrm{2}}}} \, [  \ottnt{v_{{\mathrm{2}}}}  \ottsym{/}  \mathit{x}  ]  )   )   \longrightarrow^{\ast}   \mathsf{true} $.
  Since
  $  \{  \mathit{x}  \mathord{:}  \ottnt{T'_{{\mathrm{2}}}}   \mathop{\mid}   \ottnt{e'_{{\mathrm{2}}}}  \}   \simeq   \{  \mathit{x}  \mathord{:}  \ottnt{T'_{{\mathrm{2}}}}   \mathop{\mid}   \ottnt{e'_{{\mathrm{2}}}}  \}    \ottsym{:}   \ast ;  \theta ;  \delta $ and
  $ \ottnt{v'_{{\mathrm{1}}}}  \simeq_{\mathtt{v} }  \ottnt{v_{{\mathrm{2}}}}   \ottsym{:}   \ottnt{T'_{{\mathrm{2}}}} ;  \theta ;  \delta $,
  we have
  \[
     \theta_{{\mathrm{1}}}  (   \delta_{{\mathrm{1}}}  (  \ottnt{e'_{{\mathrm{2}}}} \, [  \ottnt{v'_{{\mathrm{1}}}}  \ottsym{/}  \mathit{x}  ]  )   )   \simeq_{\mathtt{e} }   \theta_{{\mathrm{2}}}  (   \delta_{{\mathrm{2}}}  (  \ottnt{e'_{{\mathrm{2}}}} \, [  \ottnt{v_{{\mathrm{2}}}}  \ottsym{/}  \mathit{x}  ]  )   )    \ottsym{:}    \mathsf{Bool}  ;  \theta ;  \delta .
  \]
  Since the term on the right-hand term evaluates to $ \mathsf{true} $,
  we have $ \theta_{{\mathrm{1}}}  (   \delta_{{\mathrm{1}}}  (  \ottnt{e'_{{\mathrm{2}}}} \, [  \ottnt{v'_{{\mathrm{1}}}}  \ottsym{/}  \mathit{x}  ]  )   )   \longrightarrow^{\ast}   \mathsf{true} $.

  Thus, $ \theta_{{\mathrm{1}}}  (   \delta_{{\mathrm{1}}}  (  \langle   \{  \mathit{x}  \mathord{:}  \ottnt{T'_{{\mathrm{2}}}}   \mathop{\mid}   \ottnt{e'_{{\mathrm{2}}}}  \}   \ottsym{,}  \ottnt{e'_{{\mathrm{2}}}} \, [  \ottnt{v'_{{\mathrm{1}}}}  \ottsym{/}  \mathit{x}  ]  \ottsym{,}  \ottnt{v'_{{\mathrm{1}}}}  \rangle   ^{ \ell }   )   )   \longrightarrow^{\ast}  \ottnt{v'_{{\mathrm{1}}}}$,
  and so it suffices to show that
  \[
    \ottnt{v'_{{\mathrm{1}}}}  \simeq_{\mathtt{v} }  \ottnt{v_{{\mathrm{2}}}}   \ottsym{:}    \{  \mathit{x}  \mathord{:}  \ottnt{T'_{{\mathrm{2}}}}   \mathop{\mid}   \ottnt{e'_{{\mathrm{2}}}}  \}  ;  \theta ;  \delta ,
  \]
  which follows by the facts that
  $ \ottnt{v'_{{\mathrm{1}}}}  \simeq_{\mathtt{v} }  \ottnt{v_{{\mathrm{2}}}}   \ottsym{:}   \ottnt{T'_{{\mathrm{2}}}} ;  \theta ;  \delta $ and
  $ \theta_{{\mathrm{1}}}  (   \delta_{{\mathrm{1}}}  (  \ottnt{e'_{{\mathrm{2}}}} \, [  \ottnt{v'_{{\mathrm{1}}}}  \ottsym{/}  \mathit{x}  ]  )   )   \longrightarrow^{\ast}   \mathsf{true} $ and
  $ \theta_{{\mathrm{2}}}  (   \delta_{{\mathrm{2}}}  (  \ottnt{e'_{{\mathrm{2}}}} \, [  \ottnt{v_{{\mathrm{2}}}}  \ottsym{/}  \mathit{x}  ]  )   )   \longrightarrow^{\ast}   \mathsf{true} $.  \qedhere
 \end{itemize}
\end{prop}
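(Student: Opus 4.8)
The plan is to prove the statement by course-of-values induction on the sum of the sizes of $\ottnt{T_{{\mathrm{1}}}}$ and $\ottnt{T_{{\mathrm{2}}}}$, driving the argument by the cast reduction rules. Since the goal is a value relation at the function type $\ottnt{T_{{\mathrm{1}}}}  \rightarrow  \ottnt{T_{{\mathrm{2}}}}$, I would first unfold the definition of $ \simeq_{\mathtt{v} } $ at function types: it suffices to fix arbitrary $\ottnt{v_{{\mathrm{1}}}}$ and $\ottnt{v_{{\mathrm{2}}}}$ with $ \ottnt{v_{{\mathrm{1}}}}  \simeq_{\mathtt{v} }  \ottnt{v_{{\mathrm{2}}}}   \ottsym{:}   \ottnt{T_{{\mathrm{1}}}} ;  \theta ;  \delta $ and show that $ \theta_{{\mathrm{1}}}  (   \delta_{{\mathrm{1}}}  (  \langle  \ottnt{T_{{\mathrm{1}}}}  \Rightarrow  \ottnt{T_{{\mathrm{2}}}}  \rangle   ^{ \ell }   )   )  \, \ottnt{v_{{\mathrm{1}}}}  \simeq_{\mathtt{e} }  \ottnt{v_{{\mathrm{2}}}}   \ottsym{:}   \ottnt{T_{{\mathrm{2}}}} ;  \theta ;  \delta $, using that the identity function applied to $\ottnt{v_{{\mathrm{2}}}}$ reduces to $\ottnt{v_{{\mathrm{2}}}}$. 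I would then case-split on the last rule deriving $ \ottnt{T_{{\mathrm{1}}}}  \simeq  \ottnt{T_{{\mathrm{2}}}}   \ottsym{:}   \ast ;  \theta ;  \delta $, which—because the type relation only relates types of matching outermost shape—pins down the shapes of $\ottnt{T_{{\mathrm{1}}}}$ and $\ottnt{T_{{\mathrm{2}}}}$ and hence determines which cast reduction fires.

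Second, I would dispatch the easy cases. For a type variable $\alpha$ the cast is a reflexive cast $\langle \ottnt{T'_{{\mathrm{1}}}} \Rightarrow \ottnt{T'_{{\mathrm{1}}}} \rangle^\ell$ where $ \theta  (  \alpha  ) = (  \ottnt{r} ,  \ottnt{T'_{{\mathrm{1}}}} ,  \ottnt{T'_{{\mathrm{2}}}}  ) $, and the first closure condition built into the universe of interpretations—closure under wrappers produced by reflexive casts—directly supplies a value $\ottnt{v'_{{\mathrm{1}}}}$ with $\langle \ottnt{T'_{{\mathrm{1}}}} \Rightarrow \ottnt{T'_{{\mathrm{1}}}} \rangle^\ell \, \ottnt{v_{{\mathrm{1}}}}  \longrightarrow^{\ast}  \ottnt{v'_{{\mathrm{1}}}}$ and $\ottsym{(}  \ottnt{v'_{{\mathrm{1}}}}  \ottsym{,}  \ottnt{v_{{\mathrm{2}}}}  \ottsym{)} \, \in \, \ottnt{r}$. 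For a base type $\ottnt{B}$, \R{Base} reduces the cast to the identity, so $\ottnt{v_{{\mathrm{1}}}}$ is kept and we are immediately done. The universal-type case follows by applying the induction hypothesis to the body produced by the \R{Forall} wrapper.

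Third, the refinement case $\ottnt{T_{{\mathrm{1}}}} = \{x{:}\ottnt{T'_{{\mathrm{1}}}} \mid \ottnt{e'_{{\mathrm{1}}}}\}$, $\ottnt{T_{{\mathrm{2}}}} = \{x{:}\ottnt{T'_{{\mathrm{2}}}} \mid \ottnt{e'_{{\mathrm{2}}}}\}$: I would use \R{Forget} to strip the source refinement, apply the induction hypothesis on the underlying types to make the inner cast behave like the identity, and then run the target check via \R{PreCheck} and \R{Check}. The crucial point is that $\ottnt{e'_{{\mathrm{1}}}}$ and $\ottnt{e'_{{\mathrm{2}}}}$ are related at $ \mathsf{Bool} $ (extracted from the refinement-type relation); since $\ottnt{v_{{\mathrm{1}}}}$ already satisfies $\ottnt{e'_{{\mathrm{1}}}}$, the fact that related Boolean terms evaluate to the same value together with cotermination (\prop:ref{fh-coterm-true}) forces the target check on $\ottnt{e'_{{\mathrm{2}}}}$ to return $ \mathsf{true} $, so the value survives and is related at the target refinement type.

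Fourth—and this is where I expect the main difficulty—is the function case $\ottnt{T_{{\mathrm{1}}}} = \mathit{x}{:}\ottnt{T_{{\mathrm{11}}}} \to \ottnt{T_{{\mathrm{12}}}}$, $\ottnt{T_{{\mathrm{2}}}} = \mathit{x}{:}\ottnt{T_{{\mathrm{21}}}} \to \ottnt{T_{{\mathrm{22}}}}$. Here \R{Fun} rewrites the cast into a $\lambda$-wrapper containing a contravariant cast $\langle \ottnt{T_{{\mathrm{21}}}} \Rightarrow \ottnt{T_{{\mathrm{11}}}} \rangle^\ell$ on the argument and a covariant cast $\langle \ottnt{T_{{\mathrm{12}}}}[\mathit{y}/\mathit{x}] \Rightarrow \ottnt{T_{{\mathrm{22}}}} \rangle^\ell$ on the result, introducing a fresh $\mathit{y}$. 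Discharging both inner casts by the induction hypothesis is delicate because (a) the hypothesis demands all four self/cross type relations, which for the argument cast come in the \emph{swapped} direction and are available precisely because the lemma assumes both $ \ottnt{T_{{\mathrm{1}}}}  \simeq  \ottnt{T_{{\mathrm{2}}}}   \ottsym{:}   \ast ;  \theta ;  \delta $ and $ \ottnt{T_{{\mathrm{2}}}}  \simeq  \ottnt{T_{{\mathrm{1}}}}   \ottsym{:}   \ast ;  \theta ;  \delta $ (as well as both self-relations); and (b) the codomain types depend on the bound variable, so after the \R{Fun} renaming I must transport the codomain type relations along $\mathit{x} \mapsto \mathit{y}$ by $\alpha$-renaming of types within the relations combined with the weakening lemma (\prop:ref{fh-lr-val-ws}), exploiting that $\mathit{x}$ and $\mathit{y}$ carry the same denotation in the extended value assignment. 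Carefully threading the value assignments through the argument cast, the underlying application of $\ottnt{v_{{\mathrm{1}}}}$, and the result cast, while keeping the index types well formed, is the bulk of the work; everything else is routine unfolding of the term and value relations.
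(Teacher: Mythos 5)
Your proposal follows the paper's proof essentially step for step: the same course-of-values induction on the sum of the sizes, the same unfolding to an arbitrary pair of related arguments, the same case split on the derivation of $ \ottnt{T_{{\mathrm{1}}}}  \simeq  \ottnt{T_{{\mathrm{2}}}}   \ottsym{:}   \ast ;  \theta ;  \delta $, the same use of the closure condition of the interpretation universe in the type-variable case, and the same handling of the function case via \R{Fun}, the swapped-direction hypotheses, $\alpha$-renaming of the codomain, and weakening. The one place your stated justification would not go through is the refinement case: cotermination lets you exchange $\ottnt{e'_{{\mathrm{2}}}} \, [  \ottnt{e_{{\mathrm{1}}}}  \ottsym{/}  \mathit{x}  ]$ for $\ottnt{e'_{{\mathrm{2}}}} \, [  \ottnt{e_{{\mathrm{2}}}}  \ottsym{/}  \mathit{x}  ]$ only when $\ottnt{e_{{\mathrm{1}}}}$ reduces to $\ottnt{e_{{\mathrm{2}}}}$, whereas here you must show that the target refinement holds of $\ottnt{v'_{{\mathrm{1}}}}$, the \emph{output} of the inner cast, and you have no prior evaluation of $\ottnt{e'_{{\mathrm{2}}}}$ on a reduct of $\ottnt{v'_{{\mathrm{1}}}}$ to transport. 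The paper instead applies the cross relation of the two refinements to the pair $\ottsym{(}  \ottnt{v_{{\mathrm{1}}}}  \ottsym{,}  \ottnt{v_{{\mathrm{2}}}}  \ottsym{)}$ to get $ \theta_{{\mathrm{2}}}  (   \delta_{{\mathrm{2}}}  (  \ottnt{e'_{{\mathrm{2}}}} \, [  \ottnt{v_{{\mathrm{2}}}}  \ottsym{/}  \mathit{x}  ]  )   )   \longrightarrow^{\ast}   \mathsf{true} $, and then the \emph{self}-relation of the target refinement type to the pair $\ottsym{(}  \ottnt{v'_{{\mathrm{1}}}}  \ottsym{,}  \ottnt{v_{{\mathrm{2}}}}  \ottsym{)}$ to conclude $ \theta_{{\mathrm{1}}}  (   \delta_{{\mathrm{1}}}  (  \ottnt{e'_{{\mathrm{2}}}} \, [  \ottnt{v'_{{\mathrm{1}}}}  \ottsym{/}  \mathit{x}  ]  )   )   \longrightarrow^{\ast}   \mathsf{true} $; both ingredients are already among your four hypotheses, so the fix is local and the rest of your argument stands.
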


The other requirement about reflexive casts is shown similarly.
\begin{prop}[name=Elimination of Reflexive Casts on Right]{fh-lr-elim-refl-cast-right}
 If
 $ \ottnt{T}  \simeq  \ottnt{T}   \ottsym{:}   \ast ;  \theta ;  \delta $ and
 $ \ottnt{T}  \simeq  \ottnt{T_{{\mathrm{1}}}}   \ottsym{:}   \ast ;  \theta ;  \delta $ and
 $ \ottnt{T}  \simeq  \ottnt{T_{{\mathrm{2}}}}   \ottsym{:}   \ast ;  \theta ;  \delta $,
 then
 $  \theta_{{\mathrm{1}}}  (   \delta_{{\mathrm{1}}}  (    \lambda    \mathit{x}  \mathord{:}  \ottnt{T}  .  \mathit{x}   )   )   \simeq_{\mathtt{v} }   \theta_{{\mathrm{2}}}  (   \delta_{{\mathrm{2}}}  (  \langle  \ottnt{T_{{\mathrm{1}}}}  \Rightarrow  \ottnt{T_{{\mathrm{2}}}}  \rangle   ^{ \ell }   )   )    \ottsym{:}   \ottnt{T}  \rightarrow  \ottnt{T} ;  \theta ;  \delta $.

 \proof

 By course-of-values induction on the sum of sizes of $\ottnt{T_{{\mathrm{1}}}}$ and $\ottnt{T_{{\mathrm{2}}}}$.
\end{prop}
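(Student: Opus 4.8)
The plan is to prove this statement as the exact mirror of \prop:ref{fh-lr-elim-refl-cast}, reusing its proof skeleton but with the cast now on the (possibly ill-typed) right-hand side of the value relation and the identity function on the well-typed left. Accordingly, I would argue by course-of-values induction on the sum of the sizes of $\ottnt{T_{{\mathrm{1}}}}$ and $\ottnt{T_{{\mathrm{2}}}}$. Unfolding the definition of the value relation at the function type $\ottnt{T}  \rightarrow  \ottnt{T}$, it suffices to take arbitrary $\ottnt{v_{{\mathrm{1}}}},\ottnt{v_{{\mathrm{2}}}}$ with $v_1 \simeq_{\mathtt{v}} v_2 : T;\theta;\delta$ and show $\theta_1(\delta_1(\lambda x\mathord{:}T.x))\,v_1 \simeq_{\mathtt{e}} \theta_2(\delta_2(\langle T_1 \Rightarrow T_2\rangle^\ell))\,v_2 : T;\theta;\delta$. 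The left application reduces by \R{Beta} to $v_1$, so the real obligation is that the cast application on the right terminates at a value related to $v_1$ at $\ottnt{T}$.

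I would then perform case analysis on the derivation of $T \simeq T : \ast;\theta;\delta$; since the remaining hypotheses $T \simeq T_1$ and $T \simeq T_2$ force $\ottnt{T_{{\mathrm{1}}}}$ and $\ottnt{T_{{\mathrm{2}}}}$ to share the top-level shape of $\ottnt{T}$, each case fixes the shapes of all three types simultaneously. For $\ottnt{T}  \ottsym{=}  \ottnt{B}$ the cast is reflexive, reduces by \R{Base} to $v_2$, and relatedness at a base type gives $v_1 = v_2$. For $\ottnt{T}  \ottsym{=}  \alpha$ I would invoke the first closure condition in the universe of interpretations: writing $\theta(\alpha) = (r, T'_1, T'_2)$, from $(v_1, v_2) \in r$ this condition supplies a value $w_2$ with $\langle T'_2 \Rightarrow T'_2\rangle^\ell v_2 \longrightarrow^{\ast} w_2$ and $(v_1, w_2) \in r$, which is exactly what is needed here; note that it is the ``$(v_1,v'_2)\in R$'' (right-hand) form of the closure condition that applies, whereas \prop:ref{fh-lr-elim-refl-cast} uses the left-hand form, and that the condition is formulated without an extra reflexive cast precisely so this step goes through (cf.\ \sect{logical_relation-overview}). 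The universal case reduces by \R{Forall} and follows directly from the induction hypothesis.

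The function and refinement cases are where the work lies, and they are the expected main obstacle. In the function case $\ottnt{T}  \ottsym{=}   \mathit{x} \mathord{:} \ottnt{T_{{\mathrm{11}}}} \rightarrow \ottnt{T_{{\mathrm{12}}}} $ the right-hand cast reduces by \R{Fun} to a wrapper; applying the induction hypothesis to the contravariant domain cast relates its let-bound result to the argument on the left, and the covariant codomain cast is discharged by a second use of the induction hypothesis on the smaller codomain type. Here I would observe that, because $T$ occurs on the left of all three hypotheses, the domain and codomain relations extracted from $T \simeq T$, $T \simeq T_1$, and $T \simeq T_2$ all anchor at the components of $T$, so the three hypotheses propagate exactly as the induction hypothesis requires (this is why the present lemma needs three hypotheses rather than the four of \prop:ref{fh-lr-elim-refl-cast}). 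As in the left lemma, the delicate point is that the contravariant cast yields a value $u_2$ differing from the original right argument, so the value assignments indexing the two applications disagree; I would reconcile them with weakening (\prop:ref{fh-lr-val-ws}), closure of the type relation under $\alpha$-renaming, and the fact that the bound variable and the fresh one receive the same denotation under the right substitution. In the refinement case $\ottnt{T}  \ottsym{=}   \{  \mathit{x}  \mathord{:}  \ottnt{T'}   \mathop{\mid}   \ottnt{e'}  \} $ the right-hand cast first discards source refinements by \R{Forget} and then re-checks target refinements by \R{PreCheck}/\R{Check}, and transporting satisfaction of $\ottnt{e'}$ to the checked value uses the refinement component of the type relations together with cotermination (\prop:ref{fh-coterm-true}). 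I expect no genuinely new difficulty beyond \prop:ref{fh-lr-elim-refl-cast}: the argument is structurally symmetric, with the only substantive change being that the semityped well-typedness side-condition is now met by the identity function (manifestly well typed at $\ottnt{T}  \rightarrow  \ottnt{T}$ since $\emptyset \vdash \theta_1(\delta_1(T))$) rather than by the cast.
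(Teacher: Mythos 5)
Your proposal is correct and takes exactly the approach the paper does: the paper's proof of this lemma is literally ``by course-of-values induction on the sum of sizes of $\ottnt{T_{{\mathrm{1}}}}$ and $\ottnt{T_{{\mathrm{2}}}}$,'' mirroring the detailed case analysis given for \prop:ref{fh-lr-elim-refl-cast}. Your elaborations --- in particular that the hypotheses force $\ottnt{T_{{\mathrm{1}}}}$ and $\ottnt{T_{{\mathrm{2}}}}$ to share the top-level shape of $\ottnt{T}$, and that the type-variable case uses the $\ottsym{(}  \ottnt{v_{{\mathrm{1}}}}  \ottsym{,}  \ottnt{v'_{{\mathrm{2}}}}  \ottsym{)} \, \in \, \mathit{R}$ half of the closure condition --- are consistent with how the left-hand lemma is proved.
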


The final requirement is about CIU equivalence---if $\emptyset  \vdash  \ottnt{v} \, =_\mathsf{ciu} \, \ottnt{v_{{\mathrm{1}}}}  \ottsym{:}  \ottnt{T_{{\mathrm{1}}}}$
and $\ottsym{(}  \ottnt{v_{{\mathrm{1}}}}  \ottsym{,}  \ottnt{v_{{\mathrm{2}}}}  \ottsym{)} \, \in \, \ottnt{r}$, then $\ottsym{(}  \ottnt{v}  \ottsym{,}  \ottnt{v_{{\mathrm{2}}}}  \ottsym{)} \, \in \, \ottnt{r}$.
We show that term relations satisfy it by using the (restricted)
equivalence-respecting property~\cite{Pitts_2005_ATTAPL}.
\begin{prop}{fh-lr-comp-sectx-hole-red}
 If
 $\emptyset  \vdash  \ottnt{e_{{\mathrm{1}}}}  \ottsym{:}  \ottnt{T}$ and
 $\ottnt{e_{{\mathrm{1}}}}  \longrightarrow^{\ast}  \ottnt{e_{{\mathrm{2}}}}$ and
 $\emptyset  \vdash  \ottnt{E}^\ottnt{S}  \ottsym{:}  \ottsym{(}  \emptyset  \vdash  \ottnt{e_{{\mathrm{2}}}}  \ottsym{:}  \ottnt{T}  \ottsym{)}  \mathrel{\circ\hspace{-.4em}\rightarrow}  \ottnt{T'}$,
 then
 $\emptyset  \vdash  \ottnt{E}^\ottnt{S}  \ottsym{:}  \ottsym{(}  \emptyset  \vdash  \ottnt{e_{{\mathrm{1}}}}  \ottsym{:}  \ottnt{T}  \ottsym{)}  \mathrel{\circ\hspace{-.4em}\rightarrow}  \ottnt{T'}$.

 \proof

 Straightforward by induction on the derivation of
 $\emptyset  \vdash  \ottnt{E}^\ottnt{S}  \ottsym{:}  \ottsym{(}  \emptyset  \vdash  \ottnt{e_{{\mathrm{2}}}}  \ottsym{:}  \ottnt{T}  \ottsym{)}  \mathrel{\circ\hspace{-.4em}\rightarrow}  \ottnt{T'}$.
 \end{prop}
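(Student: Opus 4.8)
The plan is to prove the statement by induction on the structure of the static evaluation context $E^S$ (equivalently, on the derivation of $\emptyset \vdash E^S : (\emptyset \vdash e_2 : T) \circ\!\to T'$), after first recording two facts used throughout. First, by subject reduction (\prop:ref{fh-subjred}) and $e_1 \longrightarrow^{\ast} e_2$ we get $\emptyset \vdash e_2 : T$, so both $e_1$ and $e_2$ are well typed at $T$; combined with composition (\prop:ref{fh-ctxeq-typed}) this yields $\emptyset \vdash E^S[e_2] : T'$ and hence $\emptyset \vdash T'$ (well-typed terms have well-formed types), which discharges the well-formedness side condition of \CT{Conv}. Second, the central observation: for any type $S$ with a distinguished free variable $y$, each step $e \longrightarrow e'$ gives $S[e/y] \Rrightarrow S[e'/y]$ directly from the definition of $\Rrightarrow$, so $e_1 \longrightarrow^{\ast} e_2$ implies $S[e_1/y] \equiv S[e_2/y]$.

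In the inductive cases, the hole, the function position of an application, and the type-application form are immediate. If $E^S = []$ then $T' = T$, and \CT{Hole} gives $\emptyset \vdash [] : (\emptyset \vdash e_1 : T) \circ\!\to T$ using $\emptyset \vdash e_1 : T$. If $E^S = E_0^S\, e$, inversion of \CT{App} gives $\emptyset \vdash E_0^S : (\emptyset \vdash e_2 : T) \circ\!\to (x{:}T_1 \rightarrow T_2)$ and $\emptyset \vdash e : T_1$ (since $e$ has no holes); the result type $T_2[e/x]$ does not mention the filled term, so the induction hypothesis on $E_0^S$ followed by \CT{App} closes the case. Likewise $E^S = E_0^S\, T_0$ is handled by the induction hypothesis and \CT{TApp}, as the closed type argument $T_0$ contributes no dependency on the hole.

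The two cases that genuinely need \CT{Conv} are the argument position of an application and the primitive-operation frame, where the reconstructed result type refers to the filled term. For $E^S = v\, E_0^S$, inversion of \CT{App} gives $\emptyset \vdash v : x{:}T_1 \rightarrow T_2$ and $\emptyset \vdash E_0^S : (\emptyset \vdash e_2 : T) \circ\!\to T_1$, with $T' = T_2[E_0^S[e_2]/x]$. The induction hypothesis yields $\emptyset \vdash E_0^S : (\emptyset \vdash e_1 : T) \circ\!\to T_1$, and \CT{App} reconstructs the frame with result type $T_2[E_0^S[e_1]/x]$. By the central observation applied to $\hat S = T_2[E_0^S[y]/x]$ with fresh $y$, we have $T_2[E_0^S[e_1]/x] \equiv T_2[E_0^S[e_2]/x] = T'$, and since $\emptyset \vdash T'$, \CT{Conv} gives the desired judgment. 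The $\text{op}(\overline{v}, E_0^S, \overline{e})$ case is analogous but with more bookkeeping: switching the filled term from $e_2$ to $e_1$ changes the substitution $\sigma$ used in \CT{Op} only at the hole's argument variable, so the index types of the later non-hole arguments change to convertible ones; these arguments are retyped by \T{Conv}, the hole premise is supplied by the induction hypothesis (its index type is unaffected, depending only on preceding values), and a final \CT{Conv} repairs the return type $\sigma(T)$.

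I expect the operation case to be the main obstacle, not conceptually but in the substitution bookkeeping: one must track that the dependent argument types $\sigma(T'_j)$ at positions after the hole are convertible before and after the switch, justify $\emptyset \vdash \sigma(T'_j)$ for the new substitution, and re-derive each non-hole premise by \T{Conv}; the closedness and well-formedness requirements on $\mathsf{ty}(\mathrm{op})$ make these side conditions available. The only delicate point in the easy machinery is the convertibility fact itself: I deliberately route it through a fresh variable $y$ and the single-step characterization of $\Rrightarrow$, rather than through $E_0^S[e_1] \longrightarrow^{\ast} E_0^S[e_2]$, to sidestep the fact that a blame-lifting step inside $E_0^S$ need not commute with plugging.
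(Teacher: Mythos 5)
Your proof is correct and is exactly the induction on the derivation of $\emptyset \vdash E^S : (\emptyset \vdash e_2 : T) \mathrel{\circ\hspace{-.4em}\rightarrow} T'$ that the paper invokes without writing out; it also mirrors, case for case, the paper's fully detailed proof of the analogous lemma for general evaluation contexts (\CT{Hole} via subject reduction, \CT{App}/\CT{Op} repaired with \CT{Conv} and \T{Conv}, and the trivial \CT{TApp} and \CT{Conv} cases). The one refinement you add is the explicit justification of $T_2[E^S_0[e_1]/x] \equiv T_2[E^S_0[e_2]/x]$ through the definition of $\Rrightarrow$ with a fresh variable rather than through reduction of the plugged contexts, a point the paper treats as immediate ``by definition.''
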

 \begin{prop}{fh-lr-comp-sectx-ctx-composed}
 If
 $\Gamma  \vdash  \ottnt{e}  \ottsym{:}  \ottnt{T}$ and
 $\emptyset  \vdash  \ottnt{C}  \ottsym{:}  \ottsym{(}  \Gamma  \vdash  \ottnt{e}  \ottsym{:}  \ottnt{T}  \ottsym{)}  \mathrel{\circ\hspace{-.4em}\rightarrow}  \ottnt{T'}$ and
 $\emptyset  \vdash  \ottnt{E}^\ottnt{S}  \ottsym{:}  \ottsym{(}  \emptyset  \vdash  \ottnt{C}  [  \ottnt{e}  ]  \ottsym{:}  \ottnt{T'}  \ottsym{)}  \mathrel{\circ\hspace{-.4em}\rightarrow}  \ottnt{T''}$,
 then $\emptyset  \vdash  \ottnt{E}^\ottnt{S}  [  \ottnt{C}  ]  \ottsym{:}  \ottsym{(}  \Gamma  \vdash  \ottnt{e}  \ottsym{:}  \ottnt{T}  \ottsym{)}  \mathrel{\circ\hspace{-.4em}\rightarrow}  \ottnt{T''}$.

 \proof

 Straightforward by induction on the derivation of
 $\emptyset  \vdash  \ottnt{E}^\ottnt{S}  \ottsym{:}  \ottsym{(}  \emptyset  \vdash  \ottnt{C}  [  \ottnt{e}  ]  \ottsym{:}  \ottnt{T'}  \ottsym{)}  \mathrel{\circ\hspace{-.4em}\rightarrow}  \ottnt{T''}$.
 \end{prop}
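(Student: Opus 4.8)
The plan is to prove the statement by structural induction on the static evaluation context $E^S$, equivalently by induction on the derivation of $\emptyset \vdash E^S : (\emptyset \vdash C[e] : T') \mathrel{\circ\hspace{-.4em}\rightarrow} T''$. The single identity that drives every case is the associativity of hole-filling, $E^S[C][\overline{e_i}] = E^S[C[\overline{e_i}]]$, and in particular $E^S[C][e] = E^S[C[e]]$: filling the composite context $E^S[C]$ with $e$ (or more generally with the terms $\overline{e_i}$ that the holes of $C$ expect) always reconstructs $E^S[C[e]]$, for which the third hypothesis already supplies a context typing derivation. Since static evaluation contexts are single-hole, the composition $E^S[C]$ inherits exactly the holes of $C$, so the target judgment is again of the form $(\Gamma \vdash e : T) \mathrel{\circ\hspace{-.4em}\rightarrow} T''$.

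First I would dispatch the base case $E^S = \left[ \, \right]$: here $T' = T''$ and $E^S[C] = C$, so the desired judgment $\emptyset \vdash E^S[C] : (\Gamma \vdash e : T) \mathrel{\circ\hspace{-.4em}\rightarrow} T''$ is literally the hypothesis $\emptyset \vdash C : (\Gamma \vdash e : T) \mathrel{\circ\hspace{-.4em}\rightarrow} T'$. For each remaining static frame --- ${\tt op}(v_1,\ldots,v_n,E^{S'},e_1,\ldots,e_m)$, $E^{S'}\,e$, $v\,E^{S'}$, and $E^{S'}\,T$ --- I would invert the last rule of the derivation (\CT{Op}, \CT{App}, \CT{App}, and \CT{TApp} respectively), extract the sub-derivation typing the inner hole-bearing context, $\emptyset \vdash E^{S'} : (\emptyset \vdash C[e] : T') \mathrel{\circ\hspace{-.4em}\rightarrow} T_0$ for the appropriate intermediate $T_0$, apply the induction hypothesis to obtain $\emptyset \vdash E^{S'}[C] : (\Gamma \vdash e : T) \mathrel{\circ\hspace{-.4em}\rightarrow} T_0$, and then reassemble the frame using the very same context typing rule.

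The only mildly delicate cases are those whose output type depends on the term plugged into the hole, i.e.\ the argument position $v\,E^{S'}$ (rule \CT{App}) and the operator case \CT{Op}. There the result type carries a substitution of $E^{S'}[C[e]]$ (respectively of the filled subterms) for the relevant bound variable; after reapplying the rule to the composed context $E^{S'}[C]$, the computed result type instead mentions $E^{S'}[C]$ filled with $e$, and the two coincide exactly by the associativity identity $E^{S'}[C][e] = E^{S'}[C[e]]$, so no appeal to \CT{Conv} is required. Keeping this dependent-substitution bookkeeping aligned with the hole-filling equality is the step I expect to be the main (though still routine) obstacle; the non-hole-dependent cases, namely $E^{S'}\,e$ on the function side and $E^{S'}\,T$, follow immediately from the induction hypothesis and the corresponding rule.
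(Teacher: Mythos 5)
Your proposal takes essentially the same route as the paper: induction on the derivation of $\emptyset \vdash E^S : (\emptyset \vdash C[e] : T') \mathrel{\circ\hspace{-.4em}\rightarrow} T''$, with the hole case absorbing the hypothesis on $C$ and each frame case reassembled after the induction hypothesis, the whole thing driven by the identity $E^S[C][e] = E^S[C[e]]$; your observation that the dependent result types then coincide syntactically, so no conversion step is needed in the \CT{App}/\CT{Op} cases, matches how the paper handles the analogous lemma for full evaluation contexts.

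One small but real omission: your case analysis is organized by the shape of $E^S$, but the induction is on the \emph{derivation}, and the last rule need not be determined by that shape --- \CT{Conv} can close any context-typing derivation, including one for a static evaluation context, without changing the context. Your list of cases (\CT{Op}, \CT{App}, \CT{TApp}, plus the hole) therefore misses the case where the derivation ends with \CT{Conv}; the fix is immediate (invert to get $\emptyset \vdash E^S : (\emptyset \vdash C[e] : T') \mathrel{\circ\hspace{-.4em}\rightarrow} T'''$ with $T''' \equiv T''$ and $\emptyset \vdash T''$, apply the induction hypothesis, and reapply \CT{Conv}), and indeed the paper's written-out proof of the corresponding lemma for general evaluation contexts includes exactly this case. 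Note that \CT{Forget} and \CT{Exact} cannot arise here since a static evaluation context is never a value context, so \CT{Conv} is the only extra case you need.
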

\begin{prop}[name=Equivalence-Respecting]{fh-lr-comp-equiv-res}
 If
 $\emptyset  \vdash  \ottnt{e_{{\mathrm{1}}}} \, =_\mathsf{ciu} \, \ottnt{e_{{\mathrm{2}}}}  \ottsym{:}   \theta_{{\mathrm{1}}}  (   \delta_{{\mathrm{1}}}  (  \ottnt{T}  )   ) $ and
 $ \ottnt{e_{{\mathrm{2}}}}  \simeq_{\mathtt{e} }  \ottnt{e_{{\mathrm{3}}}}   \ottsym{:}   \ottnt{T} ;  \theta ;  \delta $,
 then
 $ \ottnt{e_{{\mathrm{1}}}}  \simeq_{\mathtt{e} }  \ottnt{e_{{\mathrm{3}}}}   \ottsym{:}   \ottnt{T} ;  \theta ;  \delta $.

 \proof

 By induction on $\ottnt{T}$.
 If $\ottnt{e_{{\mathrm{1}}}}$, $\ottnt{e_{{\mathrm{2}}}}$, and $\ottnt{e_{{\mathrm{3}}}}$ raise blame, then we finish.
 Otherwise, $\ottnt{e_{{\mathrm{1}}}}  \longrightarrow^{\ast}  \ottnt{v_{{\mathrm{1}}}}$,
 $\ottnt{e_{{\mathrm{2}}}}  \longrightarrow^{\ast}  \ottnt{v_{{\mathrm{2}}}}$, and $\ottnt{e_{{\mathrm{3}}}}  \longrightarrow^{\ast}  \ottnt{v_{{\mathrm{3}}}}$
 for some $\ottnt{v_{{\mathrm{1}}}}$, $\ottnt{v_{{\mathrm{2}}}}$, and $\ottnt{v_{{\mathrm{3}}}}$.
 We have $ \ottnt{v_{{\mathrm{2}}}}  \simeq_{\mathtt{v} }  \ottnt{v_{{\mathrm{3}}}}   \ottsym{:}   \ottnt{T} ;  \theta ;  \delta $.
 By definition, it suffices to show that
 $ \ottnt{v_{{\mathrm{1}}}}  \simeq_{\mathtt{v} }  \ottnt{v_{{\mathrm{3}}}}   \ottsym{:}   \ottnt{T} ;  \theta ;  \delta $.
 By case analysis on $\ottnt{T}$.
 \begin{itemize}
  \case $\ottnt{T}  \ottsym{=}  \ottnt{B}$:
   Since $ \ottnt{v_{{\mathrm{2}}}}  \simeq_{\mathtt{v} }  \ottnt{v_{{\mathrm{3}}}}   \ottsym{:}   \ottnt{B} ;  \theta ;  \delta $,
   we have $\ottnt{v_{{\mathrm{2}}}}  \ottsym{=}  \ottnt{v_{{\mathrm{3}}}} = \ottnt{k} \, \in \,  {\cal K}_{ \ottnt{B} } $ for some $\ottnt{k}$.
   Let $\ottnt{E}^\ottnt{S}  \ottsym{=}  \langle   \mathsf{Bool}   \Rightarrow   \{  \mathit{x}  \mathord{:}   \mathsf{Bool}    \mathop{\mid}   \mathit{x}  \}   \rangle   ^{ \ell }  \, \ottsym{(}   \left[ \, \right]  \mathrel{=}_{ \ottnt{B} }  \ottnt{k}   \ottsym{)}$.
   Since $\emptyset  \vdash  \ottnt{E}^\ottnt{S}  \ottsym{:}  \ottsym{(}  \emptyset  \vdash  \ottnt{e_{{\mathrm{1}}}}  \ottsym{:}  \ottnt{B}  \ottsym{)}  \mathrel{\circ\hspace{-.4em}\rightarrow}   \{  \mathit{x}  \mathord{:}   \mathsf{Bool}    \mathop{\mid}   \mathit{x}  \} $ and
   $\emptyset  \vdash  \ottnt{e_{{\mathrm{1}}}} \, =_\mathsf{ciu} \, \ottnt{e_{{\mathrm{2}}}}  \ottsym{:}  \ottnt{B}$,
   we have $\ottnt{E}^\ottnt{S}  [  \ottnt{e_{{\mathrm{1}}}}  ]  \Downarrow  \ottnt{E}^\ottnt{S}  [  \ottnt{e_{{\mathrm{2}}}}  ]$.
   Since $\ottnt{e_{{\mathrm{2}}}}  \longrightarrow^{\ast}  \ottnt{k}$, we have $\ottnt{E}^\ottnt{S}  [  \ottnt{e_{{\mathrm{2}}}}  ]  \longrightarrow^{\ast}   \mathsf{true} $.
   If $\ottnt{v_{{\mathrm{1}}}}  \mathrel{\neq}  \ottnt{k}$, then $\ottnt{E}^\ottnt{S}  [  \ottnt{e_{{\mathrm{1}}}}  ]$ does not terminate at values, which is
   contradictory to $\ottnt{E}^\ottnt{S}  [  \ottnt{e_{{\mathrm{1}}}}  ]  \Downarrow  \ottnt{E}^\ottnt{S}  [  \ottnt{e_{{\mathrm{2}}}}  ]$.
   Thus, $\ottnt{v_{{\mathrm{1}}}}  \ottsym{=}  \ottnt{k}$.
   Since $\ottnt{v_{{\mathrm{3}}}}  \ottsym{=}  \ottnt{k}$, we have $ \ottnt{v_{{\mathrm{1}}}}  \simeq_{\mathtt{v} }  \ottnt{v_{{\mathrm{3}}}}   \ottsym{:}   \ottnt{B} ;  \theta ;  \delta $.

  \case $\ottnt{T}  \ottsym{=}  \alpha$:
   Since $ \ottnt{v_{{\mathrm{2}}}}  \simeq_{\mathtt{v} }  \ottnt{v_{{\mathrm{3}}}}   \ottsym{:}   \alpha ;  \theta ;  \delta $,
   there exists some $\ottnt{r}$, $\ottnt{T_{{\mathrm{1}}}}$, and $\ottnt{T_{{\mathrm{2}}}}$ such that
   $ \theta  (  \alpha  ) = (  \ottnt{r} ,  \ottnt{T_{{\mathrm{1}}}} ,  \ottnt{T_{{\mathrm{2}}}}  ) $ and
   $\ottsym{(}  \ottnt{v_{{\mathrm{2}}}}  \ottsym{,}  \ottnt{v_{{\mathrm{3}}}}  \ottsym{)} \, \in \, \ottnt{r}$.
   Since $\ottnt{r} \, \in \,  \mathsf{VRel}  (  \ottnt{T_{{\mathrm{1}}}} ,  \ottnt{T_{{\mathrm{2}}}}  ) $, it suffices to show that
   $\emptyset  \vdash  \ottnt{v_{{\mathrm{1}}}} \, =_\mathsf{ciu} \, \ottnt{v_{{\mathrm{2}}}}  \ottsym{:}  \ottnt{T_{{\mathrm{1}}}}$, that is,
   for any $\ottnt{E}^\ottnt{S}$ and $\ottnt{T'}$ such that
   $\emptyset  \vdash  \ottnt{E}^\ottnt{S}  \ottsym{:}  \ottsym{(}  \emptyset  \vdash  \ottnt{v_{{\mathrm{1}}}}  \ottsym{:}  \ottnt{T_{{\mathrm{1}}}}  \ottsym{)}  \mathrel{\circ\hspace{-.4em}\rightarrow}  \ottnt{T'}$,
   $\ottnt{E}^\ottnt{S}  [  \ottnt{v_{{\mathrm{1}}}}  ]  \Downarrow  \ottnt{E}^\ottnt{S}  [  \ottnt{v_{{\mathrm{2}}}}  ]$.
   Since $\emptyset  \vdash  \ottnt{e_{{\mathrm{1}}}}  \ottsym{:}  \ottnt{T_{{\mathrm{1}}}}$ and $\ottnt{e_{{\mathrm{1}}}}  \longrightarrow^{\ast}  \ottnt{v_{{\mathrm{1}}}}$,
   we have $\emptyset  \vdash  \ottnt{E}^\ottnt{S}  \ottsym{:}  \ottsym{(}  \emptyset  \vdash  \ottnt{e_{{\mathrm{1}}}}  \ottsym{:}  \ottnt{T_{{\mathrm{1}}}}  \ottsym{)}  \mathrel{\circ\hspace{-.4em}\rightarrow}  \ottnt{T'}$
   by \prop:ref{fh-lr-comp-sectx-hole-red}.
   Since $\emptyset  \vdash  \ottnt{e_{{\mathrm{1}}}} \, =_\mathsf{ciu} \, \ottnt{e_{{\mathrm{2}}}}  \ottsym{:}  \ottnt{T_{{\mathrm{1}}}}$,
   we have $\ottnt{E}^\ottnt{S}  [  \ottnt{e_{{\mathrm{1}}}}  ]  \Downarrow  \ottnt{E}^\ottnt{S}  [  \ottnt{e_{{\mathrm{2}}}}  ]$.
   Since $\ottnt{e_{{\mathrm{1}}}}  \longrightarrow^{\ast}  \ottnt{v_{{\mathrm{1}}}}$ and $\ottnt{e_{{\mathrm{2}}}}  \longrightarrow^{\ast}  \ottnt{v_{{\mathrm{2}}}}$,
   we have $\ottnt{E}^\ottnt{S}  [  \ottnt{v_{{\mathrm{1}}}}  ]  \Downarrow  \ottnt{E}^\ottnt{S}  [  \ottnt{v_{{\mathrm{2}}}}  ]$.

  \case $\ottnt{T}  \ottsym{=}   \mathit{x} \mathord{:} \ottnt{T_{{\mathrm{1}}}} \rightarrow \ottnt{T_{{\mathrm{2}}}} $:
   Without loss of generality, we can suppose that
   $\mathit{x} \, \notin \,  \mathit{dom}  (  \delta  ) $.
   By definition, it suffices to show that,
   for any $\ottnt{v'_{{\mathrm{1}}}}$ and $\ottnt{v'_{{\mathrm{3}}}}$ such that
   $ \ottnt{v'_{{\mathrm{1}}}}  \simeq_{\mathtt{v} }  \ottnt{v'_{{\mathrm{3}}}}   \ottsym{:}   \ottnt{T_{{\mathrm{1}}}} ;  \theta ;  \delta $,
   \[
     \ottnt{v_{{\mathrm{1}}}} \, \ottnt{v'_{{\mathrm{1}}}}  \simeq_{\mathtt{e} }  \ottnt{v_{{\mathrm{3}}}} \, \ottnt{v'_{{\mathrm{3}}}}   \ottsym{:}   \ottnt{T_{{\mathrm{2}}}} ;  \theta ;   \delta    [  \,  (  \ottnt{v'_{{\mathrm{1}}}}  ,  \ottnt{v'_{{\mathrm{3}}}}  ) /  \mathit{x}  \,  ]   .
   \]
   By the IH, it suffices to show that
   \begin{itemize}
    \item $\emptyset  \vdash  \ottnt{v_{{\mathrm{1}}}} \, \ottnt{v'_{{\mathrm{1}}}} \, =_\mathsf{ciu} \, \ottnt{v_{{\mathrm{2}}}} \, \ottnt{v'_{{\mathrm{1}}}}  \ottsym{:}   \theta_{{\mathrm{1}}}  (   \delta_{{\mathrm{1}}}  (  \ottnt{T_{{\mathrm{2}}}} \, [  \ottnt{v'_{{\mathrm{1}}}}  \ottsym{/}  \mathit{x}  ]  )   ) $ and
    \item $ \ottnt{v_{{\mathrm{2}}}} \, \ottnt{v'_{{\mathrm{1}}}}  \simeq_{\mathtt{e} }  \ottnt{v_{{\mathrm{3}}}} \, \ottnt{v'_{{\mathrm{3}}}}   \ottsym{:}   \ottnt{T_{{\mathrm{2}}}} ;  \theta ;   \delta    [  \,  (  \ottnt{v'_{{\mathrm{1}}}}  ,  \ottnt{v'_{{\mathrm{3}}}}  ) /  \mathit{x}  \,  ]   $.
   \end{itemize}
   The second is shown by
   $ \ottnt{v_{{\mathrm{2}}}}  \simeq_{\mathtt{v} }  \ottnt{v_{{\mathrm{3}}}}   \ottsym{:}    \mathit{x} \mathord{:} \ottnt{T_{{\mathrm{1}}}} \rightarrow \ottnt{T_{{\mathrm{2}}}}  ;  \theta ;  \delta $ and
   $ \ottnt{v'_{{\mathrm{1}}}}  \simeq_{\mathtt{v} }  \ottnt{v'_{{\mathrm{3}}}}   \ottsym{:}   \ottnt{T_{{\mathrm{1}}}} ;  \theta ;  \delta $.

   As for the first, it suffices to show that,
   for any $\ottnt{E}^\ottnt{S}$ and $\ottnt{T'}$ such that
   $\emptyset  \vdash  \ottnt{E}^\ottnt{S}  \ottsym{:}  \ottsym{(}  \emptyset  \vdash  \ottnt{v_{{\mathrm{1}}}} \, \ottnt{v'_{{\mathrm{1}}}}  \ottsym{:}   \theta_{{\mathrm{1}}}  (   \delta_{{\mathrm{1}}}  (  \ottnt{T_{{\mathrm{2}}}} \, [  \ottnt{v'_{{\mathrm{1}}}}  \ottsym{/}  \mathit{x}  ]  )   )   \ottsym{)}  \mathrel{\circ\hspace{-.4em}\rightarrow}  \ottnt{T'}$,
   \[
    \ottnt{E}^\ottnt{S}  [  \ottnt{v_{{\mathrm{1}}}} \, \ottnt{v'_{{\mathrm{1}}}}  ]  \Downarrow  \ottnt{E}^\ottnt{S}  [  \ottnt{v_{{\mathrm{2}}}} \, \ottnt{v'_{{\mathrm{1}}}}  ].
   \]
   Since
   \begin{itemize}
    \item $\emptyset  \vdash  \ottnt{v_{{\mathrm{1}}}}  \ottsym{:}   \theta_{{\mathrm{1}}}  (   \delta_{{\mathrm{1}}}  (   \mathit{x} \mathord{:} \ottnt{T_{{\mathrm{1}}}} \rightarrow \ottnt{T_{{\mathrm{2}}}}   )   ) $,
    \item $\emptyset  \vdash  \ottsym{(}  \left[ \, \right] \, \ottnt{v'_{{\mathrm{1}}}}  \ottsym{)}  \ottsym{:}  \ottsym{(}  \emptyset  \vdash  \ottnt{v_{{\mathrm{1}}}}  \ottsym{:}   \theta_{{\mathrm{1}}}  (   \delta_{{\mathrm{1}}}  (   \mathit{x} \mathord{:} \ottnt{T_{{\mathrm{1}}}} \rightarrow \ottnt{T_{{\mathrm{2}}}}   )   )   \ottsym{)}  \mathrel{\circ\hspace{-.4em}\rightarrow}   \theta_{{\mathrm{1}}}  (   \delta_{{\mathrm{1}}}  (  \ottnt{T_{{\mathrm{2}}}} \, [  \ottnt{v'_{{\mathrm{1}}}}  \ottsym{/}  \mathit{x}  ]  )   ) $, and
    \item $\emptyset  \vdash  \ottnt{E}^\ottnt{S}  \ottsym{:}  \ottsym{(}  \emptyset  \vdash  \ottnt{v_{{\mathrm{1}}}} \, \ottnt{v'_{{\mathrm{1}}}}  \ottsym{:}   \theta_{{\mathrm{1}}}  (   \delta_{{\mathrm{1}}}  (  \ottnt{T_{{\mathrm{2}}}} \, [  \ottnt{v'_{{\mathrm{1}}}}  \ottsym{/}  \mathit{x}  ]  )   )   \ottsym{)}  \mathrel{\circ\hspace{-.4em}\rightarrow}  \ottnt{T'}$,
   \end{itemize}
   we have
   \[
    \emptyset  \vdash   \ottnt{E}^\ottnt{S}   [  ~  \left[ \, \right] \, \ottnt{v'_{{\mathrm{1}}}}   ~  ]   \ottsym{:}  \ottsym{(}  \emptyset  \vdash  \ottnt{v_{{\mathrm{1}}}}  \ottsym{:}   \theta_{{\mathrm{1}}}  (   \delta_{{\mathrm{1}}}  (   \mathit{x} \mathord{:} \ottnt{T_{{\mathrm{1}}}} \rightarrow \ottnt{T_{{\mathrm{2}}}}   )   )   \ottsym{)}  \mathrel{\circ\hspace{-.4em}\rightarrow}  \ottnt{T'}
   \]
   by \prop:ref{fh-lr-comp-sectx-ctx-composed}.
   Since $\emptyset  \vdash  \ottnt{e_{{\mathrm{1}}}}  \ottsym{:}   \theta_{{\mathrm{1}}}  (   \delta_{{\mathrm{1}}}  (   \mathit{x} \mathord{:} \ottnt{T_{{\mathrm{1}}}} \rightarrow \ottnt{T_{{\mathrm{2}}}}   )   ) $ and
   $\ottnt{e_{{\mathrm{1}}}}  \longrightarrow^{\ast}  \ottnt{v_{{\mathrm{1}}}}$,
   we have
   \[
    \emptyset  \vdash   \ottnt{E}^\ottnt{S}   [  ~  \left[ \, \right] \, \ottnt{v'_{{\mathrm{1}}}}   ~  ]   \ottsym{:}  \ottsym{(}  \emptyset  \vdash  \ottnt{e_{{\mathrm{1}}}}  \ottsym{:}   \theta_{{\mathrm{1}}}  (   \delta_{{\mathrm{1}}}  (   \mathit{x} \mathord{:} \ottnt{T_{{\mathrm{1}}}} \rightarrow \ottnt{T_{{\mathrm{2}}}}   )   )   \ottsym{)}  \mathrel{\circ\hspace{-.4em}\rightarrow}  \ottnt{T'}
   \]
   by \prop:ref{fh-lr-comp-sectx-hole-red}.
   Since $\emptyset  \vdash  \ottnt{e_{{\mathrm{1}}}} \, =_\mathsf{ciu} \, \ottnt{e_{{\mathrm{2}}}}  \ottsym{:}   \theta_{{\mathrm{1}}}  (   \delta_{{\mathrm{1}}}  (   \mathit{x} \mathord{:} \ottnt{T_{{\mathrm{1}}}} \rightarrow \ottnt{T_{{\mathrm{2}}}}   )   ) $,
   we have $\ottnt{E}^\ottnt{S}  [  \ottnt{e_{{\mathrm{1}}}} \, \ottnt{v'_{{\mathrm{1}}}}  ]  \Downarrow  \ottnt{E}^\ottnt{S}  [  \ottnt{e_{{\mathrm{2}}}} \, \ottnt{v'_{{\mathrm{1}}}}  ]$.
   Since $\ottnt{e_{{\mathrm{1}}}}  \longrightarrow^{\ast}  \ottnt{v_{{\mathrm{1}}}}$ and $\ottnt{e_{{\mathrm{2}}}}  \longrightarrow^{\ast}  \ottnt{v_{{\mathrm{2}}}}$,
   we have $\ottnt{E}^\ottnt{S}  [  \ottnt{e_{{\mathrm{1}}}} \, \ottnt{v'_{{\mathrm{1}}}}  ]  \longrightarrow^{\ast}  \ottnt{E}^\ottnt{S}  [  \ottnt{v_{{\mathrm{1}}}} \, \ottnt{v'_{{\mathrm{1}}}}  ]$ and
   $\ottnt{E}^\ottnt{S}  [  \ottnt{e_{{\mathrm{2}}}} \, \ottnt{v'_{{\mathrm{1}}}}  ]  \longrightarrow^{\ast}  \ottnt{E}^\ottnt{S}  [  \ottnt{v_{{\mathrm{2}}}} \, \ottnt{v'_{{\mathrm{1}}}}  ]$.
   Thus, $\ottnt{E}^\ottnt{S}  [  \ottnt{v_{{\mathrm{1}}}} \, \ottnt{v'_{{\mathrm{1}}}}  ]  \Downarrow  \ottnt{E}^\ottnt{S}  [  \ottnt{v_{{\mathrm{2}}}} \, \ottnt{v'_{{\mathrm{1}}}}  ]$.

  \case $\ottnt{T}  \ottsym{=}   \forall   \alpha  .  \ottnt{T'} $:
   Similar to the case of $\ottnt{T}  \ottsym{=}   \mathit{x} \mathord{:} \ottnt{T_{{\mathrm{1}}}} \rightarrow \ottnt{T_{{\mathrm{2}}}} $.

  \case $\ottnt{T}  \ottsym{=}   \{  \mathit{x}  \mathord{:}  \ottnt{T'}   \mathop{\mid}   \ottnt{e'}  \} $:
   Without loss of generality, we can suppose that
   $\mathit{x} \, \notin \,  \mathit{dom}  (  \delta  ) $.
   By definition, it suffices to show that
   \begin{enumerate}
    \item \label{req:fh-lr-comp-equiv-res-one}
          $ \theta_{{\mathrm{1}}}  (   \delta_{{\mathrm{1}}}  (  \ottnt{e'} \, [  \ottnt{v_{{\mathrm{1}}}}  \ottsym{/}  \mathit{x}  ]  )   )   \longrightarrow^{\ast}   \mathsf{true} $,
    \item \label{req:fh-lr-comp-equiv-res-two}
          $ \theta_{{\mathrm{2}}}  (   \delta_{{\mathrm{2}}}  (  \ottnt{e'} \, [  \ottnt{v_{{\mathrm{3}}}}  \ottsym{/}  \mathit{x}  ]  )   )   \longrightarrow^{\ast}   \mathsf{true} $, and
    \item \label{req:fh-lr-comp-equiv-res-three}
          $ \ottnt{v_{{\mathrm{1}}}}  \simeq_{\mathtt{v} }  \ottnt{v_{{\mathrm{3}}}}   \ottsym{:}   \ottnt{T'} ;  \theta ;  \delta $.
   \end{enumerate}

   Since $\emptyset  \vdash  \ottnt{v_{{\mathrm{1}}}}  \ottsym{:}   \theta_{{\mathrm{1}}}  (   \delta_{{\mathrm{1}}}  (   \{  \mathit{x}  \mathord{:}  \ottnt{T'}   \mathop{\mid}   \ottnt{e'}  \}   )   ) $, 
   we have (\ref{req:fh-lr-comp-equiv-res-one})
   by the value inversion (\prop:ref{fh-val-satis-c}).
   Since $ \ottnt{v_{{\mathrm{2}}}}  \simeq_{\mathtt{v} }  \ottnt{v_{{\mathrm{3}}}}   \ottsym{:}    \{  \mathit{x}  \mathord{:}  \ottnt{T'}   \mathop{\mid}   \ottnt{e'}  \}  ;  \theta ;  \delta $,
   we have (\ref{req:fh-lr-comp-equiv-res-two}).

   As for (\ref{req:fh-lr-comp-equiv-res-three}),
   by the IH, it suffices to show that
   \begin{itemize}
    \item $\emptyset  \vdash  \ottnt{v_{{\mathrm{1}}}} \, =_\mathsf{ciu} \, \ottnt{v_{{\mathrm{2}}}}  \ottsym{:}   \theta_{{\mathrm{1}}}  (   \delta_{{\mathrm{1}}}  (  \ottnt{T'}  )   ) $ and
    \item $ \ottnt{v_{{\mathrm{2}}}}  \simeq_{\mathtt{v} }  \ottnt{v_{{\mathrm{3}}}}   \ottsym{:}   \ottnt{T'} ;  \theta ;  \delta $.
   \end{itemize}
   Since $ \ottnt{v_{{\mathrm{2}}}}  \simeq_{\mathtt{v} }  \ottnt{v_{{\mathrm{3}}}}   \ottsym{:}    \{  \mathit{x}  \mathord{:}  \ottnt{T'}   \mathop{\mid}   \ottnt{e'}  \}  ;  \theta ;  \delta $,
   we have the second by definition.
   We can show the first
   in a way similar to the case of $\ottnt{T}  \ottsym{=}   \mathit{x} \mathord{:} \ottnt{T_{{\mathrm{1}}}} \rightarrow \ottnt{T_{{\mathrm{2}}}} $.
   \qedhere
 \end{itemize}
\end{prop}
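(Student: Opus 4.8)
The plan is to proceed by induction on the type index $\ottnt{T}$, first stripping away the term-level behaviour so that only the value relation remains. Since $ \ottnt{e_{{\mathrm{2}}}}  \simeq_{\mathtt{e} }  \ottnt{e_{{\mathrm{3}}}}   \ottsym{:}   \ottnt{T} ;  \theta ;  \delta $ means that either both sides run to blame with a common label or both converge to values $\ottnt{v_{{\mathrm{2}}}}$, $\ottnt{v_{{\mathrm{3}}}}$ with $ \ottnt{v_{{\mathrm{2}}}}  \simeq_{\mathtt{v} }  \ottnt{v_{{\mathrm{3}}}}   \ottsym{:}   \ottnt{T} ;  \theta ;  \delta $, I would first dispatch the blame case using the CIU hypothesis with the empty static evaluation context $ \left[ \, \right] $: from $\emptyset  \vdash  \ottnt{e_{{\mathrm{1}}}} \, =_\mathsf{ciu} \, \ottnt{e_{{\mathrm{2}}}}  \ottsym{:}   \theta_{{\mathrm{1}}}  (   \delta_{{\mathrm{1}}}  (  \ottnt{T}  )   ) $ we obtain $\ottnt{e_{{\mathrm{1}}}}  \Downarrow  \ottnt{e_{{\mathrm{2}}}}$, so $\ottnt{e_{{\mathrm{1}}}}$ blames with the same label and the goal is immediate. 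In the remaining case $\ottnt{e_{{\mathrm{1}}}}  \longrightarrow^{\ast}  \ottnt{v_{{\mathrm{1}}}}$ as well, and the whole problem reduces to establishing $ \ottnt{v_{{\mathrm{1}}}}  \simeq_{\mathtt{v} }  \ottnt{v_{{\mathrm{3}}}}   \ottsym{:}   \ottnt{T} ;  \theta ;  \delta $, which I attack by case analysis on $\ottnt{T}$.

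For the two atomic cases the argument is direct. When $\ottnt{T}  \ottsym{=}  \ottnt{B}$, relatedness forces $\ottnt{v_{{\mathrm{2}}}}  \ottsym{=}  \ottnt{v_{{\mathrm{3}}}}  \ottsym{=}  \ottnt{k}$ for some constant, and I would pin down $\ottnt{v_{{\mathrm{1}}}}  \ottsym{=}  \ottnt{k}$ by feeding $\ottnt{e_{{\mathrm{1}}}}$ and $\ottnt{e_{{\mathrm{2}}}}$ into a \emph{discriminating} static context such as $\langle   \mathsf{Bool}   \Rightarrow   \{  \mathit{x}  \mathord{:}   \mathsf{Bool}    \mathop{\mid}   \mathit{x}  \}   \rangle   ^{ \ell }  \, \ottsym{(}   \left[ \, \right]  \mathrel{=}_{ \ottnt{B} }  \ottnt{k}   \ottsym{)}$: the $\ottnt{e_{{\mathrm{2}}}}$-side runs to $ \mathsf{true} $, so $\ottnt{e_{{\mathrm{1}}}}$ must converge to a value, and if $\ottnt{v_{{\mathrm{1}}}}  \mathrel{\neq}  \ottnt{k}$ the context would be blamed, contradicting $\ottnt{e_{{\mathrm{1}}}}  \Downarrow  \ottnt{e_{{\mathrm{2}}}}$. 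When $\ottnt{T}  \ottsym{=}  \alpha$ with $ \theta  (  \alpha  ) = (  \ottnt{r} ,  \ottnt{T_{{\mathrm{1}}}} ,  \ottnt{T_{{\mathrm{2}}}}  ) $ and $\ottsym{(}  \ottnt{v_{{\mathrm{2}}}}  \ottsym{,}  \ottnt{v_{{\mathrm{3}}}}  \ottsym{)} \, \in \, \ottnt{r}$, I would invoke the CIU-closure clause built into $ \mathsf{VRel}  (  \ottnt{T_{{\mathrm{1}}}} ,  \ottnt{T_{{\mathrm{2}}}}  ) $, which reduces the goal $\ottsym{(}  \ottnt{v_{{\mathrm{1}}}}  \ottsym{,}  \ottnt{v_{{\mathrm{3}}}}  \ottsym{)} \, \in \, \ottnt{r}$ to $\emptyset  \vdash  \ottnt{v_{{\mathrm{1}}}} \, =_\mathsf{ciu} \, \ottnt{v_{{\mathrm{2}}}}  \ottsym{:}  \ottnt{T_{{\mathrm{1}}}}$; this in turn follows from $\emptyset  \vdash  \ottnt{e_{{\mathrm{1}}}} \, =_\mathsf{ciu} \, \ottnt{e_{{\mathrm{2}}}}  \ottsym{:}  \ottnt{T_{{\mathrm{1}}}}$ together with $\ottnt{e_{{\mathrm{1}}}}  \longrightarrow^{\ast}  \ottnt{v_{{\mathrm{1}}}}$, $\ottnt{e_{{\mathrm{2}}}}  \longrightarrow^{\ast}  \ottnt{v_{{\mathrm{2}}}}$ and stability of CIU behaviour under reduction of the hole-filling term (\prop:ref{fh-lr-comp-sectx-hole-red}).

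For the compound cases I would appeal to the induction hypothesis after transporting CIU-equivalence through an elimination frame. In the arrow case $ \mathit{x} \mathord{:} \ottnt{T_{{\mathrm{1}}}} \rightarrow \ottnt{T_{{\mathrm{2}}}} $, given related arguments $ \ottnt{v'_{{\mathrm{1}}}}  \simeq_{\mathtt{v} }  \ottnt{v'_{{\mathrm{3}}}}   \ottsym{:}   \ottnt{T_{{\mathrm{1}}}} ;  \theta ;  \delta $, I would reduce $ \ottnt{v_{{\mathrm{1}}}} \, \ottnt{v'_{{\mathrm{1}}}}  \simeq_{\mathtt{e} }  \ottnt{v_{{\mathrm{3}}}} \, \ottnt{v'_{{\mathrm{3}}}}   \ottsym{:}   \ottnt{T_{{\mathrm{2}}}} ;  \theta ;   \delta    [  \,  (  \ottnt{v'_{{\mathrm{1}}}}  ,  \ottnt{v'_{{\mathrm{3}}}}  ) /  \mathit{x}  \,  ]   $ via the IH at $\ottnt{T_{{\mathrm{2}}}}$ to two subgoals: the CIU fact $\emptyset  \vdash  \ottnt{v_{{\mathrm{1}}}} \, \ottnt{v'_{{\mathrm{1}}}} \, =_\mathsf{ciu} \, \ottnt{v_{{\mathrm{2}}}} \, \ottnt{v'_{{\mathrm{1}}}}  \ottsym{:}  \ldots$, obtained by composing the CIU hypothesis with the static frame $ \left[ \, \right]  \, \ottnt{v'_{{\mathrm{1}}}}$ (\prop:ref{fh-lr-comp-sectx-ctx-composed} and \prop:ref{fh-lr-comp-sectx-hole-red}) and using $\ottnt{e_{{\mathrm{1}}}} \, \ottnt{v'_{{\mathrm{1}}}}  \longrightarrow^{\ast}  \ottnt{v_{{\mathrm{1}}}} \, \ottnt{v'_{{\mathrm{1}}}}$; and the relation $ \ottnt{v_{{\mathrm{2}}}} \, \ottnt{v'_{{\mathrm{1}}}}  \simeq_{\mathtt{e} }  \ottnt{v_{{\mathrm{3}}}} \, \ottnt{v'_{{\mathrm{3}}}}   \ottsym{:}   \ottnt{T_{{\mathrm{2}}}} ;  \theta ;   \delta    [  \,  (  \ottnt{v'_{{\mathrm{1}}}}  ,  \ottnt{v'_{{\mathrm{3}}}}  ) /  \mathit{x}  \,  ]   $, which is immediate from $ \ottnt{v_{{\mathrm{2}}}}  \simeq_{\mathtt{v} }  \ottnt{v_{{\mathrm{3}}}}   \ottsym{:}    \mathit{x} \mathord{:} \ottnt{T_{{\mathrm{1}}}} \rightarrow \ottnt{T_{{\mathrm{2}}}}  ;  \theta ;  \delta $. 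The universal case is identical with a type-application frame $ \left[ \, \right]  \, \ottnt{T_{{\mathrm{1}}}}$. For a refinement $ \{  \mathit{x}  \mathord{:}  \ottnt{T'}   \mathop{\mid}   \ottnt{e'}  \} $ I would split into the two refinement-satisfaction obligations plus underlying-type relatedness: the $\ottnt{v_{{\mathrm{3}}}}$-obligation comes from $ \ottnt{v_{{\mathrm{2}}}}  \simeq_{\mathtt{v} }  \ottnt{v_{{\mathrm{3}}}}   \ottsym{:}    \{  \mathit{x}  \mathord{:}  \ottnt{T'}   \mathop{\mid}   \ottnt{e'}  \}  ;  \theta ;  \delta $, the $\ottnt{v_{{\mathrm{1}}}}$-obligation comes \emph{for free} from the value inversion (\prop:ref{fh-val-satis-c}) applied to the well-typed $\ottnt{v_{{\mathrm{1}}}}$, and $ \ottnt{v_{{\mathrm{1}}}}  \simeq_{\mathtt{v} }  \ottnt{v_{{\mathrm{3}}}}   \ottsym{:}   \ottnt{T'} ;  \theta ;  \delta $ follows from the IH at $\ottnt{T'}$ as in the arrow case.

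I expect the main obstacle to be the bookkeeping required to move CIU-equivalence from $\ottnt{e_{{\mathrm{1}}}}, \ottnt{e_{{\mathrm{2}}}}$ onto the eliminated terms $\ottnt{v_{{\mathrm{1}}}} \, \ottnt{v'_{{\mathrm{1}}}}, \ottnt{v_{{\mathrm{2}}}} \, \ottnt{v'_{{\mathrm{1}}}}$ while remaining inside \emph{static} evaluation contexts. Indeed, restricting CIU-equivalence to static contexts is precisely what makes this lemma go through: the composite frames $ \left[ \, \right]  \, \ottnt{v'_{{\mathrm{1}}}}$ and $ \left[ \, \right]  \, \ottnt{T_{{\mathrm{1}}}}$ are static, so the composition and reduction-stability lemmas apply and the resulting use contexts remain legal targets of the CIU hypothesis. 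The delicate points will be discharging the well-formedness side conditions of the composed contexts (which lean on subject reduction to retype $\ottnt{v_{{\mathrm{1}}}}$ at the type of $\ottnt{e_{{\mathrm{1}}}}$) and, in the base case, choosing the discriminating context so that its $\ottnt{e_{{\mathrm{2}}}}$-run converges to $ \mathsf{true} $ exactly when the hole is $\ottnt{k}$.
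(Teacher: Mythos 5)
Your proposal matches the paper's proof essentially step for step: the same induction on $\ottnt{T}$ reducing to the value relation, the same discriminating context $\langle   \mathsf{Bool}   \Rightarrow   \{  \mathit{x}  \mathord{:}   \mathsf{Bool}    \mathop{\mid}   \mathit{x}  \}   \rangle   ^{ \ell }  \, \ottsym{(}   \left[ \, \right]  \mathrel{=}_{ \ottnt{B} }  \ottnt{k}   \ottsym{)}$ for base types, the CIU-closure clause of $ \mathsf{VRel}  (  \ottnt{T_{{\mathrm{1}}}} ,  \ottnt{T_{{\mathrm{2}}}}  ) $ for type variables, the IH plus static-frame composition (\prop:ref{fh-lr-comp-sectx-ctx-composed}, \prop:ref{fh-lr-comp-sectx-hole-red}) for arrow and universal types, and value inversion for the refinement case. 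The only difference is that you spell out the blame-dispatch via the empty static context, which the paper leaves implicit.
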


Now, we show that term relations are interpretations and then prove the \emph{type
compositionality}, which states that term relations indexed by $\ottnt{T} \, [  \ottnt{T_{{\mathrm{1}}}}  \ottsym{/}  \alpha  ]$
with $\theta$ and by $\ottnt{T}$ with $\theta \,  \{  \,  \alpha  \mapsto  \ottnt{r} , \ottnt{T_{{\mathrm{1}}}} , \ottnt{T_{{\mathrm{2}}}}  \,  \} $ are the same,
provided that $\ottnt{r}$ is a term relation indexed by $\ottnt{T_{{\mathrm{1}}}}$.
\begin{prop}[name=Term Relation as Interpretation]{fh-lr-well-formed-interpret}
  \label{lem:fh-lr-well-formed-interpret}
 Let $\ottnt{r} = \{ \ottsym{(}  \ottnt{v_{{\mathrm{1}}}}  \ottsym{,}  \ottnt{v_{{\mathrm{2}}}}  \ottsym{)} \mid  \ottnt{v_{{\mathrm{1}}}}  \simeq_{\mathtt{v} }  \ottnt{v_{{\mathrm{2}}}}   \ottsym{:}   \ottnt{T_{{\mathrm{1}}}} ;  \theta ;  \delta  \}$.
 If $ \ottnt{T_{{\mathrm{1}}}}  \simeq  \ottnt{T_{{\mathrm{1}}}}   \ottsym{:}   \ast ;  \theta ;  \delta $ and $ \ottnt{T_{{\mathrm{1}}}}  \simeq  \ottnt{T_{{\mathrm{2}}}}   \ottsym{:}   \ast ;  \theta ;  \delta $,
 then $\langle  \ottnt{r}  \ottsym{,}   \theta_{{\mathrm{1}}}  (   \delta_{{\mathrm{1}}}  (  \ottnt{T_{{\mathrm{1}}}}  )   )   \ottsym{,}   \theta_{{\mathrm{2}}}  (   \delta_{{\mathrm{2}}}  (  \ottnt{T_{{\mathrm{2}}}}  )   )   \rangle$.

 \proof

 {\iffull
 We show that $\ottnt{r} \, \in \,  \mathsf{VRel}  (   \theta_{{\mathrm{1}}}  (   \delta_{{\mathrm{1}}}  (  \ottnt{T_{{\mathrm{1}}}}  )   )  ,   \theta_{{\mathrm{2}}}  (   \delta_{{\mathrm{2}}}  (  \ottnt{T_{{\mathrm{2}}}}  )   )   ) $.
 Let $\ottnt{v_{{\mathrm{1}}}}$ and $\ottnt{v_{{\mathrm{2}}}}$ be values such that
 $ \ottnt{v_{{\mathrm{1}}}}  \simeq_{\mathtt{v} }  \ottnt{v_{{\mathrm{2}}}}   \ottsym{:}   \ottnt{T_{{\mathrm{1}}}} ;  \theta ;  \delta $.
 We show $\ottnt{v_{{\mathrm{1}}}}$ and $\ottnt{v_{{\mathrm{2}}}}$ satisfy conditions given to interpretations of
 type variables.
 \begin{itemize}
  \item \TS{\prop:ref{fh-lr-cast} is unnecessary in the new definition of interpretations.}
        By \prop:ref{fh-lr-cast,fh-lr-elim-refl-cast,fh-lr-elim-refl-cast-right},
        there exist some $\ottnt{v'_{{\mathrm{1}}}}$ and $\ottnt{v'_{{\mathrm{2}}}}$ such that
        $ \theta_{{\mathrm{1}}}  (   \delta_{{\mathrm{1}}}  (  \langle  \ottnt{T_{{\mathrm{1}}}}  \Rightarrow  \ottnt{T_{{\mathrm{1}}}}  \rangle   ^{ \ell }   )   )  \, \ottnt{v_{{\mathrm{1}}}}  \longrightarrow^{\ast}  \ottnt{v'_{{\mathrm{1}}}}$ and
        $ \theta_{{\mathrm{2}}}  (   \delta_{{\mathrm{2}}}  (  \langle  \ottnt{T_{{\mathrm{2}}}}  \Rightarrow  \ottnt{T_{{\mathrm{2}}}}  \rangle   ^{ \ell }   )   )  \, \ottnt{v_{{\mathrm{2}}}}  \longrightarrow^{\ast}  \ottnt{v'_{{\mathrm{2}}}}$ and
        $ \ottnt{v'_{{\mathrm{1}}}}  \simeq_{\mathtt{v} }  \ottnt{v'_{{\mathrm{2}}}}   \ottsym{:}   \ottnt{T_{{\mathrm{1}}}} ;  \theta ;  \delta $ and
        $ \ottnt{v'_{{\mathrm{1}}}}  \simeq_{\mathtt{v} }  \ottnt{v_{{\mathrm{2}}}}   \ottsym{:}   \ottnt{T_{{\mathrm{1}}}} ;  \theta ;  \delta $ and
        $ \ottnt{v_{{\mathrm{1}}}}  \simeq_{\mathtt{v} }  \ottnt{v'_{{\mathrm{2}}}}   \ottsym{:}   \ottnt{T_{{\mathrm{1}}}} ;  \theta ;  \delta $.

  \item By the equivalence-respecting property (\prop:ref{fh-lr-comp-equiv-res}),
        for any $\ottnt{v}$ such that
        $\emptyset  \vdash  \ottnt{v} \, =_\mathsf{ciu} \, \ottnt{v_{{\mathrm{1}}}}  \ottsym{:}   \theta_{{\mathrm{1}}}  (   \delta_{{\mathrm{1}}}  (  \ottnt{T_{{\mathrm{1}}}}  )   ) $,
        $ \ottnt{v}  \simeq_{\mathtt{v} }  \ottnt{v_{{\mathrm{2}}}}   \ottsym{:}   \ottnt{T_{{\mathrm{1}}}} ;  \theta ;  \delta $.
 \end{itemize}
 \else
 $\ottnt{r} \, \in \,  \mathsf{VRel}  (   \theta_{{\mathrm{1}}}  (   \delta_{{\mathrm{1}}}  (  \ottnt{T_{{\mathrm{1}}}}  )   )  ,   \theta_{{\mathrm{2}}}  (   \delta_{{\mathrm{2}}}  (  \ottnt{T_{{\mathrm{2}}}}  )   )   ) $
 by \prop:ref{fh-lr-elim-refl-cast,fh-lr-elim-refl-cast-right,fh-lr-comp-equiv-res}.
 \fi}
\end{prop}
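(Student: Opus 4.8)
The plan is to unfold the packaged assertion $\langle r, \theta_1(\delta_1(T_1)), \theta_2(\delta_2(T_2)) \rangle$ into its three constituent obligations: $\theta_1(\delta_1(T_1)) \in \mathsf{Typ}$, $\theta_2(\delta_2(T_2)) \in \mathsf{UTyp}$, and $r \in \mathsf{VRel}(\theta_1(\delta_1(T_1)), \theta_2(\delta_2(T_2)))$. The first two are immediate from the well-formedness side conditions attached to the type relation in Definition~\ref{def:act-type}: $T_1 \simeq T_1 : \ast; \theta; \delta$ gives $\emptyset \vdash \theta_1(\delta_1(T_1))$, and $T_1 \simeq T_2 : \ast; \theta; \delta$ gives that $\theta_2(\delta_2(T_2))$ is closed. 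For the subset requirement, note that any $(v_1,v_2) \in r$ means $v_1 \simeq_{\mathtt{v}} v_2 : T_1; \theta; \delta$, so the value-relation side conditions of Definition~\ref{def:act-type} supply $v_1 \in \mathsf{Val}(\theta_1(\delta_1(T_1)))$ and $v_2 \in \mathsf{UVal}$; hence $r \subseteq \mathsf{Val}(\theta_1(\delta_1(T_1))) \times \mathsf{UVal}$.

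It then remains to check the two closure conditions of $\mathsf{VRel}$ for a fixed $(v_1,v_2) \in r$. For the reflexive-cast condition on the left, I would instantiate \prop:ref{fh-lr-elim-refl-cast} taking both of its type arguments to be $T_1$, whereupon all four of its hypotheses collapse to the available fact $T_1 \simeq T_1 : \ast; \theta; \delta$. This relates the reflexive cast $\theta_1(\delta_1(\langle T_1 \Rightarrow T_1 \rangle^{\ell}))$ to the identity $\theta_2(\delta_2(\lambda x{:}T_1.\,x))$ at $T_1 \rightarrow T_1$. Feeding the related pair $(v_1,v_2)$ through the function-type clause of the value relation (and using \prop:ref{fh-lr-val-ws} to discard the inert $\delta$-extension, since the codomain here does not mention $x$) shows that the cast applied to $v_1$ is term-related to the identity applied to $v_2$; as the latter reduces to the value $v_2$, the former must evaluate to some $v'_1$ with $v'_1 \simeq_{\mathtt{v}} v_2 : T_1; \theta; \delta$, i.e.\ $(v'_1,v_2) \in r$. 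The right-hand condition is handled symmetrically by \prop:ref{fh-lr-elim-refl-cast-right} with $T := T_1$ and both remaining arguments equal to $T_2$: its hypotheses are exactly $T_1 \simeq T_1 : \ast; \theta; \delta$ and $T_1 \simeq T_2 : \ast; \theta; \delta$, and applying the resulting relatedness to $(v_1,v_2)$ yields $v'_2$ with $(v_1,v'_2) \in r$.

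For the CIU-closure condition, I would take any $v$ with $\emptyset \vdash v =_{\mathsf{ciu}} v_1 : \theta_1(\delta_1(T_1))$ and show $(v,v_2) \in r$, i.e.\ $v \simeq_{\mathtt{v}} v_2 : T_1; \theta; \delta$. This is a direct application of Equivalence-Respecting (\prop:ref{fh-lr-comp-equiv-res}): from $v_1 \simeq_{\mathtt{v}} v_2 : T_1; \theta; \delta$ I obtain $v_1 \simeq_{\mathtt{e}} v_2 : T_1; \theta; \delta$ (a value is term-related to its value-relation partner, since each already evaluates to itself), and combining this with the CIU-hypothesis gives $v \simeq_{\mathtt{e}} v_2 : T_1; \theta; \delta$; because $v$ and $v_2$ are values, this is exactly the desired value relation.

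I expect this lemma itself to be essentially bookkeeping, with the genuine mathematical content deferred to its three supporting lemmas. The subtlest local point will be the function-application step: one must justify that the reflexive cast terminates at a value rather than diverging or blaming, which is read off from the fact that the identity-function side visibly reduces to a value and the two sides are term-related; and one must correctly manage the $\delta$-extension forced by the function-type clause by appealing to weakening, precisely because the relevant codomain type is non-dependent. No symmetry or transitivity of the type relation is needed, since in both reflexive instantiations the hypotheses reduce to the two facts already in hand.
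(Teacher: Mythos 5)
Your proposal is correct and follows essentially the same route as the paper's proof, which likewise discharges the two reflexive-cast closure conditions via Lemmas \ref{lemma:fh-lr-elim-refl-cast} and \ref{lemma:fh-lr-elim-refl-cast-right} (instantiated reflexively, so the hypotheses collapse to the two given type-relation facts) and the CIU-closure condition via the equivalence-respecting property. The extra bookkeeping you spell out — the subset requirement, the use of weakening to discard the inert $\delta$-extension at the non-dependent arrow type, and reading termination of the cast off the identity side — is exactly what the paper leaves implicit.
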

\begin{prop}[name=Type Compositionality]{fh-lr-typ-comp}
 Suppose that
 $ \Gamma  \ottsym{,}  \alpha  \ottsym{,}  \Gamma' \vdash \ottnt{T}  \mathrel{ \simeq }  \ottnt{T}  : \ast $ and $ \ottnt{T_{{\mathrm{1}}}}  \simeq  \ottnt{T_{{\mathrm{1}}}}   \ottsym{:}   \ast ;  \theta ;  \delta $.
 Also, assume that $\Gamma  \ottsym{,}  \alpha  \ottsym{,}  \Gamma'$ is self-related.
 Let
 $\ottnt{r} = \{ \ottsym{(}  \ottnt{v_{{\mathrm{1}}}}  \ottsym{,}  \ottnt{v_{{\mathrm{2}}}}  \ottsym{)} \mid  \ottnt{v_{{\mathrm{1}}}}  \simeq_{\mathtt{v} }  \ottnt{v_{{\mathrm{2}}}}   \ottsym{:}   \ottnt{T_{{\mathrm{1}}}} ;  \theta ;  \delta  \}$.
 If
 $\Gamma  \ottsym{,}  \alpha  \ottsym{,}  \Gamma'  \vdash  \theta \,  \{  \,  \alpha  \mapsto  \ottnt{r} ,  \theta_{{\mathrm{1}}}  (   \delta_{{\mathrm{1}}}  (  \ottnt{T_{{\mathrm{1}}}}  )   )  , \ottnt{T_{{\mathrm{2}}}}  \,  \}   \ottsym{;}  \delta$,
 then
 $ \ottnt{e_{{\mathrm{1}}}}  \simeq_{\mathtt{e} }  \ottnt{e_{{\mathrm{2}}}}   \ottsym{:}   \ottnt{T} ;  \theta \,  \{  \,  \alpha  \mapsto  \ottnt{r} ,  \theta_{{\mathrm{1}}}  (   \delta_{{\mathrm{1}}}  (  \ottnt{T_{{\mathrm{1}}}}  )   )  , \ottnt{T_{{\mathrm{2}}}}  \,  \}  ;  \delta $
 iff $ \ottnt{e_{{\mathrm{1}}}}  \simeq_{\mathtt{e} }  \ottnt{e_{{\mathrm{2}}}}   \ottsym{:}   \ottnt{T} \, [  \ottnt{T_{{\mathrm{1}}}}  \ottsym{/}  \alpha  ] ;  \theta ;  \delta $.

 \proof

 By induction on $\ottnt{T}$.
 If $\ottnt{e_{{\mathrm{1}}}}$ and $\ottnt{e_{{\mathrm{2}}}}$ raise blame, the conclusion follows straightforwardly.
 Otherwise,
 $\ottnt{e_{{\mathrm{1}}}}  \longrightarrow^{\ast}  \ottnt{v_{{\mathrm{1}}}}$ and $\ottnt{e_{{\mathrm{2}}}}  \longrightarrow^{\ast}  \ottnt{v_{{\mathrm{2}}}}$ for some $\ottnt{v_{{\mathrm{1}}}}$ and $\ottnt{v_{{\mathrm{2}}}}$, and
 it suffices to show that
 \[
   \ottnt{v_{{\mathrm{1}}}}  \simeq_{\mathtt{v} }  \ottnt{v_{{\mathrm{2}}}}   \ottsym{:}   \ottnt{T} ;  \theta' ;  \delta  \text{ iff }
   \ottnt{v_{{\mathrm{1}}}}  \simeq_{\mathtt{v} }  \ottnt{v_{{\mathrm{2}}}}   \ottsym{:}   \ottnt{T} \, [  \ottnt{T_{{\mathrm{1}}}}  \ottsym{/}  \alpha  ] ;  \theta ;  \delta 
 \]
 where $\theta'  \ottsym{=}  \theta \,  \{  \,  \alpha  \mapsto  \ottnt{r} ,  \theta_{{\mathrm{1}}}  (   \delta_{{\mathrm{1}}}  (  \ottnt{T_{{\mathrm{1}}}}  )   )  , \ottnt{T_{{\mathrm{2}}}}  \,  \} $.
 By case analysis on $\ottnt{T}$.
 \begin{itemize}
  \case $\ottnt{T}  \ottsym{=}  \beta$:
   Suppose that $ \ottnt{v_{{\mathrm{1}}}}  \simeq_{\mathtt{v} }  \ottnt{v_{{\mathrm{2}}}}   \ottsym{:}   \beta ;  \theta' ;  \delta $.
   We show that
   $ \ottnt{v_{{\mathrm{1}}}}  \simeq_{\mathtt{v} }  \ottnt{v_{{\mathrm{2}}}}   \ottsym{:}   \beta \, [  \ottnt{T_{{\mathrm{1}}}}  \ottsym{/}  \alpha  ] ;  \theta ;  \delta $.
   If $\alpha  \ottsym{=}  \beta$, then $\ottsym{(}  \ottnt{v_{{\mathrm{1}}}}  \ottsym{,}  \ottnt{v_{{\mathrm{2}}}}  \ottsym{)} \, \in \, \ottnt{r}$, that is,
   $ \ottnt{v_{{\mathrm{1}}}}  \simeq_{\mathtt{v} }  \ottnt{v_{{\mathrm{2}}}}   \ottsym{:}   \ottnt{T_{{\mathrm{1}}}} ;  \theta ;  \delta $.
   Since $\beta \, [  \ottnt{T_{{\mathrm{1}}}}  \ottsym{/}  \alpha  ]  \ottsym{=}  \ottnt{T_{{\mathrm{1}}}}$, we finish.
   Otherwise, if $\alpha  \mathrel{\neq}  \beta$, then obvious since $\beta \, [  \ottnt{T_{{\mathrm{1}}}}  \ottsym{/}  \alpha  ]  \ottsym{=}  \beta$ and
   $\theta'$ is an extension of $\theta$ with $\alpha$.

   Conversely, we suppose that $ \ottnt{v_{{\mathrm{1}}}}  \simeq_{\mathtt{v} }  \ottnt{v_{{\mathrm{2}}}}   \ottsym{:}   \beta \, [  \ottnt{T_{{\mathrm{1}}}}  \ottsym{/}  \alpha  ] ;  \theta ;  \delta $.
   We show that $ \ottnt{v_{{\mathrm{1}}}}  \simeq_{\mathtt{v} }  \ottnt{v_{{\mathrm{2}}}}   \ottsym{:}   \beta ;  \theta' ;  \delta $.
   If $\beta  \ottsym{=}  \alpha$, we have $ \ottnt{v_{{\mathrm{1}}}}  \simeq_{\mathtt{v} }  \ottnt{v_{{\mathrm{2}}}}   \ottsym{:}   \ottnt{T_{{\mathrm{1}}}} ;  \theta ;  \delta $, so
   $\ottsym{(}  \ottnt{v_{{\mathrm{1}}}}  \ottsym{,}  \ottnt{v_{{\mathrm{2}}}}  \ottsym{)} \, \in \, \ottnt{r}$ and $ \ottnt{v_{{\mathrm{1}}}}  \simeq_{\mathtt{v} }  \ottnt{v_{{\mathrm{2}}}}   \ottsym{:}   \alpha ;  \theta' ;  \delta $.
   Otherwise, if $\beta  \mathrel{\neq}  \alpha$, then obvious.

  \case $\ottnt{T}  \ottsym{=}  \ottnt{B}$: Obvious.
  \case $\ottnt{T}  \ottsym{=}   \mathit{x} \mathord{:} \ottnt{T'_{{\mathrm{1}}}} \rightarrow \ottnt{T'_{{\mathrm{2}}}} $: By the IHs.
  \case $\ottnt{T}  \ottsym{=}   \forall   \beta  .  \ottnt{T'} $: By the IH.
  \case $ \{  \mathit{x}  \mathord{:}  \ottnt{T'}   \mathop{\mid}   \ottnt{e'}  \} $:
   Without loss of generality, we can suppose that
   $\mathit{x} \, \notin \,  \mathit{dom}  (  \delta  ) $.
   We show:
   \[\begin{array}{lcl}
     \ottnt{v_{{\mathrm{1}}}}  \simeq_{\mathtt{v} }  \ottnt{v_{{\mathrm{2}}}}   \ottsym{:}   \ottnt{T'} ;  \theta' ;  \delta  &&
      \ottnt{v_{{\mathrm{1}}}}  \simeq_{\mathtt{v} }  \ottnt{v_{{\mathrm{2}}}}   \ottsym{:}   \ottnt{T'} \, [  \ottnt{T_{{\mathrm{1}}}}  \ottsym{/}  \alpha  ] ;  \theta ;  \delta  \\
     \theta_{{\mathrm{1}}}  (   \delta_{{\mathrm{1}}}  (  \ottnt{e'} \, [  \ottnt{T_{{\mathrm{1}}}}  \ottsym{/}  \alpha  ] \, [  \ottnt{v_{{\mathrm{1}}}}  \ottsym{/}  \mathit{x}  ]  )   )   \longrightarrow^{\ast}   \mathsf{true}  & \text{iff} &
      \theta_{{\mathrm{1}}}  (   \delta_{{\mathrm{1}}}  (  \ottnt{e'} \, [  \ottnt{T_{{\mathrm{1}}}}  \ottsym{/}  \alpha  ] \, [  \ottnt{v_{{\mathrm{1}}}}  \ottsym{/}  \mathit{x}  ]  )   )   \longrightarrow^{\ast}   \mathsf{true}  \\
     \theta_{{\mathrm{2}}}  (   \delta_{{\mathrm{2}}}  (  \ottnt{e'} \, [  \ottnt{T_{{\mathrm{2}}}}  \ottsym{/}  \alpha  ] \, [  \ottnt{v_{{\mathrm{2}}}}  \ottsym{/}  \mathit{x}  ]  )   )   \longrightarrow^{\ast}   \mathsf{true}  &&
      \theta_{{\mathrm{2}}}  (   \delta_{{\mathrm{2}}}  (  \ottnt{e'} \, [  \ottnt{T_{{\mathrm{1}}}}  \ottsym{/}  \alpha  ] \, [  \ottnt{v_{{\mathrm{2}}}}  \ottsym{/}  \mathit{x}  ]  )   )   \longrightarrow^{\ast}   \mathsf{true} 
     \end{array}\]
   We show only the left-to-right direction; the other is shown similarly.
   Since $ \Gamma  \ottsym{,}  \alpha  \ottsym{,}  \Gamma' \vdash  \{  \mathit{x}  \mathord{:}  \ottnt{T'}   \mathop{\mid}   \ottnt{e'}  \}   \mathrel{ \simeq }   \{  \mathit{x}  \mathord{:}  \ottnt{T'}   \mathop{\mid}   \ottnt{e'}  \}   : \ast $,
   it is easy to show that $ \Gamma  \ottsym{,}  \alpha  \ottsym{,}  \Gamma' \vdash \ottnt{T'}  \mathrel{ \simeq }  \ottnt{T'}  : \ast $.
   Since $ \ottnt{v_{{\mathrm{1}}}}  \simeq_{\mathtt{v} }  \ottnt{v_{{\mathrm{2}}}}   \ottsym{:}   \ottnt{T'} ;  \theta' ;  \delta $, we have
   \[
     \ottnt{v_{{\mathrm{1}}}}  \simeq_{\mathtt{v} }  \ottnt{v_{{\mathrm{2}}}}   \ottsym{:}   \ottnt{T'} \, [  \ottnt{T_{{\mathrm{1}}}}  \ottsym{/}  \alpha  ] ;  \theta ;  \delta 
   \]
   by the IH.
   We have the second case by the assumption of the left-to-right direction.
   The remaining case to be shown is:
   \[
     \theta_{{\mathrm{2}}}  (   \delta_{{\mathrm{2}}}  (  \ottnt{e'} \, [  \ottnt{T_{{\mathrm{1}}}}  \ottsym{/}  \alpha  ] \, [  \ottnt{v_{{\mathrm{2}}}}  \ottsym{/}  \mathit{x}  ]  )   )   \longrightarrow^{\ast}   \mathsf{true} 
   \]
   Since $\Gamma  \ottsym{,}  \alpha  \ottsym{,}  \Gamma'  \vdash  \theta'  \ottsym{;}  \delta$ by the assumption of this lemma,
   we have
   \begin{equation}
     \Gamma  \ottsym{,}  \alpha  \ottsym{,}  \Gamma'  ,  \mathit{x}  \mathord{:}  \ottnt{T'}   \vdash  \theta'  \ottsym{;}   \delta    [  \,  (  \ottnt{v_{{\mathrm{1}}}}  ,  \ottnt{v_{{\mathrm{2}}}}  ) /  \mathit{x}  \,  ]  
    \label{eqn:fh-lr-typ-comp-one}
   \end{equation}
   by the weakening (\prop:ref{fh-lr-val-ws}).
   Since $ \ottnt{T_{{\mathrm{1}}}}  \simeq  \ottnt{T_{{\mathrm{1}}}}   \ottsym{:}   \ast ;  \theta ;  \delta $,
   we have
   \begin{equation}
    \langle  \ottnt{r}  \ottsym{,}   \theta_{{\mathrm{1}}}  (   \delta_{{\mathrm{1}}}  (  \ottnt{T_{{\mathrm{1}}}}  )   )   \ottsym{,}   \theta_{{\mathrm{2}}}  (   \delta_{{\mathrm{2}}}  (  \ottnt{T_{{\mathrm{1}}}}  )   )   \rangle
     \label{eqn:fh-lr-typ-comp-two}
   \end{equation}
   by \prop:ref{fh-lr-well-formed-interpret}.
   {\iffull
   Thus, by applying \prop:ref{fh-lr-typ-exchange-wf} to
   (\ref{eqn:fh-lr-typ-comp-one}) and (\ref{eqn:fh-lr-typ-comp-two}),
   we obtain
   \[
     \Gamma  \ottsym{,}  \alpha  \ottsym{,}  \Gamma'  ,  \mathit{x}  \mathord{:}  \ottnt{T'}   \vdash  \theta \,  \{  \,  \alpha  \mapsto  \ottnt{r} ,  \theta_{{\mathrm{1}}}  (   \delta_{{\mathrm{1}}}  (  \ottnt{T_{{\mathrm{1}}}}  )   )  ,  \theta_{{\mathrm{2}}}  (   \delta_{{\mathrm{2}}}  (  \ottnt{T_{{\mathrm{1}}}}  )   )   \,  \}   \ottsym{;}   \delta    [  \,  (  \ottnt{v_{{\mathrm{1}}}}  ,  \ottnt{v_{{\mathrm{2}}}}  ) /  \mathit{x}  \,  ]  .
   \]
   \else % \iffull
   Then, we can show that
   \[
     \Gamma  \ottsym{,}  \alpha  \ottsym{,}  \Gamma'  ,  \mathit{x}  \mathord{:}  \ottnt{T'}   \vdash  \theta \,  \{  \,  \alpha  \mapsto  \ottnt{r} ,  \theta_{{\mathrm{1}}}  (   \delta_{{\mathrm{1}}}  (  \ottnt{T_{{\mathrm{1}}}}  )   )  ,  \theta_{{\mathrm{2}}}  (   \delta_{{\mathrm{2}}}  (  \ottnt{T_{{\mathrm{1}}}}  )   )   \,  \}   \ottsym{;}   \delta    [  \,  (  \ottnt{v_{{\mathrm{1}}}}  ,  \ottnt{v_{{\mathrm{2}}}}  ) /  \mathit{x}  \,  ]  
   \]
   by induction on $\Gamma'$ of (\ref{eqn:fh-lr-typ-comp-one}) with (\ref{eqn:fh-lr-typ-comp-two})
   and \prop:ref{fh-lr-untyped-exchange-trel}.
   \fi} % \iffull
   Since $ \Gamma  \ottsym{,}  \alpha  \ottsym{,}  \Gamma'  ,  \mathit{x}  \mathord{:}  \ottnt{T'}   \vdash  \ottnt{e'} \,  \mathrel{ \simeq }  \, \ottnt{e'}  \ottsym{:}   \mathsf{Bool} $,
   we have
   \[
      \theta_{{\mathrm{1}}}  (   \delta_{{\mathrm{1}}}  (  \ottnt{e'} \, [  \ottnt{T_{{\mathrm{1}}}}  \ottsym{/}  \alpha  ] \, [  \ottnt{v_{{\mathrm{1}}}}  \ottsym{/}  \mathit{x}  ]  )   )   \simeq_{\mathtt{e} }   \theta_{{\mathrm{2}}}  (   \delta_{{\mathrm{2}}}  (  \ottnt{e'} \, [  \ottnt{T_{{\mathrm{1}}}}  \ottsym{/}  \alpha  ] \, [  \ottnt{v_{{\mathrm{2}}}}  \ottsym{/}  \mathit{x}  ]  )   )    \ottsym{:}    \mathsf{Bool}  ;  \theta ;  \delta .
   \]
   Since the term on the left-hand side evaluates to $ \mathsf{true} $,
   so does the one on the right-hand side, which we want to show.
   \qedhere
 \end{itemize}
\end{prop}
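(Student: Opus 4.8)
The plan is to prove the biconditional by induction on the structure of the index type $T$. In every case I would first dispose of the blame case (if both $e_1$ and $e_2$ reduce to $\mathord{\Uparrow} \ell$ the claim is immediate), so that I may assume $e_1 \longrightarrow^{\ast} v_1$ and $e_2 \longrightarrow^{\ast} v_2$ and reduce the goal to showing, writing $\theta' = \theta\{\alpha \mapsto (r, \theta_1(\delta_1(T_1)), T_2)\}$, that $v_1 \simeq_{\mathtt{v}} v_2 : T; \theta'; \delta$ holds iff $v_1 \simeq_{\mathtt{v}} v_2 : T[T_1/\alpha]; \theta; \delta$ does. The heart of the statement is the variable case $T = \beta$: when $\beta = \alpha$, the relation component of $\theta'(\alpha)$ is by construction $r$, i.e.\ exactly $\{(v_1,v_2) \mid v_1 \simeq_{\mathtt{v}} v_2 : T_1; \theta; \delta\}$, and since $\beta[T_1/\alpha] = T_1$ the two sides are definitionally the same; when $\beta \neq \alpha$ both sides only consult $\theta$ and agree.

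The cases $T = B$, $T = x{:}T_1'\to T_2'$, and $T = \forall\beta.T'$ are routine: the first is immediate, and the latter two follow by unfolding the value relation and appealing to the induction hypotheses, after observing that extending $\delta$ (resp.\ $\theta$) in those clauses is orthogonal to the $\alpha$-binding. The one genuinely delicate case is the refinement type $T = \{x{:}T' \mid e'\}$. After using the induction hypothesis to match the underlying-value requirement at $T'$, the goal reduces to reconciling the two refinement checks; crucially, once $\theta'$ is spelled out as explicit substitutions, the first check coincides literally on both sides (namely $\theta_1(\delta_1(e'[T_1/\alpha][v_1/x]))$), so the only real obligation is
\[
 \theta_2(\delta_2(e'[T_2/\alpha][v_2/x])) \longrightarrow^{\ast} \mathsf{true}
 \quad\text{iff}\quad
 \theta_2(\delta_2(e'[T_1/\alpha][v_2/x])) \longrightarrow^{\ast} \mathsf{true}.
\]
These two Boolean terms differ because on the $\theta'$ side $\alpha$ is interpreted by the closed type $T_2$ whereas on the $T[T_1/\alpha]$ side $\alpha$ has been replaced syntactically by $T_1$, and these are distinct types, so the terms are not syntactically equal.

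To bridge this gap I would exploit the self-relatedness assumptions. From $\Gamma, \alpha, \Gamma' \vdash \{x{:}T'\mid e'\} \mathrel{\simeq} \{x{:}T'\mid e'\} : \ast$ I extract $\Gamma, \alpha, \Gamma', x{:}T' \vdash e' \mathrel{\simeq} e' : \mathsf{Bool}$. I then build a type interpretation and value assignment respecting this extended context: starting from the hypothesis $\Gamma, \alpha, \Gamma' \vdash \theta'; \delta$, I weaken by $x{:}T'$ with the related values $(v_1,v_2)$ using \prop:ref{fh-lr-val-ws}. The extra ingredient is that $r$ together with the $T_1$-derived type components forms a legitimate interpretation: from $T_1 \simeq T_1 : \ast; \theta; \delta$ and \prop:ref{fh-lr-well-formed-interpret} I obtain $\langle r, \theta_1(\delta_1(T_1)), \theta_2(\delta_2(T_1)) \rangle$, and since the left side of the term relation is insensitive to the second type component of an interpretation, \prop:ref{fh-lr-untyped-exchange-trel} lets me replace $T_2$ by $\theta_2(\delta_2(T_1))$ in $\theta'$ while preserving respectfulness. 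Instantiating the self-relatedness of $e'$ with this interpretation yields $\theta_1(\delta_1(e'[T_1/\alpha][v_1/x])) \simeq_{\mathtt{e}} \theta_2(\delta_2(e'[T_1/\alpha][v_2/x])) : \mathsf{Bool}$, and since terms related at $\mathsf{Bool}$ evaluate to the same constant, the already-established truth of the left instance transfers to the right, closing the forward direction; the converse is symmetric, using instead the interpretation with $\alpha$ mapped to $T_2$.

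I expect the refinement case to be the main obstacle, and within it the bookkeeping of assembling the correct respecting environment --- in particular matching the semantic interpretation $\theta'$ against the syntactic substitution $[T_1/\alpha]$ and discharging the well-formedness of $r$ as an interpretation --- rather than any deep argument. All the remaining cases are definitional unfolding plus a direct appeal to an induction hypothesis.
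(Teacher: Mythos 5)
Your proposal is correct and follows essentially the same route as the paper's proof: the same induction on $\ottnt{T}$, the same treatment of the variable case via the definition of $\ottnt{r}$, and in the refinement case the same key move of building the respecting environment with $\alpha$ interpreted by $(\ottnt{r},  \theta_{{\mathrm{1}}}  (   \delta_{{\mathrm{1}}}  (  \ottnt{T_{{\mathrm{1}}}}  )   ) ,  \theta_{{\mathrm{2}}}  (   \delta_{{\mathrm{2}}}  (  \ottnt{T_{{\mathrm{1}}}}  )   ) )$ via \prop:ref{fh-lr-well-formed-interpret} and \prop:ref{fh-lr-untyped-exchange-trel}, then transferring truth of the refinement check through the self-relatedness of $\ottnt{e'}$ at $ \mathsf{Bool} $. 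No gaps.
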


\begin{prop}
 [name=Compatibility under Self-relatedness Assumption: Type Application]
 {fh-lr-comp-tapp-refl-assump}
 \label{lem:fh-lr-tapp-end}
 Suppose that
 $ \Gamma \vdash  \forall   \alpha  .  \ottnt{T}   \mathrel{ \simeq }   \forall   \alpha  .  \ottnt{T}   : \ast $ and
 $ \Gamma \vdash \ottnt{T_{{\mathrm{1}}}}  \mathrel{ \simeq }  \ottnt{T_{{\mathrm{1}}}}  : \ast $ and
 that $\Gamma$ is self-related.
 If
 $\Gamma  \vdash  \ottnt{e_{{\mathrm{1}}}} \,  \mathrel{ \simeq }  \, \ottnt{e_{{\mathrm{2}}}}  \ottsym{:}   \forall   \alpha  .  \ottnt{T} $ and
 $ \Gamma \vdash \ottnt{T_{{\mathrm{1}}}}  \mathrel{ \simeq }  \ottnt{T_{{\mathrm{2}}}}  : \ast $,
 then
 $\Gamma  \vdash  \ottnt{e_{{\mathrm{1}}}} \, \ottnt{T_{{\mathrm{1}}}} \,  \mathrel{ \simeq }  \, \ottnt{e_{{\mathrm{2}}}} \, \ottnt{T_{{\mathrm{2}}}}  \ottsym{:}  \ottnt{T} \, [  \ottnt{T_{{\mathrm{1}}}}  \ottsym{/}  \alpha  ]$.

 \proof

 Suppose that $\Gamma  \vdash  \theta  \ottsym{;}  \delta$.
 Also, let
 $\ottnt{e'_{{\mathrm{1}}}}  \ottsym{=}   \theta_{{\mathrm{1}}}  (   \delta_{{\mathrm{1}}}  (  \ottnt{e_{{\mathrm{1}}}}  )   ) $,
 $\ottnt{e'_{{\mathrm{2}}}}  \ottsym{=}   \theta_{{\mathrm{2}}}  (   \delta_{{\mathrm{2}}}  (  \ottnt{e_{{\mathrm{2}}}}  )   ) $,
 $\ottnt{T'_{{\mathrm{1}}}}  \ottsym{=}   \theta_{{\mathrm{1}}}  (   \delta_{{\mathrm{1}}}  (  \ottnt{T_{{\mathrm{1}}}}  )   ) $ and
 $\ottnt{T'_{{\mathrm{2}}}}  \ottsym{=}   \theta_{{\mathrm{2}}}  (   \delta_{{\mathrm{2}}}  (  \ottnt{T_{{\mathrm{2}}}}  )   ) $.
 It suffices to show that
 \[
   \ottnt{e'_{{\mathrm{1}}}} \, \ottnt{T'_{{\mathrm{1}}}}  \simeq_{\mathtt{e} }  \ottnt{e'_{{\mathrm{2}}}} \, \ottnt{T'_{{\mathrm{2}}}}   \ottsym{:}   \ottnt{T} \, [  \ottnt{T_{{\mathrm{1}}}}  \ottsym{/}  \alpha  ] ;  \theta ;  \delta .
 \]
 Since $\Gamma  \vdash  \ottnt{e_{{\mathrm{1}}}} \,  \mathrel{ \simeq }  \, \ottnt{e_{{\mathrm{2}}}}  \ottsym{:}   \forall   \alpha  .  \ottnt{T} $, we have
 $ \ottnt{e'_{{\mathrm{1}}}}  \simeq_{\mathtt{e} }  \ottnt{e'_{{\mathrm{2}}}}   \ottsym{:}    \forall   \alpha  .  \ottnt{T}  ;  \theta ;  \delta $.
 If $\ottnt{e'_{{\mathrm{1}}}}$ and $\ottnt{e'_{{\mathrm{2}}}}$ raise blame, we finish.
 Otherwise,
 $\ottnt{e'_{{\mathrm{1}}}}  \longrightarrow^{\ast}  \ottnt{v_{{\mathrm{1}}}}$ and $\ottnt{e'_{{\mathrm{2}}}}  \longrightarrow^{\ast}  \ottnt{v_{{\mathrm{2}}}}$ for some $\ottnt{v_{{\mathrm{1}}}}$ and $\ottnt{v_{{\mathrm{2}}}}$, and
 it suffices to show that
 \[
   \ottnt{v_{{\mathrm{1}}}} \, \ottnt{T'_{{\mathrm{1}}}}  \simeq_{\mathtt{e} }  \ottnt{v_{{\mathrm{2}}}} \, \ottnt{T'_{{\mathrm{2}}}}   \ottsym{:}   \ottnt{T} \, [  \ottnt{T_{{\mathrm{1}}}}  \ottsym{/}  \alpha  ] ;  \theta ;  \delta .
 \]
 We also have $ \ottnt{v_{{\mathrm{1}}}}  \simeq_{\mathtt{v} }  \ottnt{v_{{\mathrm{2}}}}   \ottsym{:}    \forall   \alpha  .  \ottnt{T}  ;  \theta ;  \delta $.

 Let $\ottnt{r} = \{ \ottsym{(}  \ottnt{v'_{{\mathrm{1}}}}  \ottsym{,}  \ottnt{v'_{{\mathrm{2}}}}  \ottsym{)} \mid  \ottnt{v'_{{\mathrm{1}}}}  \simeq_{\mathtt{v} }  \ottnt{v'_{{\mathrm{2}}}}   \ottsym{:}   \ottnt{T_{{\mathrm{1}}}} ;  \theta ;  \delta  \}$.
 Since $ \ottnt{T_{{\mathrm{1}}}}  \simeq  \ottnt{T_{{\mathrm{1}}}}   \ottsym{:}   \ast ;  \theta ;  \delta $ and $ \ottnt{T_{{\mathrm{1}}}}  \simeq  \ottnt{T_{{\mathrm{2}}}}   \ottsym{:}   \ast ;  \theta ;  \delta $,
 we have $\langle  \ottnt{r}  \ottsym{,}  \ottnt{T'_{{\mathrm{1}}}}  \ottsym{,}  \ottnt{T'_{{\mathrm{2}}}}  \rangle$ by \prop:ref{fh-lr-well-formed-interpret}.
 Since $ \ottnt{v_{{\mathrm{1}}}}  \simeq_{\mathtt{v} }  \ottnt{v_{{\mathrm{2}}}}   \ottsym{:}    \forall   \alpha  .  \ottnt{T}  ;  \theta ;  \delta $,
 we have
 $ \ottnt{v_{{\mathrm{1}}}} \, \ottnt{T'_{{\mathrm{1}}}}  \simeq_{\mathtt{e} }  \ottnt{v_{{\mathrm{2}}}} \, \ottnt{T'_{{\mathrm{2}}}}   \ottsym{:}   \ottnt{T} ;  \theta \,  \{  \,  \alpha  \mapsto  \ottnt{r} , \ottnt{T'_{{\mathrm{1}}}} , \ottnt{T'_{{\mathrm{2}}}}  \,  \}  ;  \delta $.
 Since
 $ \Gamma  \ottsym{,}  \alpha \vdash \ottnt{T}  \mathrel{ \simeq }  \ottnt{T}  : \ast $ and
 $\Gamma  \ottsym{,}  \alpha  \vdash  \theta \,  \{  \,  \alpha  \mapsto  \ottnt{r} , \ottnt{T'_{{\mathrm{1}}}} , \ottnt{T'_{{\mathrm{2}}}}  \,  \}   \ottsym{;}  \delta$ by the weakening
 (\prop:ref{fh-lr-typ-ws}), and
 $ \ottnt{T_{{\mathrm{1}}}}  \simeq  \ottnt{T_{{\mathrm{1}}}}   \ottsym{:}   \ast ;  \theta ;  \delta $ and
 $\Gamma  \ottsym{,}  \alpha$ is self-related,
 we have
 $ \ottnt{v_{{\mathrm{1}}}} \, \ottnt{T'_{{\mathrm{1}}}}  \simeq_{\mathtt{e} }  \ottnt{v_{{\mathrm{2}}}} \, \ottnt{T'_{{\mathrm{2}}}}   \ottsym{:}   \ottnt{T} \, [  \ottnt{T_{{\mathrm{1}}}}  \ottsym{/}  \alpha  ] ;  \theta ;  \delta $
 by the type compositionality (\prop:ref{fh-lr-typ-comp}).
\end{prop}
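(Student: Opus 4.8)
The plan is to fix an arbitrary pair $\theta$ and $\delta$ with $\Gamma \vdash \theta \, ; \, \delta$ and reduce the goal, by definition of the logical relation, to the closed term relation
\[
\theta_{1}(\delta_{1}(e_1 \, T_1)) \simeq_{\mathtt{e}} \theta_{2}(\delta_{2}(e_2 \, T_2)) : T[T_1/\alpha]; \theta; \delta .
\]
Abbreviating $e_1' = \theta_1(\delta_1(e_1))$, $e_2' = \theta_2(\delta_2(e_2))$, $T_1' = \theta_1(\delta_1(T_1))$, and $T_2' = \theta_2(\delta_2(T_2))$, the hypothesis $\Gamma \vdash e_1 \simeq e_2 : \forall \alpha. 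T$ gives $e_1' \simeq_{\mathtt{e}} e_2' : \forall \alpha. T; \theta; \delta$. If both sides raise blame we finish immediately; otherwise $e_1' \longrightarrow^{\ast} v_1$ and $e_2' \longrightarrow^{\ast} v_2$ with $v_1 \simeq_{\mathtt{v}} v_2 : \forall \alpha. T; \theta; \delta$, and since reduction of the function part merely drives the surrounding type-application frame, it suffices to establish $v_1 \, T_1' \simeq_{\mathtt{e}} v_2 \, T_2' : T[T_1/\alpha]; \theta; \delta$.

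The key move is to feed the value relation at $\forall \alpha. T$ the correct interpretation for $\alpha$. I would take $r = \{(v_1', v_2') \mid v_1' \simeq_{\mathtt{v}} v_2' : T_1; \theta; \delta\}$, i.e.\ the value relation indexed by the source type $T_1$. The hypotheses $\Gamma \vdash T_1 \simeq T_1 : \ast$ and $\Gamma \vdash T_1 \simeq T_2 : \ast$ yield $T_1 \simeq T_1 : \ast; \theta; \delta$ and $T_1 \simeq T_2 : \ast; \theta; \delta$, so \prop:ref{fh-lr-well-formed-interpret} (Term Relation as Interpretation) gives $\langle r, T_1', T_2'\rangle$ --- this is precisely where the two closure conditions on interpretations (closure under reflexive-cast wrappers and under semityped CIU-equivalence) get discharged. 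Unfolding $v_1 \simeq_{\mathtt{v}} v_2 : \forall \alpha. T; \theta; \delta$ at this interpretation then delivers $v_1 \, T_1' \simeq_{\mathtt{e}} v_2 \, T_2' : T; \theta\{\alpha \mapsto r, T_1', T_2'\}; \delta$.

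The last step converts the index from $T$ under the extended interpretation back to $T[T_1/\alpha]$ under $\theta$ by invoking the type compositionality lemma (\prop:ref{fh-lr-typ-comp}). To apply it I must discharge its side conditions: $\Gamma, \alpha \vdash T \simeq T : \ast$ and self-relatedness of $\Gamma, \alpha$ follow from $\Gamma \vdash \forall \alpha. T \simeq \forall \alpha. T : \ast$, from $\Gamma$ being self-related, and from $\Gamma \vdash T_1 \simeq T_1 : \ast$; while $\Gamma, \alpha \vdash \theta\{\alpha \mapsto r, T_1', T_2'\} \, ; \, \delta$ comes from $\Gamma \vdash \theta; \delta$ by type weakening (\prop:ref{fh-lr-typ-ws}) together with $\langle r, T_1', T_2'\rangle$. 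Applying \prop:ref{fh-lr-typ-comp} with this $r$ turns the relation indexed by $T$ into the one indexed by $T[T_1/\alpha]$, which is exactly the goal.

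The main obstacle is not the present lemma, which is a short orchestration, but ensuring the chosen $r$ really satisfies $\langle r, T_1', T_2'\rangle$ and that the compositionality lemma applies with \emph{exactly} this $r$. Both hinge on \prop:ref{fh-lr-well-formed-interpret} and \prop:ref{fh-lr-typ-comp}, whose proofs rest on the elimination-of-reflexive-casts lemmas (\prop:ref{fh-lr-elim-refl-cast} and \prop:ref{fh-lr-elim-refl-cast-right}) and the equivalence-respecting property (\prop:ref{fh-lr-comp-equiv-res}). In particular, checking closure of the $T_1$-indexed value relation under wrappers generated by polymorphic casts $\langle \alpha \Rightarrow \alpha \rangle^{\ell}$ and under CIU-equivalence is the genuinely delicate part, and it is there --- rather than in this lemma --- that cotermination (\prop:ref{fh-coterm-true}) and the $\top\top$-style closure structure do the real work.
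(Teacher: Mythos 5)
Your proposal matches the paper's proof essentially step for step: the same choice of $\ottnt{r}$ as the value relation indexed by $\ottnt{T_{{\mathrm{1}}}}$, the same appeal to \prop:ref{fh-lr-well-formed-interpret} to discharge $\langle  \ottnt{r}  \ottsym{,}  \ottnt{T'_{{\mathrm{1}}}}  \ottsym{,}  \ottnt{T'_{{\mathrm{2}}}}  \rangle$, and the same final conversion of the index type via the type compositionality lemma \prop:ref{fh-lr-typ-comp} with weakening \prop:ref{fh-lr-typ-ws}. The argument is correct and no further comment is needed.
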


\paragraph{\bf Fundamental property: other constructors}
\label{sec:proving-other-case}

We show remaining cases of the fundamental property.
\begin{prop}[name=Compatibility: Variable]{fh-lr-comp-var}
 \label{lem:fh-lr-other-start}
 If $ \mathord{ \vdash } ~  \Gamma $ and $ \mathit{x}  \mathord{:}  \ottnt{T}   \in   \Gamma $, then $\Gamma  \vdash  \mathit{x} \,  \mathrel{ \simeq }  \, \mathit{x}  \ottsym{:}  \ottnt{T}$.

 \proof

 Straightforward by definition.
\end{prop}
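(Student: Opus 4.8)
The plan is to unfold the definition of the logical relation $\Gamma \vdash \ottnt{e_{{\mathrm{1}}}} \mathrel{\simeq} \ottnt{e_{{\mathrm{2}}}} : \ottnt{T}$ with $\ottnt{e_{{\mathrm{1}}}} = \ottnt{e_{{\mathrm{2}}}} = \mathit{x}$ and discharge each of its three clauses in turn. Recall that $\Gamma \vdash \mathit{x} \mathrel{\simeq} \mathit{x} : \ottnt{T}$ requires (1) $\Gamma \vdash \mathit{x} : \ottnt{T}$, (2) $\mathit{FV}(\mathit{x}) \cup \mathit{FTV}(\mathit{x}) \subseteq \mathit{dom}(\Gamma)$, and (3) $\theta_{{\mathrm{1}}}(\delta_{{\mathrm{1}}}(\mathit{x})) \simeq_{\mathtt{e}} \theta_{{\mathrm{2}}}(\delta_{{\mathrm{2}}}(\mathit{x})) : \ottnt{T}; \theta; \delta$ for every $\theta$ and $\delta$ with $\Gamma \vdash \theta; \delta$. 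The first two clauses are immediate: clause (1) follows from the typing rule \T{Var} using the hypotheses $\mathord{\vdash}~\Gamma$ and $\mathit{x}\mathord{:}\ottnt{T} \in \Gamma$, and clause (2) holds because $\mathit{FV}(\mathit{x}) \cup \mathit{FTV}(\mathit{x}) = \{\mathit{x}\}$ and $\mathit{x} \in \mathit{dom}(\Gamma)$ since $\mathit{x}\mathord{:}\ottnt{T} \in \Gamma$.

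The only clause with any content is (3). First I would fix arbitrary $\theta$ and $\delta$ with $\Gamma \vdash \theta; \delta$. By clause (2) of the definition of $\Gamma \vdash \theta; \delta$, applied to the binding $\mathit{x}\mathord{:}\ottnt{T} \in \Gamma$, we obtain directly $\delta_{{\mathrm{1}}}(\mathit{x}) \simeq_{\mathtt{v}} \delta_{{\mathrm{2}}}(\mathit{x}) : \ottnt{T}; \theta; \delta$. Two small bookkeeping observations then finish the argument. Since $\delta$ is a value assignment, $\delta_{{\mathrm{1}}}(\mathit{x})$ and $\delta_{{\mathrm{2}}}(\mathit{x})$ are \emph{closed} values, so the type substitutions $\theta_{{\mathrm{1}}}$ and $\theta_{{\mathrm{2}}}$ act as the identity on them; that is, $\theta_{{\mathrm{1}}}(\delta_{{\mathrm{1}}}(\mathit{x})) = \delta_{{\mathrm{1}}}(\mathit{x})$ and $\theta_{{\mathrm{2}}}(\delta_{{\mathrm{2}}}(\mathit{x})) = \delta_{{\mathrm{2}}}(\mathit{x})$. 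Second, a value relation implies the corresponding term relation at the same pair of values, because a value evaluates to itself in zero steps; hence from $\delta_{{\mathrm{1}}}(\mathit{x}) \simeq_{\mathtt{v}} \delta_{{\mathrm{2}}}(\mathit{x}) : \ottnt{T}; \theta; \delta$ we get $\delta_{{\mathrm{1}}}(\mathit{x}) \simeq_{\mathtt{e}} \delta_{{\mathrm{2}}}(\mathit{x}) : \ottnt{T}; \theta; \delta$, which is exactly the required clause (3) after rewriting by the identity equations above.

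I do not anticipate a genuine obstacle here; the lemma is as routine as the stated ``Straightforward by definition'' suggests. The only points requiring a word of justification are the two facts invoked above—that type substitutions leave closed values untouched, and that relatedness at a value in the value relation yields relatedness in the term relation—both of which are immediate consequences of the definitions in \fig{logirel} and of the closedness conditions imposed on value assignments in Definition~\ref{def:act-type}.
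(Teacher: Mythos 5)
Your proof is correct and is exactly the expansion of the paper's one-line justification ("Straightforward by definition"): unfold the three clauses of the logical relation, discharge (1) by \T{Var} and (2) trivially, and obtain (3) from clause (2) of $\Gamma  \vdash  \theta  \ottsym{;}  \delta$ together with the observations that closed values are fixed by $\theta_{\ottmv{i}}$ and that the value relation implies the term relation. No further comment is needed.
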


\begin{prop}[name=Compatibility: Constant]{fh-lr-comp-const}
 If $ \mathord{ \vdash } ~  \Gamma $, then $\Gamma  \vdash  \ottnt{k} \,  \mathrel{ \simeq }  \, \ottnt{k}  \ottsym{:}   \mathsf{ty}  (  \ottnt{k}  ) $.

 \proof

 Let $\Gamma  \vdash  \theta  \ottsym{;}  \delta$.
 It suffices to show that
 $ \ottnt{k}  \simeq_{\mathtt{v} }  \ottnt{k}   \ottsym{:}    \mathsf{ty}  (  \ottnt{k}  )  ;  \theta ;  \delta $.
 By the assumptions that $ \mathit{unref}  (   \mathsf{ty}  (  \ottnt{k}  )   )   \ottsym{=}  \ottnt{B}$ for some $\ottnt{B}$ and
 that $\ottnt{k} \, \in \,  {\cal K}_{ \ottnt{B} } $,
 we have $ \ottnt{k}  \simeq_{\mathtt{v} }  \ottnt{k}   \ottsym{:}    \mathit{unref}  (   \mathsf{ty}  (  \ottnt{k}  )   )  ;  \theta ;  \delta $.
 Since constants satisfy contracts on their types and
 $ \mathsf{ty}  (  \ottnt{k}  ) $ is closed,
 $ \ottnt{k}  \simeq_{\mathtt{v} }  \ottnt{k}   \ottsym{:}    \mathsf{ty}  (  \ottnt{k}  )  ;  \theta ;  \delta $ by definition.
\end{prop}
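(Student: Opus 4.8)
The plan is to discharge directly the three defining clauses of $\Gamma \vdash \ottnt{k} \mathrel{\simeq} \ottnt{k} : \mathsf{ty}(\ottnt{k})$. Clause~(1), $\Gamma \vdash \ottnt{k} : \mathsf{ty}(\ottnt{k})$, holds by \T{Const} (which needs $\mathord{\vdash} \Gamma$), and clause~(2), $\mathit{FV}(\ottnt{k}) \cup \mathit{FTV}(\ottnt{k}) \subseteq \mathit{dom}(\Gamma)$, is immediate because a constant is closed. The content is clause~(3): fixing any $\theta$ and $\delta$ with $\Gamma \vdash \theta ; \delta$, I must show $\theta_{1}(\delta_{1}(\ottnt{k})) \simeq_{\mathtt{e}} \theta_{2}(\delta_{2}(\ottnt{k})) : \mathsf{ty}(\ottnt{k}) ; \theta ; \delta$. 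Since $\ottnt{k}$ is a closed constant, $\theta_{i}(\delta_{i}(\ottnt{k})) = \ottnt{k}$, and because $\ottnt{k}$ is already a value the goal collapses to $\ottnt{k} \simeq_{\mathtt{v}} \ottnt{k} : \mathsf{ty}(\ottnt{k}) ; \theta ; \delta$.

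I would establish this last judgment by peeling off the refinement layers of $\mathsf{ty}(\ottnt{k})$, i.e.\ by induction following the definition of $\mathit{unref}$. For the base case, the requirements on constants give $\mathit{unref}(\mathsf{ty}(\ottnt{k})) = \ottnt{B}$ for some base type $\ottnt{B}$ with $\ottnt{k} \in {\cal K}_{\ottnt{B}}$, so the base-type clause of the value relation (which demands only $\ottnt{k} = \ottnt{k}$ and $\ottnt{k} \in {\cal K}_{\ottnt{B}}$) yields $\ottnt{k} \simeq_{\mathtt{v}} \ottnt{k} : \mathit{unref}(\mathsf{ty}(\ottnt{k})) ; \theta ; \delta$. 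For each surrounding refinement layer $\{ \mathit{x} \mathord{:} \ottnt{T} \mathop{\mid} \ottnt{e} \}$, the value relation additionally requires $\theta_{1}(\delta_{1}(\ottnt{e}[\ottnt{k}/\mathit{x}])) \longrightarrow^{\ast} \mathsf{true}$ and $\theta_{2}(\delta_{2}(\ottnt{e}[\ottnt{k}/\mathit{x}])) \longrightarrow^{\ast} \mathsf{true}$; because $\mathsf{ty}(\ottnt{k})$ is closed, $\theta_{i}$ and $\delta_{i}$ act as the identity here, so both obligations reduce to showing $\ottnt{e}[\ottnt{k}/\mathit{x}] \longrightarrow^{\ast} \mathsf{true}$.

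That remaining fact is exactly that $\ottnt{k}$ satisfies each refinement of its type. It is furnished either directly by the requirement on constants that $\langle \ottnt{B} \Rightarrow \mathsf{ty}(\ottnt{k}) \rangle^{\ell} \, \ottnt{k} \longrightarrow^{\ast} \ottnt{k}$, or, more cleanly, by the value inversion (\prop:ref{fh-val-satis-c}) applied to $\emptyset \vdash \ottnt{k} : \mathsf{ty}(\ottnt{k})$, which gives $\ottnt{k} \in [\hspace{-.14em}[ \mathsf{ty}(\ottnt{k}) ]\hspace{-.14em}]$ and hence $\ottnt{e}[\ottnt{k}/\mathit{x}] \longrightarrow^{\ast} \mathsf{true}$ for every refinement $\ottnt{e}$. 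Iterating over all layers yields $\ottnt{k} \simeq_{\mathtt{v}} \ottnt{k} : \mathsf{ty}(\ottnt{k}) ; \theta ; \delta$. I do not expect any real obstacle: the only point needing care is the bookkeeping that translates the cast-style satisfaction requirement into the per-refinement evaluations demanded by the value relation, and this is precisely what value inversion packages. In effect the lemma is a degenerate instance of the refinement reasoning used pervasively in the soundness proof, specialized to two identical closed constants on both sides of the relation.
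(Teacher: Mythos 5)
Your proof is correct and follows essentially the same route as the paper: reduce to $ \ottnt{k}  \simeq_{\mathtt{v} }  \ottnt{k}   \ottsym{:}    \mathsf{ty}  (  \ottnt{k}  )  ;  \theta ;  \delta $, use $ \mathit{unref}  (   \mathsf{ty}  (  \ottnt{k}  )   )   \ottsym{=}  \ottnt{B}$ and $\ottnt{k} \, \in \,  {\cal K}_{ \ottnt{B} } $ for the underlying base type, and discharge the refinement layers via the closedness of $ \mathsf{ty}  (  \ottnt{k}  ) $ together with the requirement that constants satisfy the refinements in their types (your appeal to value inversion is just a packaged form of that same requirement).
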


\begin{prop}[name=Compatibility under Self-relatedness Assumption: Op]
 {fh-lr-comp-op-refl-assump}

 Suppose that $\Gamma$ is self-related and that
 $ \mathsf{ty}  (  {\tt op}  )   \ottsym{=}  {}  \mathit{x_{{\mathrm{1}}}}  \ottsym{:}  \ottnt{T_{{\mathrm{1}}}}  \rightarrow \, ... \, \rightarrow  \mathit{x_{\ottmv{n}}}  \ottsym{:}  \ottnt{T_{\ottmv{n}}}  {}  \rightarrow  \ottnt{T}$.
 Moreover, assume that, for any $\ottmv{i} \, \in \,  \{  \, \ottsym{1}  ,\, ... \, ,  \mathit{n} \,  \} $,
 $\Gamma  \vdash   { \ottnt{e} }_{  \ottsym{1} \ottmv{i}  }  \,  \mathrel{ \simeq }  \,  { \ottnt{e} }_{  \ottsym{1} \ottmv{i}  }   \ottsym{:}  \ottnt{T_{\ottmv{i}}} \, [  \ottnt{e_{{\mathrm{11}}}}  \ottsym{/}  \mathit{x_{{\mathrm{1}}}}  ,\, ... \, ,   { \ottnt{e} }_{  \ottsym{1} \ottmv{i}   \ottsym{-}  \ottsym{1} }   \ottsym{/}   \mathit{x} _{ \ottmv{i}  \ottsym{-}  \ottsym{1} }   ]$ and
 $ \Gamma \vdash \ottnt{T_{\ottmv{i}}} \, [  \ottnt{e_{{\mathrm{11}}}}  \ottsym{/}  \mathit{x_{{\mathrm{1}}}}  ,\, ... \, ,   { \ottnt{e} }_{  \ottsym{1} \ottmv{i}   \ottsym{-}  \ottsym{1} }   \ottsym{/}   \mathit{x} _{ \ottmv{i}  \ottsym{-}  \ottsym{1} }   ]  \mathrel{ \simeq }  \ottnt{T_{\ottmv{i}}} \, [  \ottnt{e_{{\mathrm{11}}}}  \ottsym{/}  \mathit{x_{{\mathrm{1}}}}  ,\, ... \, ,   { \ottnt{e} }_{  \ottsym{1} \ottmv{i}   \ottsym{-}  \ottsym{1} }   \ottsym{/}   \mathit{x} _{ \ottmv{i}  \ottsym{-}  \ottsym{1} }   ]  : \ast $.
 If $\Gamma  \vdash   { \ottnt{e} }_{  \ottsym{1} \ottmv{i}  }  \,  \mathrel{ \simeq }  \,  { \ottnt{e} }_{  \ottsym{2} \ottmv{i}  }   \ottsym{:}  \ottnt{T_{\ottmv{i}}} \, [  \ottnt{e_{{\mathrm{11}}}}  \ottsym{/}  \mathit{x_{{\mathrm{1}}}}  ,\, ... \, ,   { \ottnt{e} }_{  \ottsym{1} \ottmv{i}   \ottsym{-}  \ottsym{1} }   \ottsym{/}   \mathit{x} _{ \ottmv{i}  \ottsym{-}  \ottsym{1} }   ]$ for any $\ottmv{i} \, \in \,  \{  \, \ottsym{1}  ,\, ... \, ,  \mathit{n} \,  \} $,
 then $\Gamma  \vdash  {\tt op} \, \ottsym{(}  \ottnt{e_{{\mathrm{11}}}}  ,\, ... \, ,   { \ottnt{e} }_{  \ottsym{1} \mathit{n}  }   \ottsym{)} \,  \mathrel{ \simeq }  \, {\tt op} \, \ottsym{(}  \ottnt{e_{{\mathrm{21}}}}  ,\, ... \, ,   { \ottnt{e} }_{  \ottsym{2} \mathit{n}  }   \ottsym{)}  \ottsym{:}  \ottnt{T} \, [  \ottnt{e_{{\mathrm{11}}}}  \ottsym{/}  \mathit{x_{{\mathrm{1}}}}  ,\, ... \, ,   { \ottnt{e} }_{  \ottsym{1} \mathit{n}  }   \ottsym{/}  \mathit{x_{\ottmv{n}}}  ]$.

 \proof

 Similar to the case of term application.
\end{prop}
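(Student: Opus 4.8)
The plan is to follow the structure of the term-application case (\prop:ref{fh-lr-comp-app-refl-assump}), exploiting the fact that the argument and result types of a primitive operation are refinements of base types. Fix $\theta$ and $\delta$ with $\Gamma \vdash \theta; \delta$, and write $e'_{1i} = \theta_1(\delta_1(e_{1i}))$ and $e'_{2i} = \theta_2(\delta_2(e_{2i}))$. It then suffices to show
\[ {\tt op}(e'_{11}, \ldots, e'_{1n}) \simeq_{\mathtt{e}} {\tt op}(e'_{21}, \ldots, e'_{2n}) : T[e_{11}/x_1, \ldots, e_{1n}/x_n]; \theta; \delta. \]
From the hypotheses $\Gamma \vdash e_{1i} \simeq e_{2i} : T_i[\ldots]$ we get $e'_{1i} \simeq_{\mathtt{e}} e'_{2i} : T_i[\ldots]; \theta; \delta$ for each $i$. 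First I would dispatch the blame cases: taking the least index $i$ at which $e'_{1i}$ raises blame, the left-to-right evaluation order built into the operation evaluation contexts together with \E{Blame} makes both operations raise blame with the same label (for $j<i$ both $e'_{1j}$ and $e'_{2j}$ reduce to values by minimality and the term relation), so the term relation holds. In the remaining case every $e'_{1i} \longrightarrow^{\ast} v_{1i}$ and $e'_{2i} \longrightarrow^{\ast} v_{2i}$ with $v_{1i} \simeq_{\mathtt{v}} v_{2i} : T_i[\ldots]; \theta; \delta$.

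The key simplification, absent in the term-application case, is that $\mathit{unref}(T_i) = B_i$ for base types $B_i$. By Canonical Forms (\prop:ref{fh-canonical}) and Subject Reduction (\prop:ref{fh-subjred}) each $v_{1i}$ is a constant, and since related values at a refinement of a base type are, by unfolding the value and term relations down to $B_i$, equal constants, we get $v_{1i} = v_{2i} = k_i \in \mathcal{K}_{B_i}$; hence both operations reduce to the \emph{same} term ${\tt op}(k_1, \ldots, k_n)$. To fire \R{Op} and land on a value I would verify the premise of the operation's denotation requirement, namely that each $k_i$ satisfies the refinements of $T_i[k_1/x_1, \ldots, k_{i-1}/x_{i-1}]$: Value Inversion (\prop:ref{fh-val-satis-c}) applied to $\emptyset \vdash v_{1i} : T_i[e'_{11}/x_1, \ldots, e'_{1(i-1)}/x_{i-1}]$ shows $k_i$ satisfies that type's refinements, and since $e'_{1j} \longrightarrow^{\ast} k_j$ gives $T_i[e'_{11}/x_1, \ldots] \equiv T_i[k_1/x_1, \ldots]$, Cotermination via \prop:ref{fh-val-satis-c-conv} transports this to $T_i[k_1/x_1, \ldots]$. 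The requirement then yields $[\hspace{-.14em}[ {\tt op} ]\hspace{-.14em}](k_1, \ldots, k_n) = k$ with $k \in \mathcal{K}_{\mathit{unref}(T)}$ satisfying the refinements of $T[k_1/x_1, \ldots, k_n/x_n]$.

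It remains to show $k \simeq_{\mathtt{v}} k : T[e_{11}/x_1, \ldots, e_{1n}/x_n]; \theta; \delta$, and here I would mimic the bridging step of the term-application case. First establish $k \simeq_{\mathtt{v}} k : T; \theta; \delta[(k_1, k_1)/x_1, \ldots, (k_n, k_n)/x_n]$, which follows because unfolding the value relation on $T$ reduces, at each refinement layer, to checking that $k$ satisfies the refinements of $T[k_1/x_1, \ldots]$ under both $\theta_1, \delta_1$ and $\theta_2, \delta_2$ (both substitute the same $k_i$) --- exactly the conclusion of the previous paragraph. Then I would transport this relation along Term Compositionality (\prop:ref{fh-lr-term-comp}), applied once per variable $x_i$, to replace the assignment entries $(k_i, k_i)$ by the substituted operands $e_{1i}$, arriving at the index $T[e_{11}/x_1, \ldots, e_{1n}/x_n]$ with $\delta$. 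Each application discharges its hypotheses as follows: the self-relatedness of $\Gamma$ and of the operands $\Gamma \vdash e_{1i} \simeq e_{1i} : T_i[\ldots]$ gives $\theta_1(\delta_1(e_{1i})) \longrightarrow^{\ast} k_i$ and $\theta_2(\delta_2(e_{1i})) \longrightarrow^{\ast} v''_{2i}$ with $k_i \simeq_{\mathtt{v}} v''_{2i} : T_i[\ldots]; \theta; \delta$ (so $v''_{2i} = k_i$ again, being base-typed), the type self-relatedness $\Gamma \vdash T_i[\ldots] \simeq T_i[\ldots] : *$ supplies the remaining premise, and weakening (\prop:ref{fh-lr-val-ws}) keeps the value assignments well formed along the way.

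The main obstacle, as in the term-application case, is the bookkeeping on the right-hand (possibly ill-typed) side: the relation $e_{1i} \simeq e_{2i}$ controls only $\theta_2(\delta_2(e_{2i}))$, whereas the index type $T[e_{11}/x_1, \ldots]$ forces reasoning about $\theta_2(\delta_2(e_{1i}))$. This is precisely what the self-relatedness assumptions and Term Compositionality are designed to handle, and I expect the care needed to thread the correct convertibilities and well-formedness side conditions through the $n$-fold transport to be the most delicate part. Conversely, the base-type structure of the operation's signature is what keeps this argument strictly simpler than the general term-application case, since it collapses ``related values'' to ``equal constants'' throughout.
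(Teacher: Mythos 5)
Your proposal is correct and follows exactly the route the paper intends, since its own proof is just ``similar to the case of term application'': reduce to the closed term relation, dispatch blame via the left-to-right evaluation contexts, use the base-type structure of $ \mathsf{ty}  (  {\tt op}  ) $ together with Value Inversion and Cotermination to fire \R{Op} on equal constants, and then transport the resulting self-related value to the index type $\ottnt{T} \, [  \ottnt{e_{{\mathrm{11}}}}  \ottsym{/}  \mathit{x_{{\mathrm{1}}}}  ,\, ... \, ,   { \ottnt{e} }_{  \ottsym{1} \mathit{n}  }   \ottsym{/}  \mathit{x_{\ottmv{n}}}  ]$ by an $n$-fold application of Term Compositionality, discharging its premises with the self-relatedness assumptions. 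The observation that the base-typed signature collapses ``related values'' to ``equal constants'' is a genuine and correct simplification over the term-application case, and the remaining bookkeeping you flag (well-formedness of the extended value assignments, convertibility of the substituted types) is the same as in \prop:ref{fh-lr-comp-app-refl-assump}.
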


\begin{prop}[name=Compatibility: Abstraction]{fh-lr-comp-abs}
 If
 $ \Gamma  ,  \mathit{x}  \mathord{:}  \ottnt{T_{{\mathrm{11}}}}   \vdash  \ottnt{e_{{\mathrm{1}}}} \,  \mathrel{ \simeq }  \, \ottnt{e_{{\mathrm{2}}}}  \ottsym{:}  \ottnt{T_{{\mathrm{12}}}}$ and
 $ \mathit{FV}  (  \ottnt{T_{{\mathrm{21}}}}  )   \mathrel{\cup}   \mathit{FTV}  (  \ottnt{T_{{\mathrm{21}}}}  )   \subseteq   \mathit{dom}  (  \Gamma  ) $,
 then
 $\Gamma  \vdash    \lambda    \mathit{x}  \mathord{:}  \ottnt{T_{{\mathrm{11}}}}  .  \ottnt{e_{{\mathrm{1}}}}  \,  \mathrel{ \simeq }  \,   \lambda    \mathit{x}  \mathord{:}  \ottnt{T_{{\mathrm{21}}}}  .  \ottnt{e_{{\mathrm{2}}}}   \ottsym{:}  \ottsym{(}   \mathit{x} \mathord{:} \ottnt{T_{{\mathrm{11}}}} \rightarrow \ottnt{T_{{\mathrm{12}}}}   \ottsym{)}$.

 \proof

 Let $\Gamma  \vdash  \theta  \ottsym{;}  \delta$.
 By definition, it suffices to show that,
 for any $\ottnt{v_{{\mathrm{1}}}}$ and $\ottnt{v_{{\mathrm{2}}}}$ such that $ \ottnt{v_{{\mathrm{1}}}}  \simeq_{\mathtt{v} }  \ottnt{v_{{\mathrm{2}}}}   \ottsym{:}   \ottnt{T_{{\mathrm{11}}}} ;  \theta ;  \delta $,
 $  \theta_{{\mathrm{1}}}  (   \delta_{{\mathrm{1}}}  (    \lambda    \mathit{x}  \mathord{:}  \ottnt{T_{{\mathrm{11}}}}  .  \ottnt{e_{{\mathrm{1}}}}   )   )  \, \ottnt{v_{{\mathrm{1}}}}  \simeq_{\mathtt{e} }   \theta_{{\mathrm{2}}}  (   \delta_{{\mathrm{2}}}  (    \lambda    \mathit{x}  \mathord{:}  \ottnt{T_{{\mathrm{21}}}}  .  \ottnt{e_{{\mathrm{2}}}}   )   )  \, \ottnt{v_{{\mathrm{2}}}}   \ottsym{:}   \ottnt{T_{{\mathrm{12}}}} ;  \theta ;   \delta    [  \,  (  \ottnt{v_{{\mathrm{1}}}}  ,  \ottnt{v_{{\mathrm{2}}}}  ) /  \mathit{x}  \,  ]   $.
 Since $ \Gamma  ,  \mathit{x}  \mathord{:}  \ottnt{T_{{\mathrm{11}}}}   \vdash  \theta  \ottsym{;}   \delta    [  \,  (  \ottnt{v_{{\mathrm{1}}}}  ,  \ottnt{v_{{\mathrm{2}}}}  ) /  \mathit{x}  \,  ]  $
 by the weakening (\prop:ref{fh-lr-val-ws}), and
 $ \Gamma  ,  \mathit{x}  \mathord{:}  \ottnt{T_{{\mathrm{11}}}}   \vdash  \ottnt{e_{{\mathrm{1}}}} \,  \mathrel{ \simeq }  \, \ottnt{e_{{\mathrm{2}}}}  \ottsym{:}  \ottnt{T_{{\mathrm{12}}}}$, we finish.
\end{prop}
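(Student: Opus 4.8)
The plan is to unfold the definition of the logical relation $\Gamma \vdash \lambda x{:}T_{11}.e_1 \simeq \lambda x{:}T_{21}.e_2 : (x{:}T_{11}\to T_{12})$ into its three defining clauses and discharge each. The two syntactic side conditions are routine: from the hypothesis $\Gamma, x{:}T_{11} \vdash e_1 \simeq e_2 : T_{12}$ I extract $\Gamma, x{:}T_{11} \vdash e_1 : T_{12}$, so \T{Abs} gives $\Gamma \vdash \lambda x{:}T_{11}.e_1 : (x{:}T_{11}\to T_{12})$; and the free-variable condition $\mathit{FV}(\lambda x{:}T_{21}.e_2) \cup \mathit{FTV}(\lambda x{:}T_{21}.e_2) \subseteq \mathit{dom}(\Gamma)$ follows by combining the free-variable condition on $e_2$ (which lives in $\Gamma, x{:}T_{11}$, so binding $x$ removes it) with the given hypothesis $\mathit{FV}(T_{21}) \cup \mathit{FTV}(T_{21}) \subseteq \mathit{dom}(\Gamma)$.

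The substance is the third clause. I would fix arbitrary $\theta, \delta$ with $\Gamma \vdash \theta ; \delta$ and observe that $\theta_1(\delta_1(\lambda x{:}T_{11}.e_1))$ and $\theta_2(\delta_2(\lambda x{:}T_{21}.e_2))$ are both $\lambda$-abstractions, hence already values, so the term relation at $(x{:}T_{11}\to T_{12})$ collapses to the value relation. By the function-type clause of the value relation, it then suffices to take arbitrary related arguments $v_1' \simeq_{\mathtt{v}} v_2' : T_{11} ; \theta ; \delta$ and show that the two applications are related in $\simeq_{\mathtt{e}}$ at $T_{12}$ under $\delta[(v_1',v_2')/x]$. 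Each application $\beta$-reduces in one step (\R{Beta}); crucially the annotation is discarded by \R{Beta}, so the differing domain annotations $T_{11}$ and $T_{21}$ play no role, and the redexes step to $\theta_i$ applied to $\delta_i[(v_1',v_2')/x]$ applied to $e_i$. Since $\simeq_{\mathtt{e}}$ is defined purely via evaluation results and evaluation is deterministic (\prop:ref{fh-eval-determinism}), the relation is insensitive to these reductions, so it remains to relate $e_1$ and $e_2$ under the extended environment $\theta ; \delta[(v_1',v_2')/x]$.

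That final goal is exactly clause (3) of the hypothesis $\Gamma, x{:}T_{11} \vdash e_1 \simeq e_2 : T_{12}$, provided I can supply $\Gamma, x{:}T_{11} \vdash \theta ; \delta[(v_1',v_2')/x]$. This last fact is precisely the content of the weakening/strengthening lemma \prop:ref{fh-lr-val-ws}, part (tctx), instantiated with the empty tail context: $\Gamma \vdash \theta ; \delta$ together with $v_1' \simeq_{\mathtt{v}} v_2' : T_{11} ; \theta ; \delta$ yields the extended closing-substitution judgment.

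The reason this case is genuinely easier than the term- and type-application cases lies in where the argument is recorded: here the bound value enters the \emph{value assignment} $\delta$, not a substitution into the type \emph{index} of the relation, so no appeal to term/type compositionality or cotermination is needed and no self-relatedness assumption is required. Consequently I do not expect a serious obstacle; the only point demanding care is lining up the $\theta_i(\delta_i(\cdot))$ applications so that, after $\beta$-reduction, the outer substitution of the argument coincides with extending $\delta$ by $(v_1',v_2')/x$ before applying it to the body—i.e.\ checking that $\theta_1((\delta[(v_1',v_2')/x])_1(e_1))$ is indeed what the reduced application yields—which is a matter of unfolding the definitions of $\delta_i$ on the extended assignment.
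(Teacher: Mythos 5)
Your proposal is correct and follows essentially the same route as the paper's proof: unfold the value relation at the function type, extend the closing environment with the related arguments via the weakening lemma (\prop:ref{fh-lr-val-ws}), and appeal to the hypothesis on the bodies. The paper compresses the $\beta$-reduction and closure-under-deterministic-evaluation steps that you spell out explicitly, but the argument is the same.
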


{\iffull
\begin{prop}{fh-lr-cast}
 If
 $\ottnt{T_{{\mathrm{11}}}}  \mathrel{\parallel}  \ottnt{T_{{\mathrm{12}}}}$,
 $ \ottnt{T_{{\mathrm{11}}}}  \simeq  \ottnt{T_{{\mathrm{11}}}}   \ottsym{:}   \ast ;  \theta ;  \delta $,
 $ \ottnt{T_{{\mathrm{12}}}}  \simeq  \ottnt{T_{{\mathrm{12}}}}   \ottsym{:}   \ast ;  \theta ;  \delta $,
 $ \ottnt{T_{{\mathrm{11}}}}  \simeq  \ottnt{T_{{\mathrm{21}}}}   \ottsym{:}   \ast ;  \theta ;  \delta $, and
 $ \ottnt{T_{{\mathrm{12}}}}  \simeq  \ottnt{T_{{\mathrm{22}}}}   \ottsym{:}   \ast ;  \theta ;  \delta $,
 then
 $  \theta_{{\mathrm{1}}}  (   \delta_{{\mathrm{1}}}  (  \langle  \ottnt{T_{{\mathrm{11}}}}  \Rightarrow  \ottnt{T_{{\mathrm{12}}}}  \rangle   ^{ \ell }   )   )   \simeq_{\mathtt{v} }   \theta_{{\mathrm{2}}}  (   \delta_{{\mathrm{2}}}  (  \langle  \ottnt{T_{{\mathrm{21}}}}  \Rightarrow  \ottnt{T_{{\mathrm{22}}}}  \rangle   ^{ \ell }   )   )    \ottsym{:}   \ottnt{T_{{\mathrm{11}}}}  \rightarrow  \ottnt{T_{{\mathrm{12}}}} ;  \theta ;  \delta $.

 \proof

 By strong induction on the sum of sizes of $\ottnt{T_{{\mathrm{11}}}}$ and $\ottnt{T_{{\mathrm{12}}}}$.
\end{prop}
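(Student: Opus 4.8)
The plan is to unfold the definition of the value relation at the (non-dependent) function type $T_{11} \rightarrow T_{12}$ and argue by strong induction on the sum of the sizes of $T_{11}$ and $T_{12}$, mirroring the proof of \prop:ref{fh-lr-elim-refl-cast}. Writing $c_1 = \theta_1(\delta_1(\langle T_{11} \Rightarrow T_{12}\rangle^\ell))$ and $c_2 = \theta_2(\delta_2(\langle T_{21} \Rightarrow T_{22}\rangle^\ell))$, it suffices to show $c_1\,v_1 \simeq_{\mathtt{e}} c_2\,v_2 : T_{12} ; \theta ; \delta$ for every $v_1, v_2$ with $v_1 \simeq_{\mathtt{v}} v_2 : T_{11} ; \theta ; \delta$ (the index arrow is non-dependent, so the value-assignment extension demanded by the function clause of the value relation is vacuous by \prop:ref{fh-lr-val-ws}). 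I would then case-analyze on which cast reduction rule fires for $c_1$, which is determined by the shapes of $T_{11}$ and $T_{12}$; the hypothesis $T_{11} \parallel T_{12}$ together with $T_{11} \simeq T_{21}$ and $T_{12} \simeq T_{22}$ pins down matching shapes for $T_{21}$ and $T_{22}$, so that $c_2$ fires the same rule.

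The non-function cases reuse the arguments from \prop:ref{fh-lr-elim-refl-cast}. If both types are the same base type, \R{Base} makes both casts identities and related base values pass through unchanged. If both are $\alpha$, each $c_i$ is a reflexive cast on the type $\theta_i(\alpha)$, and the closure condition on $r$ (where $\theta(\alpha) = (r,-,-)$) built into $\mathsf{VRel}(-,-)$, applied on each side and combined with determinism (\prop:ref{fh-eval-determinism}), keeps the two results in $r$. If $T_{11}$ is a refinement, \R{Forget} strips it and I recurse on the strictly smaller source, using that values related at a refinement type are related at its underlying type. If $T_{12}$ (but not $T_{11}$) is a refinement, \R{PreCheck} and \R{Check} fire; I recurse on the underlying target cast and then show the two refinement checks evaluate to the same Boolean, using that $T_{12} \simeq T_{22}$ forces the two refinements to be related at $\mathsf{Bool}$ and that terms related at $\mathsf{Bool}$ converge to the same value. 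The universal case uses \R{Forall} and the induction hypothesis under an extended type interpretation.

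The hard case is \R{Fun}. Here $T_{11} = x{:}A_1 \rightarrow B_1$, $T_{12} = x{:}A_2 \rightarrow B_2$, $T_{21} = x{:}A_3 \rightarrow B_3$, and $T_{22} = x{:}A_4 \rightarrow B_4$. Both casts reduce to function wrappers; applied to related arguments $w_1 \simeq_{\mathtt{v}} w_2 : A_2$, the wrappers $\beta$-reduce to a $\mathsf{let}$ binding the result of the contravariant domain cast $\langle A_2 \Rightarrow A_1\rangle^\ell$, followed by the covariant codomain cast $\langle B_1[y/x] \Rightarrow B_2\rangle^\ell$ applied to $v_1\,y$. I would first apply the induction hypothesis to the domain casts; its premises ($A_2 \parallel A_1$, reflexivity of $A_1$ and $A_2$, and $A_2 \simeq A_4$, $A_1 \simeq A_3$) are obtained by inversion on the function clauses of $\parallel$ and of the type relation, together with the fact that $\parallel$ is symmetric — note the direction of the domain cast is reversed. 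This relates the domain-cast results $y_1, y_2$ at $A_1$, whence $v_1\,y_1 \simeq_{\mathtt{e}} v_2\,y_2 : B_1 ; \theta ; \delta[(y_1,y_2)/x]$ from $v_1 \simeq_{\mathtt{v}} v_2 : T_{11}$. I would then apply the induction hypothesis to the codomain casts.

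The delicate point, which I expect to be the main obstacle, is the bookkeeping of value assignments in this last step: the codomain source and target are the substitution instances $B_1[y_1/x]$ and $B_2[w_1/x]$, so I must translate between relations indexed by substituted types and relations taken under extended value assignments. This is done exactly as in the function subcase of \prop:ref{fh-lr-elim-refl-cast}, by appeal to the weakening (\prop:ref{fh-lr-val-ws}), the value-exchange lemma (\prop:ref{fh-lr-val-exchange-wf}), cotermination (\prop:ref{fh-coterm-true}), and the compositionality lemmas, while additionally tracking the fresh renaming $y$ introduced by \R{Fun} and assembling the four-plus-symmetry relatedness inputs that each recursive call requires. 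It is this interaction between the dependency of codomain types on domain values and the need to relate two genuinely distinct casts — rather than any single reduction step — that makes the argument intricate.
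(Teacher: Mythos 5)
Your proposal is correct and takes essentially the same route as the paper: the paper's own proof of this lemma is just ``by strong induction on the sum of sizes of $\ottnt{T_{{\mathrm{11}}}}$ and $\ottnt{T_{{\mathrm{12}}}}$,'' elaborated elsewhere only as ``as in \prop:ref{fh-lr-elim-refl-cast}; details omitted.'' Your expansion---case analysis on the shapes permitted by $\ottnt{T_{{\mathrm{11}}}}  \mathrel{\parallel}  \ottnt{T_{{\mathrm{12}}}}$, the $ \mathsf{VRel} $ closure condition plus determinism for the type-variable case, the contravariant/covariant uses of the induction hypothesis (with the reversed compatibility and relatedness premises) for \R{Fun}, and the value-assignment bookkeeping for dependent codomains---is exactly the intended instantiation of that template.
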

\fi}

\begin{prop}[name=Compatibility under Self-relatedness Assumption: Cast]
 {fh-lr-comp-cast-refl-assump}
 Suppose that $ \Gamma \vdash \ottnt{T_{{\mathrm{11}}}}  \mathrel{ \simeq }  \ottnt{T_{{\mathrm{11}}}}  : \ast $ and $ \Gamma \vdash \ottnt{T_{{\mathrm{12}}}}  \mathrel{ \simeq }  \ottnt{T_{{\mathrm{12}}}}  : \ast $.
 If $ \Gamma \vdash \ottnt{T_{{\mathrm{11}}}}  \mathrel{ \simeq }  \ottnt{T_{{\mathrm{21}}}}  : \ast $ and $ \Gamma \vdash \ottnt{T_{{\mathrm{12}}}}  \mathrel{ \simeq }  \ottnt{T_{{\mathrm{22}}}}  : \ast $ and $\ottnt{T_{{\mathrm{11}}}}  \mathrel{\parallel}  \ottnt{T_{{\mathrm{12}}}}$,
 then $\Gamma  \vdash  \langle  \ottnt{T_{{\mathrm{11}}}}  \Rightarrow  \ottnt{T_{{\mathrm{12}}}}  \rangle   ^{ \ell }  \,  \mathrel{ \simeq }  \, \langle  \ottnt{T_{{\mathrm{21}}}}  \Rightarrow  \ottnt{T_{{\mathrm{22}}}}  \rangle   ^{ \ell }   \ottsym{:}  \ottnt{T_{{\mathrm{11}}}}  \rightarrow  \ottnt{T_{{\mathrm{12}}}}$.

 \proof

 {\iffull
 By \prop:ref{fh-lr-cast}.
 \else
 It suffices to show:
 \begin{quotation}
  If
  $\ottnt{T_{{\mathrm{11}}}}  \mathrel{\parallel}  \ottnt{T_{{\mathrm{12}}}}$,
  $ \ottnt{T_{{\mathrm{11}}}}  \simeq  \ottnt{T_{{\mathrm{11}}}}   \ottsym{:}   \ast ;  \theta ;  \delta $,
  $ \ottnt{T_{{\mathrm{12}}}}  \simeq  \ottnt{T_{{\mathrm{12}}}}   \ottsym{:}   \ast ;  \theta ;  \delta $,
  $ \ottnt{T_{{\mathrm{11}}}}  \simeq  \ottnt{T_{{\mathrm{21}}}}   \ottsym{:}   \ast ;  \theta ;  \delta $, and
  $ \ottnt{T_{{\mathrm{12}}}}  \simeq  \ottnt{T_{{\mathrm{22}}}}   \ottsym{:}   \ast ;  \theta ;  \delta $,
  then
  \[
     \theta_{{\mathrm{1}}}  (   \delta_{{\mathrm{1}}}  (  \langle  \ottnt{T_{{\mathrm{11}}}}  \Rightarrow  \ottnt{T_{{\mathrm{12}}}}  \rangle   ^{ \ell }   )   )   \simeq_{\mathtt{v} }   \theta_{{\mathrm{2}}}  (   \delta_{{\mathrm{2}}}  (  \langle  \ottnt{T_{{\mathrm{21}}}}  \Rightarrow  \ottnt{T_{{\mathrm{22}}}}  \rangle   ^{ \ell }   )   )    \ottsym{:}   \ottnt{T_{{\mathrm{11}}}}  \rightarrow  \ottnt{T_{{\mathrm{12}}}} ;  \theta ;  \delta .
  \]
 \end{quotation}
 We prove this by strong induction on the sum of sizes of $\ottnt{T_{{\mathrm{11}}}}$ and
 $\ottnt{T_{{\mathrm{12}}}}$ as elimination of reflexive casts (\prop:ref{fh-lr-elim-refl-cast});
 the details are omitted.
 \fi}
\end{prop}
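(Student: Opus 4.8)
The final statement is \prop:ref{fh-lr-comp-cast-refl-assump}, Compatibility under Self-relatedness Assumption for casts. Since the conclusion $\Gamma \vdash \langle \ottnt{T_{{\mathrm{11}}}} \Rightarrow \ottnt{T_{{\mathrm{12}}}} \rangle^{\ell} \, \mathrel{\simeq} \, \langle \ottnt{T_{{\mathrm{21}}}} \Rightarrow \ottnt{T_{{\mathrm{22}}}} \rangle^{\ell} : \ottnt{T_{{\mathrm{11}}}} \rightarrow \ottnt{T_{{\mathrm{12}}}}$ is a statement about the logical relation on open terms, the plan is to fix arbitrary $\theta$ and $\delta$ with $\Gamma \vdash \theta ; \delta$ and reduce the goal to the closed value relation, namely $ \theta_{{\mathrm{1}}}(\delta_{{\mathrm{1}}}(\langle \ottnt{T_{{\mathrm{11}}}} \Rightarrow \ottnt{T_{{\mathrm{12}}}} \rangle^{\ell}))  \simeq_{\mathtt{v}}  \theta_{{\mathrm{2}}}(\delta_{{\mathrm{2}}}(\langle \ottnt{T_{{\mathrm{21}}}} \Rightarrow \ottnt{T_{{\mathrm{22}}}} \rangle^{\ell})) : \ottnt{T_{{\mathrm{11}}}} \rightarrow \ottnt{T_{{\mathrm{12}}}}; \theta; \delta$. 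The hypotheses $\Gamma \vdash \ottnt{T_{11}} \mathrel{\simeq} \ottnt{T_{11}} : \ast$, etc., instantiate under $\theta;\delta$ to the closed type relations $ \ottnt{T_{{\mathrm{11}}}} \simeq \ottnt{T_{{\mathrm{11}}}} : \ast; \theta; \delta$, $ \ottnt{T_{{\mathrm{12}}}} \simeq \ottnt{T_{{\mathrm{12}}}} : \ast; \theta; \delta$, $ \ottnt{T_{{\mathrm{11}}}} \simeq \ottnt{T_{{\mathrm{21}}}} : \ast; \theta; \delta$, and $ \ottnt{T_{{\mathrm{12}}}} \simeq \ottnt{T_{{\mathrm{22}}}} : \ast; \theta; \delta$, which are exactly the premises of the closed-level claim I want to prove.

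The core of the proof is therefore the closed statement: under those four type relations together with $\ottnt{T_{11}} \mathrel{\parallel} \ottnt{T_{12}}$, the two casts are related at $\ottnt{T_{11}} \rightarrow \ottnt{T_{12}}$ in the value relation. I would prove this by strong (course-of-values) induction on the sum of the sizes of $\ottnt{T_{11}}$ and $\ottnt{T_{12}}$, unfolding the function-type case of $\simeq_{\mathtt{v}}$: given related arguments $ \ottnt{v_{{\mathrm{1}}}}  \simeq_{\mathtt{v}}  \ottnt{v_{{\mathrm{2}}}} : \ottnt{T_{{\mathrm{11}}}}; \theta; \delta$, I must show the two cast applications are related in $\simeq_{\mathtt{e}}$ at $\ottnt{T_{12}}$. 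The proof mirrors the structure of the already-proved Elimination of Reflexive Casts lemmas (\prop:ref{fh-lr-elim-refl-cast} and \prop:ref{fh-lr-elim-refl-cast-right}), proceeding by case analysis driven by the derivations of the type relations and by the reduction rules \R{Base}, \R{Fun}, \R{Forall}, \R{Forget}, and \R{PreCheck}/\R{Check}. The genuinely new work over those lemmas is only that source and target may differ ($\ottnt{T_{11}}$ versus $\ottnt{T_{12}}$), but the compatibility hypotheses supply precisely the cross-relations ($\ottnt{T_{11}} \simeq \ottnt{T_{21}}$ and $\ottnt{T_{12}} \simeq \ottnt{T_{22}}$) needed to mediate between the two sides.

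Concretely, in the function-type subcase \R{Fun} generates a wrapper with a contravariant cast $\langle \ottnt{T_{121}} \Rightarrow \ottnt{T_{111}} \rangle^{\ell}$ on the domain and a covariant cast on the codomain, and I would apply the induction hypothesis to each (their sizes are strictly smaller), using the weakening lemma \prop:ref{fh-lr-val-ws} to extend $\delta$ over the bound argument variable and the $\alpha$-renaming facts to align $\ottnt{T_{12}}[\mathit{y}/\mathit{x}]$ with $\ottnt{T_{12}}$. In the refinement target subcase, after \R{Forget} peels the source refinements and \R{PreCheck}/\R{Check} install the waiting/active checks, I would discharge the refinement obligation by running the contract $\ottnt{e'}$ on both sides and invoking the refinement clause of the type relation (terms related at $\mathsf{Bool}$ evaluate to the same value) to transfer satisfaction from the left side to the right, exactly as in \prop:ref{fh-lr-elim-refl-cast}.

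The main obstacle I anticipate is the bookkeeping in the function case: managing the interaction between the renaming of the codomain binder, the extension of the value assignment $\delta$, and the four distinct type relations as they propagate to the smaller domain and codomain types, while checking that all four relations (in both directions, source-with-source and source-with-target) survive the $\delta$-extension needed to apply the induction hypothesis covariantly. Since the paper marks this proof ``the details are omitted'' in the non-full version and points to \prop:ref{fh-lr-elim-refl-cast}, I expect the cleanest route is to isolate and reuse the machinery already developed there rather than re-deriving it; the essential difference is merely that the reflexive-cast lemmas are the special case $\ottnt{T_{11}} = \ottnt{T_{12}}$, so the present statement is a modest generalization whose proof is structurally identical but carries two type arguments through the induction instead of one.
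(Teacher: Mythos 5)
Your proposal matches the paper's proof: the paper likewise reduces the statement to the closed-level claim that the two substituted casts are related in the value relation at $\ottnt{T_{{\mathrm{11}}}}  \rightarrow  \ottnt{T_{{\mathrm{12}}}}$ under the four closed type relations, and then proves that claim by strong induction on the sum of the sizes of $\ottnt{T_{{\mathrm{11}}}}$ and $\ottnt{T_{{\mathrm{12}}}}$, following the same case structure as the Elimination of Reflexive Casts lemma. Your account of the function-type and refinement-type cases is consistent with how those cases are handled there, so no gap.
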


\begin{prop}[name=Compatibility: Type Abstraction]{fh-lr-comp-tabs}
 If $\Gamma  \ottsym{,}  \alpha  \vdash  \ottnt{e_{{\mathrm{1}}}} \,  \mathrel{ \simeq }  \, \ottnt{e_{{\mathrm{2}}}}  \ottsym{:}  \ottnt{T}$,
 then $\Gamma  \vdash   \Lambda\!  \, \alpha  .~  \ottnt{e_{{\mathrm{1}}}} \,  \mathrel{ \simeq }  \,  \Lambda\!  \, \alpha  .~  \ottnt{e_{{\mathrm{2}}}}  \ottsym{:}   \forall   \alpha  .  \ottnt{T} $.

 \proof

 Let $\Gamma  \vdash  \theta  \ottsym{;}  \delta$.
 By definition, it suffices to show that,
 for any $\ottnt{r}$, $\ottnt{T_{{\mathrm{1}}}}$, and $\ottnt{T_{{\mathrm{2}}}}$ such that $\langle  \ottnt{r}  \ottsym{,}  \ottnt{T_{{\mathrm{1}}}}  \ottsym{,}  \ottnt{T_{{\mathrm{2}}}}  \rangle$,
 $  \theta_{{\mathrm{1}}}  (   \delta_{{\mathrm{1}}}  (   \Lambda\!  \, \alpha  .~  \ottnt{e_{{\mathrm{1}}}}  )   )  \, \ottnt{T_{{\mathrm{1}}}}  \simeq_{\mathtt{e} }   \theta_{{\mathrm{2}}}  (   \delta_{{\mathrm{2}}}  (   \Lambda\!  \, \alpha  .~  \ottnt{e_{{\mathrm{2}}}}  )   )  \, \ottnt{T_{{\mathrm{2}}}}   \ottsym{:}   \ottnt{T} ;  \theta \,  \{  \,  \alpha  \mapsto  \ottnt{r} , \ottnt{T_{{\mathrm{1}}}} , \ottnt{T_{{\mathrm{2}}}}  \,  \}  ;  \delta $.
 Since $\Gamma  \ottsym{,}  \alpha  \vdash  \theta \,  \{  \,  \alpha  \mapsto  \ottnt{r} , \ottnt{T_{{\mathrm{1}}}} , \ottnt{T_{{\mathrm{2}}}}  \,  \}   \ottsym{;}  \delta$
 by the weakening ({\prop:ref{fh-lr-typ-ws}}), and
 $\Gamma  \ottsym{,}  \alpha  \vdash  \ottnt{e_{{\mathrm{1}}}} \,  \mathrel{ \simeq }  \, \ottnt{e_{{\mathrm{2}}}}  \ottsym{:}  \ottnt{T}$, we finish.
\end{prop}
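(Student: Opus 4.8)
The plan is to unfold the definition of the logical relation on open terms and reduce every clause to the hypothesis $\Gamma,\alpha \vdash e_1 \simeq e_2 : T$. Establishing $\Gamma \vdash \Lambda\alpha.\,e_1 \simeq \Lambda\alpha.\,e_2 : \forall\alpha.T$ requires checking its three clauses. Clause~(1), $\Gamma \vdash \Lambda\alpha.\,e_1 : \forall\alpha.T$, follows by \T{TAbs} from $\Gamma,\alpha \vdash e_1 : T$, which is clause~(1) of the premise. Clause~(2) holds because the premise gives $\mathit{FV}(e_2)\cup\mathit{FTV}(e_2)\subseteq\mathit{dom}(\Gamma,\alpha)=\mathit{dom}(\Gamma)\cup\{\alpha\}$, and since $\Lambda\alpha$ binds $\alpha$ we get $\mathit{FV}(\Lambda\alpha.\,e_2)\cup\mathit{FTV}(\Lambda\alpha.\,e_2)\subseteq\mathit{dom}(\Gamma)$. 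The real content is clause~(3): fixing arbitrary $\theta,\delta$ with $\Gamma\vdash\theta;\delta$, I must show $\theta_1(\delta_1(\Lambda\alpha.\,e_1))\simeq_{\mathtt{e}}\theta_2(\delta_2(\Lambda\alpha.\,e_2)):\forall\alpha.T;\theta;\delta$.

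First I would push the substitutions inside the binder, writing $\theta_i(\delta_i(\Lambda\alpha.\,e_i))=\Lambda\alpha.\,\theta_i(\delta_i(e_i))$ with $\alpha$ chosen fresh for $\theta$ and $\delta$. Both sides are type abstractions, hence values, so the term relation collapses to the value relation $\Lambda\alpha.\,\theta_1(\delta_1(e_1))\simeq_{\mathtt{v}}\Lambda\alpha.\,\theta_2(\delta_2(e_2)):\forall\alpha.T;\theta;\delta$. By the defining clause of $\simeq_{\mathtt{v}}$ at a universal type, it then suffices to fix arbitrary $r,T_1,T_2$ with $\langle r,T_1,T_2\rangle$ and show $(\Lambda\alpha.\,\theta_1(\delta_1(e_1)))\,T_1\simeq_{\mathtt{e}}(\Lambda\alpha.\,\theta_2(\delta_2(e_2)))\,T_2:T;\theta';\delta$, where $\theta'\defeq\theta\{\alpha\mapsto r,T_1,T_2\}$. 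Each side is a \R{TBeta} redex reducing in one step to $\theta_i(\delta_i(e_i))[T_i/\alpha]$; because $\simeq_{\mathtt{e}}$ observes only the eventual value/blame outcome (and evaluation is deterministic, \prop:ref{fh-eval-determinism}), this single step changes nothing, so it is enough to relate the two reducts. I would then note that, by freshness of $\alpha$, the reducts are exactly $\theta'_1(\delta_1(e_1))$ and $\theta'_2(\delta_2(e_2))$, turning the goal into clause~(3) of the premise instantiated at $\theta'$ and $\delta$. To license that instantiation I invoke \prop:ref{fh-lr-typ-ws}, clause~(tctx), which (taking the trailing context empty) yields $\Gamma,\alpha\vdash\theta';\delta$ from $\Gamma\vdash\theta;\delta$ and $\langle r,T_1,T_2\rangle$.

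I expect the only genuine obstacle to be the substitution bookkeeping in the last step: verifying that $\theta_i(\delta_i(e_i))[T_i/\alpha]$ is literally $\theta'_i(\delta_i(e_i))$, which rests on $\alpha$ being fresh for $\delta$ and for the types $\theta$ assigns to the other variables, so that the $[T_i/\alpha]$ substitution neither interferes with $\delta_i$ nor clashes with the extra binding recorded in $\theta'_i$. Everything else is a mechanical unfolding of the value, term, and type relations together with the appeal to weakening; unlike the application cases, no reasoning about casts, cotermination, or self-relatedness is needed, which is why this case carries no side conditions.
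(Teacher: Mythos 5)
Your proposal is correct and follows essentially the same route as the paper's proof: unfold the universal-type clause of the value relation, use the weakening lemma (\prop:ref{fh-lr-typ-ws}) to obtain $\Gamma,\alpha \vdash \theta\{\alpha\mapsto r,T_1,T_2\};\delta$, and instantiate the premise. The paper simply leaves implicit the \R{TBeta} reduction step and the substitution bookkeeping that you spell out.
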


\begin{prop}[name=Compatibility: Type Conversion]{fh-lr-comp-conv}
 If $ \mathord{ \vdash } ~  \Gamma $ and $\emptyset  \vdash  \ottnt{e_{{\mathrm{1}}}} \,  \mathrel{ \simeq }  \, \ottnt{e_{{\mathrm{2}}}}  \ottsym{:}  \ottnt{T_{{\mathrm{1}}}}$ and
 $\emptyset  \vdash  \ottnt{T_{{\mathrm{2}}}}$ and $\ottnt{T_{{\mathrm{1}}}}  \equiv  \ottnt{T_{{\mathrm{2}}}}$,
 then $\Gamma  \vdash  \ottnt{e_{{\mathrm{1}}}} \,  \mathrel{ \simeq }  \, \ottnt{e_{{\mathrm{2}}}}  \ottsym{:}  \ottnt{T_{{\mathrm{2}}}}$.

 \proof

 It suffices to show that, if $\ottnt{T_{{\mathrm{1}}}}  \equiv  \ottnt{T_{{\mathrm{2}}}}$, then
 $ \ottnt{e_{{\mathrm{1}}}}  \simeq_{\mathtt{e} }  \ottnt{e_{{\mathrm{2}}}}   \ottsym{:}   \ottnt{T_{{\mathrm{1}}}} ;  \theta ;  \delta $ \text{ iff } $ \ottnt{e_{{\mathrm{1}}}}  \simeq_{\mathtt{e} }  \ottnt{e_{{\mathrm{2}}}}   \ottsym{:}   \ottnt{T_{{\mathrm{2}}}} ;  \theta ;  \delta $.
 We consider the case of $\ottnt{T_{{\mathrm{1}}}}  \Rrightarrow  \ottnt{T_{{\mathrm{2}}}}$
 (other cases are shown straightforwardly).
 There exist $\ottnt{T}$, $\ottnt{e'_{{\mathrm{1}}}}$, $\ottnt{e'_{{\mathrm{2}}}}$, and
 $\mathit{x}$ such that $\ottnt{T_{{\mathrm{1}}}}  \ottsym{=}  \ottnt{T} \, [  \ottnt{e'_{{\mathrm{1}}}}  \ottsym{/}  \mathit{x}  ]$ and $\ottnt{T_{{\mathrm{2}}}}  \ottsym{=}  \ottnt{T} \, [  \ottnt{e'_{{\mathrm{2}}}}  \ottsym{/}  \mathit{x}  ]$ and
 $\ottnt{e'_{{\mathrm{1}}}}  \longrightarrow  \ottnt{e'_{{\mathrm{2}}}}$.
 If $\ottnt{e_{{\mathrm{1}}}}$ and $\ottnt{e_{{\mathrm{2}}}}$ raise blame, then obvious.
 Otherwise,
 $\ottnt{e_{{\mathrm{1}}}}  \longrightarrow^{\ast}  \ottnt{v_{{\mathrm{1}}}}$ and $\ottnt{e_{{\mathrm{2}}}}  \longrightarrow^{\ast}  \ottnt{v_{{\mathrm{2}}}}$ for some $\ottnt{v_{{\mathrm{1}}}}$ and $\ottnt{v_{{\mathrm{2}}}}$, and
 it suffices to show that
 $ \ottnt{v_{{\mathrm{1}}}}  \simeq_{\mathtt{v} }  \ottnt{v_{{\mathrm{2}}}}   \ottsym{:}   \ottnt{T} \, [  \ottnt{e'_{{\mathrm{1}}}}  \ottsym{/}  \mathit{x}  ] ;  \theta ;  \delta $ iff 
 $ \ottnt{v_{{\mathrm{1}}}}  \simeq_{\mathtt{v} }  \ottnt{v_{{\mathrm{2}}}}   \ottsym{:}   \ottnt{T} \, [  \ottnt{e'_{{\mathrm{2}}}}  \ottsym{/}  \mathit{x}  ] ;  \theta ;  \delta $.
 Straightforward by induction on $\ottnt{T}$.
 The case that $\ottnt{T}$ is a refinement type is shown with
 Cotermination (\prop:ref{fh-coterm-true}).
\end{prop}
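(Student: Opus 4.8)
The plan is to reduce the statement to a claim about the closed term relation and then peel the type conversion apart step by step. First I would observe that, by the definition of the logical relation, $\Gamma \vdash e_1 \simeq e_2 : T_2$ decomposes into three obligations: the typing $\Gamma \vdash e_1 : T_2$, the free-variable condition $\mathit{FV}(e_2) \cup \mathit{FTV}(e_2) \subseteq \mathit{dom}(\Gamma)$, and the relational condition $\theta_1(\delta_1(e_1)) \simeq_{\mathtt{e}} \theta_2(\delta_2(e_2)) : T_2;\theta;\delta$ for every $\theta,\delta$ with $\Gamma \vdash \theta;\delta$. The typing follows from $\emptyset \vdash e_1 : T_1$ (extracted from $\emptyset \vdash e_1 \simeq e_2 : T_1$), together with $\emptyset \vdash T_2$ and $T_1 \equiv T_2$ via \T{Conv}, and then $\mathord{\vdash} \Gamma$ with weakening; the free-variable condition is inherited unchanged from the hypothesis. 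So the real work is to show that the closed term relation is insensitive to replacing $T_1$ by a conversion-equivalent $T_2$: for all closed $e_1,e_2$, $e_1 \simeq_{\mathtt{e}} e_2 : T_1;\theta;\delta$ iff $e_1 \simeq_{\mathtt{e}} e_2 : T_2;\theta;\delta$ whenever $T_1 \equiv T_2$.

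Since $\equiv$ is the symmetric transitive closure of $\Rrightarrow$, I would prove this equivalence by induction on the derivation of $T_1 \equiv T_2$; the reflexivity, symmetry, and transitivity cases are immediate from the induction hypotheses, so the only genuine case is a single step $T_1 \Rrightarrow T_2$. Unfolding the definition of $\Rrightarrow$ supplies a type context $T$, a variable $x$, and terms $e'_1,e'_2$ with $T_1 = T[e'_1/x]$, $T_2 = T[e'_2/x]$, and $e'_1 \longrightarrow e'_2$. If both $e_1$ and $e_2$ raise blame the two term relations agree by definition, so I may assume $e_1 \longrightarrow^{\ast} v_1$ and $e_2 \longrightarrow^{\ast} v_2$ (the reductions being independent of the index type) and reduce the goal to the value-relation equivalence $v_1 \simeq_{\mathtt{v}} v_2 : T[e'_1/x];\theta;\delta$ iff $v_1 \simeq_{\mathtt{v}} v_2 : T[e'_2/x];\theta;\delta$.

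The last step is an inner induction on the structure of $T$. For base types, type variables, function types, and universal types the substitution $[e'_i/x]$ reaches only refinement predicates nested inside $T$, so these cases close directly by the induction hypotheses. The crux is the refinement case $T = \{y : T' \mid e''\}$, where unfolding the value relation requires both sides to agree on whether $\theta_1(\delta_1((e''[e'_i/x])[v_1/y]))$ and $\theta_2(\delta_2((e''[e'_i/x])[v_2/y]))$ evaluate to $\mathsf{true}$, while the underlying-type comparison at $T'$ is handled by the induction hypothesis. The two refinement instances differ only in that $e'_1$ has been replaced by $e'_2$, and since $e'_1 \longrightarrow e'_2$ these substituted predicates behave identically with respect to termination at $\mathsf{true}$. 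This is exactly Cotermination (\prop:ref{fh-coterm-true}), which I expect to be the main obstacle in the sense that it is the single place where the semantic content of type conversion is actually used; everything else is bookkeeping driven by the two nested inductions. I would apply Cotermination to the one-step reduction $e'_1 \longrightarrow e'_2$ after fixing the outer substitutions $\theta_j,\delta_j$ and $v_j/y$, thereby transferring the truth of each refinement check across the conversion in both directions.
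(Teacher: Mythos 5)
Your proposal is correct and follows essentially the same route as the paper: reduce to showing that the closed term relation is invariant under $\equiv$, handle the single-step case $\ottnt{T_{{\mathrm{1}}}}  \Rrightarrow  \ottnt{T_{{\mathrm{2}}}}$ by induction on the type context $\ottnt{T}$, and discharge the refinement case with Cotermination applied to $\ottnt{e'_{{\mathrm{1}}}}  \longrightarrow  \ottnt{e'_{{\mathrm{2}}}}$. The extra detail you give on the typing obligation via \T{Conv} and the free-variable condition is bookkeeping the paper leaves implicit.
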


\begin{prop}[name=Compatibility under Self-relatedness Assumption: Active Check]
 {fh-lr-comp-acheck-refl-assump}
 Suppose that $ \mathord{ \vdash } ~  \Gamma $ and $ \emptyset \vdash  \{  \mathit{x}  \mathord{:}  \ottnt{T_{{\mathrm{1}}}}   \mathop{\mid}   \ottnt{e_{{\mathrm{1}}}}  \}   \mathrel{ \simeq }   \{  \mathit{x}  \mathord{:}  \ottnt{T_{{\mathrm{1}}}}   \mathop{\mid}   \ottnt{e_{{\mathrm{1}}}}  \}   : \ast $.
 If
 $\emptyset  \vdash  \ottnt{e'_{{\mathrm{1}}}} \,  \mathrel{ \simeq }  \, \ottnt{e'_{{\mathrm{2}}}}  \ottsym{:}   \mathsf{Bool} $ and
 $\emptyset  \vdash  \ottnt{v_{{\mathrm{1}}}} \,  \mathrel{ \simeq }  \, \ottnt{v_{{\mathrm{2}}}}  \ottsym{:}  \ottnt{T_{{\mathrm{1}}}}$ and
 $\ottnt{e_{{\mathrm{1}}}} \, [  \ottnt{v_{{\mathrm{1}}}}  \ottsym{/}  \mathit{x}  ]  \longrightarrow^{\ast}  \ottnt{e'_{{\mathrm{1}}}}$,
 then
 $\Gamma  \vdash  \langle   \{  \mathit{x}  \mathord{:}  \ottnt{T_{{\mathrm{1}}}}   \mathop{\mid}   \ottnt{e_{{\mathrm{1}}}}  \}   \ottsym{,}  \ottnt{e'_{{\mathrm{1}}}}  \ottsym{,}  \ottnt{v_{{\mathrm{1}}}}  \rangle   ^{ \ell }  \,  \mathrel{ \simeq }  \, \langle   \{  \mathit{x}  \mathord{:}  \ottnt{T_{{\mathrm{2}}}}   \mathop{\mid}   \ottnt{e_{{\mathrm{2}}}}  \}   \ottsym{,}  \ottnt{e'_{{\mathrm{2}}}}  \ottsym{,}  \ottnt{v_{{\mathrm{2}}}}  \rangle   ^{ \ell }   \ottsym{:}   \{  \mathit{x}  \mathord{:}  \ottnt{T_{{\mathrm{1}}}}   \mathop{\mid}   \ottnt{e_{{\mathrm{1}}}}  \} $.

 \proof

 It suffices to show that
 $ \langle   \{  \mathit{x}  \mathord{:}  \ottnt{T_{{\mathrm{1}}}}   \mathop{\mid}   \ottnt{e_{{\mathrm{1}}}}  \}   \ottsym{,}  \ottnt{e'_{{\mathrm{1}}}}  \ottsym{,}  \ottnt{v}  \rangle   ^{ \ell }   \simeq_{\mathtt{e} }  \langle   \{  \mathit{x}  \mathord{:}  \ottnt{T_{{\mathrm{2}}}}   \mathop{\mid}   \ottnt{e_{{\mathrm{2}}}}  \}   \ottsym{,}  \ottnt{e'_{{\mathrm{2}}}}  \ottsym{,}  \ottnt{v_{{\mathrm{2}}}}  \rangle   ^{ \ell }    \ottsym{:}    \{  \mathit{x}  \mathord{:}  \ottnt{T_{{\mathrm{1}}}}   \mathop{\mid}   \ottnt{e_{{\mathrm{1}}}}  \}  ;  \emptyset ;  \emptyset $.
 If $\ottnt{e'_{{\mathrm{1}}}}$ and $\ottnt{e'_{{\mathrm{2}}}}$ raise blame, then obvious.
 Otherwise,
 $\ottnt{e'_{{\mathrm{1}}}}  \longrightarrow^{\ast}  \ottnt{v'_{{\mathrm{1}}}}$ and $\ottnt{e'_{{\mathrm{2}}}}  \longrightarrow^{\ast}  \ottnt{v'_{{\mathrm{2}}}}$ for some $\ottnt{v'_{{\mathrm{1}}}}$ and $\ottnt{v'_{{\mathrm{2}}}}$, and
 it suffices to show that
 \[  \langle   \{  \mathit{x}  \mathord{:}  \ottnt{T_{{\mathrm{1}}}}   \mathop{\mid}   \ottnt{e_{{\mathrm{1}}}}  \}   \ottsym{,}  \ottnt{v'_{{\mathrm{1}}}}  \ottsym{,}  \ottnt{v_{{\mathrm{1}}}}  \rangle   ^{ \ell }   \simeq_{\mathtt{e} }  \langle   \{  \mathit{x}  \mathord{:}  \ottnt{T_{{\mathrm{2}}}}   \mathop{\mid}   \ottnt{e_{{\mathrm{2}}}}  \}   \ottsym{,}  \ottnt{v'_{{\mathrm{2}}}}  \ottsym{,}  \ottnt{v_{{\mathrm{2}}}}  \rangle   ^{ \ell }    \ottsym{:}    \{  \mathit{x}  \mathord{:}  \ottnt{T_{{\mathrm{1}}}}   \mathop{\mid}   \ottnt{e_{{\mathrm{1}}}}  \}  ;  \emptyset ;  \emptyset . \]
 Since $ \ottnt{v'_{{\mathrm{1}}}}  \simeq_{\mathtt{v} }  \ottnt{v'_{{\mathrm{2}}}}   \ottsym{:}    \mathsf{Bool}  ;  \emptyset ;  \emptyset $,
 there are two cases we have to consider.
 If $\ottnt{v'_{{\mathrm{1}}}}  \ottsym{=}  \ottnt{v'_{{\mathrm{2}}}}  \ottsym{=}   \mathsf{false} $, then
 $\langle   \{  \mathit{x}  \mathord{:}  \ottnt{T_{{\mathrm{1}}}}   \mathop{\mid}   \ottnt{e_{{\mathrm{1}}}}  \}   \ottsym{,}  \ottnt{v'_{{\mathrm{1}}}}  \ottsym{,}  \ottnt{v_{{\mathrm{1}}}}  \rangle   ^{ \ell }   \longrightarrow   \mathord{\Uparrow}  \ell $ and
 $\langle   \{  \mathit{x}  \mathord{:}  \ottnt{T_{{\mathrm{2}}}}   \mathop{\mid}   \ottnt{e_{{\mathrm{2}}}}  \}   \ottsym{,}  \ottnt{v'_{{\mathrm{2}}}}  \ottsym{,}  \ottnt{v_{{\mathrm{2}}}}  \rangle   ^{ \ell }   \longrightarrow   \mathord{\Uparrow}  \ell $, and so
 we finish.
 Otherwise, if $\ottnt{v'_{{\mathrm{1}}}}  \ottsym{=}  \ottnt{v'_{{\mathrm{2}}}}  \ottsym{=}   \mathsf{true} $, then
 $\langle   \{  \mathit{x}  \mathord{:}  \ottnt{T_{{\mathrm{1}}}}   \mathop{\mid}   \ottnt{e_{{\mathrm{1}}}}  \}   \ottsym{,}  \ottnt{v'_{{\mathrm{1}}}}  \ottsym{,}  \ottnt{v_{{\mathrm{1}}}}  \rangle   ^{ \ell }   \longrightarrow  \ottnt{v_{{\mathrm{1}}}}$ and
 $\langle   \{  \mathit{x}  \mathord{:}  \ottnt{T_{{\mathrm{2}}}}   \mathop{\mid}   \ottnt{e_{{\mathrm{2}}}}  \}   \ottsym{,}  \ottnt{v'_{{\mathrm{2}}}}  \ottsym{,}  \ottnt{v_{{\mathrm{2}}}}  \rangle   ^{ \ell }   \longrightarrow  \ottnt{v_{{\mathrm{2}}}}$.
 Thus, it suffices to show that $ \ottnt{v_{{\mathrm{1}}}}  \simeq_{\mathtt{v} }  \ottnt{v_{{\mathrm{2}}}}   \ottsym{:}    \{  \mathit{x}  \mathord{:}  \ottnt{T_{{\mathrm{1}}}}   \mathop{\mid}   \ottnt{e_{{\mathrm{1}}}}  \}  ;  \emptyset ;  \emptyset $,
 that is,
 (1) $ \ottnt{v_{{\mathrm{1}}}}  \simeq_{\mathtt{v} }  \ottnt{v_{{\mathrm{2}}}}   \ottsym{:}   \ottnt{T_{{\mathrm{1}}}} ;  \emptyset ;  \emptyset $,
 (2) $\ottnt{e_{{\mathrm{1}}}} \, [  \ottnt{v_{{\mathrm{1}}}}  \ottsym{/}  \mathit{x}  ]  \longrightarrow^{\ast}   \mathsf{true} $, and
 (3) $\ottnt{e_{{\mathrm{1}}}} \, [  \ottnt{v_{{\mathrm{2}}}}  \ottsym{/}  \mathit{x}  ]  \longrightarrow^{\ast}   \mathsf{true} $.
 We have
 $ \ottnt{v_{{\mathrm{1}}}}  \simeq_{\mathtt{v} }  \ottnt{v_{{\mathrm{2}}}}   \ottsym{:}   \ottnt{T_{{\mathrm{1}}}} ;  \emptyset ;  \emptyset $ by the assumption, and
 $\ottnt{e_{{\mathrm{1}}}} \, [  \ottnt{v_{{\mathrm{1}}}}  \ottsym{/}  \mathit{x}  ]  \longrightarrow^{\ast}  \ottnt{e'_{{\mathrm{1}}}}  \longrightarrow^{\ast}   \mathsf{true} $.
 Since $  \{  \mathit{x}  \mathord{:}  \ottnt{T_{{\mathrm{1}}}}   \mathop{\mid}   \ottnt{e_{{\mathrm{1}}}}  \}   \simeq   \{  \mathit{x}  \mathord{:}  \ottnt{T_{{\mathrm{1}}}}   \mathop{\mid}   \ottnt{e_{{\mathrm{1}}}}  \}    \ottsym{:}   \ast ;  \emptyset ;  \emptyset $,
 we have
 $ \ottnt{e_{{\mathrm{1}}}} \, [  \ottnt{v_{{\mathrm{1}}}}  \ottsym{/}  \mathit{x}  ]  \simeq_{\mathtt{e} }  \ottnt{e_{{\mathrm{1}}}} \, [  \ottnt{v_{{\mathrm{2}}}}  \ottsym{/}  \mathit{x}  ]   \ottsym{:}    \mathsf{Bool}  ;  \emptyset ;  \emptyset $.
 Thus, $\ottnt{e_{{\mathrm{1}}}} \, [  \ottnt{v_{{\mathrm{2}}}}  \ottsym{/}  \mathit{x}  ]  \longrightarrow^{\ast}   \mathsf{true} $.
\end{prop}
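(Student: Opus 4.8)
The plan is to strip off the outer clauses of the logical relation, reduce the goal to the closed term relation, and then drive both active checks through the same sequence of reductions, case-analysing on their common Boolean outcome. Since every component mentioned in the hypotheses is closed and typed under $\emptyset$, for any $\theta,\delta$ with $\Gamma\vdash\theta;\delta$ the substitutions $\theta_i,\delta_i$ act as the identity on both active checks, so clause~(3) of $\Gamma\vdash\cdot\mathrel{\simeq}\cdot:\{x:T_1\mid e_1\}$ collapses to the single goal $\langle\{x:T_1\mid e_1\},e'_1,v_1\rangle^\ell \simeq_{\mathtt{e}} \langle\{x:T_2\mid e_2\},e'_2,v_2\rangle^\ell : \{x:T_1\mid e_1\};\emptyset;\emptyset$. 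Clause~(1), well-typedness of the left active check, I would discharge by \T{ACheck}, using $\vdash\Gamma$, the well-formedness of $\{x:T_1\mid e_1\}$ extracted from the self-relatedness hypothesis, the left components $\emptyset\vdash e'_1:\mathsf{Bool}$ and $\emptyset\vdash v_1:T_1$ of the two logical-relation hypotheses, and $e_1[v_1/x]\longrightarrow^{\ast}e'_1$; clause~(2), closedness of the right active check, is routine bookkeeping from the same hypotheses.

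For the term-relation goal I would first use $\emptyset\vdash e'_1\mathrel{\simeq}e'_2:\mathsf{Bool}$ to obtain $e'_1\simeq_{\mathtt{e}}e'_2:\mathsf{Bool};\emptyset;\emptyset$. If $e'_1$ and $e'_2$ both raise blame, then so do both active checks (the middle position of an active check is an evaluation context, so \E{Red} propagates the reduction to $\mathord{\Uparrow}\ell$ and \E{Blame} lifts it out), and the term relation holds immediately. Otherwise $e'_1\longrightarrow^{\ast}v'_1$ and $e'_2\longrightarrow^{\ast}v'_2$ with $v'_1\simeq_{\mathtt{v}}v'_2:\mathsf{Bool};\emptyset;\emptyset$, whence the base-type value relation forces $v'_1=v'_2$, a common Boolean constant. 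Evaluating both checks down to $\langle\{x:T_1\mid e_1\},v'_1,v_1\rangle^\ell$ and $\langle\{x:T_2\mid e_2\},v'_2,v_2\rangle^\ell$, it remains to relate these two.

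I would then split on the Boolean constant. If it is $\mathsf{false}$, both active checks step to $\mathord{\Uparrow}\ell$ by \R{Fail} and we are done. If it is $\mathsf{true}$, both step by \R{OK} to $v_1$ and $v_2$ respectively, so the obligation becomes $v_1\simeq_{\mathtt{v}}v_2:\{x:T_1\mid e_1\};\emptyset;\emptyset$, which unfolds to (a)~$v_1\simeq_{\mathtt{v}}v_2:T_1;\emptyset;\emptyset$, (b)~$e_1[v_1/x]\longrightarrow^{\ast}\mathsf{true}$, and (c)~$e_1[v_2/x]\longrightarrow^{\ast}\mathsf{true}$. Part~(a) is exactly $\emptyset\vdash v_1\mathrel{\simeq}v_2:T_1$ from the hypotheses, and part~(b) follows because $e_1[v_1/x]\longrightarrow^{\ast}e'_1\longrightarrow^{\ast}v'_1=\mathsf{true}$.

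The one nonroutine step is (c): the right-hand value $v_2$ must also satisfy the refinement $e_1$, even though we only know that $v_2$ \emph{behaves} like $v_1$, not that it equals it. This is where the self-relatedness hypothesis earns its keep. Unfolding $\emptyset\vdash\{x:T_1\mid e_1\}\mathrel{\simeq}\{x:T_1\mid e_1\}:\ast$ by the refinement case of the type relation and instantiating it with the related pair from~(a) yields $e_1[v_1/x]\simeq_{\mathtt{e}}e_1[v_2/x]:\mathsf{Bool};\emptyset;\emptyset$. Since terms related at $\mathsf{Bool}$ evaluate to the same constant and $e_1[v_1/x]\longrightarrow^{\ast}\mathsf{true}$ by~(b), we conclude $e_1[v_2/x]\longrightarrow^{\ast}\mathsf{true}$, which is~(c). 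I expect this dependence on self-relatedness to be the main conceptual point; the surrounding case analysis and the handling of evaluation contexts are mechanical.
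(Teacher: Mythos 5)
Your proposal matches the paper's proof essentially step for step: the same collapse to the closed term relation, the same case analysis on the common Boolean outcome of $\ottnt{e'_{{\mathrm{1}}}}$ and $\ottnt{e'_{{\mathrm{2}}}}$, and the same use of the self-relatedness hypothesis to transfer $\ottnt{e_{{\mathrm{1}}}} \, [  \ottnt{v_{{\mathrm{1}}}}  \ottsym{/}  \mathit{x}  ]  \longrightarrow^{\ast}   \mathsf{true} $ to $\ottnt{e_{{\mathrm{1}}}} \, [  \ottnt{v_{{\mathrm{2}}}}  \ottsym{/}  \mathit{x}  ]  \longrightarrow^{\ast}   \mathsf{true} $, which you correctly identify as the one nonroutine step. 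No gaps.
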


\begin{prop}[name=Compatibility under Self-relatedness Assumption: Waiting Check]
 {fh-lr-comp-wcheck-refl-assump}
 Suppose that $ \Gamma \vdash  \{  \mathit{x}  \mathord{:}  \ottnt{T_{{\mathrm{1}}}}   \mathop{\mid}   \ottnt{e_{{\mathrm{1}}}}  \}   \mathrel{ \simeq }   \{  \mathit{x}  \mathord{:}  \ottnt{T_{{\mathrm{1}}}}   \mathop{\mid}   \ottnt{e_{{\mathrm{1}}}}  \}   : \ast $.
 If
 $ \Gamma \vdash  \{  \mathit{x}  \mathord{:}  \ottnt{T_{{\mathrm{1}}}}   \mathop{\mid}   \ottnt{e_{{\mathrm{1}}}}  \}   \mathrel{ \simeq }   \{  \mathit{x}  \mathord{:}  \ottnt{T_{{\mathrm{2}}}}   \mathop{\mid}   \ottnt{e_{{\mathrm{2}}}}  \}   : \ast $ and
 $\Gamma  \vdash  \ottnt{e'_{{\mathrm{1}}}} \,  \mathrel{ \simeq }  \, \ottnt{e'_{{\mathrm{2}}}}  \ottsym{:}  \ottnt{T_{{\mathrm{1}}}}$,
 then
 $\Gamma  \vdash   \langle\!\langle  \,  \{  \mathit{x}  \mathord{:}  \ottnt{T_{{\mathrm{1}}}}   \mathop{\mid}   \ottnt{e_{{\mathrm{1}}}}  \}   \ottsym{,}  \ottnt{e'_{{\mathrm{1}}}} \,  \rangle\!\rangle  \,  ^{ \ell }  \,  \mathrel{ \simeq }  \,  \langle\!\langle  \,  \{  \mathit{x}  \mathord{:}  \ottnt{T_{{\mathrm{2}}}}   \mathop{\mid}   \ottnt{e_{{\mathrm{2}}}}  \}   \ottsym{,}  \ottnt{e'_{{\mathrm{2}}}} \,  \rangle\!\rangle  \,  ^{ \ell }   \ottsym{:}   \{  \mathit{x}  \mathord{:}  \ottnt{T_{{\mathrm{1}}}}   \mathop{\mid}   \ottnt{e_{{\mathrm{1}}}}  \} $.

 \proof

 It suffices to show that, if
 $  \{  \mathit{x}  \mathord{:}  \ottnt{T_{{\mathrm{1}}}}   \mathop{\mid}   \ottnt{e_{{\mathrm{1}}}}  \}   \simeq   \{  \mathit{x}  \mathord{:}  \ottnt{T_{{\mathrm{1}}}}   \mathop{\mid}   \ottnt{e_{{\mathrm{1}}}}  \}    \ottsym{:}   \ast ;  \theta ;  \delta $ and
 $  \{  \mathit{x}  \mathord{:}  \ottnt{T_{{\mathrm{1}}}}   \mathop{\mid}   \ottnt{e_{{\mathrm{1}}}}  \}   \simeq   \{  \mathit{x}  \mathord{:}  \ottnt{T_{{\mathrm{2}}}}   \mathop{\mid}   \ottnt{e_{{\mathrm{2}}}}  \}    \ottsym{:}   \ast ;  \theta ;  \delta $ and
 $ \ottnt{e'_{{\mathrm{1}}}}  \simeq_{\mathtt{e} }  \ottnt{e'_{{\mathrm{2}}}}   \ottsym{:}   \ottnt{T_{{\mathrm{1}}}} ;  \theta ;  \delta $,
 then
 $  \langle\!\langle  \,  \theta_{{\mathrm{1}}}  (   \delta_{{\mathrm{1}}}  (   \{  \mathit{x}  \mathord{:}  \ottnt{T_{{\mathrm{1}}}}   \mathop{\mid}   \ottnt{e_{{\mathrm{1}}}}  \}   )   )   \ottsym{,}  \ottnt{e'_{{\mathrm{1}}}} \,  \rangle\!\rangle  \,  ^{ \ell }   \simeq_{\mathtt{e} }   \langle\!\langle  \,  \theta_{{\mathrm{2}}}  (   \delta_{{\mathrm{2}}}  (   \{  \mathit{x}  \mathord{:}  \ottnt{T_{{\mathrm{2}}}}   \mathop{\mid}   \ottnt{e_{{\mathrm{2}}}}  \}   )   )   \ottsym{,}  \ottnt{e'_{{\mathrm{2}}}} \,  \rangle\!\rangle  \,  ^{ \ell }    \ottsym{:}    \{  \mathit{x}  \mathord{:}  \ottnt{T_{{\mathrm{1}}}}   \mathop{\mid}   \ottnt{e_{{\mathrm{1}}}}  \}  ;  \theta ;  \delta $.
 If $\ottnt{e'_{{\mathrm{1}}}}$ and $\ottnt{e'_{{\mathrm{2}}}}$ raise blame, then obvious.
 Otherwise, $\ottnt{e'_{{\mathrm{1}}}}  \longrightarrow^{\ast}  \ottnt{v'_{{\mathrm{1}}}}$ and $\ottnt{e'_{{\mathrm{2}}}}  \longrightarrow^{\ast}  \ottnt{v'_{{\mathrm{2}}}}$
 for some $\ottnt{v'_{{\mathrm{1}}}}$ and $\ottnt{v'_{{\mathrm{2}}}}$, and it suffices to show that
 \[
    \theta_{{\mathrm{1}}}  (   \delta_{{\mathrm{1}}}  (  \langle   \{  \mathit{x}  \mathord{:}  \ottnt{T_{{\mathrm{1}}}}   \mathop{\mid}   \ottnt{e_{{\mathrm{1}}}}  \}   \ottsym{,}  \ottnt{e_{{\mathrm{1}}}} \, [  \ottnt{v'_{{\mathrm{1}}}}  \ottsym{/}  \mathit{x}  ]  \ottsym{,}  \ottnt{v'_{{\mathrm{1}}}}  \rangle   ^{ \ell }   )   )   \simeq_{\mathtt{e} }   \theta_{{\mathrm{2}}}  (   \delta_{{\mathrm{2}}}  (  \langle   \{  \mathit{x}  \mathord{:}  \ottnt{T_{{\mathrm{2}}}}   \mathop{\mid}   \ottnt{e_{{\mathrm{2}}}}  \}   \ottsym{,}  \ottnt{e_{{\mathrm{2}}}} \, [  \ottnt{v'_{{\mathrm{2}}}}  \ottsym{/}  \mathit{x}  ]  \ottsym{,}  \ottnt{v'_{{\mathrm{2}}}}  \rangle   ^{ \ell }   )   )    \ottsym{:}    \{  \mathit{x}  \mathord{:}  \ottnt{T_{{\mathrm{1}}}}   \mathop{\mid}   \ottnt{e_{{\mathrm{1}}}}  \}  ;  \theta ;  \delta .
 \]
 We have $ \ottnt{v'_{{\mathrm{1}}}}  \simeq_{\mathtt{v} }  \ottnt{v'_{{\mathrm{2}}}}   \ottsym{:}   \ottnt{T_{{\mathrm{1}}}} ;  \theta ;  \delta $.
 Since $  \{  \mathit{x}  \mathord{:}  \ottnt{T_{{\mathrm{1}}}}   \mathop{\mid}   \ottnt{e_{{\mathrm{1}}}}  \}   \simeq   \{  \mathit{x}  \mathord{:}  \ottnt{T_{{\mathrm{2}}}}   \mathop{\mid}   \ottnt{e_{{\mathrm{2}}}}  \}    \ottsym{:}   \ast ;  \theta ;  \delta $,
 we have $  \theta_{{\mathrm{1}}}  (   \delta_{{\mathrm{1}}}  (  \ottnt{e_{{\mathrm{1}}}} \, [  \ottnt{v'_{{\mathrm{1}}}}  \ottsym{/}  \mathit{x}  ]  )   )   \simeq_{\mathtt{e} }   \theta_{{\mathrm{2}}}  (   \delta_{{\mathrm{2}}}  (  \ottnt{e_{{\mathrm{2}}}} \, [  \ottnt{v'_{{\mathrm{2}}}}  \ottsym{/}  \mathit{x}  ]  )   )    \ottsym{:}    \mathsf{Bool}  ;  \theta ;  \delta $.
 The remaining proceeds as in active check
 (\prop:ref{fh-lr-comp-acheck-refl-assump}).
\end{prop}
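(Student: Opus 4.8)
The plan is to follow the pattern of the preceding compatibility lemmas and reduce the open statement to the closed term relation. Fix $\theta$ and $\delta$ with $\Gamma \vdash \theta ; \delta$; choosing $x \notin \mathit{dom}(\delta)$ so that $\theta_i(\delta_i(e[v/x])) = \theta_i(\delta_i(e))[v/x]$ for closed $v$, and noting that substitutions distribute over the waiting-check constructor, it suffices to prove
\[ \langle\!\langle \, \theta_1(\delta_1(\{ x \mathord{:} T_1 \mid e_1 \})) , \theta_1(\delta_1(e'_1)) \, \rangle\!\rangle \, ^{\ell} \ \simeq_{\mathtt{e}} \ \langle\!\langle \, \theta_2(\delta_2(\{ x \mathord{:} T_2 \mid e_2 \})) , \theta_2(\delta_2(e'_2)) \, \rangle\!\rangle \, ^{\ell} \ : \ \{ x \mathord{:} T_1 \mid e_1 \} ; \theta ; \delta . \]
The well-typedness side condition on the left is discharged by \T{WCheck}, using $\Gamma \vdash \{ x \mathord{:} T_1 \mid e_1 \}$ (from clause (1) of $\Gamma \vdash \{ x \mathord{:} T_1 \mid e_1 \} \simeq \{ x \mathord{:} T_1 \mid e_1 \} : \ast$) and $\Gamma \vdash e'_1 : T_1$ (from $\Gamma \vdash e'_1 \simeq e'_2 : T_1$); the free-variable condition on the right follows from the same premises.

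Next I would use $\Gamma \vdash e'_1 \simeq e'_2 : T_1$, i.e.\ $\theta_1(\delta_1(e'_1)) \simeq_{\mathtt{e}} \theta_2(\delta_2(e'_2)) : T_1 ; \theta ; \delta$, and split on its two clauses. If both sides raise blame, then the evaluation context $\langle\!\langle \, \{ x \mathord{:} T \mid e \} , [\,] \, \rangle\!\rangle \, ^{\ell}$ propagates it by \E{Blame}, so both waiting checks raise blame with the same label and we are done. Otherwise $\theta_i(\delta_i(e'_i)) \longrightarrow^{\ast} v'_i$ with $v'_1 \simeq_{\mathtt{v}} v'_2 : T_1 ; \theta ; \delta$, and each waiting check reduces by \E{Red}/\R{Check} to the active check $\langle \, \theta_i(\delta_i(\{ x \mathord{:} T_i \mid e_i \})) , \theta_i(\delta_i(e_i))[v'_i/x] , v'_i \, \rangle^{\ell}$. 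It then remains to relate these two active checks at $\{ x \mathord{:} T_1 \mid e_1 \}$, which is exactly the situation handled by \prop:ref{fh-lr-comp-acheck-refl-assump}: its hypotheses hold because the bodies $\theta_1(\delta_1(e_1))[v'_1/x]$ and $\theta_2(\delta_2(e_2))[v'_2/x]$ are related at $\mathsf{Bool}$ by the refinement clause of $\{ x \mathord{:} T_1 \mid e_1 \} \simeq \{ x \mathord{:} T_2 \mid e_2 \}$ instantiated at $v'_1 \simeq_{\mathtt{v}} v'_2 : T_1$, and $\theta_1(\delta_1(e_1[v'_1/x])) \longrightarrow^{\ast} \theta_1(\delta_1(e_1))[v'_1/x]$ holds trivially.

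Spelling out the active-check step, the two $\mathsf{Bool}$-related bodies evaluate to the same constant. If both reduce to $\mathsf{false}$, both active checks raise blame by \R{Fail}; if both reduce to $\mathsf{true}$, they return $v'_1$ and $v'_2$ by \R{OK}, and the obligation becomes $v'_1 \simeq_{\mathtt{v}} v'_2 : \{ x \mathord{:} T_1 \mid e_1 \} ; \theta ; \delta$. Unfolding this needs (i) $v'_1 \simeq_{\mathtt{v}} v'_2 : T_1$, which we have; (ii) $\theta_1(\delta_1(e_1))[v'_1/x] \longrightarrow^{\ast} \mathsf{true}$, which is the current branch assumption; and (iii) $\theta_2(\delta_2(e_1))[v'_2/x] \longrightarrow^{\ast} \mathsf{true}$. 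I expect (iii) to be the main obstacle: the result type carries the \emph{left} refinement $e_1$, whereas the right-hand active check tested its value against $e_2$, so satisfaction of $e_1$ by $v'_2$ is not automatic. This is precisely where the self-relatedness hypothesis $\Gamma \vdash \{ x \mathord{:} T_1 \mid e_1 \} \simeq \{ x \mathord{:} T_1 \mid e_1 \} : \ast$ is indispensable: with $v'_1 \simeq_{\mathtt{v}} v'_2 : T_1$ its refinement clause yields $\theta_1(\delta_1(e_1))[v'_1/x] \simeq_{\mathtt{e}} \theta_2(\delta_2(e_1))[v'_2/x] : \mathsf{Bool}$, and since terms related at $\mathsf{Bool}$ reduce to the same constant and the left side reaches $\mathsf{true}$, so does the right side, giving (iii). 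The only routine points are the substitution bookkeeping noted above and, if one prefers to inline rather than cite \prop:ref{fh-lr-comp-acheck-refl-assump}, reproducing its short case analysis verbatim.
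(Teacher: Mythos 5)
Your proposal matches the paper's proof: reduce to the closed term relation, split on blame versus values for $\ottnt{e'_{{\mathrm{1}}}},\ottnt{e'_{{\mathrm{2}}}}$, step the waiting checks to active checks via \R{Check}, use the relatedness of $ \{  \mathit{x}  \mathord{:}  \ottnt{T_{{\mathrm{1}}}}   \mathop{\mid}   \ottnt{e_{{\mathrm{1}}}}  \} $ and $ \{  \mathit{x}  \mathord{:}  \ottnt{T_{{\mathrm{2}}}}   \mathop{\mid}   \ottnt{e_{{\mathrm{2}}}}  \} $ to relate the check bodies at $ \mathsf{Bool} $, and finish exactly as in the active-check lemma, where the self-relatedness hypothesis supplies satisfaction of $\ottnt{e_{{\mathrm{1}}}}$ by the right-hand value. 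You have correctly identified the one non-routine point (your step (iii)) and resolved it the same way the paper does.
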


\begin{prop}[name=Compatibility: Exact]{fh-lr-comp-exact-refl-assump}
 Suppose that $ \mathord{ \vdash } ~  \Gamma $ and $ \emptyset \vdash  \{  \mathit{x}  \mathord{:}  \ottnt{T}   \mathop{\mid}   \ottnt{e}  \}   \mathrel{ \simeq }   \{  \mathit{x}  \mathord{:}  \ottnt{T}   \mathop{\mid}   \ottnt{e}  \}   : \ast $.
 If
 $\emptyset  \vdash  \ottnt{v_{{\mathrm{1}}}} \,  \mathrel{ \simeq }  \, \ottnt{v_{{\mathrm{2}}}}  \ottsym{:}  \ottnt{T}$ and
 $\ottnt{e} \, [  \ottnt{v_{{\mathrm{1}}}}  \ottsym{/}  \mathit{x}  ]  \longrightarrow^{\ast}   \mathsf{true} $,
 then
 $\Gamma  \vdash  \ottnt{v_{{\mathrm{1}}}} \,  \mathrel{ \simeq }  \, \ottnt{v_{{\mathrm{2}}}}  \ottsym{:}   \{  \mathit{x}  \mathord{:}  \ottnt{T}   \mathop{\mid}   \ottnt{e}  \} $.

 \proof

 By the weakening (\prop:ref{fh-lr-val-ws,fh-lr-typ-ws}),
 it suffices to show that
 $ \ottnt{v_{{\mathrm{1}}}}  \simeq_{\mathtt{v} }  \ottnt{v_{{\mathrm{2}}}}   \ottsym{:}    \{  \mathit{x}  \mathord{:}  \ottnt{T}   \mathop{\mid}   \ottnt{e}  \}  ;  \emptyset ;  \emptyset $.
 Since $ \ottnt{v_{{\mathrm{1}}}}  \simeq_{\mathtt{v} }  \ottnt{v_{{\mathrm{2}}}}   \ottsym{:}   \ottnt{T} ;  \emptyset ;  \emptyset $ and
 $\ottnt{e} \, [  \ottnt{v_{{\mathrm{1}}}}  \ottsym{/}  \mathit{x}  ]  \longrightarrow^{\ast}   \mathsf{true} $,
 it suffices to show that $\ottnt{e} \, [  \ottnt{v_{{\mathrm{2}}}}  \ottsym{/}  \mathit{x}  ]  \longrightarrow^{\ast}   \mathsf{true} $.
 Since $ \mathit{x}  \mathord{:}  \ottnt{T}   \vdash  \emptyset  \ottsym{;}   [  \,  (  \ottnt{v_{{\mathrm{1}}}}  ,  \ottnt{v_{{\mathrm{2}}}}  ) /  \mathit{x}  \,  ] $ and
 $ \mathit{x}  \mathord{:}  \ottnt{T}   \vdash  \ottnt{e} \,  \mathrel{ \simeq }  \, \ottnt{e}  \ottsym{:}   \mathsf{Bool} $,
 we have $ \ottnt{e} \, [  \ottnt{v_{{\mathrm{1}}}}  \ottsym{/}  \mathit{x}  ]  \simeq_{\mathtt{e} }  \ottnt{e} \, [  \ottnt{v_{{\mathrm{2}}}}  \ottsym{/}  \mathit{x}  ]   \ottsym{:}    \mathsf{Bool}  ;  \emptyset ;   [  \,  (  \ottnt{v_{{\mathrm{1}}}}  ,  \ottnt{v_{{\mathrm{2}}}}  ) /  \mathit{x}  \,  ]  $.
 Since $\ottnt{e} \, [  \ottnt{v_{{\mathrm{1}}}}  \ottsym{/}  \mathit{x}  ]  \longrightarrow^{\ast}   \mathsf{true} $,
 we have $\ottnt{e} \, [  \ottnt{v_{{\mathrm{2}}}}  \ottsym{/}  \mathit{x}  ]  \longrightarrow^{\ast}   \mathsf{true} $.
\end{prop}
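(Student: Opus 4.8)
The plan is to reduce the claim—which is stated under an arbitrary well-formed context $\Gamma$—to a statement about the closed value relation under empty valuations, and then to use the self-relatedness assumption on the refinement type to transport satisfaction of the refinement from $v_1$ across to its partner $v_2$. First I would apply the weakening lemmas (\prop:ref{fh-lr-val-ws} and \prop:ref{fh-lr-typ-ws}) to reduce the goal $\Gamma \vdash v_1 \simeq v_2 : \{x : T \mid e\}$ to the closed instance $v_1 \simeq_{\mathtt{v}} v_2 : \{x : T \mid e\}; \emptyset; \emptyset$; this is legitimate because $v_1$ and $v_2$ are closed and all the hypotheses are already given under $\emptyset$.

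Unfolding the definition of the value relation at a refinement type, this goal splits into three obligations: (i) $v_1 \simeq_{\mathtt{v}} v_2 : T; \emptyset; \emptyset$, (ii) $e[v_1/x] \longrightarrow^{\ast} \mathsf{true}$, and (iii) $e[v_2/x] \longrightarrow^{\ast} \mathsf{true}$. Obligation (i) is immediate from the hypothesis $\emptyset \vdash v_1 \simeq v_2 : T$ by taking the empty type interpretation and value assignment (note that, since $v_1, v_2$ are already values, the term relation collapses to the value relation), and (ii) is exactly the third hypothesis. Hence the only real work lies in establishing (iii).

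The essential step handles (iii). Unfolding the self-relatedness assumption $\emptyset \vdash \{x : T \mid e\} \simeq \{x : T \mid e\} : \ast$ at the empty valuations gives the type-relation instance $\{x : T \mid e\} \simeq \{x : T \mid e\} : \ast; \emptyset; \emptyset$, whose refinement clause asserts that for every pair of values related at the underlying type $T$, the two instantiated predicates are related at $\mathsf{Bool}$. Instantiating this clause with the pair $(v_1, v_2)$—which is licensed precisely by obligation (i)—yields $e[v_1/x] \simeq_{\mathtt{e}} e[v_2/x] : \mathsf{Bool}; \emptyset; \emptyset$. Since two terms related at $\mathsf{Bool}$ evaluate to the very same constant (or both raise blame), and $e[v_1/x] \longrightarrow^{\ast} \mathsf{true}$ by (ii), I conclude $e[v_2/x] \longrightarrow^{\ast} \mathsf{true}$, discharging (iii). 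Equivalently, one may first extract $x : T \vdash e \simeq e : \mathsf{Bool}$ from the self-relatedness of the refinement type and instantiate it at the value assignment $[(v_1,v_2)/x]$, as is done for the analogous active-check case.

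I do not anticipate a serious obstacle: this is one of the lighter compatibility lemmas, and the heavy machinery (cotermination, term and type compositionality, the equivalence-respecting property) has already been deployed for the harder application cases. The one point that genuinely requires the self-relatedness hypothesis is obligation (iii): without knowing that the predicate $e$ is logically related to itself, there would be no way to transfer the truth of the refinement from $v_1$ to $v_2$, since $v_2$ is only assumed closed and need not even be well typed. The remaining subtleties are routine bookkeeping, namely confirming that the weakening lemmas apply cleanly to both the term- and type-level data and checking that the refinement clause of the type relation is read off correctly under the empty valuations.
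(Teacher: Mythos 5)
Your proposal is correct and follows essentially the same route as the paper: weaken to the closed instance, discharge the underlying-type and $v_1$-satisfaction obligations directly from the hypotheses, and use the self-relatedness of the refinement (equivalently, $ \mathit{x}  \mathord{:}  \ottnt{T}   \vdash  \ottnt{e} \,  \mathrel{ \simeq }  \, \ottnt{e}  \ottsym{:}   \mathsf{Bool} $ instantiated at $ [  \,  (  \ottnt{v_{{\mathrm{1}}}}  ,  \ottnt{v_{{\mathrm{2}}}}  ) /  \mathit{x}  \,  ] $) to transfer satisfaction of the predicate to $\ottnt{v_{{\mathrm{2}}}}$ via relatedness at $ \mathsf{Bool} $. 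The two readings of the self-relatedness hypothesis you mention are interchangeable here, and the paper uses the second one.
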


\begin{prop}[name=Compatibility: Forget]{fh-lr-comp-forget}
 \label{lem:fh-lr-other-end}
 If $ \mathord{ \vdash } ~  \Gamma $ and $\emptyset  \vdash  \ottnt{v_{{\mathrm{1}}}} \,  \mathrel{ \simeq }  \, \ottnt{v_{{\mathrm{2}}}}  \ottsym{:}   \{  \mathit{x}  \mathord{:}  \ottnt{T}   \mathop{\mid}   \ottnt{e}  \} $,
 then $\Gamma  \vdash  \ottnt{v_{{\mathrm{1}}}} \,  \mathrel{ \simeq }  \, \ottnt{v_{{\mathrm{2}}}}  \ottsym{:}  \ottnt{T}$.

 \proof

 Straightforward by definition.
\end{prop}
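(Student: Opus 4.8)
The final statement is Lemma~\prop:ref{fh-lr-comp-forget} (Compatibility: Forget), which asserts that if $\emptyset \vdash \ottnt{v_{{\mathrm{1}}}} \mathrel{\simeq} \ottnt{v_{{\mathrm{2}}}} : \{ \mathit{x} \mathord{:} \ottnt{T} \mid \ottnt{e} \}$ then $\Gamma \vdash \ottnt{v_{{\mathrm{1}}}} \mathrel{\simeq} \ottnt{v_{{\mathrm{2}}}} : \ottnt{T}$ (given $\mathord{\vdash} \Gamma$). The paper itself labels the proof ``Straightforward by definition,'' and I agree: this is a routine unfolding that contrasts sharply with the self-relatedness-laden cases for application and cast.

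The plan is to unfold the definition of the logical relation on both sides. To conclude $\Gamma \vdash \ottnt{v_{{\mathrm{1}}}} \mathrel{\simeq} \ottnt{v_{{\mathrm{2}}}} : \ottnt{T}$, I must establish three things: (1) $\Gamma \vdash \ottnt{v_{{\mathrm{1}}}} : \ottnt{T}$, (2) $\mathit{FV}(\ottnt{v_{{\mathrm{2}}}}) \cup \mathit{FTV}(\ottnt{v_{{\mathrm{2}}}}) \subseteq \mathit{dom}(\Gamma)$, and (3) for every $\theta, \delta$ with $\Gamma \vdash \theta ; \delta$, the closed-value relation $\theta_{{\mathrm{1}}}(\delta_{{\mathrm{1}}}(\ottnt{v_{{\mathrm{1}}}})) \simeq_{\mathtt{v}} \theta_{{\mathrm{2}}}(\delta_{{\mathrm{2}}}(\ottnt{v_{{\mathrm{2}}}})) : \ottnt{T}; \theta; \delta$. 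The hypothesis $\emptyset \vdash \ottnt{v_{{\mathrm{1}}}} \mathrel{\simeq} \ottnt{v_{{\mathrm{2}}}} : \{ \mathit{x} \mathord{:} \ottnt{T} \mid \ottnt{e} \}$ gives me the analogous three facts at the refinement type and under the empty context.

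First I would handle (1): from $\emptyset \vdash \ottnt{v_{{\mathrm{1}}}} : \{ \mathit{x} \mathord{:} \ottnt{T} \mid \ottnt{e} \}$ I obtain $\emptyset \vdash \ottnt{v_{{\mathrm{1}}}} : \ottnt{T}$ by the run-time typing rule \T{Forget}, and then weaken to $\Gamma \vdash \ottnt{v_{{\mathrm{1}}}} : \ottnt{T}$ using $\mathord{\vdash} \Gamma$ and term weakening (\prop:ref{fh-weak-term}). Fact (2) is immediate, since the free-variable condition on $\ottnt{v_{{\mathrm{2}}}}$ is unchanged by the type at which the relation is taken; because $\ottnt{v_{{\mathrm{1}}}}$ and $\ottnt{v_{{\mathrm{2}}}}$ are already closed under the empty context, their free variables trivially sit inside $\mathit{dom}(\Gamma)$. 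The heart of the argument is (3): here I observe that the value relation at a refinement type is \emph{defined} so that $\ottnt{v_{{\mathrm{1}}}} \simeq_{\mathtt{v}} \ottnt{v_{{\mathrm{2}}}} : \{ \mathit{x} \mathord{:} \ottnt{T} \mid \ottnt{e} \}; \theta; \delta$ holds if and only if $\ottnt{v_{{\mathrm{1}}}} \simeq_{\mathtt{e}} \ottnt{v_{{\mathrm{2}}}} : \ottnt{T}; \theta; \delta$ together with the two refinement-satisfaction conditions on $\ottnt{e}$. Since the closed instances $\theta_{{\mathrm{1}}}(\delta_{{\mathrm{1}}}(\ottnt{v_{{\mathrm{1}}}}))$ and $\theta_{{\mathrm{2}}}(\delta_{{\mathrm{2}}}(\ottnt{v_{{\mathrm{2}}}}))$ are values, the term relation at $\ottnt{T}$ collapses to the value relation at $\ottnt{T}$, and the needed fact drops out of the first conjunct of the refinement relation — I simply discard the refinement-satisfaction conjuncts. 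Technically I must also invoke value weakening/strengthening (\prop:ref{fh-lr-val-ws}) to move between the empty context and $\Gamma$ at the level of $\theta; \delta$, but because $\ottnt{v_{{\mathrm{1}}}}, \ottnt{v_{{\mathrm{2}}}}$ are closed this bookkeeping is purely mechanical. There is no real obstacle: unlike the application and cast cases, forgetting a refinement strictly weakens the relational requirement, so the proof is a one-directional projection with no self-relatedness assumption needed.
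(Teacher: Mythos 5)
Your proof is correct and matches the paper's intent: the paper's one-line "straightforward by definition" is exactly the projection you carry out, extracting the first conjunct of the value relation at the refinement type (which is literally the relation at the underlying type) and handling the typing side with \T{Forget}. The only cosmetic simplification available is that \T{Forget} already concludes $\Gamma  \vdash  \ottnt{v_{{\mathrm{1}}}}  \ottsym{:}  \ottnt{T}$ directly from $ \mathord{ \vdash } ~  \Gamma $ and $\emptyset  \vdash  \ottnt{v_{{\mathrm{1}}}}  \ottsym{:}   \{  \mathit{x}  \mathord{:}  \ottnt{T}   \mathop{\mid}   \ottnt{e}  \} $, so the separate weakening step for (1) is unnecessary.
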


\paragraph{\bf Parametricity}
\label{sec:proving-param}
Before showing the parametricity, we prove that the logical relation for open types
is closed under type substitution, which is needed to show that $\Gamma  \vdash  \ottnt{e}  \ottsym{:}  \ottnt{T}$
implies $ \Gamma \vdash \ottnt{T}  \mathrel{ \simeq }  \ottnt{T}  : \ast $.
{\iffull
\begin{prop}{fh-lr-typ-comp-ctx}
 \label{lem:fh-lr-typ-comp-ctx}
 Suppose $\Gamma  \ottsym{,}  \alpha  \ottsym{,}  \Gamma'$ is self-related and
 that $ \Gamma \vdash \ottnt{T_{{\mathrm{1}}}}  \mathrel{ \simeq }  \ottnt{T_{{\mathrm{1}}}}  : \ast $ and $ \Gamma \vdash \ottnt{T_{{\mathrm{1}}}}  \mathrel{ \simeq }  \ottnt{T_{{\mathrm{2}}}}  : \ast $.
 Let
 $\ottnt{r} = \{ \ottsym{(}  \ottnt{v_{{\mathrm{1}}}}  \ottsym{,}  \ottnt{v_{{\mathrm{2}}}}  \ottsym{)} \mid  \ottnt{v_{{\mathrm{1}}}}  \simeq_{\mathtt{v} }  \ottnt{v_{{\mathrm{2}}}}   \ottsym{:}   \ottnt{T_{{\mathrm{1}}}} ;  \theta ;  \delta  \}$.
 If $\Gamma  \ottsym{,}  \Gamma'  [  \ottnt{T_{{\mathrm{1}}}}  \ottsym{/}  \alpha  ]  \vdash  \theta  \ottsym{;}  \delta$,
 then $\Gamma  \ottsym{,}  \alpha  \ottsym{,}  \Gamma'  \vdash  \theta \,  \{  \,  \alpha  \mapsto  \ottnt{r} ,  \theta_{{\mathrm{1}}}  (   \delta_{{\mathrm{1}}}  (  \ottnt{T_{{\mathrm{1}}}}  )   )  ,  \theta_{{\mathrm{2}}}  (   \delta_{{\mathrm{2}}}  (  \ottnt{T_{{\mathrm{2}}}}  )   )   \,  \}   \ottsym{;}  \delta$.

 \proof

 Straightforward by induction on $\Gamma'$ with
 the type compositionality (\prop:ref{fh-lr-typ-comp}) and
 the fact that $\langle  \ottnt{r}  \ottsym{,}   \theta_{{\mathrm{1}}}  (   \delta_{{\mathrm{1}}}  (  \ottnt{T_{{\mathrm{1}}}}  )   )   \ottsym{,}   \theta_{{\mathrm{2}}}  (   \delta_{{\mathrm{2}}}  (  \ottnt{T_{{\mathrm{2}}}}  )   )   \rangle$
 (\prop:ref{fh-lr-well-formed-interpret}).
\end{prop}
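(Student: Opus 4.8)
The plan is to prove the claim by induction on the structure of $\Gamma'$, directly unfolding the definition of $\Gamma, \alpha, \Gamma' \vdash \theta'; \delta$, where $\theta' = \theta\{\alpha \mapsto (r, \theta_1(\delta_1(T_1)), \theta_2(\delta_2(T_2)))\}$. A preliminary observation I will make and reuse in every case is that, since $\Gamma, \Gamma'[T_1/\alpha] \vdash \theta; \delta$ and the substitution $[T_1/\alpha]$ affects only the bindings of $\Gamma'$, restricting to the prefix $\Gamma$ yields $\Gamma \vdash \theta; \delta$. Combined with the hypotheses $\Gamma \vdash T_1 \simeq T_1 : \ast$ and $\Gamma \vdash T_1 \simeq T_2 : \ast$, this gives the closed facts $ \ottnt{T_{{\mathrm{1}}}}  \simeq  \ottnt{T_{{\mathrm{1}}}}   \ottsym{:}   \ast ;  \theta ;  \delta $ and $ \ottnt{T_{{\mathrm{1}}}}  \simeq  \ottnt{T_{{\mathrm{2}}}}   \ottsym{:}   \ast ;  \theta ;  \delta $, from which \prop:ref{fh-lr-well-formed-interpret} establishes once and for all that $\langle r, \theta_1(\delta_1(T_1)), \theta_2(\delta_2(T_2)) \rangle$; that is, the triple assigned to $\alpha$ in $\theta'$ is a legitimate interpretation.

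For the base case $\Gamma' = \emptyset$ the goal is $\Gamma, \alpha \vdash \theta'; \delta$. The type-variable condition is immediate since $\alpha \in \mathit{dom}(\theta')$ and $\mathit{dom}(\theta) \subseteq \mathit{dom}(\theta')$. For the term-variable condition, each $x{:}T \in \Gamma$ already satisfies $ {}   \delta_{{\mathrm{1}}}  (  \mathit{x}  )   {}  \simeq_{\mathtt{v} }  {}   \delta_{{\mathrm{2}}}  (  \mathit{x}  )   {}   \ottsym{:}   \ottnt{T} ;  \theta ;  \delta $, and because $\alpha$ is bound after all of $\Gamma$ it is fresh for $T$ (so $\alpha \notin \mathit{FTV}(T)$); type weakening (\prop:ref{fh-lr-typ-ws}, read on values and fed the well-formedness fact above) then transports this relation to $\theta'$. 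The case $\Gamma' = \Gamma'', \beta$ for a type variable is equally routine: the induction hypothesis supplies $\Gamma, \alpha, \Gamma'' \vdash \theta'; \delta$, and the only new obligation $\beta \in \mathit{dom}(\theta')$ follows from $\beta \in \mathit{dom}(\theta)$.

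The substantive case is $\Gamma' = \Gamma'', x{:}T$, where $\Gamma, \Gamma'[T_1/\alpha] = \Gamma, \Gamma''[T_1/\alpha], x{:}(T[T_1/\alpha])$. From the hypothesis I will extract $\Gamma, \Gamma''[T_1/\alpha] \vdash \theta; \delta$ (by dropping the last binding, to invoke the induction hypothesis and obtain $\Gamma, \alpha, \Gamma'' \vdash \theta'; \delta$) together with $ {}   \delta_{{\mathrm{1}}}  (  \mathit{x}  )   {}  \simeq_{\mathtt{v} }  {}   \delta_{{\mathrm{2}}}  (  \mathit{x}  )   {}   \ottsym{:}   \ottnt{T} \, [  \ottnt{T_{{\mathrm{1}}}}  \ottsym{/}  \alpha  ] ;  \theta ;  \delta $. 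It then remains to convert the latter into $ {}   \delta_{{\mathrm{1}}}  (  \mathit{x}  )   {}  \simeq_{\mathtt{v} }  {}   \delta_{{\mathrm{2}}}  (  \mathit{x}  )   {}   \ottsym{:}   \ottnt{T} ;  \theta' ;  \delta $, which is exactly an instance of type compositionality (\prop:ref{fh-lr-typ-comp}) read on values (a value relation being a term relation whose two sides evaluate to themselves): instantiating the arbitrary right-hand type of that lemma with $\theta_2(\delta_2(T_2))$ makes its interpretation $\theta\{\alpha \mapsto (r, \theta_1(\delta_1(T_1)), \theta_2(\delta_2(T_2)))\}$ coincide with our $\theta'$.

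The main obstacle — and essentially the only place where the argument is more than bookkeeping — will be discharging the side conditions of \prop:ref{fh-lr-typ-comp} at this step. It demands that $\Gamma, \alpha, \Gamma''$ be self-related and that $\Gamma, \alpha, \Gamma'' \vdash T \simeq T : \ast$; both will be read off the self-relatedness hypothesis on the full context $\Gamma, \alpha, \Gamma'', x{:}T$, since dropping a binding preserves self-relatedness and the binding $x{:}T$ itself forces $\Gamma, \alpha, \Gamma'' \vdash T \simeq T : \ast$ by the very definition of a self-related context. Its remaining premises, $ \ottnt{T_{{\mathrm{1}}}}  \simeq  \ottnt{T_{{\mathrm{1}}}}   \ottsym{:}   \ast ;  \theta ;  \delta $ and $\Gamma, \alpha, \Gamma'' \vdash \theta'; \delta$, are the observation of the first paragraph and the induction hypothesis, respectively. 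With these in hand the equivalence of the two value relations is immediate, closing the induction.
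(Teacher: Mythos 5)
Your proposal is correct and follows essentially the same route as the paper: induction on $\Gamma'$, using \prop:ref{fh-lr-well-formed-interpret} to validate the interpretation assigned to $\alpha$ and \prop:ref{fh-lr-typ-comp} (instantiated with $ \theta_{{\mathrm{2}}}  (   \delta_{{\mathrm{2}}}  (  \ottnt{T_{{\mathrm{2}}}}  )   ) $ as the right-hand type) to transport each value-relation obligation from index $\ottnt{T} \, [  \ottnt{T_{{\mathrm{1}}}}  \ottsym{/}  \alpha  ]$ under $\theta$ to index $\ottnt{T}$ under the extended interpretation. The paper leaves these details implicit as ``straightforward,'' and your handling of the side conditions (self-relatedness of prefixes and extraction of $ \Gamma  \ottsym{,}  \alpha  \ottsym{,}  \Gamma'' \vdash \ottnt{T}  \mathrel{ \simeq }  \ottnt{T}  : \ast $ from the self-relatedness hypothesis) is exactly what is needed.
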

\fi}
\begin{prop}
 [name=Type Substitutivity in Type under Self-relatedness Assumption]
 {fh-lr-typ-subst-typ-refl-assump}
 \label{lem:fh-lr-typ-subst-typ-refl-assump}
 Suppose that $\Gamma  \ottsym{,}  \alpha  \ottsym{,}  \Gamma'$ is self-related and that
 $ \Gamma  \ottsym{,}  \alpha  \ottsym{,}  \Gamma' \vdash \ottnt{T_{{\mathrm{11}}}}  \mathrel{ \simeq }  \ottnt{T_{{\mathrm{11}}}}  : \ast $ and $ \Gamma \vdash \ottnt{T_{{\mathrm{12}}}}  \mathrel{ \simeq }  \ottnt{T_{{\mathrm{12}}}}  : \ast $.
 If $ \Gamma  \ottsym{,}  \alpha  \ottsym{,}  \Gamma' \vdash \ottnt{T_{{\mathrm{11}}}}  \mathrel{ \simeq }  \ottnt{T_{{\mathrm{21}}}}  : \ast $ and $ \Gamma \vdash \ottnt{T_{{\mathrm{12}}}}  \mathrel{ \simeq }  \ottnt{T_{{\mathrm{22}}}}  : \ast $.,
 then $ \Gamma  \ottsym{,}  \Gamma'  [  \ottnt{T_{{\mathrm{12}}}}  \ottsym{/}  \alpha  ] \vdash \ottnt{T_{{\mathrm{11}}}} \, [  \ottnt{T_{{\mathrm{12}}}}  \ottsym{/}  \alpha  ]  \mathrel{ \simeq }  \ottnt{T_{{\mathrm{21}}}} \, [  \ottnt{T_{{\mathrm{22}}}}  \ottsym{/}  \alpha  ]  : \ast $.

 \proof

 By induction on $\ottnt{T_{{\mathrm{11}}}}$.
 Let $\Gamma  \ottsym{,}  \Gamma'  [  \ottnt{T_{{\mathrm{12}}}}  \ottsym{/}  \alpha  ]  \vdash  \theta  \ottsym{;}  \delta$.
 We show that
 \[
   \ottnt{T_{{\mathrm{11}}}} \, [  \ottnt{T_{{\mathrm{12}}}}  \ottsym{/}  \alpha  ]  \simeq  \ottnt{T_{{\mathrm{21}}}} \, [  \ottnt{T_{{\mathrm{22}}}}  \ottsym{/}  \alpha  ]   \ottsym{:}   \ast ;  \theta ;  \delta .
 \]
 Let
 $\ottnt{r} = \{ \ottsym{(}  \ottnt{v_{{\mathrm{1}}}}  \ottsym{,}  \ottnt{v_{{\mathrm{2}}}}  \ottsym{)} \mid  \ottnt{v_{{\mathrm{1}}}}  \simeq_{\mathtt{v} }  \ottnt{v_{{\mathrm{2}}}}   \ottsym{:}   \ottnt{T_{{\mathrm{12}}}} ;  \theta ;  \delta  \}$ and
 $\theta'  \ottsym{=}  \theta \,  \{  \,  \alpha  \mapsto  \ottnt{r} ,  \theta_{{\mathrm{1}}}  (   \delta_{{\mathrm{1}}}  (  \ottnt{T_{{\mathrm{12}}}}  )   )  ,  \theta_{{\mathrm{2}}}  (   \delta_{{\mathrm{2}}}  (  \ottnt{T_{{\mathrm{22}}}}  )   )   \,  \} $.
 {\iffull
 By \prop:ref{fh-lr-typ-comp-ctx}, $\Gamma  \ottsym{,}  \alpha  \ottsym{,}  \Gamma'  \vdash  \theta'  \ottsym{;}  \delta$.
 \else
 Since $\Gamma  \ottsym{,}  \Gamma'  [  \ottnt{T_{{\mathrm{12}}}}  \ottsym{/}  \alpha  ]  \vdash  \theta  \ottsym{;}  \delta$,
 we have $\Gamma  \ottsym{,}  \alpha  \ottsym{,}  \Gamma'  \vdash  \theta'  \ottsym{;}  \delta$; it is shown by
 induction on $\Gamma'$.
 \fi}
 Thus, $ \ottnt{T_{{\mathrm{11}}}}  \simeq  \ottnt{T_{{\mathrm{21}}}}   \ottsym{:}   \ast ;  \theta' ;  \delta $.
 The remaining is straightforward by case analysis on the derivation of
 $ \ottnt{T_{{\mathrm{11}}}}  \simeq  \ottnt{T_{{\mathrm{21}}}}   \ottsym{:}   \ast ;  \theta' ;  \delta $;
 we need the type compositionality (\prop:ref{fh-lr-typ-comp})
 in the case that both $\ottnt{T_{{\mathrm{11}}}}$ and $\ottnt{T_{{\mathrm{21}}}}$ are refinement types.
\end{prop}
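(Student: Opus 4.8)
The plan is to proceed by induction on the structure of $T_{11}$, reducing the open-type logical relation to the per-valuation type relation and then transporting relatedness across the substitution $[T_{12}/\alpha]$ by means of type compositionality (\prop:ref{fh-lr-typ-comp}). Unfolding the definition of the logical relation on types, the well-formedness obligation $\Gamma, \Gamma'[T_{12}/\alpha] \vdash T_{11}[T_{12}/\alpha]$ follows from $\Gamma, \alpha, \Gamma' \vdash T_{11}$ (a component of the hypothesis $\Gamma, \alpha, \Gamma' \vdash T_{11} \simeq T_{21} : \ast$) and $\Gamma \vdash T_{12}$ by \prop:ref{fh-subst-type}, and the free-variable side condition is routine; so the real content is to fix an arbitrary $\theta, \delta$ with $\Gamma, \Gamma'[T_{12}/\alpha] \vdash \theta; \delta$ and establish $T_{11}[T_{12}/\alpha] \simeq T_{21}[T_{22}/\alpha] : \ast; \theta; \delta$.

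The first step is to reconstruct, at the level of valuations, the interpretation for $\alpha$ that mirrors the substituted type. I would set $r = \{ (v_1, v_2) \mid v_1 \simeq_{\mathtt{v}} v_2 : T_{12}; \theta; \delta \}$ and $\theta' = \theta\{\alpha \mapsto (r, \theta_1(\delta_1(T_{12})), \theta_2(\delta_2(T_{22})))\}$. Because the self-relatedness hypotheses give $\Gamma \vdash T_{12} \simeq T_{12} : \ast$ and $\Gamma \vdash T_{12} \simeq T_{22} : \ast$, the triple $\langle r, \theta_1(\delta_1(T_{12})), \theta_2(\delta_2(T_{22}))\rangle$ is a legitimate interpretation by \prop:ref{fh-lr-well-formed-interpret}. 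The key bookkeeping lemma is to upgrade $\Gamma, \Gamma'[T_{12}/\alpha] \vdash \theta; \delta$ to $\Gamma, \alpha, \Gamma' \vdash \theta'; \delta$, which I would prove by induction on $\Gamma'$: each binding $y : T$ of $\Gamma'$ supplies a value pair related at $T[T_{12}/\alpha]$ under $\theta$, and type compositionality (\prop:ref{fh-lr-typ-comp}) converts this into relatedness at $T$ under $\theta'$. This is precisely where the self-relatedness of $\Gamma, \alpha, \Gamma'$ is consumed, since \prop:ref{fh-lr-typ-comp} requires each such $T$ to be related to itself. Once $\Gamma, \alpha, \Gamma' \vdash \theta'; \delta$ is available, the hypothesis $\Gamma, \alpha, \Gamma' \vdash T_{11} \simeq T_{21} : \ast$ yields $T_{11} \simeq T_{21} : \ast; \theta'; \delta$.

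What remains is to transport $T_{11} \simeq T_{21} : \ast; \theta'; \delta$ to $T_{11}[T_{12}/\alpha] \simeq T_{21}[T_{22}/\alpha] : \ast; \theta; \delta$, which I would carry out by case analysis on the last rule deriving the former, following the shape of $T_{11}$. The base-type case is immediate, and the type-variable case reduces to the observation that $\alpha$ is interpreted in $\theta'$ exactly by $(r, \theta_1(\delta_1(T_{12})), \theta_2(\delta_2(T_{22})))$, so relatedness at $\alpha$ collapses to relatedness at $T_{12}$ and $T_{22}$ under $\theta$; the function and universal cases follow from the inductive hypotheses (the latter after a weakening by \prop:ref{fh-lr-typ-ws} to absorb the fresh interpretation introduced for the bound type variable). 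I expect the main obstacle to be the refinement case $T_{11} = \{ x : T' \mid e \}$: beyond relating the underlying types by the inductive hypothesis, one must show that the two refinement bodies yield the same Boolean verdict, but on one side they are evaluated under $\theta'/\delta$ (with $\alpha$ interpreted indirectly through $r$) while on the other they appear in their $[T_{12}/\alpha]$- and $[T_{22}/\alpha]$-substituted forms under $\theta/\delta$. Reconciling these requires invoking type compositionality once more---now to identify the term relation indexed by $e$ under $\theta'$ with the one indexed by $e[T_{12}/\alpha]$ under $\theta$---together with the fact that terms related at $\mathsf{Bool}$ evaluate to the same value; some care will be needed to ensure the two type substitutions commute with the value substitution for $x$ and to keep the self-relatedness assumptions available for the inductive call.
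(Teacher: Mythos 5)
Your proposal follows the paper's proof essentially step for step: the same induction on $T_{11}$, the same construction of $r$ and $\theta'$ from the term relation at $T_{12}$, the same bookkeeping lemma $\Gamma, \alpha, \Gamma' \vdash \theta'; \delta$ proved by induction on $\Gamma'$ via type compositionality (which is exactly where self-relatedness is consumed), and the same final case analysis with type compositionality invoked again in the refinement case. The argument is correct and no different in substance from the paper's.
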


\begin{prop}[type=thm,name=Parametricity]{fh-lr-param}
 \noindent
 \begin{statements}
  \item(term) If $\Gamma  \vdash  \ottnt{e}  \ottsym{:}  \ottnt{T}$,
    then $\Gamma  \vdash  \ottnt{e} \,  \mathrel{ \simeq }  \, \ottnt{e}  \ottsym{:}  \ottnt{T}$ and $ \Gamma \vdash \ottnt{T}  \mathrel{ \simeq }  \ottnt{T}  : \ast $ and $\Gamma$ is self-related.
  \item(type) If $\Gamma  \vdash  \ottnt{T}$,
    then $ \Gamma \vdash \ottnt{T}  \mathrel{ \simeq }  \ottnt{T}  : \ast $ and $\Gamma$ is self-related.
  \item(tctx) If $ \mathord{ \vdash } ~  \Gamma $,
    then $\Gamma$ is self-related.
 \end{statements}

 \proof

 The three statements are simultaneously proved by induction on the derivations of the judgments with the compatibility lemmas shown above.
 The case of \T{TApp} uses \prop:ref{fh-lr-typ-subst-typ-refl-assump}.
 In the case of \T{App}, we can show $ \Gamma \vdash \ottnt{T}  \mathrel{ \simeq }  \ottnt{T}  : \ast $ by the IH since
 \T{App} has premise $\Gamma  \vdash  \ottnt{T}$.
\end{prop}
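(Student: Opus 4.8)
The plan is to prove the three statements \emph{simultaneously} by induction on the derivations of $\Gamma \vdash \ottnt{e} : \ottnt{T}$, $\Gamma \vdash \ottnt{T}$, and $ \mathord{ \vdash } ~  \Gamma $. Bundling them is essential rather than cosmetic: each compatibility lemma established above carries self-relatedness and type-reflexivity \emph{assumptions}, and in the reflexive instance $\ottnt{e_{{\mathrm{1}}}} = \ottnt{e_{{\mathrm{2}}}} = \ottnt{e}$ the only way to discharge those assumptions is to have them delivered by the induction hypotheses on the subderivations. Each term case will be handled by instantiating the matching compatibility lemma at $\ottnt{e_{{\mathrm{1}}}} = \ottnt{e_{{\mathrm{2}}}}$, each type case by the analogous reasoning for the type relation, and the context cases by unfolding self-relatedness.

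First I would dispatch the context-well-formedness statement. \WF{Empty} is vacuous. For \WF{ExtendVar}, with premises $ \mathord{ \vdash } ~  \Gamma $ and $\Gamma  \vdash  \ottnt{T}$, the (tctx) IH gives that $\Gamma$ is self-related and the (type) IH gives $ \Gamma \vdash \ottnt{T}  \mathrel{ \simeq }  \ottnt{T}  : \ast $; since any split of $ \Gamma  ,  \mathit{x}  \mathord{:}  \ottnt{T} $ either lies within $\Gamma$ or is the new binding, $ \Gamma  ,  \mathit{x}  \mathord{:}  \ottnt{T} $ is self-related. For \WF{ExtendTVar}, adding a type variable introduces no new term binding, and the needed type-reflexivity judgments transfer by type weakening (\prop:ref{fh-lr-typ-ws}), so self-relatedness of $\Gamma$ yields that of $\Gamma  \ottsym{,}  \alpha$.

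Next, for the type statement I would case-split on the type-well-formedness rule; self-relatedness of $\Gamma$ comes from the context premise via the (tctx) IH. The base, variable, function, and universal cases mirror the corresponding clauses of the type relation and follow directly from the IHs. The only interesting case is \WF{Refine}: from $\Gamma  \vdash  \ottnt{T}$ and $ \Gamma  ,  \mathit{x}  \mathord{:}  \ottnt{T}   \vdash  \ottnt{e}  \ottsym{:}   \mathsf{Bool} $, the (type) IH yields $ \Gamma \vdash \ottnt{T}  \mathrel{ \simeq }  \ottnt{T}  : \ast $ and the (term) IH yields $ \Gamma  ,  \mathit{x}  \mathord{:}  \ottnt{T}   \vdash  \ottnt{e} \,  \mathrel{ \simeq }  \, \ottnt{e}  \ottsym{:}   \mathsf{Bool} $, which together give $ \Gamma \vdash  \{  \mathit{x}  \mathord{:}  \ottnt{T}   \mathop{\mid}   \ottnt{e}  \}   \mathrel{ \simeq }   \{  \mathit{x}  \mathord{:}  \ottnt{T}   \mathop{\mid}   \ottnt{e}  \}   : \ast $ by unfolding the refinement clause. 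For the term statement, each typing rule is closed under the appropriate compatibility lemma: \T{Var}, \T{Const}, \T{Abs}, \T{TAbs}, \T{Conv}, \T{Forget} use the unconditional versions, whereas \T{App}, \T{TApp}, \T{Op}, \T{Cast}, \T{WCheck}, \T{ACheck}, \T{Exact} use the self-relatedness-assuming versions, with the side conditions supplied by the IHs on the premises.

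The main obstacle will be discharging those side conditions uniformly, and in particular producing the type-reflexivity conclusion $ \Gamma \vdash \ottnt{T}  \mathrel{ \simeq }  \ottnt{T}  : \ast $ for the \emph{result} type of each term. For most rules this comes from a (type) IH on a well-formedness premise, but the dependent cases are delicate because the result type is a substitution instance. In \T{App} the result $\ottnt{T_{{\mathrm{2}}}} \, [  \ottnt{e_{{\mathrm{2}}}}  \ottsym{/}  \mathit{x}  ]$ is handled precisely by the auxiliary premise $\Gamma  \vdash  \ottnt{T_{{\mathrm{2}}}} \, [  \ottnt{e_{{\mathrm{2}}}}  \ottsym{/}  \mathit{x}  ]$ that \T{App} deliberately carries (whose (type) IH gives the judgment), while in \T{TApp} the result $\ottnt{T} \, [  \ottnt{T_{{\mathrm{2}}}}  \ottsym{/}  \alpha  ]$ requires Type Substitutivity (\prop:ref{fh-lr-typ-subst-typ-refl-assump}) applied to the IHs on the two premises. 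This is exactly where the bundled induction pays off: the self-relatedness of $\Gamma$, the self-reflexivity of the component types, and the type-reflexivity of the result type are all available simultaneously, so no case is left without the hypotheses its compatibility lemma demands.
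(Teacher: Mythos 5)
Your proposal is correct and follows essentially the same route as the paper: a simultaneous induction over the three judgments, dispatching each case with the corresponding compatibility lemma and using the induction hypotheses to discharge the self-relatedness and type-reflexivity side conditions. You also identify the two points the paper singles out — that the result type of \T{App} is covered by its explicit well-formedness premise $\Gamma  \vdash  \ottnt{T_{{\mathrm{2}}}} \, [  \ottnt{e_{{\mathrm{2}}}}  \ottsym{/}  \mathit{x}  ]$, and that \T{TApp} requires the type-substitutivity lemma — so nothing is missing.
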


{\iffull
We does not need to (and could not) show that the logical relation is closed
under term substitution, because \T{App} requires the index type to be well
formed in that premise.
\fi}

\paragraph{\bf Soundness}
\label{sec:proving-sound}

By the parametricity, we can discharge self-relatedness assumptions from the compatibility
lemmas, which leads to the fundamental property, and so we are ready
to show the soundness of the logical relation.
\begin{prop}[name=Adequacy]{fh-lr-behav-rel-term}
 If $ \ottnt{e_{{\mathrm{1}}}}  \simeq_{\mathtt{e} }  \ottnt{e_{{\mathrm{2}}}}   \ottsym{:}   \ottnt{T} ;  \theta ;  \delta $,
 then $\ottnt{e_{{\mathrm{1}}}}  \Downarrow  \ottnt{e_{{\mathrm{2}}}}$.

 \proof

 Obvious.
\end{prop}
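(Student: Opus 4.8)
The plan is to unfold the definition of the term relation and perform a two-way case analysis, relying on determinism of evaluation (\prop:ref{fh-eval-determinism}) to pin down the observable behaviour of each side. Recall that $ \ottnt{e_{{\mathrm{1}}}}  \simeq_{\mathtt{e} }  \ottnt{e_{{\mathrm{2}}}}   \ottsym{:}   \ottnt{T} ;  \theta ;  \delta $ holds exactly when either (a) $\ottnt{e_{{\mathrm{1}}}}  \longrightarrow^{\ast}   \mathord{\Uparrow}  \ell $ and $\ottnt{e_{{\mathrm{2}}}}  \longrightarrow^{\ast}   \mathord{\Uparrow}  \ell $ for a common label $\ell$, or (b) $\ottnt{e_{{\mathrm{1}}}}  \longrightarrow^{\ast}  \ottnt{v_{{\mathrm{1}}}}$ and $\ottnt{e_{{\mathrm{2}}}}  \longrightarrow^{\ast}  \ottnt{v_{{\mathrm{2}}}}$ with $ \ottnt{v_{{\mathrm{1}}}}  \simeq_{\mathtt{v} }  \ottnt{v_{{\mathrm{2}}}}   \ottsym{:}   \ottnt{T} ;  \theta ;  \delta $. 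In either case I must establish the three biconditionals that constitute $\ottnt{e_{{\mathrm{1}}}}  \Downarrow  \ottnt{e_{{\mathrm{2}}}}$: agreement of value-termination ($ \ottnt{e}  \downarrow $), agreement of blame ($ \ottnt{e}  \, \mathord{\Uparrow}  \ell $) for every label, and agreement of being stuck ($ \ottnt{e}  \upharpoonleft $).

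In case (a), both terms reduce to $ \mathord{\Uparrow}  \ell $, so $ \ottnt{e_{{\mathrm{1}}}}  \, \mathord{\Uparrow}  \ell $ and $ \ottnt{e_{{\mathrm{2}}}}  \, \mathord{\Uparrow}  \ell $, discharging the blame biconditional at $\ell$. Since evaluation is deterministic and $ \mathord{\Uparrow}  \ell $ is a normal form, the maximal evaluation sequence of $\ottnt{e_{{\mathrm{1}}}}$ is uniquely determined and ends at $ \mathord{\Uparrow}  \ell $; hence $\ottnt{e_{{\mathrm{1}}}}$ can neither reduce to a value, nor reach $ \mathord{\Uparrow}  \ell' $ for $\ell'  \mathrel{\neq}  \ell$, nor be stuck, and symmetrically for $\ottnt{e_{{\mathrm{2}}}}$. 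Thus the value-termination and stuck biconditionals hold with both sides false, and the blame biconditional holds at every remaining label with both sides false. Case (b) is symmetric: both terms reduce to values, so the value-termination biconditional holds, and determinism again excludes blame and stuck behaviour on both sides, so the other two biconditionals hold vacuously.

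There is essentially no obstacle: the statement is immediate once the term relation is unfolded, the only ingredient beyond the definitions being determinism, which guarantees that reaching one flavour of normal form excludes the others. The interest of the lemma lies not in its proof but in its role as the adequacy step linking the logical relation to observable equivalence; together with the fundamental property obtained by discharging the self-relatedness assumptions from the compatibility lemmas via parametricity (\prop:ref{fh-lr-param}), it yields the soundness theorem \refthm{fh-lr-sound}.
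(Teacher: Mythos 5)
Your proof is correct and is exactly the elementary argument the paper has in mind when it writes ``Obvious'': unfold the term relation, and use determinism of evaluation to see that reaching a value (resp.\ blame) excludes the other observable outcomes on each side.
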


\fhlrsound*
\begin{proof}
 {\iffull
 We can show that, for any $\ottnt{C}$, $\ottnt{T}^\ottnt{C}$, $\Gamma$, and $\ottnt{T}$ such that
 $\Gamma  \vdash  \ottnt{C}  \ottsym{:}   \overline{ \Gamma_{\ottmv{i}}  \vdash   { \ottnt{e} }_{  \ottsym{1} \ottmv{i}  }   \ottsym{:}  \ottnt{T_{\ottmv{i}}} }^{ \ottmv{i} }   \mathrel{\circ\hspace{-.4em}\rightarrow}  \ottnt{T}$ and
 $ \Gamma   \vdash   \ottnt{T}^\ottnt{C}   \ottsym{:}    \overline{ \Gamma_{\ottmv{i}}  \vdash   { \ottnt{e} }_{  \ottsym{1} \ottmv{i}  }   \ottsym{:}  \ottnt{T_{\ottmv{i}}} }^{ \ottmv{i} }   \mathrel{\circ\hspace{-.4em}\rightarrow} \ast $,
 \[\begin{array}{ll}
  \Gamma  \vdash  \ottnt{C}  [  \ottnt{e_{{\mathrm{11}}}}  ,\, ... \, ,   { \ottnt{e} }_{  \ottsym{1} \mathit{n}  }   ] \,  \mathrel{ \simeq }  \, \ottnt{C}  [  \ottnt{e_{{\mathrm{21}}}}  ,\, ... \, ,   { \ottnt{e} }_{  \ottsym{2} \mathit{n}  }   ]  \ottsym{:}  \ottnt{T} & \text{ and } \\
   \Gamma \vdash \ottnt{T}^\ottnt{C}  [  \ottnt{e_{{\mathrm{11}}}}  ,\, ... \, ,   { \ottnt{e} }_{  \ottsym{1} \mathit{n}  }   ]  \mathrel{ \simeq }  \ottnt{T}^\ottnt{C}  [  \ottnt{e_{{\mathrm{21}}}}  ,\, ... \, ,   { \ottnt{e} }_{  \ottsym{2} \mathit{n}  }   ]  : \ast 
   \end{array}\]
 using the compatibility lemmas and the parametricity (\prop:ref{fh-lr-param}).
 Then, we have
 \[
  \emptyset  \vdash  \ottnt{C}  [  \ottnt{e_{{\mathrm{11}}}}  ,\, ... \, ,   { \ottnt{e} }_{  \ottsym{1} \mathit{n}  }   ] \,  \mathrel{ \simeq }  \, \ottnt{C}  [  \ottnt{e_{{\mathrm{21}}}}  ,\, ... \, ,   { \ottnt{e} }_{  \ottsym{2} \mathit{n}  }   ]  \ottsym{:}  \ottnt{T}
 \]
 for any $\ottnt{C}$ and $\ottnt{T}$ such that
 $\emptyset  \vdash  \ottnt{C}  \ottsym{:}   \overline{ \Gamma_{\ottmv{i}}  \vdash   { \ottnt{e} }_{  \ottsym{1} \ottmv{i}  }   \ottsym{:}  \ottnt{T_{\ottmv{i}}} }^{ \ottmv{i} }   \mathrel{\circ\hspace{-.4em}\rightarrow}  \ottnt{T}$, and so
 \[
  \ottnt{C}  [  \ottnt{e_{{\mathrm{11}}}}  ,\, ... \, ,   { \ottnt{e} }_{  \ottsym{1} \mathit{n}  }   ]  \Downarrow  \ottnt{C}  [  \ottnt{e_{{\mathrm{21}}}}  ,\, ... \, ,   { \ottnt{e} }_{  \ottsym{2} \mathit{n}  }   ]
 \]
 by the adequacy (\prop:ref{fh-lr-behav-rel-term}).
 \else
 We can show that, for any $\ottnt{C}$ and $\ottnt{T}$ such that
 $\emptyset  \vdash  \ottnt{C}  \ottsym{:}   \overline{ \Gamma_{\ottmv{i}}  \vdash   { \ottnt{e} }_{  \ottsym{1} \ottmv{i}  }   \ottsym{:}  \ottnt{T_{\ottmv{i}}} }^{ \ottmv{i} }   \mathrel{\circ\hspace{-.4em}\rightarrow}  \ottnt{T}$,
 \[
  \emptyset  \vdash  \ottnt{C}  [  \ottnt{e_{{\mathrm{11}}}}  ,\, ... \, ,   { \ottnt{e} }_{  \ottsym{1} \mathit{n}  }   ] \,  \mathrel{ \simeq }  \, \ottnt{C}  [  \ottnt{e_{{\mathrm{21}}}}  ,\, ... \, ,   { \ottnt{e} }_{  \ottsym{2} \mathit{n}  }   ]  \ottsym{:}  \ottnt{T}
 \]
 using the compatibility lemmas with the parametricity (\prop:ref{fh-lr-param}).
 Then,
 \[
  \ottnt{C}  [  \ottnt{e_{{\mathrm{11}}}}  ,\, ... \, ,   { \ottnt{e} }_{  \ottsym{1} \mathit{n}  }   ]  \Downarrow  \ottnt{C}  [  \ottnt{e_{{\mathrm{21}}}}  ,\, ... \, ,   { \ottnt{e} }_{  \ottsym{2} \mathit{n}  }   ]
 \]
 by the adequacy (\prop:ref{fh-lr-behav-rel-term}).
 \fi}
\end{proof}

 \subsection{Completeness}
 \label{sec:logical_relation-complete}
 We also show the completeness of the logical relation with respect to \emph{typed}
 contextual equivalence, that is, contextually equivalent terms are logically
 related if they are both well typed.
 The completeness proof is via CIU-equivalence: we show that (1) contextually
 equivalent terms are CIU-equivalent (\prop:ref{fh-lr-ciu-complete}) and (2)
 \emph{well-typed}, CIU-equivalent terms are logically related
 (\prop:ref{fh-lr-ciu-sound-typed}).
 Using these lemmas, we can show that well-typed, contextually equivalent terms
 are logically related (\refthm{fh-lr-complete-typed}).
 The completeness enables us to show (restricted) transitivity of semityped
 contextual equivalence (\prop:ref{fh-lr-ctx-trans}).

 To prove CIU-equivalence of contextually equivalent terms, we start with
 defining functions to close open terms according to typing contexts and
 closing substitutions.
 These functions are used to construct program contexts in semityped contextual
 equivalence.
 \begin{defi}
 For $\ottnt{e}$ and $\Gamma$, $ \mathit{Abs}  (  \Gamma ;  \ottnt{e}  ) $ denotes a term
 that takes term and type variables bound in $\Gamma$ as arguments:
 \[\begin{array}{lcl}
   \mathit{Abs}  (  \emptyset ;  \ottnt{e}  )  &=& \ottnt{e} \\
   \mathit{Abs}  (   \Gamma  ,  \mathit{x}  \mathord{:}  \ottnt{T}  ;  \ottnt{e}  )  &=&  \mathit{Abs}  (  \Gamma ;    \lambda    \mathit{x}  \mathord{:}  \ottnt{T}  .  \ottnt{e}   )  \\
   \mathit{Abs}  (  \Gamma  \ottsym{,}  \alpha ;  \ottnt{e}  )    &=&  \mathit{Abs}  (  \Gamma ;   \Lambda\!  \, \alpha  .~  \ottnt{e}  ) 
 \end{array}\]
 For $\sigma$ and $\Gamma$, $ \mathit{App}  (  \Gamma ;  \sigma ;  \ottnt{e}  ) $
 denotes a term that is applied to values and types to which $\sigma$ maps:
 \[\begin{array}{lcl}
   \mathit{App}  (  \emptyset ;  \sigma ;  \ottnt{e}  )  &=& \ottnt{e} \\
   \mathit{App}  (   \Gamma  ,  \mathit{x}  \mathord{:}  \ottnt{T}  ;  \sigma ;  \ottnt{e}  )  &=&  \mathit{App}  (  \Gamma ;  \sigma ;  \ottnt{e}  )  \, \ottsym{(}   \sigma  (  \mathit{x}  )   \ottsym{)} \\
   \mathit{App}  (  \Gamma  \ottsym{,}  \alpha ;  \sigma ;  \ottnt{e}  )    &=&  \mathit{App}  (  \Gamma ;  \sigma ;  \ottnt{e}  )  \, \ottsym{(}   \sigma  (  \alpha  )   \ottsym{)}
 \end{array}\]
 For $\ottnt{T}$ and $\Gamma$, $ \mathit{AbsType}  (  \Gamma ;  \ottnt{T}  ) $
 denotes a type that abstracts variables bound in $\Gamma$.
 \[\begin{array}{lcl}
   \mathit{AbsType}  (  \emptyset ;  \ottnt{T}  )  &=&  \emptyset  \\
   \mathit{AbsType}  (   \Gamma  ,  \mathit{x}  \mathord{:}  \ottnt{T'}  ;  \ottnt{T}  )  &=&  \mathit{AbsType}  (  \Gamma ;   \mathit{x} \mathord{:} \ottnt{T'} \rightarrow \ottnt{T}   )  \\
   \mathit{AbsType}  (  \Gamma  \ottsym{,}  \alpha ;  \ottnt{T}  )    &=&  \mathit{AbsType}  (  \Gamma ;   \forall   \alpha  .  \ottnt{T}   ) 
 \end{array}\]
\end{defi}
\begin{prop}{fh-lr-comp-app-abs-typed}
 \noindent
 \begin{statements}
  \item(abs) If $\Gamma  \vdash  \ottnt{e}  \ottsym{:}  \ottnt{T}$, then $\emptyset  \vdash   \mathit{Abs}  (  \Gamma ;  \ottnt{e}  )   \ottsym{:}   \mathit{AbsType}  (  \Gamma ;  \ottnt{T}  ) $.
  \item(app) If $\emptyset  \vdash  \ottnt{e}  \ottsym{:}   \mathit{AbsType}  (  \Gamma ;  \ottnt{T}  ) $ and $\Gamma  \vdash  \sigma$,
   then $\emptyset  \vdash   \mathit{App}  (  \Gamma ;  \sigma ;  \ottnt{e}  )   \ottsym{:}   \sigma  (  \ottnt{T}  ) $.
  \item(abs-app) If $\Gamma  \vdash  \ottnt{e}  \ottsym{:}  \ottnt{T}$ and $\Gamma  \vdash  \sigma$,
   then $\emptyset  \vdash   \mathit{App}  (  \Gamma ;  \sigma ;   \mathit{Abs}  (  \Gamma ;  \ottnt{e}  )   )   \ottsym{:}   \sigma  (  \ottnt{T}  ) $.
 \end{statements}

 \proof

 The first and second cases are shown by induction on $\Gamma$ straightforwardly.
 The third case is a corollary of the combination of the first and second cases.
\end{prop}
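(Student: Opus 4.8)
The plan is to prove the three statements essentially independently, with parts (abs) and (app) each proved by induction on the structure of the typing context $\Gamma$ (peeling off the rightmost binding, exactly as the recursive definitions of $\mathit{Abs}$, $\mathit{App}$, and $\mathit{AbsType}$ do), and part (abs-app) obtained as an immediate corollary by composing them. I note in passing that the base clause $\mathit{AbsType}(\emptyset;\ottnt{T})=\emptyset$ in the definition appears to be a typo for $\mathit{AbsType}(\emptyset;\ottnt{T})=\ottnt{T}$; the proof uses the latter reading.

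For part (abs) the base case $\Gamma=\emptyset$ is immediate, since $\mathit{Abs}(\emptyset;\ottnt{e})=\ottnt{e}$ and $\mathit{AbsType}(\emptyset;\ottnt{T})=\ottnt{T}$. In the step case $\Gamma = \Gamma_0, \mathit{x}\mathord{:}\ottnt{T'}$, from the hypothesis $\Gamma_0,\mathit{x}\mathord{:}\ottnt{T'} \vdash \ottnt{e} : \ottnt{T}$ I apply \T{Abs} to obtain $\Gamma_0 \vdash \lambda \mathit{x}\mathord{:}\ottnt{T'}.\,\ottnt{e} : (\mathit{x}\mathord{:}\ottnt{T'} \rightarrow \ottnt{T})$, then invoke the induction hypothesis on $\Gamma_0$ with the term $\lambda \mathit{x}\mathord{:}\ottnt{T'}.\,\ottnt{e}$ and type $\mathit{x}\mathord{:}\ottnt{T'}\rightarrow\ottnt{T}$; the defining equations for $\mathit{Abs}$ and $\mathit{AbsType}$ then close the case. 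The type-variable step $\Gamma = \Gamma_0, \alpha$ is identical, using \T{TAbs} in place of \T{Abs}.

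For part (app) the base case is again immediate. In the step case $\Gamma=\Gamma_0,\mathit{x}\mathord{:}\ottnt{T'}$, I first note that $\Gamma_0,\mathit{x}\mathord{:}\ottnt{T'}\vdash\sigma$ restricts to $\Gamma_0\vdash\sigma$ (the types in $\Gamma_0$ cannot mention $\mathit{x}$), so the induction hypothesis on $\Gamma_0$ with type $\mathit{x}\mathord{:}\ottnt{T'}\rightarrow\ottnt{T}$ gives $\emptyset \vdash \mathit{App}(\Gamma_0;\sigma;\ottnt{e}) : \sigma(\mathit{x}\mathord{:}\ottnt{T'}\rightarrow\ottnt{T})$. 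Writing $\sigma'$ for $\sigma$ with $\mathit{x}$ removed from its domain, capture-avoidance gives $\sigma(\mathit{x}\mathord{:}\ottnt{T'}\rightarrow\ottnt{T}) = \mathit{x}\mathord{:}\sigma(\ottnt{T'})\rightarrow\sigma'(\ottnt{T})$; the closing-substitution condition supplies $\emptyset\vdash\sigma(\mathit{x}):\sigma(\ottnt{T'})$; and then \T{App} yields a term of type $\sigma'(\ottnt{T})[\sigma(\mathit{x})/\mathit{x}]$, which coincides with $\sigma(\ottnt{T})$ because $\sigma(\mathit{x})$ is closed and $\sigma=\sigma'\cup\{\mathit{x}\mapsto\sigma(\mathit{x})\}$. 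The type-variable step $\Gamma=\Gamma_0,\alpha$ runs analogously using \T{TApp} and the condition $\sigma(\alpha)\in\mathsf{Typ}$ furnished by $\Gamma\vdash\sigma$.

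Finally, part (abs-app) follows by chaining: (abs) gives $\emptyset\vdash\mathit{Abs}(\Gamma;\ottnt{e}):\mathit{AbsType}(\Gamma;\ottnt{T})$, and feeding this into (app) yields $\emptyset\vdash\mathit{App}(\Gamma;\sigma;\mathit{Abs}(\Gamma;\ottnt{e})):\sigma(\ottnt{T})$. The main obstacle is the term-variable step of (app): one must carefully reconcile the binder $\mathit{x}$ of the dependent function type with the fact that $\sigma$ also acts on the context variable $\mathit{x}$, verify the substitution-decomposition identity $\sigma(\ottnt{T})=\sigma'(\ottnt{T})[\sigma(\mathit{x})/\mathit{x}]$, and discharge the well-formedness side-premise $\emptyset\vdash\sigma(\ottnt{T})$ of \T{App} (obtainable from regularity of the synthesized function type together with the term substitution lemma \prop:ref{fh-subst-term}). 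These steps are routine but demand attention to capture-avoidance and to the commutation of the simultaneous substitution $\sigma$ with the single substitution $[\sigma(\mathit{x})/\mathit{x}]$.
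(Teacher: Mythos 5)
Your proposal is correct and follows exactly the route the paper takes: parts (abs) and (app) by induction on $\Gamma$ (the paper simply records this as ``straightforward''), and (abs-app) by composing the two. Your observation that $ \mathit{AbsType}  (  \emptyset ;  \ottnt{T}  )   \ottsym{=}  \emptyset$ must be read as $ \mathit{AbsType}  (  \emptyset ;  \ottnt{T}  )   \ottsym{=}  \ottnt{T}$, and your care with the substitution decomposition $ \sigma  (  \ottnt{T}  )   \ottsym{=}   \sigma'  (  \ottnt{T}  )  \, [   \sigma  (  \mathit{x}  )   \ottsym{/}  \mathit{x}  ]$ and the well-formedness side condition of \T{App}, are exactly the details the paper elides.
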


\begin{prop}[name={$ =_\mathsf{ctx}  \mathbin{\subseteq}  =_\mathsf{ciu} $}]{fh-lr-ciu-complete}
 If $ \overline{ \Gamma_{\ottmv{i}}  \vdash   { \ottnt{e} }_{  \ottsym{1} \ottmv{i}  }  \, =_\mathsf{ctx} \,  { \ottnt{e} }_{  \ottsym{2} \ottmv{i}  }   \ottsym{:}  \ottnt{T_{\ottmv{i}}} }^{ \ottmv{i} } $,
 then $\Gamma_{\ottmv{j}}  \vdash   { \ottnt{e} }_{  \ottsym{1} \ottmv{j}  }  \, =_\mathsf{ciu} \,  { \ottnt{e} }_{  \ottsym{2} \ottmv{j}  }   \ottsym{:}  \ottnt{T_{\ottmv{j}}}$ for any $\ottmv{j}$.

 \proof

 We show that
 if $\Gamma  \vdash  \ottnt{e_{{\mathrm{1}}}} \, =_\mathsf{ctx} \, \ottnt{e_{{\mathrm{2}}}}  \ottsym{:}  \ottnt{T}$, then $\Gamma  \vdash  \ottnt{e_{{\mathrm{1}}}} \, =_\mathsf{ciu} \, \ottnt{e_{{\mathrm{2}}}}  \ottsym{:}  \ottnt{T}$.
 By definition, it suffices to show that,
 for any $\sigma$ and $\ottnt{E}^\ottnt{S}$ such that
 $\Gamma  \vdash  \sigma$ and $\emptyset  \vdash  \ottnt{E}^\ottnt{S}  \ottsym{:}  \ottsym{(}  \emptyset  \vdash   \sigma  (  \ottnt{e_{{\mathrm{1}}}}  )   \ottsym{:}   \sigma  (  \ottnt{T}  )   \ottsym{)}  \mathrel{\circ\hspace{-.4em}\rightarrow}  \ottnt{T'}$,
 \[ \ottnt{E}^\ottnt{S}  [   \sigma  (  \ottnt{e_{{\mathrm{1}}}}  )   ]  \Downarrow  \ottnt{E}^\ottnt{S}  [   \sigma  (  \ottnt{e_{{\mathrm{2}}}}  )   ]. \]

 Here,
 $ \mathit{App}  (  \Gamma ;  \sigma ;   \mathit{Abs}  (  \Gamma ;  \ottnt{e_{{\mathrm{1}}}}  )   )   \longrightarrow^{\ast}   \sigma  (  \ottnt{e_{{\mathrm{1}}}}  ) $ and
 $ \mathit{App}  (  \Gamma ;  \sigma ;   \mathit{Abs}  (  \Gamma ;  \ottnt{e_{{\mathrm{2}}}}  )   )   \longrightarrow^{\ast}   \sigma  (  \ottnt{e_{{\mathrm{2}}}}  ) $.
 Since $\emptyset  \vdash  \ottnt{E}^\ottnt{S}  \ottsym{:}  \ottsym{(}  \emptyset  \vdash   \sigma  (  \ottnt{e_{{\mathrm{1}}}}  )   \ottsym{:}   \sigma  (  \ottnt{T}  )   \ottsym{)}  \mathrel{\circ\hspace{-.4em}\rightarrow}  \ottnt{T'}$,
 {\iffull we have \else we can show \fi}
 \begin{equation}
  \emptyset  \vdash  \ottnt{E}^\ottnt{S}  \ottsym{:}  \ottsym{(}  \emptyset  \vdash   \mathit{App}  (  \Gamma ;  \sigma ;   \mathit{Abs}  (  \Gamma ;  \ottnt{e_{{\mathrm{1}}}}  )   )   \ottsym{:}   \sigma  (  \ottnt{T}  )   \ottsym{)}  \mathrel{\circ\hspace{-.4em}\rightarrow}  \ottnt{T'}
   \label{eqn:fh-lr-ciu-complete-one}
 \end{equation}
 % {\iffull
   by \prop:ref{fh-lr-comp-sectx-hole-red}.
 % \else
 %   by induction on the derivation of
 %   $\emptyset  \vdash  \ottnt{E}^\ottnt{S}  \ottsym{:}  \ottsym{(}  \emptyset  \vdash   \sigma  (  \ottnt{e_{{\mathrm{1}}}}  )   \ottsym{:}   \sigma  (  \ottnt{T}  )   \ottsym{)}  \mathrel{\circ\hspace{-.4em}\rightarrow}  \ottnt{T'}$.
 % \fi}
 %
 By context typing rules,
 $\emptyset  \vdash   \mathit{App}  (  \Gamma ;  \sigma ;   \mathit{Abs}  (  \Gamma ;  \left[ \, \right]  )   )   \ottsym{:}  \ottsym{(}  \Gamma  \vdash  \ottnt{e_{{\mathrm{1}}}}  \ottsym{:}  \ottnt{T}  \ottsym{)}  \mathrel{\circ\hspace{-.4em}\rightarrow}   \sigma  (  \ottnt{T}  ) $.
 %
 % {\iffull
    Thus, by \prop:ref{fh-lr-comp-sectx-ctx-composed} with
    (\ref{eqn:fh-lr-ciu-complete-one}):
 %  \else
 %    Then, we have
 % \fi}
 \[
  \emptyset  \vdash  \ottnt{E}^\ottnt{S}  [   \mathit{App}  (  \Gamma ;  \sigma ;   \mathit{Abs}  (  \Gamma ;  \left[ \, \right]  )   )   ]  \ottsym{:}  \ottsym{(}  \Gamma  \vdash  \ottnt{e_{{\mathrm{1}}}}  \ottsym{:}  \ottnt{T}  \ottsym{)}  \mathrel{\circ\hspace{-.4em}\rightarrow}  \ottnt{T'}
 \]
 {\iffull\else
   by induction on (\ref{eqn:fh-lr-ciu-complete-one}).
 \fi}
 Since $\Gamma  \vdash  \ottnt{e_{{\mathrm{1}}}} \, =_\mathsf{ctx} \, \ottnt{e_{{\mathrm{2}}}}  \ottsym{:}  \ottnt{T}$,
 we have
 \[
  \ottnt{E}^\ottnt{S}  [   \mathit{App}  (  \Gamma ;  \sigma ;   \mathit{Abs}  (  \Gamma ;  \ottnt{e_{{\mathrm{1}}}}  )   )   ]  \Downarrow  \ottnt{E}^\ottnt{S}  [   \mathit{App}  (  \Gamma ;  \sigma ;   \mathit{Abs}  (  \Gamma ;  \ottnt{e_{{\mathrm{2}}}}  )   )   ]
 \]
 by definition.
 Since $\ottnt{E}^\ottnt{S}  [   \mathit{App}  (  \Gamma ;  \sigma ;   \mathit{Abs}  (  \Gamma ;  \ottnt{e_{\ottmv{i}}}  )   )   ]  \Downarrow  \ottnt{E}^\ottnt{S}  [   \sigma  (  \ottnt{e_{\ottmv{i}}}  )   ]$ for $\ottmv{i} \, \in \,  \{   \ottsym{1}  \ottsym{,}  \ottsym{2}   \} $,
 we finish.
\end{prop}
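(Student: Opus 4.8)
The plan is to prove the contrapositive direction of the chain relating the semityped relations, namely that contextual equivalence entails CIU-equivalence, by first isolating a single pair and then encoding each CIU-test (a closing substitution together with a static evaluation context) as a \emph{single} well-formed program context. First I would reduce to the single-pair statement: if $\Gamma \vdash \ottnt{e_{{\mathrm{1}}}} \, =_\mathsf{ctx} \, \ottnt{e_{{\mathrm{2}}}} \ottsym{:} \ottnt{T}$ then $\Gamma \vdash \ottnt{e_{{\mathrm{1}}}} \, =_\mathsf{ciu} \, \ottnt{e_{{\mathrm{2}}}} \ottsym{:} \ottnt{T}$. The hypothesis gives sequence-level contextual equivalence, and since a context $\ottnt{C}$ that is well formed for the single pair $(\Gamma_{\ottmv{j}} \vdash \ottnt{e_{{\mathrm{1}}}}_{\ottmv{j}} \ottsym{:} \ottnt{T_{\ottmv{j}}})$ can be read as a degenerate multi-hole context well formed for the full sequence (the holes indexed $\ne \ottmv{j}$ simply do not occur), replacing hole $\ottmv{j}$ reproduces the sequence replacement and the sequence hypothesis directly yields the single-pair contextual equivalence for each $\ottmv{j}$.

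To handle the single-pair case I would unfold the definition of semityped CIU-equivalence and fix an arbitrary closing substitution $\sigma$ with $\Gamma \vdash \sigma$ and a static evaluation context $\ottnt{E}^\ottnt{S}$ with $\emptyset \vdash \ottnt{E}^\ottnt{S} \ottsym{:} \ottsym{(} \emptyset \vdash \sigma (\ottnt{e_{{\mathrm{1}}}}) \ottsym{:} \sigma (\ottnt{T}) \ottsym{)} \mathrel{\circ\hspace{-.4em}\rightarrow} \ottnt{T'}$; the goal is $\ottnt{E}^\ottnt{S}[\sigma(\ottnt{e_{{\mathrm{1}}}})] \Downarrow \ottnt{E}^\ottnt{S}[\sigma(\ottnt{e_{{\mathrm{2}}}})]$. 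The device is to turn $\sigma$ into a cascade of $\beta$-redexes via the abstraction/application operators, so that $\mathit{App}(\Gamma; \sigma; \mathit{Abs}(\Gamma; \ottnt{e_{\ottmv{i}}})) \longrightarrow^{\ast} \sigma(\ottnt{e_{\ottmv{i}}})$ by \R{Beta} and \R{TBeta} under \E{Red}. Then $\ottnt{C} \defeq \ottnt{E}^\ottnt{S}[\mathit{App}(\Gamma; \sigma; \mathit{Abs}(\Gamma; [\,]))]$ is a closed program context whose hole receives the still-open term $\ottnt{e_{\ottmv{i}}}$. I would type it as follows: by \prop:ref{fh-lr-comp-app-abs-typed} the inner piece $\mathit{App}(\Gamma; \sigma; \mathit{Abs}(\Gamma; [\,]))$ is well formed as a context from $(\Gamma \vdash \ottnt{e_{{\mathrm{1}}}} \ottsym{:} \ottnt{T})$ to $\sigma(\ottnt{T})$; since $\mathit{App}(\Gamma; \sigma; \mathit{Abs}(\Gamma; \ottnt{e_{{\mathrm{1}}}})) \longrightarrow^{\ast} \sigma(\ottnt{e_{{\mathrm{1}}}})$, \prop:ref{fh-lr-comp-sectx-hole-red} re-types $\ottnt{E}^\ottnt{S}$ against the unreduced term, and \prop:ref{fh-lr-comp-sectx-ctx-composed} composes the two pieces to give $\emptyset \vdash \ottnt{C} \ottsym{:} \ottsym{(} \Gamma \vdash \ottnt{e_{{\mathrm{1}}}} \ottsym{:} \ottnt{T} \ottsym{)} \mathrel{\circ\hspace{-.4em}\rightarrow} \ottnt{T'}$. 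Applying the single-pair contextual equivalence to $\ottnt{C}$ gives $\ottnt{C}[\ottnt{e_{{\mathrm{1}}}}] \Downarrow \ottnt{C}[\ottnt{e_{{\mathrm{2}}}}]$, and since each $\mathit{App}(\Gamma; \sigma; \mathit{Abs}(\Gamma; \ottnt{e_{\ottmv{i}}}))$ reduces to $\sigma(\ottnt{e_{\ottmv{i}}})$, reduction inside a context preserves the observable (by determinism, \prop:ref{fh-eval-determinism}), so $\ottnt{E}^\ottnt{S}[\sigma(\ottnt{e_{{\mathrm{1}}}})] \Downarrow \ottnt{E}^\ottnt{S}[\sigma(\ottnt{e_{{\mathrm{2}}}})]$, as required.

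The main obstacle I anticipate lies in typing the \emph{composed} context rather than the argument: the hypothesis only furnishes $\ottnt{E}^\ottnt{S}$ well formed relative to $\sigma(\ottnt{e_{{\mathrm{1}}}})$, whereas the hole of $\ottnt{C}$ is filled by a term that merely \emph{reduces} to $\sigma(\ottnt{e_{{\mathrm{1}}}})$. Bridging this gap is exactly the role of \prop:ref{fh-lr-comp-sectx-hole-red}, a subject-expansion-style property for static evaluation contexts, and it is essential here that the contexts be \emph{static}---holes never sit under run-time constructors such as active or waiting checks---so that the intervening \CT{Conv} steps align and the composition of \prop:ref{fh-lr-comp-sectx-ctx-composed} goes through. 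Once the typing is arranged, the remaining steps (the reductions and the observable-preservation argument) are routine.
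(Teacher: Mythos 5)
Your proposal is correct and follows essentially the same route as the paper's proof: reduce to the single-pair statement, encode the CIU test as the context $\ottnt{E}^\ottnt{S}[ \mathit{App}  (  \Gamma ;  \sigma ;   \mathit{Abs}  (  \Gamma ;  \left[ \, \right]  )   ) ]$, re-type via the subject-expansion lemma for static evaluation contexts, compose, apply contextual equivalence, and transfer back through the $\beta$-reductions. The only cosmetic difference is that you cite the term-level typing lemma for $ \mathit{Abs} $/$ \mathit{App} $ where the paper appeals directly to the context typing rules.
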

It is shown by the equivalence-respecting property that CIU-equivalent terms are logically
related.
We write substitution $ \theta_{{\mathrm{1}}} \circ \delta_{{\mathrm{1}}} $ for the concatenation of $\theta_{{\mathrm{1}}}$
and $\delta_{{\mathrm{1}}}$.
Note that $\Gamma  \vdash   \theta_{{\mathrm{1}}} \circ \delta_{{\mathrm{1}}} $ if $\Gamma  \vdash  \theta  \ottsym{;}  \delta$.
\begin{prop}[name={$ =_\mathsf{ciu}  \mathbin{\subseteq}  \simeq $ with respect to Typed Terms}]
 {fh-lr-ciu-sound-typed}
 If $\Gamma  \vdash  \ottnt{e_{{\mathrm{1}}}} \, =_\mathsf{ciu} \, \ottnt{e_{{\mathrm{2}}}}  \ottsym{:}  \ottnt{T}$ and $\Gamma  \vdash  \ottnt{e_{{\mathrm{2}}}}  \ottsym{:}  \ottnt{T}$,
 then $\Gamma  \vdash  \ottnt{e_{{\mathrm{1}}}} \,  \mathrel{ \simeq }  \, \ottnt{e_{{\mathrm{2}}}}  \ottsym{:}  \ottnt{T}$.

 \proof

 Let $\Gamma  \vdash  \theta  \ottsym{;}  \delta$.
 It suffices to show that
 $  \theta_{{\mathrm{1}}}  (   \delta_{{\mathrm{1}}}  (  \ottnt{e_{{\mathrm{1}}}}  )   )   \simeq_{\mathtt{e} }   \theta_{{\mathrm{2}}}  (   \delta_{{\mathrm{2}}}  (  \ottnt{e_{{\mathrm{2}}}}  )   )    \ottsym{:}   \ottnt{T} ;  \theta ;  \delta $.
 Since $\Gamma  \vdash  \ottnt{e_{{\mathrm{2}}}}  \ottsym{:}  \ottnt{T}$, we have $\Gamma  \vdash  \ottnt{e_{{\mathrm{2}}}} \,  \mathrel{ \simeq }  \, \ottnt{e_{{\mathrm{2}}}}  \ottsym{:}  \ottnt{T}$ by the parametricity
 (\prop:ref{fh-lr-param}), and so
 $  \theta_{{\mathrm{1}}}  (   \delta_{{\mathrm{1}}}  (  \ottnt{e_{{\mathrm{2}}}}  )   )   \simeq_{\mathtt{e} }   \theta_{{\mathrm{2}}}  (   \delta_{{\mathrm{2}}}  (  \ottnt{e_{{\mathrm{2}}}}  )   )    \ottsym{:}   \ottnt{T} ;  \theta ;  \delta $.
 Since $\Gamma  \vdash  \ottnt{e_{{\mathrm{1}}}} \, =_\mathsf{ciu} \, \ottnt{e_{{\mathrm{2}}}}  \ottsym{:}  \ottnt{T}$ and $\Gamma  \vdash   \theta_{{\mathrm{1}}} \circ \delta_{{\mathrm{1}}} $,
 we have $\emptyset  \vdash   \theta_{{\mathrm{1}}}  (   \delta_{{\mathrm{1}}}  (  \ottnt{e_{{\mathrm{1}}}}  )   )  \, =_\mathsf{ciu} \,  \theta_{{\mathrm{1}}}  (   \delta_{{\mathrm{1}}}  (  \ottnt{e_{{\mathrm{2}}}}  )   )   \ottsym{:}   \theta_{{\mathrm{1}}}  (   \delta_{{\mathrm{1}}}  (  \ottnt{T}  )   ) $.
 Thus, we finish by the equivalence-respecting property (\prop:ref{fh-lr-comp-equiv-res}).
\end{prop}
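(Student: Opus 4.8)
The plan is to reduce the open logical relation to its closed, instantiated form and then bridge the two sides using reflexivity on the well-typed component together with the equivalence-respecting property. Unfolding the definition of $\Gamma  \vdash  \ottnt{e_{{\mathrm{1}}}} \,  \mathrel{ \simeq }  \, \ottnt{e_{{\mathrm{2}}}}  \ottsym{:}  \ottnt{T}$, I must establish three things: (1) $\Gamma  \vdash  \ottnt{e_{{\mathrm{1}}}}  \ottsym{:}  \ottnt{T}$, (2) $ \mathit{FV}  (  \ottnt{e_{{\mathrm{2}}}}  )   \mathrel{\cup}   \mathit{FTV}  (  \ottnt{e_{{\mathrm{2}}}}  )   \subseteq   \mathit{dom}  (  \Gamma  ) $, and (3) the closed relation $  \theta_{{\mathrm{1}}}  (   \delta_{{\mathrm{1}}}  (  \ottnt{e_{{\mathrm{1}}}}  )   )   \simeq_{\mathtt{e} }   \theta_{{\mathrm{2}}}  (   \delta_{{\mathrm{2}}}  (  \ottnt{e_{{\mathrm{2}}}}  )   )    \ottsym{:}   \ottnt{T} ;  \theta ;  \delta $ for every $\theta$ and $\delta$ with $\Gamma  \vdash  \theta  \ottsym{;}  \delta$. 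Conditions (1) and (2) are immediate: (1) is the first clause of the hypothesis $\Gamma  \vdash  \ottnt{e_{{\mathrm{1}}}} \, =_\mathsf{ciu} \, \ottnt{e_{{\mathrm{2}}}}  \ottsym{:}  \ottnt{T}$, and (2) follows from $\Gamma  \vdash  \ottnt{e_{{\mathrm{2}}}}  \ottsym{:}  \ottnt{T}$ since a well-typed term has all its free term and type variables bound in its context. Thus the whole argument concentrates on (3).

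For (3) I would fix $\theta,\delta$ with $\Gamma  \vdash  \theta  \ottsym{;}  \delta$ and proceed in two moves. First, because $\ottnt{e_{{\mathrm{2}}}}$ is well typed, parametricity (\prop:ref{fh-lr-param}), i.e.\ reflexivity of the logical relation, yields $\Gamma  \vdash  \ottnt{e_{{\mathrm{2}}}} \,  \mathrel{ \simeq }  \, \ottnt{e_{{\mathrm{2}}}}  \ottsym{:}  \ottnt{T}$, hence $  \theta_{{\mathrm{1}}}  (   \delta_{{\mathrm{1}}}  (  \ottnt{e_{{\mathrm{2}}}}  )   )   \simeq_{\mathtt{e} }   \theta_{{\mathrm{2}}}  (   \delta_{{\mathrm{2}}}  (  \ottnt{e_{{\mathrm{2}}}}  )   )    \ottsym{:}   \ottnt{T} ;  \theta ;  \delta $; this is the bridge linking the two instantiations of $\ottnt{e_{{\mathrm{2}}}}$. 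Second, since $ \theta_{{\mathrm{1}}} \circ \delta_{{\mathrm{1}}} $ is a closing substitution respecting $\Gamma$ whenever $\Gamma  \vdash  \theta  \ottsym{;}  \delta$ (as noted just before the statement), instantiating the CIU-equivalence hypothesis with it gives the closed fact $\emptyset  \vdash   \theta_{{\mathrm{1}}}  (   \delta_{{\mathrm{1}}}  (  \ottnt{e_{{\mathrm{1}}}}  )   )  \, =_\mathsf{ciu} \,  \theta_{{\mathrm{1}}}  (   \delta_{{\mathrm{1}}}  (  \ottnt{e_{{\mathrm{2}}}}  )   )   \ottsym{:}   \theta_{{\mathrm{1}}}  (   \delta_{{\mathrm{1}}}  (  \ottnt{T}  )   ) $, relating the two left-hand instantiations. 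Applying the equivalence-respecting property (\prop:ref{fh-lr-comp-equiv-res}), with its three terms taken to be $ \theta_{{\mathrm{1}}}  (   \delta_{{\mathrm{1}}}  (  \ottnt{e_{{\mathrm{1}}}}  )   ) $, $ \theta_{{\mathrm{1}}}  (   \delta_{{\mathrm{1}}}  (  \ottnt{e_{{\mathrm{2}}}}  )   ) $, and $ \theta_{{\mathrm{2}}}  (   \delta_{{\mathrm{2}}}  (  \ottnt{e_{{\mathrm{2}}}}  )   ) $, transports the relatedness across the CIU-equivalence and delivers exactly (3).

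The main conceptual point, and the reason completeness is restricted to typed terms, is the asymmetry between the two sides. CIU-equivalence only ever compares the two left instantiations $ \theta_{{\mathrm{1}}}  (   \delta_{{\mathrm{1}}}  (  \textrm{--}  )   ) $, so it cannot by itself reach the right instantiation $ \theta_{{\mathrm{2}}}  (   \delta_{{\mathrm{2}}}  (  \ottnt{e_{{\mathrm{2}}}}  )   ) $; the crossing from left to right is supplied solely by reflexivity, and reflexivity is available for $\ottnt{e_{{\mathrm{2}}}}$ only because $\ottnt{e_{{\mathrm{2}}}}$ is assumed well typed (parametricity is stated for well-typed terms). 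Were $\ottnt{e_{{\mathrm{2}}}}$ merely closed but ill typed, this bridge would be unavailable, which is precisely why unrestricted completeness is left open. The genuine workhorse, the equivalence-respecting property, is already proved as \prop:ref{fh-lr-comp-equiv-res}; given it, the remaining effort is only the routine assembly above, so I anticipate no real obstacle beyond checking that $ \theta_{{\mathrm{1}}} \circ \delta_{{\mathrm{1}}} $ indeed respects $\Gamma$ and that the index type of the CIU fact matches $ \theta_{{\mathrm{1}}}  (   \delta_{{\mathrm{1}}}  (  \ottnt{T}  )   ) $ as demanded by \prop:ref{fh-lr-comp-equiv-res}.
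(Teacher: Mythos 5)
Your proposal is correct and follows essentially the same route as the paper's proof: reflexivity (parametricity) on the well-typed $\ottnt{e_{{\mathrm{2}}}}$ supplies the bridge between the two instantiations, the CIU hypothesis is instantiated at the closing substitution $ \theta_{{\mathrm{1}}} \circ \delta_{{\mathrm{1}}} $, and the equivalence-respecting property (\prop:ref{fh-lr-comp-equiv-res}) transports relatedness across. The extra care you take with conditions (1) and (2) and with why typedness of $\ottnt{e_{{\mathrm{2}}}}$ is indispensable is sound but not a departure from the paper's argument.
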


\fhlrcompletetyped*
\begin{proof}
 By \prop:ref{fh-lr-ciu-complete,fh-lr-ciu-sound-typed}.
\end{proof}

 We can show transitivity of semityped contextual equivalence for well-typed terms
 via the completeness.
 \begin{prop}[name=Transitivity of the Logical Relation]{fh-lr-trans}
 If $\Gamma  \vdash  \ottnt{e_{{\mathrm{1}}}} \,  \mathrel{ \simeq }  \, \ottnt{e_{{\mathrm{2}}}}  \ottsym{:}  \ottnt{T}$ and $\Gamma  \vdash  \ottnt{e_{{\mathrm{2}}}} \,  \mathrel{ \simeq }  \, \ottnt{e_{{\mathrm{3}}}}  \ottsym{:}  \ottnt{T}$,
 then $\Gamma  \vdash  \ottnt{e_{{\mathrm{1}}}} \,  \mathrel{ \simeq }  \, \ottnt{e_{{\mathrm{3}}}}  \ottsym{:}  \ottnt{T}$.

 \proof

 Let $\Gamma  \vdash  \theta  \ottsym{;}  \delta$.
 We show that
 $  \theta_{{\mathrm{1}}}  (   \delta_{{\mathrm{1}}}  (  \ottnt{e_{{\mathrm{1}}}}  )   )   \simeq_{\mathtt{e} }   \theta_{{\mathrm{2}}}  (   \delta_{{\mathrm{2}}}  (  \ottnt{e_{{\mathrm{3}}}}  )   )    \ottsym{:}   \ottnt{T} ;  \theta ;  \delta $.
 Since $\Gamma  \vdash  \ottnt{e_{{\mathrm{1}}}} \,  \mathrel{ \simeq }  \, \ottnt{e_{{\mathrm{2}}}}  \ottsym{:}  \ottnt{T}$, we have $\Gamma  \vdash  \ottnt{e_{{\mathrm{1}}}} \, =_\mathsf{ciu} \, \ottnt{e_{{\mathrm{2}}}}  \ottsym{:}  \ottnt{T}$ by
 \refthm{fh-lr-sound} and \prop:ref{fh-lr-ciu-complete} (note that $\ottnt{e_{{\mathrm{2}}}}$ is
 well typed).
 Since $\Gamma  \vdash   \theta_{{\mathrm{1}}} \circ \delta_{{\mathrm{1}}} $,
 we have
 $\emptyset  \vdash   \theta_{{\mathrm{1}}}  (   \delta_{{\mathrm{1}}}  (  \ottnt{e_{{\mathrm{1}}}}  )   )  \, =_\mathsf{ciu} \,  \theta_{{\mathrm{1}}}  (   \delta_{{\mathrm{1}}}  (  \ottnt{e_{{\mathrm{2}}}}  )   )   \ottsym{:}   \theta_{{\mathrm{1}}}  (   \delta_{{\mathrm{1}}}  (  \ottnt{T}  )   ) $.
 Since $\Gamma  \vdash  \ottnt{e_{{\mathrm{2}}}} \,  \mathrel{ \simeq }  \, \ottnt{e_{{\mathrm{3}}}}  \ottsym{:}  \ottnt{T}$,
 we have $  \theta_{{\mathrm{1}}}  (   \delta_{{\mathrm{1}}}  (  \ottnt{e_{{\mathrm{2}}}}  )   )   \simeq_{\mathtt{e} }   \theta_{{\mathrm{2}}}  (   \delta_{{\mathrm{2}}}  (  \ottnt{e_{{\mathrm{3}}}}  )   )    \ottsym{:}   \ottnt{T} ;  \theta ;  \delta $.
 By the equivalence-respecting property (\prop:ref{fh-lr-comp-equiv-res}), we finish.
\end{prop}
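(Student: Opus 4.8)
The plan is to prove transitivity not directly from the definition of the logical relation---which, like most logical relations, is not transitive in any obvious way---but by routing through semityped CIU-equivalence and invoking the equivalence-respecting property (\prop:ref{fh-lr-comp-equiv-res}) established earlier. Fix the two hypotheses $\Gamma \vdash e_1 \mathrel{\simeq} e_2 : T$ and $\Gamma \vdash e_2 \mathrel{\simeq} e_3 : T$. The two well-formedness side conditions of $\Gamma \vdash e_1 \mathrel{\simeq} e_3 : T$ are immediate: $\Gamma \vdash e_1 : T$ comes from the first hypothesis and $\mathit{FV}(e_3) \cup \mathit{FTV}(e_3) \subseteq \mathit{dom}(\Gamma)$ from the second. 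So the real content is condition (3): for every $\theta, \delta$ with $\Gamma \vdash \theta; \delta$, I must establish $\theta_1(\delta_1(e_1)) \simeq_{\mathtt{e}} \theta_2(\delta_2(e_3)) : T; \theta; \delta$.

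First I would convert the first hypothesis into a CIU statement. Applying soundness of the logical relation (\refthm{fh-lr-sound}) to $\Gamma \vdash e_1 \mathrel{\simeq} e_2 : T$ yields $\Gamma \vdash e_1 =_\mathsf{ctx} e_2 : T$, and ciu-completeness (\prop:ref{fh-lr-ciu-complete}) then yields $\Gamma \vdash e_1 =_\mathsf{ciu} e_2 : T$. Now fix $\theta, \delta$ with $\Gamma \vdash \theta; \delta$. Since $\Gamma \vdash \theta_1 \circ \delta_1$ whenever $\Gamma \vdash \theta; \delta$, I can instantiate the closing substitution in this CIU statement with $\theta_1 \circ \delta_1$, obtaining the closed-term fact $\emptyset \vdash \theta_1(\delta_1(e_1)) =_\mathsf{ciu} \theta_1(\delta_1(e_2)) : \theta_1(\delta_1(T))$.

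From the second hypothesis $\Gamma \vdash e_2 \mathrel{\simeq} e_3 : T$, unfolding condition (3) at the same $\theta, \delta$ gives $\theta_1(\delta_1(e_2)) \simeq_{\mathtt{e}} \theta_2(\delta_2(e_3)) : T; \theta; \delta$. These two facts line up exactly for the equivalence-respecting property (\prop:ref{fh-lr-comp-equiv-res}): taking its three terms to be $\theta_1(\delta_1(e_1))$, $\theta_1(\delta_1(e_2))$, and $\theta_2(\delta_2(e_3))$, from the closed CIU-equivalence and the closed term relation above I conclude $\theta_1(\delta_1(e_1)) \simeq_{\mathtt{e}} \theta_2(\delta_2(e_3)) : T; \theta; \delta$, which is precisely condition (3) and completes the argument.

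The delicate step---and the reason transitivity is nontrivial in this setting---is the passage that replaces the logically related left term $\theta_1(\delta_1(e_2))$ by the CIU-equivalent term $\theta_1(\delta_1(e_1))$; the logical relation alone does not license this swap. All of that work is precisely what the equivalence-respecting property discharges (through its use of static evaluation contexts and, in the refinement case, value inversion), so the main obstacle has in effect already been cleared by that lemma. In the writeup I expect only routine bookkeeping to remain: confirming that $\theta_1(\delta_1(T))$ is the index type at which both the closed CIU-equivalence and the closed term relation sit, so that \prop:ref{fh-lr-comp-equiv-res} applies verbatim, and noting that the middle term $e_2$ is well typed (the second hypothesis supplies $\Gamma \vdash e_2 : T$), which keeps the whole soundness--completeness chain coherent.
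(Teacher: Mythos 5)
Your proposal is correct and follows essentially the same route as the paper's own proof: soundness plus CIU-completeness turns the first hypothesis into $\Gamma \vdash e_1 =_\mathsf{ciu} e_2 : T$, which is instantiated with $\theta_1 \circ \delta_1$ and combined with the term relation from the second hypothesis via the equivalence-respecting property (\prop:ref{fh-lr-comp-equiv-res}). The extra bookkeeping you note (well-formedness side conditions, matching the index type) is routine and consistent with the paper.
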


\begin{prop}[type=cor,name=Transitivity of Semityped Contextual Equivalence]
 {fh-lr-ctx-trans}
 If $\Gamma  \vdash  \ottnt{e_{{\mathrm{1}}}} \, =_\mathsf{ctx} \, \ottnt{e_{{\mathrm{2}}}}  \ottsym{:}  \ottnt{T}$ and $\Gamma  \vdash  \ottnt{e_{{\mathrm{2}}}} \, =_\mathsf{ctx} \, \ottnt{e_{{\mathrm{3}}}}  \ottsym{:}  \ottnt{T}$ and
 $\Gamma  \vdash  \ottnt{e_{{\mathrm{3}}}}  \ottsym{:}  \ottnt{T}$, then $\Gamma  \vdash  \ottnt{e_{{\mathrm{1}}}} \, =_\mathsf{ctx} \, \ottnt{e_{{\mathrm{3}}}}  \ottsym{:}  \ottnt{T}$.
\end{prop}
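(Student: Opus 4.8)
The plan is to reduce transitivity of semityped contextual equivalence to transitivity of the logical relation, bridging between the two notions via the completeness and soundness results for well-typed terms. First I would observe that, under the hypotheses, all three terms $\ottnt{e_{{\mathrm{1}}}}$, $\ottnt{e_{{\mathrm{2}}}}$, and $\ottnt{e_{{\mathrm{3}}}}$ are well typed at $\ottnt{T}$ under $\Gamma$: the term $\ottnt{e_{{\mathrm{1}}}}$ is the left-hand side of $\Gamma  \vdash  \ottnt{e_{{\mathrm{1}}}} \, =_\mathsf{ctx} \, \ottnt{e_{{\mathrm{2}}}}  \ottsym{:}  \ottnt{T}$ and $\ottnt{e_{{\mathrm{2}}}}$ is the left-hand side of $\Gamma  \vdash  \ottnt{e_{{\mathrm{2}}}} \, =_\mathsf{ctx} \, \ottnt{e_{{\mathrm{3}}}}  \ottsym{:}  \ottnt{T}$, so both are well typed by the first clause of Definition~\ref{def:ctxeq}, while $\ottnt{e_{{\mathrm{3}}}}$ is well typed by the extra hypothesis $\Gamma  \vdash  \ottnt{e_{{\mathrm{3}}}}  \ottsym{:}  \ottnt{T}$.

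Then I would feed each contextual equivalence into the completeness theorem for typed terms. Since $\ottnt{e_{{\mathrm{2}}}}$ is well typed, \refthm{fh-lr-complete-typed} applied to $\Gamma  \vdash  \ottnt{e_{{\mathrm{1}}}} \, =_\mathsf{ctx} \, \ottnt{e_{{\mathrm{2}}}}  \ottsym{:}  \ottnt{T}$ yields $\Gamma  \vdash  \ottnt{e_{{\mathrm{1}}}} \,  \mathrel{ \simeq }  \, \ottnt{e_{{\mathrm{2}}}}  \ottsym{:}  \ottnt{T}$; since $\ottnt{e_{{\mathrm{3}}}}$ is well typed, the same theorem applied to $\Gamma  \vdash  \ottnt{e_{{\mathrm{2}}}} \, =_\mathsf{ctx} \, \ottnt{e_{{\mathrm{3}}}}  \ottsym{:}  \ottnt{T}$ yields $\Gamma  \vdash  \ottnt{e_{{\mathrm{2}}}} \,  \mathrel{ \simeq }  \, \ottnt{e_{{\mathrm{3}}}}  \ottsym{:}  \ottnt{T}$. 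Transitivity of the logical relation (\prop:ref{fh-lr-trans}) then gives $\Gamma  \vdash  \ottnt{e_{{\mathrm{1}}}} \,  \mathrel{ \simeq }  \, \ottnt{e_{{\mathrm{3}}}}  \ottsym{:}  \ottnt{T}$, and finally soundness (\refthm{fh-lr-sound}) converts this back into $\Gamma  \vdash  \ottnt{e_{{\mathrm{1}}}} \, =_\mathsf{ctx} \, \ottnt{e_{{\mathrm{3}}}}  \ottsym{:}  \ottnt{T}$, as desired.

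The real content of this corollary has already been discharged in the lemmas it invokes, so there is no genuinely hard step here; the subtle point is rather \emph{why} the extra hypothesis $\Gamma  \vdash  \ottnt{e_{{\mathrm{3}}}}  \ottsym{:}  \ottnt{T}$ is indispensable. As already noted in \sect{ctxeq}, the completeness theorem relates only two \emph{well-typed} contextually equivalent terms, and without the well-typedness of $\ottnt{e_{{\mathrm{3}}}}$ we could not turn the second equivalence into a logical-relation statement. Indeed, the naive direct argument---that $\ottnt{e_{{\mathrm{2}}}}$ and $\ottnt{e_{{\mathrm{3}}}}$ behave alike under every context well formed for $\ottnt{e_{{\mathrm{1}}}}$---breaks down precisely because such a context need not be well formed for $\ottnt{e_{{\mathrm{2}}}}$, which is why transitivity is nontrivial and is restricted to a well-typed third term. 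Thus the only thing I actually need to verify is that the well-typedness side conditions of \refthm{fh-lr-complete-typed} are met on both applications, and this is exactly what the observation in the first step supplies.
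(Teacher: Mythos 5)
Your proof is correct and follows exactly the route the paper intends for this corollary: extract well-typedness of $\ottnt{e_{{\mathrm{2}}}}$ from the first clause of Definition~\ref{def:ctxeq}, apply completeness (\refthm{fh-lr-complete-typed}) to both equivalences, use transitivity of the logical relation (\prop:ref{fh-lr-trans}), and convert back with soundness (\refthm{fh-lr-sound}). No gaps.
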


\section{Reasoning about Casts}
\label{sec:reasoning}

This section shows correctness of three cast reasoning techniques---the upcast
elimination, the selfification, and the cast decomposition---using the logical
relation developed in \sect{logical_relation}.

\subsection{Upcast Elimination}
\label{sec:reasoning-upcast-elim}

We first introduce subtyping for {\fhfix} and then show that an upcast and an
identity function are logically related.
Thanks to the soundness of the logical relation with respect to semityped
contextual equivalence (\refthm{fh-lr-sound}), it implies that they are
contextually equivalent.

\begin{fhfigure*}[t!]
 \begin{flushleft}
  \framebox{$\Gamma  \vdash  \ottnt{T_{{\mathrm{1}}}}  \ottsym{<:}  \ottnt{T_{{\mathrm{2}}}}$} \quad {\bf{Subtyping Rules}}
 \end{flushleft}

 \begin{center}
  $\ottdruleSXXBase{}$ \hfil
  $\ottdruleSXXTVar{}$ \hfil
  $\ottdruleSXXForall{}$ \\[1.5ex]
  $\ottdruleSXXFun{}$ \\[1.5ex]
  $\ottdruleSXXRefineL{}$
  $\ottdruleSXXRefineR{}$
 \end{center}

 \vspace{1em}
 \begin{flushleft}
  \framebox{$\Gamma  \models  \ottnt{e}$} \quad {\bf{Satisfaction Rule}}
 \end{flushleft}

 \begin{center}
  $\ottdruleSatis{}$
 \end{center}
 \caption{Subtyping rules.}
 \label{fig:subtype}
\end{fhfigure*}

\fig{subtype} shows subtyping rules, which are similar to Belo et
al.~\cite{Belo/Greenberg/Igarashi/Pierce_2011_ESOP} except that we decompose
the subtyping rule for refinement types into two simple rules.
Subtyping judgment $\Gamma  \vdash  \ottnt{T_{{\mathrm{1}}}}  \ottsym{<:}  \ottnt{T_{{\mathrm{2}}}}$ takes typing context $\Gamma$ for
checking refinements in $\ottnt{T_{{\mathrm{2}}}}$.
Base types and type variables can be subtypes of only themselves
(\Sub{Base} and \Sub{TVar}).
\Sub{Forall} checks that body types of universal types are in subtyping.
As for function types, subtyping is contravariant on the domain types
and covariant on the codomain types \Sub{Fun}.
The subtyping judgment on codomain types assumes that the type of argument
$\mathit{x}$ is $\ottnt{T_{{\mathrm{21}}}}$, a subtype of the other domain type $\ottnt{T_{{\mathrm{11}}}}$, but
codomain type $\ottnt{T_{{\mathrm{12}}}}$ refers to $\mathit{x}$ as $\ottnt{T_{{\mathrm{11}}}}$.
Since the type system of {\fhfix} does not allow subsumption for subtyping
(unlike Knowles and Flanagan~\cite{Knowles/Flanagan_2010_TOPLAS}), we
force $\mathit{x}$ to be of $\ottnt{T_{{\mathrm{11}}}}$ by inserting an upcast, which can be
eliminated after showing the upcast elimination.
We can refine a subtype furthermore \Sub{RefineL}.
By contrast, a supertype can be refined if we can prove that any value of the
subtype satisfies the refinement \Sub{RefineR}.
Term $\ottnt{e}$ is satisfied under $\Gamma$ ($\Gamma  \models  \ottnt{e}$) if,
for any closing substitution $\sigma$ that respects $\Gamma$,
$ \sigma  (  \ottnt{e}  ) $ evaluates to $ \mathsf{true} $.
\Sub{RefineR} also inserts an upcast since
satisfaction assumes that the type of $\mathit{x}$ is subtype $\ottnt{T_{{\mathrm{1}}}}$ but
$\ottnt{e}$ refers to it as supertype $\ottnt{T_{{\mathrm{2}}}}$.

We show that an upcast and an identity function are contextually equivalent via
the logical relation.

\begin{prop}{fh-elim-upcast-aux}
 If
 $\Gamma  \vdash  \ottnt{T_{{\mathrm{1}}}}$ and
 $\Gamma  \vdash  \ottnt{T_{{\mathrm{2}}}}$ and
 $\Gamma  \vdash  \ottnt{T_{{\mathrm{1}}}}  \ottsym{<:}  \ottnt{T_{{\mathrm{2}}}}$ and
 $\Gamma  \vdash  \theta  \ottsym{;}  \delta$,
 then
 $  \theta_{{\mathrm{1}}}  (   \delta_{{\mathrm{1}}}  (  \langle  \ottnt{T_{{\mathrm{1}}}}  \Rightarrow  \ottnt{T_{{\mathrm{2}}}}  \rangle   ^{ \ell }   )   )   \simeq_{\mathtt{v} }   \theta_{{\mathrm{2}}}  (   \delta_{{\mathrm{2}}}  (    \lambda    \mathit{x}  \mathord{:}  \ottnt{T_{{\mathrm{1}}}}  .  \mathit{x}   )   )    \ottsym{:}   \ottnt{T_{{\mathrm{1}}}}  \rightarrow  \ottnt{T_{{\mathrm{2}}}} ;  \theta ;  \delta $.

 \proof

 By induction on $\Gamma  \vdash  \ottnt{T_{{\mathrm{1}}}}  \ottsym{<:}  \ottnt{T_{{\mathrm{2}}}}$.
 It suffices to show that,
 for any $\ottnt{v_{{\mathrm{1}}}}$ and $\ottnt{v_{{\mathrm{2}}}}$ such that $ \ottnt{v_{{\mathrm{1}}}}  \simeq_{\mathtt{v} }  \ottnt{v_{{\mathrm{2}}}}   \ottsym{:}   \ottnt{T_{{\mathrm{1}}}} ;  \theta ;  \delta $,
 \[
    \theta_{{\mathrm{1}}}  (   \delta_{{\mathrm{1}}}  (  \langle  \ottnt{T_{{\mathrm{1}}}}  \Rightarrow  \ottnt{T_{{\mathrm{2}}}}  \rangle   ^{ \ell }   )   )  \, \ottnt{v_{{\mathrm{1}}}}  \simeq_{\mathtt{e} }  \ottnt{v_{{\mathrm{2}}}}   \ottsym{:}   \ottnt{T_{{\mathrm{2}}}} ;  \theta ;  \delta .
 \]
 We proceed by case analysis on the rule applied last to derive
 $\Gamma  \vdash  \ottnt{T_{{\mathrm{1}}}}  \ottsym{<:}  \ottnt{T_{{\mathrm{2}}}}$.
 \begin{itemize}
  \case \Sub{Base}: Obvious
   since $\langle  \ottnt{B}  \Rightarrow  \ottnt{B}  \rangle   ^{ \ell }  \, \ottnt{v_{{\mathrm{1}}}}  \longrightarrow  \ottnt{v_{{\mathrm{1}}}}$ and $ \ottnt{v_{{\mathrm{1}}}}  \simeq_{\mathtt{v} }  \ottnt{v_{{\mathrm{2}}}}   \ottsym{:}   \ottnt{B} ;  \theta ;  \delta $.

  \case \Sub{TVar}:
   We are given $\Gamma  \vdash  \alpha  \ottsym{<:}  \alpha$.
   Since $\Gamma  \vdash  \alpha$ and $\Gamma  \vdash  \theta  \ottsym{;}  \delta$,
   there exists some $\ottnt{r'}$, $\ottnt{T'_{{\mathrm{1}}}}$, and $\ottnt{T'_{{\mathrm{2}}}}$ such that
   $ \theta  (  \alpha  ) = (  \ottnt{r'} ,  \ottnt{T'_{{\mathrm{1}}}} ,  \ottnt{T'_{{\mathrm{2}}}}  ) $.
   Since $ \ottnt{v_{{\mathrm{1}}}}  \simeq_{\mathtt{v} }  \ottnt{v_{{\mathrm{2}}}}   \ottsym{:}   \alpha ;  \theta ;  \delta $, we have $\ottsym{(}  \ottnt{v_{{\mathrm{1}}}}  \ottsym{,}  \ottnt{v_{{\mathrm{2}}}}  \ottsym{)} \, \in \, \ottnt{r'}$.
   Since $\ottnt{r'} \, \in \,  \mathsf{VRel}  (  \ottnt{T'_{{\mathrm{1}}}} ,  \ottnt{T'_{{\mathrm{2}}}}  ) $,
   there exists some $\ottnt{v'_{{\mathrm{1}}}}$ such that
   $\langle  \ottnt{T'_{{\mathrm{1}}}}  \Rightarrow  \ottnt{T'_{{\mathrm{1}}}}  \rangle   ^{ \ell }  \, \ottnt{v_{{\mathrm{1}}}}  \longrightarrow^{\ast}  \ottnt{v'_{{\mathrm{1}}}}$ and $\ottsym{(}  \ottnt{v'_{{\mathrm{1}}}}  \ottsym{,}  \ottnt{v_{{\mathrm{2}}}}  \ottsym{)} \, \in \, \ottnt{r'}$.
   Thus, $ \ottnt{v'_{{\mathrm{1}}}}  \simeq_{\mathtt{v} }  \ottnt{v_{{\mathrm{2}}}}   \ottsym{:}   \alpha ;  \theta ;  \delta $.

  \case \Sub{Fun}:
   We are given $\Gamma  \vdash   \mathit{x} \mathord{:} \ottnt{T_{{\mathrm{11}}}} \rightarrow \ottnt{T_{{\mathrm{12}}}}   \ottsym{<:}   \mathit{x} \mathord{:} \ottnt{T_{{\mathrm{21}}}} \rightarrow \ottnt{T_{{\mathrm{22}}}} $.
   By inversion, we have
   $\Gamma  \vdash  \ottnt{T_{{\mathrm{21}}}}  \ottsym{<:}  \ottnt{T_{{\mathrm{11}}}}$ and $ \Gamma  ,  \mathit{x}  \mathord{:}  \ottnt{T_{{\mathrm{21}}}}   \vdash  \ottnt{T_{{\mathrm{12}}}} \, [  \langle  \ottnt{T_{{\mathrm{21}}}}  \Rightarrow  \ottnt{T_{{\mathrm{11}}}}  \rangle   ^{ \ell }  \, \mathit{x}  \ottsym{/}  \mathit{x}  ]  \ottsym{<:}  \ottnt{T_{{\mathrm{22}}}}$.
   Without loss of generality, we can suppose that $\mathit{x} \, \notin \,  \mathit{dom}  (  \delta  ) $.
   By \E{Red}/\R{Fun},
   \[\begin{array}{l}
     \theta_{{\mathrm{1}}}  (   \delta_{{\mathrm{1}}}  (  \langle   \mathit{x} \mathord{:} \ottnt{T_{{\mathrm{11}}}} \rightarrow \ottnt{T_{{\mathrm{12}}}}   \Rightarrow   \mathit{x} \mathord{:} \ottnt{T_{{\mathrm{21}}}} \rightarrow \ottnt{T_{{\mathrm{22}}}}   \rangle   ^{ \ell }   )   )  \, \ottnt{v_{{\mathrm{1}}}}  \longrightarrow  \\ \quad
      \theta_{{\mathrm{1}}}  (   \delta_{{\mathrm{1}}}  (    \lambda    \mathit{x}  \mathord{:}  \ottnt{T_{{\mathrm{21}}}}  .   \mathsf{let}  ~  \mathit{y}  \mathord{:}  \ottnt{T_{{\mathrm{11}}}}  \equal  \langle  \ottnt{T_{{\mathrm{21}}}}  \Rightarrow  \ottnt{T_{{\mathrm{11}}}}  \rangle   ^{ \ell }  \, \mathit{x}  ~ \ottliteralin ~  \langle  \ottnt{T_{{\mathrm{12}}}} \, [  \mathit{y}  \ottsym{/}  \mathit{x}  ]  \Rightarrow  \ottnt{T_{{\mathrm{22}}}}  \rangle   ^{ \ell }   \, \ottsym{(}  \ottnt{v_{{\mathrm{1}}}} \, \mathit{y}  \ottsym{)}   )   ) 
     \end{array}\]
   for a fresh variable $\mathit{y}$.
   By definition, it suffices to show that,
   for any $\ottnt{v'_{{\mathrm{1}}}}$ and $\ottnt{v'_{{\mathrm{2}}}}$ such that $ \ottnt{v'_{{\mathrm{1}}}}  \simeq_{\mathtt{v} }  \ottnt{v'_{{\mathrm{2}}}}   \ottsym{:}   \ottnt{T_{{\mathrm{21}}}} ;  \theta ;  \delta $,
   \[\begin{array}{ll}
    &   \theta_{{\mathrm{1}}}  (   \delta_{{\mathrm{1}}}  (    \lambda    \mathit{x}  \mathord{:}  \ottnt{T_{{\mathrm{21}}}}  .   \mathsf{let}  ~  \mathit{y}  \mathord{:}  \ottnt{T_{{\mathrm{11}}}}  \equal  \langle  \ottnt{T_{{\mathrm{21}}}}  \Rightarrow  \ottnt{T_{{\mathrm{11}}}}  \rangle   ^{ \ell }  \, \mathit{x}  ~ \ottliteralin ~  \langle  \ottnt{T_{{\mathrm{12}}}} \, [  \mathit{y}  \ottsym{/}  \mathit{x}  ]  \Rightarrow  \ottnt{T_{{\mathrm{22}}}}  \rangle   ^{ \ell }   \, \ottsym{(}  \ottnt{v_{{\mathrm{1}}}} \, \mathit{y}  \ottsym{)}   )   )  \, \ottnt{v'_{{\mathrm{1}}}}    \\   \simeq_{\mathtt{e} }   &    \ottnt{v_{{\mathrm{2}}}} \, \ottnt{v'_{{\mathrm{2}}}}   \ottsym{:}   \ottnt{T_{{\mathrm{22}}}} ;  \theta ;   \delta    [  \,  (  \ottnt{v'_{{\mathrm{1}}}}  ,  \ottnt{v'_{{\mathrm{2}}}}  ) /  \mathit{x}  \,  ]   .
     \end{array}\]

   Since $\Gamma  \vdash   \mathit{x} \mathord{:} \ottnt{T_{{\mathrm{11}}}} \rightarrow \ottnt{T_{{\mathrm{12}}}} $ and $\Gamma  \vdash   \mathit{x} \mathord{:} \ottnt{T_{{\mathrm{21}}}} \rightarrow \ottnt{T_{{\mathrm{22}}}} $,
   we have $\Gamma  \vdash  \ottnt{T_{{\mathrm{11}}}}$ and $\Gamma  \vdash  \ottnt{T_{{\mathrm{21}}}}$
   by their inversion.
   Since $\Gamma  \vdash  \ottnt{T_{{\mathrm{21}}}}  \ottsym{<:}  \ottnt{T_{{\mathrm{11}}}}$ and $\Gamma  \vdash  \theta  \ottsym{;}  \delta$,
   we have
   \[
      \theta_{{\mathrm{1}}}  (   \delta_{{\mathrm{1}}}  (  \langle  \ottnt{T_{{\mathrm{21}}}}  \Rightarrow  \ottnt{T_{{\mathrm{11}}}}  \rangle   ^{ \ell }   )   )   \simeq_{\mathtt{v} }   \theta_{{\mathrm{2}}}  (   \delta_{{\mathrm{2}}}  (    \lambda    \mathit{x}  \mathord{:}  \ottnt{T_{{\mathrm{21}}}}  .  \mathit{x}   )   )    \ottsym{:}   \ottnt{T_{{\mathrm{21}}}}  \rightarrow  \ottnt{T_{{\mathrm{11}}}} ;  \theta ;  \delta 
   \]
   by the IH.
   Since $ \ottnt{v'_{{\mathrm{1}}}}  \simeq_{\mathtt{v} }  \ottnt{v'_{{\mathrm{2}}}}   \ottsym{:}   \ottnt{T_{{\mathrm{21}}}} ;  \theta ;  \delta $,
   we have
   \[
      \theta_{{\mathrm{1}}}  (   \delta_{{\mathrm{1}}}  (  \langle  \ottnt{T_{{\mathrm{21}}}}  \Rightarrow  \ottnt{T_{{\mathrm{11}}}}  \rangle   ^{ \ell }   )   )  \, \ottnt{v'_{{\mathrm{1}}}}  \simeq_{\mathtt{e} }  \ottnt{v'_{{\mathrm{2}}}}   \ottsym{:}   \ottnt{T_{{\mathrm{11}}}} ;  \theta ;  \delta .
   \]
   By definition,
   there exists some $\ottnt{v''_{{\mathrm{1}}}}$ such that
   $ \theta_{{\mathrm{1}}}  (   \delta_{{\mathrm{1}}}  (  \langle  \ottnt{T_{{\mathrm{21}}}}  \Rightarrow  \ottnt{T_{{\mathrm{11}}}}  \rangle   ^{ \ell }   )   )  \, \ottnt{v'_{{\mathrm{1}}}}  \longrightarrow^{\ast}  \ottnt{v''_{{\mathrm{1}}}}$ and
   $ \ottnt{v''_{{\mathrm{1}}}}  \simeq_{\mathtt{v} }  \ottnt{v'_{{\mathrm{2}}}}   \ottsym{:}   \ottnt{T_{{\mathrm{11}}}} ;  \theta ;  \delta $.
   Thus, it suffices to show that
   \[
      \theta_{{\mathrm{1}}}  (   \delta_{{\mathrm{1}}}  (  \langle  \ottnt{T_{{\mathrm{12}}}} \, [  \ottnt{v''_{{\mathrm{1}}}}  \ottsym{/}  \mathit{x}  ]  \Rightarrow  \ottnt{T_{{\mathrm{22}}}} \, [  \ottnt{v'_{{\mathrm{1}}}}  \ottsym{/}  \mathit{x}  ]  \rangle   ^{ \ell }   )   )  \, \ottsym{(}  \ottnt{v_{{\mathrm{1}}}} \, \ottnt{v''_{{\mathrm{1}}}}  \ottsym{)}  \simeq_{\mathtt{e} }  \ottnt{v_{{\mathrm{2}}}} \, \ottnt{v'_{{\mathrm{2}}}}   \ottsym{:}   \ottnt{T_{{\mathrm{22}}}} ;  \theta ;   \delta    [  \,  (  \ottnt{v'_{{\mathrm{1}}}}  ,  \ottnt{v'_{{\mathrm{2}}}}  ) /  \mathit{x}  \,  ]   .
   \]
   Since $ \ottnt{v_{{\mathrm{1}}}}  \simeq_{\mathtt{v} }  \ottnt{v_{{\mathrm{2}}}}   \ottsym{:}    \mathit{x} \mathord{:} \ottnt{T_{{\mathrm{11}}}} \rightarrow \ottnt{T_{{\mathrm{12}}}}  ;  \theta ;  \delta $ and
   $ \ottnt{v''_{{\mathrm{1}}}}  \simeq_{\mathtt{v} }  \ottnt{v'_{{\mathrm{2}}}}   \ottsym{:}   \ottnt{T_{{\mathrm{11}}}} ;  \theta ;  \delta $,
   we have
   \[
     \ottnt{v_{{\mathrm{1}}}} \, \ottnt{v''_{{\mathrm{1}}}}  \simeq_{\mathtt{e} }  \ottnt{v_{{\mathrm{2}}}} \, \ottnt{v'_{{\mathrm{2}}}}   \ottsym{:}   \ottnt{T_{{\mathrm{12}}}} ;  \theta ;   \delta    [  \,  (  \ottnt{v''_{{\mathrm{1}}}}  ,  \ottnt{v'_{{\mathrm{2}}}}  ) /  \mathit{x}  \,  ]   .
   \]
   If $\ottnt{v_{{\mathrm{1}}}} \, \ottnt{v''_{{\mathrm{1}}}}$ and $\ottnt{v_{{\mathrm{2}}}} \, \ottnt{v'_{{\mathrm{2}}}}$ raise blame, we finish.
   Otherwise,
   $\ottnt{v_{{\mathrm{1}}}} \, \ottnt{v''_{{\mathrm{1}}}}  \longrightarrow^{\ast}  \ottnt{v'''_{{\mathrm{1}}}}$ and $\ottnt{v_{{\mathrm{2}}}} \, \ottnt{v'_{{\mathrm{2}}}}  \longrightarrow^{\ast}  \ottnt{v'''_{{\mathrm{2}}}}$
   for some $\ottnt{v'''_{{\mathrm{1}}}}$ and $\ottnt{v'''_{{\mathrm{2}}}}$, and it suffices to show that
   \[
      \theta_{{\mathrm{1}}}  (   \delta_{{\mathrm{1}}}  (  \langle  \ottnt{T_{{\mathrm{12}}}} \, [  \ottnt{v''_{{\mathrm{1}}}}  \ottsym{/}  \mathit{x}  ]  \Rightarrow  \ottnt{T_{{\mathrm{22}}}} \, [  \ottnt{v'_{{\mathrm{1}}}}  \ottsym{/}  \mathit{x}  ]  \rangle   ^{ \ell }   )   )  \, \ottnt{v'''_{{\mathrm{1}}}}  \simeq_{\mathtt{e} }  \ottnt{v'''_{{\mathrm{2}}}}   \ottsym{:}   \ottnt{T_{{\mathrm{22}}}} ;  \theta ;   \delta    [  \,  (  \ottnt{v'_{{\mathrm{1}}}}  ,  \ottnt{v'_{{\mathrm{2}}}}  ) /  \mathit{x}  \,  ]   .
   \]
   We also have $ \ottnt{v'''_{{\mathrm{1}}}}  \simeq_{\mathtt{v} }  \ottnt{v'''_{{\mathrm{2}}}}   \ottsym{:}   \ottnt{T_{{\mathrm{12}}}} ;  \theta ;   \delta    [  \,  (  \ottnt{v''_{{\mathrm{1}}}}  ,  \ottnt{v'_{{\mathrm{2}}}}  ) /  \mathit{x}  \,  ]   $.
   By $\alpha$-renaming $\mathit{x}$ in $\ottnt{T_{{\mathrm{12}}}}$ to $\mathit{y}$ {\iffull (\prop:ref{fh-lr-alpha-eq}) \fi} and
   the weakening (\prop:ref{fh-lr-val-ws(trel)}),
   \[
     \ottnt{v'''_{{\mathrm{1}}}}  \simeq_{\mathtt{v} }  \ottnt{v'''_{{\mathrm{2}}}}   \ottsym{:}   \ottnt{T_{{\mathrm{12}}}} \, [  \mathit{y}  \ottsym{/}  \mathit{x}  ] ;  \theta ;    \delta    [  \,  (  \ottnt{v'_{{\mathrm{1}}}}  ,  \ottnt{v'_{{\mathrm{2}}}}  ) /  \mathit{x}  \,  ]      [  \,  (  \ottnt{v''_{{\mathrm{1}}}}  ,  \ottnt{v'_{{\mathrm{2}}}}  ) /  \mathit{y}  \,  ]   .
   \]
   Thus, it suffices to show that
   \begin{eqnarray}
    \begin{array}{ll}
    &   \theta_{{\mathrm{1}}}  (   \delta_{{\mathrm{1}}}  (  \langle  \ottnt{T_{{\mathrm{12}}}} \, [  \ottnt{v''_{{\mathrm{1}}}}  \ottsym{/}  \mathit{x}  ]  \Rightarrow  \ottnt{T_{{\mathrm{22}}}} \, [  \ottnt{v'_{{\mathrm{1}}}}  \ottsym{/}  \mathit{x}  ]  \rangle   ^{ \ell }   )   )     \\   \simeq_{\mathtt{v} }   &     \theta_{{\mathrm{2}}}  (   \delta_{{\mathrm{2}}}  (    \lambda    \mathit{z}  \mathord{:}  \ottnt{T'_{{\mathrm{12}}}}  .  \mathit{z}   )  \, [  \ottnt{v'_{{\mathrm{2}}}}  \ottsym{/}  \mathit{x}  ]  )    \ottsym{:}   \ottnt{T_{{\mathrm{12}}}} \, [  \mathit{y}  \ottsym{/}  \mathit{x}  ]  \rightarrow  \ottnt{T_{{\mathrm{22}}}} ;  \theta ;    \delta    [  \,  (  \ottnt{v'_{{\mathrm{1}}}}  ,  \ottnt{v'_{{\mathrm{2}}}}  ) /  \mathit{x}  \,  ]      [  \,  (  \ottnt{v''_{{\mathrm{1}}}}  ,  \ottnt{v'_{{\mathrm{2}}}}  ) /  \mathit{y}  \,  ]   .
     \label{req:fh-elim-upcast-aux-one}
    \end{array}
   \end{eqnarray}
   where $\ottnt{T'_{{\mathrm{12}}}}  \ottsym{=}  \ottnt{T_{{\mathrm{12}}}} \, [  \langle  \ottnt{T_{{\mathrm{21}}}}  \Rightarrow  \ottnt{T_{{\mathrm{11}}}}  \rangle   ^{ \ell }  \, \mathit{x}  \ottsym{/}  \mathit{x}  ]$.

   We first show
   \begin{equation}
    \begin{array}{ll}
     &   \theta_{{\mathrm{1}}}  (   \delta_{{\mathrm{1}}}  (  \langle  \ottnt{T_{{\mathrm{12}}}} \, [  \ottnt{v''_{{\mathrm{1}}}}  \ottsym{/}  \mathit{x}  ]  \Rightarrow  \ottnt{T_{{\mathrm{22}}}} \, [  \ottnt{v'_{{\mathrm{1}}}}  \ottsym{/}  \mathit{x}  ]  \rangle   ^{ \ell }   )   )     \\   \simeq_{\mathtt{v} }   &     \theta_{{\mathrm{2}}}  (   \delta_{{\mathrm{2}}}  (    \lambda    \mathit{z}  \mathord{:}  \ottnt{T'_{{\mathrm{12}}}}  .  \mathit{z}   )  \, [  \ottnt{v'_{{\mathrm{2}}}}  \ottsym{/}  \mathit{x}  ]  )    \ottsym{:}   \ottnt{T'_{{\mathrm{12}}}}  \rightarrow  \ottnt{T_{{\mathrm{22}}}} ;  \theta ;   \delta    [  \,  (  \ottnt{v'_{{\mathrm{1}}}}  ,  \ottnt{v'_{{\mathrm{2}}}}  ) /  \mathit{x}  \,  ]   
    \end{array}
     \label{eqn:fh-elim-upcast-aux-three}
   \end{equation}
   by applying the equivalence-respecting property (\prop:ref{fh-lr-comp-equiv-res}).
   Since $\Gamma  \vdash   \mathit{x} \mathord{:} \ottnt{T_{{\mathrm{11}}}} \rightarrow \ottnt{T_{{\mathrm{12}}}} $ and $\Gamma  \vdash   \mathit{x} \mathord{:} \ottnt{T_{{\mathrm{21}}}} \rightarrow \ottnt{T_{{\mathrm{22}}}} $,
   we have $ \Gamma  ,  \mathit{x}  \mathord{:}  \ottnt{T_{{\mathrm{11}}}}   \vdash  \ottnt{T_{{\mathrm{12}}}}$ and $ \Gamma  ,  \mathit{x}  \mathord{:}  \ottnt{T_{{\mathrm{21}}}}   \vdash  \ottnt{T_{{\mathrm{22}}}}$.
   By the typing weakening (\prop:ref{fh-weak-term}) and
   the term substitution (\prop:ref{fh-subst-term}),
   $ \Gamma  ,  \mathit{x}  \mathord{:}  \ottnt{T_{{\mathrm{21}}}}   \vdash  \ottnt{T'_{{\mathrm{12}}}}$.
   Since $\Gamma  \vdash  \theta  \ottsym{;}  \delta$ and $ \ottnt{v'_{{\mathrm{1}}}}  \simeq_{\mathtt{v} }  \ottnt{v'_{{\mathrm{2}}}}   \ottsym{:}   \ottnt{T_{{\mathrm{21}}}} ;  \theta ;  \delta $,
   we have $ \Gamma  ,  \mathit{x}  \mathord{:}  \ottnt{T_{{\mathrm{21}}}}   \vdash  \theta  \ottsym{;}   \delta    [  \,  (  \ottnt{v'_{{\mathrm{1}}}}  ,  \ottnt{v'_{{\mathrm{2}}}}  ) /  \mathit{x}  \,  ]  $
   by the weakening of the logical relation (\prop:ref{fh-lr-val-ws(tctx)}).
   Since $ \Gamma  ,  \mathit{x}  \mathord{:}  \ottnt{T_{{\mathrm{21}}}}   \vdash  \ottnt{T_{{\mathrm{22}}}}$ and $ \Gamma  ,  \mathit{x}  \mathord{:}  \ottnt{T_{{\mathrm{21}}}}   \vdash  \ottnt{T'_{{\mathrm{12}}}}  \ottsym{<:}  \ottnt{T_{{\mathrm{22}}}}$,
   we have
   \begin{equation}
      \theta_{{\mathrm{1}}}  (   \delta_{{\mathrm{1}}}  (  \langle  \ottnt{T'_{{\mathrm{12}}}}  \Rightarrow  \ottnt{T_{{\mathrm{22}}}}  \rangle   ^{ \ell }  \, [  \ottnt{v'_{{\mathrm{1}}}}  \ottsym{/}  \mathit{x}  ]  )   )   \simeq_{\mathtt{v} }   \theta_{{\mathrm{2}}}  (   \delta_{{\mathrm{2}}}  (    \lambda    \mathit{z}  \mathord{:}  \ottnt{T'_{{\mathrm{12}}}}  .  \mathit{z}   )  \, [  \ottnt{v'_{{\mathrm{2}}}}  \ottsym{/}  \mathit{x}  ]  )    \ottsym{:}   \ottnt{T'_{{\mathrm{12}}}}  \rightarrow  \ottnt{T_{{\mathrm{22}}}} ;  \theta ;   \delta    [  \,  (  \ottnt{v'_{{\mathrm{1}}}}  ,  \ottnt{v'_{{\mathrm{2}}}}  ) /  \mathit{x}  \,  ]   
     \label{eqn:fh-elim-upcast-aux-one}
   \end{equation}
   by the IH.
   Furthermore, since $ \theta_{{\mathrm{1}}}  (   \delta_{{\mathrm{1}}}  (  \langle  \ottnt{T_{{\mathrm{21}}}}  \Rightarrow  \ottnt{T_{{\mathrm{11}}}}  \rangle   ^{ \ell }   )   )  \, \ottnt{v'_{{\mathrm{1}}}}  \longrightarrow^{\ast}  \ottnt{v''_{{\mathrm{1}}}}$,
   we can show
   \begin{equation}
    \begin{array}{ll}
     & \emptyset  \vdash   \theta_{{\mathrm{1}}}  (   \delta_{{\mathrm{1}}}  (  \langle  \ottnt{T_{{\mathrm{12}}}} \, [  \ottnt{v''_{{\mathrm{1}}}}  \ottsym{/}  \mathit{x}  ]  \Rightarrow  \ottnt{T_{{\mathrm{22}}}} \, [  \ottnt{v'_{{\mathrm{1}}}}  \ottsym{/}  \mathit{x}  ]  \rangle   ^{ \ell }   )   )  \,  \\  \, =_\mathsf{ciu} \,  &  \,  \theta_{{\mathrm{1}}}  (   \delta_{{\mathrm{1}}}  (  \langle  \ottnt{T'_{{\mathrm{12}}}}  \Rightarrow  \ottnt{T_{{\mathrm{22}}}}  \rangle   ^{ \ell }   )  \, [  \ottnt{v'_{{\mathrm{1}}}}  \ottsym{/}  \mathit{x}  ]  )   \ottsym{:}   \theta_{{\mathrm{1}}}  (   \delta_{{\mathrm{1}}}  (  \ottnt{T'_{{\mathrm{12}}}}  \rightarrow  \ottnt{T_{{\mathrm{22}}}}  )  \, [  \ottnt{v'_{{\mathrm{1}}}}  \ottsym{/}  \mathit{x}  ]  ) 
    \end{array}
    \label{eqn:fh-elim-upcast-aux-two}
   \end{equation}
   by using Cotermination.
   From (\ref{eqn:fh-elim-upcast-aux-one}) and (\ref{eqn:fh-elim-upcast-aux-two}),
   the equivalence-respecting property derives (\ref{eqn:fh-elim-upcast-aux-three}).

   We show (\ref{req:fh-elim-upcast-aux-one})
   by applying (\ref{eqn:fh-elim-upcast-aux-three}) to the term compositionality (\prop:ref{fh-lr-term-comp}).
   Since $\ottnt{T'_{{\mathrm{12}}}}  \ottsym{=}  \ottnt{T_{{\mathrm{12}}}} \, [  \langle  \ottnt{T_{{\mathrm{21}}}}  \Rightarrow  \ottnt{T_{{\mathrm{11}}}}  \rangle   ^{ \ell }  \, \mathit{x}  \ottsym{/}  \mathit{x}  ]  \ottsym{=}  \ottnt{T_{{\mathrm{12}}}} \, [  \mathit{y}  \ottsym{/}  \mathit{x}  ] \, [  \langle  \ottnt{T_{{\mathrm{21}}}}  \Rightarrow  \ottnt{T_{{\mathrm{11}}}}  \rangle   ^{ \ell }  \, \mathit{x}  \ottsym{/}  \mathit{y}  ]$,
   it suffices to show that
   \begin{enumerate}
    \item \label{req:fh-elim-upcast-aux-Gself}
          $  \Gamma  ,  \mathit{x}  \mathord{:}  \ottnt{T_{{\mathrm{21}}}}   ,  \mathit{y}  \mathord{:}  \ottnt{T_{{\mathrm{11}}}} $ is self-related,
    \item \label{req:fh-elim-upcast-aux-T12self}
          $   \Gamma  ,  \mathit{x}  \mathord{:}  \ottnt{T_{{\mathrm{21}}}}   ,  \mathit{y}  \mathord{:}  \ottnt{T_{{\mathrm{11}}}}  \vdash \ottnt{T_{{\mathrm{12}}}} \, [  \mathit{y}  \ottsym{/}  \mathit{x}  ]  \mathrel{ \simeq }  \ottnt{T_{{\mathrm{12}}}} \, [  \mathit{y}  \ottsym{/}  \mathit{x}  ]  : \ast $,
    \item \label{req:fh-elim-upcast-aux-subst}
          $  \Gamma  ,  \mathit{x}  \mathord{:}  \ottnt{T_{{\mathrm{21}}}}   ,  \mathit{y}  \mathord{:}  \ottnt{T_{{\mathrm{11}}}}   \vdash  \theta  \ottsym{;}    \delta    [  \,  (  \ottnt{v'_{{\mathrm{1}}}}  ,  \ottnt{v'_{{\mathrm{2}}}}  ) /  \mathit{x}  \,  ]      [  \,  (  \ottnt{v''_{{\mathrm{1}}}}  ,  \ottnt{v'_{{\mathrm{2}}}}  ) /  \mathit{y}  \,  ]  $,
    % \item \label{req:fh-elim-upcast-aux-v1}
    %       $ \theta_{{\mathrm{1}}}  (   \delta_{{\mathrm{1}}}  (  \langle  \ottnt{T_{{\mathrm{21}}}}  \Rightarrow  \ottnt{T_{{\mathrm{11}}}}  \rangle   ^{ \ell }   )   )  \, \ottnt{v'_{{\mathrm{1}}}}  \longrightarrow^{\ast}  \ottnt{v''_{{\mathrm{1}}}}$,
    \item \label{req:fh-elim-upcast-aux-v2}
          $ \theta_{{\mathrm{2}}}  (   \delta_{{\mathrm{2}}}  (  \langle  \ottnt{T_{{\mathrm{21}}}}  \Rightarrow  \ottnt{T_{{\mathrm{11}}}}  \rangle   ^{ \ell }   )   )  \, \ottnt{v'_{{\mathrm{2}}}}  \longrightarrow^{\ast}  \ottnt{v''_{{\mathrm{2}}}}$ and
          $ \ottnt{v''_{{\mathrm{1}}}}  \simeq_{\mathtt{v} }  \ottnt{v''_{{\mathrm{2}}}}   \ottsym{:}   \ottnt{T_{{\mathrm{11}}}} ;  \theta ;   \delta    [  \,  (  \ottnt{v'_{{\mathrm{1}}}}  ,  \ottnt{v'_{{\mathrm{2}}}}  ) /  \mathit{x}  \,  ]   $
          for some $\ottnt{v''_{{\mathrm{2}}}}$.
   \end{enumerate}

   Since $ \Gamma  ,  \mathit{x}  \mathord{:}  \ottnt{T_{{\mathrm{11}}}}   \vdash  \ottnt{T_{{\mathrm{12}}}}$, we have
   $  \Gamma  ,  \mathit{x}  \mathord{:}  \ottnt{T_{{\mathrm{21}}}}   ,  \mathit{y}  \mathord{:}  \ottnt{T_{{\mathrm{11}}}}   \vdash  \ottnt{T_{{\mathrm{12}}}} \, [  \mathit{y}  \ottsym{/}  \mathit{x}  ]$.
   By the parametricity (\prop:ref{fh-lr-param}), we have
   (\ref{req:fh-elim-upcast-aux-Gself}) and
   (\ref{req:fh-elim-upcast-aux-T12self}).

   Since $\Gamma  \vdash  \theta  \ottsym{;}  \delta$ and
   $ \ottnt{v''_{{\mathrm{1}}}}  \simeq_{\mathtt{v} }  \ottnt{v'_{{\mathrm{2}}}}   \ottsym{:}   \ottnt{T_{{\mathrm{11}}}} ;  \theta ;  \delta $ and
   $ \ottnt{v'_{{\mathrm{1}}}}  \simeq_{\mathtt{v} }  \ottnt{v'_{{\mathrm{2}}}}   \ottsym{:}   \ottnt{T_{{\mathrm{21}}}} ;  \theta ;  \delta $,
   we have (\ref{req:fh-elim-upcast-aux-subst})
   by the weakening (\prop:ref{fh-lr-val-ws(tctx)}).

   Since $\Gamma  \vdash  \theta  \ottsym{;}  \delta$ and
   $ \ottnt{v'_{{\mathrm{1}}}}  \simeq_{\mathtt{v} }  \ottnt{v'_{{\mathrm{2}}}}   \ottsym{:}   \ottnt{T_{{\mathrm{21}}}} ;  \theta ;  \delta $,
   we have $ \Gamma  ,  \mathit{x}  \mathord{:}  \ottnt{T_{{\mathrm{21}}}}   \vdash  \theta  \ottsym{;}   \delta    [  \,  (  \ottnt{v'_{{\mathrm{1}}}}  ,  \ottnt{v'_{{\mathrm{2}}}}  ) /  \mathit{x}  \,  ]  $.
   Since $ \Gamma  ,  \mathit{x}  \mathord{:}  \ottnt{T_{{\mathrm{21}}}}   \vdash  \langle  \ottnt{T_{{\mathrm{21}}}}  \Rightarrow  \ottnt{T_{{\mathrm{11}}}}  \rangle   ^{ \ell }  \, \mathit{x}  \ottsym{:}  \ottnt{T_{{\mathrm{11}}}}$,
   we have $ \Gamma  ,  \mathit{x}  \mathord{:}  \ottnt{T_{{\mathrm{21}}}}   \vdash  \langle  \ottnt{T_{{\mathrm{21}}}}  \Rightarrow  \ottnt{T_{{\mathrm{11}}}}  \rangle   ^{ \ell }  \, \mathit{x} \,  \mathrel{ \simeq }  \, \langle  \ottnt{T_{{\mathrm{21}}}}  \Rightarrow  \ottnt{T_{{\mathrm{11}}}}  \rangle   ^{ \ell }  \, \mathit{x}  \ottsym{:}  \ottnt{T_{{\mathrm{11}}}}$
   by the parametricity (\prop:ref{fh-lr-param}).
   Thus, by definition,
   \[
      \theta_{{\mathrm{1}}}  (   \delta_{{\mathrm{1}}}  (  \langle  \ottnt{T_{{\mathrm{21}}}}  \Rightarrow  \ottnt{T_{{\mathrm{11}}}}  \rangle   ^{ \ell }   )   )  \, \ottnt{v'_{{\mathrm{1}}}}  \simeq_{\mathtt{e} }   \theta_{{\mathrm{2}}}  (   \delta_{{\mathrm{2}}}  (  \langle  \ottnt{T_{{\mathrm{21}}}}  \Rightarrow  \ottnt{T_{{\mathrm{11}}}}  \rangle   ^{ \ell }   )   )  \, \ottnt{v'_{{\mathrm{2}}}}   \ottsym{:}   \ottnt{T_{{\mathrm{11}}}} ;  \theta ;   \delta    [  \,  (  \ottnt{v'_{{\mathrm{1}}}}  ,  \ottnt{v'_{{\mathrm{2}}}}  ) /  \mathit{x}  \,  ]   .
   \]
   Since $ \theta_{{\mathrm{1}}}  (   \delta_{{\mathrm{1}}}  (  \langle  \ottnt{T_{{\mathrm{21}}}}  \Rightarrow  \ottnt{T_{{\mathrm{11}}}}  \rangle   ^{ \ell }   )   )  \, \ottnt{v'_{{\mathrm{1}}}}  \longrightarrow^{\ast}  \ottnt{v''_{{\mathrm{1}}}}$,
   we have (\ref{req:fh-elim-upcast-aux-v2}).

  \case \Sub{Forall}: By the IH.
  \case \Sub{RefineL}: By the IH.
   {\iffull
   We are given $\Gamma  \vdash   \{  \mathit{x}  \mathord{:}  \ottnt{T'_{{\mathrm{1}}}}   \mathop{\mid}   \ottnt{e'_{{\mathrm{1}}}}  \}   \ottsym{<:}  \ottnt{T_{{\mathrm{2}}}}$.
   By inversion, $\Gamma  \vdash  \ottnt{T'_{{\mathrm{1}}}}  \ottsym{<:}  \ottnt{T_{{\mathrm{2}}}}$.
   By \E{Red}/\R{Forget},
   $ \theta_{{\mathrm{1}}}  (   \delta_{{\mathrm{1}}}  (  \langle   \{  \mathit{x}  \mathord{:}  \ottnt{T'_{{\mathrm{1}}}}   \mathop{\mid}   \ottnt{e'_{{\mathrm{1}}}}  \}   \Rightarrow  \ottnt{T_{{\mathrm{2}}}}  \rangle   ^{ \ell }   )   )  \, \ottnt{v_{{\mathrm{1}}}}  \longrightarrow   \theta_{{\mathrm{1}}}  (   \delta_{{\mathrm{1}}}  (  \langle  \ottnt{T'_{{\mathrm{1}}}}  \Rightarrow  \ottnt{T_{{\mathrm{2}}}}  \rangle   ^{ \ell }   )   )  \, \ottnt{v_{{\mathrm{1}}}}$.
   Thus, it suffices to show that
   \[
      \theta_{{\mathrm{1}}}  (   \delta_{{\mathrm{1}}}  (  \langle  \ottnt{T'_{{\mathrm{1}}}}  \Rightarrow  \ottnt{T_{{\mathrm{2}}}}  \rangle   ^{ \ell }   )   )  \, \ottnt{v_{{\mathrm{1}}}}  \simeq_{\mathtt{e} }  \ottnt{v_{{\mathrm{2}}}}   \ottsym{:}   \ottnt{T_{{\mathrm{2}}}} ;  \theta ;  \delta .
   \]
   It can be proven straightforwardly by the IH.
   \fi}

  \case \Sub{RefineR}:
   We are given $\Gamma  \vdash  \ottnt{T_{{\mathrm{1}}}}  \ottsym{<:}   \{  \mathit{x}  \mathord{:}  \ottnt{T'_{{\mathrm{2}}}}   \mathop{\mid}   \ottnt{e'_{{\mathrm{2}}}}  \} $.
   By inversion,
   $\Gamma  \vdash  \ottnt{T_{{\mathrm{1}}}}  \ottsym{<:}  \ottnt{T'_{{\mathrm{2}}}}$ and $ \Gamma  ,  \mathit{x}  \mathord{:}  \ottnt{T_{{\mathrm{1}}}}   \models  \ottnt{e'_{{\mathrm{2}}}} \, [  \langle  \ottnt{T_{{\mathrm{1}}}}  \Rightarrow  \ottnt{T'_{{\mathrm{2}}}}  \rangle   ^{ \ell }  \, \mathit{x}  \ottsym{/}  \mathit{x}  ]$.
   By \E{Red}/\R{Forget},
   \[
     \theta_{{\mathrm{1}}}  (   \delta_{{\mathrm{1}}}  (  \langle  \ottnt{T_{{\mathrm{1}}}}  \Rightarrow   \{  \mathit{x}  \mathord{:}  \ottnt{T'_{{\mathrm{2}}}}   \mathop{\mid}   \ottnt{e'_{{\mathrm{2}}}}  \}   \rangle   ^{ \ell }   )   )  \, \ottnt{v_{{\mathrm{1}}}}  \longrightarrow^{\ast}  \langle   \mathit{unref}  (   \theta_{{\mathrm{1}}}  (   \delta_{{\mathrm{1}}}  (  \ottnt{T_{{\mathrm{1}}}}  )   )   )   \Rightarrow   \theta_{{\mathrm{1}}}  (   \delta_{{\mathrm{1}}}  (   \{  \mathit{x}  \mathord{:}  \ottnt{T'_{{\mathrm{2}}}}   \mathop{\mid}   \ottnt{e'_{{\mathrm{2}}}}  \}   )   )   \rangle   ^{ \ell }  \, \ottnt{v_{{\mathrm{1}}}}.
   \]
   By \E{Red}/\R{PreCheck},
   \[\begin{array}{ll}
    & \langle   \mathit{unref}  (   \theta_{{\mathrm{1}}}  (   \delta_{{\mathrm{1}}}  (  \ottnt{T_{{\mathrm{1}}}}  )   )   )   \Rightarrow   \theta_{{\mathrm{1}}}  (   \delta_{{\mathrm{1}}}  (   \{  \mathit{x}  \mathord{:}  \ottnt{T'_{{\mathrm{2}}}}   \mathop{\mid}   \ottnt{e'_{{\mathrm{2}}}}  \}   )   )   \rangle   ^{ \ell }  \, \ottnt{v_{{\mathrm{1}}}} \\
     \longrightarrow  &
       \langle\!\langle  \,  \theta_{{\mathrm{1}}}  (   \delta_{{\mathrm{1}}}  (   \{  \mathit{x}  \mathord{:}  \ottnt{T'_{{\mathrm{2}}}}   \mathop{\mid}   \ottnt{e'_{{\mathrm{2}}}}  \}   )   )   \ottsym{,}  \langle   \mathit{unref}  (   \theta_{{\mathrm{1}}}  (   \delta_{{\mathrm{1}}}  (  \ottnt{T_{{\mathrm{1}}}}  )   )   )   \Rightarrow   \theta_{{\mathrm{1}}}  (   \delta_{{\mathrm{1}}}  (  \ottnt{T'_{{\mathrm{2}}}}  )   )   \rangle   ^{ \ell }  \, \ottnt{v_{{\mathrm{1}}}} \,  \rangle\!\rangle  \,  ^{ \ell } .
     \end{array}\]
   Thus, it suffices to show that
   \[
      \langle\!\langle  \,  \theta_{{\mathrm{1}}}  (   \delta_{{\mathrm{1}}}  (   \{  \mathit{x}  \mathord{:}  \ottnt{T'_{{\mathrm{2}}}}   \mathop{\mid}   \ottnt{e'_{{\mathrm{2}}}}  \}   )   )   \ottsym{,}  \langle   \mathit{unref}  (   \theta_{{\mathrm{1}}}  (   \delta_{{\mathrm{1}}}  (  \ottnt{T_{{\mathrm{1}}}}  )   )   )   \Rightarrow   \theta_{{\mathrm{1}}}  (   \delta_{{\mathrm{1}}}  (  \ottnt{T'_{{\mathrm{2}}}}  )   )   \rangle   ^{ \ell }  \, \ottnt{v_{{\mathrm{1}}}} \,  \rangle\!\rangle  \,  ^{ \ell }   \simeq_{\mathtt{e} }  \ottnt{v_{{\mathrm{2}}}}   \ottsym{:}    \{  \mathit{x}  \mathord{:}  \ottnt{T'_{{\mathrm{2}}}}   \mathop{\mid}   \ottnt{e'_{{\mathrm{2}}}}  \}  ;  \theta ;  \delta .
   \]

   Since $\Gamma  \vdash   \{  \mathit{x}  \mathord{:}  \ottnt{T'_{{\mathrm{2}}}}   \mathop{\mid}   \ottnt{e'_{{\mathrm{2}}}}  \} $,
   we have $\Gamma  \vdash  \ottnt{T'_{{\mathrm{2}}}}$ by its inversion.
   Thus, by the IH,
   \[
      \theta_{{\mathrm{1}}}  (   \delta_{{\mathrm{1}}}  (  \langle  \ottnt{T_{{\mathrm{1}}}}  \Rightarrow  \ottnt{T'_{{\mathrm{2}}}}  \rangle   ^{ \ell }   )   )   \simeq_{\mathtt{v} }   \theta_{{\mathrm{2}}}  (   \delta_{{\mathrm{2}}}  (    \lambda    \mathit{x}  \mathord{:}  \ottnt{T_{{\mathrm{1}}}}  .  \mathit{x}   )   )    \ottsym{:}   \ottnt{T_{{\mathrm{1}}}}  \rightarrow  \ottnt{T'_{{\mathrm{2}}}} ;  \theta ;  \delta .
   \]
   Since $ \ottnt{v_{{\mathrm{1}}}}  \simeq_{\mathtt{v} }  \ottnt{v_{{\mathrm{2}}}}   \ottsym{:}   \ottnt{T_{{\mathrm{1}}}} ;  \theta ;  \delta $,
   we have
   \[
      \theta_{{\mathrm{1}}}  (   \delta_{{\mathrm{1}}}  (  \langle  \ottnt{T_{{\mathrm{1}}}}  \Rightarrow  \ottnt{T'_{{\mathrm{2}}}}  \rangle   ^{ \ell }   )   )  \, \ottnt{v_{{\mathrm{1}}}}  \simeq_{\mathtt{e} }  \ottnt{v_{{\mathrm{2}}}}   \ottsym{:}   \ottnt{T'_{{\mathrm{2}}}} ;  \theta ;  \delta .
   \]
   Since
   $ \theta_{{\mathrm{1}}}  (   \delta_{{\mathrm{1}}}  (  \langle  \ottnt{T_{{\mathrm{1}}}}  \Rightarrow  \ottnt{T'_{{\mathrm{2}}}}  \rangle   ^{ \ell }   )   )  \, \ottnt{v_{{\mathrm{1}}}}  \longrightarrow^{\ast}  \langle   \mathit{unref}  (   \theta_{{\mathrm{1}}}  (   \delta_{{\mathrm{1}}}  (  \ottnt{T_{{\mathrm{1}}}}  )   )   )   \Rightarrow   \theta_{{\mathrm{1}}}  (   \delta_{{\mathrm{1}}}  (  \ottnt{T'_{{\mathrm{2}}}}  )   )   \rangle   ^{ \ell }  \, \ottnt{v_{{\mathrm{1}}}}$,
   we have
   \[
     \langle   \mathit{unref}  (   \theta_{{\mathrm{1}}}  (   \delta_{{\mathrm{1}}}  (  \ottnt{T_{{\mathrm{1}}}}  )   )   )   \Rightarrow   \theta_{{\mathrm{1}}}  (   \delta_{{\mathrm{1}}}  (  \ottnt{T'_{{\mathrm{2}}}}  )   )   \rangle   ^{ \ell }  \, \ottnt{v_{{\mathrm{1}}}}  \simeq_{\mathtt{e} }  \ottnt{v_{{\mathrm{2}}}}   \ottsym{:}   \ottnt{T'_{{\mathrm{2}}}} ;  \theta ;  \delta .
   \]
   By definition,
   there exists some $\ottnt{v'_{{\mathrm{1}}}}$ such that
   $\langle   \mathit{unref}  (   \theta_{{\mathrm{1}}}  (   \delta_{{\mathrm{1}}}  (  \ottnt{T_{{\mathrm{1}}}}  )   )   )   \Rightarrow   \theta_{{\mathrm{1}}}  (   \delta_{{\mathrm{1}}}  (  \ottnt{T'_{{\mathrm{2}}}}  )   )   \rangle   ^{ \ell }  \, \ottnt{v_{{\mathrm{1}}}}  \longrightarrow^{\ast}  \ottnt{v'_{{\mathrm{1}}}}$
   and
   $ \ottnt{v'_{{\mathrm{1}}}}  \simeq_{\mathtt{v} }  \ottnt{v_{{\mathrm{2}}}}   \ottsym{:}   \ottnt{T'_{{\mathrm{2}}}} ;  \theta ;  \delta $.
   Thus, it suffices to show that
   \[
      \langle\!\langle  \,  \theta_{{\mathrm{1}}}  (   \delta_{{\mathrm{1}}}  (   \{  \mathit{x}  \mathord{:}  \ottnt{T'_{{\mathrm{2}}}}   \mathop{\mid}   \ottnt{e'_{{\mathrm{2}}}}  \}   )   )   \ottsym{,}  \ottnt{v'_{{\mathrm{1}}}} \,  \rangle\!\rangle  \,  ^{ \ell }   \simeq_{\mathtt{e} }  \ottnt{v_{{\mathrm{2}}}}   \ottsym{:}    \{  \mathit{x}  \mathord{:}  \ottnt{T'_{{\mathrm{2}}}}   \mathop{\mid}   \ottnt{e'_{{\mathrm{2}}}}  \}  ;  \theta ;  \delta .
   \]

   Since
   $\Gamma  \vdash  \theta  \ottsym{;}  \delta$ and $ \ottnt{v_{{\mathrm{1}}}}  \simeq_{\mathtt{v} }  \ottnt{v_{{\mathrm{2}}}}   \ottsym{:}   \ottnt{T_{{\mathrm{1}}}} ;  \theta ;  \delta $,
   we have $ \Gamma  ,  \mathit{x}  \mathord{:}  \ottnt{T_{{\mathrm{1}}}}   \vdash   \theta_{{\mathrm{1}}} \circ \delta_{{\mathrm{1}}}   [  \ottnt{v_{{\mathrm{1}}}}  \ottsym{/}  \mathit{x}  ]$.
   Since $ \Gamma  ,  \mathit{x}  \mathord{:}  \ottnt{T_{{\mathrm{1}}}}   \models  \ottnt{e'_{{\mathrm{2}}}} \, [  \langle  \ottnt{T_{{\mathrm{1}}}}  \Rightarrow  \ottnt{T'_{{\mathrm{2}}}}  \rangle   ^{ \ell }  \, \mathit{x}  \ottsym{/}  \mathit{x}  ]$,
   we have $ \theta_{{\mathrm{1}}}  (   \delta_{{\mathrm{1}}}  (  \ottnt{e'_{{\mathrm{2}}}} \, [  \langle  \ottnt{T_{{\mathrm{1}}}}  \Rightarrow  \ottnt{T'_{{\mathrm{2}}}}  \rangle   ^{ \ell }  \, \ottnt{v_{{\mathrm{1}}}}  \ottsym{/}  \mathit{x}  ]  )   )   \longrightarrow^{\ast}   \mathsf{true} $.
   Since $ \theta_{{\mathrm{1}}}  (   \delta_{{\mathrm{1}}}  (  \langle  \ottnt{T_{{\mathrm{1}}}}  \Rightarrow  \ottnt{T'_{{\mathrm{2}}}}  \rangle   ^{ \ell }   )   )  \, \ottnt{v_{{\mathrm{1}}}}  \longrightarrow^{\ast}  \ottnt{v'_{{\mathrm{1}}}}$,
   we have $ \theta_{{\mathrm{1}}}  (   \delta_{{\mathrm{1}}}  (  \ottnt{e'_{{\mathrm{2}}}} \, [  \ottnt{v'_{{\mathrm{1}}}}  \ottsym{/}  \mathit{x}  ]  )   )   \longrightarrow^{\ast}   \mathsf{true} $
   by Cotermination (\prop:ref{fh-coterm-true}).
   Thus,
   $ \langle\!\langle  \,  \theta_{{\mathrm{1}}}  (   \delta_{{\mathrm{1}}}  (   \{  \mathit{x}  \mathord{:}  \ottnt{T'_{{\mathrm{2}}}}   \mathop{\mid}   \ottnt{e'_{{\mathrm{2}}}}  \}   )   )   \ottsym{,}  \ottnt{v'_{{\mathrm{1}}}} \,  \rangle\!\rangle  \,  ^{ \ell }   \longrightarrow^{\ast}  \ottnt{v'_{{\mathrm{1}}}}$
   (by \R{Check} and \R{OK}), and so
   it suffices to show that
   \[
     \ottnt{v'_{{\mathrm{1}}}}  \simeq_{\mathtt{v} }  \ottnt{v_{{\mathrm{2}}}}   \ottsym{:}    \{  \mathit{x}  \mathord{:}  \ottnt{T'_{{\mathrm{2}}}}   \mathop{\mid}   \ottnt{e'_{{\mathrm{2}}}}  \}  ;  \theta ;  \delta .
   \]
   Since
   $ \ottnt{v'_{{\mathrm{1}}}}  \simeq_{\mathtt{v} }  \ottnt{v_{{\mathrm{2}}}}   \ottsym{:}   \ottnt{T'_{{\mathrm{2}}}} ;  \theta ;  \delta $ and
   $ \theta_{{\mathrm{1}}}  (   \delta_{{\mathrm{1}}}  (  \ottnt{e'_{{\mathrm{2}}}} \, [  \ottnt{v'_{{\mathrm{1}}}}  \ottsym{/}  \mathit{x}  ]  )   )   \longrightarrow^{\ast}   \mathsf{true} $,
   it suffices to show that
   \[
     \theta_{{\mathrm{1}}}  (   \delta_{{\mathrm{1}}}  (  \ottnt{e'_{{\mathrm{2}}}} \, [  \ottnt{v_{{\mathrm{2}}}}  \ottsym{/}  \mathit{x}  ]  )   )   \longrightarrow^{\ast}   \mathsf{true} .
   \]
   Since $\Gamma  \vdash   \{  \mathit{x}  \mathord{:}  \ottnt{T'_{{\mathrm{2}}}}   \mathop{\mid}   \ottnt{e'_{{\mathrm{2}}}}  \} $,
   we have $ \Gamma  ,  \mathit{x}  \mathord{:}  \ottnt{T'_{{\mathrm{2}}}}   \vdash  \ottnt{e'_{{\mathrm{2}}}}  \ottsym{:}   \mathsf{Bool} $ by its inversion.
   By the parametricity (\prop:ref{fh-lr-param}),
   $ \Gamma  ,  \mathit{x}  \mathord{:}  \ottnt{T'_{{\mathrm{2}}}}   \vdash  \ottnt{e'_{{\mathrm{2}}}} \,  \mathrel{ \simeq }  \, \ottnt{e'_{{\mathrm{2}}}}  \ottsym{:}   \mathsf{Bool} $.
   Since $\Gamma  \vdash  \theta  \ottsym{;}  \delta$ and $ \ottnt{v'_{{\mathrm{1}}}}  \simeq_{\mathtt{v} }  \ottnt{v_{{\mathrm{2}}}}   \ottsym{:}   \ottnt{T'_{{\mathrm{2}}}} ;  \theta ;  \delta $,
   we have $ \Gamma  ,  \mathit{x}  \mathord{:}  \ottnt{T'_{{\mathrm{2}}}}   \vdash  \theta  \ottsym{;}   \delta    [  \,  (  \ottnt{v'_{{\mathrm{1}}}}  ,  \ottnt{v_{{\mathrm{2}}}}  ) /  \mathit{x}  \,  ]  $.
   Thus,
   \[
      \theta_{{\mathrm{1}}}  (   \delta_{{\mathrm{1}}}  (  \ottnt{e'_{{\mathrm{2}}}} \, [  \ottnt{v'_{{\mathrm{1}}}}  \ottsym{/}  \mathit{x}  ]  )   )   \simeq_{\mathtt{e} }   \theta_{{\mathrm{2}}}  (   \delta_{{\mathrm{2}}}  (  \ottnt{e'_{{\mathrm{2}}}} \, [  \ottnt{v_{{\mathrm{2}}}}  \ottsym{/}  \mathit{x}  ]  )   )    \ottsym{:}    \mathsf{Bool}  ;  \theta ;   \delta    [  \,  (  \ottnt{v'_{{\mathrm{1}}}}  ,  \ottnt{v_{{\mathrm{2}}}}  ) /  \mathit{x}  \,  ]   .
   \]
   Since the term on the left-hand side evaluates to $ \mathsf{true} $,
   we have $ \theta_{{\mathrm{1}}}  (   \delta_{{\mathrm{1}}}  (  \ottnt{e'_{{\mathrm{2}}}} \, [  \ottnt{v_{{\mathrm{2}}}}  \ottsym{/}  \mathit{x}  ]  )   )   \longrightarrow^{\ast}   \mathsf{true} $ by definition.
   \qedhere
 \end{itemize}
\end{prop}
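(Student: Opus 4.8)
The plan is to prove the statement by induction on the derivation of $\Gamma \vdash T_1 <: T_2$. By the definition of the value relation at an arrow type, it suffices to fix arbitrary related arguments $v_1 \simeq_{\mathtt{v}} v_2 : T_1;\theta;\delta$ and show that $\theta_1(\delta_1(\langle T_1 \Rightarrow T_2\rangle^\ell))\,v_1 \simeq_{\mathtt{e}} v_2 : T_2;\theta;\delta$; that is, applying the substituted upcast to $v_1$ yields a term related to $v_2$ at the supertype. I would then case-split on the last subtyping rule. The base cases are direct: for \Sub{Base} the cast reduces to the identity by \R{Base} and $v_1 = v_2$ is already related at $B$; for \Sub{TVar} I would appeal to the reflexive-cast closure condition built into $\mathsf{VRel}$, so that from $\theta(\alpha)=(r,T_1',T_2')$ and $(v_1,v_2)\in r$ there is a $v_1'$ with $\langle T_1'\Rightarrow T_1'\rangle^\ell v_1 \longrightarrow^* v_1'$ and $(v_1',v_2)\in r$, which is exactly what $\theta_1(\delta_1(\langle\alpha\Rightarrow\alpha\rangle^\ell))$ produces. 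The cases \Sub{Forall} and \Sub{RefineL} follow from the induction hypothesis almost immediately (for \Sub{RefineL}, first strip the source refinement by \R{Forget}).

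The main obstacle, which I expect to dominate the proof, is the \Sub{Fun} case. Here the cast fires by \R{Fun} into a function wrapper, and I must show the wrapper is related to $v_2$ at $x{:}T_{21}\to T_{22}$: given related arguments $v_1' \simeq_{\mathtt{v}} v_2' : T_{21};\theta;\delta$, the wrapper body must be term-related at $T_{22}$. The contravariant domain cast $\langle T_{21}\Rightarrow T_{11}\rangle^\ell$ is handled by the induction hypothesis (using $\Gamma\vdash T_{21} <: T_{11}$), producing a converted value $v_1''$ related to $v_2'$ at $T_{11}$, after which $v_1\,v_1''$ and $v_2\,v_2'$ are related at $T_{12}$ under $\delta[(v_1'',v_2')/x]$. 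The delicate point is the covariant codomain cast $\langle T_{12}[v_1''/x]\Rightarrow T_{22}[v_1'/x]\rangle^\ell$, whose source and target index types mention the \emph{converted} and \emph{unconverted} arguments respectively, whereas the premise of \Sub{Fun} only gives $\Gamma, x{:}T_{21}\vdash T_{12}[\langle T_{21}\Rightarrow T_{11}\rangle^\ell x/x] <: T_{22}$. I would bridge these by $\alpha$-renaming $x$ in $T_{12}$ to a fresh $y$ so both index types can live under one value assignment, applying the induction hypothesis to the renamed codomain subtyping, then using the equivalence-respecting property (\prop:ref{fh-lr-comp-equiv-res}) together with Cotermination (\prop:ref{fh-coterm-true}) to replace the index type $T_{12}[\langle T_{21}\Rightarrow T_{11}\rangle^\ell x/x]$ by $T_{12}[v_1''/x]$ up to CIU-equivalence, and finally discharging the goal through the term compositionality (\prop:ref{fh-lr-term-comp}), whose self-relatedness and $\theta;\delta$ side-conditions I would establish from the parametricity (\prop:ref{fh-lr-param}) and the weakening lemmas (\prop:ref{fh-lr-val-ws}). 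The fiddly heart of this case is the careful bookkeeping of the nested extensions $\delta[(v_1',v_2')/x][(v_1'',v_2')/y]$.

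The \Sub{RefineR} case is the other substantive one. After \R{Forget} and \R{PreCheck}, the cast application reduces to a waiting check $\langle\!\langle\,\{x{:}T_2'\mid e_2'\},\,\langle\mathit{unref}(\cdots)\Rightarrow\cdots\rangle^\ell v_1\,\rangle\!\rangle^\ell$, where all refinements are re-checked because of the fussy semantics. I would first apply the induction hypothesis to $\Gamma\vdash T_1 <: T_2'$ to relate the inner cast to $v_2$ at the underlying type $T_2'$, obtaining a checked value $v_1'$ with $v_1' \simeq_{\mathtt{v}} v_2 : T_2';\theta;\delta$. The remaining work is to show the refinement holds, i.e. $\theta_1(\delta_1(e_2'[v_1'/x])) \longrightarrow^* \mathsf{true}$: this consumes the satisfaction premise $\Gamma, x{:}T_1 \models e_2'[\langle T_1\Rightarrow T_2'\rangle^\ell x/x]$ to get $\mathsf{true}$ for the argument $v_1$, and then Cotermination (\prop:ref{fh-coterm-true}) transports this along the reduction $\langle\cdots\rangle^\ell v_1 \longrightarrow^* v_1'$; parametricity (\prop:ref{fh-lr-param}) on $e_2'$ finally transfers satisfaction from $v_1'$ to $v_2$. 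Beyond \Sub{Fun} and \Sub{RefineR} the argument is routine, so the real effort lies precisely in these two cases, where the cast semantics and the dependency of types on terms interact.
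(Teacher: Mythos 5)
Your proposal follows the paper's proof essentially step for step: the same induction on the subtyping derivation, the same use of the $\mathsf{VRel}$ reflexive-cast closure for \Sub{TVar}, the same $\alpha$-renaming plus equivalence-respecting/Cotermination/term-compositionality bridge for the codomain cast in \Sub{Fun}, and the same satisfaction-plus-Cotermination-plus-parametricity argument for \Sub{RefineR}. It is correct and matches the paper's approach.
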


\begin{prop}[type=thm,name=Upcast Elimination]{elim-upcast}
 If $\Gamma  \vdash  \ottnt{T_{{\mathrm{1}}}}$ and $\Gamma  \vdash  \ottnt{T_{{\mathrm{2}}}}$ and $\Gamma  \vdash  \ottnt{T_{{\mathrm{1}}}}  \ottsym{<:}  \ottnt{T_{{\mathrm{2}}}}$,
 then $\Gamma  \vdash  \langle  \ottnt{T_{{\mathrm{1}}}}  \Rightarrow  \ottnt{T_{{\mathrm{2}}}}  \rangle   ^{ \ell }  \, =_\mathsf{ctx} \, \ottsym{(}    \lambda    \mathit{x}  \mathord{:}  \ottnt{T_{{\mathrm{1}}}}  .  \mathit{x}   \ottsym{)}  \ottsym{:}  \ottnt{T_{{\mathrm{1}}}}  \rightarrow  \ottnt{T_{{\mathrm{2}}}}$.

 \proof

 By \prop:ref{fh-elim-upcast-aux}, $\Gamma  \vdash  \langle  \ottnt{T_{{\mathrm{1}}}}  \Rightarrow  \ottnt{T_{{\mathrm{2}}}}  \rangle   ^{ \ell }  \,  \mathrel{ \simeq }  \, \ottsym{(}    \lambda    \mathit{x}  \mathord{:}  \ottnt{T_{{\mathrm{1}}}}  .  \mathit{x}   \ottsym{)}  \ottsym{:}  \ottnt{T_{{\mathrm{1}}}}  \rightarrow  \ottnt{T_{{\mathrm{2}}}}$.
 By the soundness of the logical relation (\refthm{fh-lr-sound}), we finish.
\end{prop}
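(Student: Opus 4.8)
The plan is to factor the theorem through the logical relation. By the soundness of the logical relation with respect to semityped contextual equivalence (\refthm{fh-lr-sound}), it suffices to establish that the upcast and the identity are \emph{logically} related, i.e. $\Gamma \vdash \langle T_1 \Rightarrow T_2 \rangle^\ell \simeq (\lambda x{:}T_1.\,x) : T_1 \to T_2$; contextual equivalence then follows by instantiating soundness at $n=1$. Unfolding the logical relation, the substantive obligation is a closed-world statement: for every $\theta,\delta$ with $\Gamma \vdash \theta;\delta$, the value interpretations $\theta_1(\delta_1(\langle T_1\Rightarrow T_2\rangle^\ell))$ and $\theta_2(\delta_2(\lambda x{:}T_1.\,x))$ are value-related at $T_1 \to T_2$. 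This is exactly the auxiliary lemma \prop:ref{fh-elim-upcast-aux}, so the real work is its proof.

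First I would prove that auxiliary lemma by induction on the derivation of $\Gamma \vdash T_1 <: T_2$. After unfolding the function-type value relation, the goal in each case is that applying the substituted upcast to an argument $v_1$ (with $v_1 \simeq_{\mathtt{v}} v_2 : T_1$) is term-related to $v_2$ at $T_2$. The base cases are routine: for \Sub{Base} the cast reduces to its argument by \R{Base}; for \Sub{TVar} the closure condition built into $\mathsf{VRel}$ guarantees that the value of $\langle\alpha\Rightarrow\alpha\rangle^\ell v_1$ remains related to $v_2$ in the interpretation $r$ assigned to $\alpha$. The cases \Sub{Forall} and \Sub{RefineL} follow from the induction hypothesis, the latter after a \R{Forget} reduction.

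The main obstacle will be the \Sub{Fun} and \Sub{RefineR} cases. In \Sub{Fun} the upcast reduces by \R{Fun} to a function wrapper; after passing a related argument $v_1' \simeq_{\mathtt{v}} v_2' : T_{21}$ I would invoke the IH on the contravariant cast $\langle T_{21}\Rightarrow T_{11}\rangle^\ell$, apply the related functions $v_1,v_2$, and finally handle the covariant codomain cast. The difficulty is that the subtyping premise is stated for $T_{12}[\langle T_{21}\Rightarrow T_{11}\rangle^\ell x/x]$ whereas the cast that actually appears is indexed by $T_{12}[v_1''/x]$; bridging these requires the equivalence-respecting property (\prop:ref{fh-lr-comp-equiv-res}) together with Cotermination (\prop:ref{fh-coterm-true}) to relate the two CIU-equivalent casts, followed by term compositionality (\prop:ref{fh-lr-term-comp}) to pass between the value-assignment extension $\delta[(v_1',v_2')/x]$ and the substituted type, plus an $\alpha$-renaming and the weakening lemmas (\prop:ref{fh-lr-val-ws}). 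All the self-relatedness side conditions these lemmas demand I would discharge from parametricity (\prop:ref{fh-lr-param}). For \Sub{RefineR} the upcast to a refinement evaluates via \R{Forget} and \R{PreCheck} into a waiting check of the target refinement $e_2'$; the IH covers the underlying subtype $\Gamma \vdash T_1 <: T_2'$, and the crucial point is that the pending check succeeds, which is precisely where the satisfaction premise $\Gamma, x{:}T_1 \models e_2'[\langle T_1\Rightarrow T_2'\rangle^\ell x/x]$ is consumed: the closing substitution forces $e_2'$ to $\mathsf{true}$ on the left, and parametricity together with Cotermination transport this to the right. I expect the function case to be the most technically demanding, since it simultaneously threads casts, dependency, self-relatedness, and compositionality through a single argument.
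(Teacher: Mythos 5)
Your proposal is correct and follows essentially the same route as the paper: reduce to the logical-relatedness statement, which is exactly \prop:ref{fh-elim-upcast-aux}, prove that by induction on the subtyping derivation, and conclude by soundness (\refthm{fh-lr-sound}). Your case analysis — including the identification of \Sub{Fun} as the hard case, the mismatch between $T_{12}[\langle T_{21}\Rightarrow T_{11}\rangle^\ell x/x]$ and $T_{12}[v_1''/x]$ bridged via the equivalence-respecting property, Cotermination, and term compositionality, and the use of the satisfaction premise in \Sub{RefineR} — matches the paper's proof of the auxiliary lemma.
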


\subsection{Selfification}
\label{sec:reasoning-self}

Selfification embeds information of a term into its type so that we can get the
singleton type that identifies the
term~\cite{Ou/Tan/Mandelbaum/Walker_2004_TCS}.
For example, selfification of $\mathit{x}$ of $ \mathsf{Int} $ produces $ \{  \mathit{y}  \mathord{:}   \mathsf{Int}    \mathop{\mid}    \mathit{y}  \mathrel{=}_{  \mathsf{Int}  }  \mathit{x}   \} $, which identifies $\mathit{x}$, and that of $\ottnt{e}$ of $ \mathsf{Bool}   \rightarrow   \mathsf{Int} $ does
$ \mathit{x} \mathord{:}  \mathsf{Bool}  \rightarrow  \{  \mathit{y}  \mathord{:}   \mathsf{Int}    \mathop{\mid}    \mathit{y}  \mathrel{=}_{  \mathsf{Int}  }  \ottnt{e}  \, \mathit{x}  \}  $, which means functions that return the same
value as the result of call to $\ottnt{e}$.

We expose the power of the selfification combined with the upcast elimination via an
example using stacks.
{\ifexample
For the sake of simplicity, this section considers stacks for integers.
\fi}
First of all, let us
{\ifexample
recall (a part of) the refined interface of stacks for integers:
\else
assume type $ \mathsf{Stack} $ (which can be implemented as an abstract datatype
in {\fhfix}) and the following functions:
\fi}
\[\begin{array}{ll}
  \mathsf{is\_empty}  :  \mathsf{Stack}   \rightarrow   \mathsf{Bool}  &
  \mathsf{push}  :  \mathsf{Int}   \rightarrow   \mathsf{Stack}   \rightarrow   \{  \mathit{x}  \mathord{:}   \mathsf{Stack}    \mathop{\mid}    \mathsf{not}  \, \ottsym{(}   \mathsf{is\_empty}  \, \mathit{x}  \ottsym{)}  \}  \\
 %  \mathsf{top}  :  \{  \mathit{x}  \mathord{:}   \mathsf{Stack}    \mathop{\mid}    \mathsf{not}  \, \ottsym{(}   \mathsf{is\_empty}  \, \mathit{x}  \ottsym{)}  \}   \rightarrow   \mathsf{Int}  &
  \mathsf{empty}  :  \{  \mathit{x}  \mathord{:}   \mathsf{Stack}    \mathop{\mid}    \mathsf{is\_empty}  \, \mathit{x}  \}  &
  \mathsf{pop}  :  \{  \mathit{x}  \mathord{:}   \mathsf{Stack}    \mathop{\mid}    \mathsf{not}  \, \ottsym{(}   \mathsf{is\_empty}  \, \mathit{x}  \ottsym{)}  \}   \rightarrow   \mathsf{Stack} 
  \end{array}\]
{\ifexample\else{where $ \mathsf{is\_empty} $ returns whether a given stack is empty,
$ \mathsf{empty} $ is the empty stack, $ \mathsf{push} $ produces a nonempty stack by adding an
element at the top of a stack, and $ \mathsf{pop} $ returns the stack without the
topmost element.\fi}}
Since the type signature of $ \mathsf{push} $ ensures that the result stack is never
empty, expression $ \mathsf{pop}  \, \ottsym{(}   \mathsf{push}  \, \ottsym{2} \,  \mathsf{empty}   \ottsym{)}$ would be accepted.\footnote{Trivial
cast $\langle   \{  \mathit{x}  \mathord{:}   \mathsf{Stack}    \mathop{\mid}    \mathsf{is\_empty}  \, \mathit{x}  \}   \Rightarrow   \mathsf{Stack}   \rangle   ^{ \ell } $ to $ \mathsf{empty} $ is omitted here.}
However, the type of $ \mathsf{pop} $ guarantees nothing about stacks that it returns.
Thus, expression $ \mathsf{pop}  \, \ottsym{(}   \mathsf{pop}  \, \ottsym{(}   \mathsf{push}  \, \ottsym{2} \, \ottsym{(}   \mathsf{push}  \, \ottsym{3} \,  \mathsf{empty}   \ottsym{)}  \ottsym{)}  \ottsym{)}$ would be rejected because
the outermost $ \mathsf{pop} $ takes a possibly empty stack ($ \mathsf{Stack} $), not nonempty
stacks ($ \{  \mathit{x}  \mathord{:}   \mathsf{Stack}    \mathop{\mid}    \mathsf{is\_empty}  \, \mathit{x}  \} $), even though it is actually called with a
nonempty one.
Insertion of cast $\langle   \mathsf{Stack}   \Rightarrow   \{  \mathit{x}  \mathord{:}   \mathsf{Stack}    \mathop{\mid}    \mathsf{not}  \, \ottsym{(}   \mathsf{is\_empty}  \, \mathit{x}  \ottsym{)}  \}   \rangle   ^{ \ell } $ makes
the program acceptable, but it incurs additional, redundant overhead.
(Note that the upcast elimination cannot be applied here because
$\langle   \mathsf{Stack}   \Rightarrow   \{  \mathit{x}  \mathord{:}   \mathsf{Stack}    \mathop{\mid}    \mathsf{not}  \, \ottsym{(}   \mathsf{is\_empty}  \, \mathit{x}  \ottsym{)}  \}   \rangle   ^{ \ell } $ is not an upcast.)

Combination of the selfification and the upcast elimination solves this
unfortunate situation.
Selfification can give subexpression $ \mathsf{pop}  \, \ottsym{(}   \mathsf{push}  \, \ottsym{2} \, \ottsym{(}   \mathsf{push}  \, \ottsym{3} \,  \mathsf{empty}   \ottsym{)}  \ottsym{)}$
type $\ottnt{T}  \ottsym{=}   \{  \mathit{x}  \mathord{:}   \mathsf{Stack}    \mathop{\mid}    \mathit{x}  \mathrel{=}_{  \mathsf{Stack}  }   \mathsf{pop}   \, \ottsym{(}   \mathsf{push}  \, \ottsym{2} \, \ottsym{(}   \mathsf{push}  \, \ottsym{3} \,  \mathsf{empty}   \ottsym{)}  \ottsym{)}  \} $, which identifies
the subexpresssion.
%
% Hence, if selfification is allowed,
% $ \mathsf{pop}  \, \ottsym{(}  \langle  \ottnt{T}  \Rightarrow   \{  \mathit{x}  \mathord{:}   \mathsf{Stack}    \mathop{\mid}    \mathsf{not}  \, \ottsym{(}   \mathsf{is\_empty}  \, \mathit{x}  \ottsym{)}  \}   \rangle   ^{ \ell }  \, \ottsym{(}   \mathsf{pop}  \,  \mathsf{push}  \, \ottsym{2} \, \ottsym{(}   \mathsf{push}  \, \ottsym{3} \,  \mathsf{empty}   \ottsym{)}  \ottsym{)}  \ottsym{)}$
% would be accepted.
%
Since $\ottnt{T}$ denotes the singleton stack with only $\ottsym{3}$, we expect that
$\langle  \ottnt{T}  \Rightarrow   \{  \mathit{x}  \mathord{:}   \mathsf{Stack}    \mathop{\mid}    \mathsf{not}  \, \ottsym{(}   \mathsf{is\_empty}  \, \mathit{x}  \ottsym{)}  \}   \rangle   ^{ \ell } $ is proven to be an upcast.
If so, by the upcast elimination, acceptable program
\[
  \mathsf{pop}  \, \ottsym{(}  \langle  \ottnt{T}  \Rightarrow   \{  \mathit{x}  \mathord{:}   \mathsf{Stack}    \mathop{\mid}    \mathsf{not}  \, \ottsym{(}   \mathsf{is\_empty}  \, \mathit{x}  \ottsym{)}  \}   \rangle   ^{ \ell }  \, \ottsym{(}   \mathsf{pop}  \, \ottsym{(}   \mathsf{push}  \, \ottsym{2} \, \ottsym{(}   \mathsf{push}  \, \ottsym{3} \,  \mathsf{empty}   \ottsym{)}  \ottsym{)}  \ottsym{)}  \ottsym{)}
\]
should be contextual equivalent to $ \mathsf{pop}  \, \ottsym{(}   \mathsf{pop}  \, \ottsym{(}   \mathsf{push}  \, \ottsym{2} \, \ottsym{(}   \mathsf{push}  \, \ottsym{3} \,  \mathsf{empty}   \ottsym{)}  \ottsym{)}  \ottsym{)}$, and so
it would be \emph{proven} that it does not get stuck.

Selfification function $ \mathit{self}  (  \ottnt{T} ,  \ottnt{e}  ) $, which returns a type into which term
$\ottnt{e}$ of $\ottnt{T}$ is embedded, is defined as follows.
\begin{defi}[Selfification]
 \[\begin{array}{llll}
   \mathit{self}  (  \ottnt{B} ,  \ottnt{e}  )  &=&  \{  \mathit{x}  \mathord{:}  \ottnt{B}   \mathop{\mid}    \mathit{x}  \mathrel{=}_{ \ottnt{B} }  \ottnt{e}   \}  &
   \text{(if $\mathit{x} \, \notin \,  \mathit{FV}  (  \ottnt{e}  ) $)} \\
   \mathit{self}  (  \alpha ,  \ottnt{e}  )  &=& \alpha \\
   \mathit{self}  (   \mathit{x} \mathord{:} \ottnt{T_{{\mathrm{1}}}} \rightarrow \ottnt{T_{{\mathrm{2}}}}  ,  \ottnt{e}  )  &=&  \mathit{x} \mathord{:} \ottnt{T_{{\mathrm{1}}}} \rightarrow  \mathit{self}  (  \ottnt{T_{{\mathrm{2}}}} ,  \ottnt{e} \, \mathit{x}  )   &
   \text{(if $\mathit{x} \, \notin \,  \mathit{FV}  (  \ottnt{e}  ) $)} \\
   \mathit{self}  (   \forall   \alpha  .  \ottnt{T}  ,  \ottnt{e}  )  &=&  \forall   \alpha  .   \mathit{self}  (  \ottnt{T} ,  \ottnt{e} \, \alpha  )   &
   \text{(if $\alpha \, \notin \,  \mathit{FTV}  (  \ottnt{e}  ) $)} \\
   \mathit{self}  (   \{  \mathit{x}  \mathord{:}  \ottnt{T'}   \mathop{\mid}   \ottnt{e'}  \}  ,  \ottnt{e}  )  &=&  \{  \mathit{x}  \mathord{:}  \ottnt{T''}   \mathop{\mid}    \mathsf{let}  ~  \mathit{x}  \mathord{:}  \ottnt{T'}  \equal  \langle  \ottnt{T''}  \Rightarrow  \ottnt{T'}  \rangle   ^{ \ell }  \, \mathit{x}  ~ \ottliteralin ~  \ottnt{e'}   \}  &
   \text{(if $\mathit{x} \, \notin \,  \mathit{FV}  (  \ottnt{e}  ) $)} \\ &&
   \multicolumn{2}{r}{
    \text{where $\ottnt{T''}  \ottsym{=}   \mathit{self}  (  \ottnt{T'} ,  \langle   \{  \mathit{x}  \mathord{:}  \ottnt{T'}   \mathop{\mid}   \ottnt{e'}  \}   \Rightarrow  \ottnt{T'}  \rangle   ^{ \ell }  \, \ottnt{e}  ) $}
   }
   \end{array}\]
\end{defi}
Selfification $ \mathit{self}  (  \ottnt{B} ,  \ottnt{e}  ) $ produces the most precise type for $\ottnt{e}$ in
that it is the singleton type which identifies $\ottnt{e}$, and $ \mathit{self}  (  \alpha ,  \ottnt{e}  ) $
returns $\alpha$ as it is because we cannot make type variables more precise
without polymorphic equality.
Selfification of function types $ \mathit{x} \mathord{:} \ottnt{T_{{\mathrm{1}}}} \rightarrow \ottnt{T_{{\mathrm{2}}}} $ and universal types $ \forall   \alpha  .  \ottnt{T} $ is forwarded to $\ottnt{T_{{\mathrm{2}}}}$ and $\ottnt{T}$, respectively.
Term $\ottnt{e}$ is applied to variables so that selfified types can identify
what $\ottnt{e}$ produces.
Selfifying refinement types $ \{  \mathit{x}  \mathord{:}  \ottnt{T'}   \mathop{\mid}   \ottnt{e'}  \} $ appears slightly tricky: it
selfifies the underlying type $\ottnt{T'}$ with $\langle   \{  \mathit{x}  \mathord{:}  \ottnt{T'}   \mathop{\mid}   \ottnt{e'}  \}   \Rightarrow  \ottnt{T'}  \rangle   ^{ \ell }  \, \ottnt{e}$ (the cast
makes $\ottnt{e}$ a term of $\ottnt{T'}$) and refines the result with refinement
$\ottnt{e'}$, but, since $\ottnt{e'}$ refers to $\mathit{x}$ of $\ottnt{T'}$ whereas the
selfified underlying type is $\ottnt{T''}$, cast $\langle  \ottnt{T''}  \Rightarrow  \ottnt{T'}  \rangle   ^{ \ell } $ is inserted at the
beginning of the refinement.
Label $\ell$ can be any because, as shown later, the casts never fail.

The rest of this section shows that inserting casts to selfified types causes no run-time
errors, which leads to use of selfification to any expression for free.
More formally, we prove that, given term $\ottnt{e}$ of $\ottnt{T}$,
$\langle  \ottnt{T}  \Rightarrow   \mathit{self}  (  \ottnt{T} ,  \ottnt{e}  )   \rangle   ^{ \ell }  \, \ottnt{e}$ is contextually equivalent to $\ottnt{e}$.
We start with showing that casts to selfified types are well typed, which is
implied by two facts: (1) $\ottnt{T}$ is compatible with $ \mathit{self}  (  \ottnt{T} ,  \ottnt{e}  ) $ and
(2) $ \mathit{self}  (  \ottnt{T} ,  \ottnt{e}  ) $ is well formed if $\ottnt{e}$ is well typed.
\begin{prop}{fh-self-compat}
 $\ottnt{T}  \mathrel{\parallel}   \mathit{self}  (  \ottnt{T} ,  \ottnt{e}  ) $.

 \proof

 Straightforward by induction on $\ottnt{T}$.
\end{prop}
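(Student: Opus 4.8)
The plan is to prove $T \mathrel{\parallel} \mathit{self}(T,e)$ by induction on the structure of $T$, generalizing the statement over the embedded term $e$ so that the induction hypothesis remains available for the different terms that appear in the recursive clauses of $\mathit{self}$. In each case the outermost constructor of $\mathit{self}(T,e)$ is fixed by the corresponding defining equation, so the proof reduces to choosing the appropriate compatibility rule and discharging its premises by the induction hypothesis.

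The two leaf cases are immediate: when $T = \alpha$ we have $\mathit{self}(\alpha,e)=\alpha$ and conclude by \C{TVar}; when $T = B$ we have $\mathit{self}(B,e) = \{x \mathrel{:} B \mid x =_B e\}$, so a single use of \C{RefineR} reduces the goal to $B \mathrel{\parallel} B$, which holds by \C{Base}. For $T = \forall \alpha. T'$ I would apply \C{Forall} and close its single premise $T' \mathrel{\parallel} \mathit{self}(T', e\,\alpha)$ by the induction hypothesis instantiated at the term $e\,\alpha$. For $T = x\mathord{:}T_1 \to T_2$ the definition gives $\mathit{self}(T,e) = x\mathord{:}T_1 \to \mathit{self}(T_2, e\,x)$, so \C{Fun} splits the goal into $T_1 \mathrel{\parallel} T_1$ and $T_2 \mathrel{\parallel} \mathit{self}(T_2, e\,x)$; the latter is the induction hypothesis at $e\,x$, and the former is reflexivity of $\mathrel{\parallel}$, which I would dispatch by a small auxiliary induction establishing that every type is compatible with itself directly from the same rule set.

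The only mildly delicate case is the refinement type $T = \{x\mathrel{:}T' \mid e'\}$, where $\mathit{self}(T,e) = \{x\mathrel{:}T'' \mid \ldots\}$ with $T'' = \mathit{self}(T', \langle \{x\mathrel{:}T'\mid e'\} \Rightarrow T'\rangle^\ell\,e)$. Here both sides are refinement types, so I would peel them off one at a time: first \C{RefineR} reduces the goal to $\{x\mathrel{:}T'\mid e'\} \mathrel{\parallel} T''$, then \C{RefineL} reduces it further to $T' \mathrel{\parallel} T''$. Since $T'' = \mathit{self}(T', e'')$ for the cast-wrapped term $e'' = \langle \{x\mathrel{:}T'\mid e'\} \Rightarrow T'\rangle^\ell\,e$, this last goal is exactly the induction hypothesis applied to the structurally smaller type $T'$ at the term $e''$. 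This is the step that makes the generalization over $e$ essential, because the recursive occurrence of $\mathit{self}$ is at a different term than the one we started with. Beyond this bookkeeping there is no real obstacle: the refinement predicates, the inserted casts, and the freshness side conditions play no role, since compatibility compares only the skeletons of types modulo refinements.
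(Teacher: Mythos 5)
Your proof is correct and follows exactly the route the paper intends: the paper's proof is simply ``straightforward by induction on $\ottnt{T}$,'' and your case analysis (including the generalization over $\ottnt{e}$, the double peeling with \C{RefineR}/\C{RefineL} in the refinement case, and the auxiliary reflexivity of $\mathrel{\parallel}$ for the domain type in \C{Fun}) fills in precisely the details that make that induction go through.
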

\begin{prop}{fh-self-wf}
 If $\Gamma  \vdash  \ottnt{e}  \ottsym{:}  \ottnt{T}$, then $\Gamma  \vdash   \mathit{self}  (  \ottnt{T} ,  \ottnt{e}  ) $.

 \proof

 By induction on $\ottnt{T}$ with the fact that $\Gamma  \vdash  \ottnt{T}$, which is obtained from
 $\Gamma  \vdash  \ottnt{e}  \ottsym{:}  \ottnt{T}$.
\end{prop}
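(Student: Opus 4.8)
The plan is to argue by induction on the structure of the type $\ottnt{T}$ (equivalently, on its size, since every recursive call of $\mathit{self}$ is on a proper subterm of $\ottnt{T}$). Throughout I will rely on the standard \emph{regularity} property of the type system—if $\Gamma \vdash \ottnt{e} : \ottnt{T}$ then $\Gamma \vdash \ottnt{T}$—which is among the auxiliary metatheoretic lemmas and which the statement itself invokes. In each case I would first recover well-formedness of the immediate subcomponents of $\ottnt{T}$ by inverting the derivation of $\Gamma \vdash \ottnt{T}$, then build a typing derivation for the term that $\mathit{self}$ passes to its recursive call, and finally apply the induction hypothesis.

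For the base case $\ottnt{T} = \ottnt{B}$, where $\mathit{self}(\ottnt{B},\ottnt{e}) = \{x{:}\ottnt{B} \mid x =_{\ottnt{B}} \ottnt{e}\}$, it suffices by \WF{Refine} to derive $\Gamma, x{:}\ottnt{B} \vdash (x =_{\ottnt{B}} \ottnt{e}) : \mathsf{Bool}$, which follows from \T{Var}, weakening of $\Gamma \vdash \ottnt{e} : \ottnt{B}$ via \prop:ref{fh-weak-term}, and the typing of $=_{\ottnt{B}}$ (the side condition $x \notin \mathit{FV}(\ottnt{e})$ makes this well formed). The case $\ottnt{T} = \alpha$ is immediate, since $\mathit{self}(\alpha,\ottnt{e}) = \alpha$ and $\Gamma \vdash \alpha$ comes from regularity. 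For $\ottnt{T} = x{:}\ottnt{T_{{\mathrm{1}}}} \to \ottnt{T_{{\mathrm{2}}}}$, inversion of \WF{Fun} gives $\Gamma \vdash \ottnt{T_{{\mathrm{1}}}}$ and $\Gamma, x{:}\ottnt{T_{{\mathrm{1}}}} \vdash \ottnt{T_{{\mathrm{2}}}}$; weakening $\Gamma \vdash \ottnt{e} : x{:}\ottnt{T_{{\mathrm{1}}}}\to\ottnt{T_{{\mathrm{2}}}}$ and applying \T{App} to the variable $x$ yields $\Gamma, x{:}\ottnt{T_{{\mathrm{1}}}} \vdash \ottnt{e}\,x : \ottnt{T_{{\mathrm{2}}}}$, so the IH produces $\Gamma, x{:}\ottnt{T_{{\mathrm{1}}}} \vdash \mathit{self}(\ottnt{T_{{\mathrm{2}}}}, \ottnt{e}\,x)$ and \WF{Fun} concludes. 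The universal case is analogous, using type weakening (\prop:ref{fh-weak-type}) and \T{TApp} to obtain $\Gamma,\alpha \vdash \ottnt{e}\,\alpha : \ottnt{T'}$.

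The refinement case $\ottnt{T} = \{x{:}\ottnt{T'} \mid \ottnt{e'}\}$ is the crux. Here $\mathit{self}(\ottnt{T},\ottnt{e}) = \{x{:}\ottnt{T''} \mid \mathsf{let}\ x{:}\ottnt{T'} = \langle \ottnt{T''} \Rightarrow \ottnt{T'}\rangle^\ell\,x\ \mathsf{in}\ \ottnt{e'}\}$ with $\ottnt{T''} = \mathit{self}(\ottnt{T'}, \langle\{x{:}\ottnt{T'}\mid \ottnt{e'}\}\Rightarrow \ottnt{T'}\rangle^\ell\,\ottnt{e})$. To obtain $\Gamma \vdash \ottnt{T''}$ from the IH I must first type the cast application $\langle\{x{:}\ottnt{T'}\mid \ottnt{e'}\}\Rightarrow \ottnt{T'}\rangle^\ell\,\ottnt{e}$ at $\ottnt{T'}$; this uses \T{Cast}, whose compatibility premise $\{x{:}\ottnt{T'}\mid \ottnt{e'}\} \mathrel{\parallel} \ottnt{T'}$ follows from reflexivity of $\mathrel{\parallel}$ and \C{RefineL}, together with $\Gamma \vdash \ottnt{e} : \{x{:}\ottnt{T'}\mid \ottnt{e'}\}$ and \T{App}. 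For the refinement body of $\ottnt{T''}$, after $\alpha$-renaming the inner let-bound variable to a fresh name (to respect the convention that context variables are distinct), I would desugar the $\mathsf{let}$ and type it under $\Gamma, x{:}\ottnt{T''}$: the inner cast $\langle \ottnt{T''}\Rightarrow \ottnt{T'}\rangle^\ell$ is well typed because $\ottnt{T''} \mathrel{\parallel} \ottnt{T'}$, which I get from \prop:ref{fh-self-compat} (giving $\ottnt{T'} \mathrel{\parallel} \ottnt{T''}$) plus symmetry of $\mathrel{\parallel}$; applying it to the outer $x{:}\ottnt{T''}$ produces a term of $\ottnt{T'}$, and the renamed $\ottnt{e'}$—well typed at $\mathsf{Bool}$ under $\Gamma, x{:}\ottnt{T'}$ by inversion of \WF{Refine}—is weakened into place and consumed by \T{App}.

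The hard part will be exactly this refinement case, since it is the only place where $\mathit{self}$ introduces casts and a binder: there I must assemble the compatibility facts (reflexivity and symmetry of $\mathrel{\parallel}$, together with \prop:ref{fh-self-compat}) and carefully manage the shadowing between the outer refinement variable $x{:}\ottnt{T''}$ and the let-bound $x{:}\ottnt{T'}$. The remaining cases are routine once regularity and the weakening lemmas (\prop:ref{fh-weak-term} and \prop:ref{fh-weak-type}) are in hand.
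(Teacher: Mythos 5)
Your proposal is correct and follows exactly the paper's argument: induction on the structure of $\ottnt{T}$, using regularity ($\Gamma  \vdash  \ottnt{e}  \ottsym{:}  \ottnt{T}$ implies $\Gamma  \vdash  \ottnt{T}$) to recover well-formedness of the subcomponents, which is all the paper's one-line proof says. Your detailed treatment of the refinement case, including the compatibility facts needed to type the inserted casts and the $\alpha$-renaming of the shadowed binder, fills in the routine details the paper leaves implicit.
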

The selfification of refinement types involves casts from selfified types to the
underlying types, so we need to show that such casts also do not raise blame.
\begin{prop}{fh-self-elim-self2type}
 If $\Gamma  \vdash  \ottnt{e}  \ottsym{:}  \ottnt{T}$,
 then $\Gamma  \vdash  \langle   \mathit{self}  (  \ottnt{T} ,  \ottnt{e}  )   \Rightarrow  \ottnt{T}  \rangle   ^{ \ell }  \,  \mathrel{ \simeq }  \,   \lambda    \mathit{x}  \mathord{:}  \ottnt{T}  .  \mathit{x}   \ottsym{:}   \mathit{self}  (  \ottnt{T} ,  \ottnt{e}  )   \rightarrow  \ottnt{T}$.

 \proof

 By induction on $\ottnt{T}$.
 Let $\Gamma  \vdash  \theta  \ottsym{;}  \delta$.
 It suffices to show that, for any $\ottnt{v_{{\mathrm{1}}}}$ and $\ottnt{v_{{\mathrm{2}}}}$
 such that $ \ottnt{v_{{\mathrm{1}}}}  \simeq_{\mathtt{v} }  \ottnt{v_{{\mathrm{2}}}}   \ottsym{:}    \mathit{self}  (  \ottnt{T} ,  \ottnt{e}  )  ;  \theta ;  \delta $,
 \[
    \theta_{{\mathrm{1}}}  (   \delta_{{\mathrm{1}}}  (  \langle   \mathit{self}  (  \ottnt{T} ,  \ottnt{e}  )   \Rightarrow  \ottnt{T}  \rangle   ^{ \ell }   )   )  \, \ottnt{v_{{\mathrm{1}}}}  \simeq_{\mathtt{e} }  \ottnt{v_{{\mathrm{2}}}}   \ottsym{:}   \ottnt{T} ;  \theta ;  \delta .
 \]
 By case analysis on $\ottnt{T}$.
 \begin{itemize}
  \case $\ottnt{T}  \ottsym{=}  \ottnt{B}$: Trivial.
  \case $\ottnt{T}  \ottsym{=}  \alpha$:
   We have $ \mathit{self}  (  \ottnt{T} ,  \ottnt{e}  )   \ottsym{=}  \alpha$.
   Since reflexive casts are logically related to identity functions
   (\prop:ref{fh-lr-elim-refl-cast}),
   we have
   $  \theta_{{\mathrm{1}}}  (   \delta_{{\mathrm{1}}}  (  \langle  \alpha  \Rightarrow  \alpha  \rangle   ^{ \ell }   )   )   \simeq_{\mathtt{v} }   \theta_{{\mathrm{2}}}  (   \delta_{{\mathrm{2}}}  (    \lambda    \mathit{x}  \mathord{:}  \alpha  .  \mathit{x}   )   )    \ottsym{:}   \alpha  \rightarrow  \alpha ;  \theta ;  \delta $.
   Since $ \ottnt{v_{{\mathrm{1}}}}  \simeq_{\mathtt{v} }  \ottnt{v_{{\mathrm{2}}}}   \ottsym{:}   \alpha ;  \theta ;  \delta $, we finish by definition.

  \case $\ottnt{T}  \ottsym{=}   \mathit{x} \mathord{:} \ottnt{T_{{\mathrm{1}}}} \rightarrow \ottnt{T_{{\mathrm{2}}}} $:
   Without loss of generality, we can suppose that
   $\mathit{x} \, \notin \,  \mathit{dom}  (  \delta  ) $.
   We have $ \mathit{self}  (  \ottnt{T} ,  \ottnt{e}  )   \ottsym{=}   \mathit{x} \mathord{:} \ottnt{T_{{\mathrm{1}}}} \rightarrow  \mathit{self}  (  \ottnt{T_{{\mathrm{2}}}} ,  \ottnt{e} \, \mathit{x}  )  $.
   By \E{Red}/\R{Fun},
   \[\begin{array}{ll}
    &  \theta_{{\mathrm{1}}}  (   \delta_{{\mathrm{1}}}  (  \langle   \mathit{self}  (  \ottnt{T} ,  \ottnt{e}  )   \Rightarrow  \ottnt{T}  \rangle   ^{ \ell }   )   )  \, \ottnt{v_{{\mathrm{1}}}} \\
     \longrightarrow  &  \theta_{{\mathrm{1}}}  (   \delta_{{\mathrm{1}}}  (    \lambda    \mathit{x}  \mathord{:}  \ottnt{T_{{\mathrm{1}}}}  .   \mathsf{let}  ~  \mathit{y}  \mathord{:}  \ottnt{T_{{\mathrm{1}}}}  \equal  \langle  \ottnt{T_{{\mathrm{1}}}}  \Rightarrow  \ottnt{T_{{\mathrm{1}}}}  \rangle   ^{ \ell }  \, \mathit{x}  ~ \ottliteralin ~  \langle   \mathit{self}  (  \ottnt{T_{{\mathrm{2}}}} ,  \ottnt{e} \, \mathit{x}  )  \, [  \mathit{y}  \ottsym{/}  \mathit{x}  ]  \Rightarrow  \ottnt{T_{{\mathrm{2}}}}  \rangle   ^{ \ell }   \, \ottsym{(}  \ottnt{v_{{\mathrm{1}}}} \, \mathit{y}  \ottsym{)}   )   ) .\\
     \end{array}\]
   for fresh variable $\mathit{y}$.
   Thus, it suffices to show that, for any $\ottnt{v'_{{\mathrm{1}}}}$ and $\ottnt{v'_{{\mathrm{2}}}}$
   such that $ \ottnt{v'_{{\mathrm{1}}}}  \simeq_{\mathtt{v} }  \ottnt{v'_{{\mathrm{2}}}}   \ottsym{:}   \ottnt{T_{{\mathrm{1}}}} ;  \theta ;  \delta $,
   \[\begin{array}{ll}
    &   \theta_{{\mathrm{1}}}  (   \delta_{{\mathrm{1}}}  (   \mathsf{let}  ~  \mathit{y}  \mathord{:}  \ottnt{T_{{\mathrm{1}}}}  \equal  \langle  \ottnt{T_{{\mathrm{1}}}}  \Rightarrow  \ottnt{T_{{\mathrm{1}}}}  \rangle   ^{ \ell }  \, \mathit{x}  ~ \ottliteralin ~  \langle   \mathit{self}  (  \ottnt{T_{{\mathrm{2}}}} ,  \ottnt{e} \, \mathit{x}  )  \, [  \mathit{y}  \ottsym{/}  \mathit{x}  ]  \Rightarrow  \ottnt{T_{{\mathrm{2}}}}  \rangle   ^{ \ell }  \, \ottsym{(}  \ottnt{v_{{\mathrm{1}}}} \, \mathit{y}  \ottsym{)}   )  \, [  \ottnt{v'_{{\mathrm{1}}}}  \ottsym{/}  \mathit{x}  ]  )     \\   \simeq_{\mathtt{e} }   &    \ottnt{v_{{\mathrm{2}}}} \, \ottnt{v'_{{\mathrm{2}}}}   \ottsym{:}   \ottnt{T_{{\mathrm{2}}}} ;  \theta ;   \delta    [  \,  (  \ottnt{v'_{{\mathrm{1}}}}  ,  \ottnt{v'_{{\mathrm{2}}}}  ) /  \mathit{x}  \,  ]   .
     \end{array}\]

   Let $\ottnt{e'}  \ottsym{=}  \langle   \mathit{self}  (  \ottnt{T_{{\mathrm{2}}}} ,  \ottnt{e} \, \mathit{x}  )  \, [  \mathit{y}  \ottsym{/}  \mathit{x}  ]  \Rightarrow  \ottnt{T_{{\mathrm{2}}}}  \rangle   ^{ \ell }  \, \ottsym{(}  \mathit{z} \, \mathit{y}  \ottsym{)}$
   for fresh variable $\mathit{z}$.
   Since reflexive casts are logically related to identity functions
   (\prop:ref{fh-lr-elim-refl-cast}),
   we have $\Gamma  \vdash  \langle  \ottnt{T_{{\mathrm{1}}}}  \Rightarrow  \ottnt{T_{{\mathrm{1}}}}  \rangle   ^{ \ell }  \,  \mathrel{ \simeq }  \,   \lambda    \mathit{x}  \mathord{:}  \ottnt{T_{{\mathrm{1}}}}  .  \mathit{x}   \ottsym{:}  \ottnt{T_{{\mathrm{1}}}}  \rightarrow  \ottnt{T_{{\mathrm{1}}}}$.
   Thus, we have
   \begin{equation}
    \begin{array}{r@{\;}l}
      \Gamma  ,  \mathit{z}  \mathord{:}  \ottsym{(}   \mathit{x} \mathord{:} \ottnt{T_{{\mathrm{1}}}} \rightarrow  \mathit{self}  (  \ottnt{T_{{\mathrm{2}}}} ,  \ottnt{e} \, \mathit{x}  )    \ottsym{)}   ,  \mathit{x}  \mathord{:}  \ottnt{T_{{\mathrm{1}}}}   \vdash   &  \,  \mathsf{let}  ~  \mathit{y}  \mathord{:}  \ottnt{T_{{\mathrm{1}}}}  \equal  \langle  \ottnt{T_{{\mathrm{1}}}}  \Rightarrow  \ottnt{T_{{\mathrm{1}}}}  \rangle   ^{ \ell }  \, \mathit{x}  ~ \ottliteralin ~  \ottnt{e'}  \,  \mathrel{ \simeq }  \,  \\  \,  &  \,  \mathsf{let}  ~  \mathit{y}  \mathord{:}  \ottnt{T_{{\mathrm{1}}}}  \equal  \ottsym{(}    \lambda    \mathit{x}  \mathord{:}  \ottnt{T_{{\mathrm{1}}}}  .  \mathit{x}   \ottsym{)} \, \mathit{x}  ~ \ottliteralin ~  \ottnt{e'}   \ottsym{:}  \ottnt{T_{{\mathrm{2}}}}
    \end{array}
    \label{eqn:fh-self-elim-self2type-fun-one}
   \end{equation}
   by the fundamental property.

   Since $ \ottnt{v_{{\mathrm{1}}}}  \simeq_{\mathtt{v} }  \ottnt{v_{{\mathrm{2}}}}   \ottsym{:}    \mathit{x} \mathord{:} \ottnt{T_{{\mathrm{1}}}} \rightarrow  \mathit{self}  (  \ottnt{T_{{\mathrm{2}}}} ,  \ottnt{e} \, \mathit{x}  )   ;  \theta ;  \delta $ and
   $ \ottnt{v'_{{\mathrm{1}}}}  \simeq_{\mathtt{v} }  \ottnt{v'_{{\mathrm{2}}}}   \ottsym{:}   \ottnt{T_{{\mathrm{1}}}} ;  \theta ;  \delta $,
   we have $  \Gamma  ,  \mathit{z}  \mathord{:}  \ottsym{(}   \mathit{x} \mathord{:} \ottnt{T_{{\mathrm{1}}}} \rightarrow  \mathit{self}  (  \ottnt{T_{{\mathrm{2}}}} ,  \ottnt{e} \, \mathit{x}  )    \ottsym{)}   ,  \mathit{x}  \mathord{:}  \ottnt{T_{{\mathrm{1}}}}   \vdash  \theta  \ottsym{;}    \delta    [  \,  (  \ottnt{v_{{\mathrm{1}}}}  ,  \ottnt{v_{{\mathrm{2}}}}  ) /  \mathit{z}  \,  ]      [  \,  (  \ottnt{v'_{{\mathrm{1}}}}  ,  \ottnt{v'_{{\mathrm{2}}}}  ) /  \mathit{x}  \,  ]  $
   by the weakening (\prop:ref{fh-lr-val-ws}).
   Since logically related terms are CIU-equivalent
   (\refthm{fh-lr-sound} and \prop:ref{fh-lr-ciu-complete}),
   we have
   \[\begin{array}{l}
    \emptyset  \vdash   \theta_{{\mathrm{1}}}  (   \delta_{{\mathrm{1}}}  (   \mathsf{let}  ~  \mathit{y}  \mathord{:}  \ottnt{T_{{\mathrm{1}}}}  \equal  \langle  \ottnt{T_{{\mathrm{1}}}}  \Rightarrow  \ottnt{T_{{\mathrm{1}}}}  \rangle   ^{ \ell }  \, \mathit{x}  ~ \ottliteralin ~  \ottnt{e'} \, [  \ottnt{v_{{\mathrm{1}}}}  \ottsym{/}  \mathit{z}  ]   )  \, [  \ottnt{v'_{{\mathrm{1}}}}  \ottsym{/}  \mathit{x}  ]  )  \,  \\  \,  \quad  \, =_\mathsf{ciu} \,  \theta_{{\mathrm{1}}}  (   \delta_{{\mathrm{1}}}  (   \mathsf{let}  ~  \mathit{y}  \mathord{:}  \ottnt{T_{{\mathrm{1}}}}  \equal  \ottsym{(}    \lambda    \mathit{x}  \mathord{:}  \ottnt{T_{{\mathrm{1}}}}  .  \mathit{x}   \ottsym{)} \, \mathit{x}  ~ \ottliteralin ~  \ottnt{e'} \, [  \ottnt{v_{{\mathrm{1}}}}  \ottsym{/}  \mathit{z}  ]   )  \, [  \ottnt{v'_{{\mathrm{1}}}}  \ottsym{/}  \mathit{x}  ]  )   \ottsym{:}   \theta_{{\mathrm{1}}}  (   \delta_{{\mathrm{1}}}  (  \ottnt{T_{{\mathrm{2}}}}  )  \, [  \ottnt{v'_{{\mathrm{1}}}}  \ottsym{/}  \mathit{x}  ]  ) .
     \end{array}\]
   from (\ref{eqn:fh-self-elim-self2type-fun-one}).
   Thus, by the equivalence-respecting property (\prop:ref{fh-lr-comp-equiv-res}),
   it suffices to show that
   \[\begin{array}{ll}
    &   \theta_{{\mathrm{1}}}  (   \delta_{{\mathrm{1}}}  (   \mathsf{let}  ~  \mathit{y}  \mathord{:}  \ottnt{T_{{\mathrm{1}}}}  \equal  \ottsym{(}    \lambda    \mathit{x}  \mathord{:}  \ottnt{T_{{\mathrm{1}}}}  .  \mathit{x}   \ottsym{)} \, \mathit{x}  ~ \ottliteralin ~  \langle   \mathit{self}  (  \ottnt{T_{{\mathrm{2}}}} ,  \ottnt{e} \, \mathit{x}  )  \, [  \mathit{y}  \ottsym{/}  \mathit{x}  ]  \Rightarrow  \ottnt{T_{{\mathrm{2}}}}  \rangle   ^{ \ell }  \, \ottsym{(}  \ottnt{v_{{\mathrm{1}}}} \, \mathit{y}  \ottsym{)}   )  \, [  \ottnt{v'_{{\mathrm{1}}}}  \ottsym{/}  \mathit{x}  ]  )     \\   \simeq_{\mathtt{e} }   &    \ottnt{v_{{\mathrm{2}}}} \, \ottnt{v'_{{\mathrm{2}}}}   \ottsym{:}   \ottnt{T_{{\mathrm{2}}}} ;  \theta ;   \delta    [  \,  (  \ottnt{v'_{{\mathrm{1}}}}  ,  \ottnt{v'_{{\mathrm{2}}}}  ) /  \mathit{x}  \,  ]   ,
     \end{array}\]
   that is,
   \[\begin{array}{ll}
    &   \theta_{{\mathrm{1}}}  (   \delta_{{\mathrm{1}}}  (  \langle   \mathit{self}  (  \ottnt{T_{{\mathrm{2}}}} ,  \ottnt{e} \, \mathit{x}  )   \Rightarrow  \ottnt{T_{{\mathrm{2}}}}  \rangle   ^{ \ell }   )  \, [  \ottnt{v'_{{\mathrm{1}}}}  \ottsym{/}  \mathit{x}  ]  )  \, \ottsym{(}  \ottnt{v_{{\mathrm{1}}}} \, \ottnt{v'_{{\mathrm{1}}}}  \ottsym{)}  \simeq_{\mathtt{e} }  \ottnt{v_{{\mathrm{2}}}} \, \ottnt{v'_{{\mathrm{2}}}}   \ottsym{:}   \ottnt{T_{{\mathrm{2}}}} ;  \theta ;   \delta    [  \,  (  \ottnt{v'_{{\mathrm{1}}}}  ,  \ottnt{v'_{{\mathrm{2}}}}  ) /  \mathit{x}  \,  ]   .
     \end{array}\]
   Since $ \ottnt{v_{{\mathrm{1}}}}  \simeq_{\mathtt{v} }  \ottnt{v_{{\mathrm{2}}}}   \ottsym{:}    \mathit{x} \mathord{:} \ottnt{T_{{\mathrm{1}}}} \rightarrow  \mathit{self}  (  \ottnt{T_{{\mathrm{2}}}} ,  \ottnt{e} \, \mathit{x}  )   ;  \theta ;  \delta $ and
   $ \ottnt{v'_{{\mathrm{1}}}}  \simeq_{\mathtt{v} }  \ottnt{v'_{{\mathrm{2}}}}   \ottsym{:}   \ottnt{T_{{\mathrm{1}}}} ;  \theta ;  \delta $,
   we have $ \ottnt{v_{{\mathrm{1}}}} \, \ottnt{v'_{{\mathrm{1}}}}  \simeq_{\mathtt{e} }  \ottnt{v_{{\mathrm{2}}}} \, \ottnt{v'_{{\mathrm{2}}}}   \ottsym{:}    \mathit{self}  (  \ottnt{T_{{\mathrm{2}}}} ,  \ottnt{e} \, \mathit{x}  )  ;  \theta ;   \delta    [  \,  (  \ottnt{v'_{{\mathrm{1}}}}  ,  \ottnt{v'_{{\mathrm{2}}}}  ) /  \mathit{x}  \,  ]   $.
   If $\ottnt{v_{{\mathrm{1}}}} \, \ottnt{v'_{{\mathrm{1}}}}$ and $\ottnt{v_{{\mathrm{2}}}} \, \ottnt{v'_{{\mathrm{2}}}}$ raise blame, then we finish.
   Otherwise,
   $\ottnt{v_{{\mathrm{1}}}} \, \ottnt{v'_{{\mathrm{1}}}}  \longrightarrow^{\ast}  \ottnt{v''_{{\mathrm{1}}}}$ and
   $\ottnt{v_{{\mathrm{2}}}} \, \ottnt{v'_{{\mathrm{2}}}}  \longrightarrow^{\ast}  \ottnt{v''_{{\mathrm{2}}}}$ for some $\ottnt{v''_{{\mathrm{1}}}}$ and $\ottnt{v''_{{\mathrm{2}}}}$, and
   it suffices to show that
   \begin{equation}
     \begin{array}{ll}
    &   \theta_{{\mathrm{1}}}  (   \delta_{{\mathrm{1}}}  (  \langle   \mathit{self}  (  \ottnt{T_{{\mathrm{2}}}} ,  \ottnt{e} \, \mathit{x}  )   \Rightarrow  \ottnt{T_{{\mathrm{2}}}}  \rangle   ^{ \ell }   )  \, [  \ottnt{v'_{{\mathrm{1}}}}  \ottsym{/}  \mathit{x}  ]  )  \, \ottnt{v''_{{\mathrm{1}}}}  \simeq_{\mathtt{e} }  \ottnt{v''_{{\mathrm{2}}}}   \ottsym{:}   \ottnt{T_{{\mathrm{2}}}} ;  \theta ;   \delta    [  \,  (  \ottnt{v'_{{\mathrm{1}}}}  ,  \ottnt{v'_{{\mathrm{2}}}}  ) /  \mathit{x}  \,  ]   .
     \end{array}
     \label{req:fh-self-elim-self2type-fun-two}
   \end{equation}
   We also have $ \ottnt{v''_{{\mathrm{1}}}}  \simeq_{\mathtt{v} }  \ottnt{v''_{{\mathrm{2}}}}   \ottsym{:}    \mathit{self}  (  \ottnt{T_{{\mathrm{2}}}} ,  \ottnt{e} \, \mathit{x}  )  ;  \theta ;   \delta    [  \,  (  \ottnt{v'_{{\mathrm{1}}}}  ,  \ottnt{v'_{{\mathrm{2}}}}  ) /  \mathit{x}  \,  ]   $.
   Since $ \Gamma  ,  \mathit{x}  \mathord{:}  \ottnt{T_{{\mathrm{1}}}}   \vdash  \ottnt{e} \, \mathit{x}  \ottsym{:}  \ottnt{T_{{\mathrm{2}}}}$,
   we have
   \[
     \Gamma  ,  \mathit{x}  \mathord{:}  \ottnt{T_{{\mathrm{1}}}}   \vdash  \langle   \mathit{self}  (  \ottnt{T_{{\mathrm{2}}}} ,  \ottnt{e} \, \mathit{x}  )   \Rightarrow  \ottnt{T_{{\mathrm{2}}}}  \rangle   ^{ \ell }  \,  \mathrel{ \simeq }  \,   \lambda    \mathit{x}  \mathord{:}  \ottnt{T_{{\mathrm{2}}}}  .  \mathit{x}   \ottsym{:}   \mathit{self}  (  \ottnt{T_{{\mathrm{2}}}} ,  \ottnt{e} \, \mathit{x}  )   \rightarrow  \ottnt{T_{{\mathrm{2}}}}
   \]
   by the IH.
   Since $ \Gamma  ,  \mathit{x}  \mathord{:}  \ottnt{T_{{\mathrm{1}}}}   \vdash  \theta  \ottsym{;}   \delta    [  \,  (  \ottnt{v'_{{\mathrm{1}}}}  ,  \ottnt{v'_{{\mathrm{2}}}}  ) /  \mathit{x}  \,  ]  $,
   we have
   \[\begin{array}{ll}
    &   \theta_{{\mathrm{1}}}  (   \delta_{{\mathrm{1}}}  (  \langle   \mathit{self}  (  \ottnt{T_{{\mathrm{2}}}} ,  \ottnt{e} \, \mathit{x}  )   \Rightarrow  \ottnt{T_{{\mathrm{2}}}}  \rangle   ^{ \ell }  \, [  \ottnt{v'_{{\mathrm{1}}}}  \ottsym{/}  \mathit{x}  ]  )   )     \\   \simeq_{\mathtt{v} }   &     \theta_{{\mathrm{2}}}  (   \delta_{{\mathrm{2}}}  (    \lambda    \mathit{x}  \mathord{:}  \ottnt{T_{{\mathrm{2}}}}  .  \mathit{x}   )  \, [  \ottnt{v'_{{\mathrm{2}}}}  \ottsym{/}  \mathit{x}  ]  )    \ottsym{:}    \mathit{self}  (  \ottnt{T_{{\mathrm{2}}}} ,  \ottnt{e} \, \mathit{x}  )   \rightarrow  \ottnt{T_{{\mathrm{2}}}} ;  \theta ;   \delta    [  \,  (  \ottnt{v'_{{\mathrm{1}}}}  ,  \ottnt{v'_{{\mathrm{2}}}}  ) /  \mathit{x}  \,  ]   .
     \end{array}\]
   Since $ \ottnt{v''_{{\mathrm{1}}}}  \simeq_{\mathtt{v} }  \ottnt{v''_{{\mathrm{2}}}}   \ottsym{:}    \mathit{self}  (  \ottnt{T_{{\mathrm{2}}}} ,  \ottnt{e} \, \mathit{x}  )  ;  \theta ;   \delta    [  \,  (  \ottnt{v'_{{\mathrm{1}}}}  ,  \ottnt{v'_{{\mathrm{2}}}}  ) /  \mathit{x}  \,  ]   $,
   we have (\ref{req:fh-self-elim-self2type-fun-two}).

  \case $\ottnt{T}  \ottsym{=}   \forall   \alpha  .  \ottnt{T'} $: Straightforward by the IH.

  \case $\ottnt{T}  \ottsym{=}   \{  \mathit{x}  \mathord{:}  \ottnt{T'}   \mathop{\mid}   \ottnt{e'}  \} $:
   Without loss of generality, we can suppose that
   $\mathit{x} \, \notin \,  \mathit{dom}  (  \delta  ) $.
   We have $ \mathit{self}  (  \ottnt{T} ,  \ottnt{e}  )   \ottsym{=}   \{  \mathit{x}  \mathord{:}  \ottnt{T''}   \mathop{\mid}    \mathsf{let}  ~  \mathit{x}  \mathord{:}  \ottnt{T'}  \equal  \langle  \ottnt{T''}  \Rightarrow  \ottnt{T'}  \rangle   ^{ \ell }  \, \mathit{x}  ~ \ottliteralin ~  \ottnt{e'}   \} $
   where $\ottnt{T''}  \ottsym{=}   \mathit{self}  (  \ottnt{T'} ,  \langle   \{  \mathit{x}  \mathord{:}  \ottnt{T'}   \mathop{\mid}   \ottnt{e'}  \}   \Rightarrow  \ottnt{T'}  \rangle   ^{ \ell }  \, \ottnt{e}  ) $.
   By \E{Red}/\R{Forget},
   \[
     \theta_{{\mathrm{1}}}  (   \delta_{{\mathrm{1}}}  (  \langle   \mathit{self}  (  \ottnt{T} ,  \ottnt{e}  )   \Rightarrow  \ottnt{T}  \rangle   ^{ \ell }   )   )  \, \ottnt{v_{{\mathrm{1}}}}  \longrightarrow   \theta_{{\mathrm{1}}}  (   \delta_{{\mathrm{1}}}  (  \langle  \ottnt{T''}  \Rightarrow   \{  \mathit{x}  \mathord{:}  \ottnt{T'}   \mathop{\mid}   \ottnt{e'}  \}   \rangle   ^{ \ell }   )   )  \, \ottnt{v_{{\mathrm{1}}}}.
   \]
   It suffices to show that
   \[
      \theta_{{\mathrm{1}}}  (   \delta_{{\mathrm{1}}}  (  \langle  \ottnt{T''}  \Rightarrow   \{  \mathit{x}  \mathord{:}  \ottnt{T'}   \mathop{\mid}   \ottnt{e'}  \}   \rangle   ^{ \ell }   )   )  \, \ottnt{v_{{\mathrm{1}}}}  \simeq_{\mathtt{e} }  \ottnt{v_{{\mathrm{2}}}}   \ottsym{:}    \{  \mathit{x}  \mathord{:}  \ottnt{T'}   \mathop{\mid}   \ottnt{e'}  \}  ;  \theta ;  \delta .
   \]
   By the IH,
   $\Gamma  \vdash  \langle  \ottnt{T''}  \Rightarrow  \ottnt{T'}  \rangle   ^{ \ell }  \,  \mathrel{ \simeq }  \,   \lambda    \mathit{x}  \mathord{:}  \ottnt{T'}  .  \mathit{x}   \ottsym{:}  \ottnt{T''}  \rightarrow  \ottnt{T'}$, and so
   $  \theta_{{\mathrm{1}}}  (   \delta_{{\mathrm{1}}}  (  \langle  \ottnt{T''}  \Rightarrow  \ottnt{T'}  \rangle   ^{ \ell }   )   )   \simeq_{\mathtt{v} }   \theta_{{\mathrm{2}}}  (   \delta_{{\mathrm{2}}}  (    \lambda    \mathit{x}  \mathord{:}  \ottnt{T'}  .  \mathit{x}   )   )    \ottsym{:}   \ottnt{T''}  \rightarrow  \ottnt{T'} ;  \theta ;  \delta $.
   Since $ \ottnt{v_{{\mathrm{1}}}}  \simeq_{\mathtt{v} }  \ottnt{v_{{\mathrm{2}}}}   \ottsym{:}    \mathit{self}  (  \ottnt{T} ,  \ottnt{e}  )  ;  \theta ;  \delta $,
   we have $ \ottnt{v_{{\mathrm{1}}}}  \simeq_{\mathtt{v} }  \ottnt{v_{{\mathrm{2}}}}   \ottsym{:}   \ottnt{T''} ;  \theta ;  \delta $.
   Thus,
   $  \theta_{{\mathrm{1}}}  (   \delta_{{\mathrm{1}}}  (  \langle  \ottnt{T''}  \Rightarrow  \ottnt{T'}  \rangle   ^{ \ell }   )   )  \, \ottnt{v_{{\mathrm{1}}}}  \simeq_{\mathtt{e} }  \ottnt{v_{{\mathrm{2}}}}   \ottsym{:}   \ottnt{T'} ;  \theta ;  \delta $.
   By definition, there exists some $\ottnt{v'_{{\mathrm{1}}}}$ such that
   $ \theta_{{\mathrm{1}}}  (   \delta_{{\mathrm{1}}}  (  \langle  \ottnt{T''}  \Rightarrow  \ottnt{T'}  \rangle   ^{ \ell }   )   )  \, \ottnt{v_{{\mathrm{1}}}}  \longrightarrow^{\ast}  \ottnt{v'_{{\mathrm{1}}}}$ and
   $ \ottnt{v'_{{\mathrm{1}}}}  \simeq_{\mathtt{v} }  \ottnt{v_{{\mathrm{2}}}}   \ottsym{:}   \ottnt{T'} ;  \theta ;  \delta $.
   By \R{Forget} and \R{PreCheck},
   $ \theta_{{\mathrm{1}}}  (   \delta_{{\mathrm{1}}}  (  \langle  \ottnt{T''}  \Rightarrow   \{  \mathit{x}  \mathord{:}  \ottnt{T'}   \mathop{\mid}   \ottnt{e'}  \}   \rangle   ^{ \ell }   )   )  \, \ottnt{v_{{\mathrm{1}}}}  \longrightarrow^{\ast}   \theta_{{\mathrm{1}}}  (   \delta_{{\mathrm{1}}}  (  \langle   \{  \mathit{x}  \mathord{:}  \ottnt{T'}   \mathop{\mid}   \ottnt{e'}  \}   \ottsym{,}  \ottnt{e'} \, [  \ottnt{v'_{{\mathrm{1}}}}  \ottsym{/}  \mathit{x}  ]  \ottsym{,}  \ottnt{v'_{{\mathrm{1}}}}  \rangle   ^{ \ell }   )   ) $.
   Thus, it suffices to show that
   \[
      \theta_{{\mathrm{1}}}  (   \delta_{{\mathrm{1}}}  (  \langle   \{  \mathit{x}  \mathord{:}  \ottnt{T'}   \mathop{\mid}   \ottnt{e'}  \}   \ottsym{,}  \ottnt{e'} \, [  \ottnt{v'_{{\mathrm{1}}}}  \ottsym{/}  \mathit{x}  ]  \ottsym{,}  \ottnt{v'_{{\mathrm{1}}}}  \rangle   ^{ \ell }   )   )   \simeq_{\mathtt{e} }  \ottnt{v_{{\mathrm{2}}}}   \ottsym{:}    \{  \mathit{x}  \mathord{:}  \ottnt{T'}   \mathop{\mid}   \ottnt{e'}  \}  ;  \theta ;  \delta .
   \]
   Since $ \ottnt{v_{{\mathrm{1}}}}  \simeq_{\mathtt{v} }  \ottnt{v_{{\mathrm{2}}}}   \ottsym{:}    \mathit{self}  (  \ottnt{T} ,  \ottnt{e}  )  ;  \theta ;  \delta $,
   we have $ \theta_{{\mathrm{1}}}  (   \delta_{{\mathrm{1}}}  (   \mathsf{let}  ~  \mathit{x}  \mathord{:}  \ottnt{T'}  \equal  \langle  \ottnt{T''}  \Rightarrow  \ottnt{T'}  \rangle   ^{ \ell }  \, \ottnt{v_{{\mathrm{1}}}}  ~ \ottliteralin ~  \ottnt{e'}   )   )   \longrightarrow^{\ast}   \mathsf{true} $.
   Since $ \theta_{{\mathrm{1}}}  (   \delta_{{\mathrm{1}}}  (  \langle  \ottnt{T''}  \Rightarrow  \ottnt{T'}  \rangle   ^{ \ell }   )   )  \, \ottnt{v_{{\mathrm{1}}}}  \longrightarrow^{\ast}  \ottnt{v'_{{\mathrm{1}}}}$,
   we have $ \theta_{{\mathrm{1}}}  (   \delta_{{\mathrm{1}}}  (  \ottnt{e'} \, [  \ottnt{v'_{{\mathrm{1}}}}  \ottsym{/}  \mathit{x}  ]  )   )   \longrightarrow^{\ast}   \mathsf{true} $.
   Thus, it suffices to show that
   \[
     \ottnt{v'_{{\mathrm{1}}}}  \simeq_{\mathtt{v} }  \ottnt{v_{{\mathrm{2}}}}   \ottsym{:}    \{  \mathit{x}  \mathord{:}  \ottnt{T'}   \mathop{\mid}   \ottnt{e'}  \}  ;  \theta ;  \delta .
   \]
   Since $ \ottnt{v'_{{\mathrm{1}}}}  \simeq_{\mathtt{v} }  \ottnt{v_{{\mathrm{2}}}}   \ottsym{:}   \ottnt{T'} ;  \theta ;  \delta $ and
   $ \theta_{{\mathrm{1}}}  (   \delta_{{\mathrm{1}}}  (  \ottnt{e'} \, [  \ottnt{v'_{{\mathrm{1}}}}  \ottsym{/}  \mathit{x}  ]  )   )   \longrightarrow^{\ast}   \mathsf{true} $,
   it suffices to show that
   \[
     \theta_{{\mathrm{2}}}  (   \delta_{{\mathrm{2}}}  (  \ottnt{e'} \, [  \ottnt{v_{{\mathrm{2}}}}  \ottsym{/}  \mathit{x}  ]  )   )   \longrightarrow^{\ast}   \mathsf{true} .
   \]
   Since $\Gamma  \vdash   \{  \mathit{x}  \mathord{:}  \ottnt{T'}   \mathop{\mid}   \ottnt{e'}  \} $, we have
   $ \Gamma  ,  \mathit{x}  \mathord{:}  \ottnt{T'}   \vdash  \ottnt{e'} \,  \mathrel{ \simeq }  \, \ottnt{e'}  \ottsym{:}   \mathsf{Bool} $ by the parametricity (\prop:ref{fh-lr-param}).
   Since $ \Gamma  ,  \mathit{x}  \mathord{:}  \ottnt{T'}   \vdash  \theta  \ottsym{;}   \delta    [  \,  (  \ottnt{v'_{{\mathrm{1}}}}  ,  \ottnt{v_{{\mathrm{2}}}}  ) /  \mathit{x}  \,  ]  $,
   we have \sloppy{$  \theta_{{\mathrm{1}}}  (   \delta_{{\mathrm{1}}}  (  \ottnt{e'} \, [  \ottnt{v'_{{\mathrm{1}}}}  \ottsym{/}  \mathit{x}  ]  )   )   \simeq_{\mathtt{e} }   \theta_{{\mathrm{2}}}  (   \delta_{{\mathrm{2}}}  (  \ottnt{e'} \, [  \ottnt{v_{{\mathrm{2}}}}  \ottsym{/}  \mathit{x}  ]  )   )    \ottsym{:}    \mathsf{Bool}  ;  \theta ;   \delta    [  \,  (  \ottnt{v'_{{\mathrm{1}}}}  ,  \ottnt{v_{{\mathrm{2}}}}  ) /  \mathit{x}  \,  ]   $}.
   Since the term on the left-hand side evaluates to $ \mathsf{true} $,
   we have $ \theta_{{\mathrm{2}}}  (   \delta_{{\mathrm{2}}}  (  \ottnt{e'} \, [  \ottnt{v_{{\mathrm{2}}}}  \ottsym{/}  \mathit{x}  ]  )   )   \longrightarrow^{\ast}   \mathsf{true} $.
   \qedhere
 \end{itemize}
\end{prop}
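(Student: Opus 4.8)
The plan is to prove the statement by induction on the structure of $T$, unfolding the definition of $\mathit{self}$ in each case. Fix $\theta$ and $\delta$ with $\Gamma \vdash \theta;\delta$; by the definitions of the logical relation and the value relation it suffices to show that for every pair $v_1,v_2$ with $v_1 \simeq_{\mathtt{v}} v_2 : \mathit{self}(T,e);\theta;\delta$ we have $\theta_1(\delta_1(\langle \mathit{self}(T,e) \Rightarrow T\rangle^{\ell}))\, v_1 \simeq_{\mathtt{e}} v_2 : T;\theta;\delta$. So throughout I would reduce the cast application using the operational semantics and then appeal to the induction hypothesis together with relatedness of $v_1$ and $v_2$ at the appropriate subtype.

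First the easy cases. For $T = B$ we have $\mathit{self}(B,e) = \{x{:}B \mid x =_B e\}$; the cast forgets the refinement by \R{Forget} and then behaves as the identity on $B$ by \R{Base}, so it reduces to $v_1$, and $v_1 \simeq_{\mathtt{v}} v_2 : \{x{:}B \mid \ldots\}$ entails $v_1 \simeq_{\mathtt{v}} v_2 : B$ by the refinement clause of the value relation. For $T = \alpha$ we have $\mathit{self}(\alpha,e) = \alpha$, so the cast is the reflexive cast $\langle \alpha \Rightarrow \alpha\rangle^{\ell}$; here I would invoke Elimination of Reflexive Casts on Left (\prop:ref{fh-lr-elim-refl-cast}) to relate it to $\lambda x{:}\alpha.\,x$ and combine with $v_1 \simeq_{\mathtt{v}} v_2 : \alpha$. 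The universal case $T = \forall \alpha.\,T'$ follows directly from the induction hypothesis after instantiating with the fresh interpretation. For the refinement case $T = \{x{:}T' \mid e'\}$, where $\mathit{self}(T,e) = \{x{:}T'' \mid \mathsf{let}\ \ldots\}$ with $T'' = \mathit{self}(T', \langle \{x{:}T'\mid e'\}\Rightarrow T'\rangle^{\ell}\,e)$, the cast forgets the outer refinement by \R{Forget} and then, via \R{PreCheck}, runs the underlying cast $\langle T''\Rightarrow T'\rangle^{\ell}$ followed by a check of $e'$; I would apply the induction hypothesis on $T''$ to relate $\langle T''\Rightarrow T'\rangle^{\ell}$ to the identity, extract $v_1' \simeq_{\mathtt{v}} v_2 : T'$, and discharge the refinement check by using that $v_1 \simeq_{\mathtt{v}} v_2 : \mathit{self}(T,e)$ forces the inner refinement on $v_1'$, together with self-relatedness of $e'$ from parametricity (\prop:ref{fh-lr-param}) and cotermination (\prop:ref{fh-coterm-true}) to transfer the evaluation result of $e'$ from $v_1'$ to $v_2$.

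The hard part will be the function case $T = x{:}T_1 \to T_2$, where $\mathit{self}(T,e) = x{:}T_1 \to \mathit{self}(T_2, e\,x)$. By \R{Fun} the cast reduces to a wrapper whose body performs a reflexive domain cast $\langle T_1 \Rightarrow T_1\rangle^{\ell}$ inside a $\mathsf{let}$ and a covariant codomain cast $\langle \mathit{self}(T_2, e\,x)[y/x] \Rightarrow T_2\rangle^{\ell}$, and I must show this wrapper is related to $v_2$ at the function type. For an argument pair $v_1' \simeq_{\mathtt{v}} v_2' : T_1$, I would first replace the reflexive domain cast by the identity: using reflexivity of the logical relation (\prop:ref{fh-lr-param}) and the fact that logically related terms are CIU-equivalent (\refthm{fh-lr-sound} and \prop:ref{fh-lr-ciu-complete}), and then rewriting via equivalence-respecting (\prop:ref{fh-lr-comp-equiv-res}), so that the domain conversion disappears and the argument flows unchanged into $v_1$. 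Applying the induction hypothesis to the codomain cast $\langle \mathit{self}(T_2,e\,x)\Rightarrow T_2\rangle^{\ell}$ then relates it to the identity at the substituted type, and the dependency of $T_2$ on the argument is handled through term compositionality (\prop:ref{fh-lr-term-comp}), with weakening (\prop:ref{fh-lr-val-ws}) and the $\alpha$-renaming of the bound $x$ to the fresh $y$ used to keep the value assignments aligned.

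The main obstacle is precisely this function case: the need to simultaneously eliminate the (reflexive) contravariant domain cast and the (inductively handled) covariant codomain cast while correctly threading the argument through the dependent codomain type. The delicate interplay is that the reflexive domain cast must be removed through the CIU/equivalence-respecting machinery, whereas the codomain cast must be discharged through the inductive hypothesis and term compositionality, and both manipulations must agree on the same extension of $\delta$ by the argument pair. Getting the substitutions, the fresh-variable renaming, and the ordering of these rewriting steps right is where essentially all the work lies; the remaining cases are routine once the general reduction-plus-induction pattern is set up.
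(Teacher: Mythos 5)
Your proposal follows essentially the same route as the paper's proof: induction on $\ottnt{T}$, reduction to the value-relation level, \prop:ref{fh-lr-elim-refl-cast} for the type-variable case, the \R{Fun} wrapper handled by replacing the reflexive domain cast via CIU-equivalence and equivalence-respecting before invoking the IH on the codomain, and \R{Forget}/\R{PreCheck} plus the IH, cotermination, and parametricity for the refinement case. The only (harmless) deviation is your appeal to term compositionality in the function case; because the domain cast is reflexive the argument pair is unchanged, so the paper simply instantiates the IH under the extended value assignment $ \delta    [  \,  (  \ottnt{v'_{{\mathrm{1}}}}  ,  \ottnt{v'_{{\mathrm{2}}}}  ) /  \mathit{x}  \,  ]  $ and no compositionality lemma is needed there.
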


{\iffull
\begin{prop}{fh-self-elim-forget}
 If $\Gamma  \vdash   \{  \mathit{x}  \mathord{:}  \ottnt{T}   \mathop{\mid}   \ottnt{e}  \} $,
 then $\Gamma  \vdash  \langle   \{  \mathit{x}  \mathord{:}  \ottnt{T}   \mathop{\mid}   \ottnt{e}  \}   \Rightarrow  \ottnt{T}  \rangle   ^{ \ell }  \,  \mathrel{ \simeq }  \,   \lambda    \mathit{x}  \mathord{:}   \{  \mathit{x}  \mathord{:}  \ottnt{T}   \mathop{\mid}   \ottnt{e}  \}   .  \mathit{x}   \ottsym{:}   \{  \mathit{x}  \mathord{:}  \ottnt{T}   \mathop{\mid}   \ottnt{e}  \}   \rightarrow  \ottnt{T}$.

 \proof

 Straightforward by using the fact that
 reflexive cast $\langle  \ottnt{T}  \Rightarrow  \ottnt{T}  \rangle   ^{ \ell } $ is logically related to $  \lambda    \mathit{x}  \mathord{:}   \{  \mathit{x}  \mathord{:}  \ottnt{T}   \mathop{\mid}   \ottnt{e}  \}   .  \mathit{x} $
 (\prop:ref{fh-lr-elim-refl-cast}).
\end{prop}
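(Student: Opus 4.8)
The plan is to derive this lemma as an almost immediate corollary of the reflexive-cast lemma (\prop:ref{fh-lr-elim-refl-cast}) together with the forgetting reduction \R{Forget}; no induction on $T$ is needed. First I would discharge the routine side conditions of the logical relation: the left-hand side is well typed at $\{x\mathord{:}T \mid e\} \rightarrow T$ by \T{Cast}, using $\{x\mathord{:}T \mid e\} \mathrel{\parallel} T$ (by \C{RefineL} from reflexivity of $\mathrel{\parallel}$) and the well-formedness facts $\Gamma \vdash \{x\mathord{:}T \mid e\}$ (hypothesis) and $\Gamma \vdash T$ (inversion via \WF{Refine}); the free-variable condition on the identity function follows from $\Gamma \vdash \{x\mathord{:}T \mid e\}$. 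The substance is condition (3): for every $\theta,\delta$ with $\Gamma \vdash \theta; \delta$ I must show $\theta_1(\delta_1(\langle \{x\mathord{:}T \mid e\} \Rightarrow T \rangle^\ell)) \simeq_{\mathtt{v}} \theta_2(\delta_2(\lambda x\mathord{:}\{x\mathord{:}T \mid e\}.\,x)) : \{x\mathord{:}T \mid e\} \rightarrow T; \theta; \delta$. Unfolding the value relation at the function type, I fix arbitrary $v_1, v_2$ with $v_1 \simeq_{\mathtt{v}} v_2 : \{x\mathord{:}T \mid e\}; \theta; \delta$ and reduce the goal to $\theta_1(\delta_1(\langle \{x\mathord{:}T \mid e\} \Rightarrow T \rangle^\ell))\, v_1 \simeq_{\mathtt{e}} v_2 : T; \theta; \delta$, since the right-hand application $(\lambda x\mathord{:}\{x\mathord{:}T \mid e\}.\,x)\, v_2$ reduces to $v_2$ and the term relation observes only evaluation results.

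Next I would apply \R{Forget}, so that $\theta_1(\delta_1(\langle \{x\mathord{:}T \mid e\} \Rightarrow T \rangle^\ell))\, v_1 \longrightarrow \theta_1(\delta_1(\langle T \Rightarrow T \rangle^\ell))\, v_1$, reducing the goal to relating this reflexive-cast application to $v_2$ at $T$. From $v_1 \simeq_{\mathtt{v}} v_2 : \{x\mathord{:}T \mid e\}; \theta; \delta$ I extract $v_1 \simeq_{\mathtt{v}} v_2 : T; \theta; \delta$ directly from the definition of the value relation at refinement types. To invoke \prop:ref{fh-lr-elim-refl-cast} with $T_1 = T_2 = T$ I need the four premises, all of which collapse to $T \simeq T : \ast; \theta; \delta$; this holds because $\Gamma \vdash T$ (obtained above) gives $\Gamma \vdash T \simeq T : \ast$ by parametricity (\prop:ref{fh-lr-param}), and hence $T \simeq T : \ast; \theta; \delta$ for the chosen $\theta,\delta$. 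The lemma then yields $\theta_1(\delta_1(\langle T \Rightarrow T \rangle^\ell)) \simeq_{\mathtt{v}} \theta_2(\delta_2(\lambda x\mathord{:}T.\,x)) : T \rightarrow T; \theta; \delta$, and feeding in the related arguments $v_1 \simeq_{\mathtt{v}} v_2 : T$ gives $\theta_1(\delta_1(\langle T \Rightarrow T \rangle^\ell))\, v_1 \simeq_{\mathtt{e}} (\lambda x\mathord{:}T.\,x)\, v_2 : T; \theta; \delta$. Since $(\lambda x\mathord{:}T.\,x)\, v_2 \longrightarrow v_2$, this is exactly the remaining goal.

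I do not expect a genuine obstacle: the real work is already carried out in \prop:ref{fh-lr-elim-refl-cast}, and this statement is its refinement-layer shadow obtained through one \R{Forget} step. The only point needing a moment of care—and it is entirely bookkeeping—is the mismatch between the two identity functions: \prop:ref{fh-lr-elim-refl-cast} produces the identity $\lambda x\mathord{:}T.\,x$ at index type $T \rightarrow T$, whereas the statement uses $\lambda x\mathord{:}\{x\mathord{:}T \mid e\}.\,x$ at $\{x\mathord{:}T \mid e\} \rightarrow T$. This is harmless because both identities reduce identically and sit on the untyped ($\theta_2,\delta_2$) side of the relation, whose type annotation does not influence runtime behaviour; the term relation, observing only evaluation outcomes, cannot distinguish them, so the whole argument reduces to a single appeal to the reflexive-cast lemma.
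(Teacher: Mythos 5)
Your proposal is correct and follows essentially the same route as the paper, whose proof is exactly the one-line appeal to \prop:ref{fh-lr-elim-refl-cast} after the single \R{Forget} step; your elaboration of the side conditions and of the harmless annotation mismatch between $\lambda x\mathord{:}T.\,x$ and $\lambda x\mathord{:}\{x\mathord{:}T \mid e\}.\,x$ (which lives on the untyped side of the relation) is precisely the bookkeeping the paper elides.
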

\fi}

Now, we prove that casts to selfified types are redundant at run time.

\begin{prop}{fh-self-elim-type2self}
 If $\Gamma  \vdash  \ottnt{e_{{\mathrm{1}}}} \,  \mathrel{ \simeq }  \, \ottnt{e_{{\mathrm{2}}}}  \ottsym{:}  \ottnt{T}$,
 then $\Gamma  \vdash  \langle  \ottnt{T}  \Rightarrow   \mathit{self}  (  \ottnt{T} ,  \ottnt{e_{{\mathrm{1}}}}  )   \rangle   ^{ \ell }  \, \ottnt{e_{{\mathrm{1}}}} \,  \mathrel{ \simeq }  \, \ottnt{e_{{\mathrm{2}}}}  \ottsym{:}   \mathit{self}  (  \ottnt{T} ,  \ottnt{e_{{\mathrm{1}}}}  ) $.

 \proof

 By induction on $\ottnt{T}$.
 Let $\Gamma  \vdash  \theta  \ottsym{;}  \delta$.
 We show that
 \[
    \theta_{{\mathrm{1}}}  (   \delta_{{\mathrm{1}}}  (  \langle  \ottnt{T}  \Rightarrow   \mathit{self}  (  \ottnt{T} ,  \ottnt{e_{{\mathrm{1}}}}  )   \rangle   ^{ \ell }  \, \ottnt{e_{{\mathrm{1}}}}  )   )   \simeq_{\mathtt{e} }   \theta_{{\mathrm{2}}}  (   \delta_{{\mathrm{2}}}  (  \ottnt{e_{{\mathrm{2}}}}  )   )    \ottsym{:}    \mathit{self}  (  \ottnt{T} ,  \ottnt{e_{{\mathrm{1}}}}  )  ;  \theta ;  \delta .
 \]
 Since $\Gamma  \vdash  \ottnt{e_{{\mathrm{1}}}} \,  \mathrel{ \simeq }  \, \ottnt{e_{{\mathrm{2}}}}  \ottsym{:}  \ottnt{T}$,
 we have $  \theta_{{\mathrm{1}}}  (   \delta_{{\mathrm{1}}}  (  \ottnt{e_{{\mathrm{1}}}}  )   )   \simeq_{\mathtt{e} }   \theta_{{\mathrm{2}}}  (   \delta_{{\mathrm{2}}}  (  \ottnt{e_{{\mathrm{2}}}}  )   )    \ottsym{:}   \ottnt{T} ;  \theta ;  \delta $.
 If $ \theta_{{\mathrm{1}}}  (   \delta_{{\mathrm{1}}}  (  \ottnt{e_{{\mathrm{1}}}}  )   ) $ and $ \theta_{{\mathrm{2}}}  (   \delta_{{\mathrm{2}}}  (  \ottnt{e_{{\mathrm{2}}}}  )   ) $ raise blame,
 we finish.
 Otherwise,
 $ \theta_{{\mathrm{1}}}  (   \delta_{{\mathrm{1}}}  (  \ottnt{e_{{\mathrm{1}}}}  )   )   \longrightarrow^{\ast}  \ottnt{v_{{\mathrm{1}}}}$ and
 $ \theta_{{\mathrm{2}}}  (   \delta_{{\mathrm{2}}}  (  \ottnt{e_{{\mathrm{2}}}}  )   )   \longrightarrow^{\ast}  \ottnt{v_{{\mathrm{2}}}}$ for some $\ottnt{v_{{\mathrm{1}}}}$ and $\ottnt{v_{{\mathrm{2}}}}$, and
 it suffices to show that
 \[
    \theta_{{\mathrm{1}}}  (   \delta_{{\mathrm{1}}}  (  \langle  \ottnt{T}  \Rightarrow   \mathit{self}  (  \ottnt{T} ,  \ottnt{e_{{\mathrm{1}}}}  )   \rangle   ^{ \ell }   )   )  \, \ottnt{v_{{\mathrm{1}}}}  \simeq_{\mathtt{e} }  \ottnt{v_{{\mathrm{2}}}}   \ottsym{:}    \mathit{self}  (  \ottnt{T} ,  \ottnt{e_{{\mathrm{1}}}}  )  ;  \theta ;  \delta .
 \]
 We also have $ \ottnt{v_{{\mathrm{1}}}}  \simeq_{\mathtt{v} }  \ottnt{v_{{\mathrm{2}}}}   \ottsym{:}   \ottnt{T} ;  \theta ;  \delta $.
 By case analysis on $\ottnt{T}$.
 \begin{itemize}
  \case $\ottnt{T}  \ottsym{=}  \ottnt{B}$:
   We have $ \mathit{self}  (  \ottnt{T} ,  \ottnt{e_{{\mathrm{1}}}}  )   \ottsym{=}   \{  \mathit{x}  \mathord{:}  \ottnt{B}   \mathop{\mid}    \mathit{x}  \mathrel{=}_{ \ottnt{B} }  \ottnt{e_{{\mathrm{1}}}}   \} $.
   Since $ \theta_{{\mathrm{1}}}  (   \delta_{{\mathrm{1}}}  (  \ottnt{e_{{\mathrm{1}}}}  )   )   \longrightarrow^{\ast}  \ottnt{v_{{\mathrm{1}}}}$,
   we have
   $ \theta_{{\mathrm{1}}}  (   \delta_{{\mathrm{1}}}  (  \langle  \ottnt{T}  \Rightarrow   \mathit{self}  (  \ottnt{T} ,  \ottnt{e_{{\mathrm{1}}}}  )   \rangle   ^{ \ell }   )   )  \, \ottnt{v_{{\mathrm{1}}}}  \longrightarrow^{\ast}  \ottnt{v_{{\mathrm{1}}}}$.
   Thus, it suffices to show that
   \[
     \ottnt{v_{{\mathrm{1}}}}  \simeq_{\mathtt{v} }  \ottnt{v_{{\mathrm{2}}}}   \ottsym{:}    \{  \mathit{x}  \mathord{:}  \ottnt{B}   \mathop{\mid}    \mathit{x}  \mathrel{=}_{ \ottnt{B} }  \ottnt{e_{{\mathrm{1}}}}   \}  ;  \theta ;  \delta .
   \]
   Since $ \ottnt{v_{{\mathrm{1}}}}  \simeq_{\mathtt{v} }  \ottnt{v_{{\mathrm{2}}}}   \ottsym{:}   \ottnt{B} ;  \theta ;  \delta $ and
   $ \theta_{{\mathrm{1}}}  (   \delta_{{\mathrm{1}}}  (   \mathit{x}  \mathrel{=}_{ \ottnt{B} }  \ottnt{e_{{\mathrm{1}}}}   )  \, [  \ottnt{v_{{\mathrm{1}}}}  \ottsym{/}  \mathit{x}  ]  )   \longrightarrow^{\ast}   \mathsf{true} $,
   it suffices to show that
   \[
     \theta_{{\mathrm{2}}}  (   \delta_{{\mathrm{2}}}  (   \mathit{x}  \mathrel{=}_{ \ottnt{B} }  \ottnt{e_{{\mathrm{1}}}}   )  \, [  \ottnt{v_{{\mathrm{2}}}}  \ottsym{/}  \mathit{x}  ]  )   \longrightarrow^{\ast}   \mathsf{true} .
   \]
   Since $\Gamma  \vdash  \ottnt{e_{{\mathrm{1}}}}  \ottsym{:}  \ottnt{B}$,
   we have $\Gamma  \vdash  \ottnt{e_{{\mathrm{1}}}} \,  \mathrel{ \simeq }  \, \ottnt{e_{{\mathrm{1}}}}  \ottsym{:}  \ottnt{B}$ by the parametricity.
   Thus, by definition, $ \theta_{{\mathrm{2}}}  (   \delta_{{\mathrm{2}}}  (  \ottnt{e_{{\mathrm{1}}}}  )   )   \longrightarrow^{\ast}  \ottnt{v_{{\mathrm{1}}}}$.
   Since $\ottnt{v_{{\mathrm{1}}}}  \ottsym{=}  \ottnt{v_{{\mathrm{2}}}}$ from $ \ottnt{v_{{\mathrm{1}}}}  \simeq_{\mathtt{v} }  \ottnt{v_{{\mathrm{2}}}}   \ottsym{:}   \ottnt{B} ;  \theta ;  \delta $,
   we finish.

  \case $\ottnt{T}  \ottsym{=}  \alpha$: Obvious since $ \mathit{self}  (  \ottnt{T} ,  \ottnt{e_{{\mathrm{1}}}}  )   \ottsym{=}  \alpha$ and
   a reflexive cast is logically related to an identify function
   (\prop:ref{fh-lr-elim-refl-cast}).

  \case $\ottnt{T}  \ottsym{=}   \mathit{x} \mathord{:} \ottnt{T_{{\mathrm{1}}}} \rightarrow \ottnt{T_{{\mathrm{2}}}} $:
   Similar to the case of function types in \prop:ref{fh-self-elim-self2type}.

  \case $\ottnt{T}  \ottsym{=}   \forall   \alpha  .  \ottnt{T'} $: Straightforward by the IH.

  \case $\ottnt{T}  \ottsym{=}   \{  \mathit{x}  \mathord{:}  \ottnt{T'}   \mathop{\mid}   \ottnt{e'}  \} $:
   Without loss of generality, we can suppose that
   $\mathit{x} \, \notin \,  \mathit{dom}  (  \delta  ) $.
   We have $ \mathit{self}  (  \ottnt{T} ,  \ottnt{e_{{\mathrm{1}}}}  )   \ottsym{=}   \{  \mathit{x}  \mathord{:}  \ottnt{T''}   \mathop{\mid}    \mathsf{let}  ~  \mathit{x}  \mathord{:}  \ottnt{T'}  \equal  \langle  \ottnt{T''}  \Rightarrow  \ottnt{T'}  \rangle   ^{ \ell }  \, \mathit{x}  ~ \ottliteralin ~  \ottnt{e'}   \} $
   where $\ottnt{T''}  \ottsym{=}   \mathit{self}  (  \ottnt{T'} ,  \langle   \{  \mathit{x}  \mathord{:}  \ottnt{T'}   \mathop{\mid}   \ottnt{e'}  \}   \Rightarrow  \ottnt{T'}  \rangle   ^{ \ell }  \, \ottnt{e_{{\mathrm{1}}}}  ) $.
   By \E{Red}/\R{Forget},
   \[
     \theta_{{\mathrm{1}}}  (   \delta_{{\mathrm{1}}}  (  \langle  \ottnt{T}  \Rightarrow   \mathit{self}  (  \ottnt{T} ,  \ottnt{e_{{\mathrm{1}}}}  )   \rangle   ^{ \ell }   )   )  \, \ottnt{v_{{\mathrm{1}}}}  \longrightarrow   \theta_{{\mathrm{1}}}  (   \delta_{{\mathrm{1}}}  (  \langle  \ottnt{T'}  \Rightarrow   \mathit{self}  (  \ottnt{T} ,  \ottnt{e_{{\mathrm{1}}}}  )   \rangle   ^{ \ell }   )   )  \, \ottnt{v_{{\mathrm{1}}}}.
   \]
   Thus, it suffices to show that
   \[
      \theta_{{\mathrm{1}}}  (   \delta_{{\mathrm{1}}}  (  \langle  \ottnt{T'}  \Rightarrow   \mathit{self}  (  \ottnt{T} ,  \ottnt{e_{{\mathrm{1}}}}  )   \rangle   ^{ \ell }   )   )  \, \ottnt{v_{{\mathrm{1}}}}  \simeq_{\mathtt{e} }  \ottnt{v_{{\mathrm{2}}}}   \ottsym{:}    \mathit{self}  (  \ottnt{T} ,  \ottnt{e_{{\mathrm{1}}}}  )  ;  \theta ;  \delta .
   \]

   We first show
   \begin{equation}
      \theta_{{\mathrm{1}}}  (   \delta_{{\mathrm{1}}}  (  \langle  \ottnt{T'}  \Rightarrow  \ottnt{T''}  \rangle   ^{ \ell }  \, \ottsym{(}  \ottsym{(}    \lambda    \mathit{x}  \mathord{:}  \ottnt{T'}  .  \mathit{x}   \ottsym{)} \, \ottnt{v_{{\mathrm{1}}}}  \ottsym{)}  )   )   \simeq_{\mathtt{e} }  \ottnt{v_{{\mathrm{2}}}}   \ottsym{:}   \ottnt{T''} ;  \theta ;  \delta .
     \label{req:fh-self-elim-type2self-refine-one}
   \end{equation}
   by using the equivalence-respecting property (\prop:ref{fh-lr-comp-equiv-res}).
   {\iffull
   By \prop:ref{fh-self-elim-forget},
   we have $\Gamma  \vdash  \langle   \{  \mathit{x}  \mathord{:}  \ottnt{T'}   \mathop{\mid}   \ottnt{e'}  \}   \Rightarrow  \ottnt{T'}  \rangle   ^{ \ell }  \,  \mathrel{ \simeq }  \,   \lambda    \mathit{x}  \mathord{:}   \{  \mathit{x}  \mathord{:}  \ottnt{T'}   \mathop{\mid}   \ottnt{e'}  \}   .  \mathit{x}   \ottsym{:}   \{  \mathit{x}  \mathord{:}  \ottnt{T'}   \mathop{\mid}   \ottnt{e'}  \}   \rightarrow  \ottnt{T'}$.
   \else
   We can show $\Gamma  \vdash  \langle   \{  \mathit{x}  \mathord{:}  \ottnt{T'}   \mathop{\mid}   \ottnt{e'}  \}   \Rightarrow  \ottnt{T'}  \rangle   ^{ \ell }  \,  \mathrel{ \simeq }  \,   \lambda    \mathit{x}  \mathord{:}   \{  \mathit{x}  \mathord{:}  \ottnt{T'}   \mathop{\mid}   \ottnt{e'}  \}   .  \mathit{x}   \ottsym{:}   \{  \mathit{x}  \mathord{:}  \ottnt{T'}   \mathop{\mid}   \ottnt{e'}  \}   \rightarrow  \ottnt{T'}$
   easily from the fact that a reflexive cast and an identity function are
   logically related (\prop:ref{fh-lr-elim-refl-cast}).
   \fi}
   Since the logical relation is compatible, we have
   $\Gamma  \vdash  \langle   \{  \mathit{x}  \mathord{:}  \ottnt{T'}   \mathop{\mid}   \ottnt{e'}  \}   \Rightarrow  \ottnt{T'}  \rangle   ^{ \ell }  \, \ottnt{e_{{\mathrm{1}}}} \,  \mathrel{ \simeq }  \, \ottsym{(}    \lambda    \mathit{x}  \mathord{:}   \{  \mathit{x}  \mathord{:}  \ottnt{T'}   \mathop{\mid}   \ottnt{e'}  \}   .  \mathit{x}   \ottsym{)} \, \ottnt{e_{{\mathrm{2}}}}  \ottsym{:}  \ottnt{T'}$.
   Thus, by the IH,
   \[
    \Gamma  \vdash  \langle  \ottnt{T'}  \Rightarrow  \ottnt{T''}  \rangle   ^{ \ell }  \, \ottsym{(}  \langle   \{  \mathit{x}  \mathord{:}  \ottnt{T'}   \mathop{\mid}   \ottnt{e'}  \}   \Rightarrow  \ottnt{T'}  \rangle   ^{ \ell }  \, \ottnt{e_{{\mathrm{1}}}}  \ottsym{)} \,  \mathrel{ \simeq }  \, \ottsym{(}    \lambda    \mathit{x}  \mathord{:}   \{  \mathit{x}  \mathord{:}  \ottnt{T'}   \mathop{\mid}   \ottnt{e'}  \}   .  \mathit{x}   \ottsym{)} \, \ottnt{e_{{\mathrm{2}}}}  \ottsym{:}  \ottnt{T''}.
   \]
   Since
   $ \theta_{{\mathrm{1}}}  (   \delta_{{\mathrm{1}}}  (  \ottnt{e_{{\mathrm{1}}}}  )   )   \longrightarrow^{\ast}  \ottnt{v_{{\mathrm{1}}}}$ and
   $ \theta_{{\mathrm{2}}}  (   \delta_{{\mathrm{2}}}  (  \ottnt{e_{{\mathrm{2}}}}  )   )   \longrightarrow^{\ast}  \ottnt{v_{{\mathrm{2}}}}$ and
   $ \theta_{{\mathrm{1}}}  (   \delta_{{\mathrm{1}}}  (  \langle   \{  \mathit{x}  \mathord{:}  \ottnt{T'}   \mathop{\mid}   \ottnt{e'}  \}   \Rightarrow  \ottnt{T'}  \rangle   ^{ \ell }   )   )  \, \ottnt{v_{{\mathrm{1}}}}  \longrightarrow   \theta_{{\mathrm{1}}}  (   \delta_{{\mathrm{1}}}  (  \langle  \ottnt{T'}  \Rightarrow  \ottnt{T'}  \rangle   ^{ \ell }   )   )  \, \ottnt{v_{{\mathrm{1}}}}$
   by \E{Red}/\R{Forget},
   we have
   \begin{equation}
      \theta_{{\mathrm{1}}}  (   \delta_{{\mathrm{1}}}  (  \langle  \ottnt{T'}  \Rightarrow  \ottnt{T''}  \rangle   ^{ \ell }  \, \ottsym{(}  \langle  \ottnt{T'}  \Rightarrow  \ottnt{T'}  \rangle   ^{ \ell }  \, \ottnt{v_{{\mathrm{1}}}}  \ottsym{)}  )   )   \simeq_{\mathtt{e} }  \ottnt{v_{{\mathrm{2}}}}   \ottsym{:}   \ottnt{T''} ;  \theta ;  \delta .
     \label{eqn:fh-self-elim-type2self-refine-two}
   \end{equation}
   Since an identity function is logically related to a reflexive cast
   (\prop:ref{fh-lr-elim-refl-cast-right}),
   we have
   \[
    \Gamma  \vdash  \ottsym{(}    \lambda    \mathit{x}  \mathord{:}  \ottnt{T'}  .  \mathit{x}   \ottsym{)} \,  \mathrel{ \simeq }  \, \langle  \ottnt{T'}  \Rightarrow  \ottnt{T'}  \rangle   ^{ \ell }   \ottsym{:}  \ottnt{T'}  \rightarrow  \ottnt{T'}.
   \]
   Thus, by the fundamental property,
   \[
     \Gamma  ,  \mathit{x}  \mathord{:}  \ottnt{T'}   \vdash  \langle  \ottnt{T'}  \Rightarrow  \ottnt{T''}  \rangle   ^{ \ell }  \, \ottsym{(}  \ottsym{(}    \lambda    \mathit{x}  \mathord{:}  \ottnt{T'}  .  \mathit{x}   \ottsym{)} \, \mathit{x}  \ottsym{)} \,  \mathrel{ \simeq }  \, \langle  \ottnt{T'}  \Rightarrow  \ottnt{T''}  \rangle   ^{ \ell }  \, \ottsym{(}  \langle  \ottnt{T'}  \Rightarrow  \ottnt{T'}  \rangle   ^{ \ell }  \, \mathit{x}  \ottsym{)}  \ottsym{:}  \ottnt{T''}.
   \]
   Since $ \ottnt{v_{{\mathrm{1}}}}  \simeq_{\mathtt{v} }  \ottnt{v_{{\mathrm{2}}}}   \ottsym{:}    \{  \mathit{x}  \mathord{:}  \ottnt{T'}   \mathop{\mid}   \ottnt{e'}  \}  ;  \theta ;  \delta $,
   we have $ \ottnt{v_{{\mathrm{1}}}}  \simeq_{\mathtt{v} }  \ottnt{v_{{\mathrm{2}}}}   \ottsym{:}   \ottnt{T'} ;  \theta ;  \delta $, and so
   $ \Gamma  ,  \mathit{x}  \mathord{:}  \ottnt{T'}   \vdash  \theta  \ottsym{;}   \delta    [  \,  (  \ottnt{v_{{\mathrm{1}}}}  ,  \ottnt{v_{{\mathrm{2}}}}  ) /  \mathit{x}  \,  ]  $.
   Thus, from the fact that logically related terms are CIU-equivalent
   (\refthm{fh-lr-sound} and \prop:ref{fh-lr-ciu-complete}) and
   the definition of CIU-equivalence,
   \begin{equation}
    \begin{array}{l}
     \emptyset  \vdash   \theta_{{\mathrm{1}}}  (   \delta_{{\mathrm{1}}}  (  \langle  \ottnt{T'}  \Rightarrow  \ottnt{T''}  \rangle   ^{ \ell }  \, \ottsym{(}  \ottsym{(}    \lambda    \mathit{x}  \mathord{:}  \ottnt{T'}  .  \mathit{x}   \ottsym{)} \, \ottnt{v_{{\mathrm{1}}}}  \ottsym{)}  )   )  \, =_\mathsf{ciu} \,  \\  \,  \qquad  \,  \theta_{{\mathrm{1}}}  (   \delta_{{\mathrm{1}}}  (  \langle  \ottnt{T'}  \Rightarrow  \ottnt{T''}  \rangle   ^{ \ell }  \, \ottsym{(}  \langle  \ottnt{T'}  \Rightarrow  \ottnt{T'}  \rangle   ^{ \ell }  \, \ottnt{v_{{\mathrm{1}}}}  \ottsym{)}  )   )   \ottsym{:}   \theta_{{\mathrm{1}}}  (   \delta_{{\mathrm{1}}}  (  \ottnt{T''}  )   ) .
    \end{array}
     \label{eqn:fh-self-elim-type2self-refine-three}
   \end{equation}
   From (\ref{eqn:fh-self-elim-type2self-refine-two})
   and (\ref{eqn:fh-self-elim-type2self-refine-three}),
   the equivalence-respecting property derives (\ref{req:fh-self-elim-type2self-refine-one}).

   From (\ref{req:fh-self-elim-type2self-refine-one}),
   there exists some $\ottnt{v'_{{\mathrm{1}}}}$ such that
   $ \theta_{{\mathrm{1}}}  (   \delta_{{\mathrm{1}}}  (  \langle  \ottnt{T'}  \Rightarrow  \ottnt{T''}  \rangle   ^{ \ell }  \, \ottnt{v_{{\mathrm{1}}}}  )   )   \longrightarrow^{\ast}  \ottnt{v'_{{\mathrm{1}}}}$ and
   $ \ottnt{v'_{{\mathrm{1}}}}  \simeq_{\mathtt{v} }  \ottnt{v_{{\mathrm{2}}}}   \ottsym{:}   \ottnt{T''} ;  \theta ;  \delta $.
   Thus,
   \[\begin{array}{l}
     \theta_{{\mathrm{1}}}  (   \delta_{{\mathrm{1}}}  (  \langle  \ottnt{T'}  \Rightarrow   \mathit{self}  (  \ottnt{T} ,  \ottnt{e_{{\mathrm{1}}}}  )   \rangle   ^{ \ell }   )   )  \, \ottnt{v_{{\mathrm{1}}}}  \longrightarrow^{\ast}  \\
      \quad  \theta_{{\mathrm{1}}}  (   \delta_{{\mathrm{1}}}  (  \langle   \mathit{self}  (  \ottnt{T} ,  \ottnt{e_{{\mathrm{1}}}}  )   \ottsym{,}   \mathsf{let}  ~  \mathit{x}  \mathord{:}  \ottnt{T'}  \equal  \langle  \ottnt{T''}  \Rightarrow  \ottnt{T'}  \rangle   ^{ \ell }  \, \ottnt{v'_{{\mathrm{1}}}}  ~ \ottliteralin ~  \ottnt{e'}   \ottsym{,}  \ottnt{v'_{{\mathrm{1}}}}  \rangle   ^{ \ell }   )   ) 
     \end{array}\]
   by \R{Forget} and \R{PreCheck}, and so
   it suffices to show that
   \[
      \theta_{{\mathrm{1}}}  (   \delta_{{\mathrm{1}}}  (  \langle   \mathit{self}  (  \ottnt{T} ,  \ottnt{e_{{\mathrm{1}}}}  )   \ottsym{,}   \mathsf{let}  ~  \mathit{x}  \mathord{:}  \ottnt{T'}  \equal  \langle  \ottnt{T''}  \Rightarrow  \ottnt{T'}  \rangle   ^{ \ell }  \, \ottnt{v'_{{\mathrm{1}}}}  ~ \ottliteralin ~  \ottnt{e'}   \ottsym{,}  \ottnt{v'_{{\mathrm{1}}}}  \rangle   ^{ \ell }   )   )   \simeq_{\mathtt{e} }  \ottnt{v_{{\mathrm{2}}}}   \ottsym{:}    \mathit{self}  (  \ottnt{T} ,  \ottnt{e_{{\mathrm{1}}}}  )  ;  \theta ;  \delta .
   \]
   Since $\Gamma  \vdash  \langle   \{  \mathit{x}  \mathord{:}  \ottnt{T'}   \mathop{\mid}   \ottnt{e'}  \}   \Rightarrow  \ottnt{T'}  \rangle   ^{ \ell }  \, \ottnt{e_{{\mathrm{1}}}}  \ottsym{:}  \ottnt{T'}$,
   we have $\Gamma  \vdash  \langle  \ottnt{T''}  \Rightarrow  \ottnt{T'}  \rangle   ^{ \ell }  \,  \mathrel{ \simeq }  \,   \lambda    \mathit{x}  \mathord{:}  \ottnt{T'}  .  \mathit{x}   \ottsym{:}  \ottnt{T''}  \rightarrow  \ottnt{T'}$
   by \prop:ref{fh-self-elim-self2type}.
   Since $ \ottnt{v'_{{\mathrm{1}}}}  \simeq_{\mathtt{v} }  \ottnt{v_{{\mathrm{2}}}}   \ottsym{:}   \ottnt{T''} ;  \theta ;  \delta $,
   we have $  \theta_{{\mathrm{1}}}  (   \delta_{{\mathrm{1}}}  (  \langle  \ottnt{T''}  \Rightarrow  \ottnt{T'}  \rangle   ^{ \ell }   )   )  \, \ottnt{v'_{{\mathrm{1}}}}  \simeq_{\mathtt{e} }  \ottnt{v_{{\mathrm{2}}}}   \ottsym{:}   \ottnt{T'} ;  \theta ;  \delta $.
   By definition, $ \theta_{{\mathrm{1}}}  (   \delta_{{\mathrm{1}}}  (  \langle  \ottnt{T''}  \Rightarrow  \ottnt{T'}  \rangle   ^{ \ell }   )   )  \, \ottnt{v'_{{\mathrm{1}}}}  \longrightarrow^{\ast}  \ottnt{v''_{{\mathrm{1}}}}$
   for some $\ottnt{v''_{{\mathrm{1}}}}$, and it suffices to show that
   \[
      \theta_{{\mathrm{1}}}  (   \delta_{{\mathrm{1}}}  (  \langle   \mathit{self}  (  \ottnt{T} ,  \ottnt{e_{{\mathrm{1}}}}  )   \ottsym{,}  \ottnt{e'} \, [  \ottnt{v''_{{\mathrm{1}}}}  \ottsym{/}  \mathit{x}  ]  \ottsym{,}  \ottnt{v'_{{\mathrm{1}}}}  \rangle   ^{ \ell }   )   )   \simeq_{\mathtt{e} }  \ottnt{v_{{\mathrm{2}}}}   \ottsym{:}    \mathit{self}  (  \ottnt{T} ,  \ottnt{e_{{\mathrm{1}}}}  )  ;  \theta ;  \delta .
   \]
   We also have $ \ottnt{v''_{{\mathrm{1}}}}  \simeq_{\mathtt{v} }  \ottnt{v_{{\mathrm{2}}}}   \ottsym{:}   \ottnt{T'} ;  \theta ;  \delta $.

   Since $\Gamma  \vdash   \{  \mathit{x}  \mathord{:}  \ottnt{T'}   \mathop{\mid}   \ottnt{e'}  \} $,
   we have $ \Gamma  ,  \mathit{x}  \mathord{:}  \ottnt{T'}   \vdash  \ottnt{e'} \,  \mathrel{ \simeq }  \, \ottnt{e'}  \ottsym{:}   \mathsf{Bool} $
   by the parametricity (\prop:ref{fh-lr-param}).
   Since $ \Gamma  ,  \mathit{x}  \mathord{:}  \ottnt{T'}   \vdash  \theta  \ottsym{;}   \delta    [  \,  (  \ottnt{v''_{{\mathrm{1}}}}  ,  \ottnt{v_{{\mathrm{2}}}}  ) /  \mathit{x}  \,  ]  $,
   we have $  \theta_{{\mathrm{1}}}  (   \delta_{{\mathrm{1}}}  (  \ottnt{e'} \, [  \ottnt{v''_{{\mathrm{1}}}}  \ottsym{/}  \mathit{x}  ]  )   )   \simeq_{\mathtt{e} }   \theta_{{\mathrm{2}}}  (   \delta_{{\mathrm{2}}}  (  \ottnt{e'} \, [  \ottnt{v_{{\mathrm{2}}}}  \ottsym{/}  \mathit{x}  ]  )   )    \ottsym{:}    \mathsf{Bool}  ;  \theta ;   \delta    [  \,  (  \ottnt{v''_{{\mathrm{1}}}}  ,  \ottnt{v_{{\mathrm{2}}}}  ) /  \mathit{x}  \,  ]   $.
   Since $ \ottnt{v_{{\mathrm{1}}}}  \simeq_{\mathtt{v} }  \ottnt{v_{{\mathrm{2}}}}   \ottsym{:}    \{  \mathit{x}  \mathord{:}  \ottnt{T'}   \mathop{\mid}   \ottnt{e'}  \}  ;  \theta ;  \delta $,
   we have $ \theta_{{\mathrm{2}}}  (   \delta_{{\mathrm{2}}}  (  \ottnt{e'} \, [  \ottnt{v_{{\mathrm{2}}}}  \ottsym{/}  \mathit{x}  ]  )   )   \longrightarrow^{\ast}   \mathsf{true} $, so
   $ \theta_{{\mathrm{1}}}  (   \delta_{{\mathrm{1}}}  (  \ottnt{e'} \, [  \ottnt{v''_{{\mathrm{1}}}}  \ottsym{/}  \mathit{x}  ]  )   )   \longrightarrow^{\ast}   \mathsf{true} $.
   Thus, it suffices to show that
   \[
     \ottnt{v'_{{\mathrm{1}}}}  \simeq_{\mathtt{v} }  \ottnt{v_{{\mathrm{2}}}}   \ottsym{:}    \mathit{self}  (  \ottnt{T} ,  \ottnt{e_{{\mathrm{1}}}}  )  ;  \theta ;  \delta .
   \]
   We have it by the discussion above. \qedhere
 \end{itemize}
\end{prop}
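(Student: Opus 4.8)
The plan is to prove the logical-relation statement directly and then rely on its already-established consumers (soundness, \refthm{fh-lr-sound}) only where the proof needs CIU-equivalence. First I would unfold the definition of the open relation $\simeq$: fix arbitrary $\theta,\delta$ with $\Gamma \vdash \theta; \delta$ and reduce the goal to the closed term relation $\theta_1(\delta_1(\langle T \Rightarrow \mathit{self}(T,e_1)\rangle^{\ell}\, e_1)) \simeq_{\mathtt{e}} \theta_2(\delta_2(e_2)) : \mathit{self}(T,e_1); \theta; \delta$. From the hypothesis $\Gamma \vdash e_1 \simeq e_2 : T$ I get $\theta_1(\delta_1(e_1)) \simeq_{\mathtt{e}} \theta_2(\delta_2(e_2)) : T; \theta; \delta$; if both sides raise blame the conclusion is immediate, and otherwise they converge to values $v_1,v_2$ with $v_1 \simeq_{\mathtt{v}} v_2 : T; \theta; \delta$. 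The remaining work is to show $\theta_1(\delta_1(\langle T \Rightarrow \mathit{self}(T,e_1)\rangle^{\ell}))\, v_1 \simeq_{\mathtt{e}} v_2 : \mathit{self}(T,e_1); \theta; \delta$, which I would establish by induction on $T$.

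The base, type-variable, and universal cases are light. For $T=B$ the cast reduces to $v_1$ by \R{Base}, and I must place $v_1$ in the value relation at $\{x:B \mid x =_B e_1\}$: since $v_1=v_2 \in \mathcal{K}_B$, and since parametricity (\prop:ref{fh-lr-param}) gives $\Gamma \vdash e_1 \simeq e_1 : B$ so that $\theta_2(\delta_2(e_1))$ converges to the same constant, the refinement is satisfied on both sides. For $T=\alpha$ we have $\mathit{self}(\alpha,e_1)=\alpha$, and the reflexive cast is logically related to the identity (\prop:ref{fh-lr-elim-refl-cast}). The universal case follows from the induction hypothesis after instantiating the quantifier. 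The function case $T = x{:}T_1 \to T_2$ mirrors the function subcase of the companion lemma \prop:ref{fh-self-elim-self2type}: after the \R{Fun} step the contravariant cast is reflexive (hence identity-like by \prop:ref{fh-lr-elim-refl-cast}) and can be erased using that logically related terms are CIU-equivalent together with the equivalence-respecting property (\prop:ref{fh-lr-comp-equiv-res}), while the covariant obligation is handled by the induction hypothesis on $T_2$ piped through term compositionality (\prop:ref{fh-lr-term-comp}).

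The refinement case $T = \{x:T' \mid e'\}$ is where I expect the difficulty. Here $\mathit{self}(T,e_1) = \{x:T'' \mid \mathsf{let}\; x{:}T' = \langle T'' \Rightarrow T'\rangle^{\ell}\, x \;\mathsf{in}\; e'\}$ with $T'' = \mathit{self}(T', \langle \{x:T'\mid e'\} \Rightarrow T'\rangle^{\ell}\, e_1)$, and the applied cast first forgets the source refinement by \R{Forget}, leaving $\langle T' \Rightarrow \mathit{self}(T,e_1)\rangle^{\ell}$. My plan is, first, to show $\langle T' \Rightarrow T''\rangle^{\ell}\, v_1 \simeq_{\mathtt{e}} v_2 : T''$: I would rewrite $v_1$ as $(\lambda x{:}T'.x)\, v_1$, relate the inner forgetful cast $\langle \{x:T'\mid e'\} \Rightarrow T'\rangle^{\ell}$ to an identity (derivable from \prop:ref{fh-lr-elim-refl-cast}), apply the induction hypothesis on $T'$ to the resulting related pair (whose left component is exactly the $\mathit{self}(T',\cdot)$ seed), and convert reductions into CIU-equivalences via soundness and \prop:ref{fh-lr-ciu-complete} so that \prop:ref{fh-lr-comp-equiv-res} stitches the pieces together. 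Second, I would drive the \R{Forget}/\R{PreCheck}/\R{Check}/\R{OK} steps on the target refinement: the predicate's cast $\langle T'' \Rightarrow T'\rangle^{\ell}$ is collapsed to an identity by \prop:ref{fh-self-elim-self2type}, parametricity (\prop:ref{fh-lr-param}) yields $\Gamma, x{:}T' \vdash e' \simeq e' : \mathsf{Bool}$, and cotermination (\prop:ref{fh-coterm-true}) transports the evaluation of $e'$ across the logically related values, giving satisfaction of $e'$ on the $e_2$ side.

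The main obstacle is precisely this refinement case: the self type $T''$ is syntactically distinct from the underlying type $T'$, so every passage through the underlying-type value relation must be mediated by the casts $\langle T'' \Rightarrow T'\rangle^{\ell}$ and $\langle T' \Rightarrow T''\rangle^{\ell}$, each of which I must certify as identity-like using the reflexive-cast lemmas and the companion selfification lemma, and every such rewriting must be justified through the equivalence-respecting property rather than by plain reduction. Orchestrating these rewrites in the right order, and meeting the self-relatedness side conditions demanded by term compositionality (which I discharge via parametricity), is the delicate part; by contrast the structural cases and the $\mathsf{Bool}$-level satisfaction arguments are comparatively mechanical.
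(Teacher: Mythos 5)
Your proposal follows essentially the same route as the paper's proof: induction on $\ottnt{T}$ after unfolding to the closed term relation, the same case analysis, and in the refinement case the same two-stage argument (first relating $\langle  \ottnt{T'}  \Rightarrow  \ottnt{T''}  \rangle   ^{ \ell }  \, \ottnt{v_{{\mathrm{1}}}}$ to $\ottnt{v_{{\mathrm{2}}}}$ at $\ottnt{T''}$ via the reflexive-cast lemmas, CIU-equivalence, and the equivalence-respecting property, then driving the check reductions and discharging the refinement with parametricity and \prop:ref{fh-self-elim-self2type}). The only quibbles are cosmetic: in the base case the cast proceeds by \R{PreCheck}/\R{Check}/\R{OK} rather than a single \R{Base} step, and the $\ottnt{e_{{\mathrm{2}}}}$-side satisfaction of $\ottnt{e'}$ is the given (from $ \ottnt{v_{{\mathrm{1}}}}  \simeq_{\mathtt{v} }  \ottnt{v_{{\mathrm{2}}}}   \ottsym{:}    \{  \mathit{x}  \mathord{:}  \ottnt{T'}   \mathop{\mid}   \ottnt{e'}  \}  ;  \theta ;  \delta $) used to derive the $\ottnt{v''_{{\mathrm{1}}}}$-side, not the other way around.
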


\begin{prop}[type=cor,name=Selfification Cast Elimination]{fh-self-elim}
 If $\Gamma  \vdash  \ottnt{e}  \ottsym{:}  \ottnt{T}$,
 then $\Gamma  \vdash  \langle  \ottnt{T}  \Rightarrow   \mathit{self}  (  \ottnt{T} ,  \ottnt{e}  )   \rangle   ^{ \ell }  \, \ottnt{e} \,  \\  \, =_\mathsf{ctx} \, \ottnt{e}  \ottsym{:}   \mathit{self}  (  \ottnt{T} ,  \ottnt{e}  ) $.

 \proof

 By the parametricity (\prop:ref{fh-lr-param}) and
 \prop:ref{fh-self-elim-type2self}.
\end{prop}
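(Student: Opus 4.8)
The plan is to derive this corollary directly from three already-established results, with essentially no new work: the genuine difficulty was already discharged in the preceding lemma \prop:ref{fh-self-elim-type2self}. First I would invoke parametricity (\prop:ref{fh-lr-param}, the term statement): since we are given $\Gamma \vdash \ottnt{e} \ottsym{:} \ottnt{T}$, reflexivity of the logical relation yields $\Gamma \vdash \ottnt{e} \,\mathrel{ \simeq }\, \ottnt{e} \ottsym{:} \ottnt{T}$. This is the key bridge that lets a reasoning principle about \emph{two} logically related terms collapse onto the single term $\ottnt{e}$.

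Next I would apply \prop:ref{fh-self-elim-type2self} with both arguments instantiated to $\ottnt{e}$. That lemma states that $\Gamma \vdash \ottnt{e_{{\mathrm{1}}}} \,\mathrel{ \simeq }\, \ottnt{e_{{\mathrm{2}}}} \ottsym{:} \ottnt{T}$ implies $\Gamma \vdash \langle \ottnt{T} \Rightarrow \mathit{self}(\ottnt{T}, \ottnt{e_{{\mathrm{1}}}}) \rangle^{\ell} \, \ottnt{e_{{\mathrm{1}}}} \,\mathrel{ \simeq }\, \ottnt{e_{{\mathrm{2}}}} \ottsym{:} \mathit{self}(\ottnt{T}, \ottnt{e_{{\mathrm{1}}}})$; taking $\ottnt{e_{{\mathrm{1}}}} = \ottnt{e_{{\mathrm{2}}}} = \ottnt{e}$ and feeding in the reflexivity fact from the previous step gives $\Gamma \vdash \langle \ottnt{T} \Rightarrow \mathit{self}(\ottnt{T}, \ottnt{e}) \rangle^{\ell} \, \ottnt{e} \,\mathrel{ \simeq }\, \ottnt{e} \ottsym{:} \mathit{self}(\ottnt{T}, \ottnt{e})$. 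Finally, I would transport this logical-relatedness to contextual equivalence using soundness of the logical relation (\refthm{fh-lr-sound}, in its single-term instance), which immediately yields the desired $\Gamma \vdash \langle \ottnt{T} \Rightarrow \mathit{self}(\ottnt{T}, \ottnt{e}) \rangle^{\ell} \, \ottnt{e} \, =_\mathsf{ctx} \, \ottnt{e} \ottsym{:} \mathit{self}(\ottnt{T}, \ottnt{e})$.

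There is no real obstacle at the level of this corollary: the three invocations chain together with no side conditions to discharge beyond the hypothesis $\Gamma \vdash \ottnt{e} \ottsym{:} \ottnt{T}$ already supplied. The only point that warrants a moment of care is making sure that the index type reported by \prop:ref{fh-self-elim-type2self}, namely $\mathit{self}(\ottnt{T}, \ottnt{e})$, matches the type at which we want contextual equivalence — which it does verbatim — so that soundness can be applied without any intervening retyping. All the substantive content (the inductive analysis of $\ottnt{T}$, the appeals to the equivalence-respecting property, cotermination, and the fundamental property in the refinement and function cases) lives in \prop:ref{fh-self-elim-type2self} and \prop:ref{fh-self-elim-self2type}; the corollary is merely the assembly step that packages the ``type-to-self'' direction as a contextual-equivalence statement.
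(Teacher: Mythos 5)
Your proposal is correct and follows exactly the paper's route: parametricity gives $\Gamma  \vdash  \ottnt{e} \,  \mathrel{ \simeq }  \, \ottnt{e}  \ottsym{:}  \ottnt{T}$, \prop:ref{fh-self-elim-type2self} then yields the logical relatedness of the cast application and $\ottnt{e}$ at $ \mathit{self}  (  \ottnt{T} ,  \ottnt{e}  ) $, and soundness (\refthm{fh-lr-sound}) transports this to contextual equivalence. The only difference is that you make the final appeal to soundness explicit, which the paper leaves implicit.
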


\subsection{Cast Decomposition}
\label{sec:reasoning-cast-decomp}
The upcast elimination enables us to eliminate redundant casts, but there are cases
that nonredundant casts produce redundant ones.
For example, let us consider
$\langle   \{  \mathit{x}  \mathord{:}   \mathsf{Int}    \mathop{\mid}   \mathit{x}  \mathrel{\neq}  \ottsym{0}  \}   \rightarrow   \{  \mathit{x}  \mathord{:}   \mathsf{Int}    \mathop{\mid}    \mathsf{prime?}  \, \mathit{x}  \}   \Rightarrow   \{  \mathit{x}  \mathord{:}   \mathsf{Int}    \mathop{\mid}    \mathit{x}  \mathrel{\ge} \ottsym{0}   \}   \rightarrow   \{  \mathit{x}  \mathord{:}   \mathsf{Int}    \mathop{\mid}    \mathit{x}  \mathrel{>} \ottsym{0}   \}   \rangle   ^{ \ell } $,
which is not an upcast because the argument check may fail.
This will be decomposed into two casts at run time:
one for the domain type---$\langle   \{  \mathit{x}  \mathord{:}   \mathsf{Int}    \mathop{\mid}    \mathit{x}  \mathrel{\ge} \ottsym{0}   \}   \Rightarrow   \{  \mathit{x}  \mathord{:}   \mathsf{Int}    \mathop{\mid}   \mathit{x}  \mathrel{\neq}  \ottsym{0}  \}   \rangle   ^{ \ell } $---and
one for the codomain type---$\langle   \{  \mathit{x}  \mathord{:}   \mathsf{Int}    \mathop{\mid}    \mathsf{prime?}  \, \mathit{x}  \}   \Rightarrow   \{  \mathit{x}  \mathord{:}   \mathsf{Int}    \mathop{\mid}    \mathit{x}  \mathrel{>} \ottsym{0}   \}   \rangle   ^{ \ell } $.
As mentioned above, the cast for the domain type cannot be eliminated because it
would fail if applied to zero.
By contrast, the cast for the codomain type is an upcast and so can be
eliminated without changing the behavior of a program.

Static decomposition of casts makes it possible to eliminate as many redundant
casts as possible.
For example, it allows us to statically decompose casts for function types into
ones for domain types and codomain types and eliminate them if they are
upcasts.
In what follows, we show how casts can be decomposed.
\begin{prop}{fh-cc-fun}
 If $\Gamma  \vdash  \langle   \mathit{x} \mathord{:} \ottnt{T_{{\mathrm{11}}}} \rightarrow \ottnt{T_{{\mathrm{12}}}}   \Rightarrow   \mathit{x} \mathord{:} \ottnt{T_{{\mathrm{21}}}} \rightarrow \ottnt{T_{{\mathrm{22}}}}   \rangle   ^{ \ell }   \ottsym{:}  \ottsym{(}   \mathit{x} \mathord{:} \ottnt{T_{{\mathrm{11}}}} \rightarrow \ottnt{T_{{\mathrm{12}}}}   \ottsym{)}  \rightarrow  \ottsym{(}   \mathit{x} \mathord{:} \ottnt{T_{{\mathrm{21}}}} \rightarrow \ottnt{T_{{\mathrm{22}}}}   \ottsym{)}$,
 then $\Gamma  \vdash  \langle   \mathit{x} \mathord{:} \ottnt{T_{{\mathrm{11}}}} \rightarrow \ottnt{T_{{\mathrm{12}}}}   \Rightarrow   \mathit{x} \mathord{:} \ottnt{T_{{\mathrm{21}}}} \rightarrow \ottnt{T_{{\mathrm{22}}}}   \rangle   ^{ \ell }  \,  \mathrel{ \simeq }  \,   \lambda    \mathit{z}  \mathord{:}  \ottsym{(}   \mathit{x} \mathord{:} \ottnt{T_{{\mathrm{11}}}} \rightarrow \ottnt{T_{{\mathrm{12}}}}   \ottsym{)}  .    \lambda    \mathit{x}  \mathord{:}  \ottnt{T_{{\mathrm{21}}}}  .   \mathsf{let}  ~  \mathit{y}  \mathord{:}  \ottnt{T_{{\mathrm{11}}}}  \equal  \langle  \ottnt{T_{{\mathrm{21}}}}  \Rightarrow  \ottnt{T_{{\mathrm{11}}}}  \rangle   ^{ \ell }  \, \mathit{x}  ~ \ottliteralin ~  \langle  \ottnt{T_{{\mathrm{12}}}} \, [  \mathit{y}  \ottsym{/}  \mathit{x}  ]  \Rightarrow  \ottnt{T_{{\mathrm{22}}}}  \rangle   ^{ \ell }     \, \ottsym{(}  \mathit{z} \, \mathit{y}  \ottsym{)}  \ottsym{:}  \ottsym{(}   \mathit{x} \mathord{:} \ottnt{T_{{\mathrm{11}}}} \rightarrow \ottnt{T_{{\mathrm{12}}}}   \ottsym{)}  \rightarrow  \ottsym{(}   \mathit{x} \mathord{:} \ottnt{T_{{\mathrm{21}}}} \rightarrow \ottnt{T_{{\mathrm{22}}}}   \ottsym{)}$.

 \proof

 By following \E{Red}/\R{Fun} and the parametricity.
\end{prop}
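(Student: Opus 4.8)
The plan is to prove the two sides logically related directly from the definition, exploiting that the right-hand side is exactly the function wrapper that \R{Fun} produces. Both sides are values --- a cast and a $\lambda$-abstraction --- so after fixing $\theta$ and $\delta$ with $\Gamma \vdash \theta;\delta$ it suffices, by the value relation at the outer arrow type, to show that the (substituted) cast applied to $v_1$ and the (substituted) $\lambda$ applied to $v_2$ yield related terms at $x\mathord{:}T_{21}\to T_{22}$, for every pair with $v_1 \simeq_{\mathtt{v}} v_2$ at $x\mathord{:}T_{11}\to T_{12}$. Note the bound variable $z$ does not occur in the outer codomain, so the extension of $\delta$ by $z$ is immaterial here. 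The well-typedness side conditions of the logical relation follow from the hypothesis (the left-hand side is well typed, and all free variables of the right-hand side come from $T_{11},T_{12},T_{21},T_{22}$, hence lie in $\mathit{dom}(\Gamma)$).

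First I would reduce both applications: the left-hand application fires \R{Fun} (under \E{Red}) in one step, and the right-hand application fires \R{Beta}; both produce $\lambda$-wrappers of \emph{identical syntactic shape}, differing only in whether $v_1$ or $v_2$ is the wrapped function. It then remains to prove that these two wrappers are related at $x\mathord{:}T_{21}\to T_{22}$. This is precisely the situation handled in the \Sub{Fun} case of the upcast-elimination lemma (\prop:ref{fh-elim-upcast-aux}): taking related arguments $v'_1 \simeq_{\mathtt{v}} v'_2$ at $T_{21}$, parametricity (\prop:ref{fh-lr-param}) makes the contravariant cast $\langle T_{21}\Rightarrow T_{11}\rangle^\ell$ self-related, so it sends $v'_1,v'_2$ to related values at $T_{11}$; relatedness of $v_1$ and $v_2$ at the source arrow then gives related results at $T_{12}$; and a second appeal to parametricity for the covariant cast $\langle T_{12}[y/x]\Rightarrow T_{22}\rangle^\ell$ delivers related results at $T_{22}$.

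A cleaner route to the same conclusion is to treat the common wrapper as the open term $W \defeq \lambda x\mathord{:}T_{21}.\,\mathsf{let}\ y\mathord{:}T_{11} = \langle T_{21}\Rightarrow T_{11}\rangle^\ell\,x\ \mathsf{in}\ \langle T_{12}[y/x]\Rightarrow T_{22}\rangle^\ell\,(z\,y)$ with $z$ free: this $W$ is well typed at $x\mathord{:}T_{21}\to T_{22}$ under $\Gamma, z\mathord{:}(x\mathord{:}T_{11}\to T_{12})$ --- which is the subject-reduction content of \R{Fun}, obtained by inverting the hypothesis through \T{Cast} and \C{Fun}. By parametricity $W$ is self-related, and since $v_1 \simeq_{\mathtt{v}} v_2$ we may extend the value assignment with $(v_1,v_2)/z$ (weakening, \prop:ref{fh-lr-val-ws}); unfolding the extended relation then yields that $W[v_1/z]$ and $W[v_2/z]$ are related, as required.

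I expect the main obstacle to be the substitution and $\alpha$-renaming bookkeeping forced by the dependency of the codomains: the body uses $T_{12}[y/x]$ rather than $T_{12}$, and the index type $T_{22}$ depends on the $\lambda$-bound variable, so bridging the relation indexed by $T_{22}[\cdot/x]$ and the one indexed by $T_{22}$ (with $\delta$ suitably extended) requires the term compositionality lemma (\prop:ref{fh-lr-term-comp}) together with weakening and the $\alpha$-equivalence of type relations --- exactly the manipulations already performed in the \Sub{Fun} case of \prop:ref{fh-elim-upcast-aux}. Checking well-typedness of the open wrapper $W$ (which rests on compatibility being preserved under $[y/x]$) is the other point needing care, but it is routine given the inversion of \T{Cast}.
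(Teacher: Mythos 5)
Your proposal is correct and, in its ``cleaner route,'' is exactly the paper's argument: the paper's entire proof is ``By following \E{Red}/\R{Fun} and the parametricity,'' i.e.\ reduce both applications to the common open wrapper $W$ instantiated at $v_1$ resp.\ $v_2$, and conclude from self-relatedness of the well-typed $W$ under the value assignment extended with $(v_1,v_2)/z$. The extra detail you give (typability of $W$ via inversion of \T{Cast}, and the weakening/$\alpha$-renaming bookkeeping) is the right fine print and does not change the route.
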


\begin{prop}{fh-cc-forall}
 If $\Gamma  \vdash  \langle   \forall   \alpha  .  \ottnt{T_{{\mathrm{1}}}}   \Rightarrow   \forall   \alpha  .  \ottnt{T_{{\mathrm{2}}}}   \rangle   ^{ \ell }   \ottsym{:}  \ottsym{(}   \forall   \alpha  .  \ottnt{T_{{\mathrm{1}}}}   \ottsym{)}  \rightarrow  \ottsym{(}   \forall   \alpha  .  \ottnt{T_{{\mathrm{2}}}}   \ottsym{)}$,
 then $\Gamma  \vdash  \langle   \forall   \alpha  .  \ottnt{T_{{\mathrm{1}}}}   \Rightarrow   \forall   \alpha  .  \ottnt{T_{{\mathrm{2}}}}   \rangle   ^{ \ell }  \,  \mathrel{ \simeq }  \,   \lambda    \mathit{x}  \mathord{:}   \forall   \alpha  .  \ottnt{T_{{\mathrm{1}}}}   .   \Lambda\!  \, \alpha  .~  \langle  \ottnt{T_{{\mathrm{1}}}}  \Rightarrow  \ottnt{T_{{\mathrm{2}}}}  \rangle   ^{ \ell }   \, \ottsym{(}  \mathit{x} \, \alpha  \ottsym{)}  \ottsym{:}  \ottsym{(}   \forall   \alpha  .  \ottnt{T_{{\mathrm{1}}}}   \ottsym{)}  \rightarrow  \ottsym{(}   \forall   \alpha  .  \ottnt{T_{{\mathrm{2}}}}   \ottsym{)}$.

 \proof

 By following \E{Red}/\R{Forall} and the parametricity.
\end{prop}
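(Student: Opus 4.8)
The plan is to prove that the two terms are logically related, mirroring the proof of \prop:ref{fh-cc-fun}, and to obtain the stated $\mathrel{\simeq}$ directly from the logical relation; the essential idea is that the cast's own semantics, rule \R{Forall}, \emph{is} the $\eta$-expanded body of the decomposition, so applying either side to an argument reaches a syntactically identical type abstraction in one reduction step. First I would discharge the well-formedness side-conditions of $\mathrel{\simeq}$: inverting the assumed derivation $\Gamma \vdash \langle \forall\alpha.T_1 \Rightarrow \forall\alpha.T_2\rangle^{\ell} : (\forall\alpha.T_1)\to(\forall\alpha.T_2)$ through \T{Cast}, \WF{Forall}, and \C{Forall} yields $\Gamma,\alpha \vdash T_1$, $\Gamma,\alpha \vdash T_2$, and $T_1 \mathrel{\parallel} T_2$, hence $\Gamma,\alpha \vdash \langle T_1 \Rightarrow T_2\rangle^{\ell} : T_1\to T_2$ by \T{Cast}. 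This types the decomposed term and gives the free-variable containment for the right-hand side.

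Next I would fix $\theta$ and $\delta$ with $\Gamma \vdash \theta;\delta$ and reduce, by the definition of the logical relation, to relating the two closed \emph{values} at the (non-dependent) function type $(\forall\alpha.T_1)\to(\forall\alpha.T_2)$. Given related arguments $v_1' \simeq_{\mathtt{v}} v_2' : \forall\alpha.T_1;\theta;\delta$, the cast side steps by \E{Red}/\R{Forall} and the $\lambda$-abstraction side by \E{Red}/\R{Beta}, so both reduce to type abstractions $\Lambda\alpha.\,\langle\theta_i(\delta_i(T_1)) \Rightarrow \theta_i(\delta_i(T_2))\rangle^{\ell}\,(v_i'\,\alpha)$ for $i=1,2$; since $x \notin \mathit{FV}(\forall\alpha.T_2)$, the extension $\delta[(v_1',v_2')/x]$ can be dropped by strengthening (\prop:ref{fh-lr-val-ws}), and it remains to relate these type abstractions at $\forall\alpha.T_2$. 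Taking an arbitrary interpretation $\langle r, T_1', T_2'\rangle$ and writing $\theta' = \theta\{\alpha\mapsto(r,T_1',T_2')\}$, both type applications step by \E{Red}/\R{TBeta} to applications of the instantiated inner cast, where crucially $\theta_i(\delta_i(T))[T_i'/\alpha] = (\theta')_i(\delta_i(T))$, so the reduced terms are exactly $(\theta')_i(\delta_i(\langle T_1 \Rightarrow T_2\rangle^{\ell}))$ applied to $v_i'\,T_i'$.

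To finish I would invoke parametricity (\prop:ref{fh-lr-param}) on the inner cast typed above: $\Gamma,\alpha \vdash \langle T_1 \Rightarrow T_2\rangle^{\ell} \mathrel{\simeq} \langle T_1 \Rightarrow T_2\rangle^{\ell} : T_1\to T_2$. Type weakening (\prop:ref{fh-lr-typ-ws}) gives $\Gamma,\alpha \vdash \theta';\delta$ from $\langle r,T_1',T_2'\rangle$, so instantiating yields the two inner casts related at $T_1\to T_2$ under $\theta';\delta$; meanwhile the universal value relation for $v_1' \simeq_{\mathtt{v}} v_2'$ gives $v_1'\,T_1' \simeq_{\mathtt{e}} v_2'\,T_2' : T_1;\theta';\delta$. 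Applying related functions to related arguments then delivers related results at $T_2$, which is exactly what the universal value relation demands, completing the argument. The main obstacle I anticipate is the bookkeeping around the fresh type variable: verifying that $\theta'$ is a legitimate extension (well-formedness of the interpretation carried through \prop:ref{fh-lr-typ-ws}) and that the left/right instantiations $T_1'$ versus $T_2'$ line up so that the substitution identity $\theta_i(\delta_i(T))[T_i'/\alpha] = (\theta')_i(\delta_i(T))$ holds on the nose; everything else is a faithful replay of the operational steps as in \prop:ref{fh-cc-fun}.
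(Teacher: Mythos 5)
Your proposal is correct and is exactly the expansion of the paper's one-line proof: step the cast side by \E{Red}/\R{Forall} (and the abstraction side by \R{Beta} and \R{TBeta}), then close the resulting inner casts by the parametricity of $\langle T_1 \Rightarrow T_2\rangle^{\ell}$ under the extended interpretation. The bookkeeping you flag (freshness of $\alpha$ and the identity $\theta_i(\delta_i(T))[T_i'/\alpha] = \theta'_i(\delta_i(T))$) is indeed the only thing to check carefully, and it goes through.
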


\begin{prop}{fh-cc-precheck}
 If $\Gamma  \vdash  \langle  \ottnt{T_{{\mathrm{1}}}}  \Rightarrow   \{  \mathit{x}  \mathord{:}  \ottnt{T_{{\mathrm{2}}}}   \mathop{\mid}   \ottnt{e_{{\mathrm{2}}}}  \}   \rangle   ^{ \ell }   \ottsym{:}  \ottnt{T_{{\mathrm{1}}}}  \rightarrow   \{  \mathit{x}  \mathord{:}  \ottnt{T_{{\mathrm{2}}}}   \mathop{\mid}   \ottnt{e_{{\mathrm{2}}}}  \} $,
 then $\Gamma  \vdash  \langle  \ottnt{T_{{\mathrm{1}}}}  \Rightarrow   \{  \mathit{x}  \mathord{:}  \ottnt{T_{{\mathrm{2}}}}   \mathop{\mid}   \ottnt{e_{{\mathrm{2}}}}  \}   \rangle   ^{ \ell }  \,  \mathrel{ \simeq }  \,   \lambda    \mathit{y}  \mathord{:}  \ottnt{T_{{\mathrm{1}}}}  .   \langle\!\langle  \,  \{  \mathit{x}  \mathord{:}  \ottnt{T_{{\mathrm{2}}}}   \mathop{\mid}   \ottnt{e_{{\mathrm{2}}}}  \}   \ottsym{,}  \langle  \ottnt{T_{{\mathrm{1}}}}  \Rightarrow  \ottnt{T_{{\mathrm{2}}}}  \rangle   ^{ \ell }  \, \mathit{y} \,  \rangle\!\rangle  \,  ^{ \ell }    \ottsym{:}  \ottnt{T_{{\mathrm{1}}}}  \rightarrow   \{  \mathit{x}  \mathord{:}  \ottnt{T_{{\mathrm{2}}}}   \mathop{\mid}   \ottnt{e_{{\mathrm{2}}}}  \} $.

 \proof

 By following \E{Red}/\R{PreCheck}, the parametricity, and the fact that,
 if $\langle  \ottnt{T_{{\mathrm{1}}}}  \Rightarrow  \ottnt{T_{{\mathrm{2}}}}  \rangle   ^{ \ell }  \, \ottnt{v}  \longrightarrow^{\ast}  \ottnt{v'}$,
 then $\langle  \ottnt{T_{{\mathrm{1}}}}  \Rightarrow   \{  \mathit{x}  \mathord{:}  \ottnt{T_{{\mathrm{2}}}}   \mathop{\mid}   \ottnt{e}  \}   \rangle   ^{ \ell }  \, \ottnt{v}  \longrightarrow^{\ast}  \langle   \{  \mathit{x}  \mathord{:}  \ottnt{T_{{\mathrm{2}}}}   \mathop{\mid}   \ottnt{e}  \}   \ottsym{,}  \ottnt{e} \, [  \ottnt{v'}  \ottsym{/}  \mathit{x}  ]  \ottsym{,}  \ottnt{v'}  \rangle   ^{ \ell } $.
\end{prop}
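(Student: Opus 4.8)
The plan is to unfold the definition of the logical relation $\simeq$ and reduce the claim to the active-check compatibility lemma. Both sides of the desired relation are values (a cast and a $\lambda$-abstraction), so fixing any $\theta,\delta$ with $\Gamma \vdash \theta;\delta$ and performing the $\beta$-reduction on the right, it suffices to show that for arbitrary related arguments $v_1 \simeq_{\mathtt v} v_2 : T_1;\theta;\delta$ the two results are related at $\{x\colon T_2 \mid e_2\}$; I will suppress the substitutions $\theta_i,\delta_i$ in what follows. First I would record the operational behaviour. Inverting \T{Cast} on the typing assumption gives $T_1 \parallel \{x\colon T_2 \mid e_2\}$ together with well-formedness of both types, and a straightforward inversion of this compatibility derivation (peeling the outer refinement on the right via \C{RefineR}) yields $T_1 \parallel T_2$; hence $\langle T_1 \Rightarrow T_2\rangle^\ell$ is itself well typed. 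Applying \R{Forget} to strip any refinements off $T_1$ and then \R{PreCheck}, the left-hand application $\langle T_1 \Rightarrow \{x\colon T_2\mid e_2\}\rangle^\ell\,v_1$ reaches a waiting check of the same shape as the right-hand side after $\beta$, namely $\langle\!\langle\,\{x\colon T_2\mid e_2\},\,\langle T_1\Rightarrow T_2\rangle^\ell\,v_1\,\rangle\!\rangle^\ell$.

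Next I would invoke parametricity. By \prop:ref{fh-lr-param} the subcast is self-related, $\langle T_1\Rightarrow T_2\rangle^\ell \simeq \langle T_1\Rightarrow T_2\rangle^\ell : T_1 \rightarrow T_2$, so applied to $v_1$ and $v_2$ it produces related terms $\langle T_1\Rightarrow T_2\rangle^\ell\,v_1 \simeq_{\mathtt e} \langle T_1\Rightarrow T_2\rangle^\ell\,v_2 : T_2;\theta;\delta$. If these raise blame, both the peeled cast application on the left and the waiting check on the right propagate the very same $\Uparrow\ell'$, and we are done. Otherwise they converge to values $w_1 \simeq_{\mathtt v} w_2 : T_2;\theta;\delta$, and by the stated operational fact both sides then reduce to active checks $\langle\,\{x\colon T_2\mid e_2\},\,e_2[w_1/x],\,w_1\,\rangle^\ell$ and $\langle\,\{x\colon T_2\mid e_2\},\,e_2[w_2/x],\,w_2\,\rangle^\ell$. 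Relating these two active checks at $\{x\colon T_2\mid e_2\}$ is exactly the active-check compatibility lemma (\prop:ref{fh-lr-comp-acheck-refl-assump}); its self-relatedness hypotheses, that $\{x\colon T_2\mid e_2\}$ is related to itself and that $e_2 \simeq e_2 : \mathsf{Bool}$, are discharged by parametricity from well-formedness of $\{x\colon T_2\mid e_2\}$, and the intermediate-state side condition holds by reflexivity ($e_2[w_1/x] \longrightarrow^\ast e_2[w_1/x]$).

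The main obstacle will be the bookkeeping in the refinement case, that is, making the alignment of reductions precise when $T_1$ is itself a (possibly nested) refinement type. There one must check that \R{Forget} strips $T_1$ down to $\mathit{unref}(T_1)$ before \R{PreCheck} fires on the left, while the explicit subcast $\langle T_1\Rightarrow T_2\rangle^\ell$ appearing on the right passes through exactly the same forgetting steps, so that the value $w_1$ fed into the left active check is genuinely the evaluation result of $\langle T_1\Rightarrow T_2\rangle^\ell\,v_1$ used by parametricity. Once the two reduction sequences are lined up, the remainder is routine: the refinement predicates are related at $\mathsf{Bool}$ because $e_2$ is self-related, so they coterminate to the same Boolean and the two active checks either both succeed (returning the related $w_1,w_2$) or both fail with $\Uparrow\ell$. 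Notably, Cotermination (\prop:ref{fh-coterm-true}) is not needed here, since the agreement of the predicates comes directly from self-relatedness of $e_2$ rather than from rewriting a subterm inside a type.
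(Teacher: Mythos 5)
Your proposal is correct and follows essentially the same route as the paper's (very terse) proof: reduce via \R{Forget}/\R{PreCheck}, use parametricity to self-relate the subcast $\langle \ottnt{T_{{\mathrm{1}}}} \Rightarrow \ottnt{T_{{\mathrm{2}}}}\rangle^{\ell}$, and align the two reduction sequences so both sides reach active checks handled by the standard self-relatedness argument for $\ottnt{e_{{\mathrm{2}}}}$. The only cosmetic mismatch is that \prop:ref{fh-lr-comp-acheck-refl-assump} is stated for closed terms under empty $\theta;\delta$, so in the middle of the unfolded argument you should really replay its tail (as in the waiting-check compatibility proof) for general $\theta;\delta$ rather than cite it verbatim.
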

Since {\fhfix} allows waiting checks to be open, this decomposition is
type-preserving.

We can show that $\langle   \{  \mathit{x}  \mathord{:}  \ottnt{T_{{\mathrm{1}}}}   \mathop{\mid}   \ottnt{e_{{\mathrm{1}}}}  \}   \Rightarrow  \ottnt{T_{{\mathrm{2}}}}  \rangle   ^{ \ell } $ is logically related to $\langle  \ottnt{T_{{\mathrm{1}}}}  \Rightarrow  \ottnt{T_{{\mathrm{2}}}}  \rangle   ^{ \ell } $,
but it does not preserves the type, which makes further optimization based on
contextual equivalence impossible; note that the transitivity of the logical
relation requires the index types to be the same (see \prop:ref{fh-lr-trans}).
Instead, we show that, if $\langle  \ottnt{T_{{\mathrm{1}}}}  \Rightarrow  \ottnt{T_{{\mathrm{2}}}}  \rangle   ^{ \ell } $ is logically related to term $\ottnt{e}$,
then $\langle   \{  \mathit{x}  \mathord{:}  \ottnt{T_{{\mathrm{1}}}}   \mathop{\mid}   \ottnt{e_{{\mathrm{1}}}}  \}   \Rightarrow  \ottnt{T_{{\mathrm{2}}}}  \rangle   ^{ \ell } $ is also logically related to $\ottnt{e}$.
In this formulation, we can relate $\langle   \{  \mathit{x}  \mathord{:}  \ottnt{T_{{\mathrm{1}}}}   \mathop{\mid}   \ottnt{e_{{\mathrm{1}}}}  \}   \Rightarrow  \ottnt{T_{{\mathrm{2}}}}  \rangle   ^{ \ell } $ to fully optimized
term $\ottnt{e}$.

\begin{prop}{fh-cc-forget}
 If
 $\Gamma  \vdash  \langle   \{  \mathit{x}  \mathord{:}  \ottnt{T_{{\mathrm{1}}}}   \mathop{\mid}   \ottnt{e_{{\mathrm{1}}}}  \}   \Rightarrow  \ottnt{T_{{\mathrm{2}}}}  \rangle   ^{ \ell }   \ottsym{:}   \{  \mathit{x}  \mathord{:}  \ottnt{T_{{\mathrm{1}}}}   \mathop{\mid}   \ottnt{e_{{\mathrm{1}}}}  \}   \rightarrow  \ottnt{T_{{\mathrm{2}}}}$ and
 $\Gamma  \vdash  \langle  \ottnt{T_{{\mathrm{1}}}}  \Rightarrow  \ottnt{T_{{\mathrm{2}}}}  \rangle   ^{ \ell }  \,  \mathrel{ \simeq }  \, \ottnt{e}  \ottsym{:}  \ottnt{T_{{\mathrm{1}}}}  \rightarrow  \ottnt{T_{{\mathrm{2}}}}$,
 then $\Gamma  \vdash  \langle   \{  \mathit{x}  \mathord{:}  \ottnt{T_{{\mathrm{1}}}}   \mathop{\mid}   \ottnt{e_{{\mathrm{1}}}}  \}   \Rightarrow  \ottnt{T_{{\mathrm{2}}}}  \rangle   ^{ \ell }  \,  \mathrel{ \simeq }  \, \ottnt{e}  \ottsym{:}   \{  \mathit{x}  \mathord{:}  \ottnt{T_{{\mathrm{1}}}}   \mathop{\mid}   \ottnt{e_{{\mathrm{1}}}}  \}   \rightarrow  \ottnt{T_{{\mathrm{2}}}}$.

 \proof

 Let $\Gamma  \vdash  \theta  \ottsym{;}  \delta$.
 It suffices to show that
 \[
    \theta_{{\mathrm{1}}}  (   \delta_{{\mathrm{1}}}  (  \langle   \{  \mathit{x}  \mathord{:}  \ottnt{T_{{\mathrm{1}}}}   \mathop{\mid}   \ottnt{e_{{\mathrm{1}}}}  \}   \Rightarrow  \ottnt{T_{{\mathrm{2}}}}  \rangle   ^{ \ell }   )   )   \simeq_{\mathtt{e} }   \theta_{{\mathrm{2}}}  (   \delta_{{\mathrm{2}}}  (  \ottnt{e}  )   )    \ottsym{:}    \{  \mathit{x}  \mathord{:}  \ottnt{T_{{\mathrm{1}}}}   \mathop{\mid}   \ottnt{e_{{\mathrm{1}}}}  \}   \rightarrow  \ottnt{T_{{\mathrm{2}}}} ;  \theta ;  \delta .
 \]
 Since $\Gamma  \vdash  \langle  \ottnt{T_{{\mathrm{1}}}}  \Rightarrow  \ottnt{T_{{\mathrm{2}}}}  \rangle   ^{ \ell }  \,  \mathrel{ \simeq }  \, \ottnt{e}  \ottsym{:}  \ottnt{T_{{\mathrm{1}}}}  \rightarrow  \ottnt{T_{{\mathrm{2}}}}$,
 there exists some $\ottnt{v}$ such that $ \theta_{{\mathrm{2}}}  (   \delta_{{\mathrm{2}}}  (  \ottnt{e}  )   )   \longrightarrow^{\ast}  \ottnt{v}$ and
 $  \theta_{{\mathrm{1}}}  (   \delta_{{\mathrm{1}}}  (  \langle  \ottnt{T_{{\mathrm{1}}}}  \Rightarrow  \ottnt{T_{{\mathrm{2}}}}  \rangle   ^{ \ell }   )   )   \simeq_{\mathtt{v} }  \ottnt{v}   \ottsym{:}   \ottnt{T_{{\mathrm{1}}}}  \rightarrow  \ottnt{T_{{\mathrm{2}}}} ;  \theta ;  \delta $.
 Thus, it suffices to show that, for any $\ottnt{v_{{\mathrm{1}}}}$ and $\ottnt{v_{{\mathrm{2}}}}$ such that
 $ \ottnt{v_{{\mathrm{1}}}}  \simeq_{\mathtt{v} }  \ottnt{v_{{\mathrm{2}}}}   \ottsym{:}    \{  \mathit{x}  \mathord{:}  \ottnt{T_{{\mathrm{1}}}}   \mathop{\mid}   \ottnt{e_{{\mathrm{1}}}}  \}  ;  \theta ;  \delta $,
 \[
    \theta_{{\mathrm{1}}}  (   \delta_{{\mathrm{1}}}  (  \langle   \{  \mathit{x}  \mathord{:}  \ottnt{T_{{\mathrm{1}}}}   \mathop{\mid}   \ottnt{e_{{\mathrm{1}}}}  \}   \Rightarrow  \ottnt{T_{{\mathrm{2}}}}  \rangle   ^{ \ell }   )   )  \, \ottnt{v_{{\mathrm{1}}}}  \simeq_{\mathtt{e} }  \ottnt{v} \, \ottnt{v_{{\mathrm{2}}}}   \ottsym{:}   \ottnt{T_{{\mathrm{2}}}} ;  \theta ;  \delta .
 \]
 Since $ \theta_{{\mathrm{1}}}  (   \delta_{{\mathrm{1}}}  (  \langle   \{  \mathit{x}  \mathord{:}  \ottnt{T_{{\mathrm{1}}}}   \mathop{\mid}   \ottnt{e_{{\mathrm{1}}}}  \}   \Rightarrow  \ottnt{T_{{\mathrm{2}}}}  \rangle   ^{ \ell }   )   )  \, \ottnt{v_{{\mathrm{1}}}}  \longrightarrow   \theta_{{\mathrm{1}}}  (   \delta_{{\mathrm{1}}}  (  \langle  \ottnt{T_{{\mathrm{1}}}}  \Rightarrow  \ottnt{T_{{\mathrm{2}}}}  \rangle   ^{ \ell }   )   )  \, \ottnt{v_{{\mathrm{1}}}}$
 by \E{Red}/\R{Forget}, it suffices to show that
 \[
    \theta_{{\mathrm{1}}}  (   \delta_{{\mathrm{1}}}  (  \langle  \ottnt{T_{{\mathrm{1}}}}  \Rightarrow  \ottnt{T_{{\mathrm{2}}}}  \rangle   ^{ \ell }   )   )  \, \ottnt{v_{{\mathrm{1}}}}  \simeq_{\mathtt{e} }  \ottnt{v} \, \ottnt{v_{{\mathrm{2}}}}   \ottsym{:}   \ottnt{T_{{\mathrm{2}}}} ;  \theta ;  \delta .
 \]
 Since $ \ottnt{v_{{\mathrm{1}}}}  \simeq_{\mathtt{v} }  \ottnt{v_{{\mathrm{2}}}}   \ottsym{:}    \{  \mathit{x}  \mathord{:}  \ottnt{T_{{\mathrm{1}}}}   \mathop{\mid}   \ottnt{e_{{\mathrm{1}}}}  \}  ;  \theta ;  \delta $,
 we have $ \ottnt{v_{{\mathrm{1}}}}  \simeq_{\mathtt{v} }  \ottnt{v_{{\mathrm{2}}}}   \ottsym{:}   \ottnt{T_{{\mathrm{1}}}} ;  \theta ;  \delta $.
 Since $  \theta_{{\mathrm{1}}}  (   \delta_{{\mathrm{1}}}  (  \langle  \ottnt{T_{{\mathrm{1}}}}  \Rightarrow  \ottnt{T_{{\mathrm{2}}}}  \rangle   ^{ \ell }   )   )   \simeq_{\mathtt{v} }  \ottnt{v}   \ottsym{:}   \ottnt{T_{{\mathrm{1}}}}  \rightarrow  \ottnt{T_{{\mathrm{2}}}} ;  \theta ;  \delta $,
 we finish by definition.
\end{prop}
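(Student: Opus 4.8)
The plan is to work straight from the definition of the open logical relation. Fix arbitrary $\theta$ and $\delta$ with $\Gamma \vdash \theta ; \delta$; it then suffices to show
\[
 \theta_1(\delta_1(\langle  \{  \mathit{x}  \mathord{:}  \ottnt{T_{{\mathrm{1}}}}   \mathop{\mid}   \ottnt{e_{{\mathrm{1}}}}  \}   \Rightarrow \ottnt{T_{{\mathrm{2}}}}\rangle^{\ell}))
 \mathrel{\simeq_{\mathtt{e}}}
 \theta_2(\delta_2(\ottnt{e})) : \{x{:}\ottnt{T_{{\mathrm{1}}}} \mid \ottnt{e_{{\mathrm{1}}}}\}\to\ottnt{T_{{\mathrm{2}}}} ; \theta ; \delta .
\]
Since the left-hand side is already a (cast) value, the term relation collapses to the requirement that there be a value $\ottnt{v}$ with $\theta_2(\delta_2(\ottnt{e})) \longrightarrow^{\ast} \ottnt{v}$ such that the two are related by $\simeq_{\mathtt{v}}$ at the refined function type. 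The second hypothesis $\Gamma \vdash \langle \ottnt{T_{{\mathrm{1}}}} \Rightarrow \ottnt{T_{{\mathrm{2}}}}\rangle^{\ell} \simeq \ottnt{e} : \ottnt{T_{{\mathrm{1}}}}\to\ottnt{T_{{\mathrm{2}}}}$ supplies exactly this $\ottnt{v}$: instantiating it at $\theta,\delta$ yields $\theta_2(\delta_2(\ottnt{e})) \longrightarrow^{\ast} \ottnt{v}$ together with $\theta_1(\delta_1(\langle \ottnt{T_{{\mathrm{1}}}} \Rightarrow \ottnt{T_{{\mathrm{2}}}}\rangle^{\ell})) \simeq_{\mathtt{v}} \ottnt{v} : \ottnt{T_{{\mathrm{1}}}}\to\ottnt{T_{{\mathrm{2}}}} ; \theta ; \delta$ (the cast being a value, the term relation there again collapses to the value relation).

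Next I would establish relatedness at the refined function type directly from its definition. Take arbitrary $\ottnt{v_{{\mathrm{1}}}},\ottnt{v_{{\mathrm{2}}}}$ with $\ottnt{v_{{\mathrm{1}}}} \simeq_{\mathtt{v}} \ottnt{v_{{\mathrm{2}}}} : \{x{:}\ottnt{T_{{\mathrm{1}}}} \mid \ottnt{e_{{\mathrm{1}}}}\} ; \theta ; \delta$ and show the two applications are related in $\simeq_{\mathtt{e}}$ at $\ottnt{T_{{\mathrm{2}}}}$. The crucial observation is that $\theta_1(\delta_1(\langle\{x{:}\ottnt{T_{{\mathrm{1}}}} \mid \ottnt{e_{{\mathrm{1}}}}\}\Rightarrow \ottnt{T_{{\mathrm{2}}}}\rangle^{\ell}))\,\ottnt{v_{{\mathrm{1}}}}$ takes one \E{Red}/\R{Forget} step to $\theta_1(\delta_1(\langle \ottnt{T_{{\mathrm{1}}}}\Rightarrow \ottnt{T_{{\mathrm{2}}}}\rangle^{\ell}))\,\ottnt{v_{{\mathrm{1}}}}$; as the term relation is phrased in terms of evaluation to related values, it is closed under this reduction, so it suffices to relate $\theta_1(\delta_1(\langle \ottnt{T_{{\mathrm{1}}}}\Rightarrow \ottnt{T_{{\mathrm{2}}}}\rangle^{\ell}))\,\ottnt{v_{{\mathrm{1}}}}$ with $\ottnt{v}\,\ottnt{v_{{\mathrm{2}}}}$ at $\ottnt{T_{{\mathrm{2}}}}$. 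From $\ottnt{v_{{\mathrm{1}}}} \simeq_{\mathtt{v}} \ottnt{v_{{\mathrm{2}}}} : \{x{:}\ottnt{T_{{\mathrm{1}}}} \mid \ottnt{e_{{\mathrm{1}}}}\}$ the definition of the refinement value relation immediately gives $\ottnt{v_{{\mathrm{1}}}} \simeq_{\mathtt{v}} \ottnt{v_{{\mathrm{2}}}} : \ottnt{T_{{\mathrm{1}}}}$, and feeding this very pair into the function relatedness from the first paragraph produces the desired $\simeq_{\mathtt{e}}$ at $\ottnt{T_{{\mathrm{2}}}}$. Because both the refined and the unrefined function relations extend $\delta$ with the \emph{same} pair $(\ottnt{v_{{\mathrm{1}}}},\ottnt{v_{{\mathrm{2}}}})/x$, the two obligations align exactly and no reindexing of the value assignment is required.

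The main obstacle here is quite mild: it is only the bookkeeping around the two closure facts. I must confirm that the term relation is backward-closed under the single \R{Forget} step, and that \R{Forget} really is the first move—this is where the well-typedness hypothesis $\Gamma \vdash \langle\{x{:}\ottnt{T_{{\mathrm{1}}}} \mid \ottnt{e_{{\mathrm{1}}}}\}\Rightarrow \ottnt{T_{{\mathrm{2}}}}\rangle^{\ell} : \{x{:}\ottnt{T_{{\mathrm{1}}}} \mid \ottnt{e_{{\mathrm{1}}}}\}\to\ottnt{T_{{\mathrm{2}}}}$ enters, since it forces the source and target to be compatible so that peeling the outermost refinement of the source is the only applicable reduction. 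Neither step poses real difficulty: backward closure is immediate from the definition via $\longrightarrow^{\ast}$ to related values, and the cast types in question are the non-dependent arrows $\ottnt{T_{{\mathrm{1}}}}\to\ottnt{T_{{\mathrm{2}}}}$ and $\{x{:}\ottnt{T_{{\mathrm{1}}}} \mid \ottnt{e_{{\mathrm{1}}}}\}\to\ottnt{T_{{\mathrm{2}}}}$, so no dependency subtleties arise. In effect the whole argument is a one-step unfolding that transports the assumed relatedness of the unrefined cast through \R{Forget}, relying solely on the definitions of the value and term relations and on the refinement-projection built into $\simeq_{\mathtt{v}}$.
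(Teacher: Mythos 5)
Your proposal is correct and follows essentially the same route as the paper's proof: instantiate the hypothesis at $\theta,\delta$ to obtain the value $\ottnt{v}$ related to the unrefined cast at $\ottnt{T_{{\mathrm{1}}}}  \rightarrow  \ottnt{T_{{\mathrm{2}}}}$, take related arguments at the refinement type, peel the outer refinement with a single \R{Forget} step (using backward closure of the term relation), project to relatedness at $\ottnt{T_{{\mathrm{1}}}}$, and conclude from the function value relation. The extra remarks about why \R{Forget} applies and about the non-dependent arrow are accurate but not needed beyond what the paper already does implicitly.
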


Finally, we show that reflexive casts are redundant.\footnote{We believe that this is
derived from the upcast elimination, but showing that subtyping is reflexive is not
trivial due to substitution on the subtype side in \Sub{Fun}.}
\begin{prop}{fh-cc-refl}
 If $\Gamma  \vdash  \ottnt{T}$, then $\Gamma  \vdash  \langle  \ottnt{T}  \Rightarrow  \ottnt{T}  \rangle   ^{ \ell }  \,  \mathrel{ \simeq }  \,   \lambda    \mathit{x}  \mathord{:}  \ottnt{T}  .  \mathit{x}   \ottsym{:}  \ottnt{T}  \rightarrow  \ottnt{T}$.

 \proof

 By the parametricity (\prop:ref{fh-lr-param}) and \prop:ref{fh-lr-elim-refl-cast}.
\end{prop}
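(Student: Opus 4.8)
The plan is to unfold the definition of the logical relation for open terms and reduce the goal to an instance of the reflexive-cast elimination lemma (\prop:ref{fh-lr-elim-refl-cast}). Recall that $\Gamma  \vdash  \langle  \ottnt{T}  \Rightarrow  \ottnt{T}  \rangle   ^{ \ell }  \,  \mathrel{ \simeq }  \,   \lambda    \mathit{x}  \mathord{:}  \ottnt{T}  .  \mathit{x}   \ottsym{:}  \ottnt{T}  \rightarrow  \ottnt{T}$ requires three things: (1) the left-hand term is well typed, i.e.\ $\Gamma  \vdash  \langle  \ottnt{T}  \Rightarrow  \ottnt{T}  \rangle   ^{ \ell }   \ottsym{:}  \ottnt{T}  \rightarrow  \ottnt{T}$; (2) the free (term and type) variables of $  \lambda    \mathit{x}  \mathord{:}  \ottnt{T}  .  \mathit{x} $ are contained in $ \mathit{dom}  (  \Gamma  ) $; and (3) for any $\theta$ and $\delta$ with $\Gamma  \vdash  \theta  \ottsym{;}  \delta$, the closed instances $ \theta_{{\mathrm{1}}}  (   \delta_{{\mathrm{1}}}  (  \langle  \ottnt{T}  \Rightarrow  \ottnt{T}  \rangle   ^{ \ell }   )   ) $ and $ \theta_{{\mathrm{2}}}  (   \delta_{{\mathrm{2}}}  (    \lambda    \mathit{x}  \mathord{:}  \ottnt{T}  .  \mathit{x}   )   ) $ are related in the term relation at $\ottnt{T}  \rightarrow  \ottnt{T}$. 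The heart of the argument is (3), which I would establish by specializing \prop:ref{fh-lr-elim-refl-cast} to the case $\ottnt{T_{{\mathrm{1}}}}  \ottsym{=}  \ottnt{T_{{\mathrm{2}}}}  \ottsym{=}  \ottnt{T}$.

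The key step is to supply the hypotheses of \prop:ref{fh-lr-elim-refl-cast}. When $\ottnt{T_{{\mathrm{1}}}}  \ottsym{=}  \ottnt{T_{{\mathrm{2}}}}  \ottsym{=}  \ottnt{T}$, its four premises all collapse to the single self-relatedness fact $ \ottnt{T}  \simeq  \ottnt{T}   \ottsym{:}   \ast ;  \theta ;  \delta $. I would obtain this from Parametricity: since $\Gamma  \vdash  \ottnt{T}$, the type case of \prop:ref{fh-lr-param} gives $ \Gamma \vdash \ottnt{T}  \mathrel{ \simeq }  \ottnt{T}  : \ast $, and by the definition of the logical relation on open types this means precisely that $ \ottnt{T}  \simeq  \ottnt{T}   \ottsym{:}   \ast ;  \theta ;  \delta $ holds for every $\theta,\delta$ with $\Gamma  \vdash  \theta  \ottsym{;}  \delta$. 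Feeding this into \prop:ref{fh-lr-elim-refl-cast} yields the \emph{value} relation $  \theta_{{\mathrm{1}}}  (   \delta_{{\mathrm{1}}}  (  \langle  \ottnt{T}  \Rightarrow  \ottnt{T}  \rangle   ^{ \ell }   )   )   \simeq_{\mathtt{v} }   \theta_{{\mathrm{2}}}  (   \delta_{{\mathrm{2}}}  (    \lambda    \mathit{x}  \mathord{:}  \ottnt{T}  .  \mathit{x}   )   )    \ottsym{:}   \ottnt{T}  \rightarrow  \ottnt{T} ;  \theta ;  \delta $; since both sides are already values (a cast and a $\lambda$-abstraction) and reduce to themselves, this immediately lifts to the required term relation $  \theta_{{\mathrm{1}}}  (   \delta_{{\mathrm{1}}}  (  \langle  \ottnt{T}  \Rightarrow  \ottnt{T}  \rangle   ^{ \ell }   )   )   \simeq_{\mathtt{e} }   \theta_{{\mathrm{2}}}  (   \delta_{{\mathrm{2}}}  (    \lambda    \mathit{x}  \mathord{:}  \ottnt{T}  .  \mathit{x}   )   )    \ottsym{:}   \ottnt{T}  \rightarrow  \ottnt{T} ;  \theta ;  \delta $, discharging (3).

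The side conditions (1) and (2) are routine. For (1), $\ottnt{T}$ is compatible with itself ($\ottnt{T}  \mathrel{\parallel}  \ottnt{T}$, by a straightforward induction on $\ottnt{T}$ following the compatibility rules), so \T{Cast} derives $\Gamma  \vdash  \langle  \ottnt{T}  \Rightarrow  \ottnt{T}  \rangle   ^{ \ell }   \ottsym{:}  \ottnt{T}  \rightarrow  \ottnt{T}$ from $\Gamma  \vdash  \ottnt{T}$; for (2), $ \mathit{FV}  (  \ottnt{T}  )   \mathrel{\cup}   \mathit{FTV}  (  \ottnt{T}  )   \subseteq   \mathit{dom}  (  \Gamma  ) $ follows from $\Gamma  \vdash  \ottnt{T}$, and abstracting $\mathit{x}$ removes the only additional free variable. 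Because all the genuine work has been packaged into \prop:ref{fh-lr-elim-refl-cast} and \prop:ref{fh-lr-param}, there is no real obstacle here; the only point demanding a little care is checking that the single hypothesis $ \ottnt{T}  \simeq  \ottnt{T}   \ottsym{:}   \ast ;  \theta ;  \delta $ indeed instantiates all four premises of \prop:ref{fh-lr-elim-refl-cast} and that the well-formedness invariants of the open logical relation are respected throughout.
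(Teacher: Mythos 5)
Your proposal is correct and follows exactly the paper's (one-line) proof: instantiate \prop:ref{fh-lr-elim-refl-cast} at $\ottnt{T_{{\mathrm{1}}}}  \ottsym{=}  \ottnt{T_{{\mathrm{2}}}}  \ottsym{=}  \ottnt{T}$, discharging its premises via the type case of Parametricity (\prop:ref{fh-lr-param}) applied to $\Gamma  \vdash  \ottnt{T}$. The additional details you supply --- lifting the value relation to the term relation since both sides are values, and checking well-typedness via \T{Cast} and the free-variable side condition --- are exactly the routine steps the paper leaves implicit.
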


As a byproduct of the cast decomposition, it turns out that our fussy semantics can
simulate Belo et al.'s sloppy semantics.
The sloppy semantics, as shown at the end of \sect{lang-semantics}, eliminates
reflexive casts immediately and checks only the outermost refinement if others
have been ensured already.
It is found that the former is simulated from \prop:ref{fh-cc-refl} and
the second from combination of \prop:ref{fh-cc-precheck,fh-cc-refl}.
As a result, the type system of {\fhfix} turns out to be sound also for the
sloppy semantics despite that the cotermination (\prop:ref{fh-coterm-true}), a
key property for the type soundness, does not hold under the sloppy
semantics~\cite{Sekiyama/Igarashi/Greenberg_2016_TOPLAS}.

\section{Related Work}
\label{sec:relwork}

% \subsection{Hybrid type checking}
% \label{sec:relwork-manifest}

% Hybrid type checking, coined by Flanagan~\cite{Flanagan_2006_POPL}, is a typing
% discipline that integrates static and dynamic type checking.
% %
% There is a large body of literature along this line.

\subsection{Simply-typed Manifest Contracts}
\label{sec:relwork-simply}
Flanagan~\cite{Flanagan_2006_POPL} introduced a simply typed manifest contract
calculus {\lamh} equipped with a subsumption rule for subtyping.
While the subsumption rule allows us to eliminate upcasts, its naive
introduction results in an occurrence of well typedness at a negative position
in the definition of the type system, especially, in the implication judgment
for refinements; it is unclear whether the type system with such a negative
occurrence is well defined.

To avoid the negative occurrence problem due to the subsumption rule while
keeping that rule, Knowles and Flanagan~\cite{Knowles/Flanagan_2010_TOPLAS}
designed another simply typed manifest contract calculus where the implication
judgment refers to denotations of types instead of well-typed values.
They gave a denotation of each type as a set of terms in the simply typed lambda
calculus and defined a manifest contract calculus equipped with a well-defined
type system using the denotations.
Flanagan and Knowles~\cite{Flanagan_2006_POPL,Knowles/Flanagan_2010_TOPLAS} also
developed a compilation algorithm that transforms possibly ill-typed programs to
well-typed ones by inserting casts everywhere a required type is not a supertype
of an actual type.
The compilation result depends on an external prover that judges implication
between refinements: the more powerful the prover is, the less upcasts are
inserted.
Although how many upcasts are inserted depends on the prover, what prover is
used does not have an influence on the final results of programs because upcasts
should behave as identity functions.
To substantiate this idea, Knowles and
Flanagan~\cite{Knowles/Flanagan_2010_TOPLAS} proved that an upcast is
contextually equivalent to an identity function via a logical relation.

Apart from parametric polymorphism, a major difference between Knowles and
Flanagan~\cite{Knowles/Flanagan_2010_TOPLAS} and our work is the treatment of
the subsumption for subtyping, which has a great influence on the metatheory of
manifest contract calculi.
Knowles and Flanagan allow for the subsumption in the definition of their
calculus.
While their type system with the subsumption rule makes it possible that an
upcast and an identity function have the same type, they need some device to
ensure that the type system is well defined; in fact, their type system is
defined based on semantic typing and semantic subtyping.
By contrast, following Belo et
al.~\cite{Belo/Greenberg/Igarashi/Pierce_2011_ESOP}, we consider subtyping after
defining {\fhfix}.
Since a type system defined in this ``post facto'' approach does not refer to
the implication judgment, it is well defined naturally.
As a result, we can discuss the metatheory, such as the subject reduction, of our
calculus without semantic typing and semantic subtyping.
However, in such a type system, an upcast and an identity function may not have
the same type.
To relate two terms of different types, we introduce \emph{semityped} contextual
equivalence.
Another difference is that, while Knowles and
Flanagan~\cite{Knowles/Flanagan_2010_TOPLAS} show the upcast elimination only for
cases that upcasts are closed,\footnote{Corollary 13 in Knowles and
Flanagan~\cite{Knowles/Flanagan_2010_TOPLAS} states that logically related,
\emph{open} terms $\ottnt{e_{{\mathrm{1}}}}$ and $\ottnt{e_{{\mathrm{2}}}}$ are contextually equivalent, but their
proof shows that result terms of \emph{capture-avoiding substitution} of
$\ottnt{e_{{\mathrm{1}}}}$ and $\ottnt{e_{{\mathrm{2}}}}$ for a variable in any context are observationally equal;
this proof is valid only if $\ottnt{e_{{\mathrm{1}}}}$ and $\ottnt{e_{{\mathrm{2}}}}$ are closed.} we deal with open
upcasts as well.
%
% Another difference between their calculus and {\fhfix} is in primitive
% operators: we assume that primitive operators receive and return only constants,
% whereas Flanagan and Knowles permitted more operators such as a fixed point
% operator. They, however, did not show explicitly that each of such operators are
% logically related to itself but assumed it.

Ou et al.~\cite{Ou/Tan/Mandelbaum/Walker_2004_TCS} studied interoperability of
certified, dependently-typed parts and uncertified, simply-typed ones.
As in manifest contracts, coercion of simply-typed values to dependently-typed
ones is achieved by run-time checking.
Their dependent type system supports refinement types where refinements have to
be pure (i.e., they consist of only variables, constants, and primitive
operations with pure arguments), a subsumption rule for subtyping, and a typing
rule for selfification, which inspires the contract reasoning in
\sect{reasoning-self}.
Unlike the other work on manifest
contracts~\cite{Flanagan_2006_POPL,Knowles/Flanagan_2010_TOPLAS,Belo/Greenberg/Igarashi/Pierce_2011_ESOP,Greenberg_2013_PhD,Sekiyama/Igarashi/Greenberg_2016_TOPLAS,Sekiyama/Igarashi_2017_POPL},
they did not address elimination of run-time coercion.

\subsection{Polymorphic Manifest Contracts}
Belo et al.~\cite{Belo/Greenberg/Igarashi/Pierce_2011_ESOP} studied parametric
polymorphism in manifest contracts.
In particular, they introduced a polymorphic manifest contract calculus,
developed a logical relation, and showed the parametricity and the upcast elimination;
details are described in Greenberg's dissertation~\cite{Greenberg_2013_PhD}.
The semantics of their calculus is sloppy in that refinements that have been
ensured already are not checked at run time.
For example, a reflexive cast returns a given argument immediately because the
argument should been typed at the source type of the cast and satisfy the
refinements in the target type (note that the source and target types of a
reflexive cast are the same).
This sloppiness is important in their proof of the parametricity, especially, to
show that polymorphic cast $\langle  \alpha  \Rightarrow  \alpha  \rangle   ^{ \ell } $ is logically related to itself{\iffull:
given values $\ottnt{v_{{\mathrm{1}}}}$ and $\ottnt{v_{{\mathrm{2}}}}$ that are logically related at $\alpha$, cast
applications $\langle  \alpha  \Rightarrow  \alpha  \rangle   ^{ \ell }  \, \ottnt{v_{{\mathrm{1}}}}$ and $\langle  \alpha  \Rightarrow  \alpha  \rangle   ^{ \ell }  \, \ottnt{v_{{\mathrm{2}}}}$ have to be logically related
at $\alpha$; thanks to the sloppy semantics, they evaluate to $\ottnt{v_{{\mathrm{1}}}}$ and
$\ottnt{v_{{\mathrm{2}}}}$, respectively; and $\ottnt{v_{{\mathrm{1}}}}$ and $\ottnt{v_{{\mathrm{2}}}}$ are logically related at
$\alpha$ by their definition\fi}.
However, it turns out that their sloppy semantics does not satisfy
the cotermination, a key property for both the type soundness and
the parametricity~\cite{Sekiyama/Igarashi/Greenberg_2016_TOPLAS}.

Sekiyama et al.~\cite{Sekiyama/Igarashi/Greenberg_2016_TOPLAS} resolved the
problem in the sloppy semantics by equipping casts with \emph{delayed
substitution}, which makes it possible to show the cotermination even under sloppy
semantics.
Furthermore, they also proved the type soundness and the parametricity in the cast
semantics with delayed substitution, while leaving proving the upcast elimination
open.
Although their delayed substitution works well in the sloppy semantics, it makes
the metatheory of a manifest contract calculus, especially, the definition of
substitution, complicated.

We define a polymorphic manifest contract calculus with fussy cast semantics,
where all refinements to be satisfied are checked even if they have been ensured
already.
The fussy cast semantics, which is adopted also by the simply-typed manifest
contract calculus~\cite{Flanagan_2006_POPL,Knowles/Flanagan_2010_TOPLAS} and a
manifest contract calculus for algebraic
data types~\cite{Sekiyama/Nishida/Igarashi_2015_POPL} and mutable
states~\cite{Sekiyama/Igarashi_2017_POPL}, uses usual substitution and
simplifies the metatheory of manifest contract calculi.
Our logical relation for the fussy cast semantics requires interpretations of
type variables to be closed under reduction of applications of reflexive casts
because in the fussy semantics reflexive casts may produce wrappers of given
arguments.
Fortunately, we can construct such an interpretation from any binary relation on
closed values easily, because (well-typed) reflexive casts always succeed.
We furthermore introduce semityped contextual equivalence, show the soundness and
the completeness of the logical relation with respect to it, and prove correctness
of reasoning techniques including the upcast elimination.

% \paragraph{The \textsc{Sage} language}
%
The \textsc{Sage} language~\cite{Gronski/Knowles/Tomb/Freund/Flanagan_2006_SFPW}
supports key features in polymorphic manifest contracts---general refinements
(i.e., refining refinement types), casts, subtyping, and parametric
\sloppy{polymorphism---as} well as recursive functions, recursive types, the
dynamic type, and the Type:Type discipline, but the parametricity and the upcast
elimination for \textsc{Sage} have not been investigated.
In particular, parametricity for languages equipped with both refinement types
and the dynamic type is left open.

\subsection{Gradual Typing}
Gradual typing~\cite{Siek/Taha_2006_SFPW} is a methodology to achieve a full
spectrum from dynamically typed programs to statically typed ones.
A gradually typed language is considered to be an extension of a static type
system with the dynamic type (or called the unknown type) and it deals with
values of the dynamic type as ones of any other type and vice versa.
Ahmed et al.~\cite{Ahmed/Findler/Siek/Wadler_2011_POPL} and, more recently,
Igarashi et al.~\cite{YIgarashi/Sekiyama/AIgarashi_2017_ICFP} study gradual
typing with parametric polymorphism.
To ensure parametricity, polymorphic gradual typing has to prevent that ones
investigate what type a type variable is instantiated with at run time.
Ahmed et al.\ and Igarashi et al.\ achieved it with help of type bindings, which
are similar to delayed substitution in Sekiyama et
al.~\cite{Sekiyama/Igarashi/Greenberg_2016_TOPLAS},\footnote{Precisely, delayed
substitution comes from type binding.} inspired by a parametric multi-language
system by Matthews and Ahmed~\cite{Matthews/Ahmed_2008_ESOP}.
Ahmed et al.~\cite{Ahmed/Jamner/Siek/Wadler_2017_ICFP} actually proved the
parametricity of the polymorphic gradual typing with type bindings.
Adding the dynamic type to polymorphic manifest contracts is an
interesting future direction.

Gradual typing allows checks of a part of types to be deferred to run-time.
Lehmann and Tanter~\cite{Lehmann/Tanter_2017_POPL} apply this idea to refinement
checking.
They extended refinements with the unknown refinement ``$?$'', which means that
values satisfying this refinement may have some additional information but it is
unknown statically.
In the spirit of gradual typing, their system defers refinement checking with
the unknown refinement to run-time, while checking without the unknown
refinement is performed completely statically.
In other words, the unknown refinement works as a marker that indicates
refinements to be possibly checked at run time.
%
% The idea of the unknown refinement may be applicable to manifest contracts as
% well, especially, in designing their surface language.
%
Their gradual refinement type system is similar to (the simply typed) manifest
contracts, but in their system the dynamic semantics depends on the subtyping
whereas, conversely, in manifest contracts the subtyping refers to the dynamic
semantics.
In their work, casts just check that one type is a (gradual) subtype of the
other using the subtyping.
Hence, upcasts behave as identity functions naturally and upcast elimination is
less meaningful than in manifest contracts.
Instead, they showed that their calculus satisfies key properties in gradual
typing.

\subsection{Parametricity with Run-Time Analysis}
%
% Matthews and Ahmed~\cite{Matthews/Ahmed_2008_ESOP} applied step-indexed logical
% relations to a multi-language system which combines a polymorphically-typed
% language and an untyped language.
% %
% Their system interleaves expressions in the two languages with syntactic
% boundaries which monitor at run time that untyped values follow type
% specifications required by typed expressions.
% %
% Their semantics is equipped with a device similar to delayed
% substitution\footnote{More precisely, delayed substitution is inspired by
% Matthews and Ahmed's work.} to enforce parametricity even in untyped parts, and,
% thanks to that device, their logical relation does not need to impose additional
% conditions on interpretation of type variables, other than it should be closed
% under decreasing steps.
%
Neis et al.~\cite{Neis/Dreyer/Rossberg_2009_ICFP} proved that a language with
run-time type analysis can be parametric by generating fresh type names
dynamically.
Their language allows for run-time investigation of what types are substituted
for type variables.
By contrast, in {\fhfix} type variables are compatible with (possibly refined)
themselves and the run-time analysis on type variables is not allowed.

\subsection{Program Equivalence in Dependent Type Systems}
While type conversion in manifest contracts is performed explicitly by casts,
there are many dependent type systems where type conversion is implicit.
In such a system, term equivalence plays an important role to judge whether a
required type matches with an actual type.
To investigate an influence of term equivalence on dependent type checking, Jia
et al.~\cite{Jia/Zhao/Sjoberg/Weirich_2010_POPL} equipped a dependent type
system with various instances of equivalence.
In particular, they introduced \emph{untyped} contextual equivalence as an
instance.
Since the dependent type system rests on an instance of term equivalence, if
their contextual equivalence has been \emph{typed}, the same issue as in
Flanagan~\cite{Flanagan_2006_POPL} would happen, as discussed in
\sect{relwork-simply}.
Although we also use contextual equivalence for type conversion, our contextual
equivalence can refer to the type system without such an issue since it is given
after defining the calculus.
\section{Conclusion}
\label{sec:conclusion}

This paper has introduced semityped contextual equivalence, which relates a
well-typed term to a contextually equivalent, possibly ill-typed term, and
formulated the upcast elimination in a manifest contract calculus without subtyping.
We have also developed a logical relation for a polymorphic manifest contract
calculus with fussy cast semantics and show that it is sound with respect to
semityped contextual equivalence and complete for well-typed terms.
We have applied the logical relation to show the upcast elimination and correctness
of the selfification and the cast decomposition.
We are interested in extending the logical relation to step-indexed logical
relations~\cite{Ahmed_2006_ESOP}, which are used broadly for languages with
recursive types and mutable references, and studying bisimulation-based
reasoning for manifest contracts.

\bibliographystyle{plain}
\bibliography{main}

\end{document}